\documentclass[11pt,a4paper,austrian,titlepage,chapterprefix,headsepline,parskip,pdftex,twoside]{scrreprt}

\makeatletter
\renewcommand{\paragraph}{\@startsection
   {paragraph} 
   {4} 
   {0mm} 
   {-\baselineskip} 
   {0.1\baselineskip} 
   {\normalfont\normalsize\bfseries}} 
\makeatother 

\makeatletter 
\renewcommand{\subparagraph}{\@startsection
   {subparagraph} 
   {5} 
   {0mm} 
   {-\baselineskip} 
   {0.1\baselineskip} 
   {\normalfont\normalsize\bfseries}} 
\makeatother 

\usepackage{setspace}
\onehalfspacing

\usepackage[pdftex]{graphicx}

\usepackage[pdftex]{color}

\usepackage[colorlinks=true,
    linkcolor=black,
    citecolor=black,
    urlcolor=black,
    breaklinks=true,
    bookmarksnumbered=true,
    hypertexnames=false,
    pdfpagemode=UseOutlines,
    pdfview=FitH,
    plainpages=false,
    pdfpagelabels,
    bookmarks=true,
    linktocpage=true]{hyperref}

\hypersetup{pdfauthor={Jakob Ablinger},
    pdftitle={Computer Algebra Algorithms for Special Functions in Particle Physics},
    pdfsubject={Dissertation},
    pdfkeywords={},
    pdfcreator={pdfLaTeX with hyperref (\today})}

\flushbottom

\setlength{\hoffset}{0mm} \setlength{\voffset}{0mm}
\setlength{\oddsidemargin}{0.0mm} 
\setlength{\evensidemargin}{14.2mm}
\setlength{\topmargin}{-20mm}
\setlength{\headheight}{15mm} \setlength{\headsep}{9mm}
\setlength{\textheight}{242mm} \setlength{\textwidth}{145mm}
\setlength{\footskip}{10mm}

\setlength{\textfloatsep}{25pt plus5pt minus5pt}
\setlength{\intextsep}{25pt plus5pt minus5pt}

\setkomafont{chapter}{\Huge}

\usepackage[automark]{scrpage2}

\clearscrheadings \clearscrplain \clearscrheadfoot
\pagestyle{scrheadings}
\ohead{\pagemark}
\ihead{\headmark}
\cfoot{}

\setcounter{tocdepth}{\subsubsectionlevel}
\setcounter{secnumdepth}{\paragraphlevel}

\usepackage{array}

\addtokomafont{chapter}{\sffamily}
\addtokomafont{sectioning}{\rmfamily}

\usepackage[english]{babel}

\usepackage[latin1]{inputenc}

\usepackage[T1]{fontenc}
\usepackage{ae}

\usepackage{url}

\clubpenalty = 10000
\widowpenalty = 10000 \displaywidowpenalty = 10000

\usepackage{amsmath, amsthm, amssymb}
\newtheorem{thm}{Theorem}[section]

\newtheorem{lemma}[thm]{Lemma}
\newtheorem{prop}[thm]{Proposition}

\newtheorem{definition}[thm]{Definition}
\newtheorem{example}[thm]{Example}

\newtheorem{remark}[thm]{Remark}
\newtheorem{notation}[thm]{Notation}

\newcommand{\ve}[1]{\textit{\textbf{#1}}}
\newcommand{\abs}[1]{\left|{#1}\right|}

\newcommand{\sign}[1]{\textnormal{sign}\hspace{-0.1em}\left(#1\right)}

\renewcommand{\H}[2]{\textnormal{H}_{#1}\hspace{-0.2em}\left(#2\right)}
\newcommand{\Hma}[2]{\textnormal{H}_{#1}\left(#2\right)}
\newcommand{\Hs}[2]{\widehat{\textnormal{H}}_{#1}\hspace{-0.2em}\left(#2\right)}
\newcommand{\Hsma}[2]{\widehat{\textnormal{H}}_{#1}\left(#2\right)}
\newcommand{\M}[2]{\textnormal{M}\hspace{-0.2em}\left(#1,#2\right)}
\newcommand{\Mma}[2]{\textnormal{M}\left(#1,#2\right)}
\newcommand{\Mp}[2]{\textnormal{M}^+\hspace{-0.2em}\left(#1,#2\right)}
\renewcommand{\S}[2]{\textnormal{S}_{#1}\hspace{-0.2em}\left(#2\right)}
\renewcommand{\SS}[3]{\textnormal{S}_{#1}\hspace{-0.2em}\left(#2;#3\right)}

\newcommand{\R}{\mathbb R}
\newcommand{\Q}{\mathbb Q}

\newcommand{\C}{\mathbb C}

\newcommand{\Z}{\mathbb Z}
\newcommand{\N}{\mathbb N}

\newcommand{\ds}{\displaystyle}
\newcommand{\Li}{{\rm Li}}

\newcommand{\ie}{i.e.,\ }
\newcommand{\eg}{e.g.,\ }
\renewcommand{\Re}{\operatorname{Re}}

\newenvironment{ProblemSpec}[1]{\noindent#1\\}{}
\newcommand{\SigmaP}{\texttt{Sigma}}

\newcommand{\ProblemRS}{\textsf{RS}}
\newcommand{\FLSR}{\textsf{FLSR}}
\newcommand{\ep}{\varepsilon}

\let\set\mathbb

\usepackage{rotating}
\newcommand{\shuffle}{\, \raisebox{1.2ex}[0mm][0mm]{\rotatebox{270}{$\exists$}} \,}
\usepackage{algorithm}
\usepackage{algorithmicx}
\usepackage{algpseudocode}
\usepackage{graphicx}
\usepackage{pdfpages}

\newcounter{mmacnt}
\def\restartmma{\setcounter{mmacnt}{0}}
\restartmma \catcode`|=\active
\def|#1|{\mathrm{#1}}
\catcode`|=12
\newenvironment{mma}{
 \par\smallskip
 \catcode`|=\active
 \parskip=0pt\parindent=0pt 
 \small
 \def\In##1\\{%
   \def\linebreak{\hfill\break\null\qquad}%
   \refstepcounter{mmacnt}
   \hangindent=2.5em\hangafter=0
   \leavevmode
   \llap{\tiny\sffamily In[\arabic{mmacnt}]:=\kern.5em}%
   \mathversion{bold}\footnotesize$\displaystyle##1$\normalsize
   \mathversion{normal}\par
 }%
 \def\Print##1\\{%
   \def\linebreak{\hfill\break}%
   \hangindent=2.5em\hangafter=0
   \leavevmode ##1\par}%
 \def\Out##1\\{%
   \def\linebreak{$\hfill\break\null\hfill$}%
   \kern\abovedisplayskip\par
   \hangindent=2.5em\hangafter=0
   \leavevmode
   \llap{\tiny\sffamily Out[\arabic{mmacnt}]=\kern.5em}
   \footnotesize$\displaystyle##1$\normalsize\hfill\null\par
   \kern\belowdisplayskip
 }%
 \def\Warning##1##2\\{%
   \def\linebreak{\hfill\break}%
   \hangindent=2.5em\hangafter=0
   \leavevmode
   {\scriptsize##1 : ##2}\par}%
}{%
 \par\smallskip
}

\usepackage{qtree}

\usepackage{color}
\usepackage{framed}

\newenvironment{fshaded}{%
\MakeFramed {\FrameRestore}
}%
{\endMakeFramed}

\newenvironment{fmma}[1]{\definecolor{shadecolor}{rgb}{1,1,1}%
\definecolor{framecolor}{rgb}{0,0,0}%
\begin{fshaded}\text{\bf HarmonicSums session.}\begin{mma}#1}{\end{mma}\end{fshaded}}

\newenvironment{fmma2}[1]{\definecolor{shadecolor}{rgb}{1,1,1}%
\definecolor{framecolor}{rgb}{0,0,0}%
\begin{fshaded}\text{\bf MultiIntegrate session.}\begin{mma}#1}{\end{mma}\end{fshaded}}


\usepackage{tikz}
\usetikzlibrary{matrix}

\allowdisplaybreaks[4]

\begin{document}

\includepdf{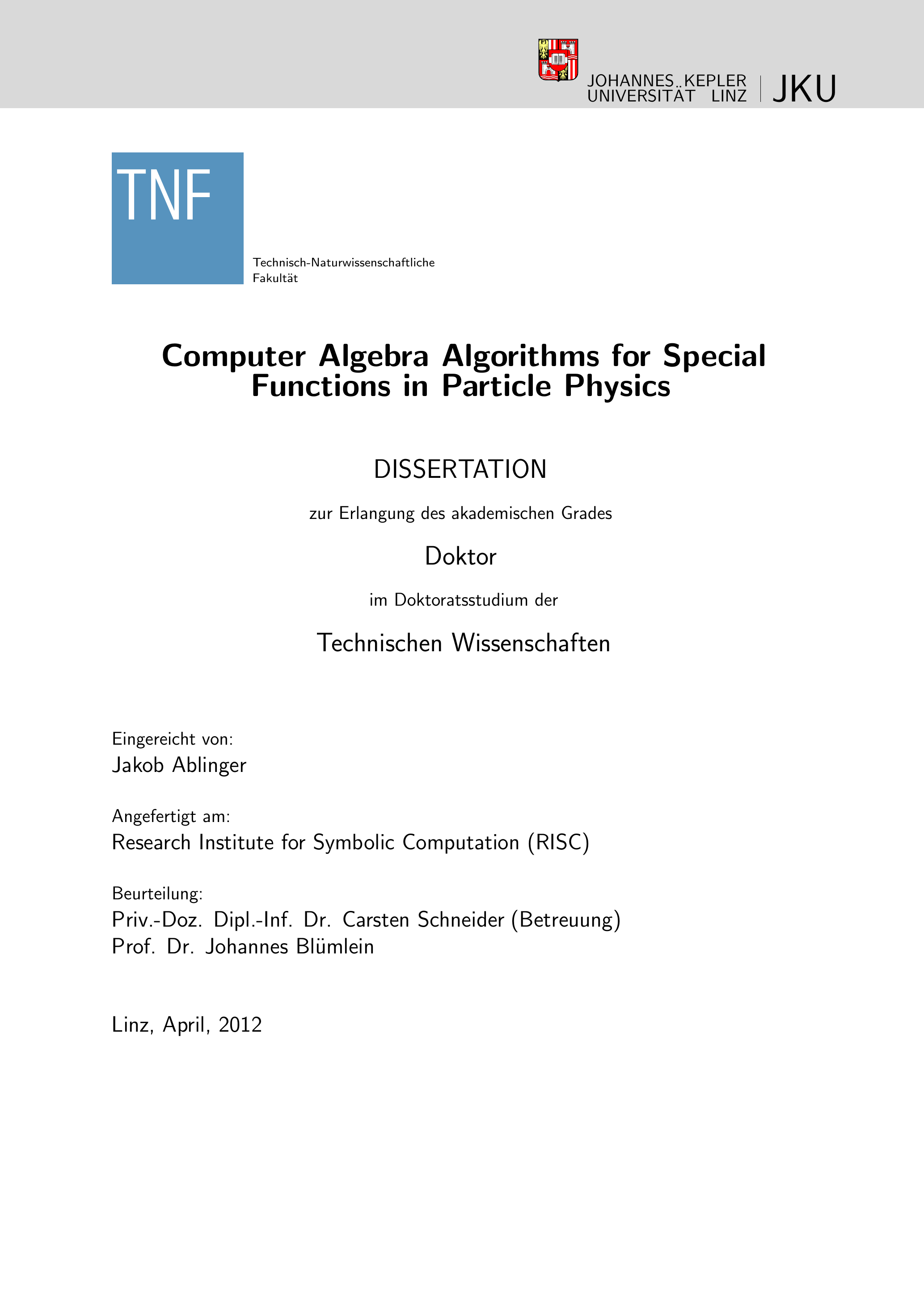}
\thispagestyle{empty}  
\cleardoublepage  
\pagenumbering{Roman}

\chapter*{Eidesstattliche Erkl\"arung}
\markright{Eidesstattliche Erkl\"arung}

Ich erkl\"are an Eides statt, dass ich die vorliegende Dissertation selbstst\"andig und ohne fremde Hilfe verfasst, 
andere als die angegebenen Quellen und Hilfsmittel nicht benutzt bzw.\ die w\"ortlich oder sinngem\"a{\ss} entnommenen Stellen als solche kenntlich gemacht habe.

Die vorliegende Dissertation ist mit dem elektronisch \"ubermittelten Textdokument identisch.

Linz, April 2012,

\hfill Jakob Ablinger

\chapter*{Kurzfassung}
\markright{Kurzfassung}

\hyphenation{
  zyclo-tom-ische
  ha-rmon-ischen
}

Diese Arbeit behandelt spezielle verschachtelte Objekte, die in massiven St\"orungsberechnungen h\"oherer Ordnung renormierbarer Quantenfeldtheorien auftreten. Einerseits bearbeiten wir
verschachtelte Summen, wie harmonische Summen und deren Verallgemeinerungen (S-Summen, zyclotomische harmonische Summen, zyclotomische S-Summen) und
andererseits arbeiten wir mit iterierten Integralen nach Poincar\'e und Chen, wie harmonischen Polylogarithmen und deren Verallgemeinerungen (multiple Polylogarithmen, zyclotomische
harmonische Polylogarithmen). Die iterierten Integrale sind \"uber die Mellin-Transformation (und Erweiterungen der Mellin-Transformation) mit den
verschachtelten Summen verkn\"upft und wir zeigen wie diese Transformation berechnet werden kann. Wir leiten algebraische und stukturelle Relationen zwischen
den verschachtelten Summen und zus\"atzlich zwischen den Werten der Summen bei Unendlich und den damit verbundenen Werten der iterierten Integrale ausgewertet bei
speziellen Konstanten her. Dar\"uber hinaus pr\"asentieren wir Algorithmen zur Berechnung der asymptotischen Entwicklung dieser Objekte und wir beschreiben
einen Algorithmus der bestimmte verschachtelte Summen in Ausdr\"ucke bestehend aus zyclotomische S-Summen umwandelt. Des Weiteren fassen wir die wichtigsten Funktionen
des Computeralgebra Pakets \ttfamily HarmonicSums \rmfamily zusammen, in welchem alle diese Algorithmen und Tranformationen impementiert sind.
Ferner pr\"asentieren wir Anwendungen des multivariaten Almkvist-Zeilberger Algorithmuses und Erweiterungen dieses Algorithmuses auf spezielle Typen von Feynman-Integralen und wir stellen das dazugh\"orige
Computeralgebra Paket \ttfamily MultiIntegrate \rmfamily vor. 

\chapter*{Abstract}
\markright{Abstract}

This work deals with special nested objects arising in massive higher order perturbative calculations in renormalizable quantum field theories.
On the one hand we work with nested sums such as harmonic sums and their generalizations (S-sums, cyclotomic harmonic sums, cyclotomic S-sums) and
on the other hand we treat iterated integrals of the Poincar\'e and Chen-type, such as harmonic polylogarithms and their generalizations (multiple polylogarithms, cyclotomic harmonic polylogarithms).
The iterated integrals are connected to the nested sums via (generalizations of) the Mellin-transformation and we show how this transformation can be computed.
We derive algebraic and structural relations between the nested sums as well as relations between the values of the sums at infinity and connected to it the values
of the iterated integrals evaluated at special constants. In addition we state algorithms to compute asymptotic expansions of these nested objects and we state an algorithm which
rewrites certain types of nested sums into expressions in terms of cyclotomic S-sums. Moreover we summarize the main functionality of the computer algebra package
\ttfamily HarmonicSums \rmfamily in which all these algorithms and transformations are implemented.
Furthermore, we present application of and enhancements of the multivariate Almkvist-Zeilberger algorithm to certain types of Feynman integrals and the corresponding
computer algebra package \ttfamily MultiIntegrate\rmfamily. 

\tableofcontents

\chapter*{Notations}
\markright{Notations}

  \begin{tabbing}
    \hspace*{3.5cm}\= \kill
    $\N$ \> $\N=\{1,2,\ldots\},$ natural numbers\\
    $\N_0$ \> $\N_0=\{0,1,2,\ldots\}$\\
    $\Z$ \> integers \\
    $\Z^*$ \> $\Z\setminus\{0\}$ \\
    $\Q$ \> rational numbers\\
    $\R$ \> real numbers\\
    $\R^*$ \> $\R\setminus\{0\}$\\
    $\S{a_1,a_2,\ldots}{n}$  \> harmonic sum; see page \pageref{HSdefHsum}\\
    $\S{\ve a}{\ve b; n}$ \> S-sum; see page \pageref{SSchapter}\\
    $\S{(a_1,b_1,c_1),\ldots}{n}$  \> cyclotomic harmonic sum; see page \pageref{CSdef}\\
    $\S{(a_1,b_1,c_1),\ldots}{\ve b; n}$\>  cyclotomic S-sum; see page \pageref{CSSdef}\\
    $\mathcal{S}(n)$ \>the set of polynomials in the harmonic sums; see page \pageref{S}\\
    $\mathcal{C}(n)$ \>the set of polynomials in the cyclotomic harmonic sums; see page \pageref{C}\\
    $\mathcal{CS}(n)$ \>the set of polynomials in the cyclotomic S-sums; see page \pageref{CS}\\
    $\H{m_1,m_2,\ldots}{x}$ \> harmonic/multiple polylogarithm; see pages \pageref{HShlogdef} and \pageref{SShlogdef}\\
    $\H{(m_1,n_1),(m_2,n_2),\ldots}{x}$ \> cyclotomic harmonic polylogarithm; see page \pageref{CShlogdef}\\
    $\shuffle$ \> the shuffle product; see page \pageref{HShpro}\\
    $\textnormal{sign}$ \> $\sign{a}$ gives the sign of the number $a$\\
    $\abs{\ }$ \> $\abs{\ve w}$ gives the degree of a word $\ve w$; $\abs{a}$ is the absolute value of the number $a$\\ 
    $\wedge$ \> $a \wedge b = \sign{a}\sign{b}(\abs{a}+\abs{b});$ see page \\
    $\mu(n)$ \>the M\"obius function; see page \pageref{abmue}\\
    $\zeta_n$\> $\zeta_n$ is the value of the Riemann zeta-function at $n$\\
    $\M{f(x)}{n}$ \> the Mellin transform of f(x); see pages \pageref{HSabmell}, \pageref{SSabmellplus} and \pageref{CSabmellplus} \\
    $\Mp{f(x)}{n}$ \> the extended Mellin transform of f(x); see page \pageref{HSabmellplus} \\
    $\ve{0}_k$ \> $(\underbrace{0,0,\ldots,0}_{k\times})$ \\
  \end{tabbing}

\newpage

\cleardoublepage 

\pagenumbering{arabic}

\chapter{Introduction}
\label{Introduction}
Harmonic sums and harmonic polylogarithms associated to them by a Mellin transform \cite{Vermaseren1998,Bluemlein1999}, emerge in perturbative 
calculations of massless or massive single scale problems in quantum field theory \cite{Yndurain1979,GonzalezArroyo,Mertig,Bluemlein2009a}. In order to facilitate these computations, many properties and methods have been worked out for 
these special functions. In particular, that the harmonic polylogarithms form a shuffle algebra \cite{Remiddi2000}, while the harmonic sums form a quasi 
shuffle algebra \cite{Hoffman1992,Hoffman,Hoffman1997,Radford1979,Bluemlein2004,Vermaseren1998,Ablinger2009}. Various relations between harmonic sums, \ie algebraic and structural relations have already been 
considered \cite{Bluemlein2004,Bluemlein2008,Bluemlein2009,Bluemlein2009a}, and the asymptotic behavior and the analytic continuation of these sums was worked out up to certain 
weights \cite{Bluemlein2000,Bluemlein2009,Bluemlein2009a}. Moreover, algorithms to compute the Mellin transform of harmonic polylogarithms and the inverse Mellin transform \cite{Remiddi2000} as well 
as algorithms to compute the asymptotic expansion of harmonic sums (with positive indices) are known \cite{Minh2000}. In addition, argument transforms and power series expansions of harmonic 
polylogarithms were worked out~\cite{Remiddi2000,Ablinger2009}. Note that harmonic sums at infinity and harmonic polylogarithms at one are closely related to the multiple zeta values \cite{Remiddi2000}; 
relations between them and basis representation of them are considered, \eg in  \cite{Broadhurst2010,Vermaseren1998,MZV1,MZV2,MZV3}.
For the exploration of related objects, like Euler sums, and their applications with algorithms  see, \eg \cite{Borwein95,Borwein96,Flajolet1998,SchneiderPemantle}.

In recent calculations generalizations of harmonic sums and harmonic polylogarithms, \ie S-sums (see e.g., \cite{Moch2002,Ablinger2011c,Ablinger2010b,Ablinger2010}) and cyclotomic harmonic 
sums (see e.g., \cite{Ablinger2011,Ablinger2012}) on the one hand and multiple polylogarithms and cyclotomic polylogarithms (see e.g., \cite{Ablinger2011,Ablinger2012}) on the other 
hand emerge. Both generalizations of the harmonic sums, \ie cyclotomic harmonic sums and S-sums can be viewed as subsets of the more general class of cyclotomic S-sums
$\S{(a_1,b_1,c_1),(a_2,b_2,c_2),\ldots,(a_k,b_k,c_k)}{x_1,x_2,\ldots,x_k;n}$ defined as
\begin{eqnarray*}
&&\S{(a_1,b_1,c_1),(a_2,b_2,c_2),\ldots,(a_k,b_k,c_k)}{x_1,x_2,\ldots,x_k;n}=\\
&&\hspace{4cm}\sum_{i_1 \geq i_2,\cdots i_k \geq 1}\frac{x_1^{i_1}}{(a_1 i_1+b_1)^{c_1}}\frac{x_2^{i_2}}{(a_2 i_2+b_2)^{c_2}}\cdots\frac{x_k^{i_1}}{(a_k i_k+b_k)^{c_k}},
\end{eqnarray*}
which form a quasi-shuffle algebra. In addition, the corresponding generalizations of the harmonic polylogarithms to multiple polylogarithms and cyclotomic polylogarithms fall into 
the class of Poincar\'{e} iterated integrals (see \cite{Poincare1884,Lappo-Danielevsky1953}).\\
We conclude that so far only some basic properties of S-sums \cite{Moch2002} are known and these generalized objects need further investigations for, \eg ongoing calculations.\\
One of the main goals of this theses is to extend known properties of harmonic sums and harmonic polylogarithms to their generalizations and to provide algorithms to deal with these quantities, in 
particular, to find relations between them, to compute series expansions, to compute Mellin transforms, etc.\\
Due to the complexity of 
higher order calculations the knowledge of as many as possible relations between the finite harmonic sums, S-sums and cyclotomic harmonic sums is of importance to simplify the calculations and to obtain as 
compact as possible analytic results. Therefore we derive not only algebraic but also structural relations originating from differentiation and multiplication of the upper summation limit. Concerning 
harmonic sums these relations have already been considered in \cite{Ablinger2009,Bluemlein2009,Bluemlein2009a} and for the sake of completeness we briefly summarize these results here and extend them 
to S-sums and cyclotomic harmonic sums.\\
In physical applications (see e.g.,\cite{Bluemlein1999,Bluemlein2000,Bluemlein2004,Moch2004,Moch2004b,Moch2005,Bluemlein2005,Bluemlein2008,Ablinger2012,Ablinger2011c,Ablinger2011b,Ablinger2010}) an analytic continuation of 
these nested sums is required (which is already required in order to establish differentiation) and eventually one  has to derive the complex analysis for these nested sums. Hence we look at integral 
representations of them which leads to Mellin type transformations of harmonic polylogarithms, multiple polylogarithms and cyclotomic polylogarithms.\\
Starting form the integral representation of the nested sums we derive algorithms to determine the asymptotic behavior of harmonic sums, S-sums and cyclotomic harmonic sums, \ie we are able to compute 
asymptotic expansions up to arbitrary order of these sums.\\ 
In the context of Mellin transforms and asymptotic expansions the values of the nested sums at infinity and connected to them the values of harmonic polylogarithms, 
multiple polylogarithms and cyclotomic polylogarithms at certain constants are of importance. For harmonic sums this leads to multiple zeta values and for example in \cite{Broadhurst2010} relations between 
these constants have been investigated. Here we extend these consideration to S-sums and cyclotomic harmonic sums (compare \cite{Ablinger2011}).\\
Besides several argument transformations of harmonic polylogarithms, multiple polylogarithms and cyclotomic polylogarithms we derive algorithms to compute power series expansions about
zero as well as algorithms to determine the asymptotic behavior of these iterated integrals.\\
The various relations between the nested sums together with the values at infinity on the one hand and the nested integrals together with the values at special constants on the other hand 
are summarized by Figure \ref{Ifig1}.

\def\circlea{(-8.3cm,6cm) circle (1.8cm and 1cm)}
\def\circleb{(-6.5cm,6cm) circle (1.8cm and 1cm)}
\def\circlec{(-1.8cm,6cm) circle (1.8cm and 1cm)}
\def\circled{(-0.0cm,6cm) circle (1.8cm and 1cm)}
\def\circlee{(-8.3cm,2cm) circle (1.8cm and 1cm)}
\def\circlef{(-6.5cm,2cm) circle (1.8cm and 1cm)}
\def\circleg{(-1.8cm,2cm) circle (1.8cm and 1cm)}
\def\circleh{(-0.0cm,2cm) circle (1.8cm and 1cm)}

\tikzset{filled/.style={fill=circle area, draw=circle edge, thick}, outline/.style={draw=circle edge, thick}}

\colorlet{circle edge}{black!100}
\colorlet{circle area}{black!20}
\setlength{\parskip}{5mm}
\begin{figure}[ht!]
\centering
\begin{tikzpicture}
\centering
     \begin{scope}
         \clip \circlea;
         \fill[filled] \circleb;
     \end{scope}
     \draw[outline] \circlea;
     \draw[outline] \circleb;
     \draw (-7.40cm,6.3cm) node[font=\small] {H-Sums};
     \draw (-7.35cm,5.7cm) node[font=\tiny] {$\S{-1,2}n$};
     \draw (-9.05cm,6.3cm) node[font=\small] {S-Sums};
     \draw (-9.05cm,5.7cm) node[font=\tiny] {$\S{1,2}{\frac{1}{2},1;n}$};
     \draw (-5.70cm,6.3cm) node[font=\small] {C-Sums};
     \draw (-5.65cm,5.7cm) node[font=\tiny] {$\S{(2,1,-1)}n$};
     \begin{scope}
         \clip \circlec;
         \fill[filled] \circled;
     \end{scope}
     \draw[outline] \circlec;
     \draw[outline] \circled;
     \draw (-0.90cm,6.3cm) node[font=\small] {H-Logs};
     \draw (-0.85cm,5.7cm) node[font=\tiny] {$\H{-1,1}x$};
     \draw (-2.55cm,6.3cm) node[font=\small] {C-Logs};
     \draw (-2.60cm,5.7cm) node[font=\tiny] {$\textnormal{H}_{(4,1),(0,0)}\hspace{-0.05cm}(x)$};
     \draw (+0.80cm,6.3cm) node[font=\small] {M-Logs};
     \draw (+0.85cm,5.7cm) node[font=\tiny] {$\H{2,3}x$};

     \draw[thick,->] (-9cm,6.5cm) .. controls +(80:1cm) and +(100:1cm) .. (1cm,6.5cm) node[midway,sloped,above,font=\small] {integral representation (inv. Mellin transform)};
     \draw[thick,->] (-7.25cm,6.5cm) .. controls +(60:1cm) and +(120:1cm) .. (-0.75cm,6.5cm);
     \draw[thick,->] (-5.50cm,6.5cm) .. controls +(40:1cm) and +(140:1cm) .. (-2.5cm,6.5cm);

     \draw[thick,<-] (-9cm,5.5cm) .. controls +(280:1cm) and +(260:1cm) .. (1cm,5.5cm) node[midway,sloped,below,font=\small] {Mellin transform};
     \draw[thick,<-] (-7.25cm,5.5cm) .. controls +(300:1cm) and +(240:1cm) .. (-0.75cm,5.5cm);
     \draw[thick,<-] (-5.5cm,5.5cm) .. controls +(320:1cm) and +(220:1cm) .. (-2.5cm,5.5cm);

     \begin{scope}
         \clip \circlee;
         \fill[filled] \circlef;
     \end{scope}
     \draw[outline] \circlee;
     \draw[outline] \circlef;
     \draw (-7.40cm,2cm) node[font=\tiny] {$\S{-1,2}\infty$};
     \draw (-9.15cm,2cm) node[font=\tiny] {$\S{1,2}{\frac{1}{2},1;\infty}$};
     \draw (-5.60cm,2cm) node[font=\tiny] {$\S{(2,1,-1)}\infty$};
     \draw[thick,->] (-7.4cm,4.7cm) -- (-7.4cm,3.3cm) node[midway,sloped,above,font=\small] {$n\rightarrow\infty$};

     \begin{scope}
         \clip \circleg;
         \fill[filled] \circleh;
     \end{scope}
     \draw[outline] \circleg;
     \draw[outline] \circleh;
     \draw (-0.85cm,2cm) node[font=\tiny] {$\H{-1,1}1$};
     \draw (-2.65cm,2cm) node[font=\tiny] {$\textnormal{H}_{(4,1),(0,0)}\hspace{-0.05cm}(1)$};
     \draw (0.85cm,2cm) node[font=\tiny] {$\H{2,3}c$};
     \draw[thick,->] (-0.9cm,5.3cm) -- (-0.9cm,2.7cm) node[midway,sloped,above,font=\small] {$x\rightarrow 1$};
     \draw[thick,->] (-2.0cm,5.3cm) -- (-2.0cm,2.7cm) node[midway,sloped,above,font=\small] {$x\rightarrow 1$};
     \draw[thick,->] (0.2cm,5.3cm) -- (0.2cm,2.7cm) node[midway,sloped,above,font=\small] {$x\rightarrow c\in \R$};

     \draw[thick,<-] (-9cm,1.5cm) .. controls +(280:1cm) and +(260:1cm) .. (1cm,1.5cm) node[midway,sloped,below,font=\small] {power series expansion};
     \draw[thick,<-] (-7.25cm,1.5cm) .. controls +(300:1cm) and +(240:1cm) .. (-0.75cm,1.5cm);
     \draw[thick,<-] (-5.5cm,1.5cm) .. controls +(320:1cm) and +(220:1cm) .. (-2.5cm,1.5cm);

     \draw[thick,->] (-9cm,2.5cm) .. controls +(80:1cm) and +(100:1cm) .. (1cm,2.5cm);
     \draw[thick,->] (-7.25cm,2.5cm) .. controls +(60:1cm) and +(120:1cm) .. (-0.75cm,2.5cm);
     \draw[thick,->] (-5.50cm,2.5cm) .. controls +(40:1cm) and +(140:1cm) .. (-2.5cm,2.5cm);

\end{tikzpicture}
\caption{\label{Ifig1}Connection between harmonic sums (H-Sums), S-sums (S-Sums) and cyclotomic harmonic sums (C-Sums), their values at infinity and harmonic polylogarithms (H-Logs), multiple polylogarithms (M-Logs) and 
cyclotomic harmonic polylogarithms (C-Logs) and their values at special constants.}
\end{figure}
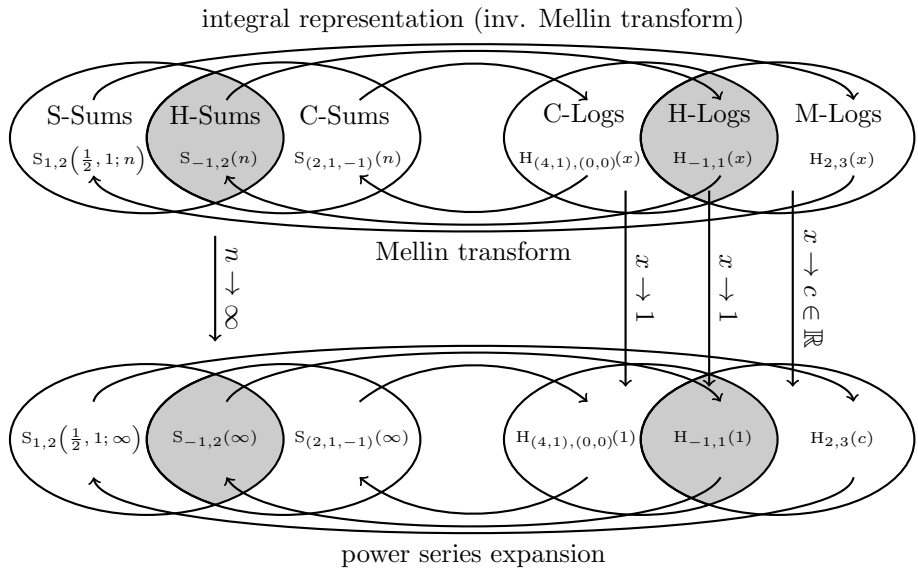
All the algorithms for the harmonic sums, S-sums, cyclotomic harmonic sums, cyclotomic S-sums, harmonic polylogarithms, multiple polylogarithms and cyclotomic harmonic polylogarithms which are presented 
in this theses are implemented in the Mathematica package \ttfamily HarmonicSums \rmfamily which was developed in \cite{Ablinger2009} originally for harmonic sums and which was extended and generalized 
in the frame of this thesis.

In the last chapter we extend the multivariate Almkvist-Zeilberger algorithm \cite{AlmZeil} in 
several directions using ideas and algorithms from \cite{Bluemlein2011} in order to apply it to special Feynman integrals emerging in renormalizable Quantum field Theories (see \cite{Bluemlein2011}).
We will look on integrals of the form
\begin{eqnarray*}
{\cal I}(\ep,N) = \int_{u_d}^{o_d} \dots\int_{u_1}^{o_1}F(n;x_1, \dots, x_d;\ep) dx_1 \dots dx_d,
\end{eqnarray*}
with $d,N \in \N$, $F(n;x_1, \dots, x_d;\ep)$ a hyperexponential term (for details see \cite{AlmZeil} or Theorem~\ref{mAZthm}), $\ep>0$ a real parameter and $u_i,o_i \in \R\cup \{-\infty,\infty\}.$
As indicated by several examples, the solution of these integrals may lead to harmonic sums, cyclotomic harmonic sums and S-sums.\\
Our Mathematica package \ttfamily MultiIntegrate, \rmfamily which was developed in the frame of this theses, can be considered as an enhanced implementation of the multivariate Almkvist Zeilberger algorithm to compute 
recurrences for the integrands and integrals. Together with the summation package \SigmaP\ \cite{Schneider2005,Schneider2001,Schneider2007,Schneider2008a,Schneider2008,Schneider2007a} our package provides methods to compute representations of ${\cal I}(\ep,N)$ in terms of harmonic sums, 
cyclotomic harmonic sums and S-sums and to compute Laurent series expansions of ${\cal I}(\ep,N)$ in the form
\begin{equation*}
{\cal I}(\ep, N) = I_t(N)\ep^t+I_{t+1}(N)\ep^{t+1}+I_{t+2}(N)\ep^{t+2}+\dots
\end{equation*}
where $t \in \Z$ and the $I_k$ are expressed in terms of harmonic sums, cyclotomic harmonic sums and S-sums, and even more generally in terms of indefinite nested sums and products \cite{Schneider2007a}. 

The remainder of the thesis is structured as follows: In Chapter~\ref{HSchapter} harmonic sums and harmonic polylogarithms are introduced. We seek power series expansions  
and the asymptotic behavior of harmonic polylogarithms and harmonic sums. We look for integral representations of harmonic sums and Mellin-transforms of harmonic polylogarithms. In addition 
we consider algebraic and structural relations between harmonic sums and we consider values of harmonic sums at infinity and values of harmonic polylogarithms at one.
In Chapter~\ref{SSchapter} S-sums and multiple polylogarithms are introduced and we generalize the theorems and algorithms from Chapter~\ref{HSchapter} to S-sums and multiple polylogarithms,
while in Chapter~\ref{CSchapter} cyclotomic harmonic sums and cyclotomic harmonic polylogarithms are defined and we seek to generalize the theorems and algorithms from Chapter~\ref{HSchapter} to these 
structures. Chapter~\ref{CSSchapter} introduces cyclotomic S-sums and provides mainly an algorithms that transforms indefinite nested sums and products to cyclotomic S-sums whenever this is possible.
In Chapter~\ref{Packagechapter} we briefly summarize the main features of the packages \ttfamily HarmonicSums \rmfamily in which the algorithms and procedures presented in the foregoing chapters are 
implemented. Finally, Chapter~\ref{AZchapter} deals with the application of (extensions of) the multivariate Almkvist-Zeilberger to certain types of Feynman integrals and the corresponding 
package \ttfamily MultiIntegrate\rmfamily.\\

\text{\bf Acknowledgments.} This research was supported by the Austrian Science Fund (FWF): grant P20162-N18 and the Research Executive Agency (REA) of the European Union under 
the Grant Agreement number PITN-GA-2010-264564 (LHCPhenoNet).

I would like to thank C. Schneider and J. Bl\"umlein for supervising this theses and their support throughout the last three years.

\cleardoublepage  

\chapter{Harmonic Sums}
\label{HSchapter}

One of the main goals of this chapter is to present a new algorithm which computes the asymptotic expansions of harmonic sums (see, e.g., \cite{Bluemlein2000,Bluemlein2009,Bluemlein2009a,Vermaseren1998}). 
Asymptotic expansions of harmonic sums with positive indices have already been considered 
in \cite{Minh2000}, but here we will derive an algorithm which is suitable for harmonic sums in general (so negative indices are allowed as well). Our approach is inspired by \cite{Bluemlein2009,Bluemlein2009a}
and uses an integral representation of harmonic sums and in addition makes use of several properties of harmonic sums. Therefore we will state first several known and needed facts about harmonic sums.\\
Note that the major part of this chapter up to Section \ref{HSExpansion} can already be found in \cite{Ablinger2009}, however there are some 
differences and some complemental facts are added.\\
The integral representation which is used in the algorithm is based on the Mellin transformation of harmonic polylogarithms defined in \cite{Remiddi2000}; note that the Mellin transformation 
defined here slightly differs. We will show how the Mellin transformation of harmonic polylogarithms can be computed using a new approach different form \cite{Remiddi2000}.\\
The Mellin transformation also leads to harmonic polylogarithms at one and connected to them to harmonic sums at infinity. Since this relation is of importance for further considerations, we repeat how these quantities are related (see \cite{Remiddi2000}).
Since the algorithm utilizes algebraic properties of harmonic sums, we will shortly comment on the structure of harmonic sums (for details see, \eg \cite{Hoffman}) and we will look at algebraic and structural 
relations (compare \cite{Ablinger2009,Bluemlein2004,Bluemlein2008,Bluemlein2009,Bluemlein2009a}). In addition we derive new formulas that count the number of basis sums at a specified weight.\\
We state some of the argument transformations of harmonic polylogarithms from \cite{Remiddi2000}, which will for example allow us to determine the asymptotic behavior of harmonic polylogarithms.
One of this argument transformations will be used in the algorithm which computes asymptotic expansions of harmonic sums. But since harmonic polylogarithms are not closed under this transformation, we 
will extend harmonic polylogarithms by adding a new letter to the index set. This new letter is added in a way such that the extended harmonic polylogarithms are closed under the required transformation.\\
The stated material will finally allow us to compute asymptotic expansion of the harmonic sums.

\def\circlei{(-8.3cm,6cm) circle (2.1cm and 1.2cm)}
\def\circlej{(-0.0cm,6cm) circle (2.1cm and 1.2cm)}
\def\circlek{(-8.3cm,2cm) circle (2.1cm and 1.2cm)}
\def\circlel{(-0.0cm,2cm) circle (2.1cm and 1.2cm)}

\tikzset{filled/.style={fill=circle area, draw=circle edge, thick},
    outline/.style={draw=circle edge, thick}}

\colorlet{circle edge}{black!100}
\colorlet{circle area}{black!20}
\setlength{\parskip}{5mm}
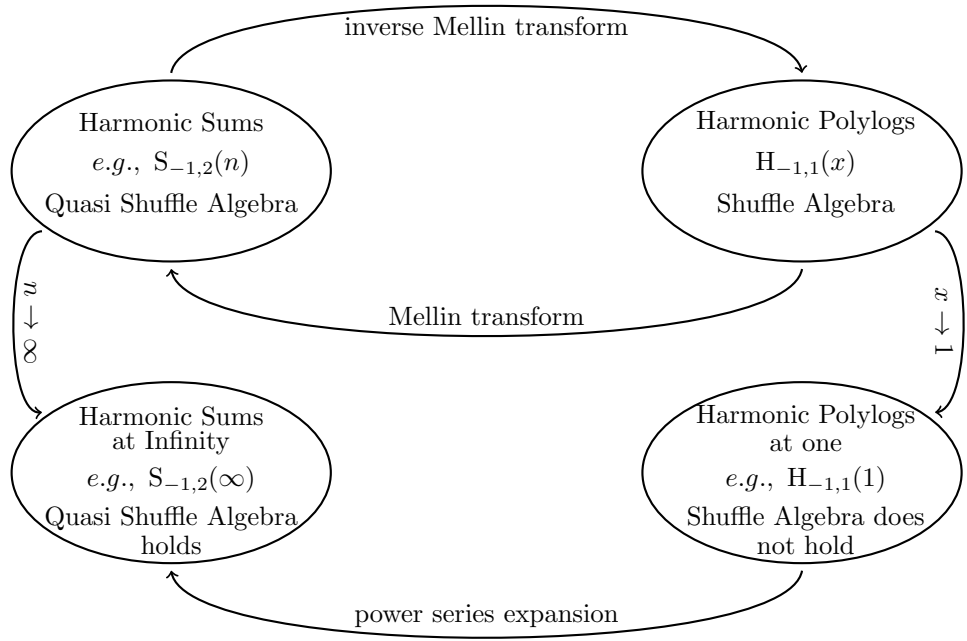
\begin{figure}
\begin{tikzpicture}
\centering

     \draw[outline] \circlei;
     \draw (-8.30cm,6.65cm) node[font=\small] {Harmonic Sums};
     \draw (-8.30cm,6.10cm) node[font=\small] {$\eg \S{-1,2}n$};
     \draw (-8.30cm,5.55cm) node[font=\small] {Quasi Shuffle Algebra};

     \draw[outline] \circlej;
     \draw (0.05cm,6.65cm) node[font=\small] {Harmonic Polylogs};
     \draw (0.05cm,6.10cm) node[font=\small] {$\H{-1,1}x$};
     \draw (0.05cm,5.55cm) node[font=\small] {Shuffle Algebra};

     \draw[thick,->,black] (-8.3cm,7.3cm) .. controls +(80:1.2cm) and +(100:1.2cm) .. (0cm,7.3cm) node[midway,sloped,below,font=\small] {inverse Mellin transform};
     \draw[thick,<-,black] (-8.3cm,4.7cm) .. controls +(280:1.2cm) and +(260:1.2cm) .. (0cm,4.7cm) node[midway,sloped,above,font=\small] {Mellin transform};

     \draw[outline] \circlek;
     \draw (-8.30cm,2.75cm) node[font=\small] {Harmonic Sums};
     \draw (-8.30cm,2.40cm) node[font=\small] {at Infinity};
     \draw (-8.30cm,1.90cm) node[font=\small] {$\eg \S{-1,2}\infty$};
     \draw (-8.30cm,1.40cm) node[font=\small] {Quasi Shuffle Algebra};
     \draw (-8.30cm,1.05cm) node[font=\small] {holds};
     \draw[thick,->] (-10cm,5.2cm) .. controls +(180:0.5cm) and +(185:0.5cm) .. (-10cm,2.8cm) node[midway,sloped,above,font=\small] {$n\rightarrow\infty$};

     \draw[outline] \circlel;
     \draw (0.05cm,2.75cm) node[font=\small] {Harmonic Polylogs};
     \draw (0.05cm,2.40cm) node[font=\small] {at one};
     \draw (0.05cm,1.90cm) node[font=\small] {$\eg \H{-1,1}1$};
     \draw (0.05cm,1.40cm) node[font=\small] {Shuffle Algebra does};
     \draw (0.05cm,1.05cm) node[font=\small] {not hold};
     \draw[thick,->] (1.75cm,5.2cm) .. controls +(355:0.5cm) and +(5:0.5cm) .. (1.75cm,2.8cm) node[midway,sloped,below,font=\small] {$x\rightarrow 1$};
     \draw[thick,<-,black] (-8.3cm,0.7cm) .. controls +(280:1.2cm) and +(260:1.2cm) .. (0cm,0.7cm) node[midway,sloped,above,font=\small] {power series expansion};

\end{tikzpicture}
\caption[height=5cm]{\label{HSconnection}Connection between harmonic sums and harmonic polylogarithms (compare~\cite{Ablinger2009}).}
\end{figure}

The connection between harmonic sums, there values at infinity, harmonic polylogarithms and there values at one are summarized in Figure \ref{HSconnection}.
Note that this chapter can be viewed as a model for Chapter \ref{SSchapter} and Chapter \ref{CSchapter}, since there we will extend the properties of harmonic sums and harmonic polylogarithms, 
which will lead to asymptotic expansion of generalizations of harmonic sums. 

\section{Definition and Structure of Harmonic Sums}
\label{HSdef}
We start by defining multiple harmonic sums; see, e.g., \cite{Bluemlein2004,Vermaseren1998}.
\begin{definition}[Harmonic Sums] For $k \in \N$, $n \in \N_0$ and $a_i\in \Z^*$ with $1\leq i\leq k$ we define
\begin{eqnarray*}
	\S{a_1,\ldots ,a_k}n= \sum_{n\geq i_1 \geq i_2 \geq \cdots \geq i_k \geq 1} \frac{\sign{a_1}^{i_1}}{i_1^{\abs {a_1}}}\cdots
	\frac{\sign{a_k}^{i_k}}{i_k^{\abs {a_k}}}.	
\end{eqnarray*}
$k$ is called the depth and $w=\sum_{i=0}^k\abs{a_i}$ is called the weight of the harmonic sum $\S{a_1,\ldots ,a_k}n$.
\label{HSdefHsum}
\end{definition}

One of the properties of harmonic sums is the following identity: for $n\in \N,$
\begin{eqnarray}
	\S{a_1,\ldots ,a_k}n\S{b_1,\ldots ,b_l}n&=&
	\sum_{i=1}^n \frac{\sign{a_1}^i}{i^{\abs {a_1}}}\S{a_2,\ldots ,a_k}i\S{b_1,\ldots ,b_l}i \nonumber\\
	&&+\sum_{i=1}^n \frac{\sign{b_1}^i}{i^{\abs {b_1}}}\S{a_1,\ldots ,a_k}i\S{b_2,\ldots ,b_l}i \nonumber\\
	&&-\sum_{i=1}^n \frac{\sign{a_1*b_1}^i}{i^{\abs{a_1}+\abs{b_1}}}\S{a_2,\ldots ,a_k}i\S{b_2,\ldots ,b_l}i.
\label{HShsumproduct}
\end{eqnarray}
The proof of equation (\ref{HShsumproduct}) follows immediately from 
$$
\sum_{i=1}^n \sum_{j=1}^n a_{ij}=\sum_{i=1}^n \sum_{j=1}^i a_{ij}+\sum_{j=1}^n \sum_{i=1}^j a_{ij}-\sum_{i=1}^n a_{ii};
$$
for a graphical illustration of the proof see Figure \ref{HSfig1}. As a consequence, any product of harmonic sums with the same upper summation limit can be written as a linear combination of harmonic sums by iterative 
application of (\ref{HShsumproduct}). Together with this product, harmonic sums form a quasi shuffle algebra (for further details see Section \ref{HSalgrel} and, \eg \cite{Hoffman1992,Hoffman1997,Hoffman}).

\begin{figure}
\centering
\parbox{.15\textwidth}{\includegraphics[width=2cm]{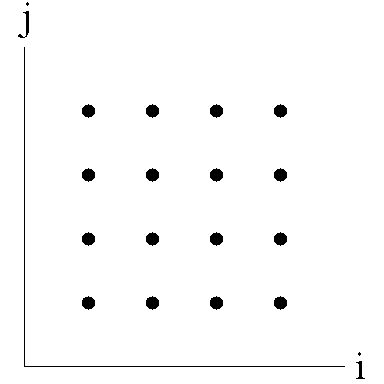}}
\centering
\parbox{.05\textwidth}{$=$}
\centering
\parbox{.15\textwidth}{\includegraphics[width=2cm]{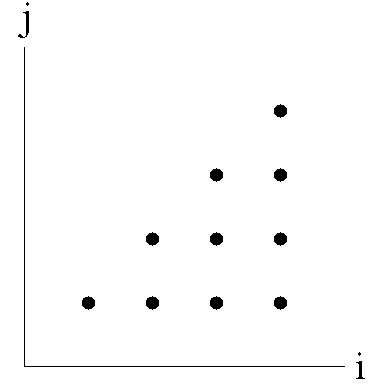}}
\centering
\parbox{.05\textwidth}{$+$}
\centering
\parbox{.15\textwidth}{\includegraphics[width=2cm]{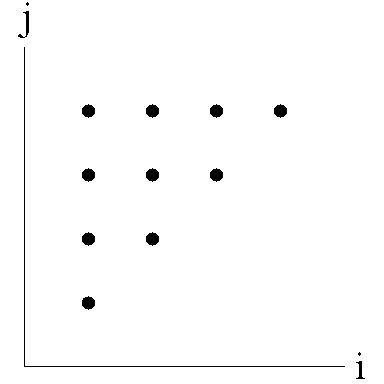}}
\centering
\parbox{.05\textwidth}{$-$}
\centering
\parbox{.15\textwidth}{\includegraphics[width=2cm]{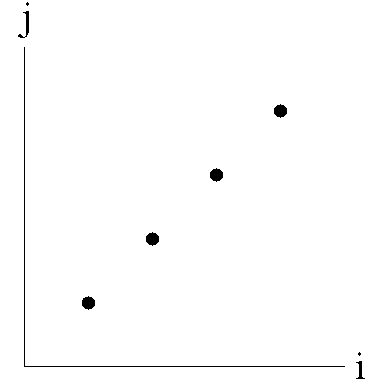}}
\caption{\label{HSfig1}Sketch of the proof of the multiplication of harmonic sums (compare \cite{Moch2002}).}
\end{figure}

\begin{example}
\begin{eqnarray*}
\S{2,3}n\S{2,2}n&=&\S{4,5}n-2\, \S{2,2,5}n-\S{2,4,3}n-\S{2,5,2}n-\S{4,2,3}n-\\&&\S{4,3,2}n+3\, \S{2,2,2,3}n+2\, \S{2,2,3,2}n+\S{2,3,2,2}n.
\end{eqnarray*}
\end{example}

\begin{definition}
We say that a harmonic sum $\S{a_1, a_2,\ldots,a_k}{n}$ has trailing ones if there is an $i \in \left\{1,2,\ldots,k\right\}$ such that for all $j \in \N$ with $i\leq j \leq k$,  $a_j=1.$ \\
Likewise, we say that a harmonic sum $\S{a_1, a_2,\ldots,a_k}{n}$ has leading ones if there is an $i \in \left\{1,2,\ldots,k\right\}$ such that for all $j \in \N$ with $1 \leq j \leq i$,  $a_j=1.$  
\end{definition}
Choosing one of the harmonic sums in (\ref{HShsumproduct}) with depth one we get
\begin{eqnarray}
\S{a}n\S{b_1,\ldots ,b_l}n&=&\S{a,b_1,\ldots ,b_l}n+\S{b_1,a,b_2,\ldots ,b_l}n+\cdots+\S{b_1,b_2,\ldots ,b_l,a}n\nonumber\\
&&-\S{a\wedge b_1,b_2,\ldots,b_l}n-\cdots-\S{b_1,b_2,\ldots,a\wedge b_m}n.
\label{HShsumpro1}
\end{eqnarray}
Here the symbol $\wedge$ is defined as
\begin{equation}
a \wedge b = \sign{a}\sign{b}(\abs{a}+\abs{b}).\nonumber\\
\label{HSabwedge}
\end{equation}

We can use (\ref{HShsumpro1}) now to single out terms of $\S1n$ from harmonic sums whose indices have leading ones. Taking $a=1$ in (\ref{HShsumpro1}), we get:
\begin{eqnarray*}
\S{1}n\S{b_1,\ldots ,b_l}n&=&\S{1,b_1,\ldots ,b_l}n+\S{b_1,1,b_2,\ldots ,b_l}n+\cdots+\S{b_1,b_2,\ldots ,b_l,1}n\\
&&-\S{1\wedge b_1,b_2,\ldots,b_l}n-\cdots-\S{b_1,b_2,\ldots,1\wedge b_m}n,
\end{eqnarray*} 
or equivalently:
\begin{eqnarray*}
\S{1,b_1,\ldots ,b_l}n &=& \S{1}n\S{b_1,\ldots ,b_l}n-\S{b_1,1,b_2,\ldots ,b_l}n-\cdots-\S{b_1,b_2,\ldots ,b_l,1}n\\
&&+\S{1\wedge b_1,b_2,\ldots,b_l}n+\cdots+\S{b_1,b_2,\ldots,1\wedge b_m}n.
\end{eqnarray*} 
If $b_1$ is $1$ as well, we can move $\S{b_1,1,b_2,\ldots ,b_l}n$ to the left and can divide by two. This leads to
\begin{eqnarray*}
\S{1,b_1,\ldots ,b_l}n &=& \frac{1}{2}\left(\S{1}n\S{b_1,\ldots ,b_l}n-\S{b_1,b_2, 1,\ldots ,b_l}n-\cdots-\S{b_1,b_2,\ldots ,b_l,1}n \right.\\ 
&&\left.+\S{1\wedge b_1,b_2,\ldots,b_l}n+\cdots+\S{b_1,b_2,\ldots,1\wedge b_m}n \right).
\end{eqnarray*} 
Now we can use (\ref{HShsumpro1}) for all the other terms, and we get an identity which extracts two leading ones. We can repeat this strategy as often as 
needed in order to extract all the powers of $\S{1}n$ from a harmonic sum.
\begin{example} For $x \in \N$,
\begin{eqnarray*}
\S{1,1,2,3}n&=&\frac{1}{2} \S{2,3}n \S{1}n^2+(\S{2,4}n+\S{3,3}n-\S{2,1,3}n-\S{2,3,1}n) \S{1}n\\
		&&+\frac{1}{2} \S{2,5}n+\S{3,4}n+\frac{1}{2} \S{4,3}n-\S{2,1,4}n-\frac{1}{2}\S{2,3,2}n-\S{2,4,1}n\\
		&&-\S{3,1,3}n-\S{3,3,1}n+\S{2,1,1,3}n+\S{2,1,3,1}n+\S{2,3,1,1}n.
\end{eqnarray*}
\end{example}
\begin{remark}
We can decompose a harmonic sum $\S{b_1,\ldots ,b_l}n$ in a univariate polynomial in $\S1n$ with coefficients in the harmonic sums without leading ones. 
If the harmonic sum has exactly $r$ leading ones, the highest power of $\S1n,$ which will appear, is $r.$\\
Note, that in a similar way it is possible to extract trailing ones. Hence we can also decompose a harmonic sum $\S{b_1,\ldots ,b_l}n$ in a univariate 
polynomial in $\S1n$ with coefficients in the harmonic sums without trailing ones.
\label{HSextractleading1rem}
\end{remark}

\section{Definition and Structure of Harmonic Polylogarithms}
\label{HShpldefsec}
In this section we define harmonic polylogarithms as introduced in \cite{Remiddi2000}. In addition we will state some basic properties, which can already be found in \cite{Remiddi2000}.
We start with defining the auxiliary functions $f_a$ for $a\in\{-1,0,1\}$ as follows:
\begin{eqnarray}
&&f_a:(0,1)\mapsto \R\nonumber\\
&&f_a(x)=\left\{ 
		\begin{array}{ll}
				\frac{1}{x}, &  \textnormal{if }a=0  \\
				\frac{1}{\abs{a}-\sign{a}\;x}, & \textnormal{otherwise}.  
		\end{array} 
		\right.  \nonumber
\end{eqnarray}
\begin{definition}[Harmonic Polylogarithms]
For  $m_i \in \{-1,0,1\},$ we define harmonic polylogarithms for $x\in (0,1):$
\begin{eqnarray}
\H{}{x}&=&1,\nonumber\\
\H{m_1,m_{2},\ldots,m_w}{x} &=&\left\{ 
		  	\begin{array}{ll}
						\frac{1}{w!}(\log{x})^w,&  \textnormal{if }(m_1,\ldots,m_w)=(0,\ldots,0)\\
						\int_0^x{f_{m_1}(y) \H{m_{2},\ldots,m_w}{y}dy},& \textnormal{otherwise}. 
					\end{array} \right.  \nonumber
\end{eqnarray}
The length $w$ of the vector $\ve m=(m_1,\ldots,m_w)$ is called the weight of the harmonic polylogarithm $\H{\ve m}x.$
\label{HShlogdef}
\end{definition}
\begin{example} For $x\in(0,1),$
\begin{eqnarray}
\H{1}x &=& \int_0^x{\frac{dy}{1-y}}=-\log{(1-x)}, \nonumber\\
\H{-1}x&=& \int_0^x{\frac{dy}{1+y}}=\log{(1+x)}, \nonumber\\
\H{-1,1}x&=& \int_0^x{\frac{\H{1}{y}}{1+y}dy}= \int_0^x{\frac{\log{(1-y)}}{1+y}dy}. \nonumber
\end{eqnarray}
\end{example}
A harmonic polylogarithm $\H{\ve m}x=\H{m_1,\ldots,m_w}x$ is an analytic functions for $x\in(0,1).$ For the limits $x\rightarrow 0$ and  $x\rightarrow 1$ we have the following facts (compare \cite{Remiddi2000}):
\begin{itemize}
 \item It follows from the definition that if $\ve m\neq \ve 0_w$, $\H{\ve m}0~=~0.$
 \item If $m_1\neq 1$ or if $m_1=1$ and $m_v=0$ for all $v$ with $1<v\leq w$ then $\H{\ve m}1$ is finite.
 \item If $m_1=1$ and $m_v\neq0$ for some $v$ with $1<v\leq w$, $\lim_{x\rightarrow 1^-} \H{\ve m}x$ behaves as a combination of powers of $\log(1-x)$ as we will see later in more detail.
\end{itemize}
We define $\H{\ve m}0:=\lim_{x\rightarrow 0^+} \H{\ve m}x$ and $\H{\ve m}1:=\lim_{x\rightarrow 1^-} \H{\ve m}x$ if the limits exist.
\begin{remark}
For the derivatives we have for all $x\in (0,1)$ that $$ \frac{d}{d x} \H{\ve m}{x}=f_{m_1}(x)\H{m_{2},m_{3},\ldots,m_w}{x}. $$ 
\end{remark}
The product of two harmonic polylogarithms of the same argument can be expressed using the formula
\begin{equation}
\label{HShpro}
\H{\ve p}x\H{\ve q}x=\sum_{\ve r= \ve p \shuffle \ve q}\H{\ve r}x
\end{equation}
in which $\ve p \shuffle \ve q$ represent all merges of $\ve p$ and $\ve q$ in which the relative orders of the elements of $\ve p$ and $\ve q$ are preserved. For details see \cite{Remiddi2000,Ablinger2009}.
The following remarks are in place.
\begin{remark}
The product of two harmonic polylogarithms of weights $w_1$ and $w_2$ can be expressed as a linear combination of $(w_1+w_2)!/(w_1!w_2!)$ harmonic polylogarithms of weight $w=w_1+w_2$.
\end{remark}
\begin{definition}
We say that a harmonic polylogarithm $\H{m_1, m_2,\ldots,m_w}{x}$ has trailing zeros if there is an $i \in \left\{1,2,\ldots,w\right\}$ such that for all $j \in \N$ with $i\leq j \leq w$,  $m_j=0.$ \\
Likewise, we say that a harmonic polylogarithm $\H{m_1, m_2,\ldots,m_w}{x}$ has leading ones if there is an $i \in \left\{1,2,\ldots,w\right\}$ such that for all $j \in \N$ with $1 \leq j \leq i$,  $a_j=1.$  
\end{definition}
Analogously to harmonic sums, we proceed as follows; compare \cite{Remiddi2000}. Choosing one of the harmonic polylogarithms is in (\ref{HShpro}) with weight one yields:
\begin{eqnarray}
\label{HShpro1}
\H{a}{x}\H{m_1,\ldots,m_w}{x}&=&\H{a,m_1,m_{2},\ldots,m_w}{x}\nonumber\\
&&+\H{m_1,a,m_{2},\ldots,m_w}{x}\nonumber\\
&&+\H{m_1,m_{2},a,m_{3},\ldots,m_w}{x}\nonumber\\
&&+\cdots +\nonumber\\
&&+\H{m_1,m_{2},\ldots,m_w,a}{x}.
\end{eqnarray}
We can use (\ref{HShpro1}) now to single out terms of $\log{x}=\H{0}x$ from harmonic polylogarithms whose indices have trailing zeros. Let $a=0$ in (\ref{HShpro1}); then after using the definition of harmonic polylogarithms we get:
\begin{eqnarray}
\H0x\H{m_1,\ldots,m_w}{x}&=&\H{0,m_1,\ldots,m_w}{x} + \H{m_1,0,m_{2},\ldots,m_w}{x}\nonumber\\
&&+\H{m_1,m_{2},0,m_{3},\ldots,m_w}{x} + \cdots + \H{m_1,\ldots,m_w,0}{x},\nonumber
\end{eqnarray} 
or equivalently:
\begin{eqnarray}
\H{m_1,\ldots,m_w,0}{x} &=& \H0x\H{m_1,\ldots,m_w}{x} - \H{0,m_1,\ldots,m_w}{x}\nonumber\\ 
&&-\H{m_1,0,m_{2},\ldots,m_w}{x} -\cdots- \H{m_1,\ldots,0,m_w}{x}.\nonumber
\end{eqnarray} 
If $m_w$ is $0$ as well, we can move the last term to the left and can divide by two. This leads to
\begin{eqnarray}
\H{m_1,\ldots,0,0}{x} &=& \frac{1}{2}\left(\H0x\H{m_1,\ldots,m_{w-1},0}{x} - \H{0,m_1,\ldots,m_{w-1},0}{x}\right.\nonumber\\ 
&&\left.-\H{m_1,0,m_{2},\ldots,m_{w-1},0}{x} -\cdots- \H{m_1,\ldots,0,m_{w-1}}{x}\right).\nonumber
\end{eqnarray} 
Now we can use (\ref{HShpro1}) for all the other terms, and we get an identity which extracts the logarithmic singularities due to two trailing zeros. We can repeat this strategy as often as needed in order to extract all the powers of $\log(x)$ or equivalently $\H{0}{x}$ from a harmonic polylogarithm.
\begin{example} For $x \in (0,1)$,
\begin{eqnarray}
\H{1, -1, 0, 0}{x}&=&\frac{1}{2} \H{0}{x}^2 \H{1, -1}{x} - \H{0}{x} \H{0, 1, -1}{x} \nonumber\\
										&&- \H{0}{x} \H{1, 0, -1}{x} +\H{0, 0, 1, -1}{x} + \H{0, 1, 0, -1}{x}\nonumber\\
										&&+ \H{1, 0, 0, -1}{x}\nonumber\\
										&=&\frac{1}{2} \H{0}{x}^2 \H{1, -1}{x} - \H{0}{x} \H{0, 1, -1}{x} \nonumber\\
										&&- \H{0}{x} \H{1, 0, -1}{x} +\H{0, 0, 1, -1}{x} + \H{0, 1, 0, -1}{x}\nonumber\\
										&&+ \H{1, 0, 0, -1}{x}.\nonumber
\end{eqnarray}
\end{example}
\begin{remark}
We can decompose a harmonic polylogarithm $\H{m_1, m_{2},\ldots,m_w}{x}$ in a univariate polynomial in $\H{0}{x}$ with coefficients in the harmonic polylogarithms without trailing zeros. If the harmonic polylogarithm has exactly $r$ trailing zeros, the highest power of $\H{0}{x},$ which will appear, is $r.$
\label{HSremarktrailing}
\end{remark}
Similarly, we can use (\ref{HShpro1}) to extract powers of $\log(1-x)$, or equivalently $-\H{1}{x}$ from harmonic polylogarithms whose indices have leading ones and hence are singular around $x=1$. 

\begin{example} For $x \in (0,1)$,
\begin{eqnarray}
\H{1, 1, 0, -1}{x}&=&-\H{0, -1}{x} \H{1, 1}{x} - \H{1}{x} \H{0, -1, 1}{x}\nonumber\\
										&&- \H{1}{x} \H{0, 1, -1}{x} + \H{0, -1, 1, 1}{x} \nonumber\\
										&&+ \H{0, 1, -1, 1}{x} + \H{0, 1, 1, -1}{x} \nonumber\\
										&=&-\frac{1}{2}\H{0, -1}{x} \H{1}{x}^2 -\H{1}{x} \H{0, -1, 1}{x}\nonumber\\
										&&- \H{1}{x} \H{0, 1, -1}{x} + \H{0, -1, 1, 1}{x} \nonumber\\
										&&+ \H{0, 1, -1, 1}{x} + \H{0, 1, 1, -1}{x}. \nonumber
\end{eqnarray}
\end{example}

\begin{remark}\label{HSremlead1}
We can decompose a harmonic polylogarithm $\H{m_1, m_{2},\ldots,m_w}{x}$ in a univariate polynomial in $\H{1}{x}$ with coefficients in the harmonic 
polylogarithms without leading ones. If the harmonic polylogarithm has exactly $r$ leading ones, the highest power of $\H{1}{x},$ which will appear, 
is $r.$ So all divergences of harmonic polylogarithms for $x\rightarrow 1$ can be traced back to the basic divergence of $\H{1}{x}=-\log(1-x)$ for $x\rightarrow 1$.
\end{remark}

\begin{remark}\label{HSremlead1trail0}
We can combine these two strategies to extract both, leading ones and trialling zeros. Hence, it is always possible to express a harmonic polylogarithm in a 
bivariate polynomial in $\H{0}{x}=\log(x)$ and $\H{1}{x}=-\log(1-x)$ with coefficients in the harmonic polylogarithms without leading ones or trailing zeros, 
which are continuous on $[0,1]$ and finite at $x=0$ and $x=1.$ 
\end{remark}

\begin{example} For $x \in (0,1)$,
\begin{eqnarray}
\H{1, -1, 0}{x}&=&-\H{0}{x} \H{-1, 1}{x} + \H{1}{x} (\H{-1}{x} \H{0}{x} \nonumber\\
								 &&- \H{0, -1}{x}) + \H{0, -1, 1}{x} \nonumber\\
								 &=&-\log(x) \H{-1, 1}{x} + \log(1-x) (\H{-1}{x} \log(x) \nonumber\\
								 &&- \H{0, -1}{x}) + \H{0, -1, 1}{x}. \nonumber
\end{eqnarray}
\end{example}

\section{Identities between Harmonic Polylogarithms of Related Arguments}
\label{HSRelatedArguments}
In this section we look at special transforms of the argument of harmonic polylogarithms. Again they can be already found in \cite{Remiddi2000}.

\subsection{\texorpdfstring{$1-x\rightarrow x$}{1-x->x}}
\label{HS1xx}
For the transformation $1-x\rightarrow x$ we restrict to harmonic polylogarithms with index sets that do not contain $-1$ (we will include the index $-1$ later in Section \ref{HS1xxextended}). 
We proceed recursively on the weight $w$ of the harmonic polylogarithm, For the base cases we have for $x \in (0,1)$
\begin{eqnarray} 
\H{0}{1-x}&=&-\H{1}{x}\label{HStrafo1xx0}\\
\H{1}{1-x}&=&-\H{0}{x}\label{HStrafo1xx}.
\end{eqnarray}
Now let us look at higher weights $w>1.$ We consider $\H{m_1,m_2,\ldots,m_w}{1-x}$ with $m_i\in \{0,1\}$ and suppose that we can already apply the transformation for harmonic polylogarithms of weight $<w.$ 
If $m_1=1,$ we can remove leading ones and end up with harmonic polylogarithms without leading ones and powers of $\H{1}{1-x}.$ For the powers of $\H{1}{1-x}$ we can use (\ref{HStrafo1xx}); therefore, 
only the cases in which the first index $m_1\neq 1$ are to be considered. If $m_i=0$ for all $1<i\leq w,$ we are in fact dealing with a power of $\H{0}{1-x}$ and hence we can use (\ref{HStrafo1xx0});
if $m_1=0$ and there exists an $i$ such that $m_i\neq 0$ we get for $x\in (0,1)$(see \cite{Remiddi2000}):
\begin{eqnarray}
\H{0,m_2,\ldots,m_w}{1-x}&=&\int_0^{1-x}{\frac{\H{m_2,\ldots,m_w}y}{y}dy}\nonumber\\
		&=&\int_0^{1}{\frac{\H{m_2,\ldots,m_w}y}{y}dy}-\int_{1-x}^{1}{\frac{\H{m_2,\ldots,m_w}y}{y}dy}\nonumber\\
		&=&\H{0,m_2,\ldots,m_w}1-\int_{0}^{x}{\frac{\H{m_2,\ldots,m_w}{1-t}}{1-t}dt},
\end{eqnarray}
where the constant $\H{0,m_2,\ldots,m_w}1$ is finite.  Since we know the transform for weights $<w$ we can apply it to $\H{m_2,\ldots,m_w}{1-t}$ and finally we obtain the required weight $w$ identity.

\subsection{\texorpdfstring{$\frac{1-x}{1+x} \rightarrow x$}{(1-x)/(1+x)->x}}
\label{HS1x1x}
Proceeding recursively on the weight $w$ of the harmonic polylogarithm, the base case is for $x\in (0,1)$
\begin{eqnarray}
\H{-1}{\frac{1-x}{1+x}}&=&\H{-1}{1}-\H{-1}x\\
\H{0}{\frac{1-x}{1+x}}&=&-\H{1}{x}+\H{-1}x\\
\H{1}{\frac{1-x}{1+x}}&=&-\H{-1}{1}-\H{0}{x}+\H{-1}{x}\label{HStrafo1x1x}.
\end{eqnarray}
Now let us look at higher weights $w>1.$ We consider $\H{m_1,m_2,\ldots,m_w}{\frac{1-x}{1+x}}$ with $m_i\in \{-1,0,1\}$ and we suppose that we can already apply the transformation for harmonic 
polylogarithms of weight $<w.$ If $m_1=1,$ we can remove leading ones and end up with harmonic polylogarithms without leading ones and powers of $\H{1}{\frac{1-x}{1+x}}.$ For the powers 
of $\H{1}{\frac{1-x}{1+x}}$ we can use (\ref{HStrafo1x1x}); therefore, only the cases in which the first index $m_1\neq 1$ are to be considered. For $m_1\neq 1$ we get for $x\in (0,1)$ (see \cite{Remiddi2000}):
\begin{eqnarray*}
\H{-1,m_2,\ldots,m_w}{\frac{1-x}{1+x}}&=&\H{-1,m_2,\ldots,m_w}1-\int_0^x\frac{1}{1+t}\H{m_2,\ldots,m_w}{\frac{1-t}{1+t}}dt\\
\H{0,m_2,\ldots,m_w}{\frac{1-x}{1+x}}&=&\H{0,m_2,\ldots,m_w}1-\int_0^x\frac{1}{1-t}\H{m_2,\ldots,m_w}{\frac{1-t}{1+t}}dt\\
		&&-\int_0^x\frac{1}{1-t}\H{m_2,\ldots,m_w}{\frac{1+t}{1+t}}dt.
\end{eqnarray*}
Again we have to consider the case for $\H{0,\ldots,0}{\frac{1-x}{1+x}}$ separately, however we can handle this case since we can handle $\H{0}{\frac{1-x}{1+x}}.$
Since we know the transform for weights $<w,$ we can apply it to $\H{m_2,\ldots,m_w}{\frac{1+t}{1+t}},$ and finally we obtain the required weight $w$ identity by using the definition of the extended harmonic polylogarithms.

\subsection{\texorpdfstring{$\frac{1}{x}\rightarrow x$}{1/x->x}}
\label{HS1dxx}
We first restrict this transformation to harmonic polylogarithms with index sets that do not contain $1.$
Proceeding recursively on the weight $w$ of the harmonic polylogarithm we use the base case for $x\in (0,1):$
\begin{eqnarray*}
\H{-1}{\frac{1}{x}}&=&\H{-1}{x}-\H{0}x\\
\H{0}{\frac{1}{x}}&=&-\H{0}{x}.
\end{eqnarray*}
Now let us look at higher weights $w>1.$ We consider $\H{m_1,m_2,\ldots,m_w}{\frac{1}{x}}$ with $m_i\in \{-1,0\}$ and suppose that we can already apply the transformation for harmonic polylogarithms 
of weight $<w.$ For $m_1\neq 1$ we get for $x\in (0,1)$ (compare \cite{Remiddi2000}):
\begin{eqnarray*}
\H{-1,m_2,\ldots,m_w}{\frac{1}{x}}&=&\H{-1,m_2,\ldots,m_w}1+\int_x^1\frac{1}{t^2(1+1/t)}\H{m_2,\ldots,m_w}{\frac{1}{t}}dt\\
				  &=&\H{-1,m_2,\ldots,m_w}1+\int_x^1\frac{1}{t}\H{m_2,\ldots,m_w}{\frac{1}{t}}dt\\
				  & &-\int_x^1\frac{1}{1+t}\H{m_2,\ldots,m_w}{\frac{1}{t}}dt\\
\H{0,m_2,\ldots,m_w}{\frac{1}{x}}&=&\H{0,m_2,\ldots,m_w}1+\int_x^1\frac{1}{t^2(1/t)}\H{m_2,\ldots,m_w}{\frac{1}{t}}dt\\
				  &=&\H{0,m_2,\ldots,m_w}1+\int_x^1\frac{1}{t}\H{m_2,\ldots,m_w}{\frac{1}{t}}dt.\\
\end{eqnarray*}
Again we have to consider the case for $\H{0,\ldots,0}{\frac{1}{x}}$ separately, however we can handle this case since we can handle $\H{0}{\frac{1}{x}}.$
Since we know the transformation for weights $<w$ we can apply it to $\H{m_2,\ldots,m_w}{\frac{1}{t}}$ and finally we obtain the required weight $w$ identity by using the definition of the
harmonic polylogarithms.\\
The index $1$ in the index set leads to a branch point at $1$ and a branch cut discontinuity in the complex plane for $x\in(1,\infty).$ This corresponds to the branch point at $x=1$ 
and the branch cut discontinuity in the complex plane for $x\in(1,\infty)$ of $-\log(1-x)=\H{1}x.$ However the analytic properties of the logarithm are well known and we
can set for $0<x<1$ for instance
\begin{eqnarray}
\H{1}{\frac{1}{x}}&=&\H{1}{x}+\H{0}{x}-i\pi, \label{HStrafo1dx11}
\end{eqnarray}
by approaching the argument $\frac{1}{x}$ form the lower half complex plane.
The strategy now is as follows. If a harmonic polylogarithm has leading ones, we remove them and end up with harmonic polylogarithms without leading ones 
and powers of $\H{1}{\frac{1}{x}}.$ We know how to deal 
with the harmonic polylogarithms without leading ones  due to the previous part of this section and for the powers of $\H{1}{\frac{1}{x}}$ we can 
use~(\ref{HStrafo1dx11}).

\section{Power Series Expansion of Harmonic Polylogarithms}
\label{HSPowerExp}
Due to trailing zeros in the index, harmonic polylogarithms in general do not have regular Taylor series expansions. To be more precise, the expansion is separated into two parts, 
one in $x$ and one in $\log(x);$ see also Remark \ref{HSremlead1trail0}. E.g., it can be seen easily that trailing zeros are responsible for powers of $\log(x)$ in the expansion; see \cite{Remiddi2000}.
Subsequently, we only consider the case without trailing zeros in more detail. For weight one we get the following well known expansions.
\begin{lemma} For $x\in \left[0\right.,\left.1\right),$
\begin{eqnarray}
\H{1}{x}&=&-\log(1-x)=\sum_{i=1}^\infty{\frac{x^i}{i}},\nonumber\\
\H{-1}{x}&=&\log(1+x)=\sum_{i=1}^\infty{-\frac{(-x)^i}{i}}.\nonumber
\end{eqnarray}
\label{HSpowexp1}
\end{lemma}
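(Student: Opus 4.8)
The statement to prove is Lemma~\ref{HSpowexp1}: the power series expansions of $\H{1}{x}$ and $\H{-1}{x}$ on $[0,1)$.

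\medskip

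The plan is to compute both power series directly from the definition of the harmonic polylogarithms as iterated integrals, using the geometric series. First I would recall from Definition~\ref{HShlogdef} and the example following it that $\H{1}{x}=\int_0^x f_1(y)\,dy = \int_0^x \frac{dy}{1-y}$ and $\H{-1}{x}=\int_0^x f_{-1}(y)\,dy = \int_0^x \frac{dy}{1+y}$, where $f_1(y)=\frac{1}{1-y}$ and $f_{-1}(y)=\frac{1}{1+y}$ follow from the definition of the auxiliary functions $f_a$. Evaluating these elementary integrals gives $\H{1}{x}=-\log(1-x)$ and $\H{-1}{x}=\log(1+x)$, which are exactly the closed forms already recorded in the example after Definition~\ref{HShlogdef}; so in fact only the series part needs genuine work.

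\medskip

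For the series, I would expand the integrand as a geometric series. For $y\in[0,x]$ with $x<1$ we have $|y|<1$, so $\frac{1}{1-y}=\sum_{i=0}^\infty y^i$ and $\frac{1}{1+y}=\sum_{i=0}^\infty (-y)^i$, each converging uniformly on $[0,x]$ since $x<1$. Uniform convergence on the compact interval $[0,x]$ justifies interchanging summation and integration, so
\[
\H{1}{x}=\int_0^x\sum_{i=0}^\infty y^i\,dy=\sum_{i=0}^\infty\int_0^x y^i\,dy=\sum_{i=0}^\infty\frac{x^{i+1}}{i+1}=\sum_{i=1}^\infty\frac{x^i}{i},
\]
and likewise
\[
\H{-1}{x}=\int_0^x\sum_{i=0}^\infty(-y)^i\,dy=\sum_{i=0}^\infty\frac{(-1)^i x^{i+1}}{i+1}=\sum_{i=1}^\infty\frac{(-1)^{i-1}x^i}{i}=\sum_{i=1}^\infty-\frac{(-x)^i}{i},
\]
which are the claimed formulas.

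\medskip

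There is essentially no hard part here: the only point requiring any care is the termwise integration, which is legitimate because the geometric series converges uniformly on every compact subinterval of $[0,1)$, and the statement is restricted to $x\in[0,1)$ precisely so that this applies. (At the endpoint $x=1$ the first series still converges, by Abel's theorem, to $-\log(1-x)$ as a limit, but since the lemma only asserts the expansion on $[0,1)$ I would not need to address that subtlety.) One could alternatively present this as: the right-hand sides are power series with radius of convergence $1$, their termwise derivatives are $\sum_{i\ge1}x^{i-1}=\frac{1}{1-x}=f_1(x)$ and $\sum_{i\ge1}(-1)^{i-1}x^{i-1}=\frac{1}{1+x}=f_{-1}(x)$ respectively, and both sides vanish at $x=0$, so by Remark (the derivative formula for harmonic polylogarithms) and uniqueness of antiderivatives the two sides agree on $[0,1)$.
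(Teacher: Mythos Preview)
Your proof is correct. The paper actually states this lemma without proof, treating the expansions as well known; your argument via geometric series and termwise integration is the standard justification and is entirely appropriate here.
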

For higher weights we proceed inductively (see \cite{Remiddi2000}). If we assume that $\ve m$ has no trailing zeros and that for $x\in[0,1]$ we have
$$
\H{\ve m}{x}=\sum_{i=1}^\infty{\frac{\sigma^i x^i}{i^a}\S{\ve n}i},
$$
then for $x\in \left[0\right.,\left.1\right)$ the following theorem holds.
\begin{thm} For $x\in \left[0\right.,\left.1\right),$
\begin{eqnarray}
\H{0,\ve m}{x}&=&\sum_{i=1}^\infty{\frac{\sigma^ix^i}{i^{a+1}}\S{\ve n}i},\nonumber\\
\H{1,\ve m}{x}&=&\sum_{i=1}^\infty{\frac{x^i}{i}\S{\sigma a ,\ve n}{i-1}}=\sum_{i=1}^\infty{\frac{x^i}{i}\S{\sigma a ,\ve n}i}-
							\sum_{i=1}^\infty{\frac{\sigma^ix^i}{i^{a+1}}\S{\ve n}i},\nonumber\\
\H{-1,\ve m}{x}&=&-\sum_{i=1}^\infty{\frac{(-1)^ix^i}{i}\S{-\sigma a ,\ve n}{i-1}}=-\sum_{i=1}^\infty{\frac{(-1)^ix^i}{i}\S{-\sigma a ,\ve n}i}+
							\sum_{i=1}^\infty{\frac{\sigma^ix^i}{i^{a+1}}\S{\ve n}i}.\nonumber
\end{eqnarray}
\label{HSpowexp}
\end{thm}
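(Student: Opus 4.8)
\section*{Proof proposal}

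The plan is to substitute the assumed series for $\H{\ve m}y$ into the recursive integral definition $\H{a,\ve m}x=\int_0^x f_a(y)\,\H{\ve m}y\,dy$ for each $a\in\{-1,0,1\}$ and to integrate term by term. First I would record the one analytic prerequisite: by hypothesis the power series $\sum_{i=1}^\infty\frac{\sigma^i x^i}{i^a}\S{\ve n}i$ converges for every $x\in[0,1)$, so its radius of convergence is at least $1$ and it converges absolutely and uniformly on each $[0,r]$ with $r<1$; this justifies every interchange of summation with integration and every Cauchy product used below, and it is the only point that needs care.

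For $a=0$ we have $f_0(y)=1/y$, so $\H{0,\ve m}x=\int_0^x\frac1y\sum_{i=1}^\infty\frac{\sigma^i y^i}{i^a}\S{\ve n}i\,dy=\sum_{i=1}^\infty\frac{\sigma^i}{i^a}\S{\ve n}i\int_0^x y^{i-1}\,dy=\sum_{i=1}^\infty\frac{\sigma^i x^i}{i^{a+1}}\S{\ve n}i$, which is the first assertion. For $a=\pm1$ write $f_{\pm1}(y)=\frac{1}{1\mp y}=\sum_{j=0}^\infty(\pm1)^j y^j$ and form the Cauchy product with $\sum_{i=1}^\infty\frac{\sigma^i y^i}{i^a}\S{\ve n}i$: the coefficient of $y^m$ for $m\ge1$ is $\sum_{i=1}^m(\pm1)^{m-i}\frac{\sigma^i}{i^a}\S{\ve n}i$. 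In the $+$ case this equals $\sum_{i=1}^m\frac{\sigma^i}{i^a}\S{\ve n}i=\S{\sigma a,\ve n}m$ straight from Definition~\ref{HSdefHsum}; in the $-$ case, using $(-1)^{m-i}\sigma^i=(-1)^m(-\sigma)^i$, it equals $(-1)^m\sum_{i=1}^m\frac{(-\sigma)^i}{i^a}\S{\ve n}i=(-1)^m\S{-\sigma a,\ve n}m$. Hence the integrands are $\sum_{m=1}^\infty\S{\sigma a,\ve n}m y^m$ and $\sum_{m=1}^\infty(-1)^m\S{-\sigma a,\ve n}m y^m$; integrating over $[0,x]$ term by term and reindexing via $m=i-1$ (a harmonic sum with upper limit $0$ being the empty sum $0$, so the sums may be started at $i=1$) yields $\H{1,\ve m}x=\sum_{i=1}^\infty\frac{x^i}{i}\S{\sigma a,\ve n}{i-1}$ and $\H{-1,\ve m}x=-\sum_{i=1}^\infty\frac{(-1)^i x^i}{i}\S{-\sigma a,\ve n}{i-1}$, i.e.\ the first equality on each line.

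The second equality on each line is pure index bookkeeping using $\S{c,\ve n}{i-1}=\S{c,\ve n}i-\frac{\sign{c}^{i}}{i^{\abs{c}}}\S{\ve n}i$ (peeling off the top summand of the outermost sum): with $c=\sigma a$ this gives $\S{\sigma a,\ve n}{i-1}=\S{\sigma a,\ve n}i-\frac{\sigma^i}{i^a}\S{\ve n}i$ and hence the stated splitting of $\H{1,\ve m}x$; with $c=-\sigma a$, combined with $(-1)^i(-\sigma)^i=\sigma^i$, it gives the stated splitting of $\H{-1,\ve m}x$. I do not expect any serious obstacle; once the uniform-convergence remark is in place the rest is routine manipulation of indices. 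One may note in passing that $(0,\ve m)$, $(1,\ve m)$ and $(-1,\ve m)$ again have no trailing zeros, so these are genuine Taylor expansions without $\log x$ terms (consistent with Remark~\ref{HSremarktrailing}), and that, combined with Lemma~\ref{HSpowexp1} as the weight-one base case, the theorem delivers the power series of every harmonic polylogarithm without trailing zeros by induction on the weight.
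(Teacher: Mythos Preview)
Your proof is correct and follows essentially the same approach the paper uses for the more general multiple-polylogarithm version in Chapter~\ref{SSchapter}: expand $f_a(y)$ as a geometric series, form the Cauchy product with the assumed expansion of $\H{\ve m}y$, integrate term by term, and then peel off the top summand to pass between the $\S{\cdot}{i-1}$ and $\S{\cdot}{i}$ forms. The paper itself does not spell out a proof of Theorem~\ref{HSpowexp} in the harmonic-sum chapter (it defers to \cite{Remiddi2000}), so your write-up in fact fills that gap cleanly.
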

\begin{example} For $x\in \left[0,1\right],$
$$
\H{-1,1}x=\sum_{i = 1}^{\infty}\frac{x^i}{i^2} - \sum_{i = 1}^{\infty}\frac{{\left( -x \right) }^i\,\S{-1}{i}}{i}.
$$
\end{example}

Note that even the reverse direction is possible, \ie given a sum of the form
$$
\sum_i^{\infty}(\pm x)^i\frac{\S{\ve n}{i}}{i^k}
$$
with $k\in \N_0,$ we can find a linear combination of (possibly weighted) harmonic polylogarithms $h(x)$ such that for $x\in(0,1):$
$$
h(x)=\sum_i^{\infty}x^i\frac{\S{\ve n}{i}}{i^k}.
$$
\begin{example}For $x\in(0,1)$ we have
\begin{eqnarray*}
\sum_i^{\infty}x^i\frac{\S{1,-2}{i}}{i}=-\textnormal{H}_{0,0,0,-1}(x)-\textnormal{H}_{0,1,0,-1}(x)-\textnormal{H}_{1,0,0,-1}(x)-\textnormal{H}_{1,1,0,-1}(x).
\end{eqnarray*}
\end{example}

\subsection{Asymptotic Behavior of Harmonic Polylogarithms}
\label{HSasybeh}
Combining Section \ref{HS1dxx} together with the power series expansion of harmonic polylogarithms we can determine the asymptotic behavior of harmonic polylogarithms. Let us look at the harmonic polylogarithm 
$\H{\ve m}x$ and define $y:=\frac{1}{x}.$ Using Section \ref{HS1dxx} on $\H{\ve m}{\frac{1}{y}}=\H{\ve m}x$ we can rewrite $\H{\ve m}x$ in terns of harmonic polylogarithms at argument $y$ together with some constants.
Now we can get the power series expansion of the harmonic polylogarithms at argument $y$ about 0 easily using the previous part of this section. Since sending $x$ to infinity corresponds to sending $y$ to zero,
 we get the asymptotic behavior of $\H{\ve m}x.$
\begin{example}For $x \in (0, \infty)$ we have
 \begin{eqnarray*}
  \H{-1,0}x&=&2 \H{-1,0}1-\H{-1,0}{\frac{1}x}-\H{0,0}1+\H{0,0}{\frac{1}x}\\
	   &=&\frac{1}{2}\, \H0x^2-\H0x \left(\sum _{i=1}^{\infty } \frac{\left(-\frac{1}{x}\right)^{i}}{i}\right)-\sum _{i=1}^{\infty}\frac{\left(-\frac{1}{x}\right)^{i}}{i^2}\\
	   & &+2\, \H{-1,0}1.
 \end{eqnarray*}
\end{example}

\subsection{Values of Harmonic Polylogarithms at 1 Expressed by Harmonic Sums at Infinity}
\label{HSinfval}
As worked out earlier, the expansion of the harmonic polylogarithms without trailing zeros is a combination of sums of the form:
$$
\sum_{i=1}^\infty x^i \frac{\sigma^i \S{\ve n}i}{i^a},\ \sigma \in \{-1,1\}, \ a \in \N.
$$
For $x\rightarrow1$ these sums turn into harmonic sums at infinity if $\sigma a\neq1$:
$$
\sum_{i=1}^\infty x^i \frac{\sigma^i \S{\ve n}i}{i^a}\rightarrow \S{\sigma a,\ve n}{\infty}.
$$
\begin{example}
$$
\H{-1,1,0}{1}=-2\S{3}{\infty}+\S{-1,-2}{\infty}+\S{-2,-1}{\infty}.
$$
\end{example}
If $\sigma a=1,$ these sums turn into
$$
\sum_{i=1}^\infty x^i \frac{\S{\ve n}i}{i},
$$
and sending $x$ to one gives:
$$
\lim_{x\rightarrow1}\sum_{i=1}^\infty x^i \frac{\S{\ve n}i}{i}=\infty.
$$
We see that these limits do not exist: this corresponds to the infiniteness of the harmonic sums with leading ones: $\lim_{k\rightarrow \infty}\S{1,\ve n}{k}=\infty.$
Hence the values of harmonic polylogarithms at one are related to the values of the multiple harmonic sums at infinity (see \cite{Remiddi2000}). For further details see Section \ref{HSinfrelsec}.

\section{Integral Representation of Harmonic Sums}
\label{HSIntegralrep}
In this section we want to derive an integral representation of harmonic sums. First we will find a multidimensional integral which can afterwards be 
rewritten as a sum of one-dimensional integrals in terms of harmonic polylogarithms using integration by parts. We start with the base case of a harmonic sum of depth one.

\begin{lemma}
Let $m\in\N,$ and $n\in\N$ then
\begin{eqnarray*}
\S{1}n&=&\int_0^{1}{\frac{x_1^n-1}{x_1-1}dx_1}\\
\S{-1}n&=&\int_0^{1}{\frac{(-x_1)^n-1}{x_1+1}dx_1}\\
\S{m}n&=&\int_0^1{\frac{1}{x_m}\int_0^{x_m}{\frac{1}{x_{m-1}} \cdots \int_0^{x_3}{\frac{1}{x_2}\int_0^{x_2}{\frac{x_1^n-1}{x_1-1}dx_1}dx_2}\cdots dx_{m-1}}dx_m}\\
\S{-m}n&=&\int_0^1{\frac{1}{x_m}\int_0^{x_m}{\frac{1}{x_{m-1}} \cdots \int_0^{x_3}{\frac{1}{x_2}\int_0^{x_2}{\frac{(-x_1)^n-1}{x_1+1}dx_1}dx_2}\cdots dx_{m-1}}dx_m}.
\end{eqnarray*}
\label{HSintrep1}
\end{lemma}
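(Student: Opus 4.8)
The plan is to establish the four formulas in increasing order of complexity: first the two depth-one base cases $\S{1}n$ and $\S{-1}n$ as closed one-dimensional integrals, then promote these to the nested $m$-fold integral representations for $\S{m}n$ and $\S{-m}n$.

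First I would prove the identity $\S{1}n=\int_0^1\frac{x_1^n-1}{x_1-1}\,dx_1$. The key observation is the finite geometric sum $\frac{x^n-1}{x-1}=1+x+x^2+\cdots+x^{n-1}=\sum_{j=0}^{n-1}x^j$, which is a polynomial (hence the integrand has a removable singularity at $x_1=1$ and the integral is proper). Integrating term by term gives $\int_0^1\sum_{j=0}^{n-1}x_1^j\,dx_1=\sum_{j=0}^{n-1}\frac{1}{j+1}=\sum_{i=1}^{n}\frac1i=\S{1}n$. For $\S{-1}n$ the analogous step is $\frac{(-x_1)^n-1}{x_1+1}=\frac{(-x_1)^n-1}{-(-x_1)+1}\cdot(-1)\cdot(-1)$; more directly, substituting $u=-x_1$ one sees $\frac{(-x_1)^n-1}{x_1+1}=-\sum_{j=0}^{n-1}(-x_1)^j$, so $\int_0^1\frac{(-x_1)^n-1}{x_1+1}\,dx_1=-\sum_{j=0}^{n-1}\frac{(-1)^j}{j+1}=\sum_{i=1}^n\frac{(-1)^i}{i}=\sum_{i=1}^n\frac{\sign{-1}^i}{i}=\S{-1}n$, using Definition~\ref{HSdefHsum} with $a_1=-1$.

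Next I would prove the nested formulas by induction on $m$. The inductive claim is the stronger statement that the innermost $(m-1)$-fold iterated integral equals (the series for) $\S{m-1}{\cdot}$ evaluated at the outer variable; concretely, for $x\in(0,1]$ one shows $\int_0^{x}\frac{1}{x_{m-1}}\int_0^{x_{m-1}}\frac{1}{x_{m-2}}\cdots\int_0^{x_2}\frac{x_1^n-1}{x_1-1}\,dx_1\cdots dx_{m-1}=\sum_{i=1}^{n}\frac{1}{i^{m}}\big(1-\text{lower terms}\big)$ — more cleanly, it is easiest to carry through the series $\sum_{i=1}^n\frac{x^i}{i^{m-1}}$ minus its value piece, or simply to track that each application of $y\mapsto\int_0^y\frac{(\cdot)}{t}\,dt$ sends $\sum_i c_i x^i$ to $\sum_i \frac{c_i}{i}x^i$ (valid by uniform convergence of the relevant polynomial/power-series on compact subintervals, or termwise since everything here is a genuine polynomial in the upper variable once $n$ is fixed). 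Starting from $\int_0^{x_2}\frac{x_1^n-1}{x_1-1}\,dx_1=\sum_{i=1}^n\frac{x_2^i}{i}-\sum_{i=1}^n\frac1i\cdot$(constant)… — actually, since $\int_0^{x_2}\sum_{j=0}^{n-1}x_1^j\,dx_1=\sum_{i=1}^n\frac{x_2^i}{i}$, the constant $-1/(x_1-1)$ part integrates to a constant times $x_2$ which is absorbed; the cleanest bookkeeping is to note $\int_0^{x_2}\frac{x_1^n-1}{x_1-1}\,dx_1$ is exactly $\sum_{i=1}^n\frac{x_2^i}{i}$ only when we drop the "$-1$", so I would instead write the integrand as the polynomial $\sum_{j=0}^{n-1}x_1^j$ from the outset. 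Then each successive $\int_0^{x_{k+1}}\frac{1}{x_k}(\cdots)\,dx_k$ divides the coefficient of $x_k^i$ by $i$, so after $m-1$ divisions one reaches $\sum_{i=1}^n\frac{x_m^i}{i^{m-1}}$ at level $x_m$, and the final $\int_0^1\frac{1}{x_m}\sum_{i=1}^n\frac{x_m^i}{i^{m-1}}\,dx_m=\sum_{i=1}^n\frac{1}{i^m}=\S{m}n$. The $\S{-m}n$ case is identical except the base integrand is the polynomial $-\sum_{j=0}^{n-1}(-x_1)^j$, which produces $\sum_{i=1}^n\frac{(-1)^i}{i}x_m^i$ at level $x_m$ and $\sum_{i=1}^n\frac{(-1)^i}{i^m}=\S{-m}n$ after the last integration.

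The main obstacle — really a bookkeeping subtlety rather than a deep one — is handling the "$-1$" in the numerator $x_1^n-1$ honestly: the stated integrand $\frac{x_1^n-1}{x_1-1}$ is what makes the outermost integral converge at the endpoint, so I must verify that replacing it by the polynomial $\sum_{j=0}^{n-1}x_1^j$ is legitimate (it is, identically on $[0,1)$, with the value at $1$ being $n$), and then confirm that no spurious $\log$ or divergence is introduced at any of the inner integrations — which is immediate once every integrand in sight is a polynomial in the current integration variable. A clean way to present this uniformly is to prove the single auxiliary identity: for any polynomial $P(t)=\sum_{j\ge1}c_j t^j$ with $P(0)=0$, $\int_0^x \frac{P(t)}{t}\,dt=\sum_{j\ge1}\frac{c_j}{j}x^j$, apply it $m-1$ times starting from $P(t)=\sum_{i=1}^n\frac{\sign{a}^i}{i}t^i$ (the integrated base case for $a=\pm1$), and conclude with one more application of the same identity with $x=1$.
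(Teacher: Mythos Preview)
Your proposal is correct and follows essentially the same route as the paper: replace the innermost integrand by the finite geometric sum $\sum_{j=0}^{n-1}x_1^j$ (respectively $-\sum_{j=0}^{n-1}(-x_1)^j$), integrate termwise for the base cases, and then induct on $m$ using that each layer $\int_0^{x}\tfrac{1}{t}(\cdot)\,dt$ sends $\sum_i c_i t^i$ to $\sum_i \tfrac{c_i}{i}x^i$. Your hesitation about the ``$-1$'' in the numerator is unnecessary, since $\frac{x_1^n-1}{x_1-1}=\sum_{j=0}^{n-1}x_1^j$ is a polynomial identity (not merely an equality away from $x_1=1$), so there is no separate constant piece to track.
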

\begin{proof}
We proceed by induction on the weight $m.$ For $m=1$ we get:
\begin{eqnarray*}
\int_0^{1}{\frac{x_1^n-1}{x_1-1}dx_1}&=&\int_0^{1}{\sum_{i=0}^{n-1}{{x_1}^i}dx_1}=\sum_{i=0}^{n-1}\int_0^{1}{{{x_1}^i}dx_1}=\sum_{i=0}^{n-1}{\frac{1}{i+1}}
	=\sum_{i=1}^{n}{\frac{1}{i}}=\S{1}n
\end{eqnarray*}
and
\begin{eqnarray*}
\int_0^{1}{\frac{(-x_1)^n-1}{x_1+1}dx_1}&=&-\int_0^{1}{\sum_{i=0}^{n-1}{(-x_1)^i}dx_1}=-\sum_{i=0}^{n-1}\int_0^{1}{{(-x_1)^i}dx_1}=-\sum_{i=0}^{n-1}{\frac{(-1)^i}{i+1}}\\
	&=&\sum_{i=1}^{n}{\frac{(-1)^i}{i}}=\S{-1}n.
\end{eqnarray*}
Now suppose the theorem holds for weight $m-1$. We get
\begin{eqnarray*}
&&\int_0^1{\frac{1}{x_m}\int_0^{x_m}{\frac{1}{x_{m-1}} \cdots \int_0^{x_3}{\frac{1}{x_2}\int_0^{x_2}{\frac{x_1^n-1}{x_1-1}dx_1}dx_2}\cdots dx_{m-1}}dx_m}=\\
&&\hspace{0.5cm}\int_0^1{\frac{\SS{m-1}{x_m}n}{x_m}dx_m}=\int_0^1{\sum_{i=1}^n\frac{{x_m}^{i-1}}{i^{m-1}}dx_m}=\sum_{i=1}^n\int_0^1{\frac{{x_m}^{i-1}}{i^{m-1}}dx_m}=\\
&&\hspace{0.5cm}\sum_{i=1}^n\frac{1}{i}\frac{1}{i^{m-1}}=\S{m}n.
\end{eqnarray*}
For $-m$ the theorem follows similarly.
\end{proof}
Now we can look at the general cases:
\begin{thm}
For $m_i\in\N,$ $b_i\in \{-1,1\}$ and $n\in\N,$ then
\begin{eqnarray*}
&&\S{b_1m_1,b_2m_2,\ldots, b_km_k}n=\\
&&\hspace{0.4cm}\int_0^{b_1\cdots b_k}{\frac{dx_{k}^{m_k}}{x_{k}^{m_k}}\int_0^{x_{k}^{m_k}}{\frac{dx_{k}^{m_k-1}}{x_{k}^{m_k-1}} \cdots
\int_0^{x_{k}^3}{\frac{dx_{k}^2}{x_{k}^2}\int_0^{x_{k}^2}{\frac{dx_{k}^1}{x_{k}^1-b_1\cdots b_{k-1}}}}}}\\
&&\hspace{0.4cm}\int^{x_{k}^1}_{b_1\cdots b_{k-1}}{\frac{dx_{k-1}^{m_{k-1}}}{x_{{k-1}}^{m_{k-1}}}\int_0^{x_{{k-1}}^{m_{k-1}}}{\frac{dx_{{k-1}}^{m_{k-1}-1}}{x_{{k-1}}^{m_{k-1}-1}} \cdots
\int_0^{x_{{k-1}}^3}{\frac{dx_{{k-1}}^2}{x_{{k-1}}^2}}}}\int_0^{x_{{k-1}}^2}\hspace{-0.4em}{\frac{dx_{{k-1}}^1} {x_{{k-1}}^1-b_1\cdots b_{k-2}}}\\
&&\hspace{0.4cm}\int^{x_{{k-1}}^1}_{b_1\cdots b_{k-2}}{\frac{dx_{{k-2}}^{m_{k-2}}}{x_{{k-2}}^{m_{k-2}}}\int_0^{x_{{k-2}}^{m_{k-2}}}{\frac{dx_{{k-2}}^{m_{k-2}-1}}{x_{{k-2}}^{m_{k-2}-1}} \cdots
\int_0^{x_{{k-2}}^3}{\frac{dx_{{k-2}}^2}{x_{{k-2}}^2}}}}\int_0^{x_{{k-2}}^2}\hspace{-0.4em}{\frac{dx_{{k-2}}^1} {x_{{k-2}}^1-b_1\cdots b_{k-3}}}\\
&&\vspace{0.1cm}\\
&&\hspace{0cm}\hbox to 0.4\textwidth{}\vdots\\
&&\vspace{0.1cm}\\
&&\hspace{0.4cm}\int^{x_{4}^1}_{b_1b_2b_3}{\frac{dx_{3}^{m_3}}{x_{3}^{m_3}}\int_0^{x_{3}^{m_3}}{\frac{dx_{3}^{m_3-1}}{x_{3}^{m_3-1}} \cdots \int_0^{x_{3}^3}{\frac{dx_{3}^2}{x_{3}^2}\int_0^{x_{3}^2}{\frac{dx_{3}^1}{x_{3}^1-b_1b_2}}}}}\\
&&\hspace{0.4cm}\int^{x_{3}^1}_{b_1b_2}{\frac{dx_{2}^{m_2}}{x_{2}^{m_2}}\int_0^{x_{2}^{m_2}}{\frac{dx_{2}^{m_2-1}}{x_{2}^{m_2-1}} \cdots \int_0^{x_{2}^3}{\frac{dx_{2}^2}{x_{2}^2}\int_0^{x_{2}^2}{\frac{dx_{2}^1}{x_{2}^1-b_1}}}}}\\
&&\hspace{0.4cm}\int^{x_{2}^1}_{b_1}{\frac{dx_{1}^{m_1}}{x_{1}^{m_1}}\int_0^{x_{1}^{m_1}}{\frac{dx_{1}^{m_1-1}}{x_{1}^{m_1-1}} \cdots \int_0^{x_{1}^3}{\frac{dx_{1}^2}{x_{1}^2}\int_0^{x_{1}^2}{\frac{\left({x_{1}^1}\right)^n-1}{x_{1}^1-1}dx_{1}^1}}}}.
\end{eqnarray*}
\label{HSintrep}
\end{thm}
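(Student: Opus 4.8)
The plan is to compute the $k$-fold nested integral directly, performing the integrations from the innermost outward and recognizing the harmonic sum at the end. The only analytic input is the pair of elementary identities already used for Lemma~\ref{HSintrep1}, namely $\int_0^x\frac{y^n-1}{y-1}\,dy=\sum_{i=1}^n\frac{x^i}{i}$ and $\int_0^x\frac1y\bigl(\sum_{i\ge 1}c_iy^i\bigr)\,dy=\sum_{i\ge1}\frac{c_i}{i}\,x^i$. First I would group the $\sum_jm_j$ integrations into $k$ consecutive blocks, block $j$ being the $m_j$ integrations in $x_j^1,\dots,x_j^{m_j}$ (innermost integrand $\frac{1}{x_j^1-b_1\cdots b_{j-1}}$, the case $j=1$ carrying instead $\frac{(x_1^1)^n-1}{x_1^1-1}$): block $1$ is innermost, block $k$ is outermost with its last integration over $(0,b_1\cdots b_k)$, for $1\le j<k$ the last integration of block $j$ runs over $(b_1\cdots b_j,\,x_{j+1}^1)$, and every other integration of every block runs over $(0,\bullet)$. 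Put $\sigma_j:=b_1b_2\cdots b_{j-1}$ (so $\sigma_1=1$, $\sigma_{j+1}=\sigma_jb_j$, $\sigma_j^2=1$) and
$$V_p^{(j)}:=\sum_{n\ge i_1\ge\cdots\ge i_{j-1}\ge p}\ \prod_{l=1}^{j-1}\frac{\sign{b_l}^{i_l}}{i_l^{m_l}}\qquad(V_p^{(1)}:=1).$$
I would then prove by induction on $j$ that after carrying out blocks $1,\dots,j$ one is left, for $j<k$, with the polynomial $Q_j(t)=\sum_{p=1}^n\frac{V_p^{(j)}}{p^{m_j}}\bigl(\sigma_j^{\,p}t^p-b_j^{\,p}\bigr)$ in the free variable $t=x_{j+1}^1$, and for $j=k$ with the number $\S{b_1m_1,\dots,b_km_k}{n}$, which is exactly the assertion.

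For $j=1$ the innermost integration gives $\sum_{p=1}^n\frac{(x_1^2)^p}{p}$, the intermediate $\int_0^\bullet\frac{dx}{x}$'s of block $1$ raise the power of $p$ in the denominator, and the last integration of block $1$ — over $(b_1,t)$, or $(0,b_1)$ when $k=1$ — yields $Q_1(t)$, respectively $\sum_{p=1}^n\frac{b_1^p}{p^{m_1}}=\S{b_1m_1}{n}$. For the step I would process block $j+1$: its innermost integration $\int_0^{x_{j+1}^2}\frac{Q_j(y)}{y-\sigma_{j+1}}\,dy$ makes sense because $\sigma_j\sigma_{j+1}=b_j$ forces $Q_j(\sigma_{j+1})=0$, so the quotient is a polynomial; after the division and integration, interchanging the resulting double sum and collapsing it via $\sum_{p=q}^n\sign{b_j}^{\,p}\frac{1}{p^{m_j}}V_p^{(j)}=V_q^{(j+1)}$ gives $\sum_{q=1}^n\frac{(\sigma_{j+1}x_{j+1}^2)^q}{q}V_q^{(j+1)}$; the intermediate $\int_0^\bullet\frac{dx}{x}$'s raise the power of $q$; and the last integration of block $j+1$ — over $(b_1\cdots b_{j+1},t)$ if $j+1<k$, over $(0,b_1\cdots b_k)$ if $j+1=k$ — produces, after simplification using $\sigma_{j+1}^2=1$, either $Q_{j+1}(t)$ or (the subtracted constant vanishing because the lower limit is $0$) $\sum_{q=1}^n\sign{b_k}^{\,q}\frac{1}{q^{m_k}}V_q^{(k)}$, which is $\S{b_1m_1,\dots,b_km_k}{n}$ by the definitions of $V^{(k)}_q$ and of the harmonic sum.

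I expect the main obstacle to be clerical rather than conceptual: keeping the sign products $\sigma_j=b_1\cdots b_{j-1}$ (and the identifications $b_l^{i_l}=\sign{b_l}^{i_l}$) straight through all the exponent manipulations, reindexing the nested sums correctly whenever two summations are swapped, and handling the degenerate conventions uniformly — the cases $j=1$, the outermost block $j=k$, and any block with $m_j=1$, where the ``innermost'' and ``outermost'' integrations of that block coincide. The single genuinely substantive observation, and the only place where the form of the integrand matters, is that at each block interface $Q_j$ vanishes at $y=\sigma_{j+1}$, so that dividing $Q_j(y)$ by $y-\sigma_{j+1}$ never introduces a pole; hence every integrand occurring in the evaluation is a polynomial and the inside-out computation needs no convergence argument.
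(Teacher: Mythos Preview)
Your argument is correct. The inside-out evaluation works exactly as you describe: the vanishing $Q_j(\sigma_{j+1})=0$ follows from $\sigma_j\sigma_{j+1}=b_j$, the polynomial division and the sum-swap $\sum_{p\ge q}b_j^{\,p}p^{-m_j}V_p^{(j)}=V_q^{(j+1)}$ go through, and the degenerate cases $m_j=1$ cause no trouble once one interprets the single integration as carrying both the non-$1/x$ integrand and the non-$(0,\bullet)$ limits.

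The paper does not prove this theorem directly; it observes that it is the special case $b_i\in\{-1,1\}$ of Theorem~\ref{SSintrep} for S-sums and proves that instead. That proof is organised differently: it is an induction on the depth $k$ that peels off the \emph{outermost} summation, writes $\S{m_1,\ldots,m_k}{b_1,\ldots,b_k;n}=\sum_i b_1^{\,i}i^{-m_1}\S{m_2,\ldots,m_k}{b_2,\ldots,b_k;i}$, applies the induction hypothesis to the inner sum, recognises the remaining $i$-sum as a difference $\S{m_1}{b_1 x;n}-\S{m_1}{b_1;n}$ with its own depth-$1$ integral representation, and finishes by rescaling all integration variables by $b_1$. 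Your route is the dual one: you keep the integral fixed and evaluate it block by block from the inside out, tracking the explicit polynomial $Q_j$ that emerges at each interface. Your version is more hands-on and avoids the variable rescaling, but it leans on $\sigma_j^{2}=1$, so it is specific to signs $b_i\in\{\pm1\}$; the paper's argument handles arbitrary real $b_i\neq0$ in one stroke, which is why it is placed at the S-sum level.
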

We will not give a proof of this theorem at this point, since it is a special case of Theorem~\ref{SSintrep}, which is proven in Chapter~\ref{SSchapter}. Applying integration 
by parts to this integral representation it is possible to arrive at one-dimensional integral representations containing harmonic polylogarithms; for details we again 
refer to Chapter~\ref{SSchapter}. A different way to arrive at a one-dimensional integral representation in terms of harmonic polylogarithms of harmonic sums is shown in the following subsection.

\subsection{Mellin Transformation of Harmonic Polylogarithms}
\label{HSMelTrafo}
In this section we look at the so called Mellin-transform of harmonic polylogarithms; compare~\cite{Paris2001}.
\begin{definition}[Mellin-Transform]
Let $f(x)$ be a locally integrable function on $(0,1)$ and $n \in \N$. Then the Mellin transform is defined by:
\begin{eqnarray}
\M{f(x)}{n}&=&\int_0^1{x^nf(x)dx}, \ \textnormal{when the integral converges.}\nonumber
\label{HSabmell}
\end{eqnarray}
\end{definition}

\begin{remark}
A locally integrable function on $(0,1)$ is one that is absolutely integrable on all closed subintervals of $(0,1)$.
\end{remark}

For $f(x)=1/(1-x)$ the Mellin transform is not defined since the integral $\int_0^1\frac{x^n}{1-x}$ does not converge. We modify the definition of the Mellin transform to include 
functions like $1/(1-x)$ as follows (compare \cite{Bluemlein1999}).

\begin{definition}
Let $h(x)$ be a harmonic polylogarithm or $h(x)=1$ for $x\in [0,1]$. Then we define the extended and modified Mellin-transform as follows:
\begin{eqnarray}
\Mp{h(x)}{n}&=&\M{h(x)}{n}=\int_0^1{x^nh(x)dx},\nonumber\\
\Mp{\frac{h(x)}{1+x}}n&=&\M{\frac{h(x)}{1+x}}{n}=\int_0^1{\frac{x^nh(x)}{1+x}dx},\nonumber\\
\Mp{\frac{h(x)}{1-x}}n&=&\int_0^1{\frac{(x^n-1)h(x)}{1-x}dx}.\nonumber
\end{eqnarray}
\label{HSabmellplus}
\end{definition}

\begin{remark}
Note that in \cite{Ablinger2009} and \cite{Remiddi2000} the Mellin-transform is defined slightly differently.
\end{remark}

\begin{remark}
\label{HSMelRemark}
It is possible that for a harmonic polylogarithm the original Mellin transform $\M{\frac{\H{m_1,m_2,...,m_k}x}{1-x}}n$ exists, but also in this case we modified the definition. 
Namely, we get
\begin{equation}
\M{\frac{\H{\ve m}x}{1-x}}n=\Mp{\frac{\H{\ve m}x}{1-x}}n+\H{1,\ve m}1.\nonumber
\end{equation}
Since $\Mp{\frac{\H{\ve m}x}{1-x}}n$ converges as well, $\H{1,\ve m}1$ has to be finite. But this is only the case if $\H{1,\ve m}x=\H{1,0,\ldots,0}x$. Hence we modified the definition only 
for functions of the form $\frac{\H{0,0,\ldots,0}x}{1-x}$ all the other cases are extensions.
\end{remark}

\begin{remark}
From now on we will call the extended and modified Mellin transform $M^+$ just Mellin transform and we will write $M$ instead of $M^+.$
\end{remark}

In the rest of this section we want to study how we can actually calculate the Mellin transform of harmonic polylogarithms weighted by $1/(1 + x)$ or $1/(1-x)$, \ie we want to represent the Mellin transforms 
in terms of harmonic sums at $n,$ harmonic sums at infinity and harmonic polylogarithms at one. This will 
be possible due to the following lemmas. We will proceed recursively on the depth of the harmonic polylogarithms.

\begin{lemma} For $n\in \N$ and $\ve m\in \left\{0,-1,1 \right\}^k$,
\begin{eqnarray}
\M{\frac{\H{\ve m}{x}}{1-x}}n &=&-n\cdot\M{\H{1,\ve m}{x}}{n-1}, \nonumber\\
\M{\frac{\H{\ve m}{x}}{1+x}}n &=&-n\cdot\M{\H{-1,\ve m}{x}}{n-1}+\H{-1,\ve m}{1}.\nonumber
\end{eqnarray}
\label{HSmelweighted}
\end{lemma}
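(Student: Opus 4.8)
The plan is to establish both identities by an elementary integration-by-parts argument, using only the definition of the (extended, modified) Mellin transform, the definition of harmonic polylogarithms, and the fact that $\frac{d}{dx}\H{1,\ve m}x = f_1(x)\H{\ve m}x = \frac{1}{1-x}\H{\ve m}x$ and $\frac{d}{dx}\H{-1,\ve m}x = f_{-1}(x)\H{\ve m}x = \frac{1}{1+x}\H{\ve m}x$. First I would treat the $1/(1-x)$ case. By Definition~\ref{HSabmellplus} we have $\M{\frac{\H{\ve m}x}{1-x}}n = \int_0^1 \frac{(x^n-1)\H{\ve m}x}{1-x}\,dx$, and I would integrate by parts taking $u = x^n-1$ and $dv = \frac{\H{\ve m}x}{1-x}\,dx$, so that $v = \H{1,\ve m}x$ and $du = n x^{n-1}\,dx$. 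This gives $\bigl[(x^n-1)\H{1,\ve m}x\bigr]_0^1 - n\int_0^1 x^{n-1}\H{1,\ve m}x\,dx$.

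The boundary term is the crux of the first part: at $x=0$ it vanishes because $x^n-1 \to -1$ but $\H{1,\ve m}0 = 0$ (since $(1,\ve m)\neq \ve 0_{k+1}$, using the first bullet after Definition~\ref{HShlogdef}); at $x=1$ it vanishes because $x^n-1\to 0$ and $\H{1,\ve m}x$ grows at worst like a power of $\log(1-x)$ as $x\to 1^-$ (third bullet after Definition~\ref{HShlogdef}, together with Remark~\ref{HSremlead1}), so the product $(x^n-1)\H{1,\ve m}x \to 0$. Hence only $-n\int_0^1 x^{n-1}\H{1,\ve m}x\,dx = -n\,\M{\H{1,\ve m}x}{n-1}$ survives, which is the first claimed identity. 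I would note in passing that the $x^n-1$ (rather than $x^n$) in the integrand of the modified transform is exactly what makes the $x=1$ boundary term vanish, which explains why the modification was needed.

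For the $1/(1+x)$ case the integrand is already the ordinary Mellin transform, $\M{\frac{\H{\ve m}x}{1+x}}n = \int_0^1 \frac{x^n\H{\ve m}x}{1+x}\,dx$, and I would integrate by parts with $u = x^n$, $dv = \frac{\H{\ve m}x}{1+x}\,dx$, $v = \H{-1,\ve m}x$, giving $\bigl[x^n\H{-1,\ve m}x\bigr]_0^1 - n\int_0^1 x^{n-1}\H{-1,\ve m}x\,dx$. Now the boundary term at $x=0$ again vanishes ($x^n\to 0$, $\H{-1,\ve m}x$ bounded near $0$), but at $x=1$ it contributes $\H{-1,\ve m}1$, which is finite since the index string starts with $-1\neq 1$ (second bullet after Definition~\ref{HShlogdef}). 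This yields $\M{\frac{\H{\ve m}x}{1+x}}n = -n\,\M{\H{-1,\ve m}x}{n-1} + \H{-1,\ve m}1$, the second identity. The only genuine subtlety is justifying the vanishing of the $x\to 1^-$ boundary term in the first case and the convergence/finiteness claims; everything else is the routine product rule for integration by parts, and convergence of all integrals involved is guaranteed by the discussion preceding the lemma (the right-hand Mellin transforms converge because the integrands are, after extracting logarithmic factors via Remark~\ref{HSremlead1trail0}, dominated near the endpoints).
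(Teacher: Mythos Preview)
Your proof is correct and is essentially the same integration-by-parts argument as the paper's, just with the roles of $u$ and $v$ exchanged: the paper starts from $\M{\H{1,\ve m}x}{n}$ with $dv=x^n\,dx$, $u=\H{1,\ve m}x$, then splits $\frac{x^{n+1}}{1-x}=\frac{x^{n+1}-1}{1-x}+\frac{1}{1-x}$ and combines $(\epsilon^{n+1}-1)\H{1,\ve m}\epsilon\to 0$, whereas you start from $\M{\frac{\H{\ve m}x}{1-x}}{n}$ with $u=x^n-1$, $dv=\frac{\H{\ve m}x}{1-x}\,dx$, which makes the boundary vanishing at $x=1$ appear directly without the splitting step. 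The $1/(1+x)$ case is handled identically in both.
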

\begin{proof}
We have
\begin{eqnarray*}
\int_0^1 x^n \H{1,\ve m}{x}dx
		&=&\frac{1}{n+1}\lim_{\epsilon \rightarrow 1^-}\left(\epsilon^{n+1}\H{1,\ve m}{\epsilon}-\int_0^{\epsilon}
			{\frac{x^{n+1}-1}{1-x}\H{\ve m}x dx}-\H{1,\ve m}{\epsilon} \right)\nonumber\\
		&=&-\frac{1}{n+1}\int_0^{1}{\frac{x^{n+1}-1}{1-x}\H{\ve m}x dx}\nonumber\\
		&=&-\frac{1}{n+1}\M{\frac{\H{1,\ve m}{x}}{1-x}}{n+1}\nonumber
\end{eqnarray*}
and
\begin{eqnarray*}
\int_0^1 x^n \H{-1,\ve m}{x}dx
		&=&\frac{\H{-1,\ve m}1}{n+1}-\frac{1}{n+1}\int_0^1 \frac{x^{n+1}}{(1+x)}\H{\ve m}{x}dx\nonumber\\
		&=&\frac{\H{-1,\ve m}1}{n+1}-\frac{1}{n+1}\M{\frac{\H{\ve m}{x}}{1+x}}{n+1}.\nonumber
\end{eqnarray*}
Hence we get 
\begin{eqnarray}
\M{\frac{\H{\ve m}{x}}{1-x}}{n+1} &=&-(n+1)\M{\H{1,\ve m}{x}}{n}, \nonumber\\
\M{\frac{\H{\ve m}{x}}{1+x}}{n+1} &=&-(n+1)\M{\H{-1,\ve m}{x}}{n}+\H{-1,\ve m}{1}.\nonumber
\end{eqnarray}
This is equivalent to the lemma.
\end{proof}

Due to Lemma \ref{HSmelweighted} we are able to reduce the calculation of the Mellin transform of harmonic polylogarithms weighted by $1/(1 + x)$ or $1/(1-x)$ to the calculation of the Mellin transform 
which are not weighted, \ie to the calculation of expressions of the form $\M{\H{\ve m}{x}}n.$ To calculate $\M{\H{\ve m}{x}}n$ we proceed by recursion. First let us state the base cases \ie the Mellin 
transforms of extended harmonic polylogarithms with weight one \cite{Remiddi2000}; see also \cite{Ablinger2009}.
\begin{lemma} For $n\in \N$ we have
\begin{eqnarray}
\M{\H{0}{x}}n&=& -\frac{1}{(n+1)^2}, \nonumber\\
\M{\H{1}{x}}n&=& \frac{\S1{n+1}}{n+1}, \nonumber\\
\M{\H{-1}{x}}n&=& (-1)^n\frac{\S{-1}{n+1}}{n+1}+\frac{\H{-1}1}{n+1}(1+(-1)^n). \nonumber
\end{eqnarray}
\label{HSweight1mel}
\end{lemma}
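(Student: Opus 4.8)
The plan is to evaluate each of the three identities directly from the definition $\M{\H{\ve m}{x}}{n}=\int_0^1 x^n\,\H{\ve m}{x}\,dx$ by integration by parts, using the elementary closed forms $\H{0}{x}=\log x$, $\H{1}{x}=-\log(1-x)$, $\H{-1}{x}=\log(1+x)$ together with the finite geometric-sum identities already exploited in the proof of Lemma~\ref{HSintrep1}. Since these are the weight-one base cases, no recursion is involved; the whole proof is a short computation per line.

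For $\M{\H{0}{x}}{n}$ I would integrate by parts with $dv=x^n\,dx$; the boundary term $[\tfrac{x^{n+1}}{n+1}\log x]_0^1$ vanishes at both ends, leaving $-\tfrac{1}{n+1}\int_0^1 x^n\,dx=-\tfrac{1}{(n+1)^2}$. For $\M{\H{1}{x}}{n}$ the same scheme works, but here I would take the antiderivative $v=\tfrac{x^{n+1}-1}{n+1}$ of $x^n$ rather than $\tfrac{x^{n+1}}{n+1}$: because $\tfrac{d}{dx}\H{1}{x}=f_1(x)=\tfrac{1}{1-x}$, this choice makes the boundary term $\lim_{x\to 1^-}\tfrac{x^{n+1}-1}{n+1}\H{1}{x}$ vanish (a simple zero beating a logarithmic divergence) while it is already $0$ at $x=0$, and what remains is $\tfrac{1}{n+1}\int_0^1\tfrac{1-x^{n+1}}{1-x}\,dx=\tfrac{1}{n+1}\int_0^1\sum_{i=0}^{n}x^i\,dx=\tfrac{1}{n+1}\sum_{i=1}^{n+1}\tfrac1i=\tfrac{\S{1}{n+1}}{n+1}$.

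For $\M{\H{-1}{x}}{n}$ I would integrate by parts with $v=\tfrac{x^{n+1}}{n+1}$; now the boundary term at $x=1$ is $\tfrac{\log 2}{n+1}=\tfrac{\H{-1}{1}}{n+1}$, and the remaining integral $-\tfrac{1}{n+1}\int_0^1\tfrac{x^{n+1}}{1+x}\,dx$ I would handle by writing $\tfrac{x^{n+1}}{1+x}=\tfrac{x^{n+1}-(-1)^{n+1}}{1+x}+\tfrac{(-1)^{n+1}}{1+x}$, using $\tfrac{x^{n+1}-(-1)^{n+1}}{x+1}=\sum_{k=0}^{n}(-1)^{n-k}x^k$ to get $\sum_{k=0}^{n}\tfrac{(-1)^{n-k}}{k+1}=-(-1)^n\S{-1}{n+1}$ for the first piece and $(-1)^{n+1}\log 2=-(-1)^n\H{-1}{1}$ for the second. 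Collecting the pieces gives $\M{\H{-1}{x}}{n}=(-1)^n\tfrac{\S{-1}{n+1}}{n+1}+\tfrac{\H{-1}{1}}{n+1}(1+(-1)^n)$.

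All three integrals are elementary, so there is no real obstacle; the one point that needs care is the choice of antiderivative $v=\tfrac{x^{n+1}-1}{n+1}$ in the $\H{1}$ case, which is precisely the device underlying the extended Mellin transform (cf.\ Remark~\ref{HSMelRemark}) and is what makes the boundary term disappear despite the $x\to 1$ divergence of $\H{1}{x}$. An alternative would be to substitute the weight-one series of Lemma~\ref{HSpowexp1} and sum termwise via the partial fraction $\tfrac{1}{i(n+i+1)}=\tfrac{1}{n+1}\bigl(\tfrac1i-\tfrac1{n+i+1}\bigr)$, but I would prefer the integration-by-parts route since it avoids having to justify interchanging the sum and the integral up to the endpoint $x=1$.
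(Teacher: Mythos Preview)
Your proof is correct and follows essentially the same approach as the paper, which establishes these base cases by integration by parts combined with the finite geometric-sum identities (the paper states Lemma~\ref{HSweight1mel} without explicit proof, citing \cite{Remiddi2000,Ablinger2009}, but the identical computations appear in the proofs of Lemma~\ref{HSmelnotweighted} and the more general Lemma~\ref{SSweight1mel}). Your choice of antiderivative $v=\tfrac{x^{n+1}-1}{n+1}$ for the $\H{1}{x}$ case is a clean variant of the paper's device of taking $v=\tfrac{x^{n+1}}{n+1}$ and then handling the boundary cancellation via $\lim_{\epsilon\to1^-}(\epsilon^{n+1}-1)\H{1}{\epsilon}=0$; both amount to the same regularisation.
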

The higher depth results can now be obtained by recursion.
\begin{lemma} For $n\in \N$ and $\ve m\in \left\{0,-1,1 \right\}^k$,
\begin{eqnarray}
\M{\H{0,\ve m}{x}}n &=&\frac{\H{0,\ve m}1}{n+1}-\frac{1}{n+1} \M{\H{\ve m}{x}}n, \nonumber\\
\M{\H{1,\ve m}{x}}n &=&\frac{1}{n+1} \sum_{i=0}^n{\M{\H{\ve m}{x}}n}, \nonumber\\
\M{\H{-1,\ve m}{x}}n &=&\frac{1+(-1)^n}{n+1}\H{-1,\ve m}{1}-\frac{(-1)^n}{n+1} \sum_{i=0}^n{(-1)^i\M{\H{\ve m}{x}}n}. \nonumber	
\end{eqnarray}
\label{HSmelnotweighted}
\end{lemma}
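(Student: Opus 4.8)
The plan is to prove the three identities one by one: the first by a single integration by parts, the other two by combining the already established Lemma~\ref{HSmelweighted} with an elementary finite geometric (respectively partial-fraction) expansion. For the identity for $\H{0,\ve m}$ I would write $\M{\H{0,\ve m}{x}}{n}=\int_0^1 x^n\H{0,\ve m}{x}\,dx$ and integrate by parts with antiderivative $x^{n+1}/(n+1)$ of $x^n$. Since $\frac{d}{dx}\H{0,\ve m}{x}=f_0(x)\H{\ve m}{x}=\frac1x\H{\ve m}{x}$, the integral term collapses to $-\frac{1}{n+1}\int_0^1 x^{n}\H{\ve m}{x}\,dx=-\frac{1}{n+1}\M{\H{\ve m}{x}}{n}$. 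The boundary contribution at $x=1$ is $\frac{1}{n+1}\H{0,\ve m}{1}$ (finite, since the leading index of $\H{0,\ve m}$ is $0\neq1$), and at $x=0$ it vanishes: if $\ve m\neq\ve 0_k$ then $\H{0,\ve m}{0}=0$ by Definition~\ref{HShlogdef}, while if $\ve m=\ve 0_k$ then $\H{0,\ve m}{x}=\frac{1}{(k+1)!}(\log x)^{k+1}$, which is still annihilated by $x^{n+1}$ as $x\to0^+$. This gives the first formula.

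For the identity for $\H{1,\ve m}$ I would apply Lemma~\ref{HSmelweighted} with $n$ replaced by $n+1$ to get $\M{\frac{\H{\ve m}{x}}{1-x}}{n+1}=-(n+1)\M{\H{1,\ve m}{x}}{n}$. On the other hand, by Definition~\ref{HSabmellplus} (the modified Mellin transform) together with the finite sum $\frac{x^{n+1}-1}{1-x}=-\sum_{i=0}^{n}x^i$, one has $\M{\frac{\H{\ve m}{x}}{1-x}}{n+1}=\int_0^1\frac{(x^{n+1}-1)\H{\ve m}{x}}{1-x}\,dx=-\sum_{i=0}^{n}\int_0^1 x^i\H{\ve m}{x}\,dx=-\sum_{i=0}^{n}\M{\H{\ve m}{x}}{i}$. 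Equating the two expressions and dividing by $-(n+1)$ yields $\M{\H{1,\ve m}{x}}{n}=\frac{1}{n+1}\sum_{i=0}^{n}\M{\H{\ve m}{x}}{i}$, which is the asserted recursion (the summand carrying the running index $i$).

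For the identity for $\H{-1,\ve m}$ I would use the second part of Lemma~\ref{HSmelweighted} with $n$ replaced by $n+1$, namely $\M{\frac{\H{\ve m}{x}}{1+x}}{n+1}=-(n+1)\M{\H{-1,\ve m}{x}}{n}+\H{-1,\ve m}{1}$, together with the partial-fraction identity $\frac{x^{n+1}}{1+x}=\sum_{i=0}^{n}(-1)^{n-i}x^i-\frac{(-1)^n}{1+x}$. Integrating $\H{\ve m}{x}$ against this term by term, and noting that $\int_0^1\frac{\H{\ve m}{x}}{1+x}\,dx=\H{-1,\ve m}{1}$ directly from Definition~\ref{HShlogdef}, gives $\M{\frac{\H{\ve m}{x}}{1+x}}{n+1}=\sum_{i=0}^{n}(-1)^{n-i}\M{\H{\ve m}{x}}{i}-(-1)^n\H{-1,\ve m}{1}$. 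Equating this with the expression coming from Lemma~\ref{HSmelweighted}, solving for $\M{\H{-1,\ve m}{x}}{n}$, and using $(-1)^{-i}=(-1)^i$ produces the third formula.

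The computations are routine; the points that need a little care are the vanishing of the $x=0$ boundary term in the first case (where trailing zeros could a priori cause trouble but do not) and the bookkeeping of the alternating signs and of the constant $\H{-1,\ve m}{1}$ in the third case. The main reason I would route the $\H{\pm1,\ve m}$ cases through Lemma~\ref{HSmelweighted} rather than through a direct integration by parts is that a bare integration by parts for $\H{1,\ve m}$ would produce a boundary term $\H{1,\ve m}{1}$ and an integral $\int_0^1\frac{x^{n+1}}{1-x}\H{\ve m}{x}\,dx$ that may each diverge and only cancel in the limit; Lemma~\ref{HSmelweighted} has already absorbed exactly this subtlety into its (limit) proof, so invoking it keeps the present argument clean. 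Hence the only genuine obstacle is notational rather than conceptual.
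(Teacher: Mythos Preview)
Your proof is correct and rests on the same two ingredients as the paper's: integration by parts and the finite geometric (resp.\ alternating) expansion of $(x^{n+1}\mp1)/(1\mp x)$. The paper, however, does not route the $\H{1,\ve m}$ and $\H{-1,\ve m}$ cases through Lemma~\ref{HSmelweighted}; instead it redoes the integration by parts directly in each case, handling the $\H{1,\ve m}$ case with an explicit $\epsilon\to1^-$ limit (the boundary term $(\epsilon^{n+1}-1)\H{1,\ve m}{\epsilon}\to0$ tames the logarithmic divergence), and for $\H{-1,\ve m}$ splitting $\frac{x^{n+1}}{1+x}=\frac{x^{n+1}+(-1)^n}{1+x}-\frac{(-1)^n}{1+x}$ just as you do. Your organisation is slightly more economical: since the proof of Lemma~\ref{HSmelweighted} already contains exactly that limit argument, citing it avoids duplicating the computation, at the cost of making the present lemma depend on the previous one rather than being self-contained. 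Either way the substance is identical.
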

\begin{proof}
We get the following results by integration by parts:
\begin{eqnarray}
\int_0^1 x^n \H{0,\ve m}{x}dx &=& \left. \frac{x^{n+1}}{n+1}\H{0,\ve m}{x}\right|_0^1-\int_0^1 \frac{x^n}{n+1}\H{\ve m}{x}dx\nonumber\\
				&=&\frac{\H{0,\ve m}1}{n+1}-\frac{1}{n+1} \M{\H{\ve m}{x}}n, \nonumber\\
\int_0^1 x^n \H{1,\ve m}{x}dx 
		&=&\lim_{\epsilon \rightarrow 1^-} \int_0^{\epsilon} x^n \H{1,\ve m}{x}dx \nonumber\\
		&=&\lim_{\epsilon \rightarrow 1^-}\left( \left. \frac{x^{n+1}}{n+1}\H{1,\ve m}{x}\right|_0^{\epsilon}-\int_0^{\epsilon}
			\frac{x^{n+1}}{(n+1)(1-x)}\H{\ve m}{x}dx \right) \nonumber\\
		&=&\frac{1}{n+1}\lim_{\epsilon \rightarrow 1^-}\left(\epsilon^{n+1}\H{1,\ve m}{\epsilon}-\int_0^{\epsilon}
			{\frac{x^{n+1}-1}{1-x}\H{\ve m}x dx}-\H{1,\ve m}{\epsilon} \right) \nonumber\\
		&=&\frac{1}{n+1}\left(\lim_{\epsilon \rightarrow 1^-}(\epsilon^{n+1}-1)\H{1,\ve m}{\epsilon}+\lim_{\epsilon \rightarrow 1^-}
			{\int_0^{\epsilon}\sum_{i=0}^n{x^i\H{\ve m}x}dx}\right)\nonumber\\
		&=&\frac{1}{n+1}\left(0+\int_0^{1}\sum_{i=0}^n{x^i\H{\ve m}x}dx\right)\nonumber\\
		&=&\frac{1}{n+1} \sum_{i=0}^n{\M{\H{\ve m}{x}}n}, \nonumber\\
\int_0^1 x^n \H{-1,\ve m}{x}dx
		&=&\left. \frac{x^{n+1}}{n+1}\H{-1,\ve m}{x}\right|_0^1-\int_0^1 \frac{x^{n+1}}{(n+1)(1+x)}\H{\ve m}{x}dx\nonumber\\
		&=&\frac{\H{-1,\ve m}1}{n+1}-\frac{1}{n+1}\left( \int_0^1 \frac{x^{n+1}+(-1)^n}{1+x}\H{\ve m}{x}dx\right.\nonumber\\
		& &\left.-(-1)^n\int_0^1{\frac{\H{\ve m}x}{1+x}dx}\right)=\frac{\H{-1,\ve m}1+(-1)^n\H{-1,\ve m}1}{n+1}\nonumber\\
		& &-\frac{(-1)^n}{n+1}\int_0^1{\sum_{i=0}^n{(-1)^ix^i\H{\ve m}x}dx}\nonumber\\
		&=&\frac{1+(-1)^n}{n+1}\H{-1,\ve m}{1}-\frac{(-1)^n}{n+1} \sum_{i=0}^n{(-1)^i\M{\H{\ve m}{x}}n}. \nonumber
\end{eqnarray}
\end{proof}

\begin{remark}
Combining the Lemmas \ref{HSmelweighted}, \ref{HSweight1mel} and \ref{HSmelnotweighted} we can represent the Mellin transforms of Definition \ref{HSabmellplus} 
in terms of harmonic sums at $n,$ harmonic sums at infinity and harmonic polylogarithms at one.
\end{remark}

\begin{remark}
As already mentioned (see Remark \ref{HSMelRemark}), our definition of the Mellin transform slightly differs from the definitions in \cite{Ablinger2009} and \cite{Remiddi2000}, however we can easily 
connect these different definitions. In fact the difference of the results is just a constant; this constant can be computed and we can use the strategy presented here 
as well to compute the Mellin-transform as defined in \cite{Ablinger2009} and \cite{Remiddi2000}.
\end{remark}

\subsection{The Inverse Mellin Transform}
\label{HSInvMellin}
In this subsection we want to summarize briefly how the inverse Mellin transform of harmonic sums can be computed. For details we refer to \cite{Ablinger2009,Remiddi2000}.
First we define as in \cite{Ablinger2009} an order on harmonic sums.
\begin{definition}[Order on harmonic sums]
Let $\S{\ve m_1}n$ and $\S{\ve m_2}n$ be harmonic sums with weights $w_1$, $w_2$ and depths $d_1$ and $d_2,$ respectively. Then
$$
		  	\begin{array}{ll}
						\S{\ve m_1}n \prec \S{\ve m_2}n, \ \textnormal{if} \ w_1<w_2, & \textnormal{or } \ (w_1=w_2 \ \textnormal{and} \ d_1<d_2).
			\end{array}
$$
We say that a harmonic sum $s_1$ is more complicated than a harmonic sum $s_2$ if $s_2 \prec s_1$.
For a set of harmonic sums we call a harmonic sum {\upshape most complicated} if it is a largest function with respect to $\prec$. 
\label{HSsord} 
\end{definition}
From \cite{Remiddi2000} (worked out in detail in \cite{Ablinger2009}) we know that there is only one {\itshape most complicated} harmonic sum in the the Mellin transform of a harmonic polylogarithm and this single {\itshape most complicated} 
harmonic sum can be calculated using, \eg Algorithm 2 of \cite{Ablinger2009}. But even the reverse direction is possible: given a harmonic sum, we can 
find a harmonic polylogarithm 
weighted by $1/(1-x)$ or $1/(1+x)$ such that the harmonic sum is the {\itshape most complicated} harmonic sum in the Mellin transform of this weighted harmonic polylogarithm. 
We will call this weighted harmonic polylogarithm the {\itshape most complicated} weighted harmonic polylogarithm in the inverse Mellin transform of the harmonic sum.
Equipped with these facts the computation of the inverse Mellin transform is relatively easy. As stated in \cite{Remiddi2000} we can proceed as follows:
\begin{itemize}
	\item Locate the most complicated harmonic sum.
	\item Construct the corresponding most complicated harmonic polylogarithm.
	\item Add it and subtract it.
	\item Perform the Mellin transform to the subtracted version. This will cancel the original harmonic sum.
	\item Repeat the above steps until there are no more harmonic sums.
	\item Let $c$ be the remaining constant term, replace it by $\textnormal{M}^{-1}(c)$, or equivalently, multiply $c$ by $\delta(1-x)$.
\end{itemize}
Here $\delta$ is the Dirac-$\delta$-distribution \cite{DeltaDist}.

\begin{remark}
 A second way to compute the inverse Mellin transform of a harmonic sum uses the integral representation of Theorem \ref{HSIntegralrep}. After repeated suitable integration by parts we
 can find the inverse Mellin transform.
\end{remark}

\begin{example}
\begin{eqnarray*}
 \S{-2}{n}&=&\int_0^1\frac{1}{y}\int_0^y\frac{(-x)^n-1}{x+1}dydx\\
	  &=&-\H{0,1}1+\left.\H{0}y\int_0^y\frac{(-x)^n}{x+1}dx\right|_0^y-\int_0^1\frac{(-x)^n \H{0}x}{1+x}dx\\
	  &=&-\H{0,1}1-(-1)^n\M{\frac{\H{0}x}{1+x}}{n}.
\end{eqnarray*}
\end{example}

\subsection{Differentiation of Harmonic Sums}
\label{HSdifferentiation}
Due to the previous section we are able to calculate the inverse Mellin transform of harmonic sums. It turned out that they are usually linear combinations of harmonic polylogarithms 
weighted by the factors $1/(1\pm x)$ and they can be distribution-valued (compare \cite{Ablinger2009}). By computing the Mellin transform of a harmonic sum we find in fact an 
analytic continuation of the sum to $n\in\R.$ For explicitly given analytic continuations see \cite{Bluemlein1999,Bluemlein2000,Bluemlein2004,Bluemlein2005,Bluemlein2008}. The 
differentiation of harmonic sums in the physic literature has been considered the first time in \cite{Bluemlein1999}. Worked out in \cite{Bluemlein2008,Bluemlein2009,Bluemlein2009a} 
this allows us to consider differentiation with respect to $n$, since we can differentiate the analytic continuation. Afterwards we may transform back to harmonic sums with 
the help of the Mellin transform. Differentiation turns out to be relatively easy if we represent the harmonic sum using its inverse Mellin transform as the following 
example suggests (see \cite[p.81]{Paris2001} and \cite{Ablinger2009}):
\begin{example}For $n \in \R$ we have
$$\frac{d}{d n}\int_0^1{x^n\H{-1}{x}dx}=\int_0^1{x^n\log{(x)}\H{-1}{x}dx}=\int_0^1{x^n\H{-1,0}{x}dx}+\int_0^1{x^n\H{0,-1}{x}dx}.$$
\end{example}

Based on this example, if we want to differentiate $\S{\ve a}n$ with respect to $n,$ we can proceed as follows:
\begin{itemize}
	\item Calculate the inverse Mellin transform of $\S{\ve a}n.$
	\item Set the constants to zero and multiply the terms of the inverse Mellin transform where $x^n$ is present by $\H0x$. This is
	 in fact differentiation with respect to $n$. 
	\item Use the Mellin transform to transform back the multiplied inverse Mellin transform of $\S{\ve a}n$ to harmonic sums. 
\end{itemize}

\begin{example}
Let us differentiate $\S{2,1}n.$ We have:
\begin{eqnarray*}
\S{2,1}n&=&\int_0^1{\frac{(x^n-1)\H{1,0}x}{1-x}dx}.
\end{eqnarray*}
Differentiating the right hand side with respect to $n$ yields:
\begin{eqnarray*}
&&\frac{d}{d n}\int_0^1{\frac{(x^n-1)\H{1,0}x}{1-x}dx}=\int_0^1{\frac{x^n\H{0}x\H{1,0}x}{1-x}dx}\\
&&\hspace{1cm}=\int_0^1{\frac{(x^n-1)\H{0}x\H{1,0}x}{1-x}dx}+\int_0^1{\frac{\H{0}x\H{1,0}x}{1-x}dx}\\
&&\hspace{1cm}= \M{\frac{\H{0,1,0}x}{1-x}}n+2\,\M{\frac{\H{1,0,0}x}{1-x}}n-\H{0,1,1,0}1\\
&&\hspace{1cm}=-\S{2,2}n-2\,\S{1,3}n+2\, \H{0,0,1}1\S{1}n-\H{0,1,1,0}1.
\end{eqnarray*}
\end{example}
To evaluate $\int_0^1\frac{\H{0}x\H{1,0}x}{1-x}dx$ we use the following lemma:
\begin{lemma}
For a harmonic polylogarithm $\H{\ve m}x$ the integral
$$
\int_0^1\frac{\H{0}x\H{\ve m}x}{1-x}dx
$$ 
exists and equals $-\H{0,1,\ve m}1$.
\label{HSdifflem}
\end{lemma}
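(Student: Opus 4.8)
The plan is to integrate by parts. Recall the two differentiation rules $\frac{d}{dx}\H{0}{x}=\frac{1}{x}$ and $\frac{d}{dx}\H{1,\ve m}{x}=\frac{\H{\ve m}{x}}{1-x}$, the latter being the special case $m_1=1$ of the derivative formula for harmonic polylogarithms stated above (with $f_1(x)=\frac{1}{1-x}$). Performing the integration by parts on a truncated interval $[\delta,1-\ep]\subset(0,1)$ gives
\begin{eqnarray*}
\int_\delta^{1-\ep}\frac{\H{0}{x}\H{\ve m}{x}}{1-x}dx
&=&\H{0}{1-\ep}\H{1,\ve m}{1-\ep}-\H{0}{\delta}\H{1,\ve m}{\delta}-\int_\delta^{1-\ep}\frac{\H{1,\ve m}{x}}{x}dx,
\end{eqnarray*}
and since $\int_0^1\frac{\H{1,\ve m}{x}}{x}dx=\H{0,1,\ve m}{1}$ by the very definition of harmonic polylogarithms, it remains to show that the two boundary terms vanish as $\delta,\ep\to0^+$ and that all occurring integrals converge.

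For the endpoint estimates I would use that, near $x=0$, every harmonic polylogarithm satisfies $\H{\ve m}{x}=O(|\log x|^{w})$ for a suitable power $w$: this follows from Remark~\ref{HSremlead1trail0}, writing $\H{\ve m}{x}$ as a polynomial in $\H{0}{x}=\log x$ and $\H{1}{x}=-\log(1-x)$ with coefficients continuous (hence bounded) near $0$, and using $\H{1}{x}=O(x)$. Consequently the integrand $\frac{\H{0}{x}\H{\ve m}{x}}{1-x}$ is $O(|\log x|^{w+1})$ near $x=0$, which is integrable; and from $\H{1,\ve m}{x}=\int_0^x\frac{\H{\ve m}{y}}{1-y}dy=O\bigl(x\,|\log x|^{w}\bigr)$ we get $\H{0}{x}\H{1,\ve m}{x}=O\bigl(x\,|\log x|^{w+1}\bigr)\to0$ as $x\to0^+$. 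Near $x=1$, Remark~\ref{HSremlead1} shows $\H{\ve m}{x}$ and $\H{1,\ve m}{x}$ are polynomials in $\H{1}{x}=-\log(1-x)$ with coefficients bounded near $1$, hence both are $O(|\log(1-x)|^{w})$; since $\H{0}{x}=\log x=O(1-x)$, the integrand is $O(|\log(1-x)|^{w})$ and therefore integrable near $x=1$, while $\H{0}{x}\H{1,\ve m}{x}=O\bigl((1-x)\,|\log(1-x)|^{w}\bigr)\to0$ as $x\to1^-$. The same two estimates show $\int_0^1\frac{\H{1,\ve m}{x}}{x}dx$ converges.

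Combining these facts and letting $\delta,\ep\to0^+$ in the truncated identity annihilates both boundary contributions and yields $\int_0^1\frac{\H{0}{x}\H{\ve m}{x}}{1-x}dx=-\int_0^1\frac{\H{1,\ve m}{x}}{x}dx=-\H{0,1,\ve m}{1}$, as claimed. The only point that needs genuine care is the asymptotic bookkeeping at the two endpoints — in particular checking that the extra factor $\H{0}{x}$ tames the $\frac{1}{1-x}$ singularity at $x=1$ (because $\log x$ vanishes linearly there) and that $\H{1,\ve m}{x}$ vanishes fast enough at $x=0$ to beat $\log x$; the integration by parts itself is routine.
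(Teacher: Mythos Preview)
Your proof is correct and follows essentially the same approach as the paper: integrate by parts with $u=\H{0}{x}$ and $dv=\frac{\H{\ve m}{x}}{1-x}\,dx$, so that $v=\H{1,\ve m}{x}$, and then check that both boundary terms vanish. The paper's version is terser about the endpoint limits, whereas you supply explicit $O$-bounds via Remarks~\ref{HSremlead1trail0} and~\ref{HSremlead1}; this extra care is welcome but does not constitute a different method.
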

\begin{proof}
Using integration by parts we get
\begin{eqnarray*}
\int_0^1\frac{\H{0}x\H{\ve m}x}{1-x}dx&=&\lim_{b \rightarrow 1^-}\lim_{a \rightarrow 0^+}{\int_a^{b}\frac{\H{0}x\H{\ve m}x}{1-x}dx}\\
	&=&\lim_{b \rightarrow 1^-}\lim_{a \rightarrow 0^+}\left(\left.\H{0}x\H{1,\ve m}x\right|_a^{b}-
		{\int_a^{b}\frac{\H{1,\ve m}x}{x}dx}\right)\\
	&=&\lim_{b \rightarrow 1^-}\left(\H{0}b\H{1,\ve m}b-\lim_{a \rightarrow 0^+}\H{0}a\H{1,\ve m}a
		-\H{0,1,\ve m}b\right)\\
	&=&\lim_{b \rightarrow 1^-}\left(\H{0}b\H{1,\ve m}b\right)-\H{0,1,\ve m}1=-\H{0,1,\ve m}1.
\end{eqnarray*}
\end{proof}

\section{Relations between Harmonic Sums}
In this section we look for the sake of completeness at various relations between harmonic sums as already done in \cite{Ablinger2009,Bluemlein2004,Bluemlein2008,Bluemlein2009,Bluemlein2009a}.
\subsection{Algebraic Relations}
\label{HSalgrel}
We already mentioned in Section \ref{HSdef} that harmonic sums form a quasi shuffle algebra. We will now take a closer look at the quasi shuffle algebra property (for details see
for example \cite{Hoffman} or \cite{Ablinger2009}). 
Subsequently, we define the following sets:
\begin{eqnarray}
\mathcal{S}(n)&=&\bigl\{q(s_1,\ldots,s_r)\bigl|r\in \N;  s_i \textnormal{ a harmonic sum at }n ;\ q\in \R[x_1,\ldots,x_r]\bigr\} \label{S}\nonumber\\
\mathcal{S}_p(n)&=&\bigl\{q(s_1,\ldots,s_r)\bigl|r\in \N;  s_i \textnormal{ a harmonic sum with positive indices at }n ;\nonumber\\\
&&\hspace{2.3cm}\ q\in \R[x_1,\ldots,x_r]\bigr\}. \nonumber\label{Sp}
\end{eqnarray}
From Section \ref{HSdef} we know that we can always expand a product of harmonic sums with the same upper summation limit into a linear combination of harmonic sums. Let $L(s_1,s_2)$ denote the expansion 
of $s_1s_2$ into a linear combination of harmonic sums.
Now we can define the ideals $\mathcal{I}$ and $\mathcal{I}_p$ on $\mathcal{S}(n)$ respectively $\mathcal{S}_p(n)$:
\begin{eqnarray*}
\mathcal{I}(n)&=&\left\{s_1s_2-L(s_1,s_2)\left| s_i \textnormal{ a harmonic sum at }n\right.\right\} \label{HSI}\\
\mathcal{I}_p(n)&=&\left\{s_1s_2-L(s_1,s_2)\left|s_i \textnormal{ a harmonic sum at }n \textnormal{ with positive indices}\right.\right\}. \label{HSIp}
\end{eqnarray*}
By construction $\mathcal{S}(n)/\mathcal{I}$ and $\mathcal{S}_p(n)/\mathcal{I}_p$ are quasi shuffle algebras (for further details we again refer to \cite{Ablinger2009}).
We remark that it has been shown in \cite{Minh2000} that 
$$\mathcal{S}_p(n)/\mathcal{I}_p \cong \mathcal{S}_p(n)/\sim $$ 
where $a(n)\sim b(n)\Leftrightarrow \forall k\in \N \ a(k)=b(k),\ie$ the quasi-shuffle algebra is equivalent to the harmonic sums considered as sequences.
To our knowledge it has not been shown so far that 
$$\mathcal{S}(n)/\mathcal{I} \cong \mathcal{S}(n)/\sim $$ 
where $a(n)\sim b(n)\Leftrightarrow \forall k\in \N \ a(k)=b(k)$. Nevertheless, we strongly believe in this fact; see also Remark \ref{X}.

For further considerations let $A$ be a totally ordered, graded set. The degree of $a\in A$ is denoted by $\abs{a}.$
Let $A^*$ denote the free monoid over $A$, i.e.,
		$$A^*=\left\{a_1 \cdots a_n | a_i \in A, n\geq 1\right\}\cup\left\{\epsilon \right\},$$
where $\epsilon$ is the empty word. We extend the degree function to $A^*$ by $\abs{a_1 a_2 \cdots a_n}=\abs{a_1}+\abs{a_2}+\cdots+\abs{a_n}$ for $a_i \in A$ and $\abs{\epsilon}=0.$
Suppose now that the set $A$ of letters is totally ordered by $<$.
We extend this order to words with the following lexicographic order:
$$
\left\{ 
		  	\begin{array}{ll}
						\ve u <\ve uv \ \textnormal{for}  \ v \in A^+\ \textnormal{and } \ve u \in A^*,& \\
						\ve w_1a_1\ve w_2<\ve w_1a_2\ve w_3, \ \textnormal{if} \, a_1<a_2 \ \textnormal{for}\,  a_1,a_2 \in A \textnormal{ and } w_1,w_2,w_3 \in A^*. & 
					\end{array} \right.
$$

\begin{definition}
We refer to elements of $A$ as \textit{letters} and to elements of $A^*$ as \textit{words}.
For a word $\ve w=pxs$ with $p,x,s \in A^*,$ $p$ is called a \textit{prefix} of $w$ if $p\neq \epsilon$ and any $s$ is called a \textit{suffix} of $\ve w$ if $s\neq \epsilon$.
A word $\ve w$ is called a \textit{Lyndon} word if it is smaller than any of its suffixes. The set of all \textit{Lyndon} words on $A$ is denoted by Lyndon($A$).
\label{ablyn}
\end{definition}
Inspired by \cite{Hoffman} it has been shown in \cite{Ablinger2009} directly that the quasi shuffle algebras $\mathcal{S}(n)/\mathcal{I}$ and $\mathcal{S}_p(n)/\mathcal{I}_p$
are the free polynomial algebras on the \textit{Lyndon} words with alphabet $A=\Z^*$ and $A=\N$ respectively, where we define the degree of a 
letter $\abs{\pm n}=n$ for all $n \in \N$ and we order the letters by $-n\prec n$ and $n \prec m$ for all $n, m \in \mathbb N$ with $\ n < m$, and extend this order 
lexicographically (see \cite{Hoffman,Ablinger2009}).
Hence the number of algebraic independent sums in $\mathcal{S}_p(n)/\mathcal{I}$, respectively $\mathcal{S}(n)/\mathcal{I}$ which we also call basis sums 
is connected to the number of \textit{Lyndon} words. 
In order to count the number of \textit{Lyndon} words belonging to a specific index set we use the second Witt formula \cite{Witt1937,Witt1956,Reutenauer1969}:
\begin{equation}
\label{HSWitt2}
l_n(n_1,\ldots,n_q)=\frac{1}{n}\sum_{d|n}{\mu(d)\frac{(\frac{n}{d})!}{(\frac{n_1}{d})!\cdots(\frac{n_q}{d})!}}, \ \ n=\sum_{k=1}^q{n_k};
\end{equation} 
here $n_i$ denotes the multiplicity of the indices that appear in the index set and
\begin{equation}
	\mu(n)=\left\{ 
		  	\begin{array}{ll}
						1\  & \textnormal{if } n = 1  \\
						0\  & \textnormal{if } n \textnormal{ is divided by the square of a prime}  \\
						(-1)^s\  & \textnormal{if } n \textnormal{ is the product of } s \textnormal{ different primes} 
					\end{array} \right. \nonumber\\
\label{abmue}
\end{equation}
is the M\"obius function.
The number of \textit{Lyndon} words of length $n$ over an alphabet of length $q$ is given by the first Witt 
formula \cite{Witt1937,Witt1956,Reutenauer1969}:
\begin{equation}
\label{HSWitt1}
l_n(q)=\frac{1}{n}\sum_{d|n}{\mu(d)q^{n/d}}.
\end{equation}
In order to use this formula to count the number of algebraic basis sums at a specific weight, we introduce a modified notation of harmonic sums compare \cite{Vermaseren1998} and \cite{Bluemlein2004}. Of course we can identify
a harmonic sum at $n$ by its index set viewed as a word:
$$
\S{a_1,a_2,\ldots,a_k}n \rightarrow a_1a_2\cdots a_k.
$$
In addition, we can represent each such word using the alphabet $\{-1,0,1\}$ where a letter zero which is followed by an letter that is
nonzero indicates that one should be added to the absolute value of the nonzero letter.
\begin{eqnarray*}
\S{a_1,a_2,\ldots,a_k}n &\rightarrow& a_1a_2\cdots a_k\rightarrow \underbrace{0\cdots0}_{\abs{a_1}-1 \times}\sign{a_1}\underbrace{0\cdots0}_{\abs{a_2}-1 \times}\sign{a_2}\cdots\underbrace{0\cdots0}_{\abs{a_k}-1 \times}\sign{a_k}\\
0100{-1}1&\rightarrow& \S{2,-3,1}n.
\end{eqnarray*}
Note that the last letter is either $-1$ or $1$,
The advantage of this notation is that a harmonic sum of weight $w$ is expressed by a word of length $w$ in the alphabet $\{-1,0,1\}$ and hence we can use (\ref{HSWitt1}) to count the number 
of basis harmonic sums at specified weight $w\geq 2:$
\begin{eqnarray}\label{HSalgbasnum1}
\frac{1}{w}\sum_{d|w}{\mu\left(\frac{w}{d}\right)3^d}.
\end{eqnarray}
For $w=3$ this formula gives 8 and hence there are 8 algebraic basis harmonic sums spanning the quasi shuffle algebra at weight 3. 

A method to find the basis sums together with the relations for the dependent sums was presented in \cite{Bluemlein2004}. Efficient implementations 
and further variations are developed in \cite{Ablinger2009}. Here we want to give an 
example for harmonic sums at weight $w=3$. 
\begin{example} At weight $w=3$ we have for instance the 8 basis sums:
$$\S{-3}{n},\S{3}{n},\S{-2,-1}{n},\S{-2,1}{n},\S{2,-1}{n},\S{2,1}{n},\S{-1,1,-1}{n},\S{-1,1,1}{n}$$
together with the relations for remaining sums:
\begin{eqnarray*}
\S{1,-2}{n}&=& \S{-3}{n}+\S{-2}{n} \S{1}{n}-\S{-2,1}{n}\\
\S{1,2}{n}&=& \S{1}{n}\S{2}{n}+\S{3}{n}-\S{2,1}{n}\\
\S{1,-1,-1}{n}&=& \frac{1}{2} \S{-2}{n} \S{-1}{n}-\frac{1}{2} \S{-1,1}{n} \S{-1}{n}+\frac{1}{2} \S{3}{n}+\frac{1}{2} \S{-2,-1}{n}\\&&+\S{1}{n} \S{-1,-1}{n}-\frac{1}{2}
   \S{2,1}{n}-\frac{1}{2} \S{-1,1,-1}{n}\\
\S{1,-1,1}{n}&=& \S{-3}{n}+\S{-1}{n} \S{2}{n}+\S{-2,1}{n}+\S{1}{n} \S{-1,1}{n}-\S{2,-1}{n}\\&&-2 \S{-1,1,1}{n}\\
\S{1,1,-1}{n}&=& \frac{1}{2} \S{-2}{n} \S{1}{n}-\frac{1}{2}\S{-1,1}{n} \S{1}{n}+\frac{1}{2} \S{1,-1}{n} \S{1}{n}\\&&-\frac{1}{2} \S{-1}{n} \S{2}{n}-\S{-2,1}{n}+\S{2,-1}{n}+\S{-1,1,1}{n}\\
\S{1,1,1}{n}&=& \frac{1}{3} \S{1}{n} \S{2}{n}+\frac{1}{3} \S{3}{n}+\frac{1}{3}\S{1}{n} \S{1,1}{n}\\
\S{-1,-2}{n}&=& \S{-2}{n} \S{-1}{n}+\S{3}{n}-\S{-2,-1}{n}\\
\S{-1,2}{n}&=& \S{-3}{n}+\S{-1}{n} \S{2}{n}-\S{2,-1}{n}\\
\S{-1,-1,-1}{n}&=& \frac{1}{3} \S{-3}{n}+\frac{1}{3} \S{-1}{n}\S{2}{n}+\frac{1}{3} \S{-1}{n} \S{-1,-1}{n}\\
\S{-1,-1,1}{n}&=& \frac{1}{2} \S{-2}{n} \S{-1}{n}+\frac{1}{2} \S{-1,1}{n} \S{-1}{n}+\frac{1}{2} \S{3}{n}-\frac{1}{2} \S{-2,-1}{n}\\&&+\frac{1}{2}\S{2,1}{n}-\frac{1}{2} \S{-1,1,-1}{n}.
\end{eqnarray*}
Hence we can use the 8 basis sums together with sums of lower weight to express all sums of weight $w=3$. Note that the sums of lower weight in this example are not yet reduced to a basis.
\label{HSRelationEx1}
\end{example}

\begin{remark}[compare \cite{Ablinger2009}]
\label{X}
In the difference field setting of $\Pi \Sigma$-fields one can verify algebraic independence of sums algorithmically for a particular given finite set of sums; 
see \cite{Schneider2008}. We could verify up to weight 8 that the figures ${\sf N_A}$ in Table \ref{HSnumberofbasissumstab} are correct, interpreting the objects 
in $\mathcal{S}(n)/\sim$ ,$\ie$ as sequences (see \cite{HSums2012}). Nevertheless, unless we do not have a rigorous proof for $\mathcal{S}(n)/\mathcal{I} \cong \mathcal{S}(n)/\sim$, 
we can only assume that the figures in Table \ref{HSnumberofbasissumstab} give an upper bound.\\
Using the difference field setting of $\Pi \Sigma$-fields it is possible to verify the algebraic independence of a special set of harmonic sums. As an example we could prove
that for $m \in \N$ the harmonic sums $\{S_{i,m}(n)\left| \right. i\geq 1\}$ are algebraically independent over $\Q(n)$ (for details we refer to \cite{JACS2012}).
\end{remark}

\subsection{Differential Relations}
\label{HSdiffrel}
In Section \ref{HSdifferentiation} we described the differentiation of harmonic sums with respect to the upper summation limit. 
The differentiation leads to new relation, for instance we find
$$
\frac{\partial}{\partial n}\S{2,1}n=\frac{7\zeta_2^2}{10}+\zeta_2\S2n-\S{2,2}n-2\S{3,1}n.
$$
\begin{example}[Example \ref{HSRelationEx1} continued]From differentiation we get the additional relations
\begin{eqnarray*}
\S{-3}{n}&=& \S{-3}{\infty}-\frac{1}{2} \frac{\partial}{\partial n}\S{-2}{n}\\
\S{3}{n}&=& \S{3}{\infty}-\frac{1}{2} \frac{\partial}{\partial n}\S{2}{n}\\
\S{-2,1}{n}&=& \frac{1}{2}\frac{\partial}{\partial n}\S{-2}{n}-\frac{\partial}{\partial n}\S{-1,1}{n}+\S{-1}{n} \S{2}{\infty}-\S{-1}{n} \S{2}{n}\\
	      &&+\S{2,-1}{n}-\S{-3}{\infty}-\S{3}{\infty}+\S{-2,-1}{\infty}.
\end{eqnarray*}
Hence we could reduce the number of basis sums at weight $w=3$ to $5$ by introducing differentiation. The basis sums are:
$$\S{-2,-1}{n},\S{2,-1}{n},\S{2,1}{n},\S{-1,1,-1}{n},\S{-1,1,1}{n}.$$
\label{HSRelationEx2}
\end{example}

Note that we collect the derivatives in 
\begin{eqnarray*}
\S{a_1,\ldots,a_k}{n}^{(D)} = \left\{\frac{\partial^N}{\partial n^N}\S{a_1,\ldots,a_k}{n}; N \in \N\right\},
\end{eqnarray*}
and identify an appearance of a derivative of a harmonic sum with the harmonic sum itself.
This makes sense, since a given finite harmonic sum $\S{a_1,\ldots,a_k}n$ is determined for $n \in \mathbb{C}$ by its asymptotic 
representation (see Section \ref{HSExpansion}) and 
the corresponding recursion from $n \rightarrow (n-1)$ and both, the asymptotic representation and the recursion
can be easily differentiated analytically. Hence the differentiation of a harmonic sum with respect to $n$
is closely related to the original sum. 

\subsection{Duplication Relations}
\label{HShalfrel}
Considering harmonic sums up to $n$ together with harmonic sums up to $2\cdot n$ we find new relations, which are summarized in the following theorem.
\begin{thm}(see \cite{Vermaseren1998,Ablinger2009})
Let $n,m\in\N$ and $a_i \in \N$ for $i \in \N.$ Then we have the following relation:
\begin{eqnarray}
\sum{\S{\pm a_m, \pm a_{m-1},\ldots,\pm a_1}{2\cdot n}}=\frac{1}{2^{\sum_{i=1}^m a_i-m}}\S{a_m,a_{m-1},\ldots,a_1}n
\label{HShalfint}
\end{eqnarray}
where we sum on the left hand side over the $2^m$ possible combinations concerning $\pm$.
\label{HSDuprel}
\end{thm}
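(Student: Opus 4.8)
The plan is to prove the identity directly from the series definition, using the elementary fact that $\sum_{\epsilon\in\{-1,1\}}\epsilon^{\,i}$ equals $2$ when $i$ is even and $0$ when $i$ is odd. First I would write out a single term on the left. By Definition~\ref{HSdefHsum}, for a sign vector $(\epsilon_1,\ldots,\epsilon_m)\in\{-1,1\}^m$ we have $\abs{\epsilon_j a_j}=a_j$ and $\sign{\epsilon_j a_j}=\epsilon_j$ (here it is essential that $a_j\in\N$), so
\begin{eqnarray*}
\S{\epsilon_m a_m,\ldots,\epsilon_1 a_1}{2n}&=&\sum_{2n\geq i_1\geq\cdots\geq i_m\geq1}\frac{\epsilon_m^{\,i_1}}{i_1^{a_m}}\cdots\frac{\epsilon_1^{\,i_m}}{i_m^{a_1}}.
\end{eqnarray*}
Summing over all $2^m$ sign vectors and interchanging the (finite) sums, the term indexed by a fixed chain $2n\geq i_1\geq\cdots\geq i_m\geq1$ acquires the factor $\prod_{j=1}^m\bigl(\sum_{\epsilon_j\in\{-1,1\}}\epsilon_j^{\,i_{m+1-j}}\bigr)$, since the sign variables are independent.

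By the parity observation this product equals $2^m$ if every $i_j$ is even and $0$ otherwise, so only the all-even index tuples survive:
\begin{eqnarray*}
\sum\S{\pm a_m,\ldots,\pm a_1}{2\cdot n}&=&2^m\sum_{\substack{2n\geq i_1\geq\cdots\geq i_m\geq1\\ i_1,\ldots,i_m\ \mathrm{even}}}\frac{1}{i_1^{a_m}\cdots i_m^{a_1}}.
\end{eqnarray*}
The next step is the substitution $i_j=2k_j$: the chain $2n\geq i_1\geq\cdots\geq i_m\geq1$ with all $i_j$ even is in bijection with $n\geq k_1\geq\cdots\geq k_m\geq1$, and $i_j^{\,a}=2^{a}k_j^{\,a}$ pulls a further factor $2^{-(a_m+\cdots+a_1)}=2^{-\sum_{i=1}^m a_i}$ out of each summand. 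Collecting the powers of $2$ yields $2^{m-\sum_{i=1}^m a_i}$ times $\sum_{n\geq k_1\geq\cdots\geq k_m\geq1}k_1^{-a_m}\cdots k_m^{-a_1}=\S{a_m,\ldots,a_1}n$, which is exactly the right-hand side.

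No step is genuinely difficult; the main point requiring care will be the index bookkeeping — matching each independent sign $\epsilon_j$ to the correct summation variable in the nested sum, given that the index list is written in the reversed order $a_m,\ldots,a_1$, and checking that the even-index constraint together with $i_m\geq1$ really does translate into $k_m\geq1$ (not $k_m\geq0$). It is also worth noting explicitly that the restriction $a_i\in\N$ is what makes all rearrangements of the finite sums unconditionally valid and lets the signs be pulled out cleanly; the analogous statement allowing some $a_i\in\Z^*$ would require carrying the extra factors $\sign{a_j}^{\,i_j}$ through the same computation.
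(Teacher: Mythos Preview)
Your proof is correct. The paper itself does not prove this particular theorem (it only cites \cite{Vermaseren1998,Ablinger2009}), but it does prove the S-sum generalization in Section~\ref{SSduplrel}, and that proof proceeds by induction on the depth $m$: the base case $m=1$ is handled by splitting the sum $\sum_{i=1}^{2n}$ into its even and odd parts, and the induction step peels off the outermost summation, applies the same even/odd observation to the outer variable alone, and then invokes the induction hypothesis on the inner sum at argument $2i$.

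Your argument uses the identical parity mechanism but applies it to all $m$ sign variables simultaneously, which collapses the induction into a single direct computation. This is a presentational difference rather than a substantively different idea; your version is arguably more transparent for the harmonic-sum case, while the paper's inductive template is what it reuses verbatim for the two cyclotomic duplication relations in Section~\ref{CSduplrel}.
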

As for differentiation, when we are counting basis elements, we identify an appearance of a harmonic sum $\S{\ve a}{2\cdot n}$ with the harmonic sum $\S{\ve a}{n}$. We can continue Example \ref{HSRelationEx2}:
\begin{example}[Example \ref{HSRelationEx2} continued]
\begin{eqnarray*}
\S{-2,-1}{n}&=& -\frac{1}{2} \frac{\partial}{\partial n}\S{-2}{n}+\frac{\partial}{\partial n}\S{-1,1}{n}+\frac{1}{2} \S{2,1}{\frac{n}{2}}-\S{-1}{n} \S{2}{\infty }-2\, \S{2,-1}{n}\\
	      &&+\S{-1}{n} \S{2}{n}-\S{2,1}{n}+\S{-3}{\infty}+\S{3}{\infty }-\S{-2,-1}{\infty}.
\end{eqnarray*}
Hence we could reduce the number of basis sums at weight $w=3$ to $4$ by introducing duplication. The basis sums are:
$$\S{2,-1}{n},\S{2,1}{n},\S{-1,1,-1}{n},\S{-1,1,1}{n}.$$
\label{HSRelationEx3}
\end{example}

\subsection{Number of Basis Elements}
\label{HSnumberofbasissums}
We know that the number of harmonic sums ${\sf N_S}(w)$ at a specified weight $w\geq 1$ is given by $${\sf N_S}(w) = 2\cdot 3^{w-1}.$$ This can be seen using the representation of harmonic sums
as words in the alphabet $\{-1,0,1\}$ as explained on page~\pageref{HSalgbasnum1}: the last letter is either $-1$ or $1,$ the other $w-1$ are chosen out of $\{-1,0,1\}$. 
Of course we have ${\sf N_S}(0)=0.$

On page~\pageref{HSalgbasnum1} we have already seen that the number of algebraic basis harmonic sums for weight $w\geq 2$ is given by
 $${\sf N_A}(w) = \frac{1}{w}\sum_{d|w}{\mu\left(\frac{w}{d}\right)3^d}.$$
Again we have ${\sf N_A}(0)=0$ and ${\sf N_A}(1)=2.$
This means we can express all the sums of weight $w$ by algebraic relations using ${\sf N_A}(w)$ harmonic sums of weight $w$ (of course we can not choose them randomly, 
but there is at least one choice) and sums of lower weight.

Using other types of relations (\ie those emerging from differentiation, or duplication relations) we can also count the numbers concerning these relations. Of course 
we can also combine the different types of relations and hence get a different number of basis harmonic sums (as we did in the previous sections for harmonic sums at weight 3). 
Here we want to present all the different formulas which arise using algebraic, differential and duplication relations.\\
We start with the formulas which arise if we do not mix the different relations. We already stated the number of algebraic basis harmonic sums, so it 
remains to give respective numbers concerning differential (${\sf N_D}(w)$) and duplication (${\sf N_H}(w)$) relations:
\begin{prop}
\begin{eqnarray}
{\sf N_D}(w) &=&\left\{ 
		\begin{array}{ll}
			4\cdot 3^{w-2},\ & w\geq 2  \\
			2,\ & w=1   
		\end{array} \right.\label{HSnd}\\
 {\sf N_H}(w) &=& 2\cdot 3^{w-1}-2^{w-1}. \ w\geq 1 \label{HSnh}
\end{eqnarray}
\end{prop}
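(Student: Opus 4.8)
The plan is to fix the weight $w$ and, starting from all ${\sf N_S}(w)=2\cdot 3^{w-1}$ harmonic sums of that weight, count how many are made redundant by the relevant family of relations alone. Both families have the convenient feature that, once the bookkeeping identifications $\frac{\partial}{\partial n}\S{\ve a}{n}\leftrightarrow\S{\ve a}{n}$ and $\S{\ve a}{2n}\leftrightarrow\S{\ve a}{n}$ are made, each relation becomes a $\Q$-linear relation among harmonic sums of weight $\le w$ with a genuine weight-$w$ contribution, so the task is to compute the rank of the resulting linear system on the $\Q$-span of the weight-$w$ sums.

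For ${\sf N_H}(w)$ I would argue combinatorially. A depth-$m$ harmonic sum of weight $w$ has an index vector whose absolute values form a composition $(|a_1|,\ldots,|a_m|)$ of $w$, and there are exactly $2^{w-1}$ compositions of $w$. For each composition, Theorem~\ref{HSDuprel} gives one relation among the $2^m$ sign variants $\S{\pm a_1,\ldots,\pm a_m}{2n}$; after identifying the argument $2n$ with $n$ it reads $\sum_{\mathrm{signs}}\S{\pm a_1,\ldots,\pm a_m}{n}=2^{m-w}\S{a_1,\ldots,a_m}{n}$, equivalently $\sum_{\mathrm{signs}\neq(+,\ldots,+)}\S{\pm a_1,\ldots,\pm a_m}{n}=(2^{m-w}-1)\S{a_1,\ldots,a_m}{n}$. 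This is always nontrivial — every non-all-plus variant occurs with coefficient $1\neq 0$ and at least one such variant exists — so it removes one of the $2^m$ sums. Relations attached to different compositions involve pairwise disjoint sets of harmonic sums, hence are independent and the eliminations can be performed simultaneously; exactly $2^{w-1}$ sums are removed, giving ${\sf N_H}(w)={\sf N_S}(w)-2^{w-1}=2\cdot 3^{w-1}-2^{w-1}$ for every $w\ge1$ (for $w=1$ the single composition $(1)$ forces $\S{-1}{n}$ out and leaves $\S{1}{n}$, i.e.\ ${\sf N_H}(1)=1$).

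For ${\sf N_D}(w)$ with $w\ge2$ I would use the inverse Mellin transform of Section~\ref{HSInvMellin}: each weight-$(w-1)$ harmonic sum $\S{\ve a}{n}$ equals, up to less complicated harmonic sums, $\M{\frac{\H{\ve m}{x}}{1-x}}{n}$ or $\M{\frac{\H{\ve m}{x}}{1+x}}{n}$ for a uniquely determined harmonic polylogarithm $\H{\ve m}{x}$ of weight $w-2$, and this gives a bijection between the ${\sf N_S}(w-1)=2\cdot 3^{w-2}$ weight-$(w-1)$ sums and the weight-$(w-2)$ harmonic polylogarithms carrying the weight $1/(1-x)$ or $1/(1+x)$. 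Differentiation with respect to $n$ amounts to inserting an extra factor $\H{0}{x}$ in the integrand (Section~\ref{HSdifferentiation}), so, modulo derivatives of less complicated sums and modulo constants and lower weight, the weight-$w$ part of $\frac{\partial}{\partial n}\S{\ve a}{n}$ is the image under $\M{\frac{\cdot}{1-x}}{n}$, resp.\ $\M{\frac{\cdot}{1+x}}{n}$, of $\H{0}{x}\H{\ve m}{x}=\sum_{\ve r\in 0\,\shuffle\,\ve m}\H{\ve r}{x}$. The crux is that the $2\cdot 3^{w-2}$ relations so obtained are linearly independent — indeed triangular with respect to the complexity order — because shuffle multiplication by $\H{0}{x}$ is injective on the harmonic-polylogarithm shuffle algebra (a free polynomial algebra, in which multiplication by any nonzero element is injective), the top-weight part of $\M{\frac{\cdot}{1-x}}{n}$, resp.\ $\M{\frac{\cdot}{1+x}}{n}$, is injective on harmonic polylogarithms by the uniqueness of the most complicated harmonic sum in a Mellin transform, and the $1/(1-x)$-images and $1/(1+x)$-images occupy disjoint parts of the basis of weight-$w$ sums. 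Hence the differential relations bring the count down by exactly ${\sf N_S}(w-1)$, so ${\sf N_D}(w)=2\cdot 3^{w-1}-2\cdot 3^{w-2}=4\cdot 3^{w-2}$; and for $w=1$ there are no weight-$0$ harmonic sums and hence no differential relations, so ${\sf N_D}(1)={\sf N_S}(1)=2$.

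The step I expect to be the main obstacle is precisely this independence/triangularity claim for the differential relations: one has to set up and prove injective the composite map \{weight-$(w-2)$ harmonic polylogarithms with weight $1/(1\pm x)$\} $\to$ \{weight-$w$ harmonic sums\} defined by multiplying the integrand by $\H{0}{x}$ and taking the most complicated part of the Mellin transform, and to control the less-complicated corrections carefully enough to see that the eliminations can really be done simultaneously rather than merely yielding an upper bound. The duplication count, once the disjointness across compositions is noticed, is essentially elementary by comparison, so I would dispose of ${\sf N_H}$ first and then attack ${\sf N_D}$.
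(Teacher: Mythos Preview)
Your argument for ${\sf N_H}(w)$ is essentially the paper's: the duplication relations are indexed by compositions of $w$, there are $2^{w-1}$ of them, and they involve pairwise disjoint sets of harmonic sums, so each removes exactly one sum.

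For ${\sf N_D}(w)$ the paper takes a considerably more direct route than you do. Rather than passing through the inverse Mellin transform and shuffle-algebra injectivity, it simply writes down the explicit weight-$w$ part of the differentiation:
\[
\frac{\partial}{\partial n}\S{a_1,\ldots,a_k}{n}
= -\sum_{i=1}^k |a_i|\,\S{a_1,\ldots,a_{i-1},\,a_i\wedge 1,\,a_{i+1},\ldots,a_k}{n} + R(n),
\]
with $R(n)$ of weight below $w$. The relation from a weight-$(w-1)$ sum $\S{a_1,\ldots,a_k}{n}$ is then used to eliminate $\S{a_1\wedge 1,a_2,\ldots,a_k}{n}$. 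The assignment $\ve a\mapsto(a_1\wedge 1,a_2,\ldots,a_k)$ is visibly an injection from weight-$(w-1)$ index vectors onto the weight-$w$ index vectors with $|b_1|\ge 2$, and the independence you flag as the main obstacle becomes an elementary triangularity: in the relation eliminating a sum with first index of absolute value $|b_1|$, every \emph{other} weight-$w$ term has first index of absolute value $|b_1|-1$, so processing the eliminations in increasing order of $|b_1|$ yields a triangular system. This bypasses your composite-injectivity argument entirely.

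Your route is not wrong --- the injectivity of shuffle-multiplication by $\H{0}{x}$ and of the top-weight Mellin map do combine to give what you need --- but the explicit formula turns the independence into a one-line observation rather than a chain of three structural lemmas. The compensating advantage of your approach is that it transports verbatim to settings (such as the S-sums and cyclotomic sums later in the paper) where a clean closed formula for $\partial/\partial n$ at the level of sums is less immediate, but the Mellin/shuffle machinery is already in place.
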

\begin{proof}
 Let us start to prove (\ref{HSnh}). From Theorem \ref{HSDuprel} we know that there are as many duplication relations at weight $w$ as there are harmonic sums with only
 positive indices at weight $w.$ The harmonic sums of weight $w$ with positive indices can be viewed as the words of length $w$ out of an alphabet with two letters \ie $\{0,1\}$ 
 (compare page~\pageref{HSalgbasnum1}). Hence there are $2^{w-1}$ such relations. Since each sum of weight $w$ appears in exactly one of these relations, we can 
 express $2^{w-1}$ of the ${\sf N_S}(w)$ sums by using the remaining sums. Thus ${\sf N_H}(w)\leq 2\cdot 3^{w-1}-2^{w-1}$ and since we used all possible 
 relations we have ${\sf N_H}(w)= 2\cdot 3^{w-1}-2^{w-1}.$\\
In order to prove (\ref{HSnd}), we first take a look at the action of the differential operator $\frac{\partial}{\partial n}$ on a harmonic sum of weight $w-1$:
\begin{eqnarray*}
 \frac{\partial}{\partial n}(\S{a_1,a_2,\ldots,a_k}n)&=&-\abs{a_1}\S{a_1\wedge 1,a_2,\ldots,a_k}n-\abs{a_2}\S{a_1,a_2\wedge 1,\ldots,a_k}n\\ &&-\cdots-\abs{a_k}\S{a_1,a_2,\ldots,a_k\wedge 1}n+R(n)\\
 &=&-\sum_{i=1}^k{\abs{a_i}\S{a_1,\ldots,a_{i-1},a_i\wedge 1,a_{i+1},\ldots,a_k}n}+R(n);
\end{eqnarray*}
the harmonic sums popping up in $R(n)$ are of weight less then $w$ and $$a\wedge b=\sign a\sign b(\abs{a}+\abs{b}).$$ Note that differentiation increases the weight 
by $1.$ Hence only the differentiation of 
the ${\sf N_S}(w-1)$ sums of weight $w-1$ leads to relations including harmonic sums of weight $w.$
Let $\S{a_1,a_2,\ldots,a_k}n$ be a sum of weight $w-1.$ We use the differentiation of this sum to express, \eg the sum $\S{\sign{a_1}(\abs{a_1}+1),a_2,\ldots,a_k}n$ of 
weight $w.$ In this way we can express ${\sf N_S}(w-1)$ sums (note that we cannot express more sums because the number of relations equals ${\sf N_S}(w-1)$), hence
\begin{eqnarray*}
 {\sf N_D}(w)={\sf N_S}(w)-{\sf N_S}(w-1)&=&\left\{
		\begin{array}{ll}
			2\cdot 3^{w-1}-2\cdot 3^{w-2}=4\cdot 3^{w-2},\ & w\geq 2  \\
			2-0=2,\ & w=1.   
		\end{array} \right.
\end{eqnarray*}
\end{proof}
Note that we have of course ${\sf N_D}(0)=0$ and ${\sf N_H}(0)=0.$
Let us consider the formulas which arise if we mix two different types of relations. There are three different ways to combine 2 types of relations. We give the numbers 
using algebraic and differential (${\sf N_{AD}}(w)$), using algebraic and duplication (${\sf N_{AH}}(w)$) and using differential and duplication (${\sf N_{DH}}(w)$) relations:
\begin{prop}
\begin{eqnarray}
 {\sf N_{AD}}(w) &=&\left\{
		\begin{array}{ll} 
			\frac{1}{w}\displaystyle{\sum_{d|w}}{\mu\left(\frac{w}{d}\right)3^d}-\frac{1}{w-1}\sum_{d|(w-1)}{\mu\left(\frac{w-1}{d}\right)3^d},\ & w\geq 2  \\
			2,\ & w=1
		 \end{array} \right. \label{HSnad}\\
 {\sf N_{AH}}(w) &=& \frac{1}{w}\sum_{d|w}{\mu\left(\frac{w}{d}\right)\left[3^d-2^d\right]}\label{HSnah}\\
 {\sf N_{DH}}(w) &=&\left\{
		\begin{array}{ll} 
			4\cdot 3^{w-2}-2^{w-2},\ & w\geq 2  \\
			1,\ & w=1.
		\end{array} \right.\label{HSndh}
\end{eqnarray}
\end{prop}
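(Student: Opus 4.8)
The plan is to reduce each of the three formulas to the basis counts ${\sf N_S}$, ${\sf N_A}$, ${\sf N_D}$, ${\sf N_H}$ already obtained in Section~\ref{HSnumberofbasissums} by analysing what happens when a \emph{second} type of relation is imposed on top of a first. In each case I fix, once and for all and at every weight, an explicit reduced generating set — for the algebraic relations the Lyndon-word basis described on page~\pageref{HSalgbasnum1}, and for the duplication relations the set obtained by deleting the $2^{w-1}$ purely positive-index sums as in the proof of~(\ref{HSnh}) — and then I count which additional sums drop out when the further relations are added.

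\textbf{The formula for ${\sf N_{AH}}$.} Start from the ${\sf N_A}(w)$ algebraic basis sums of weight $w$. The weight-$w$ sums with only positive indices span, under the (quasi-)shuffle product, a free polynomial subalgebra whose weight-$w$ generators correspond to the Lyndon words of length $w$ over the two-letter encoding $\{0,1\}$; there are $l_w(2)=\tfrac1w\sum_{d\mid w}\mu(w/d)2^d$ of them. By Theorem~\ref{HSDuprel}, for every positive index set $I$ of weight $w$ with $I\neq(1,\dots,1)$ (the latter being a shuffle power of $\S1n$ anyway) the duplication relation solves for $\S{I}{n}$ in terms of sums carrying at least one negative index, taken at the doubled argument. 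Modulo the algebraic relations these $2^{w-1}$ relations have rank exactly $l_w(2)$: the ones attached to the $l_w(2)$ positive generators express precisely those generators in terms of negative-index sums, and the remaining ones are then redundant. Hence ${\sf N_{AH}}(w)={\sf N_A}(w)-l_w(2)=\tfrac1w\sum_{d\mid w}\mu(w/d)(3^d-2^d)$; the apparent anomaly at $w=1$ causes no trouble because the Witt formula over-counts ${\sf N_A}(1)$ and $l_1(2)$ by the same unit.

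\textbf{The formulas for ${\sf N_{AD}}$ and ${\sf N_{DH}}$.} Both come from one observation: adjoining the differential relations to any reduction scheme that is compatible with $\partial/\partial n$ drops the weight-$w$ basis count by the weight-$(w-1)$ basis count. Recall from the proof of~(\ref{HSnd}) that $\tfrac{\partial}{\partial n}\S{a_1,\dots,a_k}{n}=-\sum_i\abs{a_i}\,\S{a_1,\dots,a_i\wedge 1,\dots,a_k}{n}+R(n)$ with $R(n)$ of strictly lower weight, and that the $i=1$ term $\S{\sign{a_1}(\abs{a_1}+1),a_2,\dots,a_k}{n}$ can be singled out as the sum this relation eliminates. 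Differentiating only the chosen basis sums of weight $w-1$ therefore produces that many relations at weight $w$; for the Lyndon basis as well as for the duplication-reduced set the map $(a_1,\dots)\mapsto(\sign{a_1}(\abs{a_1}+1),\dots)$ keeps the first index inside the basis, and, after ordering the weight-$w$ basis sums so that these distinguished terms occur triangularly (a refinement of the order of Definition~\ref{HSsord}), the relations are seen to be independent and to remove exactly one basis sum each. This gives ${\sf N_{AD}}(w)={\sf N_A}(w)-{\sf N_A}(w-1)$ and ${\sf N_{DH}}(w)={\sf N_H}(w)-{\sf N_H}(w-1)$ for $w\geq2$; at $w=1$ there are no algebraic relations and differentiating the empty sum does nothing, so ${\sf N_{AD}}(1)={\sf N_S}(1)=2$ and ${\sf N_{DH}}(1)={\sf N_H}(1)=1$. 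Substituting the closed forms for ${\sf N_A}$ (with ${\sf N_A}(1)=2$) and for ${\sf N_H}$ then produces the displayed expressions.

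\textbf{Where the work is.} The only non-routine ingredient in all three cases is the independence/non-redundancy claim — that the freshly added relations really lower the count by the advertised amount and not by less. For ${\sf N_{AH}}$ one must check that the reduced duplication relations span precisely the span of the positive generators and are not entangled with one another; for ${\sf N_{AD}}$ and ${\sf N_{DH}}$ one must show that the derivative of a weight-$(w-1)$ basis sum, \emph{after} the algebraic (resp.\ duplication) reduction, still has a well-defined leading weight-$w$ basis term distinct from those arising from the other basis sums, since a priori $\partial/\partial n$ spreads a basis sum over several weight-$w$ basis sums and over products of lower weight. Making that triangularity precise — most cleanly via a total order on sums refining $\prec$ — is the crux; the M\"obius arithmetic that follows is immediate.
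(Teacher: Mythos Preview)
Your overall reductions are exactly the paper's: ${\sf N_{AD}}(w)={\sf N_A}(w)-{\sf N_A}(w-1)$, ${\sf N_{AH}}(w)={\sf N_A}(w)-l_w(2)$ with $l_w(2)=\tfrac1w\sum_{d\mid w}\mu(w/d)2^d$, and ${\sf N_{DH}}(w)={\sf N_H}(w)-{\sf N_H}(w-1)$. The one methodological difference is in how the rank of the added relations is pinned down. You try to track a specific leading weight-$w$ basis sum for each differentiated (resp.\ duplicated) basis sum and then invoke triangularity; as you yourself note, verifying that the map $(a_1,\dots)\mapsto(\sign{a_1}(\abs{a_1}+1),\dots)$ lands inside the chosen basis and yields distinct targets is the delicate point. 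The paper bypasses this entirely by a cleaner device: it regards each $\tfrac{\partial}{\partial n}\S{\ve a}{n}$ (resp.\ each $\S{\ve a}{2n}$ in the duplication case) as a \emph{new formal symbol}. Every differentiation relation then contains exactly one such symbol, so the ${\sf N_S}(w-1)$ relations are automatically independent from the algebraic ones; the only redundancy among them comes from linear relations among the symbols themselves, which, by linearity of $\partial/\partial n$, are precisely the algebraic relations satisfied by the weight-$(w-1)$ sums. That immediately gives rank ${\sf N_A}(w-1)$ without any leading-term analysis. The same trick, with the symbol $H(\S{\ve a}{n})$ attached to positive-index sums, handles ${\sf N_{AH}}$ and ${\sf N_{DH}}$. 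Your route is correct in outline, but the paper's new-symbol bookkeeping is what turns the ``crux'' you flag into a one-line observation.
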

\begin{proof}
 In order to prove (\ref{HSnad}), we first use the algebraic relations to get a basis out of ${\sf N_A}(w)$ harmonic sums for the sums at weight $w$. From the 
differentiation of the harmonic sums of weight $w-1$ we get ${\sf N_S}(w-1)$ new relations (indeed these relations are different and independent from the 
algebraic ones since in each relation there is a new symbol; the differential operator applied to a sum of weight $w-1,$ \ie  $\frac{\partial}{\partial n}\S{a_1,\ldots,a_k}n$). 
However not all of these new relations are independent since the sums of weight $w-1$ fulfill algebraic relations and so do the new symbols. Since there 
are ${\sf N_A}(w-1)$ algebraic basis sums of weight $w-1,$ we get the same number of independent relations. We can use these ${\sf N_A}(w-1)$ relations (these are all relations we can use). This reduces 
the number of algebraic harmonic sums and we get 
\begin{eqnarray*}
{\sf N_{AD}}(w)&=&{\sf N_{A}}(w)-{\sf N_{A}}(w-1)\\
		&=&\left\{
		\begin{array}{ll}
			\frac{1}{w}\sum_{d|w}{\mu\left(\frac{w}{d}\right)3^d}-\frac{1}{w-1}\sum_{d|w-1}{\mu\left(\frac{w-1}{d}\right)3^d},\ & w\geq 2  \\
			2-0=2,\ & w=1.
		\end{array} \right.
\end{eqnarray*}
In order to prove (\ref{HSnah}), we first use the algebraic relations to get a basis out of ${\sf N_A}(w)$ harmonic sums for the sums at weight $w$. From the 
duplication relations of the harmonic sums we get ${\sf N_H}(w)$ new relations (indeed these relations are different and independent from the algebraic ones 
since in each relation there is a new symbol; the operator $H$ applied to a sum of weight $w$ with positive indices, \ie  $H(\S{a_1,\ldots,a_k}n)$). However 
not all of these new relations are independent since the sums with positive indices of weight $w$ fulfill algebraic relations and so do the new symbols. 
The harmonic sums of weight $w$ with positive indices can be viewed as the words of length $w$ out of an alphabet with two letters \ie $\{0,1\}$ (compare page~\pageref{HSalgbasnum1}), hence there 
are $\frac{1}{w}\sum_{d|w}{\mu\left(\frac{w}{d}\right)2^d}$ algebraic basis sums for harmonic sums with positive indices (compare this to the explanation 
for ${\sf N_{A}}(w)$). Therefore we get $\frac{1}{w}\sum_{d|w}{\mu\left(\frac{w}{d}\right)2^d}$ new independent relations. We can use 
these $\frac{1}{w}\sum_{d|w}{\mu\left(\frac{w}{d}\right)2^d}$ relations (but not more) and reduce the number of algebraic harmonic sums and we get
\begin{eqnarray*}
 {\sf N_{AH}}(w)={\sf N_{A}}(w)-\frac{1}{w}\sum_{d|w}{\mu\left(\frac{w}{d}\right)2^d}=\frac{1}{w}\sum_{d|w}{\mu\left(\frac{w}{d}\right)\left[3^d-2^d\right]}.
\end{eqnarray*}
 In order to prove (\ref{HSndh}) we first use the duplication relations to get a basis out of ${\sf N_H}(w)$ harmonic sums for the sums at weight $w$. From the 
differentiation of the harmonic sums of weight $w-1$ we get ${\sf N_S}(w-1)$ new relations (as earlier these relations are different and independent from the 
duplication relations since in each relation there is a new symbol; the differential operator applied to a sum of weight $w-1,$ \ie  $\frac{\partial}{\partial n}\S{a_1,\ldots,a_k}n$). 
However not all of these new relations are independent since the sums of weight $w-1$ fulfill duplication relations and so do the new symbols. Since there 
are ${\sf N_H}(w-1)$ duplication basis sums of weight $w-1$ we get the same number of independent relations. We can use these ${\sf N_H}(w-1)$ relations (but not more). This reduces the 
number of basis duplication harmonic sums further and we get 
\begin{eqnarray*}
{\sf N_{DH}}(w)&=&{\sf N_{H}}(w)-{\sf N_{H}}(w-1)\\
		&=&\left\{
		\begin{array}{ll}
			(2\cdot 3^{w-1}-2^{w-1})-(2\cdot 3^{w-2}-2^{w-2})=4\cdot 3^{w-2}-2^{w-2} ,\ & w\geq 2  \\
			1-0=1,\ & w=1.
		\end{array}\right.
\end{eqnarray*}
\end{proof}
Note, obviously we have ${\sf N_{AD}}(0)=0,$ ${\sf N_{AH}}(0)=0$ and ${\sf N_{DH}}(0)=0.$
Let us now consider consider all three types of relations together:
\begin{prop}
\begin{eqnarray*}
{\sf N_{ADH}}(w) &=&\left\{ 
		\begin{array}{ll}
		\frac{1}{w}\displaystyle{\sum_{d|w}}{\mu\left(\frac{w}{d}\right)\left[3^d-2^d\right]}-\frac{1}{w-1}\sum_{d|w-1}{\mu\left(\frac{w-1}{d}\right)\left[3^d-2^d\right]},\ & w\geq 2  \\
		1,\ & w=1.
	\end{array} \right.
\end{eqnarray*}
\end{prop}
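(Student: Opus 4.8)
The plan is to follow the pattern of the proofs of (\ref{HSnad}) and (\ref{HSndh}), but now starting from the combined algebraic–duplication basis rather than from the algebraic basis or the duplication basis alone. Fix a weight $w\geq 2$. First I would use the algebraic relations together with the duplication relations to reduce the ${\sf N_S}(w)$ harmonic sums of weight $w$ to a basis consisting of ${\sf N_{AH}}(w)$ sums; this is exactly the content of (\ref{HSnah}), already established.

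Next I would bring in the differential relations. Since differentiation with respect to $n$ raises the weight by exactly one (as used in the proof of (\ref{HSnd})), only the differentiation of the ${\sf N_S}(w-1)$ harmonic sums of weight $w-1$ produces relations that involve a sum of weight $w$. As in the previous two propositions, each such relation contains the genuinely new symbol $\frac{\partial}{\partial n}\S{a_1,\ldots,a_k}{n}$ and is therefore independent of all algebraic and all duplication relations. However, these ${\sf N_S}(w-1)$ relations are not mutually independent: the weight-$(w-1)$ sums already satisfy the algebraic and the duplication relations, and applying $\frac{\partial}{\partial n}$ to those relations produces exactly the corresponding dependencies among the derivative symbols. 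Hence the number of independent new relations equals the number of basis sums of weight $w-1$ under algebraic and duplication relations, namely ${\sf N_{AH}}(w-1)$. Using these (and only these) ${\sf N_{AH}}(w-1)$ relations to eliminate further basis sums of weight $w$ gives
\[
{\sf N_{ADH}}(w)={\sf N_{AH}}(w)-{\sf N_{AH}}(w-1)=\frac{1}{w}\sum_{d|w}\mu\!\left(\frac{w}{d}\right)\!\left[3^d-2^d\right]-\frac{1}{w-1}\sum_{d|(w-1)}\mu\!\left(\frac{w-1}{d}\right)\!\left[3^d-2^d\right]
\]
for $w\geq 2$. For $w=1$ there are no harmonic sums of weight $0$ to differentiate, so no differential relations arise and the basis coincides with the algebraic–duplication basis; since ${\sf N_{AH}}(1)=3-2=1$, this yields ${\sf N_{ADH}}(1)=1$, and obviously ${\sf N_{ADH}}(0)=0$.

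The step I expect to be the main obstacle is the claim that precisely ${\sf N_{AH}}(w-1)$ independent new relations appear — that is, that the derivatives of the ${\sf N_{AH}}(w-1)$ chosen basis sums of weight $w-1$ introduce independent new symbols modulo the already-used relations, neither more nor fewer. This is the same independence statement that underlies (\ref{HSnad}) and (\ref{HSndh}), and I would dispatch it in the same way: a relation among the derivative symbols can only come from differentiating a relation among the underlying weight-$(w-1)$ sums, and a finite harmonic sum is determined by its asymptotic expansion together with its recursion $n\to n-1$ (see Section \ref{HSdiffrel}), both of which differentiate unambiguously. One should also note the mild point that differentiating a duplication relation again yields a valid relation under the identification of $\S{\ve a}{2\cdot n}$ with $\S{\ve a}{n}$, which is immediate from Theorem \ref{HSDuprel} since the identification is compatible with $\partial/\partial n$.
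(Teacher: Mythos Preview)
Your proposal is correct and follows essentially the same argument as the paper: start from the algebraic--duplication basis of size ${\sf N_{AH}}(w)$, observe that differentiation of the weight-$(w-1)$ sums yields ${\sf N_S}(w-1)$ relations each containing a new symbol, reduce these to ${\sf N_{AH}}(w-1)$ independent ones via the algebraic and duplication relations at weight $w-1$, and conclude ${\sf N_{ADH}}(w)={\sf N_{AH}}(w)-{\sf N_{AH}}(w-1)$. Your extra remarks on the independence issue and on differentiating the duplication relations are reasonable elaborations but do not depart from the paper's line of argument.
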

\begin{proof}
 In order to prove this formula, we first use the algebraic and the duplication relations to get a basis out of ${\sf N_{AH}}(w)$ harmonic sums for the sums at 
weight $w$. From the differentiation of the harmonic sums of weight $w-1$ we get ${\sf N_S}(w-1)$ new relations (indeed these relations are different and 
independent from the algebraic together with the duplication relations since in each relation there is a new symbol; the differential operator applied 
to a sum of weight $w-1,$ \ie  $\frac{\partial}{\partial n}\S{a_1,\ldots,a_k}n$). However not all of these new relations are independent since the sums of weight $w-1$ fulfill 
algebraic and duplication relations and so do the new symbols. Since there are ${\sf N_{AH}}(w-1)$ algebraic-duplication basis sums of weight $w-1,$ we get the 
same number of independent relations. We can use these ${\sf N_{AH}}(w-1)$ relations (but not more). This reduces the number of algebraic-duplication basis harmonic sums 
further and we get 
\begin{eqnarray*}
{\sf N_{ADH}}(w)&=&{\sf N_{AH}}(w)-{\sf N_{AH}}(w-1)\\
		&=&\left\{
		\begin{array}{ll}\frac{1}{w}\sum_{d|w}{\mu\left(\frac{w}{d}\right)\left[3^d-2^d\right]}
				  -\frac{1}{w-1}\sum_{d|(w-1)}{\mu\left(\frac{w-1}{d}\right)\left[3^d-2^d\right]},\ & w\geq 2  \\
		1-0=1,\ & w=1.
		\end{array}\right.
\end{eqnarray*}
\end{proof}

Let us finally summarize all these formulas:
\begin{eqnarray*}
 {\sf N_S}(w) &=& 2\cdot 3^{w-1}\\
 {\sf N_A}(w) &=& \frac{1}{w}\sum_{d|w}{\mu\left(\frac{w}{d}\right)3^d}\\
 {\sf N_D}(w) &=& 4\cdot 3^{w-2}\\
 {\sf N_H}(w) &=& 2\cdot 3^{w-1}-2^{w-1}\\
 {\sf N_{AD}}(w) &=& \frac{1}{w}\sum_{d|w}{\mu\left(\frac{w}{d}\right)3^d}-\frac{1}{w-1}\sum_{d|w-1}{\mu\left(\frac{w-1}{d}\right)3^d}\\
 {\sf N_{AH}}(w) &=& \frac{1}{w}\sum_{d|w}{\mu\left(\frac{w}{d}\right)\left[3^d-2^d\right]}\\
 {\sf N_{DH}}(w) &=& 4\cdot 3^{w-2}-2^{w-2}\\
 {\sf N_{ADH}}(w)&=&\frac{1}{w}\sum_{d|w}{\mu\left(\frac{w}{d}\right)\left[3^d-2^d\right]}-\frac{1}{w-1}\sum_{d|w-1}{\mu\left(\frac{w-1}{d}\right)\left[3^d-2^d\right]}.
\end{eqnarray*}
Concrete values up to weight 8 are given in Table \ref{HSnumberofbasissumstab}; note that the corresponding relations that lead to this amount of basis sums have been explicitely computed  using the 
package \ttfamily HarmonicSums \rmfamily and are now available within it.

\begin{table}\centering
\begin{tabular}{| r | r | r | r | r | r | r | r | r|}
\hline	
&  \multicolumn{8}{|c|}{Number of} \\
Weight& ${\sf N_S}$ & ${\sf N_A}$ & ${\sf N_D}$ & ${\sf N_H}$ & ${\sf N_{AD}}$ & ${\sf N_{AH}}$ & ${\sf N_{DH}}$ & ${\sf N_{ADH}}$\\
\hline	
  1 &    2 &   2 &   2 &   1 &   2 &   1 &   1 &  1 \\
  2 &    6 &   3 &   4 &   4 &   1 &   2 &   3 &  1 \\
  3 &   18 &   8 &  12 &  14 &   5 &   6 &  10 &  4 \\
  4 &   54 &  18 &  36 &  46 &  10 &  15 &  32 &  9 \\
  5 &  162 &  48 & 108 & 146 &  30 &  42 & 100 & 27 \\ 
  6 &  486 & 116 & 324 & 454 &  68 & 107 & 308 & 65 \\
  7 & 1458 & 312 & 972 &1394 & 196 & 294 & 940 &187 \\
  8 & 4374 & 810 &2916 &4246 & 498 & 780 &2852 &486 \\
\hline
\end{tabular}
\caption{\label{HSnumberofbasissumstab}Number of basis sums concerning different relations up to weight 8.}
\end{table}

\section{Harmonic Sums at Infinity}
\label{HSinfrelsec}
Not all harmonic sums are finite at infinity, since for example $\lim_{n\rightarrow \infty} \S1n$ does not exist.
In fact, we have the following lemma, compare \cite{Minh2000}:
\begin{lemma}
Let $a_1, a_2, \ldots a_p \in \Z^*$ for $p \in \N.$
The harmonic sum $\S{a_1,a_2,\ldots,a_p}n$ is convergent, when $n\rightarrow \infty$, if and only if $a_1 \neq 1.$
\label{HSconsumlem}
\end{lemma}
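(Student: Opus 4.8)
The plan is to prove the two implications separately, handling convergence (``$a_1\neq 1$'') by direct estimates and divergence (``$a_1=1$'') by isolating the singular part. Throughout I will use the elementary bound $|\S{b_1,\ldots,b_l}{m}|\le\S{1,\ldots,1}{m}$ ($l$ ones) $\le(\S{1}{m})^l\le(1+\log m)^l$, valid because $|\sign{b_j}^{j}/j^{|b_j|}|\le 1/j$ and because expanding $(\S{1}{m})^l$ dominates the nested sum $\S{1,\ldots,1}{m}$ termwise.

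For the direction $a_1\neq 1\Rightarrow$ convergence I split on $|a_1|$. If $|a_1|\ge 2$ then $\sum_{i=1}^\infty i^{-|a_1|}\,|\S{a_2,\ldots,a_p}{i}|\le\sum_{i=1}^\infty(1+\log i)^{p-1}i^{-|a_1|}<\infty$, so $\S{a_1,\ldots,a_p}{n}$ converges absolutely. If $a_1=-1$, I write $\S{-1,a_2,\ldots,a_p}{n}=\sum_{i=1}^n(-1)^i c(i)$ with $c(i):=\S{a_2,\ldots,a_p}{i}/i$ and apply the bounded-variation form of Dirichlet's test: the partial sums of $(-1)^i$ are bounded, $c(i)\to 0$ by the crude bound, and, using $\S{a_2,\ldots,a_p}{i+1}-\S{a_2,\ldots,a_p}{i}=\sign{a_2}^{\,i+1}(i+1)^{-|a_2|}\S{a_3,\ldots,a_p}{i+1}$, one gets $|c(i+1)-c(i)|=O\!\big((1+\log i)^{p-1}i^{-2}\big)$, hence $\sum_i|c(i+1)-c(i)|<\infty$; so the series converges. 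Note this direction needs no induction and holds for all depths.

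For the direction $a_1=1\Rightarrow$ divergence I invoke the leading-ones extraction of Remark~\ref{HSextractleading1rem}. Writing the index as $(\underbrace{1,\ldots,1}_{r},c_1,\ldots,c_q)$ with $r\ge 1$ and (if $q\ge1$) $c_1\neq 1$, that remark expresses $\S{1,\ldots,1,c_1,\ldots,c_q}{n}$ as a polynomial in $\S{1}{n}$ of degree $r$ whose coefficients are $\R$-linear combinations of harmonic sums without leading ones; by the already-proven first direction each such coefficient stays bounded as $n\to\infty$, while tracking the extraction shows the top coefficient equals $\tfrac{1}{r!}\,\S{c_1,\ldots,c_q}{n}$ (and $q=0$ gives $\S{1,\ldots,1}{n}=(\S{1}{n})^p/p!$ outright). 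Since $\S{1}{n}\to\infty$, the top-degree term dominates: for $q=0$ we get $\S{1,\ldots,1}{n}\to\infty$, and for $q\ge 1$ we get $\S{1,\ldots,1,c_1,\ldots,c_q}{n}\sim\tfrac{1}{r!}(\S{1}{n})^r\,\S{c_1,\ldots,c_q}{\infty}$, which tends to $\pm\infty$ provided $\S{c_1,\ldots,c_q}{\infty}\neq 0$.

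The hard part is exactly this last non-vanishing fact: for any $c_1\neq 1$ one must know $\S{c_1,\ldots,c_q}{\infty}\neq 0$, since otherwise the asymptotics above are also consistent with convergence, and the ``only if'' would fail. When all of $c_1,\ldots,c_q$ are positive this is immediate, the series having strictly positive terms; in particular every admissible depth-one value is nonzero ($\S{-1}{\infty}=-\log 2$, $\S{-m}{\infty}=(2^{1-m}-1)\zeta_m$, $\S{m}{\infty}=\zeta_m$). For the general sign-alternating case I would either cite the corresponding statement in \cite{Minh2000}, or run a descent on depth: peeling the leading ones off one at a time reduces $\S{\ve c}{\infty}$ with $c_1=-1$ to an Abel-summation estimate against the already-controlled convergent tail $\S{c_2,\ldots,c_q}{\infty}$ and, for $|c_1|\ge 2$, to absolute convergence, keeping track of signs so that a definite sign survives at each step. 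Combining this with the two directions above finishes the proof.
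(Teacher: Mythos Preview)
Your convergence direction (``$a_1\neq 1$'') is clean and correct; the Dirichlet-test argument with bounded variation for $a_1=-1$ is a nice touch, and this half matches in spirit the inductive estimates the paper carries out in the more general Theorem~\ref{SSconsumthm}.

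The divergence direction, however, has a genuine gap that you yourself flag but do not close. Your argument reduces the claim ``$a_1=1\Rightarrow$ divergent'' to the assertion that $\S{c_1,\ldots,c_q}{\infty}\neq 0$ whenever $c_1\neq 1$. Neither of your proposed fixes works: the reference \cite{Minh2000} treats only positive indices (where nonvanishing is trivial by positivity), and your ``descent on depth\ldots keeping track of signs'' sketch is not a proof---for alternating Euler sums there is no sign that survives in any simple way, and the nonvanishing of these values is a nontrivial statement about specific real numbers. Worse, the gap is not cosmetic: if some $\S{c}{\infty}$ with $|c_1|\geq 2$ \emph{did} vanish, then your own tail estimate $|\S{c}{i}|=O((\log i)^{q-1}/i^{|c_1|-1})$ would force $\S{1,c}{n}=\sum_{i\le n}\S{c}{i}/i$ to converge absolutely, so the lemma and the nonvanishing are essentially equivalent. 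You therefore cannot hope to finish by ``citing'' the nonvanishing without effectively assuming what you set out to prove.

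The paper does not give a self-contained proof of this lemma; it points to the generalization Theorem~\ref{SSconsumthm}, whose divergence half is argued via the integral representation of Theorem~\ref{SSintrep}: for $a_1=1$ the innermost integral $\int_0^1\frac{y^n-1}{y-1}\,dy=\S{1}{n}$ is shown to force divergence of the full nested integral. That route sidesteps any appeal to the nonvanishing of $\S{c}{\infty}$ and is the substantive difference between the two approaches. If you want to repair your argument, you should either follow the integral-representation path or find an independent proof that $\S{c}{\infty}\neq 0$ for every admissible index vector with $c_1\neq 1$---and be aware that for general alternating indices the latter is not an elementary fact.
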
 
Note that a refined (and generalized) version of this lemma is given in Theorem \ref{SSconsumthm}.
As already mentioned in Section \ref{HSdef} we are always able to express a harmonic sum as an expression consisting only of sums without leading ones and 
sums of the type $\S{\ve 1_w}n$ with $w\in \N.$ Since sums without leading ones are convergent, they do not produce any problems. For sums of the 
type $\S{\ve 1_w}n$ we have the following proposition which is a direct consequence of Corollary 3 of \cite{Dilcher1995} and Proposition 2.1 of \cite{Kirschenhofer1996}.

\begin{prop} For $w\geq 1$ and $n\in\N$, we have
$$\S{\ve 1_w}n=\sum_{i=1}^n(-1)^{i-1}\binom{n}{i}\frac{1}{i^w}=-\frac{1}{w}Y_w(\ldots,(j-1)!\S{j}n,\ldots),$$
where $Y_w(\ldots,x_i,\ldots)$ are the Bell polynomials.
\end{prop}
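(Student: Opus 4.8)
The plan is to deduce both equalities from a single master identity: as formal power series in $t$,
\begin{equation*}
\sum_{w\geq 0}\S{\ve 1_w}{n}\,t^{w}=\prod_{k=1}^{n}\frac{1}{1-t/k}.
\end{equation*}
This holds because, after reversing the order of the summation indices, $\S{\ve 1_w}{n}=\sum_{n\geq i_1\geq\cdots\geq i_w\geq 1}\tfrac{1}{i_1\cdots i_w}=\sum_{1\leq i_1\leq\cdots\leq i_w\leq n}\tfrac{1}{i_1\cdots i_w}$ is the complete homogeneous symmetric polynomial $h_w$ evaluated at $1,\tfrac12,\ldots,\tfrac1n$, and $\sum_{w\geq0}h_w(x_1,\ldots,x_n)\,t^{w}=\prod_{k}(1-x_k t)^{-1}$. (Alternatively, one checks that both sides obey the recursion $f_w(n)=f_w(n-1)+\tfrac1n f_{w-1}(n)$ with $f_0\equiv1$, which is read off from the definition of $\S{\ve 1_w}{n}$ together with $\binom{n}{i}-\binom{n-1}{i}=\binom{n-1}{i-1}=\tfrac in\binom ni$; this needs an induction on $w$ and then on $n$.)

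For the first equality I would expand the right-hand side by partial fractions. Writing $\prod_{k=1}^n\tfrac{k}{k-t}=n!\prod_{k=1}^n(k-t)^{-1}$, the $\tfrac{1}{i-t}$-coefficient of this rational function is $n!\big((-1)^{i-1}(i-1)!\,(n-i)!\big)^{-1}=(-1)^{i-1}i\binom{n}{i}$, so
\begin{equation*}
\prod_{k=1}^{n}\frac{k}{k-t}=\sum_{i=1}^{n}(-1)^{i-1}\binom{n}{i}\frac{1}{1-t/i}=\sum_{w\geq0}\Bigg(\sum_{i=1}^{n}(-1)^{i-1}\binom{n}{i}\frac{1}{i^{w}}\Bigg)t^{w}.
\end{equation*}
Comparing the coefficient of $t^{w}$ with the master identity gives $\S{\ve 1_w}{n}=\sum_{i=1}^{n}(-1)^{i-1}\binom{n}{i}i^{-w}$; this is the content of Corollary~3 of \cite{Dilcher1995}.

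For the second equality I would take logarithms. Since $\prod_{k=1}^n(1-t/k)^{-1}$ has constant term $1$, its formal logarithm is well defined and equals $-\sum_{k=1}^n\log(1-t/k)=\sum_{j\geq1}\tfrac{t^{j}}{j}\sum_{k=1}^n\tfrac1{k^{j}}=\sum_{j\geq1}\tfrac{(j-1)!\,\S{j}{n}}{j!}\,t^{j}$. Exponentiating and invoking the exponential generating function of the complete Bell polynomials, $\exp\!\big(\sum_{j\geq1}x_j\,t^{j}/j!\big)=\sum_{w\geq0}Y_w(x_1,\ldots,x_w)\,t^{w}/w!$ with $x_j=(j-1)!\,\S{j}{n}$, and matching coefficients against the master identity, one recovers $\S{\ve 1_w}{n}$ as the complete Bell polynomial in the arguments $(j-1)!\,\S{j}{n}$ (with the normalisation of $Y_w$ used in \cite{Kirschenhofer1996}). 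Equivalently, differentiating the relation $\exp(\cdot)$ term by term yields the Newton-type recursion $w\,\S{\ve 1_w}{n}=\sum_{j=1}^{w}\S{j}{n}\,\S{\ve 1_{w-j}}{n}$, which is Proposition~2.1 of \cite{Kirschenhofer1996}.

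There is essentially no analytic content here: everything comes down to comparing coefficients of formal power series. The only points that need care are the bookkeeping — each coefficient on the left of the master identity is a genuinely finite sum, and the product on the right, having constant term $1$, admits a formal logarithm and a subsequent exponentiation — and reconciling the precise normalisation of the Bell polynomials $Y_w$ with the one adopted in the cited sources. Once these are pinned down, the proof is exactly the two coefficient comparisons above.
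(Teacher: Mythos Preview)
Your argument is correct. The paper itself does not supply a proof: it simply records that the proposition ``is a direct consequence of Corollary~3 of \cite{Dilcher1995} and Proposition~2.1 of \cite{Kirschenhofer1996}'' and moves on. What you have written is precisely an unpacking of those two citations through the common generating-function identity $\sum_{w\geq0}\S{\ve 1_w}{n}\,t^w=\prod_{k=1}^n(1-t/k)^{-1}$, followed by partial fractions for the first equality and the logarithm/exponential-Bell-polynomial formalism for the second. Your cautionary remark about the normalisation of $Y_w$ is well placed: the factor $-\tfrac{1}{w}$ in the statement reflects a convention for the Bell polynomials that differs from the usual $\tfrac{1}{w!}$ normalisation of the complete Bell polynomials, so matching this exactly does require checking the definition in \cite{Kirschenhofer1996}.
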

Note that the explicit formulas were found empirically in \cite{Bluemlein1999}; for a determinant evaluation formula we refer to \cite{Bluemlein1999,Bruno1881,Berndt1985}.
Using this proposition we can express each sum of the type $\S{\ve 1_w}n$ by sums of the form $\S{i}n,$ where $i\in\Z$. Hence we can decompose each 
harmonic sum $\S{a_1, a_2,\ldots,a_p}n$ in a univariate polynomial in $\S1n$ with coefficients in the convergent harmonic sums. So all divergences can be 
traced back to the basic divergence of $\S1n$ when $n\rightarrow\infty$.
\begin{notation}
We will define the symbol $\S{1}{\infty}:=\lim_{n\rightarrow\infty}\S{1}n,$ and every time it is present, we are in fact dealing with limit 
processes. For $k\in N$ with $k>1$ the harmonic sums $\S{k}{\infty}$ turn into zeta-values and we will sometimes write $\zeta_k$ for $\S{k}{\infty}.$
\end{notation} 
  
\begin{example}
\begin{eqnarray}
\lim_{n\rightarrow\infty}{\S{1,1,2}n}&=&\lim_{n\rightarrow\infty}\left\{\frac 1 2 \S{1}n^2 \S{2}n + \S{1}n (\S{3}n-\S{2, 1}n)\right.\nonumber\\
					& &- \left. \S{3, 1}n + \frac 1 2 \S{4}n + \S{2, 1, 1}n\right\}\nonumber \\
					&=& \S{1}{\infty}^2\frac{\zeta_2}2-\S{1}{\infty}\zeta_3+\frac{9\zeta_2^2}{10}.\nonumber
\end{eqnarray}
\end{example} 
For the computation of the actual values of the harmonic sums at infinity see \cite{Vermaseren1998}.

\subsection{Relations between Harmonic Sums at Infinity}\label{HSInfRelations}
In this section we just want to give several types of relations between the values of harmonic sums at infinity which are of importance in the following chapters. Note that harmonic sums at infinity are 
closely related to the multiple zeta values; for details 
we refer to \cite{Broadhurst2010,Vermaseren1998,MZV1,MZV2,MZV3,Zagier1994}.\\
\begin{itemize}
\item The first type of relations  originate from the algebraic relations of harmonic sums, see Section \ref{HSalgrel}. These relations remain valid when we consider 
 them at infinity. We will refer to this relations as the stuffle relations.
\item The duplication relations from Section \ref{HShalfrel} remain valid if we consider sums which are finite at infinity (\ie do not have leading ones), since it makes 
 no difference whether the argument is $\infty$ or $\frac{\infty}{2}.$
\item In \cite{Vermaseren1998} the following relation for not both $m_1=1$ and $k_1=1$ can be found:
\begin{eqnarray*}
\S{m_1,\ldots,m_p}{\infty}\S{k_1,\ldots,k_q}{\infty}&=&
	\lim_{n \rightarrow \infty}\sum_{i=1}^n\frac{\sign{k_1}^i \S{m_1,\ldots,m_p}{n-i}\S{k_2,\ldots,k_q}i} {i^{\abs{k_1}}}.
\end{eqnarray*}
Using
\begin{eqnarray*}
\sum_{i=1}^n\frac{\sign{k_1}^i \S{m_1,\ldots,m_p}{n-i}\S{k_2,\ldots,k_q}i} {i^{\abs{k_1}}}\\
&&\hspace{-3cm}=\sum_{a=1}^k \binom{\abs{k_1}+\abs{m_1}-1-a}{\abs{m_1}-1}\sum_{i=1}^n\frac{\sign{m_1}^i} {i^{\abs{m_1}+\abs{k_1}-a}}\\
	&&\hspace{-2cm}\sum_{j=1}^i\frac{\sign{\frac{k_1}{m_1}}^j\S{m_2,\ldots,m_p}{i-j}\S{k_2,\ldots,k_q}j} {j^a}\\
&&\hspace{-3cm}+\sum_{a=1}^{m_1} \binom{\abs{k_1}+\abs{m_1}-1-a}{\abs{m_1}-1}\sum_{i=1}^n\frac{\sign{m_1}^i} {i^{\abs{m_1}+\abs{k_1}-a}}\\
	&&\hspace{-2cm}\sum_{j=1}^i\frac{\sign{\frac{k_1}{m_1}}^j\S{m_2,\ldots,m_p}{j}\S{k_2,\ldots,k_q}{i-j}}{j^a},
\end{eqnarray*}
we can rewrite the right hand side in terms of harmonic sums. We will refer to these relations as the shuffle relations since one could also obtain them from the shuffle algebra of harmonic polylogarithms.
\item In \cite{Broadhurst2010} in addition so called generalized doubling relations are considered.
\end{itemize}

All these relations enable us to express harmonic sums at infinity up to weight 8 using the following 19 constants:
\begin{eqnarray*}
&&\{\S{-1}{\infty},\S{2}{\infty},\S{3}{\infty},\S{5}{\infty},\S{7}{\infty},\S{-1,1,1,1}{\infty},\S{-1,1,1,1,1}{\infty},\\
&&\hspace{0.1cm}\S{-1,1,1,1,1,1}{\infty},\S{-5,-1}{\infty},\S{-1,1,1,1,1,1,1}{\infty},\S{-5,1,1}{\infty},\S{5,-1,-1}{\infty},\\
&&\hspace{0.1cm}\S{-1,1,1,1,1,1,1,1}{\infty},\S{5,3}{\infty},\S{-7,-1}{\infty},\S{-5,-1,-1,-1}{\infty},\S{-5,-1,1,1}{\infty}\}.
\end{eqnarray*}
The way we chose these constants is not unique, one could look for other basis constants. Note, that the corresponding relations, that enables one to rewrite an expression with infinite harmonic sums up to 
weight 8 to an expression with these basis sums, are available within the \ttfamily HarmonicSums \rmfamily package.

\section{Asymptotic Expansion of Harmonic Sums}
\label{HSExpansion}
In \cite{Minh2000} an algorithm to compute asymptotic expansions of harmonic sums with positive indices was introduced, while in \cite{Albino2009} a different approach was used to compute asymptotic expansions 
of harmonic sums including negative indices.
In this section we will use a new approach and provide an efficient algorithm to compute asymptotic expansions of harmonic sums (including negative indices); this approach is 
inspired by \cite{Bluemlein2009,Bluemlein2009a}.

In general we say that the function $f:\R \rightarrow \R$ is expanded in an asymptotic series
$$
f(x) \sim \sum_{n=1}^{\infty}{\frac{a_n}{x^n}}, \ x \rightarrow \infty,
$$
where $a_n$ are constants from $\R$, if for all $N\geq 0$
$$
R_N(x)=f(x)-\sum_{n=0}^N{\frac{a_n}{x^n}}\in o\left(\frac{1}{x^N}\right), \ x \rightarrow \infty;
$$
note, here a function $g(x)\in o(G(x))\Leftrightarrow \lim\limits_{x\to \infty} \left |\frac{g(x)}{G(x)}\right| =0.$

In order to compute asymptotic series for harmonic sums, we will use their representations as Mellin transforms of harmonic polylogarithms. But 
first we have to state several details.

\subsection{Extending Harmonic Polylogarithms}
\label{HSExtended}
In the following we look again at the change of the variable $1-x\rightarrow x$ for harmonic polylogarithms (compare Section \ref{HS1xx}). This transformation can as well be found 
in \cite{Remiddi2000}, however we will generalize it to index sets including $-1$. In order to accomplish this extension, we will first introduce an extension 
for the harmonic polylogarithms. We introduce $2$ as a member of the index set as follows.

\begin{definition}
Let $m_i\in \{-1,0,1,2\}$; for $x\in (0,1)$ we extend the harmonic polylogarithms from Definition \ref{HShlogdef} by introducing in addition
$$\H{2}x=\int_0^x\frac{1}{2-y}dy$$
and
$$\H{2,m_2,\ldots,m_w}x=\int_0^x\frac{\H{m_2,\ldots,m_w}x}{2-y}dy.$$
\end{definition}

\begin{remark}
The extended harmonic polylogarithms are still analytic functions for $x\in (0,1)$ and the differentiation extends as follows:
$$\frac{d}{dx}\H{2,m_2,\ldots,m_w}x=\frac{\H{m_2,\ldots,m_w}x}{2-x}.$$
The product of these extended harmonic polylogarithms is still the shuffle product and still only harmonic polylogarithms with leading 1 and not followed 
only by zeroes are not finite at 1.\\
\end{remark}

\begin{example}
\begin{eqnarray*}
\H{1,2}x\H{0,2,-1}x&=&\H{0,1,2,-1,2}x+2\, \H{0,1,2,2,-1}x+\H{0,2,-1,1,2}x+\H{0,2,1,-1,2}x\\
			&&+\H{0,2,1,2,-1}x+\H{1,0,2,-1,2}x+2\, \H{1,0,2,2,-1}x+\H{1,2,0,2,-1}x.
\end{eqnarray*}
\end{example}

\begin{remark}
For further extensions of the index set, we refer to the later chapters.
\end{remark}

\subsubsection*{Extension of the \texorpdfstring{$1-x\rightarrow x$}{1-x->x} Transform}
\label{HS1xxextended}
Let us again look at the transformation $1-x\rightarrow x$ (compare Section \ref{HS1xx}). Proceeding recursively on the weight $w$ of the harmonic polylogarithm we have for $x\in(0,1)$
\begin{eqnarray}
\H{0}{1-x}&=&-\H{1}{x}\\
\H{1}{1-x}&=&-\H{0}{x}\label{HStrafo2}\\
\H{-1}{1-x}&=&\H{-1}1-\H{2}{x}\\
\H{2}{1-x}&=&\H{2}1-\H{-1}{x}.
\end{eqnarray}
Now let us look at higher weights $w>1.$ We consider $\H{m_1,m_2,\ldots,m_w}{1-x}$ and suppose that we can already apply the transform for harmonic 
polylogarithms of weight $<w.$ If $m_1=1$ we can remove leading ones and end up with harmonic polylogarithms without leading ones and powers 
of $\H{1}{1-x}.$ For the powers of $\H{1}{1-x}$ we can use (\ref{HStrafo2}); therefore, only the cases in which the first index $m_1\neq 1$ are to 
be considered. For $m_1=0$ we get (see \cite{Remiddi2000}):
\begin{eqnarray}
\H{0,m_2,\ldots,m_w}{1-x}&=&\int_0^{1-x}{\frac{\H{m_2,\ldots,m_w}y}{y}dy}\nonumber\\
		&=&\int_0^{1}{\frac{\H{m_2,\ldots,m_w}y}{y}dy}-\int_{1-x}^{1}{\frac{\H{m_2,\ldots,m_w}y}{y}dy}\nonumber\\
		&=&\H{0,m_2,\ldots,m_w}1-\int_{0}^{x}{\frac{\H{m_2,\ldots,m_w}{1-t}}{1-t}dt},
\end{eqnarray}
where the constant $\H{0,m_2,\ldots,m_w}1$ is finite. For $m_1=-1$ we get:
\begin{eqnarray}
\H{-1,m_2,\ldots,m_w}{1-x}&=&\int_0^{1-x}{\frac{\H{m_2,\ldots,m_w}y}{1+y}dy}\nonumber\\
		&=&\int_0^{1}{\frac{\H{m_2,\ldots,m_w}y}{1+y}dy}-\int_{1-x}^{1}{\frac{\H{m_2,\ldots,m_w}y}{1+y}dy}\nonumber\\
		&=&\H{-1,m_2,\ldots,m_w}1-\int_{0}^{x}{\frac{\H{m_2,\ldots,m_w}{1-t}}{2-t}dt},
\end{eqnarray}
where the constant $\H{-1,m_2,\ldots,m_w}1$ is finite. Finally for $m_1=2$ we get:
\begin{eqnarray}
\H{2,m_2,\ldots,m_w}{1-x}&=&\int_0^{1-x}{\frac{\H{m_2,\ldots,m_w}y}{2-y}dy}\nonumber\\
		&=&\int_0^{1}{\frac{\H{m_2,\ldots,m_w}y}{2-y}dy}-\int_{1-x}^{1}{\frac{\H{m_2,\ldots,m_w}y}{2-y}dy}\nonumber\\
		&=&\H{2,m_2,\ldots,m_w}1-\int_{0}^{x}{\frac{\H{m_2,\ldots,m_w}{1-t}}{1+t}dt},
\end{eqnarray}
where the constant $\H{2,m_2,\ldots,m_w}1$ is finite. Since we know the transform for weights $<w,$ we can apply it to $\H{m_2,\ldots,m_w}{1-t}$ and 
finally we obtain the required weight $w$ identity by using the definition of the extended harmonic polylogarithms.

\begin{example}For $x \in (0,1)$
 \begin{eqnarray*}
\H{-1,1}{1-x}&=&\H{-1,1}1+\H{2,0}x\\
\H{1,-1,2}{1-x}&=&-\H{0}x\H{-1,2}1+\H{2}1\H{0,2}x-\H{-1,1,2}1\\
	&&-\H{-1,2,1}1-\H{0,2,-1}x.
 \end{eqnarray*}
\end{example}

\begin{remark}
If we apply the transform $1-x\rightarrow x$ to the harmonic polylogarithm $\H{m_1,m_2,\ldots,m_w}{1-x}$ and if we afterwards expand all the 
products of polylogarithms at $x$ it is easy to see, that there will be just one harmonic polylogarithm $\H{n_1,n_2,\ldots,n_w}{x}$ at $x$ with 
weight $w.$ The index set of this harmonic polylogarithm changed as follows: 
\begin{itemize}
\item if $m_i=1$ then $n_i=0$
\item if $m_i=0$ then $n_i=1$
\item if $m_i=-1$ then $n_i=2$
\item if $m_i=2$ then $n_i=-1.$
\end{itemize}
\label{HStraforem}
\end{remark}

If we want to apply the transform $x\rightarrow 1-x$ to the harmonic polylogarithm $\H{n_1,n_2,\ldots,n_w}{x},$ we have two possibilities:
\begin{enumerate}
\item Set $x=1-y$; apply the transform $1-y\rightarrow y$; replace $y$ by $1-x.$
\item By Remark \ref{HStraforem} we construct $\H{m_1,m_2,\ldots,m_w}{1-x}$; we apply the transform to $\H{m_1,m_2,\ldots,m_w}{1-x}$; in the result we 
find $\H{n_1,n_2,\ldots,n_w}{x}.$ Note that there might still be other harmonic polylogarithms at $x$ but their weight is smaller than $w$; for all these harmonic 
polylogarithms we can apply the same strategy.
\end{enumerate}
The advantage of the second strategy is that if we apply first the transform $1-x\rightarrow x$ to $\H{m_1,m_2,\ldots,m_w}{1-x}$ and then $x\rightarrow 1-x$ to 
the result, we get back $\H{m_1,m_2,\ldots,m_w}{1-x}$ immediately. Using the first strategy we might have to know various relations to get back 
to $\H{m_1,m_2,\ldots,m_w}{1-x}.$ We will illustrate this with an example:

\begin{example}
Transforming $\H{1,0,1}{1-x}$ we get $\H{0}x\H{0,1}1-\H{0,1,0}x-2\,\H{0,1,1}1.$ If we transform back using the first strategy, we end up 
with $-\H{0,1,0}1-2\,\H{0,1,1}1+\H{1,0,1}{1-x};$ note that we introduced $-\H{0,1,0}1-2\,\H{0,1,1}1$ which turns out to be zero. But using the second strategy we do not 
need this additional knowledge to remove possibly introduced zeros.
\end{example}

\begin{lemma}
Let $x\in(0,1).$ If we perform the the transformation $1-x\rightarrow x$ to the harmonic polylogarithm $\H{m_1,m_2,\ldots,m_w}{1-x},$ where $m_i\neq -1$ and express it by 
harmonic polylogarithms with argument $x,$ then the 
index $2$ does not appear in the resulting harmonic polylogarithms with argument $x.$ It is however possible that $2$ appears in the resulting 
harmonic polylogarithms at $1$.
\end{lemma}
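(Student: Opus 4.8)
The plan is to induct on the weight $w$, tracing the transformation algorithm of Section~\ref{HS1xxextended} and recording, step by step, which letters may be prepended to lower-weight harmonic polylogarithms carrying the argument $x$. The decisive observation is that in that algorithm the letter $2$ is prepended to an $x$-argument polylogarithm in exactly one situation: when the leading index equals $-1$, where the integrand contains the factor $\tfrac{1}{2-t}=f_2(t)$. In every other case only the letters $0$, $1$ and $-1$ are prepended (through $\tfrac{1}{t}=f_0(t)$, $\tfrac{1}{1-t}=f_1(t)$ and $\tfrac{1}{1+t}=f_{-1}(t)$ respectively), while the substitution $\H{1}{1-x}=-\H{0}{x}$ used when removing leading ones contributes only powers of $\H{0}{x}$. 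This is already consistent with Remark~\ref{HStraforem}: the single surviving weight-$w$ polylogarithm carries the index $2$ only when some original $m_i=-1$. Hence excluding $m_1=-1$, and inductively all letters $\neq-1$, removes the only mechanism producing the letter $2$ in the $x$-argument part; the constants $\H{\cdot}{1}$ remain unconstrained and may indeed acquire a $2$, for instance via $\H{2}{1-x}=\H{2}{1}-\H{-1}{x}$.

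For the base case $w=1$ the relations $\H{0}{1-x}=-\H{1}{x}$, $\H{1}{1-x}=-\H{0}{x}$ and $\H{2}{1-x}=\H{2}{1}-\H{-1}{x}$ (the case $m_1=-1$ being excluded) show no letter $2$ among the $x$-argument polylogarithms. For the inductive step I would first eliminate leading ones: since the extraction procedure behind Remark~\ref{HSremlead1} is a shuffle identity it applies at the argument $1-x$ as well, so $\H{m_1,\ldots,m_w}{1-x}$ becomes a polynomial in $\H{1}{1-x}=-\H{0}{x}$ whose coefficients are $\Q$-linear combinations of harmonic polylogarithms at $1-x$ without leading ones, with index multisets that are sub-multisets of $\{m_1,\ldots,m_w\}$; in particular these coefficients still contain no letter $-1$, those of weight $<w$ fall under the induction hypothesis, and those of weight $w$ have leading index in $\{0,2\}$. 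It then remains to handle $m_1\in\{0,2\}$. For $m_1=0$ one uses
\[
\H{0,m_2,\ldots,m_w}{1-x}=\H{0,m_2,\ldots,m_w}{1}-\int_0^x\frac{\H{m_2,\ldots,m_w}{1-t}}{1-t}\,dt,
\]
rewrites $\H{m_2,\ldots,m_w}{1-t}$ by the induction hypothesis as $\sum_j c_j\,\H{\ve n_j}{t}$ with every $\ve n_j$ free of the letter $2$, and notes that integration against $f_1(t)$ yields $\sum_j c_j\,\H{1,\ve n_j}{x}$, still without a $2$; for $m_1=2$ the analogous formula for $\H{2,m_2,\ldots,m_w}{1-x}$ has integrand factor $f_{-1}(t)$ and produces $\sum_j c_j\,\H{-1,\ve n_j}{x}$, again without a $2$. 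The all-zeros case $\H{\ve 0_w}{1-x}=\tfrac{1}{w!}(\log(1-x))^w=\tfrac{(-1)^w}{w!}\,\H{1}{x}^w$ is immediate.

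The one point needing care — the ``hard part'' — is the bookkeeping inside the leading-ones extraction at a fixed weight: one must verify that this sub-procedure only permutes and deletes already-present letters, so that no coefficient polylogarithm ever acquires a fresh letter $-1$ (which would reintroduce the letter $2$ deeper in the recursion), and that the passage from leading index $1$ to leading index in $\{0,2\}$ terminates, which it does because each extraction step strictly lowers the number of leading ones. Once this is in place the induction closes and the letter $2$ is confined to the constants $\H{\cdot}{1}$, exactly as asserted.
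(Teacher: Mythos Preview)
Your proof is correct and follows essentially the same route as the paper: induction on the weight, the base cases for $m_1\in\{0,1,2\}$, the recursion formulas for $m_1=0$ (prepending the letter $1$) and $m_1=2$ (prepending the letter $-1$), and the reduction of the case $m_1=1$ by extracting leading ones. Your version is more explicit about the one point the paper handles in a single parenthetical remark---that the shuffle-based extraction of leading ones only permutes existing letters and therefore cannot introduce a fresh $-1$---and you also spell out the all-zeros sub-case; these additions are welcome but do not constitute a different strategy.
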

\begin{proof}
For $w=1$ we have:
\begin{eqnarray}
\H{0}{1-x}&=&-\H{1}{x}\nonumber\\
\H{1}{1-x}&=&-\H{0}{x}\nonumber\\
\H{2}{1-x}&=&\H{2}1-\H{-1}{x}.\label{HStrafolem1}
\end{eqnarray}
Assume the lemma holds for weights $<w.$ We have:
\begin{eqnarray}
\H{2,m_2,\ldots,m_w}{1-x}&=&\H{2,m_2,\ldots,m_w}1-\int_{0}^{x}{\frac{\H{m_2,\ldots,m_w}{1-t}}{1+t}dt}\label{HStrafolem2}\\
\H{0,m_2,\ldots,m_w}{1-x}&=&\H{0,m_2,\ldots,m_w}1-\int_{0}^{x}{\frac{\H{m_2,\ldots,m_w}{1-t}}{1-t}dt}\label{HStrafolem3}.
\end{eqnarray}
Since the lemma holds for weights $<w,$ it holds for $\H{m_2,\ldots,m_w}{1-t},$ thus it can be written in the right format.
If $m_1=1$ then we extract the leading ones (this does not generate the index -1) and afterwards we can use (\ref{HStrafolem1}), (\ref{HStrafolem2}) and (\ref{HStrafolem3}) to obtain the desired format.
\end{proof}

\begin{lemma}
If we perform the the transformation $x\rightarrow 1-x$ (no matter which strategy of the two we use) to the harmonic polylogarithm $\H{m_1,m_2,\ldots,m_w}{x},$ where $m_i~\neq~2,$ and express 
it by harmonic polylogarithms with argument $1-x,$ then the index $-1$ does not appear in the resulting harmonic polylogarithms with argument $1-x.$ It is however possible that $-1$ 
appears in the resulting harmonic polylogarithms at $1$.
\end{lemma}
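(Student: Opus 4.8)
The plan is to reduce the statement to the previous lemma by exploiting the symmetry between the letters $-1$ and $2$ in the transformation rules of Section~\ref{HS1xxextended}, and then to transport the conclusion from the $1-x\rightarrow x$ direction to the $x\rightarrow 1-x$ direction through either of the two strategies.

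First I would record a variant of the previous lemma with the roles of $-1$ and $2$ interchanged: performing $1-x\rightarrow x$ on $\H{m_1,\ldots,m_w}{1-x}$ with all $m_i\neq 2$ produces only harmonic polylogarithms of argument $x$ whose index sets avoid the letter $-1$, although the accompanying constants $\H{\ve r}1$ may still contain $-1$. Indeed, inspecting the recursion of Section~\ref{HS1xxextended}, the letter $-1$ is created at argument $x$ by exactly one rule, the one for a leading index $2$, in which $\int_0^x f_{-1}(t)\,\H{m_2,\ldots,m_w}{1-t}\,dt$ prepends a $-1$; since $\ve m$ carries no $2$, this rule is never triggered, and extraction of leading ones, which only reshuffles already-present letters via~(\ref{HShpro1}), cannot create a $2$ either. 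Every remaining step — the base cases, the rule for a leading $0$, and the rule for a leading $-1$ (which prepends a $2$, not a $-1$) — introduces no $-1$ at argument $x$. Hence the proof of the previous lemma adapts mutatis mutandis, with its appeal to the leading-$2$ rule replaced by the analogous leading-$-1$ rule, giving the variant by induction on $w$.

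Next I would treat the $x\rightarrow 1-x$ transform via Strategy~1: writing $x=1-y$, we must transform $\H{m_1,\ldots,m_w}{1-y}$, whose indices are still $\neq 2$, by $1-y\rightarrow y$; by the variant just established this yields a combination of harmonic polylogarithms of argument $y$ free of the letter $-1$, together with constants $\H{\ve r}1$. Substituting $y=1-x$ turns the former into harmonic polylogarithms of argument $1-x$ still free of $-1$ and leaves the constants unchanged, which is the claim for Strategy~1. For Strategy~2 I would use induction on the weight $w$. By Remark~\ref{HStraforem}, $\H{m_1,\ldots,m_w}x$ is the unique weight-$w$ term of the $1-x\rightarrow x$ transform of $\H{n_1,\ldots,n_w}{1-x}$, where $\ve n$ is obtained from $\ve m$ by the involution $1\leftrightarrow 0$, $-1\leftrightarrow 2$; since $\ve m$ has no $2$, the index set $\ve n$ has no $-1$. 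Applying the previous lemma in its original form to this transform, every harmonic polylogarithm of argument $x$ occurring in it — in particular every term of weight $<w$ — has an index set free of the letter $2$. Solving the identity for $\H{m_1,\ldots,m_w}x$ writes it as $\H{n_1,\ldots,n_w}{1-x}$ (argument $1-x$, no letter $-1$) minus the weight-$<w$ argument-$x$ polylogarithms — which contain no $2$, hence satisfy the hypothesis of the present lemma and, by the induction hypothesis, transform under Strategy~2 into argument-$(1-x)$ polylogarithms free of $-1$ plus constants — minus constants $\H{\ve r}1$; collecting terms proves the claim for Strategy~2 as well.

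The main obstacle I anticipate is bookkeeping rather than conceptual: one must check that neither the extraction of leading ones nor the expansion of shuffle products of harmonic polylogarithms can manufacture the letter $-1$ at argument $x$ (they cannot, since the shuffle merely interleaves already-present letters), and one must confirm that the only place a $-1$ can legitimately reappear is inside the constants $\H{\ve r}1$ — which it genuinely can, \eg whenever $\ve m$ contains $-1$, producing a constant such as $\H{-1}1$. Keeping the Strategy~2 induction airtight also requires verifying that the lower-weight argument-$x$ polylogarithms generated along the way are indeed free of the letter $2$, so that the induction hypothesis of the present lemma applies to them.
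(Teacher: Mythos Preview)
Your proof is correct. For Strategy~2 it is essentially identical to the paper's argument: construct $\ve n$ from $\ve m$ via the involution of Remark~\ref{HStraforem}, note $\ve n$ contains no $-1$, invoke the previous lemma to see the lower-weight argument-$x$ terms are free of $2$, and induct. The paper in fact only proves Strategy~2; you go further by also handling Strategy~1 through a symmetric variant of the previous lemma (swap the roles of $-1$ and $2$), which the paper does not state or prove separately. Both the variant and its use for Strategy~1 are correct, so your treatment is more complete than the paper's own proof.
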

\begin{proof}
We just give a proof for the second strategy.
For $w=1$ we have:
\begin{eqnarray}
\H{0}{x}&=&-\H{1}{1-x}\nonumber\\
\H{1}{x}&=&-\H{0}{1-x}\nonumber\\
\H{-1}{x}&=&\H{2}1-\H{2}{1-x}.
\end{eqnarray}
Assume the lemma holds for weights $<w$.
$\H{m_1,\ldots,m_w}{x}$ arises from $\H{n_1,\ldots,n_w}{1-x}$ where $n_i=2,0,1$ if $m_i=-1,1,0,$ so $n_i\neq -1.$ Hence, according to the previous lemma in the 
transform of $\H{n_1,\ldots,n_w}{1-x}$ there is no index $2$ in the harmonic polylogarithms at $x$. All the the harmonic polylogarithms in the transform 
at $x$ (except $\H{m_1,\ldots,m_w}{x}$ itself) have weight less than $w$ and therefore the index $-1$ does not appear in their transform (except in harmonic 
sums at $1$).
\end{proof}

\begin{remark}
If we want to remove the index $2$ from $\H{m_1,\ldots,m_w}{1},$ we can use the transform $x\rightarrow 1-x$ for $\H{m_1,\ldots,m_w}{x}$ and afterwards we 
replace $x$ by $1$ in the result. All the indices $2$ are removed.
\end{remark}

\subsection{Asymptotic Representations of Factorial Series}
\label{HShexp}
We can use the Mellin transform of harmonic polylogarithms to represent harmonic sums. Hence any partial result to derive asymptotic expansions of integrals of the form 
$$\int_0^1{\frac{x^n\H{\ve m}x}{1 + x}dx} \textnormal{ and } \int_0^1{\frac{(x^n-1)\H{\ve m}x}{1 - x}dx}$$
will support us to obtain asymptotic expansions of the harmonic sums. 
For the following we refer to \cite{Bluemlein2009,Nielsen1906,Landau}.
Let us look at the Mellin transform
$$
\M{\varphi(x)}{n}=\int_0^1{x^{n}\varphi(x)dx}
$$
where $\varphi(x)$ is analytic at $x=1$ and has the following Taylor series expansion about $x=1$,
$$
\varphi(1-x)=\sum_{k=0}^{\infty}{a_kx^k.}
$$
For $\Re(x)>0,$ $\M{\varphi(x)}{n}$ is given by the factorial series
$$
\M{\varphi(x)}{n}=\sum_{k=0}^{\infty}{\frac{a_{k}k!}{(n+1)(n+2)\ldots(n+k+1)}}.
$$
$\M{\varphi(x)}{n}$ has poles at the negative integers and one may continue $\M{\varphi(x)}{n}$ analytically to values of $n \in \C$ as a meromorphic function. 
The asymptotic representation is
\begin{eqnarray}\label{HSbk}
\M{\varphi(x)}{n} \sim \sum_{k=0}^{\infty}{\frac{b_k}{n^{k+1}}}, \ n \rightarrow \infty,
\end{eqnarray}
where 
\begin{eqnarray*}
b_0&=&a_0\\
b_k&=&\sum_{l=0}^{k-1}(-1)^{l+1}S_{k-l}^ka_{k-l}(k-l)!
\end{eqnarray*}
and $S_l^k$ are the Stirling numbers of 2nd kind.\\
\begin{example}[Asymptotic expansion of $\S{-1}n$]
First we express $\S{-1}n$ using the Mellin transformation:
\begin{eqnarray*}
\S{-1}n&=&\int_0^1{x^n\overbrace{\frac{1}{1+x}}^{\varphi(x)}dx}+\log{(2)}=\M{\varphi(x)}{n}+\log{(2)}.
\end{eqnarray*}
The Taylor series expansion of $\varphi{(1-x)}$ yields:
\begin{eqnarray*}
\varphi{(1-x)}&=&\frac{1}{2-x}=\sum_{k=0}^{\infty}{\frac{1}{2^{k+1}}x^k}.
\end{eqnarray*}
Hence the factorial series of $\M{\varphi(x)}{n}$ is given by
\begin{eqnarray*}
\M{\varphi(x)}{n}&=&\frac{1}{2(1+n)}+\frac{1}{4(1+n)(1+n)}+\frac{2}{8n(1+n)(2+n)(3+n)}\\
		&&\ +\;\frac{6}{16n(1+n)(2+n)(3+n)(4+n)}+\;\cdots,
\end{eqnarray*}
which leads to the following asymptotic representation of $\M{\varphi(x)}{n}:$ 
\begin{eqnarray*}
\M{\varphi(x)}{n}&=&\frac{1}{2n}-\frac{1}{4n^2}+\frac{1}{8n^4}+o\left(\frac{1}{n^6}\right).
\end{eqnarray*}
Combining the results yields
\begin{eqnarray*}
\S{-1}n&=&\log{(2)}+\frac{1}{2n}-\frac{1}{4n^2}+\frac{1}{8n^4}+o\left(\frac{1}{n^6}\right).
\end{eqnarray*}
\end{example}

If we are interested in accuracy of this expansion, the following considerations may help. From repeated integration by parts we get:
\begin{eqnarray*}
\M{\varphi(x)}{n}&=&\sum_{s=0}^{k}{\frac{a_{s}s!}{(n+1)(n+2)\ldots(n+k+1)}}+\frac{(-1)^{k+1}R_{k+1}(n)}{(n+1)(n+2)\cdots(n+k+1)}
\end{eqnarray*}
with 
$$R_{k+1}(n)=\int_0^1x^{n+k}\;\varphi^{(k)}(x)dx,$$
or similarly 
\begin{eqnarray*}
\M{\varphi(x)}{n}&=&\sum_{s=0}^{k}\frac{b_{s}}{n^{s+1}}+\frac{(-1)^{k+1}}{n^{k+1}}\underbrace{\int_0^1\;x^{n-1}(D_x)^k\left(x\;\varphi(x)\right)dx}_{\bar{R}_{k+1}(n):=}
\end{eqnarray*}
with $$D_x(f(x)):=x\;f'(x);$$
for the estimate $\bar{R}_{k+1}$ we can use in addition
$$
\hspace{10mm}(D_x)^k\left(x\;f(x)\right)=\sum_{i=0}^k g(i+1,k)x^{i+1}f^{(i)}(x),\hspace{5mm} \textnormal{with}\ g(i,k):=\sum_{j=1}^i(-1)^j \frac{j^{k+1}}{(i-j)!j!}.
$$

\begin{example}[Asymptotic expansion of $\S{-1}n$ continued] We get
\begin{eqnarray*}
\int_0^1{\frac{x^n}{1+x}}dx&=&\int_0^1{x^{n-1}\frac{x}{1+x}}dx=\left.\frac{x^{n}}{n}\frac{x}{1+x}\right|_0^1-\int_{0}^1\frac{x^{n}}{n}\frac{1}{(1+x)^2}dx\\
	&=&\frac{1}{2n}-\int_{0}^1 x^{n-1} \frac{x}{n(1+x)^2}dx\\
	&=&\frac{1}{2n}-\left.\frac{x^{n}}{n}\frac{x}{n(1+x)^2}\right|_0^1+\int_{0}^1 \frac{x^{n}}{n} \frac{(1-x)}{n(1+x)^3}dx\\
	&=&\frac{1}{2n}-\frac{1}{4n^2}+\int_{0}^1 x^{n-1} \frac{x(1-x)}{n^2(1+x)^3}dx\\
	&=&\cdots=\frac{1}{2n}-\frac{1}{4n^2}+\frac{1}{8n^4}-\frac{1}{4n^6}+\frac{1}{n^6}\bar{R}_6(n)
\end{eqnarray*}
with $\abs{\bar{R}_6(n)}\leq 2.$
Hence
$$\left|\int_0^1{\frac{x^{20}}{1+x}}dx-\frac{6240199}{256000000}\right|\leq\frac{1}{32000000}.$$
\end{example}
As the previous example suggests, we can use repeated integration by parts to get the $b_k$ of equation (\ref{HSbk}) directly.
Unfortunately not all integrals of the form 
$$\int_0^1{\frac{x^n\H{m_1,m_2,\ldots,m_{k-1},m_k}x}{1\pm x}dx}$$
can be expanded in that way up to arbitrary order, since $\frac{\H{m_1,m_2,\ldots,m_{k-1},m_k}x}{1\pm x}$ does not have to be analytic at $x=1$. 
In the following we will explore certain functions (see Lemma \ref{HSanalytic1} and Lemma \ref{HSanalytic2}), which are analytic at $1$ (and hence infinitely times differentiable). We will try afterwards to transform all 
the integrals $\int_0^1{\frac{x^n\H{m_1,m_2,\ldots,m_{k-1},m_k}x}{1\pm x}dx}$ 
as far as possible into a form given in terms of these analytic functions at $1$ and therefore we can use the method mentioned above to find an 
asymptotic representation.
In the following lemma we will summarize some properties of the $n$-th derivatives of functions that we will come across.
\begin{lemma}Let $f:D\rightarrow \R$ be an analytic function on $D$ with $D\subset\R.$ For $x\in D$ and $n\in \N$ we have,
 \begin{eqnarray*}
 \left(\frac{f(x)}{1-x}\right)^{(n)}&=&\frac{1}{(1-x)^{n+1}}\sum_{k=0}^n \frac{n!}{k!}f(x)^{(k)}(1-x)^k\\
 \left(\frac{f(x)}{1+x}\right)^{(n)}&=&\frac{(-1)^n}{(1+x)^{n+1}}\sum_{k=0}^n(-1)^k \frac{n!}{k!}f^{(k)}(x)(1+x)^k\\
 \left(\frac{f(x)}{x}\right)^{(n)}&=&\frac{(-1)^n}{x^{n+1}}\sum_{k=0}^n(-1)^k \frac{n!}{k!}f^{(k)}(x)x^k.\\
\end{eqnarray*}
\label{HSderivatives}
\end{lemma}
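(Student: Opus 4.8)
The plan is to prove all three identities simultaneously via the general Leibniz rule, viewing $f(x)/(1-x)$, $f(x)/(1+x)$ and $f(x)/x$ as products of $f$ with an elementary function whose higher derivatives admit a closed form. First I record the auxiliary derivatives: for $j\in\N_0$ and $x$ in the relevant domain,
$$\left(\frac{1}{1-x}\right)^{(j)}=\frac{j!}{(1-x)^{j+1}},\qquad \left(\frac{1}{1+x}\right)^{(j)}=\frac{(-1)^j j!}{(1+x)^{j+1}},\qquad \left(\frac{1}{x}\right)^{(j)}=\frac{(-1)^j j!}{x^{j+1}},$$
each of which follows by a one-line induction on $j$ (or directly, since $(1\mp x)^{-1}$ and $x^{-1}$ are powers of linear functions).

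Next I apply the Leibniz rule $\left(uv\right)^{(n)}=\sum_{j=0}^n\binom{n}{j}u^{(n-j)}v^{(j)}$ with $u=f$ and $v$ the appropriate elementary factor. In the first case this gives
$$\left(\frac{f(x)}{1-x}\right)^{(n)}=\sum_{j=0}^n\binom{n}{j}f^{(n-j)}(x)\,\frac{j!}{(1-x)^{j+1}}.$$
Substituting $k=n-j$ and using $\binom{n}{k}(n-k)!=\tfrac{n!}{k!}$ turns the right-hand side into $\sum_{k=0}^n\tfrac{n!}{k!}\tfrac{f^{(k)}(x)}{(1-x)^{n-k+1}}$, and pulling out the factor $(1-x)^{-(n+1)}$ yields the claimed formula. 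The $1+x$ and $x$ cases are handled in exactly the same way; the only difference is that the factor $(-1)^j$ coming from the auxiliary derivative becomes, after the substitution $j=n-k$, a global $(-1)^n$ multiplied by $(-1)^{-k}=(-1)^k$, which reproduces the stated signs.

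There is essentially no obstacle in this argument; the only things to watch are the reindexing of the sum and the sign arithmetic $(-1)^{n-k}=(-1)^n(-1)^k$. If one prefers to avoid the Leibniz rule, an alternative is a direct induction on $n$: differentiate the right-hand side of each identity, split off the top term $k=n$, shift the summation index in the remaining sum, and recombine the coefficients; this works but obscures the common structure of the three formulas, so I would present the Leibniz version.
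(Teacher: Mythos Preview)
Your proof is correct and follows essentially the same approach as the paper: both apply the Leibniz rule together with the closed-form derivatives of $1/(1-x)$, $1/(1+x)$ and $1/x$, then simplify $\binom{n}{k}(n-k)!=n!/k!$ and factor out the appropriate power of the denominator. The only cosmetic difference is that the paper indexes the Leibniz sum directly by the order of $f^{(k)}$, whereas you index by the order of the elementary factor and then substitute $k=n-j$; the computations are otherwise identical.
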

\begin{proof}
All three identities follow from the  Leibniz rule; for instance:
\begin{eqnarray*}
\left(\frac{f(x)}{1-x}\right)^{(n)}&=&\sum_{k=0}^n \binom{n}{k}f^{(k)}(x)\left(\frac{1}{1-x}\right)^{(n-k)}=\sum_{k=0}^n \binom{n}{k}f^{(k)}(x)\frac{(n-k)!}{(1-x)^{n-k+1}}\\
      &=&\frac{1}{(1-x)^{n+1}}\sum_{k=0}^n \frac{n!}{k!}f^{(k)}(x)(1-x)^k.
\end{eqnarray*}
\end{proof}

\begin{lemma}
Let $\H{m_1,m_2,\ldots,m_k}x$ be a harmonic polylogarithm with $m_i\in\{0,-1\}$ for $1\leq i\leq k.$ Then 
$$\H{m_1,m_2,\ldots,m_k}x,\; \frac{\H{m_1,m_2,\ldots,m_k}x}{1+x}$$
and
$$\frac{\H{m_1,m_2,\ldots,m_k}x-\H{m_1,m_2,\ldots,m_k}1}{1-x}$$
are analytic for $x \in (0,\infty).$
\label{HSanalytic1}
\end{lemma}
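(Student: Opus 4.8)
The plan is to prove the lemma by induction on the weight $k$, the whole content being the single observation that an antiderivative of a real-analytic function is again real-analytic. The reason the restriction $m_i\in\{0,-1\}$ helps is that the only auxiliary functions occurring in $\H{m_1,\dots,m_k}{x}$ are then $f_0(y)=1/y$ and $f_{-1}(y)=1/(1+y)$, both real-analytic on $(0,\infty)$ (indeed $f_{-1}$ is analytic on $(-1,\infty)$). Hence the iterated-integral definition has no interior singularity and makes sense verbatim for $x\in(0,\infty)$, and the whole statement reduces to showing that $\H{m_1,\dots,m_k}{x}$ itself is analytic on $(0,\infty)$; the other two functions then drop out from elementary closure properties of real-analytic functions.

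For the core claim I would induct on $k$. The case $k=0$ is trivial since $\H{}{x}=1$. In the induction step, if $(m_1,\dots,m_k)=\ve 0_k$ then $\H{\ve 0_k}{x}=\tfrac1{k!}(\log x)^k$ is analytic on $(0,\infty)$; otherwise $\H{m_1,\dots,m_k}{x}=\int_0^x f_{m_1}(y)\,\H{m_2,\dots,m_k}{y}\,dy$, and by the induction hypothesis $\H{m_2,\dots,m_k}{y}$ is analytic on $(0,\infty)$, so the integrand $g(y):=f_{m_1}(y)\,\H{m_2,\dots,m_k}{y}$ is too (product of analytic functions). Fixing any $x_0\in(0,\infty)$ and writing $\H{m_1,\dots,m_k}{x}=\H{m_1,\dots,m_k}{x_0}+\int_{x_0}^x g(y)\,dy$, the integral term is analytic on $(0,\infty)$ since an antiderivative of a real-analytic function is real-analytic (integrate the local power series termwise), and the constant $\H{m_1,\dots,m_k}{x_0}$ is finite; this closes the induction.

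Granting analyticity of $h(x):=\H{m_1,\dots,m_k}{x}$ on $(0,\infty)$, the function $h(x)/(1+x)$ is the product of a function analytic on $(0,\infty)$ with $1/(1+x)$, analytic on $(-1,\infty)\supseteq(0,\infty)$, hence analytic on $(0,\infty)$. For the third function, $h(1)=\H{m_1,\dots,m_k}{1}$ is finite because $1\in(0,\infty)$; on $(0,\infty)\setminus\{1\}$ the quotient $(h(x)-h(1))/(1-x)$ is a ratio of analytic functions with non-vanishing denominator, hence analytic, and at $x=1$ it extends analytically because analyticity of $h$ at $1$ gives $h(x)-h(1)=(x-1)\sum_{n\ge1}\tfrac{h^{(n)}(1)}{n!}(x-1)^{n-1}$, so $(h(x)-h(1))/(1-x)=-\sum_{n\ge1}\tfrac{h^{(n)}(1)}{n!}(x-1)^{n-1}$ near $1$.

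The only slightly delicate point — which I would nail down with a short auxiliary induction — is the convergence of the defining integral at the lower limit $0$ in the case $m_1=0$, needed so that the constant $\H{m_1,\dots,m_k}{x_0}$ above is finite. One checks that when $(m_2,\dots,m_k)\ne\ve 0$ the polylogarithm $\H{m_2,\dots,m_k}{y}$ vanishes at $y=0$ and in fact behaves near $0$ like a finite combination of terms $y^j(\log y)^l$ with $j\ge1$ (immediate by induction from the two integration recursions), so $\H{m_2,\dots,m_k}{y}/y$ is dominated near $0$ by a power of $\log y$ and is therefore integrable on $(0,x_0)$; the case $(m_2,\dots,m_k)=\ve 0$ involves $\tfrac{(\log y)^{k-1}}{(k-1)!\,(1+y)}$, which is again log-power bounded near $0$. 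In any event this is exactly the convergence already built into Definition~\ref{HShlogdef}, so for indices in $\{0,-1\}$ it simply extends to all of $(0,\infty)$; the rest is routine.
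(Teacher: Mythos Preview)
Your argument is correct. You reduce everything to the single structural fact that antiderivatives of real-analytic functions are real-analytic, and then handle the quotient by $1+x$ and the divided difference at $x=1$ by standard closure properties (product, quotient with nonzero denominator, removable singularity via the Taylor series). The convergence check at the lower limit $0$ is a reasonable aside, though strictly speaking it is already subsumed by Definition~\ref{HShlogdef} on $(0,1)$: pick $x_0\in(0,1)$ and only the antiderivative from $x_0$ to $x$ is new.

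The paper proceeds differently: it invokes Lemma~\ref{HSderivatives} (Leibniz rule) to write down explicit closed formulas for the $n$th derivatives of $\H{m_1,\dots,m_k}{x}$, of $\H{m_1,\dots,m_k}{x}/(1+x)$, and of the divided difference, and for the last of these it computes the limit at $x=1$ via $n{+}1$ applications of l'H\^opital, obtaining $\lim_{x\to1}\bigl(\tfrac{f(x)}{1-x}\bigr)^{(n)}=-\tfrac{f^{(n+1)}(1)}{n+1}$. Your route is shorter and more conceptual; the paper's route is more computational but yields explicit derivative formulas that feed directly into the repeated integration-by-parts machinery of Section~\ref{HShexp} used for the actual asymptotic expansions. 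So the two proofs serve slightly different purposes: yours establishes the lemma with minimal overhead, while the paper's proof doubles as a warm-up for the derivative bookkeeping needed downstream.
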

\begin{proof}
We know that $\H0x$ and $\H{-1}x$ are analytic $\in (0,\infty).$ Now we suppose the harmonic polylogarithms with indices $\in\{0,-1\}$ of depth $k-1$ are analytic and
consider the the $n$-th derivative of $\H{m_1,m_2,\ldots,m_k}x:$
If $m_1=-1,$ we get 
\begin{eqnarray*}
\H{-1,m_2,\ldots,m_k}x^{(n)}&=&\left(\frac{\H{m_2,\ldots,m_k}x}{1+x}\right)^{(n-1)}\\
    &=&\frac{(-1)^{n-1}}{(1+x)^{n}}\sum_{k=0}^{n-1}(-1)^k \frac{(n-1)!}{k!}\H{m_2,\ldots,m_k}x^{(k)}(1+x)^k
\end{eqnarray*}
using Lemma \ref{HSderivatives}. Similarly, if $m_1=0,$ we get
$$
\H{-1,m_2,\ldots,m_k}x^{(n)}=\frac{(-1)^{n-1}}{x^{n}}\sum_{k=0}^{n-1}(-1)^k \frac{(n-1)!}{k!}\H{m_2,\ldots,m_k}x^{(k)}x^k.
$$
In both cases $\H{m_1,m_2,\ldots,m_k}x$ is analytic for $x \in (0,\infty)$ since $\H{m_2,\ldots,m_k}x$ is analytic for $x \in (0,\infty).$
Moreover, $\frac{\H{m_1,m_2,\ldots,m_k}{x}}{1+x}$ is analytic since  $\H{m_1,m_2,\ldots,m_k}x$ and $\frac{1}{1+x}$ are analytic. Finally we look at
$$\frac{\H{m_1,m_2,\ldots,m_k}x-\H{m_1,m_2,\ldots,m_k}1}{1-x};$$
for $x \in (0,1)\cup (1,\infty)$ the analyticity can be seen as in the previous case. In the following we will use the abbreviation $f(x):=\H{m_1,m_2,\ldots,m_k}x-\H{m_1,m_2,\ldots,m_k}1.$
 For $x=1$ we get using Lemma \ref{HSderivatives} and by using de l'Hospital's rule $n+1$ times (note that $\frac{\partial^{n+1}}{\partial x^{n+1}}(1-x)^{n+1}=(-1)^{n+1}(n+1)!$):
\begin{eqnarray*}
 \lim_{x\rightarrow 1}\left(\frac{f(x)}{1-x}\right)^{(n)}&=&\lim_{x\rightarrow 1}\frac{1}{(1-x)^{n+1}}\sum_{k=0}^n \frac{n!}{k!}f(x)^{(k)}(1-x)^k\\
	&=&\lim_{x\rightarrow 1}\frac{(-1)^{n+1}}{(n+1)!}\sum_{k=0}^n \frac{n!}{k!}\left(f(x)^{(k)}(1-x)^k\right)^{(n+1)}\\
	&=&\lim_{x\rightarrow 1}\frac{(-1)^{n+1}}{n+1}\sum_{k=0}^n \frac{1}{k!}\sum_{i=0}^{n+1}\binom{n+1}{i}f(x)^{(k+i)}((1-x)^k)^{(n+1-i)}\\
	&=&\lim_{x\rightarrow 1}\frac{(-1)^{n+1}}{n+1}\sum_{k=0}^n \frac{1}{k!}\sum_{i=n-k+1}^{n+1}\binom{n+1}{i}f(x)^{(k+i)}\\
	    &&\hspace{4.5cm}\times\frac{k!(-1)^{n+1-i}}{(k-n-1+i)!}(1-x)^{k-n-1+i}\\
	&=&\frac{1}{n+1}\sum_{k=0}^n\binom{n+1}{n-k+1}f(1)^{(n+1)}(-1)^{n-k+1}=-\frac{f(1)^{(n+1)}}{n+1}.
\end{eqnarray*}
Hence $\frac{\H{m_1,m_2,\ldots,m_k}x-\H{m_1,m_2,\ldots,m_k}1}{1-x}$ is analytic for $x \in (0,\infty).$
\end{proof}
Using similar arguments as in the proof of the previous lemma we get the following lemma.
\begin{lemma}
Let $\H{m_1,m_2,\ldots,m_k}x$ be a harmonic polylogarithm with $m_i\in\{1, 0,-1,2\}$ for $1\leq i \leq k,$ and $m_k\neq 0.$ Then 
$$\frac{\H{m_1,m_2,\ldots,m_k}{1-x}}{1+x}$$
and
$$\frac{\H{m_1,m_2,\ldots,m_k}{1-x}}{1-x}$$
are analytic for $x \in (0,2).$
\label{HSanalytic2}
\end{lemma}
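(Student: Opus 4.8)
The plan is to prove both claims simultaneously by induction on the weight $k\geq 1$, strengthening the statement so that the induction hypothesis also records the analyticity of $\H{m_1,m_2,\ldots,m_k}{1-x}$ itself on $(0,2)$. What makes this a genuine analogue of the argument for Lemma~\ref{HSanalytic1} -- in fact a slightly simpler one, since here no subtraction of boundary values is needed -- is the following observation: differentiating $\H{m_1,m_2,\ldots,m_k}{1-x}$ in $x$ produces $-\H{m_2,\ldots,m_k}{1-x}$ times one of the kernels $\tfrac{1}{1-x}$, $\tfrac{1}{x}$, $\tfrac{1}{2-x}$, $\tfrac{1}{1+x}$, according as $m_1$ equals $0$, $1$, $-1$, $2$. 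On the open interval $(0,2)$ the kernels $\tfrac{1}{x}$, $\tfrac{1}{2-x}$, $\tfrac{1}{1+x}$ are analytic, while $\tfrac{1}{1-x}$ is analytic on $(0,2)\setminus\{1\}$ and is precisely the denominator appearing in the second claim; thus after the substitution $y=1-x$ the two quotients obtained by dividing $\H{m_2,\ldots,m_k}{1-x}$ by $1+x$ and by $1-x$ feed straight back into the recursion.

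For the base case $k=1$ we have $m_1\in\{1,-1,2\}$, and a direct computation gives $\H{1}{1-x}=-\log x$, $\H{-1}{1-x}=\log(2-x)$ and $\H{2}{1-x}=\log\tfrac{2}{1+x}$; each is analytic on $(0,2)$ and vanishes at $x=1$, which settles the weight-one case of all three statements. For the inductive step ($k\geq 2$, $m_k\neq 0$) I use the derivative formula above: when $m_1\in\{1,-1\}$ the derivative of $\H{m_1,\ldots,m_k}{1-x}$ is a product of an analytic kernel with $\H{m_2,\ldots,m_k}{1-x}$, which is analytic on $(0,2)$ by the induction hypothesis applied to $(m_2,\ldots,m_k)$ (whose last index is still $m_k\neq 0$); when $m_1=2$ the derivative equals minus $\H{m_2,\ldots,m_k}{1-x}$ divided by $1+x$, and when $m_1=0$ it equals minus $\H{m_2,\ldots,m_k}{1-x}$ divided by $1-x$, both analytic on $(0,2)$ by the \emph{strengthened} induction hypothesis. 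Hence $\tfrac{d}{dx}\H{m_1,\ldots,m_k}{1-x}$ is analytic on all of $(0,2)$. Since $\H{m_1,\ldots,m_k}{1-x}$ is a well-defined, continuously differentiable function on $(0,2)$ -- visible from its iterated-integral definition, the only poles of the integrands sitting at the endpoints $x=0,2$ while the apparent pole at $x=1$ in the case $m_1=0$ is cancelled by $\H{m_2,\ldots,m_k}{0}=0$ -- it is itself analytic on $(0,2)$.

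It remains to divide. The quotient $\tfrac{\H{m_1,\ldots,m_k}{1-x}}{1+x}$ is the product of the analytic function just obtained with $\tfrac{1}{1+x}$, which is analytic on $(0,2)$, hence it is analytic there. For $\tfrac{\H{m_1,\ldots,m_k}{1-x}}{1-x}$, analyticity on $(0,1)\cup(1,2)$ is immediate; at the interior point $x=1$ I invoke $m_k\neq 0$, which forces $\ve{m}\neq\ve{0}_k$ and therefore $\H{\ve{m}}{0}=0$, so $\H{\ve{m}}{1-x}$, being analytic near $x=1$, vanishes there and can be written $(1-x)\,\psi(x)$ with $\psi$ analytic near $x=1$; the quotient then equals $\psi(x)$ and the singularity is removable. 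This closes the induction and proves the lemma.

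The step I expect to demand the most care is the treatment of the interior point $x=1$: one must be disciplined about carrying the strengthened induction hypothesis (analyticity of the quotient of $\H{m_2,\ldots,m_k}{1-x}$ by $1-x$, not merely of $\H{m_2,\ldots,m_k}{1-x}$), because that is exactly what controls the case $m_1=0$, and one must genuinely establish analyticity -- not mere continuity -- of $\H{m_1,\ldots,m_k}{1-x}$ at $x=1$ before factoring out the zero. This last point can alternatively be read off from the Taylor expansion of $\H{\ve{m}}{y}$ about $y=0$, which converges for $|y|<1$ because $m_k\neq 0$ excludes trailing zeros; in the present setting this amounts to extending the expansions of Theorem~\ref{HSpowexp} to the letter $2$ and to $y\in(-1,1)$. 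Finally, note that $k\geq 1$ is indispensable: for $k=0$ the function $\tfrac{1}{1-x}$ is not analytic at $x=1$.
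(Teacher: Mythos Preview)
Your proof is correct and follows the inductive approach the paper indicates (it only says ``using similar arguments as in the proof of the previous lemma''). Your treatment is in fact somewhat cleaner than the paper's proof of Lemma~\ref{HSanalytic1}: rather than computing $n$-th derivatives via Leibniz and applying l'Hospital $n+1$ times at the critical point, you show once that the first derivative is analytic on $(0,2)$---using the strengthened induction hypothesis to cover the case $m_1=0$---and then conclude analyticity of the function as an antiderivative; the removable singularity of the quotient at $x=1$ is handled by factoring out the simple zero $\H{\ve m}{0}=0$ instead of by limit computations.
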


\begin{lemma}
Let $\H{\ve m}x=\H{m_1,m_2,\ldots,m_k}x$ and $\H{\ve b}x=\H{b_1,b_2,\ldots,b_l}x$ be harmonic polylogarithms with $m_i,b_i\in~\{1, 0,-1,2\}, m_1\neq~1$ and $b_l\neq~0.$ Then we have
$$
\M{\frac{\H{\ve m}x}{1-x}}{n} =\int_0^1{\frac{x^n(\H{\ve m}x-\H{\ve m}1)}{1-x}dx}-\int_0^1{\frac{\H{\ve m}x-\H{\ve m}1}{1-x}}dx-\S{1}n\H{\ve m}{1},
$$
and
$$
\M{\frac{\H{\ve b}{1-x}}{1-x}}{n}=\int_0^1{\frac{x^n\H{\ve b}{1-x}}{1-x}dx}-\H{0,\ve b}1
$$
where
$$\int_0^1{\frac{\H{\ve m}x-\H{\ve m}1}{1-x}}dx, \ \H{\ve m}{1} \textnormal{ and } \H{0,\ve b}1 $$
are finite constants.
\label{HSmelexpconst}
\end{lemma}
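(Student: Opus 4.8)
The plan is to unwind the definition of the extended and modified Mellin transform (Definition \ref{HSabmellplus}) and, in each of the two cases, peel off the relevant finite constant as an additive term; everything then reduces to elementary manipulations.

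\textbf{First identity.} By Definition \ref{HSabmellplus} one has $\M{\frac{\H{\ve m}x}{1-x}}n=\int_0^1\frac{(x^n-1)\H{\ve m}x}{1-x}dx$. Inside the integral I would write $\H{\ve m}x=(\H{\ve m}x-\H{\ve m}1)+\H{\ve m}1$. Here $\H{\ve m}1$ is finite because $m_1\neq1$ (the facts recorded after Definition \ref{HShlogdef}; equivalently, having no leading ones, $\H{\ve m}x$ carries no power of $\H1x=-\log(1-x)$, cf.\ Remark \ref{HSremlead1}), and $\frac{\H{\ve m}x-\H{\ve m}1}{1-x}$ is absolutely integrable on $(0,1)$: near $x=1$ one has $\H{\ve m}x-\H{\ve m}1=\int_1^x\tfrac{d}{dy}\H{\ve m}y\,dy$, and since $m_1\neq1$ the factor $\tfrac{d}{dy}\H{\ve m}y$ is bounded near $y=1$ up to at most a power of $|\log(1-y)|$, so the numerator vanishes like $(1-x)$ times a power of $|\log(1-x)|$ and the quotient stays integrable (in the case $m_i\in\{0,-1\}$ this is precisely Lemma \ref{HSanalytic1}, and for arguments $1-x$ it is Lemma \ref{HSanalytic2}). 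Hence the integral splits into absolutely convergent pieces
\[ \M{\tfrac{\H{\ve m}x}{1-x}}n=\int_0^1\frac{x^n(\H{\ve m}x-\H{\ve m}1)}{1-x}dx-\int_0^1\frac{\H{\ve m}x-\H{\ve m}1}{1-x}dx+\H{\ve m}1\int_0^1\frac{x^n-1}{1-x}dx, \]
and the last integral is elementary: $\frac{x^n-1}{1-x}=-(1+x+\cdots+x^{n-1})$, so $\int_0^1\frac{x^n-1}{1-x}dx=-\sum_{i=1}^n\frac1i=-\S1n$ (the base case in the proof of Lemma \ref{HSintrep1}). Substituting this gives exactly the asserted formula, and $\int_0^1\frac{\H{\ve m}x-\H{\ve m}1}{1-x}dx$ is a finite constant by the integrability just noted.

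\textbf{Second identity.} Again $\M{\frac{\H{\ve b}{1-x}}{1-x}}n=\int_0^1\frac{(x^n-1)\H{\ve b}{1-x}}{1-x}dx$ by Definition \ref{HSabmellplus}. Since $b_l\neq0$ there are no trailing zeros, so $\H{\ve b}u$ has an ordinary Taylor expansion about $u=0$ with $\H{\ve b}0=0$, i.e.\ $\H{\ve b}u=O(u)$ as $u\to0^+$; hence $\frac{\H{\ve b}{1-x}}{1-x}$ is bounded near $x=1$, and (also near $x=0$, where $\H{\ve b}{1-x}$ grows at most like a power of $\log x$) both $\int_0^1\frac{x^n\H{\ve b}{1-x}}{1-x}dx$ and $\int_0^1\frac{\H{\ve b}{1-x}}{1-x}dx$ converge absolutely. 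Splitting off the $x^n-1$, the $x$-free integral is evaluated by the substitution $u=1-x$:
\[ \int_0^1\frac{\H{\ve b}{1-x}}{1-x}dx=\int_0^1\frac{\H{\ve b}u}{u}du=\H{0,\ve b}1-\H{0,\ve b}0=\H{0,\ve b}1, \]
using $\H{0,\ve b}0=0$ and the fact that $\H{0,\ve b}1$ is finite because the leading index of $(0,\ve b)$ is $0\neq1$. This gives the second identity.

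\textbf{Main obstacle.} The only non-routine point is justifying that every split integral converges, i.e.\ controlling the integrands at the endpoints $x=1$ (for $\frac{\H{\ve m}x-\H{\ve m}1}{1-x}$) and $x=1,x=0$ (for $\frac{\H{\ve b}{1-x}}{1-x}$). This is exactly where the hypotheses $m_1\neq1$ and $b_l\neq0$ enter, and it is cleanest to handle by invoking the already-proved analyticity statements (Lemmas \ref{HSanalytic1} and \ref{HSanalytic2}) together with the no-trailing-zeros Taylor expansion of harmonic polylogarithms, rather than re-deriving the endpoint estimates from scratch.
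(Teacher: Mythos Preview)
Your proof is correct and follows the same algebraic skeleton as the paper: unfold Definition~\ref{HSabmellplus}, split off the constant $\H{\ve m}1$ (resp.\ recognize the $x$-free piece as $\H{0,\ve b}1$ via the substitution $u=1-x$), and evaluate $\int_0^1\frac{x^n-1}{1-x}dx=-\S1n$.

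The one place where you and the paper diverge is in justifying that $\int_0^1\frac{\H{\ve m}x-\H{\ve m}1}{1-x}dx$ is finite. You argue analytically, bounding the numerator near $x=1$ via $\H{\ve m}x-\H{\ve m}1=\int_1^x f_{m_1}(y)\H{m_2,\ldots,m_k}y\,dy$ with $f_{m_1}$ bounded (since $m_1\neq1$) and $\H{m_2,\ldots,m_k}y$ at worst a power of $\log(1-y)$. The paper instead integrates by parts once, writing
\[
\int_0^1\frac{\H{\ve m}x-\H{\ve m}1}{1-x}dx=-\H{m_1,1,m_2,\ldots,m_k}1-\H{m_1,m_2,1,\ldots,m_k}1-\cdots-\H{m_1,\ldots,m_k,1}1,
\]
a finite sum of harmonic polylogarithms at $1$ each with leading index $m_1\neq1$, hence manifestly finite. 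Both arguments are valid for the lemma as stated; the paper's version has the slight bonus of producing an explicit closed form for the constant, which is handy when one actually wants to evaluate it downstream.
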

\begin{proof}
By integration by parts we get
\begin{eqnarray*}
&& \int_0^1{\frac{\H{m_1,m_2,\ldots,m_k}x-\H{m_1,m_2,\ldots,m_k}1}{1-x}dx} \\
&& \ \ =\lim_{\epsilon \rightarrow 1^-} \int_0^{\epsilon}{\frac{\H{m_1,m_2,\ldots,m_k}x-\H{m_1,m_2,\ldots,m_k}1}{1-x}dx}\\
&& \ \ =\lim_{\epsilon \rightarrow 1^-} \Bigl( \left.\left(\H{m_1,m_2,\ldots,m_k}x-\H{m_1,m_2,\ldots,m_k}1\right)\H{1}x\right|_0^{\epsilon}\Bigr.\\
&& \ \	\Bigl.\hspace{2cm}-\int_0^{\epsilon}{\frac{\H{m_2,m_3,\ldots,m_k}x\H{1}x}{x-m_1}dx}\Bigr)\\
&& \ \ =0-\lim_{\epsilon \rightarrow 1^-}\int_0^{\epsilon}{\frac{\H{1,m_2,\ldots,m_k}x+\H{m_2,1,m_3,\ldots,m_k}x+\ldots +\H{m_2,\ldots,m_k,1}x}{x-m_1}dx}\\
&& \ \ =-\lim_{\epsilon \rightarrow 1^-}{\left( \H{m_1,1,m_2,\ldots,m_k}{\epsilon}+\H{m_1,m_2,1,m_3,\ldots,m_k}{\epsilon}+\ldots +\H{m_1,m_2,\ldots,m_k,1}{\epsilon} \right)} \\
&& \ \ =-\H{m_1,1,m_2,\ldots,m_k}1-\H{m_1,m_2,1,m_3,\ldots,m_k}1-\ldots -\H{m_1,m_2,\ldots,m_k,1}1.
\end{eqnarray*}
Since $m_1\neq 1$ all these harmonic polylogarithms are finite. The same is true for $\H{\ve m}{1}$ and $\H{0,\ve b}1.$ Now we get
\begin{eqnarray*}
&&\M{\frac{\H{\ve m}x}{1-x}}{n}=\int_0^1{\frac{(x^n-1)\H{\ve m}x}{1-x}dx}\\
	&& \ \ =\int_0^1{\frac{x^n \H{\ve m}x-\H{\ve m}x-x^n\H{\ve m}1+x^n\H{\ve m}1-\H{\ve m}1+\H{\ve m}1}{1-x}dx}\\
	&& \ \ =\int_0^1{\frac{x^n(\H{\ve m}x-\H{\ve m}1)}{1-x}dx}-\int_0^1{\frac{\H{\ve m}x-\H{\ve m}1}{1-x}}dx+\H{\ve m}{1}\int_0^1{\frac{x^n-1}{1-x}dx}\\
	&& \ \ =\int_0^1{\frac{x^n(\H{\ve m}x-\H{\ve m}1)}{1-x}dx}-\int_0^1{\frac{\H{\ve m}x-\H{\ve m}1}{1-x}}dx-\S{1}n\H{\ve m}{1}
\end{eqnarray*}
and
\begin{eqnarray*}
\M{\frac{\H{\ve b}{1-x}}{1-x}}{n}&=&\int_0^1{\frac{(x^n-1)\H{\ve b}{1-x}}{1-x}dx}\\
    &=&\int_0^1{\frac{x^n \H{\ve b}{1-x}}{1-x}dx}-\int_0^1{\frac{\H{\ve b}{1-x}}{1-x}dx}\\
    &=&\int_0^1{\frac{x^n \H{\ve b}{1-x}}{1-x}dx}-\H{0,\ve b}1.
\end{eqnarray*}
\end{proof}

\begin{remark}
\label{HSExpandableIntegrals}
Combining Lemma \ref{HSanalytic1}, Lemma \ref{HSanalytic2} and \ref{HSmelexpconst} we are able to compute the asymptotic expansion of Mellin transforms of the form
$$\M{\frac{\H{m_1,m_2,\ldots,m_k}x}{1 \pm x}}{n} \ \textnormal{ and } \M{\frac{\H{b_1,b_2,\ldots,b_k}{1-x}}{1 \pm x}}{n}$$
where $m_i\in \{-1,0\}, b_i\in \{-1,0,1,2\}$ and $b_l\neq 0$ using the method presented in the beginning of this section, or using repeated integration by parts.
\end{remark}

\subsection{Computation of Asymptotic Expansions of Harmonic Sums}
\label{HScompasy}
The following observation is crucial.
\begin{lemma}
\begin{enumerate}
\item If a harmonic sum $\S{a_1,a_2,\ldots,a_k}n$ has no trailing ones, \ie $a_k\neq1$ then the most complicated harmonic polylogarithm in the inverse Mellin 
transform of $\S{a_1,a_2,\ldots,a_k}n$ has no trailing ones.
\item If a harmonic polylogarithm $\H{m_1,m_2,\ldots,m_k}x$ has no trailing ones, \ie $m_k\neq1$ then the most complicated harmonic sum in the Mellin 
transform of $\H{m_1,m_2,\ldots,m_k}x$ has no trailing ones.
\end{enumerate}
\label{HSnotrail1}
\end{lemma}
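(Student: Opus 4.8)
The plan is to prove both statements together by induction, exploiting the recursive structure of the Mellin transform (Lemmas~\ref{HSmelweighted}, \ref{HSweight1mel}, \ref{HSmelnotweighted}) and the order $\prec$ on harmonic sums from Definition~\ref{HSsord}. Throughout, ``most complicated'' always refers to a maximal element with respect to $\prec$, and by the results of \cite{Ablinger2009} recalled in Section~\ref{HSInvMellin} such a maximal element is in fact \emph{unique}, both in the Mellin transform of a harmonic polylogarithm and in the inverse Mellin transform of a harmonic sum; I will use this uniqueness freely. Since the inverse Mellin transform is (up to the constant term and a factor $1/(1\pm x)$) inverse to the Mellin transform, and since both operations send a most complicated object to a most complicated object, it suffices to prove statement~(2); statement~(1) then follows by applying (2) in the reverse direction: if $\S{a_1,\dots,a_k}n$ has $a_k\neq 1$ but its most complicated harmonic polylogarithm $h$ had a trailing one, then by (2) the most complicated harmonic sum in the Mellin transform of $h$ would have a trailing one, contradicting that this sum is $\S{a_1,\dots,a_k}n$ itself.

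For statement~(2) I would induct on the weight $w$ of $\H{m_1,\dots,m_w}x$. The base case $w=1$ is read off directly from Lemma~\ref{HSweight1mel}: for $m_1\in\{0,-1,1\}$ with $m_1\neq 1$ (the only interesting subcase, since $m_1=1$ means $h$ ends in a trailing one and there is nothing to prove), $\M{\H{0}{x}}n$ and $\M{\H{-1}{x}}n$ involve $\S{-1}{n+1}$ and constants, none of which has a trailing one after the trivial shift $n+1\to n$ in the index-counting sense. For the inductive step, write $h=\H{m_1,\ve m'}x$ with $\ve m'=(m_2,\dots,m_w)$ and $m_w\neq 1$; then $\ve m'$ also has no trailing one, so by induction the most complicated harmonic sum in $\M{\H{\ve m'}x}n$, call it $\S{\ve c}n$, has $c_{\text{last}}\neq 1$. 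Now apply Lemma~\ref{HSmelnotweighted}: in each of the three cases ($m_1=0,1,-1$) the Mellin transform $\M{\H{m_1,\ve m'}x}n$ is built from $\M{\H{\ve m'}x}n$ by (a) dividing by $n+1$ and/or (b) forming a partial sum $\sum_{i=0}^n (\pm1)^i\M{\H{\ve m'}x}n$. Operation (a) raises the weight of the leading index of the most complicated sum or introduces a factor that does not append a trailing one; operation (b) is exactly the operation that, on the level of harmonic sums, prepends a new leading index (of absolute value $1$, possibly signed) to $\S{\ve c}n$ — it changes the \emph{first} index, never the last. Hence the most complicated harmonic sum in $\M{\H{m_1,\ve m'}x}n$ still ends in $c_{\text{last}}\neq 1$, which closes the induction.

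The main obstacle I anticipate is making precise the claim that ``forming the partial sum $\sum_{i=0}^n$ only modifies the leading index of the most complicated sum, leaving the trailing index untouched,'' and that the division by $(n+1)$ together with the constant terms $\H{0,\ve m}1$, $\H{-1,\ve m}1$ contributed in Lemma~\ref{HSmelnotweighted} cannot conspire to produce a more complicated sum with a trailing one. This requires tracking, through the recursion of Lemma~\ref{HSmelnotweighted}, precisely \emph{which} harmonic sum is the unique $\prec$-maximal one — essentially reproving the relevant part of \cite[Algorithm 2]{Ablinger2009} at the level of ``the first index changes, the last index is inherited.'' Concretely one shows by the same induction that the most complicated sum in $\M{\H{m_1,\dots,m_w}x}n$ is obtained from the index word $(m_1,\dots,m_w)$ by the standard bijection between $\{-1,0,1\}$-words and harmonic-sum indices (the ``$0$'s get absorbed into the absolute value of the following nonzero letter'' encoding described on page~\pageref{HSalgbasnum1}), read with a shift; the trailing letter $m_w\neq 1$ of the word then corresponds exactly to a trailing index $\neq 1$ of the sum. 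Once this bookkeeping lemma is in place the theorem is immediate, and the constants, being of strictly lower weight (or of the form $\H{1,0,\dots,0}1$, which carries no trailing-one obstruction), never interfere with $\prec$-maximality.
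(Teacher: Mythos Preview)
Your proposal is correct and takes essentially the same approach as the paper: both reduce statement~(1) to statement~(2) via the bijective correspondence between most complicated objects, and both establish~(2) by tracking what the Mellin-transform recursion does to the last index. The paper is simply terser---it cites Algorithm~2 of \cite{Ablinger2009} directly and checks it case-by-case on $m_k\in\{0,-1\}$, whereas you unfold the same argument as an induction on weight via Lemmas~\ref{HSweight1mel} and~\ref{HSmelnotweighted}; your ``bookkeeping lemma'' at the end is precisely the content of that algorithm.
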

\begin{proof}
Due to \cite[Remark 3.5.18.]{Ablinger2009} 1. and 2. are equivalent. The proof of 2. follows by using Algorithm 2 of \cite{Ablinger2009} for each of the possible values of $m_k.$
\end{proof}

According to the previous lemma we know that if $\H{m_1,m_2,\ldots,m_{k}}x$ is the most complicated harmonic polylogarithm in the inverse Mellin 
transform of a harmonic sum without trailing ones then $m_k\neq 1.$ Due to Remark \ref{HSExpandableIntegrals} we can expand the resulting integrals (either we can handle it directly,
or we have to transform the argument of the harmonic polylogarithm).\\ 
Moreover, a harmonic sum with trailing ones might lead to a harmonic polylogarithm with trailing ones. But $\frac{\H{m_1,m_2,\ldots,m_{k-1},1}x}{1\pm x}$ is not analytic at $1$ and hence we cannot use the strategy mentioned 
in Section \ref{HShexp} to find an asymptotic representation of such integrals. Fortunately we can always extract trailing ones such that we end up in a univariate polynomial 
in $\S1n$ with coefficients in the harmonic sums without trailing ones (see Section \ref{HSdef}). Hence we only need to deal with powers of $\S1n$ and harmonic sums without trailing ones.

First we consider $\S{1}n.$ Since $\frac{1}{1-x}$ is not analytic at $1,$ we cannot use the strategy mentioned in Section \ref{HShexp} to find an asymptotic representation 
of $\int_0^1{\frac{x^n-1}{1\pm x}dx}=\S{1}n,$ however we have the following well known lemma.

\begin{lemma}
The asymptotic representation of $\S1n$ is given by
$$
\gamma+\log(n)+\frac{1}{2\,n}-\sum_{k=1}^{\infty}{\frac{B_{2k}}{2\,k\,n^{2k}}}
$$
where $\gamma$ is the Euler-Mascheroni constant and $B_{2k}$ are Bernoulli numbers.
\label{HSexpandS1}
\end{lemma}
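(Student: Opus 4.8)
The plan is to obtain the expansion from the Euler--Maclaurin summation formula applied to $f(x)=1/x$; the required version with an explicit remainder can be found in \cite{Nielsen1906,Landau} (see also \cite{Bluemlein2009}). Recall that for a function $f$ that is smooth on $[1,n]$ and any $m\geq 1$,
$$
\sum_{i=1}^n f(i)=\int_1^n f(x)\,dx+\frac{f(1)+f(n)}{2}+\sum_{k=1}^m\frac{B_{2k}}{(2k)!}\bigl(f^{(2k-1)}(n)-f^{(2k-1)}(1)\bigr)+R_m(n),
$$
where $R_m(n)$ can be written as an integral of $f^{(2m)}$ against a bounded periodic Bernoulli function. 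First I would specialize to $f(x)=1/x$: then $\int_1^n f(x)\,dx=\log(n)$, $\tfrac12(f(1)+f(n))=\tfrac12+\tfrac1{2n}$, and from $f^{(j)}(x)=(-1)^j j!\,x^{-j-1}$ we get $f^{(2k-1)}(n)-f^{(2k-1)}(1)=-(2k-1)!\,(n^{-2k}-1)$, so the Bernoulli sum collapses to $-\sum_{k=1}^m\frac{B_{2k}}{2k\,n^{2k}}+\sum_{k=1}^m\frac{B_{2k}}{2k}$.

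Collecting the pieces yields, for every fixed $m$,
$$
\S{1}n=\log(n)+\frac{1}{2n}-\sum_{k=1}^m\frac{B_{2k}}{2k\,n^{2k}}+\underbrace{\Bigl(\tfrac12+\sum_{k=1}^m\tfrac{B_{2k}}{2k}\Bigr)}_{=:c_m}+R_m(n).
$$
Since $f^{(2m)}(x)=(2m)!\,x^{-2m-1}$ is absolutely integrable on $[1,\infty)$ and the periodic Bernoulli function is bounded, $R_m(n)$ tends to a finite limit $R_m(\infty)$ as $n\to\infty$, and its tail satisfies $R_m(n)-R_m(\infty)=O(n^{-2m})$, in particular $R_m(n)-R_m(\infty)\in o(n^{-(2m-1)})$. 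The next step is to pin down the constant: letting $n\to\infty$ in the displayed identity makes every $n$-dependent term vanish, so $\lim_{n\to\infty}(\S1n-\log n)=c_m+R_m(\infty)$; but this limit is by definition the Euler--Mascheroni constant $\gamma$. Hence $c_m+R_m(\infty)=\gamma$ for each $m$, and therefore
$$
\S1n-\Bigl(\gamma+\log(n)+\frac{1}{2n}-\sum_{k=1}^m\frac{B_{2k}}{2k\,n^{2k}}\Bigr)=R_m(n)-R_m(\infty)=o\!\left(\frac1{n^{2m-1}}\right).
$$

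Given any $N\geq 0$, choosing $m$ with $2m-1\geq N$ shows that the remainder of $\S1n-\gamma-\log(n)$ after the $n^{-N}$ term lies in $o(n^{-N})$ (the coefficient of $n^{-(2k+1)}$ being $0$ for $k\geq1$), which is precisely the assertion of the lemma in the sense of the asymptotic series defined in Section~\ref{HSExpansion}. I would also record that an equivalent route is to use $\S1n=\gamma+\psi(n+1)$ and the classical asymptotic expansion of the digamma function $\psi$, re-expanding $\psi(n+1)$ in powers of $1/n$; this leads to the same coefficients but requires an extra re-expansion step. The one genuinely delicate point is the identification of the integration constant with $\gamma$: it relies on knowing in advance that $\lim_{n\to\infty}(\S1n-\log n)$ exists and on the uniform-in-$n$ control of $R_m(n)$ that justifies passing to the limit. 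Finally, it should be emphasized that the resulting series is only asymptotic — it diverges, since $|B_{2k}|$ grows super-exponentially — so the identity is to be read in the $o(\cdot)$ sense of the definition, not as a convergent expansion.
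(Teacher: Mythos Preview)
Your proof via the Euler--Maclaurin formula is correct and is the classical derivation of this result; the computation of the derivatives, the collapse of the Bernoulli sum, the remainder bound $R_m(n)-R_m(\infty)=O(n^{-2m})$, and the identification of the constant with $\gamma$ are all sound.

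The paper itself does not supply a proof: the lemma is introduced as ``well known'' and left without argument. It does, however, append a remark sketching an alternative route more in line with its own integral-representation machinery: since $\frac{d}{dn}\S1n=-\int_0^1\frac{x^n\H{0}{x}}{1-x}\,dx$ and this latter integral can be expanded asymptotically by the method of Section~\ref{HShexp} (the integrand being analytic at $x=1$), one integrates the resulting expansion term by term in $n$ and fixes the integration constant to be $\gamma$. Your Euler--Maclaurin argument is self-contained and makes the remainder control explicit; the paper's alternative is tailored to the Mellin-transform framework it has already built. Both routes require the same external input, namely knowing in advance that $\lim_{n\to\infty}(\S1n-\log n)=\gamma$.
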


\begin{remark}
To compute the asymptotic representation of $\S1n,$ we can as well proceed in the following way: we compute the expansion of $-\int_0^1{\frac{x^n\H{0}x}{1-x}}dx;$ integrate the result and 
add the constant $\gamma.$ This is possible since $\S1n=\int_0^1{\frac{x^n-1}{1-x}}dx$ and $\frac{d}{dn}\int_0^1{\frac{x^n-1}{1-x}}dx=-\int_0^1{\frac{x^n\H{0}x}{1-x}}dx.$
\end{remark}

Now we are ready to state an algorithm to compute asymptotic expansions of harmonic sums.
If we want to find the asymptotic expansions of a harmonic sum $\S{a_1,a_2,\ldots,a_k}n$ we can proceed as follows:

\begin{itemize}
	\item If $\S{a_1,a_2,\ldots,a_k}n$ has trailing ones, \ie $a_k=1,$ we first extract them such that we end up in a univariate 
	polynomial in $\S1n$ with coefficients in the harmonic sums without trailing ones; apply the 
	following items to each of the harmonic sums without trailing ones.
	\item Suppose now $\S{a_1,a_2,\ldots,a_k}n$ has no trailing ones, \ie $a_k\neq 1;$ let $\frac{\H{m_1,m_2,\ldots,m_l}x}{1+sx}$ be the most complicated weighted harmonic 
	 polylogarithm in the inverse Mellin transform of $\S{a_1,a_2,\ldots,a_k}n;$ express $\S{a_1,a_2,\ldots,a_k}n$ as
	\begin{equation}\label{HSasyalg1}
		\S{a_1,a_2,\ldots,a_k}n=\M{\frac{\H{m_1,m_2,\ldots,m_l}x}{1+sx}}{n}+T.
	\end{equation}
	Note that $T$ is an expression in harmonic sums (which are less complicated than $\S{a_1,a_2,\ldots,a_k}n$) and constants.
	\item We proceed by expanding $\M{\frac{\H{m_1,m_2,\ldots,m_l}x}{1+sx}}{n}$: 
	\begin{description}
		\item[all $m_i\neq 1$:] expand $\M{\frac{\H{m_1,m_2,\ldots,m_l}x}{1+sx}}{n}$ directly see Remark \ref{HSExpandableIntegrals}

		\item[not all $m_i\neq 1$:]
		\begin{itemize}\item[]
			\item transform $x\rightarrow 1-x$ in $\H{\ve m}x$; expand all products; now we can rewrite
				\begin{equation}\label{HSasyalg2}
					\M{\frac{\H{\ve m}x}{1+sx}}{n}=\sum_{i=1}^pc_i\M{\frac{\H{\ve b_i}{1-x}}{1+sx}}{n}+c \ \textnormal{  with } c,c_i\in\R
				\end{equation}
			 \item for each Mellin transform $\M{\frac{\H{b_1,\ldots,b_j}{1-x}}{1+sx}}{n}$ do
				\begin{description}\item[]
					\item[$b_j\neq 0:$] expand $\int_0^1{\frac{x^n\H{\ve b}{1-x}}{1+sx}}dx$ see Remark \ref{HSExpandableIntegrals}
					\item[$b_j=0:$] transform back $1-x\rightarrow x$ in $\H{\ve b}{1-x}$; expand all products by shuffling; write
						\begin{equation}\label{HSasyalg3}
							\M{\frac{\H{\ve b}{1-x}}{1+sx}}{n}=\sum_{i=1}^pd_i\M{\frac{\H{\ve g_i}{x}}{1+sx}}{n}+d
						\end{equation}
						with $d,d_i\in\R$ and perform the Mellin transforms $\M{\frac{\H{\ve g_i}{x}}{1-x}}{n}.$
				\end{description}
		\end{itemize}
	\end{description}
	\item Replace $\M{\frac{\H{m_1,m_2,\ldots,m_l}x}{1+sx}}{n}$ in equation (\ref{HSasyalg1}) by the result of this process.
	\item Expand the powers of $\S{1}n$ using Lemma \ref{HSexpandS1} and perform the Cauchy product to the expansions. 
	\item For all harmonic sums that remain in equation (\ref{HSasyalg1}) apply the above points; since these harmonic sums are less complicated this process will terminate.
\end{itemize}

Some remarks are in place: Since $a_k\neq 1$ in equation (\ref{HSasyalg1}), we know due to Lemma \ref{HSnotrail1} that $m_l\neq 0$ in equation (\ref{HSasyalg1}).
If not all $m_i\neq 1$ in equation  (\ref{HSasyalg1}), we have to transform $x\rightarrow 1-x$ in $\H{m_1,m_2,\ldots,m_l}x.$ From Remark \ref{HStraforem} we know that the s
harmonic polylogarithm at argument $x$ with weight $l$ which will pop up will not have trailing zeroes since $m_l\neq 1$. Therefore the harmonic sums which will appear
in equation (\ref{HSasyalg3}) are less complicated (\ie smaller in the sense of Definition \ref{HSsord}) than $\S{a_1,a_2,\ldots,a_k}n$ of (\ref{HSasyalg1}) and hence this algorithm will eventually terminate.

\begin{example}
 Let us now consider the asymptotic expansion of $\S{2,1}n.$ First we remove the trailing ones:
$$
\S{2,1}n=\S{1}n\S{2}n+\S{3}n-\S{1,2}n.
$$
We can expand $\S{1}n$ using Lemma \ref{HSexpandS1}. For $\S{2}n$ we get:
\begin{eqnarray*}
\S{2}n=\M{\frac{\H{0}x}{1-x}}{n}=\int_0^1\frac{x^n(\H{0}x-\H{0}1)}{1-x}dx-\int_0^1\frac{\H{0}x-\H{0}1}{1-x}dx.
\end{eqnarray*}
The second integral is a constant. Expanding the first integral leads to 
$$
\S{2}n \sim -\H{1,0}1-\frac{1}{n}+\frac{1}{2 n^2}-\frac{1}{6 n^3}+\frac{1}{30 n^5}.
$$
Analogously for $\S{3}n$ we get:
\begin{eqnarray*}
\S{3}n&=&-\M{\frac{\H{0,0}x}{1-x}}{n}\\
      &=&-\int_0^1\frac{x^n(\H{0,0}x-\H{0,0}1)}{1-x}dx+\int_0^1\frac{\H{0,0}x-\H{0,0}1}{1-x}dx,
\end{eqnarray*}
and hence:
$$
\S{3}n \sim \H{1,0,0}1-\frac{1}{2 n}+\frac{1}{2 n^3}-\frac{1}{4 n^4}.
$$
Finally for $\S{1,2}n$ we have
\begin{eqnarray*}
\S{1,2}n=\M{\frac{\H{1,0}x}{1-x}}{n}&=&\int_0^1\frac{x^n(\H{1,0}x-\H{1,0}1)}{1-x}dx\\
	  &&-\int_0^1\frac{\H{1,0}x-\H{1,0}1}{1-x}dx-\S{1}n\H{1,0}1,
\end{eqnarray*}
and consequently:
$$
\S{1,2}n \sim -\H{1,0,0}1+\frac{1}{n}-\frac{3}{4 n^2}+\frac{17}{36 n^3}-\frac{5}{24 n^4}+\frac{7}{450 n^5}-\H{1,0}1\S{1}n.
$$
Combining these results we find the asymptotic expansion of $\S{2,1}n$:
\begin{eqnarray*}
\S{2,1}n &\sim& 2\,\H{1,0,0}1-\frac{1}{n}-\frac{1}{4 n^2}+\frac{13}{36 n^3}-\frac{1}{6 n^4}-\frac{1}{100 n^5}\\
	&&+(\log(n)+\gamma)\left( -\frac{1}{n}+\frac{1}{2 n^2}-\frac{1}{6 n^3}+\frac{1}{30 n^5}\right).
\end{eqnarray*}
\end{example}

\cleardoublepage  

\chapter{S-Sums}
\label{SSchapter}
The main goal of this chapter is to extend the ideas form Chapter \ref{HSchapter} to a generalization of harmonic sums (see Definition \ref{HSdefHsum}), \ie we generalize 
the harmonic sums to the S-sums of \cite{Moch2002}.
\section{Definition and Structure of S-Sums and Multiple Polylogartihms}
\label{SSdef}
\begin{definition}[S-Sums]
For $a_i\in \N$ and $x_i\in \R^*$ we define
\begin{equation}
	\S{a_1,\ldots ,a_k}{x_1,\ldots ,x_k;n}= \sum_{n\geq i_1 \geq i_2 \geq \cdots \geq i_k \geq 1} \frac{x_1^{i_1}}{i_1^{a_1}}\cdots
	\frac{x_k^{i_k}}{i_k^{a_k}}.
\end{equation}
$k$ is called the depth and $w=\sum_{i=0}^ka_i$ is called the weight of the S-sum $\S{a_1,\ldots ,a_k}{x_1,\ldots ,x_k;n}$.
\end{definition}
As for harmonic sums a product of two S-sums with the same upper summation limit can be written in terms of single S-sums (see \cite{Moch2002}): for $n\in \N,$
\begin{eqnarray}
	&&\S{a_1,\ldots ,a_k}{x_1,\ldots ,x_k;n}\S{b_1,\ldots ,b_l}{y_1,\ldots ,y_l;n}=\nonumber\\
	&&\hspace{2cm}\sum_{i=1}^n \frac{x_1^i}{i^{a_1}}\S{a_2,\ldots ,a_k}{x_2,\ldots ,x_k;i}\S{b_1,\ldots ,b_l}{y_1,\ldots ,y_l,i} \nonumber\\
	&&\hspace{2cm}+\sum_{i=1}^n \frac{y_1^i}{i^{b_1}}\S{a_1,\ldots ,a_k}{x_1,\ldots ,x_k,i}\S{b_2,\ldots ,b_l}{y_2,\ldots ,y_l;i} \nonumber\\
	&&\hspace{2cm}-\sum_{i=1}^n \frac{(x_1\cdot y_1)^i}{i^{a_1+b_1}}\S{a_2,\ldots ,a_k}{x_2,\ldots ,x_k,i}\S{b_2,\ldots ,b_l}{y_2,\ldots ,y_l;i}.
	\label{SSsumproduct}
\end{eqnarray}
Recursive application of (\ref{SSsumproduct}) leads to a linear combination of single S-sums. Together with this product S-sums form a quasi shuffle algebra.\\
The quasi shuffle algebra property will allow us again to look for algebraic relations, however we are as well interested in structural relations and asymptotic 
expansions of S-sums. Therefore we will again use an integral representation of these sums, which leads to an extension of the harmonic polylogarithms. 
In Section \ref{HSExtended} we already extended harmonic polylogarithms which were originally defined on the alphabet $\{-1,0,1\}$ (see Definition \ref{HShlogdef}) to the 
alphabet $\{-1,0,1,2\}.$ In this section we will extend harmonic polylogarithms even further (for an equivalent defintion see \cite{GON1}).
\begin{definition}[Multiple Polylogarithms]
Let $a\in\R$ and
\begin{eqnarray*}
q=\left\{ 
		\begin{array}{ll}
				a, &  \textnormal{if }a>0  \\
				\infty, & \textnormal{otherwise}. 
		\end{array} \right.
\end{eqnarray*}
We define the auxiliary function as follows:
\begin{eqnarray}
&&f_a:(0,q)\mapsto \R\nonumber\\
&&f_a(x)=\left\{ 
		\begin{array}{ll}
				\frac{1}{x}, &  \textnormal{if }a=0  \\
				\frac{1}{\abs{a}-\sign{a}\,x}, & \textnormal{otherwise}.  
		\end{array} 
		\right.  \nonumber
\end{eqnarray}
Now we can define the \textit{multiple polylogarithms} \textnormal{H}.
Let $m_i \in \R$ and let $q=\displaystyle{\min_{m_i>0}{m_i}};$ we define for $x\in (0,q):$
\begin{eqnarray}
\H{}{x}&=&1,\nonumber\\
\H{m_1,m_{2},\ldots,m_w}{x} &=&\left\{ 
		  	\begin{array}{ll}
						\frac{1}{w!}(\log{x})^w,&  \textnormal{if }(m_1,\ldots,m_k)=(0,\ldots,0)\\
						\int_0^x{f_{m_1}(y) \H{m_{2},\ldots,m_k}{y}dy},& \textnormal{otherwise}. 
					\end{array} \right.  \nonumber
\end{eqnarray}
The length $w$ of the vector $\ve m$ is called the weight of the shifted multiple polylogarithm $\H{\ve m}x.$
\label{SShlogdef}
\end{definition}

\begin{example}
\begin{eqnarray*}
\H{1}x&=&\int_{0}^x\frac{1}{1-x_1}dx_1=-\log(1-x)\\ 
\H{-2}x&=&\int_{0}^x\frac{1}{2+x_1}dx_1=\log(2+x)-\log(2)\\ 
\H{2,0,-\frac{1}{2}}x&=&\int_{0}^x\frac{1}{2-x_1}\int_{0}^{x_1}\frac{1}{x_2}\int_{0}^{x_2}\frac{1}{\frac{1}{2}+x_3}dx_3dx_2dx_1.
\end{eqnarray*}
\end{example}

A multiple polylogarithm $\H{\ve m}x=\H{m_1,\ldots,m_w}x$ with $q:=\min_{m_i>0}{m_i}$ is an analytic functions for $x\in (0,q).$ For the limits $x\rightarrow 0$ and  $x\rightarrow q$ 
we have:
\begin{itemize}
 \item It follows from the definition that if $\ve m\neq \ve 0_w$, $\H{\ve m}0~=~0.$
 \item If $m_1\neq q$ or if $m_1=1$ and $m_v=0$ for all $v$ with $1<v\leq w$ then $\H{\ve m}q$ is finite.
\end{itemize}
We define $\H{\ve m}0:=\lim_{x\rightarrow 0^+} \H{\ve m}x$ and $\H{\ve m}1:=\lim_{x\rightarrow 1^-} \H{\ve m}x$ if the limits exist.\\ 
Note that for multiple polylogarithms of the form $$\H{m_1,\ldots,m_k,1,0\ldots,0}x$$ with $c:=\min_{m_i>0}{m_i}$ we can extend the definition range from 
$x\in (0,1)$ to $x \in (0,c).$ These multiple polylogarithms are analytic functions for $x \in (0,c).$ The limit $x\rightarrow q$ exists if and only if $m_1\neq c.$

\begin{remark}
In the following we will always assume that the arguments of the considered multiple polylogarithms are chosen in a way such that multiple polylogarithms are defined and finite.
\end{remark}

\begin{remark}
For the derivatives we have for all $x\in (0,\min_{m_i>0}{m_i})$ that $$ \frac{d}{d x} \H{\ve m}{x}=f_{m_1}(x)\H{m_{2},m_{3},\ldots,m_w}{x}. $$ 
\end{remark}

\begin{remark}
Again the product of two multiple polylogarithms of the same argument can be expressed using the formula (compare (\ref{HShpro}))
\begin{equation}
\label{SShpro}
\H{\ve p}x\H{\ve q}x=\sum_{\ve r= \ve p \shuffle \ve q}\H{\ve r}x
\end{equation}
in which $\ve p \shuffle \ve q$ represent all merges of $\ve p$ and $\ve q$ in which the relative orders of the elements of $\ve p$ and $\ve q$ are preserved.
\end{remark}

Note that we can use the shuffle algebra property to extract trailing zeroes using the ideas of Section \ref{HShpldefsec}. There is one situation we have to take care of: given a 
multiple polylogarithm in the form $$\H{m_1,\ldots,m_k,1,0\ldots,0}x$$ with $1<q:=\min_{m_i>0}{m_i}$ then we cannot use the strategy to extract trailing zeroes if $x\geq1$ since this would lead
to infinities. The shuffle algebra would suggest to rewrite $\H{2,1,0}{x}$ as
$$\H{0}x\H{2,1}{x}-\H{0,2,1}{x}-\H{2,0,1}{x},$$ 
however for $x>1$ for instance $\H{2,0,1}{x}$ is not defined.
Thus, if we consider the extraction of trailing zeroes in the following then trailing zeroes of multiple polylogarithm in the form 
$\H{m_1,\ldots,m_k,1,0\ldots,0}x$ will not be extracted and these multiple polylogarithms will remain untouched if $x\geq1.$

Along with the integral representation multiple polylogarithms at special constants and conected to them S-sums at infinity will pop up. Hence we will take a closer look at relations 
between these quantities. This will lead to power series expansions of multiple polylogarithms and identities between multiple polylogarithms of related arguments.

\section{Identities between Multiple Polylogarithms of Related Arguments}
\label{SSRelatedArguments}
In the following we want to look at several special transforms of the argument of multiple polylogarithms which will be useful in later considerations.

\subsection{\texorpdfstring{$x+b\rightarrow x$}{x+b->x}}
\label{SStransformplusxplusb}
\begin{lemma}
 Let $m_i \in \R\setminus (0,1)$ and $x>0$ such that the multiple polylogarithm $\H{m_1,\ldots, m_l}{x+1}$ is defined. If $(m_1,\ldots,m_l)\neq (1,0,\ldots,0),$
\begin{eqnarray*}
\H{m_1,\ldots, m_l}{x+1}&=&\\
&&\hspace{-2cm}\H{m_1,\ldots, m_l}{1}+\H{m_1-1}x\H{m_2,\ldots, m_l}{1}+\H{m_1-1,m_2-1}x\H{m_3,\ldots, m_l}{1}\\
&&\hspace{-2cm}+\cdots+\H{m_1-1,\ldots, m_{l-1}-1}{x}\H{m_l}1+\H{m_1-1,\ldots, m_l-1}{x}.
\end{eqnarray*}
If $(m_1,\ldots,m_l)=(1,0,\ldots,0),$
\begin{eqnarray*}
\H{1,0,\ldots, 0}{x+1}&=&\H{1,0,\ldots, 0}{1}-\H{0,-1,\ldots, -1}{x}.
\end{eqnarray*}
\label{SSeinplustrafo}
\end{lemma}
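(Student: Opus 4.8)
The plan is to induct on the weight $l$, and at each weight to treat the exceptional index vector $(1,0,\ldots,0)$ and the remaining (``generic'') vectors separately. I would settle the exceptional case first, for every $l$, since a generic vector of the shape $(m_1,1,0,\ldots,0)$ has weight‑reduced tail $(1,0,\ldots,0)$, so the exceptional identity must be in hand before the induction for the generic case can run. The engine of both cases is the elementary identity $f_m(t+1)=f_{m-1}(t)$, valid for $m\in\R\setminus(0,1]$: on that range $\sign{m-1}=\sign m$ and $\abs{m-1}=\abs m-\sign m$, so $f_m(t+1)=\tfrac{1}{\abs m-\sign m\,(t+1)}=\tfrac{1}{(\abs m-\sign m)-\sign m\,t}=f_{m-1}(t)$ (for $m=0$ this reads $\tfrac1{t+1}=f_{-1}(t)$), and the sign flip of this identity at $m=1$ is exactly why $(1,0,\ldots,0)$ must be excepted. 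I would also record the closed forms $\H{\ve 0_k}{x}=\tfrac1{k!}(\log x)^k$ (from Definition~\ref{SShlogdef}) and, for a string of $k$ copies of $-1$, $\H{-1,\ldots,-1}{x}=\tfrac1{k!}(\log(1+x))^k$ (an immediate induction from $\H{-1}{x}=\log(1+x)$).

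For the exceptional case, with $\ve m=(1,0,\ldots,0)$ of weight $l$, I would differentiate in $x$. Using the differentiation rule for multiple polylogarithms, $\tfrac{d}{dx}\H{1,0,\ldots,0}{x+1}=f_1(x+1)\,\H{\ve 0_{l-1}}{x+1}=-\tfrac1x\,\tfrac{(\log(x+1))^{l-1}}{(l-1)!}$, while $\tfrac{d}{dx}\bigl(-\H{0,-1,\ldots,-1}{x}\bigr)=-f_0(x)\,\H{-1,\ldots,-1}{x}=-\tfrac1x\,\tfrac{(\log(1+x))^{l-1}}{(l-1)!}$ by the closed form above; the derivatives coincide. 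Since $\H{0,-1,\ldots,-1}{0}=0$ and $\H{1,0,\ldots,0}{1}$ is finite (a leading $1$ followed only by zeros), the two sides are continuous functions of $x$ that also agree as $x\to0^+$, hence they agree throughout the domain.

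For the generic case I induct on $l$. Base case $l=1$ (so $m_1\neq1$): split $\H{m_1}{x+1}=\int_0^{x+1}f_{m_1}(y)\,dy=\H{m_1}{1}+\int_1^{x+1}f_{m_1}(y)\,dy$, substitute $y=t+1$ and apply the identity above to obtain $\H{m_1}{1}+\int_0^x f_{m_1-1}(t)\,dt=\H{m_1}{1}+\H{m_1-1}{x}$. Inductive step: the same split‑and‑substitute gives $\H{m_1,m_2,\ldots,m_l}{x+1}=\H{m_1,\ldots,m_l}{1}+\int_0^x f_{m_1-1}(t)\,\H{m_2,\ldots,m_l}{t+1}\,dt$. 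I would then feed $\H{m_2,\ldots,m_l}{t+1}$ the appropriate lower‑weight identity — the generic one, or the exceptional one when $(m_2,\ldots,m_l)=(1,0,\ldots,0)$ — substitute the resulting finite sum under the integral, and integrate term by term: each summand is a constant $\H{m_{j+1},\ldots,m_l}{1}$ times the function $\H{m_2-1,\ldots,m_j-1}{t}$, and $\int_0^x f_{m_1-1}(t)\,\H{m_2-1,\ldots,m_j-1}{t}\,dt=\H{m_1-1,m_2-1,\ldots,m_j-1}{x}$ straight from Definition~\ref{SShlogdef}. Collecting the terms reproduces the telescoping right‑hand side of the lemma. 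The standing hypothesis that $\H{m_1,\ldots,m_l}{x+1}$ is defined guarantees that no pole of any $f_{m_i}$ is crossed, so the shifted polylogarithms $\H{m_1-1,\ldots,m_j-1}{x}$ and the constants $\H{m_{j+1},\ldots,m_l}{1}$ are all finite and each integral converges.

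The main obstacle is not conceptual but lies in the bookkeeping of signs and domains. One must apply $f_m(t+1)=f_{m-1}(t)$ only on its valid range $m\notin(0,1]$, and must verify that the index shift $m_i\mapsto m_i-1$ never pushes an index into the forbidden window $(0,1]$ or out of the range where the relevant polylogarithm and its value at $1$ remain finite. The interface between the two cases — a generic vector $(m_1,1,0,\ldots,0)$ unwinds to the exceptional tail $(1,0,\ldots,0)$ and vice versa — is where these signs need to be tracked most carefully, and it is the reason the exceptional identity must be established first at all weights.
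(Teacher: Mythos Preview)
For the generic case your argument is essentially the paper's: both induct on $l$, split $\int_0^{x+1}=\int_0^1+\int_1^{x+1}$, substitute $y=t+1$, apply the induction hypothesis to $\H{m_2,\ldots,m_l}{t+1}$, and integrate term by term. Your single identity $f_m(t+1)=f_{m-1}(t)$ for $m\in\R\setminus(0,1]$ neatly unifies what the paper writes out as three separate base cases ($m_1>1$, $m_1<0$, $m_1=0$) and two inductive subcases ($m_1\neq0$, $m_1=0$).

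The exceptional case $(1,0,\ldots,0)$ is where you genuinely diverge: the paper's written proof simply does not treat it, whereas you give a direct argument by matching derivatives---using the closed forms $\H{\ve 0_k}{y}=\tfrac1{k!}\log^k y$ and $\H{-1,\ldots,-1}{y}=\tfrac1{k!}\log^k(1+y)$---and fixing the constant at $x\to0^+$. This is cleaner than re-running the induction with the sign flip $f_1(t+1)=-f_0(t)$, and it delivers the exceptional identity at every weight before the generic induction needs it.

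One caution at the interface you flag. If you actually feed the exceptional formula into the inductive step for a vector $(m_1,1,0,\ldots,0)$ with $m_1\neq1$ and carry the minus sign through, the last term emerges as $-\H{m_1-1,0,-1,\ldots,-1}{x}$, not $+$ (the intermediate terms vanish because $\H{\ve 0_k}{1}=0$). So ``collecting the terms reproduces the right-hand side'' is not literally true there; the paper resolves this case not in the present lemma but in the follow-up Lemma~\ref{SStransformplusa3}, which confirms that your instinct about these signs needing separate bookkeeping is exactly right.
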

\begin{proof}
 We proceed by induction on $l.$ For $l=1$  and $m_1>1$ we have:
\begin{eqnarray*}
\H{m_1}{x+1}&=&\int_0^{x+1}\frac{1}{m_1-u}du=\H{m_1}1+\int_1^{x+1}\frac{1}{m_1-u}du\\
&=&\H{m_1}1+\int_0^{x}\frac{1}{m_1-(u+1)}du=\H{m_1}1+\H{m_1-1}x.
\end{eqnarray*}
For $l=1$  and $m_1<0$ we have:
\begin{eqnarray*}
\H{m_1}{x+1}&=&\int_0^{x+1}\frac{1}{\abs{m_1}+u}du=\H{m_1}1+\int_1^{x+1}\frac{1}{\abs{m_1}+u}du\\
&=&\H{m_1}1+\int_0^{x}\frac{1}{\abs{m_1}+(u+1)}du=\H{m_1}1+\H{m_1-1}x.
\end{eqnarray*}
For $l=1$  and $m_1 = 0$ we have:
\begin{eqnarray*}
\H{0}{x+1}&=&\H{0}1+\int_1^{x+1}\frac{1}{u}du=\int_0^{x}\frac{1}{1+u}du=\H{0}1+\H{-1}{x}.
\end{eqnarray*}
Suppose the theorem holds for $l-1.$ If $m_1\neq 0$ we get
\begin{eqnarray*}
\H{m_1,\ldots, m_l}{x+1}&=&\H{m_1,\ldots, m_l}{1}+\int_1^{x+1}{\frac{\H{m_2,\ldots,m_l}{u}}{\abs{m_1}-\sign{m_1}u}}du\\
&=&\H{m_1,\ldots, m_l}{1}+\int_0^{x}{\frac{\H{m_2,\ldots,m_l}{u+1}}{\abs{m_1}-\sign{m_1}{(u+1)}}}du\\
&=&\H{m_1,\ldots, m_l}{1}+\int_0^{x}\frac{1}{\abs{m_1}-\sign{m_1}{(u+1)}}\biggl(\H{m_2,\ldots, m_l}{1}\biggr.\\
&&+\H{m_2-1}x\H{m_3,\ldots, m_l}{1}+\H{m_2-1,m_3-1}x\H{m_4,\ldots, m_l}{1}+\cdots\\
&&\biggl.+\H{m_2-1,\ldots, m_{l-1}-1}{x}\H{m_l}1+\H{m_2-1,\ldots, m_l-1}{x}\biggr)du\\
&=&\H{m_1,\ldots, m_l}{1}+\H{m_1-1}x\H{m_2,\ldots, m_l}{1}\\
&&+\H{m_1-1,m_2-1}x\H{m_3,\ldots, m_l}{1}+\cdots+\\
&&+\H{m_1-1,\ldots, m_{l-1}-1}{x}\H{m_l}1+\H{m_1-1,\ldots, m_l-1}{x}.
\end{eqnarray*}
If $m_1= 0$ we get
\begin{eqnarray*}
\H{0,m_2,\ldots, m_l}{x+1}&=&\H{0,m_2,\ldots, m_l}{1}+\int_1^{x+1}{\frac{\H{m_2,\ldots,m_l}{u}}{u}}du\\
&=&\H{0,m_2,\ldots, m_l}{1}+\int_0^{x}{\frac{\H{m_2,\ldots,m_l}{u+1}}{(1+u)}}du\\
&=&\H{0,m_2,\ldots, m_l}{1}+\int_0^{x}\frac{1}{1+u}\biggl(\H{m_2,\ldots, m_l}{1}+\biggr.\\
&&+\H{m_2-1}x\H{m_3,\ldots, m_l}{1}+\H{m_2-1,m_3-1}x\H{m_4,\ldots, m_l}{1}+\\
&&\biggl.+\cdots+\H{m_2-1,\ldots, m_{l-1}-1}{x}\H{m_l}1+\H{m_2-1,\ldots, m_l-1}{x}\biggr)du\\
&=&\H{0,m_2,\ldots, m_l}{1}+\H{-1}x\H{m_2,\ldots, m_l}{1}\\
&&+\H{-1,m_2-1}x\H{m_3,\ldots, m_l}{1}+\cdots+\\
&&+\H{-1,\ldots, m_{l-1}-1}{x}\H{m_l}1+\H{-1,\ldots, m_l-1}{x}.
\end{eqnarray*}
\end{proof}

The proofs of the following three lemmas are similar to the proof of the previous one, hence we will omit them here.
\begin{lemma}
Let $b>0$, $m_1\in \R \setminus (0,b]$, $m_i \in \R \setminus (0,b)$ for $i\in \{2,3,\ldots l\}$ such that $(m_j,\ldots,m_l)\neq(1,0,\ldots,0)$ for all $j\in \{1,\ldots l\}.$ 
For $x>0$ we have
\begin{eqnarray*}
\H{m_1,\ldots, m_l}{x+b}&=&\H{m_1,m_2,\ldots, m_l}{b}+\H{m_1-b}x\H{m_2,\ldots, m_l}{b}\\
&&+\H{m_1-b,m_2-b}x\H{m_3,\ldots, m_l}{b}+\cdots+\\
&&+\H{m_1-b,\ldots, m_{l-1}-b}{x}\H{m_l}b+\H{m_1-b,\ldots, m_l-b}{x}.
\end{eqnarray*}
\label{SStransformplusa1}
\end{lemma}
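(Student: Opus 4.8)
The plan is to mimic the proof of Lemma~\ref{SSeinplustrafo} almost line by line, with the shift $1$ replaced by $b$; the only genuinely new observation is the elementary identity $f_{m}(u+b)=f_{m-b}(u)$, valid for $m\notin(0,b]$, which I would check in three cases: for $m>b$ one has $\frac{1}{m-(u+b)}=\frac{1}{(m-b)-u}$ with $m-b>0$; for $m=0$ one has $\frac{1}{u+b}=\frac{1}{b+u}$ with $-b<0$; and for $m<0$ one has $\frac{1}{\abs{m}+(u+b)}=\frac{1}{(\abs{m}+b)+u}=\frac{1}{\abs{m-b}+u}$, using $m-b<0$. First I would record this, and note that it is exactly the substitution $u\mapsto u+b$ that drives everything.

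Then I would induct on the depth $l$. For $l=1$ I split off the value at $b$, writing $\H{m_1}{x+b}=\H{m_1}{b}+\int_b^{x+b}f_{m_1}(u)\,du=\H{m_1}{b}+\int_0^x f_{m_1-b}(u)\,du=\H{m_1}{b}+\H{m_1-b}{x}$, where the middle equality is the substitution and the last is the definition of the weight-one polylogarithm; the sub-cases $m_1>b$, $m_1=0$, $m_1<0$ are uniform under this argument. For the inductive step, assuming the statement in depth $l-1$, I would split $\H{m_1,\ldots,m_l}{x+b}=\H{m_1,\ldots,m_l}{b}+\int_b^{x+b}f_{m_1}(u)\,\H{m_2,\ldots,m_l}{u}\,du$, substitute $u\mapsto u+b$ to obtain $\H{m_1,\ldots,m_l}{b}+\int_0^x f_{m_1-b}(u)\,\H{m_2,\ldots,m_l}{u+b}\,du$, apply the induction hypothesis to expand $\H{m_2,\ldots,m_l}{u+b}$ into the sum appearing in the lemma, and integrate term by term. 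Each resulting term is a constant $\H{m_{j+1},\ldots,m_l}{b}$ times $\int_0^x f_{m_1-b}(u)\,\H{m_2-b,\ldots,m_j-b}{u}\,du=\H{m_1-b,m_2-b,\ldots,m_j-b}{x}$, by the definition of the multiple polylogarithms, and collecting these reproduces precisely the asserted weight-$l$ identity (the case $m_1=0$ being handled the same way, with $f_{m_1-b}=f_{-b}$).

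The part that needs care — and which I expect to be the main, if routine, obstacle — is the index bookkeeping: at every stage one must know that the multiple polylogarithms that appear, both the shifted ones $\H{m_1-b,\ldots,m_j-b}{x}$ at argument $x$ and the ``constant'' tails $\H{m_{j+1},\ldots,m_l}{b}$ at argument $b$, are defined and finite. For the shifted indices this works because $m_i\notin(0,b)$ (resp.\ $m_1\notin(0,b]$) forces $m_i-b$ never to land in a range producing an undefined or divergent polylogarithm; for the tails one uses exactly the hypothesis that no suffix $(m_j,\ldots,m_l)$ equals $(1,0,\ldots,0)$, together with the standing convention (stated just after Definition~\ref{SShlogdef}) that arguments are always chosen so that the polylogarithms are defined and finite. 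One should also verify that the tail $(m_2,\ldots,m_l)$ still satisfies the hypotheses of the depth-$(l-1)$ lemma, so that the induction hypothesis genuinely applies. Apart from this verification the argument is a transcription of the proof of Lemma~\ref{SSeinplustrafo}, so I would omit the remaining routine computations, exactly as the paper does for the following lemmas.
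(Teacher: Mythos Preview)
Your proposal is correct and is exactly what the paper intends: the paper omits the proof of this lemma, stating only that it is similar to that of Lemma~\ref{SSeinplustrafo}, and your argument is precisely the adaptation of that proof with the shift $1$ replaced by $b$, driven by the identity $f_m(u+b)=f_{m-b}(u)$. Your care about the index bookkeeping (verifying that the shifted indices $m_i-b$ and the tail values $\H{m_{j+1},\ldots,m_l}{b}$ stay in the admissible range, and that the inductive hypothesis applies to $(m_2,\ldots,m_l)$) is appropriate and matches the level of detail the paper expects.
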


\begin{lemma}
Let $(m_1,\ldots,m_l)=(1,0,\ldots,0).$ For $1>b>0$ and $1-b>x>0$ we have
\begin{eqnarray*}
\H{m_1,\ldots, m_l}{x+b}&=&\\
&&\hspace{-2cm}\H{m_1,m_2,\ldots, m_l}{b}+\H{m_1-b}x\H{m_2,\ldots, m_l}{b}+\H{m_1-b,m_2-b}x\H{m_3,\ldots, m_l}{b}\\
&&\hspace{-2cm}+\cdots+\H{m_1-b,\ldots, m_{l-1}-b}{x}\H{m_l}b+\H{m_1-b,\ldots, m_l-b}{x}.
\end{eqnarray*}
For $x>0$ and $b\geq 1$ we have
\begin{eqnarray*}
\H{m_1,\ldots, m_l}{x+b}&=&\\
&&\hspace{-2cm}\H{m_1,m_2,\ldots, m_l}{b}-\H{m_1-b}x\H{m_2,\ldots, m_l}{b}-\H{m_1-b,m_2-b}x\H{m_3,\ldots, m_l}{b}\\
&&\hspace{-2cm}-\cdots-\H{m_1-b,\ldots, m_{l-1}-b}{x}\H{m_l}b-\H{m_1-b,\ldots, m_l-b}{x}.
\end{eqnarray*}
\label{SStransformplusa2}
\end{lemma}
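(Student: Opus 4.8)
The plan is to proceed exactly as in the proof of Lemma~\ref{SSeinplustrafo}: peel the leading index off the defining integral, split the integration range at $b$, and substitute $y=u+b$. The point that makes the present case lighter is that, once the leading $1$ has been removed, the remaining index string $\ve 0_{l-1}$ is a pure power of $\log$, so the lemma never has to be invoked recursively; the transformation of that tail, $\H{\ve 0_{l-1}}{u+b}$, is already covered by Lemma~\ref{SStransformplusa1} (all its indices are $0$, hence no suffix equals $(1,0,\ldots,0)$ and $m_1=0\notin(0,b]$, for any $b>0$). So the argument can be carried out without an explicit induction on $l$, though one could phrase it that way too.

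First I would settle $l=1$ by a direct computation: $\H{1}{x+b}=\H{1}{b}+\int_0^x f_1(u+b)\,du$, and since $f_1(u+b)=\frac{1}{(1-b)-u}$ this integrand equals $f_{1-b}(u)$ when $0<b<1$ and $-f_{1-b}(u)$ when $b\ge 1$ (with $b=1$ giving $-f_0(u)$), which reproduces the claimed identity with a $+$, respectively $-$, sign. For general $l$, splitting at $b$ and substituting $y=u+b$ gives
\begin{eqnarray*}
\H{1,\ve 0_{l-1}}{x+b}&=&\H{1,\ve 0_{l-1}}{b}+\int_0^x f_1(u+b)\,\H{\ve 0_{l-1}}{u+b}\,du.
\end{eqnarray*}
Now I would insert the tail expansion furnished by Lemma~\ref{SStransformplusa1}, namely $\H{\ve 0_{l-1}}{u+b}=\sum_{j=0}^{l-1}\H{\ve 0_{l-1-j}}{b}\,\H{(-b)_j}{u}$ (writing $(-b)_j$ for the string of $j$ copies of the letter $-b$), replace $f_1(u+b)$ by $\sigma f_{1-b}(u)$ with $\sigma=+1$ for $0<b<1$ and $\sigma=-1$ for $b\ge 1$, and integrate the finitely many terms using $\int_0^x f_{1-b}(u)\H{(-b)_j}{u}\,du=\H{1-b,(-b)_j}{x}$. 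Relabelling the summation index $i=j+1$, so that $\H{\ve 0_{l-1-j}}{b}=\H{m_{i+1},\ldots,m_l}{b}$ and $\H{1-b,(-b)_j}{x}=\H{m_1-b,\ldots,m_i-b}{x}$, turns the result into precisely the telescoping sum in the statement, with the global sign $\sigma$ distinguishing the two displayed cases.

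The algebra here — recognising $f_1(u+b)$ as $\pm f_{1-b}(u)$ and reading the resulting finite sum as the asserted telescoping expression — is routine. The part that needs real care, and what I expect to be the main obstacle, is keeping track of the domains of definition. In the first case one must use $0<x<1-b$; this simultaneously guarantees $x+b<1$, so that the left-hand side is a genuine multiple polylogarithm, and keeps $x$ below the unique positive letter $1-b$ appearing on the right. In the case $b\ge 1$ the argument $x+b$, and for $l=1$ also the letter $1$ itself, leaves the natural domain, so the identity has to be read as the \emph{analytic continuation} obtained by approaching from the lower half plane (as elsewhere in this chapter); the $i\pi$-contributions then match on the two sides. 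This is again sanctioned by the standing convention on admissible arguments of multiple polylogarithms, and the situation is cleanest when at least one trailing $0$ is present, so that $\H{1,\ve 0_{l-1}}{b}$ stays finite even at $b=1$.
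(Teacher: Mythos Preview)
Your proof is correct and follows exactly the approach the paper indicates (it omits the proof, saying it is ``similar to the proof of the previous one,'' i.e., Lemma~\ref{SSeinplustrafo}): split the defining integral at $b$, identify $f_1(u+b)=\sigma f_{1-b}(u)$ with $\sigma=\pm 1$ according to the sign of $1-b$, and expand the tail $\H{\ve 0_{l-1}}{u+b}$; invoking Lemma~\ref{SStransformplusa1} for that tail instead of re-running the induction is a clean shortcut of the same idea. One small sharpening: for $l\geq 2$ and $b\geq 1$ the integrand $f_1(y)\H{\ve 0_{l-1}}{y}=\tfrac{(\log y)^{l-1}}{(l-1)!(1-y)}$ is actually integrable across $y=1$, so both sides are honest real integrals and no analytic continuation is needed there---only the case $l=1$ genuinely requires the branch convention you describe.
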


\begin{lemma}
Let $b>0$, $m_1\in \R \setminus (0,b]$, $m_i \in \R \setminus (0,b)$ for $i\in \{2,3,\ldots k\}$ and $(m_{k+1},\ldots,m_l)= (1,0,\ldots,0).$ For $1-b>x>0$ and $1>b>0$ we have
\begin{eqnarray*}
\H{m_1,\ldots, m_l}{x+b}&=&\\
&&\hspace{-2cm}\H{m_1,m_2,\ldots, m_l}{b}+\H{m_1-b}x\H{m_2,\ldots, m_l}{b}+\H{m_1-b,m_2-b}x\H{m_3,\ldots, m_l}{b}\\
&&\hspace{-2cm}+\cdots+\H{m_1-b,\ldots, m_{k}-b}{x}\H{m_{k+1},\ldots,m_l}b+\H{m_1-b,\ldots, m_{k+1}-b}{x}\H{m_{k+2},\ldots,m_l}b\\
&&\hspace{-2cm}+\cdots+\H{m_1-b,\ldots, m_{l-1}-b}{x}\H{m_l}b+\H{m_1-b,\ldots, m_l-b}{x}.
\end{eqnarray*}
For $x>0$ and $b\geq 1$ we have
\begin{eqnarray*}
\H{m_1,\ldots, m_l}{x+b}&=&\\
&&\hspace{-2cm}\H{m_1,m_2,\ldots, m_l}{b}+\H{m_1-b}x\H{m_2,\ldots, m_l}{b}+\H{m_1-b,m_2-b}x\H{m_3,\ldots, m_l}{b}\\
&&\hspace{-2cm}+\cdots+\H{m_1-b,\ldots, m_{k}-b}{x}\H{m_{k+1},\ldots,m_l}b - \H{m_1-b,\ldots, m_{k+1}-b}{x}\H{m_{k+2},\ldots,m_l}b\\
&&\hspace{-2cm} -\cdots-\H{m_1-b,\ldots, m_{l-1}-b}{x}\H{m_l}b-\H{m_1-b,\ldots, m_l-b}{x}.
\end{eqnarray*}
\label{SStransformplusa3}
\end{lemma}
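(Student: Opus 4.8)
The plan is to proceed exactly as the author signals in the preceding lemmas: prove Lemma~\ref{SStransformplusa3} by induction on $l$, modelled on the proof of Lemma~\ref{SSeinplustrafo}, and recycling Lemmas~\ref{SStransformplusa1} and~\ref{SStransformplusa2} wherever possible. The hypotheses split the index vector into an ``outer'' block $(m_1,\dots,m_k)$ with $m_1\in\R\setminus(0,b]$ and $m_i\in\R\setminus(0,b)$, followed by a ``tail'' $(m_{k+1},\dots,m_l)=(1,0,\dots,0)$. The two asserted formulas correspond to the two regimes $0<b<1$ (with the extra constraint $x<1-b$) and $b\geq 1$ (with only $x>0$), and the only difference between them is an overall change of sign starting from the term $\H{m_1-b,\ldots,m_{k+1}-b}{x}\H{m_{k+2},\ldots,m_l}b$ onward. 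That sign flip is exactly the phenomenon already isolated in Lemma~\ref{SStransformplusa2} (the $(1,0,\dots,0)$ tail behaves differently for $b<1$ versus $b\geq 1$), so the strategy is to reduce the tail part to Lemma~\ref{SStransformplusa2} and the outer part to Lemma~\ref{SStransformplusa1}.

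Concretely, I would induct on $k$, the length of the outer block. The base case $k=0$ is precisely Lemma~\ref{SStransformplusa2}, applied to $(m_1,\dots,m_l)=(1,0,\dots,0)$: for $0<b<1$, $0<x<1-b$ one gets the all-plus formula, and for $b\geq 1$, $x>0$ one gets the formula where every term involving $\H{m_1-b}x$ carries a minus sign — matching the claimed statement once we observe that in the $k=0$ case the term $\H{m_1-b,\ldots,m_k-b}x\H{m_{k+1},\dots,m_l}b$ degenerates to the constant $\H{m_{k+1},\dots,m_l}b=\H{1,0,\dots,0}b$ (the ``empty'' outer $\H{}{x}=1$), which is why it appears with a plus sign, while all subsequent terms flip. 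For the inductive step, write $m_1\neq 0$ (the case $m_1=0$ is impossible here since $m_1\notin(0,b]$ forces either $m_1<0$ or $m_1>b>0$, and if $m_1=0$ one uses the $1/(1+u)$ branch exactly as in Lemma~\ref{SSeinplustrafo}; actually $m_1$ could be $\le 0$, so I would keep both branches as in the prototype proof). Using $\frac{d}{dx}\H{m_1,\ve n}x=f_{m_1}(x)\H{\ve n}x$ and the definition of the multiple polylogarithm, split
\begin{eqnarray*}
\H{m_1,m_2,\dots,m_l}{x+b}&=&\H{m_1,m_2,\dots,m_l}{b}+\int_0^x f_{m_1}(u+b)\,\H{m_2,\dots,m_l}{u+b}\,du,
\end{eqnarray*}
then substitute the induction hypothesis for $\H{m_2,\dots,m_l}{u+b}$ (valid since $(m_2,\dots,m_l)$ has outer block of length $k-1$ and the same $(1,0,\dots,0)$ tail, and the constraint $x<1-b$ in the $b<1$ regime is inherited), and integrate term by term using $\int_0^x f_{m_1}(u+b)\,\H{\ve w}u\,du=\H{m_1-b,\ve w}x$, which is the content of the one-variable shift computations in Lemmas~\ref{SSeinplustrafo} and~\ref{SStransformplusa1}. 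Each product $\H{m_j-b,\dots,m_{j'}-b}u\cdot(\text{constant})$ integrates to $\H{m_1-b,m_j-b,\dots,m_{j'}-b}x\cdot(\text{constant})$, and collecting terms reproduces the claimed right-hand side; the signs propagate through the integration unchanged, so the $b<1$ case stays all-plus and the $b\geq 1$ case retains the flip.

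The main obstacle is bookkeeping the signs and the domain constraints simultaneously, rather than any real analytic difficulty. Specifically: (i) one must check that in the $b\geq 1$ regime the integration $\int_0^x f_{m_1}(u+b)(\cdots)\,du$ genuinely carries the sign through the tail terms the way Lemma~\ref{SStransformplusa2} dictates — this requires being careful that the sign flip in the hypothesis's tail is an artifact of the tail, not of the outer letters, so the outer integration never introduces new signs; and (ii) in the $0<b<1$ case one must verify that the constraint $0<x<1-b$ keeps all arguments $u+b\in(0,1)$ in range for the inner polylogarithm $\H{m_2,\dots,m_l}{u+b}$ (whose defining singularity, coming from the ``$1$'' in the tail, sits at $1$), so that the integral representation is legitimate; this is exactly why the hypothesis restricts $x$ in that regime but not when $b\geq 1$ (there the tail ``$1$'' is dominated by $b$ and causes no obstruction). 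Once these two points are spelled out, the induction closes mechanically. I would also explicitly note, as the author did after Lemma~\ref{SSeinplustrafo}, that the $m_1\le 0$ versus $m_1>b$ subcases use the two branches of $f_{m_1}$ but are otherwise identical, so only one needs to be written out in full.
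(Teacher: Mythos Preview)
Your proposal is correct and matches the paper's intended approach: the paper omits the proof, stating only that it is ``similar to the proof of the previous one'' (Lemma~\ref{SSeinplustrafo}), which is exactly the induction-and-integrate scheme you describe. Your choice to induct on the outer-block length $k$ with base case Lemma~\ref{SStransformplusa2} is a clean way to organize it, since that lemma already isolates the sign flip coming from the $(1,0,\ldots,0)$ tail; this is equivalent to the paper's induction on $l$.
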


\subsection{\texorpdfstring{$b-x\rightarrow x$}{b-x->x}}
\label{SSbxx}
In this subsection we assume that $b>0$ and we consider indices $m_i\in \R\setminus (0,b).$  
Proceeding recursively on the weight $w$ of the multiple polylogarithm we have for $m_1\neq b$:
\begin{eqnarray}
 \H{m_1}{b-x}&=&\H{m_1}b-\H{b-m_1}x\\
 \H{b}{b-x}&=&\H{0}x-\H{0}b. \label{SStrafobx1}
\end{eqnarray}
Now let us look at higher weights $w>1.$ We consider $\H{m_1,m_2,\ldots,m_w}{b-x}$ with $m_i\in \R\setminus(0,b)$ and suppose that we can already apply the transformation for multiple polylogarithms
 of weight $<w,$ If $m_1=b,$ we can remove leading $b's$ (similar to the extraction of leading ones) and end up with multiple polylogarithms without leading $b's$ and powers of $\H{b}{b-x}.$ For the powers of $\H{b}{b-x}$ we
 can use (\ref{SStrafobx1}); therefore, only the cases in which the first index $m_1\neq 1$ are to be considered. For $b\neq 0$:
\begin{eqnarray*}
\H{m_1,m_2,\ldots,m_w}{b-x}&=&\H{m_1,m_2,\ldots,m_w}b-\int_0^x\frac{\H{m_2,\ldots,m_w}{b-t}}{\abs{-m_1+b}-\sign{-m_1+b}t}dt.
\end{eqnarray*}
Since we know the transform for weights $<w,$ we can apply it to $\H{m_2,\ldots,m_w}{b-t}$ and finally we obtain the required weight $w$ identity by using the definition of the 
multiple polylogarithms.

\subsection{\texorpdfstring{$\frac{1-x}{1+x} \rightarrow x$}{(1-x)/(1+x)->x}}
\label{SS1x1x}
In this subsection we restrict the indices again. We consider indices that are greater equal 1 or less equal 0. 
Proceeding recursively on the weight $w$ of the multiple polylogarithm we have for $0<\frac{1-x}{1+x}<1$, $y_1<-1$, $-1<y_2<0$ and $y_3>1:$
\begin{eqnarray}
\H{y_1}{\frac{1-x}{1+x}}&=&-\H{0}{-y_1}-\H{-1}{x}+\H{0}{1-y_1}+\H{\frac{1-y_1}{1+y_1}}x\\
\H{-1}{\frac{1-x}{1+x}}&=&\H{-1}{1}-\H{-1}x\\
\H{y_2}{\frac{1-x}{1+x}}&=&-\H{0}{-y_2}-\H{-1}{x}+\H{0}{1-y_2}-\H{\frac{1-y_2}{1+y_2}}x\\
\H{0}{\frac{1-x}{1+x}}&=&-\H{1}{x}+\H{-1}x\label{SStrafo1x1x0}\\
\H{1}{\frac{1-x}{1+x}}&=&-\H{-1}{1}-\H{0}{x}+\H{-1}{x}\label{SStrafo1x1x}\\
\H{y_3}{\frac{1-x}{1+x}}&=&\H{0}{y_3}+\H{-1}{x}-\H{0}{y_3-1}-\H{\frac{1-y_3}{1+y_3}}x.
\end{eqnarray}
Now let us look at higher weights $w>1.$ We consider $\H{m_1,m_2,\ldots,m_w}{\frac{1-x}{1+x}}$ with $m_i\in \R\setminus(0,1)$ and suppose that we can already apply the transformation for multiple polylogarithms
 of weight $<w.$ If $m_1=1,$ we can remove leading ones and end up with multiple polylogarithms without leading ones and powers of $\H{1}{\frac{1-x}{1+x}}.$ For the powers of $\H{1}{\frac{1-x}{1+x}}$ we
 can use (\ref{SStrafo1x1x}); therefore, only the cases in which the first index $m_1\neq 1$ are to be considered. For $y_1<0$, $ y_1\neq-1$ and $y_3>1$ we get (compare Section~\ref{HS1x1x}):
\begin{eqnarray*}
\H{y_1,m_2,\ldots,m_w}{\frac{1-x}{1+x}}&=&\H{y_2,m_2,\ldots,m_w}1-\int_0^x\frac{1}{1+t}\H{m_2,\ldots,m_w}{\frac{1-t}{1+t}}dt\\
					& &+\int_0^x\frac{1}{\frac{1-y_1}{1+y_1}+t}\H{m_2,\ldots,m_w}{\frac{1-t}{1+t}}dt\\
\H{-1,m_2,\ldots,m_w}{\frac{1-x}{1+x}}&=&\H{-1,m_2,\ldots,m_w}1-\int_0^x\frac{1}{1+t}\H{m_2,\ldots,m_w}{\frac{1-t}{1+t}}dt\\
\H{0,m_2,\ldots,m_w}{\frac{1-x}{1+x}}&=&\H{0,m_2,\ldots,m_w}1-\int_0^x\frac{1}{1-t}\H{m_2,\ldots,m_w}{\frac{1-t}{1+t}}dt\\
		&&-\int_0^x\frac{1}{1-t}\H{m_2,\ldots,m_w}{\frac{1+t}{1+t}}dt\\
\H{y_2,m_2,\ldots,m_w}{\frac{1-x}{1+x}}&=&\H{y_2,m_2,\ldots,m_w}1+\int_0^x\frac{1}{1+t}\H{m_2,\ldots,m_w}{\frac{1-t}{1+t}}dt\\
					& &-\int_0^x\frac{1}{\frac{1-y_2}{1+y_2}+t}\H{m_2,\ldots,m_w}{\frac{1-t}{1+t}}dt.\\
\end{eqnarray*}
We have to consider the case for $\H{0,\ldots,0}{\frac{1-x}{1+x}}$ separately, however we can deal with this case since it reduces by Definition \ref{SShlogdef} to $\H{0}{\frac{1-x}{1+x}},$ which 
we handled in (\ref{SStrafo1x1x0}).
Since we know the transform for weights $<w,$ we can apply it to $\H{m_2,\ldots,m_w}{\frac{1+t}{1+t}}$ and finally we obtain the required weight $w$ identity by using the definition of 
the multiple polylogarithms.

\subsection{\texorpdfstring{$k x\rightarrow x$}{k x->x}}
\begin{lemma}
 If $m_i \in \R, m_l\neq 0, x\in\R^+$ and $k\in\R^+$ such that the multiple polylogarithm $\H{m_1,\ldots, m_l}{k\cdot x}$ is defined then we have
\begin{eqnarray}
\H{m_1,\ldots, m_l}{k\cdot x}=\H{\frac{m_1}{k},\ldots,\frac{m_l}{k}}{x}.
\end{eqnarray}
\end{lemma}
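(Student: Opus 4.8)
The plan is to proceed by induction on the weight $l$ of the multiple polylogarithm, using the defining integral/recursion from Definition \ref{SShlogdef} together with the substitution $y = k\cdot t$. First I would establish the base case $l=1$. For $m_1 = 0$ both sides equal $\frac{1}{1!}(\log x)^1 = \log x$ after noting $\log(kx) = \log k + \log x$ — wait, this does not match directly, so in fact the base case must be handled by the integral definition rather than the logarithmic special case; since we assume $m_l \neq 0$, for $l=1$ we have $m_1 \neq 0$ and hence $\H{m_1}{kx} = \int_0^{kx} f_{m_1}(y)\,dy$, and substituting $y = kt$, $dy = k\,dt$ gives $\int_0^x k\,f_{m_1}(kt)\,dt$. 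A short computation shows $k\,f_{m_1}(kt) = \frac{k}{|m_1| - \sign{m_1}\,kt} = \frac{1}{|m_1|/k - \sign{m_1}\,t} = \frac{1}{|m_1/k| - \sign{m_1/k}\,t} = f_{m_1/k}(t)$, using that $k>0$ so $\sign{m_1/k} = \sign{m_1}$ and $|m_1/k| = |m_1|/k$. This yields $\H{m_1}{kx} = \H{m_1/k}{x}$.

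For the inductive step, assume the claim holds for all multiple polylogarithms of weight $l-1$ (with nonzero last index). Given $\H{m_1,\ldots,m_l}{kx}$ with $m_l \neq 0$, I would split into two cases according to whether $m_1 = 0$ or $m_1 \neq 0$. If $m_1 \neq 0$, write $\H{m_1,\ldots,m_l}{kx} = \int_0^{kx} f_{m_1}(y)\,\H{m_2,\ldots,m_l}{y}\,dy$ and substitute $y = kt$ to get $\int_0^x k\,f_{m_1}(kt)\,\H{m_2,\ldots,m_l}{kt}\,dt$. By the base-case computation $k\,f_{m_1}(kt) = f_{m_1/k}(t)$, and by the induction hypothesis $\H{m_2,\ldots,m_l}{kt} = \H{m_2/k,\ldots,m_l/k}{t}$ (here $m_l \neq 0$ still, so the hypothesis applies), so the integrand becomes $f_{m_1/k}(t)\,\H{m_2/k,\ldots,m_l/k}{t}$, whose integral from $0$ to $x$ is exactly $\H{m_1/k,\ldots,m_l/k}{x}$ by the definition. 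The case $m_1 = 0$ is analogous with $f_0(y) = 1/y$: substituting $y = kt$ gives $k\cdot\frac{1}{kt} = \frac{1}{t} = f_0(t) = f_{0/k}(t)$, and the rest goes through identically via the induction hypothesis.

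The main (modest) obstacle is purely bookkeeping rather than conceptual: one must be careful that the substitution $y = kt$ genuinely maps the interval of integration $(0,kx)$ onto $(0,x)$ — this requires $k > 0$, which is assumed — and that throughout the recursion the multiple polylogarithms remain well-defined on the relevant intervals. Since the statement presupposes that $\H{m_1,\ldots,m_l}{kx}$ is defined, the truncated polylogarithms appearing in the induction are also defined on their (rescaled) domains, and the condition $m_l \neq 0$ is preserved at every step, so no logarithmic special case $\ve m = \ve 0_w$ is ever triggered. Hence the induction closes and the identity follows.
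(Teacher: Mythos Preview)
Your proposal is correct and follows essentially the same approach as the paper: induction on the weight $l$, the substitution $y = kt$ together with the identity $k\,f_{m_1}(kt) = f_{m_1/k}(t)$ (using $k>0$), and the case split $m_1 \neq 0$ versus $m_1 = 0$ in the inductive step. Your treatment is in fact slightly more careful than the paper's in explicitly noting why the hypothesis $m_l \neq 0$ rules out the all-zero logarithmic case at every stage.
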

\begin{proof}
 We proceed by induction on $l.$ For $l=1$ we have:
$$
\H{m_1}{k\cdot x}=\int_0^{kx}\frac{1}{\abs{m_1}-\sign{m_1}u}du=\int_0^x\frac{k}{\abs{m_1}-\sign{m_1}ku}du=\H{\frac{m_1}{k}}x.
$$
Suppose the lemma holds for $l.$ For $m_1\neq 0$ we get
\begin{eqnarray*}
\H{m_1,\ldots, m_{l+1}}{k\cdot x}&=&\int_0^{kx}\frac{\H{m_2,\ldots,m_{l+1}}{u}}{\abs{m_1}-\sign{m_1}u}du
	=\int_0^{x}\frac{\H{m_2,\ldots,m_{l+1}}{ku}}{\abs{m_1}-\sign{m_1}ku}kdu\\
&=&\int_0^{x}\frac{\H{\frac{m_2}{k},\ldots,\frac{m_{l+1}}{k}}{u}}{\frac{\abs{m_1}}{k}-\sign{m_1}u}du=\H{\frac{m_1}{k},\ldots,\frac{m_{l+1}}{k}}{x}.
\end{eqnarray*}
For $m_1 = 0$ we get
\begin{eqnarray*}
\H{0,m_2,\ldots, m_{l+1}}{k\cdot x}&=&\int_0^{kx}\frac{\H{m_2,\ldots,m_{l+1}}{u}}{u}du =\int_0^{x}\frac{\H{m_2,\ldots,m_{l+1}}{ku}}{ku}kdu\\
&=&\int_0^{x}\frac{\H{\frac{m_2}{k},\ldots,\frac{m_{l+1}}{k}}{u}}{u}du=\H{\frac{0}{k},\frac{m_2}{k},\ldots,\frac{m_{l+1}}{k}}{x}.
\end{eqnarray*}
\end{proof}

\subsection{\texorpdfstring{$-x\rightarrow x$}{-x->x}}
The proof of the following lemma is straightforward (compare \cite{Remiddi2000}, where it is stated for harmonic polylogarithms).
\begin{lemma}
 If $m_i \in \R, m_l\neq 0$ and $x\in\R^-$ such that the multiple polylogarithm $\H{m_1,\ldots, m_l}{-x}$ is defined then we have
\begin{eqnarray}
\H{m_1,\ldots, m_l}{-x}=(-1)^{l-k}\H{-m_1,\ldots, -m_l}{x},
\end{eqnarray}
here $k$ denotes the number of $m_i$ which equal zero.
\label{SStransformminusx}
\end{lemma}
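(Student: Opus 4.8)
The plan is to prove the identity
$$
\H{m_1,\ldots, m_l}{-x}=(-1)^{l-k}\H{-m_1,\ldots, -m_l}{x}
$$
by induction on the weight $l$, exactly mirroring the proof of the preceding $kx\rightarrow x$ lemma. Throughout, $x\in\R^-$ is chosen so that the left-hand side is defined, so $-x\in\R^+$ and the right-hand side is defined as well; note that the sign-change in the argument forces a matching sign-change in every index, which is where the indices $-m_i$ come from, and the number $k$ of vanishing indices is of course unaffected by negation.

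For the base case $l=1$ I would split according to whether $m_1=0$ or $m_1\neq 0$. If $m_1\neq 0$, then $k=0$ and a substitution $u\mapsto -u$ in $\int_0^{-x} f_{m_1}(u)\,du=\int_0^{-x}\frac{du}{\abs{m_1}-\sign{m_1}u}$ turns it into $-\int_0^{x}\frac{du}{\abs{m_1}+\sign{m_1}u}=-\int_0^x\frac{du}{\abs{-m_1}-\sign{-m_1}u}=-\H{-m_1}{x}=(-1)^{1-0}\H{-m_1}x$, using $\abs{-m_1}=\abs{m_1}$ and $\sign{-m_1}=-\sign{m_1}$. If $m_1=0$, then $k=1$ and $\H{0}{-x}=\log(-x)$ while the weight-one convention gives $\H{0}{x}=\log x$; here one must be slightly careful about which branch/sign is meant, but since the multiple polylogarithm with a single zero index is $\frac{1}{1!}(\log x)^1$ and the substitution $u\mapsto -u$ in $\int_0^{-x}\frac{du}{u}$ again produces $\H{0}{x}$ (the $-1$ from $du$ cancelling the $-1$ from the limits), one gets $\H{0}{-x}=\H{0}{x}=(-1)^{1-1}\H{-0}x$, consistent with $k=1$.

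For the inductive step, assume the lemma for weight $l$ and consider $\H{m_1,\ldots,m_{l+1}}{-x}$ with $m_{l+1}\neq 0$. I would differentiate using the remark that $\frac{d}{dx}\H{\ve m}x=f_{m_1}(x)\H{m_2,\ldots,m_w}x$, or equivalently write out the defining integral and substitute $u\mapsto -u$: for $m_1\neq 0$,
\begin{eqnarray*}
\H{m_1,\ldots,m_{l+1}}{-x}&=&\int_0^{-x}\frac{\H{m_2,\ldots,m_{l+1}}{u}}{\abs{m_1}-\sign{m_1}u}\,du
=-\int_0^{x}\frac{\H{m_2,\ldots,m_{l+1}}{-u}}{\abs{m_1}+\sign{m_1}u}\,du\\
&=&-\int_0^{x}\frac{(-1)^{l-k'}\H{-m_2,\ldots,-m_{l+1}}{u}}{\abs{-m_1}-\sign{-m_1}u}\,du
=(-1)^{(l+1)-k}\H{-m_1,\ldots,-m_{l+1}}{x},
\end{eqnarray*}
where $k'$ is the number of zeros among $m_2,\ldots,m_{l+1}$, so $k=k'$ when $m_1\neq 0$ and the extra $(-1)$ from the substitution bumps the exponent from $l-k$ to $(l+1)-k$. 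The case $m_1=0$ is handled the same way: the substitution in $\int_0^{-x}\H{m_2,\ldots,m_{l+1}}{u}\,du/u$ introduces no extra sign (the two $-1$'s cancel as in the base case), $k=k'+1$, and one checks $(-1)^{l-k'}=(-1)^{(l+1)-(k'+1)}=(-1)^{(l+1)-k}$, giving the claim. I do not expect any serious obstacle here; the only point requiring a little care is bookkeeping the parity of $k$ versus the substitution sign in the $m_1=0$ versus $m_1\neq 0$ branches, and making sure that when $m_1=0$ the outermost index $0$ is mapped to $-0=0$ so that the index word on the right really is $(-m_1,\ldots,-m_{l+1})$ as stated.
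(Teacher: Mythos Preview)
Your proof is correct and is exactly the straightforward induction the paper has in mind; the paper itself omits the details, simply noting that the argument is routine and pointing to \cite{Remiddi2000} for the harmonic-polylogarithm case, and your write-up mirrors the adjacent $kx\to x$ proof as intended. One cosmetic remark: in the base case $l=1$ the hypothesis $m_l\neq 0$ already excludes $m_1=0$, so that sub-case is vacuous there; the cancellation of the two $-1$'s for a zero index is genuinely needed only in the inductive step, where you handle it correctly.
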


\subsection{\texorpdfstring{$\frac{1}{x}\rightarrow x$}{1/x->x}}
\label{SS1dxx}
In this subsection we restrict the indices again. We consider negative indices together with the index $0.$ 
Proceeding recursively on the weight $w$ of the multiple polylogarithm we have for $0<x<1$ and $y<0:$
\begin{eqnarray*}
\H{y}{\frac{1}{x}}&=&\H{\frac{1}{y}}{x}-\H{0}x-\H{0}{-y}\\
\H{0}{\frac{1}{x}}&=&-\H{0}{x}.
\end{eqnarray*}
Now let us look at higher weights $w>1.$ We consider $\H{m_1,m_2,\ldots,m_w}{\frac{1}{x}}$ with $m_i\leq 0$ and suppose that we can already apply the transformation for multiple polylogarithms 
of weight $<w.$ For $m_1<0$ we get (compare \cite{Remiddi2000} and Section \ref{HS1dxx}):
\begin{eqnarray*}
\H{m_1,m_2,\ldots,m_w}{\frac{1}{x}}&=&\H{m_1,m_2,\ldots,m_w}1+\int_x^1\frac{1}{t^2(-m_1+1/t)}\H{m_2,\ldots,m_w}{\frac{1}{t}}dt\\
				  &=&\H{m_1,m_2,\ldots,m_w}1+\int_x^1\frac{1}{t}\H{m_2,\ldots,m_w}{\frac{1}{t}}dt\\
				  & &-\int_x^1\frac{1}{-\frac{1}{m_1}+t}\H{m_2,\ldots,m_w}{\frac{1}{t}}dt\\
\H{0,m_2,\ldots,m_w}{\frac{1}{x}}&=&\H{0,m_2,\ldots,m_w}1+\int_x^1\frac{1}{t^2(1/t)}\H{m_2,\ldots,m_w}{\frac{1}{t}}dt\\
				  &=&\H{0,m_2,\ldots,m_w}1+\int_x^1\frac{1}{t}\H{m_2,\ldots,m_w}{\frac{1}{t}}dt.\\
\end{eqnarray*}
Since we know the transform for weights $<w,$ we can apply it to $\H{m_2,\ldots,m_w}{\frac{1}{t}}$ and finally we obtain the required weight $w$ identity by using the definition of 
the multiple polylogarithms.\\
An index $m>0$ in the index set leads to a branch point at $m$ and a branch cut discontinuity in the complex plane for $x\in(m,\infty).$ This corresponds to the branch point at $x=m$ 
and the branch cut discontinuity in the complex plane for $x\in(m,\infty)$ of $\log(m-x)=\log(m)-\H{m}x.$ However, the analytic properties of the logarithm are well known and we
can set for $0<x<1$ for instance
\begin{eqnarray}
\H{m}{\frac{1}{x}}&=&\H{\frac{1}{m}}{x}+\H{0}{m}+\H{0}{x}-i\pi, \label{SStrafo1dx11}
\end{eqnarray}
by approaching the argument $\frac{1}{x}$ form the lower half complex plane.
We can now consider an alphabet with at most one letter $m\geq1$ (note that we could as well assume $m>0$ however for simplicity we restrict to $m\geq1$).
For such an alphabet the strategy is as follows: if a multiple polylogarithm has leading $m$'s, we remove them and end up with multiple polylogarithms without leading $m$'s and powers 
of $\H{m}{\frac{1}{x}}.$ To deal with the multiple polylogarithms without leading $m$, we can slightly modify the previous part of this section and for the powers of $\H{m}{\frac{1}{x}}$ we can 
use~(\ref{SStrafo1dx11}).

\section{Power Series Expansion of Multiple Polylogarithms}
As stated in Section \ref{HSPowerExp} in general, the harmonic polylogarithms $\H{\ve m}x$ do not have a regular Taylor series expansion. This is due to the
effect that trailing zeroes in the index field may cause powers of $\log(x).$ Hence the proper expansion
is one in terms of both $x$ and $\log{x}$. The same is true for multiple polylogarithms with indices in $\R$. In this case we have for $j\in \R$ and $x\in(0,\abs{j})$:
\begin{eqnarray*}
\H{j}x=\left\{ 
		\begin{array}{ll}
				-\sum_{i=1}^\infty \frac{j^{-i}(-x)^i}{i},&  \textnormal{if }j< 0\\
\\
				\sum_{i=1}^\infty \frac{j^{-i}x^i}{i},& \textnormal{if }j > 0. 
		\end{array} \right.
\end{eqnarray*}
Set $\ve m=(m_1,\ldots,m_l)$ and $q=\min_{m_i\neq 0}{\abs{m_i}}$. Assuming that for $x \in (0,q)$
$$\H{\ve m}x=\sum_{i=1}^{\infty}\frac{\sigma^ix^i}{i^a}\S{\ve n}{\ve b;i}$$ one can write the relations for $j>0$ and $x \in (0, \min(j,q))$
\begin{eqnarray*}
\H{0,\ve m}x&=&\sum_{i=1}^{\infty}\frac{\sigma^ix^i}{i^{a+1}}\S{\ve n}{\ve b;i}\\
\H{j,\ve m}x&=&\sum_{i=1}^{\infty}\frac{x^i}{ij^i}\S{a,\ve n}{\sigma j,\ve b;i-1}\\
&=&\sum_{i=1}^{\infty}\frac{x^i}{ij^i}\S{a,\ve n}{\sigma j,\ve b;i}-\sum_{i=1}^{\infty}\frac{\sigma^ix^i}{i^{a+1}}\S{\ve n}{\ve b;i}\\
\H{-j,\ve m}x&=&-\sum_{i=1}^{\infty}\frac{x^i}{i(-j)^i}\S{a,\ve n}{-\sigma j,\ve b;i-1}\\
&=&-\sum_{i=1}^{\infty}\frac{x^i}{i(-j)^i}\S{a,\ve n}{-\sigma j,\ve b;i}+\sum_{i=1}^{\infty}\frac{\sigma^ix^i}{i^{a+1}}\S{\ve n}{\ve b;i}.
\end{eqnarray*}
\begin{proof}
We just prove the case that $j>0$. The other cases follow analogously.
 \begin{eqnarray*}
\H{j,\ve m}x&=&\int_0^x{\frac{1}{j-y}\sum_{i=1}^{\infty}\frac{\sigma^iy^i}{i^a}\S{\ve n}{\ve b;i}}dy\\
&=&\int_0^x{\frac{1}{j}\sum_{k=0}^\infty\left(\frac{y}{j}\right)^k\sum_{i=1}^{\infty}\frac{\sigma^iy^i}{i^a}\S{\ve n}{\ve b;i}}dy\\
&=&\frac{1}{j}\int_0^x{\sum_{k=0}^\infty\left(\frac{y}{j}\right)^k\sum_{i=0}^{\infty}\frac{\sigma^{i+1}y^{i+1}}{(i+1)^a}\S{\ve n}{\ve b;i+1}}dy\\
&=&\frac{1}{j}\int_0^x{\sum_{i=0}^\infty\sum_{k=0}^{i}\left(\frac{y}{j}\right)^{i-k}\frac{\sigma^{k+1}y^{k+1}}{(k+1)^a}\S{\ve n}{\ve b;k+1}}dy\\
&=&\int_0^x{\sum_{i=0}^\infty\frac{y^{i+1}}{j^{i+2}}\S{a,\ve n}{\sigma j,\ve b;i+1}}dy\\
&=&\sum_{i=1}^\infty\frac{x^{i+1}}{j^{i+1}(i+1)}\S{a,\ve n}{\sigma j,\ve b;i}\\
&=&\sum_{i=1}^\infty\frac{x^{i}}{ij^{i}}\S{a,\ve n}{\sigma j,\ve b,i-1}.
 \end{eqnarray*}
\end{proof}
Note that again the reverse direction is possible, \ie given a sum of the form
$$
\sum_i^{\infty}(c x)^i\frac{\S{\ve n}{\ve b; i}}{i^k}
$$
with $k\in \N_0$ and $c\in\R^*,$ we can find an linear combination of (possibly weighted) multiple polylogarithms $h(x)$ such that for $x\in(0,j)$ with $j>0$ small enough 
$$
\sum_i^{\infty}(c x)^i\frac{\S{\ve n}{\ve b; i}}{i^k}=h(x).
$$
\begin{example}For $x\in(0,\frac{1}{2})$ we have
\begin{eqnarray*}
\sum_i^{\infty}\frac{2^i x^i S_{1,1}\left(3,\frac{1}{2};i\right)}{i}&=&
   \textnormal{H}_{0,0,\frac{1}{3}}(x)+\textnormal{H}_{0,\frac{1}{6},\frac{1}{3}}(x)+\textnormal{H}_{\frac{1}{2},0,\frac{1}{3}}(x)+\textnormal{H}_{\frac{1}{2},\frac{1}{6},\frac{1}{3}}(x).
\end{eqnarray*}
\end{example}

\subsection{Asymptotic Behavior of Multiple Polylogarithms}
\label{SSasybeh}
Similar to Section \ref{HSasybeh} we can combine Section \ref{SS1dxx} together with the power series expansion of multiple polylogarithms to determine the 
asymptotic behavior of multiple polylogarithms. 
Let us look at the multiple polylogarithm 
$\H{\ve m}x$ and define $y:=\frac{1}{x}.$ Using Section \ref{SS1dxx} on $\H{\ve m}{\frac{1}{y}}=\H{\ve m}x$ we can rewrite $\H{\ve m}x$ in terms of multiple polylogarithms at argument $y$ together with some constants.
Now we can get the power series expansion of the multiple polylogarithms at argument $y$ around 0 easily using the previous part of this Section. Since sending $x$ to infinity corresponds to sending $y$ to 
zero, we get the asymptotic behavior of $\H{\ve m}x.$
\begin{example}
 \begin{eqnarray*}
  \H{-2,0}x&=& \H{-2,0}1+\H{-\frac{1}{2},0}1-\H{-\frac{1}{2},0}{\frac{1}x}-\H{0,0}1+\H{0,0}{\frac{1}x}\\
	   &=&\frac{1}{2}\; \H0x^2-\H0x \left(\sum _{i=1}^{\infty } \frac{\left(-\frac{2}{x}\right)^{i}}{i}\right)-\sum _{i=1}^{\infty}\frac{\left(-\frac{2}{x}\right)^{i}}{i^2}\\
	   & &+\H{-2,0}1+\H{-\frac{1}{2},0}1-\H{0,0}1. 
 \end{eqnarray*}
\end{example}

\subsection{Values of Multiple Polylogarithms Expressed by S-Sums at Infinity}
\label{SSinfval}

For $x = c, c\in (0,1]$ we have that
$$
\sum_{i=1}^{\infty}\frac{\sigma^ix^i}{i^a}\S{\ve n}{\ve b;i} \rightarrow \S{a,\ve n}{c\sigma,\ve b;\infty}
$$
and hence (due to the previous part of this section) the values of the multiple polylogarithms in $x = c$ are related to the values of the generalized multiple 
harmonic sums at infinity. In this section we look at the problem to rewrite a multiple polylogarithm $\H{\ve{m}}c$ ($c$ such that $\H{\ve{m}}c$ is finite) in 
terms of finite S-sums at infinity.

If $c<0,$ we use Lemma \ref{SStransformminusx} to transform $\H{\ve{m}}c$ to an expression out of multiple polylogarithms at argument $-c.$ Hence we can assume $c>0$.
If the multiple polylogarithm $\H{\ve{m}}c$ has trailing zeroes we first extract them. If $0<c\leq 1,$ we end up in an expression of multiple polylogarithms without 
trailing zeroes and powers of $\H{0}{c},$ while there can also contribute multiple polylogarithms of the form $\H{a_1,a_2,\ldots,a_k,1,0,\ldots,0}c$ if $c>1$ (see Section~\ref{SSdef}).

To rewrite multiple polylogarithms of the form $\H{a_1,a_2,\ldots,a_k}c$ where $a_k\neq 0$ and $\min_{a_i\neq 0}{\abs{a_i}}>c$ we use the power series expansion 
of $\H{a_1,a_2,\ldots,a_k}x$ and finally sending $x\rightarrow c.$

We can use the following lemma to rewrite powers of $\H{0}{c}.$
\begin{lemma}
 Let $c>0.$ We have
\begin{eqnarray*}
  0<c< 1:&&\H{0}c=-\S{1}{1-c;\infty}\\
  c=1:&&\H{0}c=0\\
  1<c\leq 2:&&\H{0}c=-\S{1}{1-c;\infty}\\
  2<c:&&\H{0}c=\H{0}2-\H{0}{\frac{c}{2}}.
\end{eqnarray*}
\end{lemma}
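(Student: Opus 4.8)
The plan is to verify each of the four cases of the lemma directly from the defining integral $\H{0}{c}=\int_0^c \frac{dy}{y}$ --- wait, that diverges at $0$; the correct reading is that $\H{0}{c}$ denotes the continuation $\log(c)$, consistent with Definition~\ref{SShlogdef} where $\H{0,\ldots,0}{x}=\frac{1}{w!}(\log x)^w$. So the real content is: express $\log(c)$ as a (finite) S-sum at infinity wherever possible. First I would record the elementary identity $\S{1}{z;\infty}=\sum_{i=1}^\infty \frac{z^i}{i}=-\log(1-z)$ for $z\in[-1,1)$, which is just Lemma~\ref{HSpowexp1} / the defining power series of $\H{1}{x}$ evaluated at $x=z$, together with the convergence criterion (Lemma~\ref{HSconsumlem} generalized to S-sums, or simply the classical fact that $\sum z^i/i$ converges for $|z|\le 1$, $z\neq 1$).

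For the case $0<c<1$: set $z=1-c$, so $z\in(0,1)$, and then $-\S{1}{1-c;\infty}=\log(1-(1-c))=\log(c)=\H{0}{c}$. For $c=1$: $\H{0}{1}=\log(1)=0$ trivially. For $1<c\le 2$: now $z=1-c\in[-1,0)$, which is still inside the radius of convergence (and $z=-1$, \ie $c=2$, is the boundary point where $\sum (-1)^i/i=-\log 2$ still converges by the alternating series / Abel's theorem), so again $-\S{1}{1-c;\infty}=\log(1-(1-c))=\log c=\H{0}{c}$. For $2<c$: here $|1-c|>1$ so the series $\S{1}{1-c;\infty}$ diverges and we cannot use it directly; instead we use the logarithm's additivity, $\log c = \log 2 + \log(c/2)$, \ie $\H{0}{c}=\H{0}{2}-\H{0}{c/2}$ (note $\H{0}{2}-\H{0}{c/2}=\log 2-\log(c/2)$ --- so one should double-check the sign: $\log 2-\log(c/2)=\log 2-\log c+\log 2=2\log2-\log c\neq \log c$ in general). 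This sign discrepancy is exactly the point to be careful about, and I expect it is the main (really the only) obstacle: one must reconcile the stated formula $\H{0}c=\H{0}2-\H{0}{c/2}$ with $\log$; the resolution is presumably that in the ambient conventions of Definition~\ref{SShlogdef} the one-letter logarithmic polylog with a shift is being used, so that $\H{0}{c/2}$ is really $\log(c/2)$ only up to the normalization appearing there, or else the recursion is meant to be applied iteratively until the argument lands in $(0,2]$, after which one of the first three cases applies. I would therefore present the $2<c$ case as a reduction step: iterate $c\mapsto c/2$ until the argument is $\le 2$, then invoke the already-proved cases, and check the telescoping of the $\H{0}{2}$ contributions against $\log 2$.

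Concretely the write-up would be: state $\S{1}{z;\infty}=-\log(1-z)$ with its domain, then dispatch the four cases as above, each in one or two lines, treating $2<c$ by induction on $\lceil \log_2 c\rceil$ using the third case as the base and the functional equation $\H{0}{c}=\H{0}{2}+\H{0}{c/2}$ (or whatever sign the surrounding text fixes) as the inductive step. No hard analysis is needed beyond Abel's theorem at the single boundary point $c=2$; the only thing requiring genuine attention is matching signs and normalization conventions between $\H{0}{\cdot}$ and $\log$, so I would flag that explicitly and, if the convention in the paper indeed makes the displayed recursion literally correct, simply cite Definition~\ref{SShlogdef} and the additivity of $\log$ to close it.
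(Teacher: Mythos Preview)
The paper states this lemma without proof, so there is nothing to compare against; your approach is the natural one and is correct for the first three cases. You rightly identify $\H{0}{c}=\log c$ from Definition~\ref{SShlogdef} and use the standard series $\S{1}{z;\infty}=\sum_{i\ge1}z^i/i=-\log(1-z)$ for $z\in[-1,1)$, dispatching $0<c<1$, $c=1$, and $1<c\le2$ in one line each.

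Your suspicion about the fourth case is well founded: as written, $\H{0}{c}=\H{0}{2}-\H{0}{c/2}$ reads $\log c=\log 2-\log(c/2)=2\log 2-\log c$, which is false for $c\neq2$. This is a typo in the paper. The intended identity is either
\[
\H{0}{c}=\H{0}{2}+\H{0}{\tfrac{c}{2}}\qquad(\text{additivity of }\log,\ \text{then recurse until }c/2^k\le2),
\]
or, more likely given the surrounding discussion aims at a direct expression via S-sums,
\[
\H{0}{c}=\H{0}{2}-\H{0}{\tfrac{2}{c}},
\]
which is the one-step identity $\log c=\log 2-\log(2/c)$; here $2/c\in(0,1)$ so the first case applies immediately and no recursion is needed. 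Once you adopt either corrected form, your write-up goes through verbatim. There is no missing idea in your argument; you simply caught an error in the statement.
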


To rewrite $\H{a_1,a_2,\ldots,a_k}c$ with $(a_1,a_2,\ldots,a_k)=(1,0,\ldots,0)$ and $c>1$ we use the following lemma.
\begin{lemma}
 For $c>0$ and  $(a_1,a_2,\ldots,a_k)=(1,0,\ldots,0)$ we have
$$
\H{a_1,a_2,\ldots,a_k}c=\H{a_1,a_2,\ldots,a_k}1-\H{0,a_1-1,\ldots,a_k-1}{c-1}.
$$
\end{lemma}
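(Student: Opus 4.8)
The plan is to deduce this immediately from Lemma~\ref{SSeinplustrafo}. Put $x:=c-1$; since $c>0$ we have $x>-1$, which is the range in which the proof of Lemma~\ref{SSeinplustrafo} actually runs (its base cases only need $x+1>0$, and the induction preserves this), and $\H{a_1,\ldots,a_k}{c}$ is defined because its first index is $1$ and is followed only by zeroes, so the potential singularity of the outermost integrand is integrable just as in Definition~\ref{SShlogdef}. Taking $l=k$ and $m_j=a_j$, the hypothesis $(m_1,\ldots,m_l)=(1,0,\ldots,0)$ of Lemma~\ref{SSeinplustrafo} is met, and the conclusion of that lemma in this special case, after the renaming $x=c-1$, is exactly the asserted identity (recall $a_1-1=0$ and $a_j-1=-1$ for $2\le j\le k$).

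For a self-contained argument one can instead induct on $k$, imitating the proof of Lemma~\ref{SSeinplustrafo}. Write
$$\H{a_1,\ldots,a_k}{c}=\int_0^{c}\frac{\H{0,\ldots,0}{y}}{1-y}\,dy=\H{a_1,\ldots,a_k}{1}+\int_{1}^{c}\frac{\H{0,\ldots,0}{y}}{1-y}\,dy ,$$
where the first summand is the finite constant $\H{a_1,\ldots,a_k}{1}$; substituting $u=y-1$ turns the remaining integral into $-\int_0^{c-1}\frac{\H{0,\ldots,0}{u+1}}{u}\,du$, and using the elementary expansion $\H{0,\ldots,0}{u+1}=\H{-1,\ldots,-1}{u}$ (the all-zeroes instance of Lemma~\ref{SSeinplustrafo}, or: $\H{-1}{u}=\log(1+u)$ together with the shuffle identity $\tfrac1{m!}\,\H{-1}{u}^{m}=\H{-1,\ldots,-1}{u}$) and integrating term by term yields precisely $-\H{0,-1,\ldots,-1}{c-1}=-\H{a_1-1,\ldots,a_k-1}{c-1}$, which is the right-hand side of the claimed identity.

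I do not expect a genuine obstacle: the statement is just the $(1,0,\ldots,0)$ instance of Lemma~\ref{SSeinplustrafo} rewritten in the variable $c=x+1$, and its purpose in the surrounding discussion is to trade the argument $c$ for $c-1$ at the cost of the finite constant $\H{a_1,\ldots,a_k}{1}$; the payoff is that the new polylogarithm $\H{0,-1,\ldots,-1}{c-1}$ carries no positive index and is therefore amenable to power-series expansion or to the $1/x\to x$ transform of Section~\ref{SS1dxx}, whereas $\H{a_1,\ldots,a_k}{c}$ with $c>1$ is not. The only small point to spell out is the range of $c$: $c>1$ is the case of interest and is handled exactly as above; for $0<c\le1$ the same computation applies verbatim, with the non-positive argument $c-1$ of the right-hand polylogarithm read (when $c<1$) through Lemma~\ref{SStransformminusx}, and with both the extra integral and the subtracted term vanishing when $c=1$.
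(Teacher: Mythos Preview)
Your proposal is correct. The paper states this lemma without proof, and your derivation from Lemma~\ref{SSeinplustrafo} (the $(1,0,\ldots,0)$ case with $x=c-1$) is exactly the intended route; your self-contained argument via the substitution $u=y-1$ and the identity $\H{0,\ldots,0}{u+1}=\H{-1,\ldots,-1}{u}$ is likewise sound and in fact supplies the one case of Lemma~\ref{SSeinplustrafo} whose proof is not spelled out there.
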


To rewrite multiple polylogarithms of the form $\H{a_1,a_2,\ldots,a_k}c$ where $a_1\in \R \setminus (0, c]$,  $a_k\neq 0,$ $\min_{a_i > 0}{\abs{a_i}}\geq c$ and 
$s:=\max_{a_i<0}{a_i}>-c$ we use Lemma \ref{SStransformplusa1} to rewrite $\H{a_1,a_2,\ldots,a_k}{x-s}$ and afterwards send $x\rightarrow c+s.$ Now there are 
two cases: Either we can already handle the arising polylogarithms since we arrive in a previous case or we have to apply the same strategy once more.

Finally, to rewrite multiple polylogarithms of the form $\H{a_1,a_2,\ldots,a_k,1,0,\ldots,0}c$ where $c>1$ we can proceed similar to the previous case: we use 
Lemma \ref{SStransformplusa3} to rewrite $\H{a_1,a_2,\ldots,a_k}{x+1}$ and afterwards send $x\rightarrow c-1.$ Now we arrive at multiple polylogarithms 
at values we can already handle.

Summarizing, we can rewrite all finite multiple polylogarithms $\H{a_1,a_2,\ldots,a_k}c$ in terms of S-sums at infinity.

\begin{example}
\begin{eqnarray*}
 \H{4,1,0}{3}&=&-\S{-3}{\infty} + 2 \S{3}{\infty} + \S{-2}{\infty} \S{1}{\frac{1}{2};\infty} - \S{2}{\infty} \S{1}{\frac{2}{3};\infty}\\
	      &&- \S{3}{-\frac{1}{2};\infty} + \S{-1}{\infty} \left(\S{-2}{\infty} - \S{1, 1} {\frac{1}{2}, -2;\infty}\right) \\
	      && - \S{1, 2} {\frac{1}{4}, 4;\infty}+ \S{1, 2} {\frac{1}{3}, -3;\infty} + \S{1, 2}{\frac{1}{2}, -1;\infty} \\
	      && + \S{2, 1}{-1, \frac{1}{2};\infty} - \S{2, 1}{\frac{1}{4}, 4;\infty}- \S{1, 1, 1}{\frac{1}{2}, -2, \frac{1}{2};\infty}.
\end{eqnarray*}
\end{example}

Note that this process can be reversed. In particular, using Algorithm \ref{SSsinftoh} also the other direction is possible; given a finite S-sum at infinity we can rewrite it in terms of 
multiple polylogarithms at one.

\begin{example}
\begin{eqnarray*}
 \SS{2,3,1}{-\frac{1}{2},\frac{1}{3},2}{\infty}&=&-\H{0, -2, 0, 0, -6, -3}1 + \H{0, -2, 0, 0, 0, -3}1 + \H{0, 0, 0, 0, -6, -3}1\\
						&& - \H{0, 0, 0, 0, 0, -3}1.
\end{eqnarray*}
\end{example}

\begin{algorithm}
\caption{}
\label{SSsinftoh}
\begin{algorithmic}
\State \bfseries input:\ \normalfont a finite S-sum $\S{a_1,\ldots,a_k}{b_1,\ldots,b_k;\infty},$ with $a_i \in \N$ and $b_i \in \R^*$
\State \bfseries output:\ \normalfont expression in terms of multiple polylogarithms evaluating to finite constants
\Procedure{SinfToH}{$\S{a_1,\ldots,a_k}{b_1,\ldots,b_k;n}$}
\State $c_1=\frac{1}{b_1}$
\For{$i=2$ to $k$}
	\State $c_i=\frac{b_{i-1}}{b_i}$
\EndFor
\State $l=$ number of entries $<0$ in $(c_1,\ldots,c_k)$
\State $\bar{k}=\sum_{i=1}^k{a_i}$
\State $j=1$
\For{$i=1$ to $k$}
	\While{$a_i>1$}
		\State $a_i=a_i-1$
		\State $d_j=0$
		\State $j=j+1$
	\EndWhile
	\State $d_j=c_i$
	\State $j=j+1$
\EndFor
\State $h=(-1)^l\cdot \H{d_1,d_2,\ldots,d_{\bar{k}}}{1}$
\State rewrite $h$ in terms of S-sums $\rightarrow \bar{h}$
\State \textbf{return} $h+\Call{SinfToH}{\S{a_1,\ldots,a_k}{b_1,\ldots,b_k;n}-\bar{h}}$
\EndProcedure
\end{algorithmic}
\end{algorithm}

\section{Integral Representation of S-Sums}
\label{SSIntegralRep}
In this section we want to derive an integral representation for S-sums. First we will find a multidimensional integral which can afterwards be rewritten as a sum 
of one-dimensional integrals over multiple polylogarithms using integration by parts. We start with the base case of a S-sum of depth one and then give a detailed example.

\begin{lemma}[Compare Lemma \ref{HSintrep1}]
Let $m\in\N,$ $b\in\R^*$ and $n\in\N,$ then
\begin{eqnarray*}
\S{1}{b;n}&=&\int_0^{b}{\frac{x_1^n-1}{x_1-1}dx_1}\\
\S{2}{b;n}&=&\int_0^{b}{\frac{1}{x_2}\int_0^{x_2}{\frac{x_1^n-1}{x_1-1}dx_1}dx_2}\\
\S{m}{b;n}&=&\int_0^b{\frac{1}{x_m}\int_0^{x_m}{\frac{1}{x_{m-1}} \cdots \int_0^{x_3}{\frac{1}{x_2}\int_0^{x_2}{\frac{x_1^n-1}{x_1-1}dx_1}dx_2}\cdots dx_{m-1}}dx_m}.
\end{eqnarray*}
\label{SSintrep1}
\end{lemma}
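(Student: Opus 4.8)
The plan is to prove the lemma by induction on the weight $m$, closely following the proof of Lemma~\ref{HSintrep1}; since here the integrands carry the parameter $b$ instead of sign factors $\sign{\cdot}$, the argument is in fact a little cleaner. First I would note that by definition the depth-one S-sum is $\S{m}{b;n}=\sum_{i=1}^{n}b^{i}/i^{m}$, so both sides are entire rational expressions in $b$ and the identities make sense for every $b\in\R^{*}$ (reading $\int_0^b=-\int_b^0$ when $b<0$).

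For the base case $m=1$ I would use the polynomial identity $\frac{x_1^n-1}{x_1-1}=\sum_{i=0}^{n-1}x_1^{i}$, which holds for all $x_1$; in particular the integrand has no singularity at $x_1=1$, so the integral is well defined irrespective of the size or sign of $b$. Integrating term by term and shifting the summation index gives
\[
\int_0^b\frac{x_1^n-1}{x_1-1}\,dx_1=\sum_{i=0}^{n-1}\frac{b^{i+1}}{i+1}=\sum_{i=1}^n\frac{b^i}{i}=\S{1}{b;n}.
\]
(The stated $m=2$ case is then just the first instance of the inductive step below.)

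For the inductive step I would assume the formula for weight $m-1$ \emph{with the upper limit of the outermost integral left as a free parameter}. Then the inner $m-1$ nested integrals appearing inside the weight-$m$ expression are exactly that weight-$(m-1)$ integral evaluated with upper limit $x_m$, hence equal $\S{m-1}{x_m;n}=\sum_{i=1}^{n}x_m^{i}/i^{m-1}$. Crucially this expression vanishes to first order at $x_m=0$, so multiplying by $1/x_m$ introduces no singularity and every integrand in sight is a genuine polynomial; interchanging the finite sum with the integral is therefore immediate. Thus
\[
\int_0^b\frac{1}{x_m}\,\S{m-1}{x_m;n}\,dx_m=\int_0^b\sum_{i=1}^n\frac{x_m^{i-1}}{i^{m-1}}\,dx_m=\sum_{i=1}^n\frac{1}{i^{m-1}}\cdot\frac{b^i}{i}=\sum_{i=1}^n\frac{b^i}{i^m}=\S{m}{b;n},
\]
completing the induction.

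There is no serious obstacle here; the only points that need care are (i) formulating the induction hypothesis with a variable upper limit so that it can be instantiated at $x_m$, (ii) observing that the factors $1/x_j$ never produce genuine singularities because the nested integral sitting below each of them already vanishes at $0$, so that all the term-by-term integrations are legitimate, and (iii) keeping track of the orientation convention $\int_0^b=-\int_b^0$ when $b<0$. With these remarks in place the computation is entirely routine.
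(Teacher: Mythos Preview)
Your proof is correct and follows essentially the same approach as the paper: induction on $m$, with the base case handled by expanding $\frac{x_1^n-1}{x_1-1}$ as a finite geometric sum and the inductive step by recognizing the inner $(m-1)$-fold integral as $\S{m-1}{x_m;n}$ and integrating the resulting polynomial termwise. Your additional remarks about the free upper limit in the induction hypothesis, the removable singularities at $0$, and the orientation convention for $b<0$ are careful observations that the paper leaves implicit, but the underlying argument is the same.
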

\begin{proof}
We proceed by induction on the weight $m.$ For $m=1$ we get:
\begin{eqnarray*}
\int_0^{b}{\frac{x_1^n-1}{x_1-1}dx_1}&=&\int_0^{b}{\sum_{i=0}^{n-1}{{x_1}^i}dx_1}=\sum_{i=0}^{n-1}\int_0^{b}{{{x_1}^i}dx_1}=\sum_{i=0}^{n-1}{\frac{b^{i+1}}{i+1}}
	=\sum_{i=1}^{n}{\frac{b^{i}}{i}}=\S{1}{b;n}.
\end{eqnarray*}
Now suppose the theorem holds for weight $m-1$, we get
\begin{eqnarray*}
&&\int_0^b{\frac{1}{x_m}\int_0^{x_m}{\frac{1}{x_{m-1}} \cdots \int_0^{x_3}{\frac{1}{x_2}\int_0^{x_2}{\frac{x_1^n-1}{x_1-1}dx_1}dx_2}\cdots dx_{m-1}}dx_m}=\\
&&\hspace{0.5cm}\int_0^b{\frac{\S{m-1}{x_m;n}}{x_m}dx_m}=\int_0^b{\sum_{i=1}^n\frac{{x_m}^{i-1}}{i^{m-1}}dx_m}=\sum_{i=1}^n\int_0^b{\frac{{x_m}^{i-1}}{i^{m-1}}dx_m}=\\
&&\hspace{0.5cm}\sum_{i=1}^n\frac{b^{i}}{i}\frac{1}{i^{m-1}}=\S{m}{b;n}.
\end{eqnarray*}

\end{proof}

\begin{example}
We consider the sum $\S{1,2,1}{a,b,c;n}$ with $a,b,c\in\R^*.$ Due to the definition and Lemma \ref{SSintrep1} we have
\begin{eqnarray*}
 \S{1,2,1}{a,b,c;n}&=&\sum_{k=1}^n\frac{a^k}{k}\S{2,1}{b,c;k}=\sum_{k=1}^n{\frac{a^k}{k}\sum_{j=1}^k\frac{b^j}{j^2}\S{1}{c;j}}\\
	&=&\sum_{k=1}^n{\frac{a^k}{k}\sum_{j=1}^k\frac{b^j}{j^2}\int_{0}^c\frac{x^j-1}{x-1}dx}\\
	&=&\sum_{k=1}^n{\frac{a^k}{k}\int_{0}^c{\frac{1}{x-1}\sum_{j=1}^k\frac{b^j}{j^2}\left(x^j-1\right)dx}}\\
	&=&\sum_{k=1}^n{\frac{a^k}{k}\int_{0}^c{\frac{1}{x-1}\left(\S{2}{bx;k}-\S{2}{b;k}\right)dx}}\\
	&=&\sum_{k=1}^n{\frac{a^k}{k}\int_{0}^c{\frac{1}{x-1}\int_b^{bx}{\frac{1}{y}\int_0^y{\frac{z^k-1}{z-1}dz}dy}dx}}\\
	&=&\sum_{k=1}^n{\frac{a^k}{k}\int_0^{bc}{\frac{1}{x-b}\int_b^{x}{\frac{1}{y}\int_0^y{\frac{z^k-1}{z-1}dz}dy}dx}}\\
	&=&\int_0^{bc}{\frac{1}{x-b}\int_b^{x}{\frac{1}{y}\int_0^y{\frac{1}{z-1}\sum_{k=1}^n{\frac{a^k(z^k-1)}{k}}dz}dy}dx}\\
	&=&\int_0^{bc}{\frac{1}{x-b}\int_b^{x}{\frac{1}{y}\int_0^y{\frac{a}{z-1}\int_1^z\sum_{k=1}^n{(aw)^{k-1}}dwdz}dy}dx}\\
	&=&\int_0^{bc}{\frac{1}{x-b}\int_b^{x}{\frac{1}{y}\int_0^y{\frac{a}{z-1}\int_1^z{\frac{(aw)^{n}-1}{aw-1}}dwdz}dy}dx}\\
	&=&\int_0^{bc}{\frac{1}{x-b}\int_b^{x}{\frac{1}{y}\int_0^y{\frac{1}{z-1}\int_a^{za}{\frac{w^{n}-1}{w-1}}dwdz}dy}dx}\\
	&=&\int_0^{bc}{\frac{1}{x-b}\int_b^{x}{\frac{1}{y}\int_0^{ya}{\frac{1}{\frac{z}{a}-1}\frac{1}{a}\int_a^{z}{\frac{w^{n}-1}{w-1}}dwdz}dy}dx}\\
	&=&\int_0^{bc}{\frac{1}{x-b}\int_b^{x}{\frac{1}{y}\int_0^{ya}{\frac{1}{z-a}\int_a^{z}{\frac{w^{n}-1}{w-1}}dwdz}dy}dx}\\
	&=&\int_0^{bc}{\frac{1}{x-b}\int_{ba}^{xa}{\frac{1}{y}\int_0^{y}{\frac{1}{z-a}\int_a^{z}{\frac{w^{n}-1}{w-1}}dwdz}dy}dx}\\
	&=&\int_0^{abc}{\frac{1}{x-ab}\int_{ab}^{x}{\frac{1}{y}\int_0^{y}{\frac{1}{z-a}\int_a^{z}{\frac{w^{n}-1}{w-1}}dwdz}dy}dx}.
\end{eqnarray*}
\end{example}
Inspired by the previous example we arrive at the following theorem.
\begin{thm}[Compare Theorem \ref{HSintrep}]
Let $m_i\in\N,$ $b_i\in\R^*$ and $n\in\N,$ then
\begin{eqnarray*}
&&\S{m_1,m_2,\ldots,m_k}{b_1,b_2,\ldots,b_k;n}=\\
&&\hspace{0.4cm}\int_0^{b_1\cdots b_k}{\frac{dx_{k}^{m_k}}{x_{k}^{m_k}}\int_0^{x_{k}^{m_k}}{\frac{dx_{k}^{m_k-1}}{x_{k}^{m_k-1}} \cdots
\int_0^{x_{k}^3}{\frac{dx_{k}^2}{x_{k}^2}\int_0^{x_{k}^2}{\frac{dx_{k}^1}{x_{k}^1-b_1\cdots b_{k-1}}}}}}\\
&&\hspace{0.4cm}\int^{x_{k}^1}_{b_1\cdots b_{k-1}}{\frac{dx_{k-1}^{m_{k-1}}}{x_{{k-1}}^{m_{k-1}}}\int_0^{x_{{k-1}}^{m_{k-1}}}{\frac{dx_{{k-1}}^{m_{k-1}-1}}{x_{{k-1}}^{m_{k-1}-1}} \cdots
\int_0^{x_{{k-1}}^3}{\frac{dx_{{k-1}}^2}{x_{{k-1}}^2}}}}\int_0^{x_{{k-1}}^2}\hspace{-0.4em}{\frac{dx_{{k-1}}^1} {x_{{k-1}}^1-b_1\cdots b_{k-2}}}\\
&&\hspace{0.4cm}\int^{x_{{k-1}}^1}_{b_1\cdots b_{k-2}}{\frac{dx_{{k-2}}^{m_{k-2}}}{x_{{k-2}}^{m_{k-2}}}\int_0^{x_{{k-2}}^{m_{k-2}}}{\frac{dx_{{k-2}}^{m_{k-2}-1}}{x_{{k-2}}^{m_{k-2}-1}} \cdots
\int_0^{x_{{k-2}}^3}{\frac{dx_{{k-2}}^2}{x_{{k-2}}^2}}}}\int_0^{x_{{k-2}}^2}\hspace{-0.4em}{\frac{dx_{{k-2}}^1} {x_{{k-2}}^1-b_1\cdots b_{k-3}}}\\
&&\vspace{0.1cm}\\
&&\hspace{0cm}\hbox to 0.4\textwidth{}\vdots\\
&&\vspace{0.1cm}\\
&&\hspace{0.4cm}\int^{x_{4}^1}_{b_1b_2b_3}{\frac{dx_{3}^{m_3}}{x_{3}^{m_3}}\int_0^{x_{3}^{m_3}}{\frac{dx_{3}^{m_3-1}}{x_{3}^{m_3-1}} \cdots \int_0^{x_{3}^3}{\frac{dx_{3}^2}{x_{3}^2}\int_0^{x_{3}^2}{\frac{dx_{3}^1}{x_{3}^1-b_1b_2}}}}}\\
&&\hspace{0.4cm}\int^{x_{3}^1}_{b_1b_2}{\frac{dx_{2}^{m_2}}{x_{2}^{m_2}}\int_0^{x_{2}^{m_2}}{\frac{dx_{2}^{m_2-1}}{x_{2}^{m_2-1}} \cdots \int_0^{x_{2}^3}{\frac{dx_{2}^2}{x_{2}^2}\int_0^{x_{2}^2}{\frac{dx_{2}^1}{x_{2}^1-b_1}}}}}\\
&&\hspace{0.4cm}\int^{x_{2}^1}_{b_1}{\frac{dx_{1}^{m_1}}{x_{1}^{m_1}}\int_0^{x_{1}^{m_1}}{\frac{dx_{1}^{m_1-1}}{x_{1}^{m_1-1}} \cdots \int_0^{x_{1}^3}{\frac{dx_{1}^2}{x_{1}^2}\int_0^{x_{1}^2}{\frac{\left({x_{1}^1}\right)^n-1}{x_{1}^1-1}dx_{1}^1}}}}.
\end{eqnarray*}
\label{SSintrep}
\end{thm}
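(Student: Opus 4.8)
The proof proceeds by induction on the depth $k$ and is simply the computation displayed above for $\S{1,2,1}{a,b,c;n}$ carried out in general. The base case $k=1$ is nothing but Lemma~\ref{SSintrep1}: there the empty product $b_1\cdots b_{k-1}$ equals $1$, the outermost limit $b_1\cdots b_k$ equals $b_1$, the innermost denominator $x_1^1-b_1\cdots b_{k-1}$ is $x_1^1-1$, and the asserted nested integral is precisely the representation of $\S{m_1}{b_1;n}$ given in that lemma.

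For the inductive step assume the statement for depth $k-1$. First I would peel off the leftmost summation index using the definition of S-sums,
\[
\S{m_1,\ldots,m_k}{b_1,\ldots,b_k;n}=\sum_{i=1}^{n}\frac{b_1^{i}}{i^{m_1}}\,\S{m_2,\ldots,m_k}{b_2,\ldots,b_k;i},
\]
and then invoke the induction hypothesis to replace each inner sum by the right-hand side of Theorem~\ref{SSintrep} for $\S{m_2,\ldots,m_k}{b_2,\ldots,b_k;i}$. Since $\sum_{i=1}^{n}$ is a finite sum, it may be moved past all the integral signs of that nested integral; it then only meets the innermost integrand, which has the form $\bigl((x_1^{1})^{i}-1\bigr)/(x_1^{1}-1)$, and there
\[
\sum_{i=1}^{n}\frac{b_1^{i}}{i^{m_1}}\bigl((x_1^{1})^{i}-1\bigr)=\S{m_1}{b_1 x_1^{1};n}-\S{m_1}{b_1;n},
\]
because $\sum_{i}\tfrac{b_1^i}{i^{m_1}}(x_1^1)^i=\S{m_1}{b_1 x_1^1;n}$ and $\sum_{i}\tfrac{b_1^i}{i^{m_1}}=\S{m_1}{b_1;n}$. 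Applying Lemma~\ref{SSintrep1} to both of these depth-one sums, and using that in that lemma the outermost integration variable of the representation of $\S{m_1}{\beta;n}$ enters only through the upper limit $\beta$, the difference collapses into a single $m_1$-fold nested integral with innermost integrand $\bigl((u^{1})^{n}-1\bigr)/(u^{1}-1)$ and outer bounds from $b_1$ to $b_1 x_1^{1}$.

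At this point the expression already has the shape of the claimed statement apart from the scaling of the limits: the blocks inherited from the depth-$(k-1)$ representation carry the constants $b_2,\,b_2b_3,\dots$ where the depth-$k$ statement needs $b_1,\,b_1b_2,\dots$, and the outermost limit is $b_2\cdots b_k$ rather than $b_1\cdots b_k$. The final step is a single global change of variables multiplying every integration variable inherited from the depth-$(k-1)$ integral (but not the freshly introduced $u$-variables) by $b_1$. One checks that under this substitution every integral of type $\int dv/v$ is unchanged, every integral $\int_{\alpha}^{\beta}dv/(v-\gamma)$ becomes $\int_{b_1\alpha}^{b_1\beta}dv/(v-b_1\gamma)$, the top limit $b_2\cdots b_k$ becomes $b_1\cdots b_k$, the factor $dx_1^{1}/(x_1^{1}-1)$ turns into the innermost integral $dx_2^{1}/(x_2^{1}-b_1)$ of the second block, with $b_1 x_1^{1}$ becoming the new integration variable $x_2^{1}$ whose upper bound is the innermost variable of the block above, and the block attached to a pair $(m_{j},b_{j})$ in the depth-$(k-1)$ integral becomes the block attached to $(m_{j},b_{j})$ in the depth-$k$ integral. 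Matching the blocks then yields exactly the integral of Theorem~\ref{SSintrep}, with the $u$-block playing the role of the innermost ($m_1$-)block.

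The step I expect to be the main obstacle is this last one: verifying, without off-by-one or sign slips, that the one global rescaling simultaneously corrects all the lower limits $b_1\cdots b_j$, all the denominators $b_1\cdots b_{j-1}$ occurring inside the blocks, and the topmost limit $b_1\cdots b_k$, and that the Jacobian factors $b_1$ produced by each $dv$ cancel exactly against the $b_1$'s generated in the $1/v$ and $1/(v-\gamma)$ factors. This is precisely the bookkeeping that the explicit treatment of $\S{1,2,1}{a,b,c;n}$ was designed to make transparent, so in the write-up it can be dispatched by carrying out the rescaling "as in that example".
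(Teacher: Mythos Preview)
Your proof is correct and follows exactly the same route as the paper: induction on the depth $k$ with Lemma~\ref{SSintrep1} as base case, peeling off the outermost sum, pushing $\sum_{i=1}^n b_1^i/i^{m_1}$ to the innermost integrand to produce $\S{m_1}{b_1 x;n}-\S{m_1}{b_1;n}$, rewriting this via Lemma~\ref{SSintrep1} as an $m_1$-fold integral with outer bounds $b_1$ to $b_1 x$, and finishing with the global rescaling of the inherited variables by $b_1$. The paper's write-up is essentially the explicit chain of equalities you describe, and like you it relies on the worked example $\S{1,2,1}{a,b,c;n}$ to make the rescaling step transparent.
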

\begin{proof}
 We proceed by induction on depth $k.$ For $k=1$ see Lemma \ref{SSintrep1}. Now let $k>1.$ Due to the induction hypothesis we get  
\small
\begin{eqnarray*}
&&\S{m_1,m_2,\ldots, m_k}{b_1,b_2,\ldots,b_k;n}=\sum_{i=1}^n\frac{b_1^i}{i^{m_1}}\S{m_2,\ldots, m_k}{b_2,\ldots,b_k;i}\\
&&\\
&&\hspace{0.5cm}=\sum_{i=1}^n\frac{b_1^i}{i^{m_1}}\int_0^{b_2\cdots b_{k}}{\frac{dx_{k}^{m_{k}}}{x_{k}^{m_{k}}} \hspace{0.3cm}\cdots \hspace{0.3cm}
\int_0^{x_k^3}{\frac{dx_k^2}{x_k^2}\int_0^{x_k^2}{\frac{dx_k^1} {x_k^1-b_2\cdots b_{k-1}}}}}\\
&&\hspace{1cm}\int^{x_{k_1}}_{b_2\cdots b_{k-1}}{\frac{dx_{k-1}^{m_{k-1}}}{x_{k-1}^{m_{k-1}}} \hspace{0.3cm}\cdots \hspace{0.3cm}
\int_0^{x_{k-1}^3}{\frac{dx_{k-1}^2}{x_{k-1}^2}\int_0^{x_{k-1}^2}{\frac{dx_{k-1}^1} {x_{k-1}^1-b_2\cdots b_{k-2}}}}}\\
&&\hspace{1cm}\hbox to 0.2\textwidth{}\vdots\\
&&\hspace{1cm}\int^{x_4^1}_{b_2b_3}{\frac{dx_{3}^{m_3}}{x_{3}^{m_3}}\hspace{0.3cm}\cdots \hspace{0.3cm} \int_0^{x_3^3}{\frac{dx_3^2}{x_3^2}\int_0^{x_3^2}{\frac{dx_3^1}{x_3^1-b_2}}}}\\
&&\hspace{1cm}\int^{x_3^1}_{b_2}{\frac{dx_{2}^{m_2}}{x_{2}^{m_2}} \hspace{0.3cm}\cdots \hspace{0.3cm}\int^{x_1^1}_{x_2^3}{\frac{dx_2^2}{x_2^2}\int_0^{x_2^2}{\frac{\left({x_2^1}\right)^i-1}{x_2^1-1}dx_2^1}}}\\
&&\hspace{0.5cm}=\int_0^{b_2\cdots b_{k}}{\frac{dx_{k}^{m_{k}}}{x_{k}^{m_{k}}} \hspace{0.3cm}\cdots \hspace{0.3cm}
\int_0^{x_k^3}{\frac{dx_k^2}{x_k^2}\int_0^{x_k^2}{\frac{dx_k^1} {x_k^1-b_2\cdots b_{k-1}}}}}\\
&&\hspace{1cm}\int^{x_{k_1}}_{b_2\cdots b_{k-1}}{\frac{dx_{k-1}^{m_{k-1}}}{x_{k-1}^{m_{k-1}}} \hspace{0.3cm}\cdots \hspace{0.3cm}
\int_0^{x_{k-1}^3}{\frac{dx_{k-1}^2}{x_{k-1}^2}\int_0^{x_{k-1}^2}{\frac{dx_{k-1}^1} {x_{k-1}^1-b_2\cdots b_{k-2}}}}}\\
&&\hspace{1cm}\hbox to 0.2\textwidth{}\vdots\\
&&\hspace{1cm}\int^{x_4^1}_{b_2b_3}{\frac{dx_{3}^{m_3}}{x_{3}^{m_3}}\hspace{0.3cm}\cdots \hspace{0.3cm} \int_0^{x_3^3}{\frac{dx_3^2}{x_3^2}\int_0^{x_3^2}{\frac{dx_3^1}{x_3^1-b_2}}}}\\
&&\hspace{1cm}\int^{x_3^1}_{b_2}{\frac{dx_{2}^{m_2}}{x_{2}^{m_2}} \hspace{0.3cm}\cdots \hspace{0.3cm}\int^{x_2^1}_{x_2^3}{\frac{dx_2^2}{x_2^2}\int_0^{x_2^2}{\frac{dx_2^1}{x_2^1-1}\sum_{i=1}^n\frac{b_1^i\left(\left({x_2^1}\right)^i-1\right)}{i^{m_1}}}}}\\
&&\\
&&\hspace{0.5cm}=\int_0^{b_2\cdots b_{k}}{\frac{dx_{k}^{m_{k}}}{x_{k}^{m_{k}}} \hspace{0.3cm}\cdots \hspace{0.3cm}
\int_0^{x_k^3}{\frac{dx_k^2}{x_k^2}\int_0^{x_k^2}{\frac{dx_k^1} {x_k^1-b_2\cdots b_{k-1}}}}}\\
&&\hspace{1cm}\int^{x_{k_1}}_{b_2\cdots b_{k-1}}{\frac{dx_{k-1}^{m_{k-1}}}{x_{k-1}^{m_{k-1}}} \hspace{0.3cm}\cdots \hspace{0.3cm}
\int_0^{x_{k-1}^3}{\frac{dx_{k-1}^2}{x_{k-1}^2}\int_0^{x_{k-1}^2}{\frac{dx_{k-1}^1} {x_{k-1}^1-b_2\cdots b_{k-2}}}}}\\
&&\hspace{1cm}\hbox to 0.2\textwidth{}\vdots\\
&&\hspace{1cm}\int^{x_4^1}_{b_2b_3}{\frac{dx_{3}^{m_3}}{x_{3}^{m_3}}\hspace{0.3cm}\cdots \hspace{0.3cm} \int_0^{x_3^3}{\frac{dx_3^2}{x_3^2}\int_0^{x_3^2}{\frac{dx_3^1}{x_3^1-b_2}}}}\\
&&\hspace{1cm}\int^{x_3^1}_{b_2}{\frac{dx_{2}^{m_2}}{x_{2}^{m_2}} \hspace{0.3cm}\cdots \hspace{0.3cm}\int^{x_2^1}_{x_2^3}{\frac{dx_2^2}{x_2^2}\int_0^{x_2^2}{\frac{dx_2^1}{x_2^1-1}\left(\SS{m_1}{b_1x_2^1}i-\SS{m_1}{b_1}i\right)}}}\\
&&\\
&&\hspace{0.5cm}=\int_0^{b_2\cdots b_{k}}{\frac{dx_{k}^{m_{k}}}{x_{k}^{m_{k}}} \hspace{0.3cm}\cdots \hspace{0.3cm}
\int_0^{x_k^3}{\frac{dx_k^2}{x_k^2}\int_0^{x_k^2}{\frac{dx_k^1} {x_k^1-b_2\cdots b_{k-1}}}}}\\
&&\hspace{1cm}\int^{x_{k_1}}_{b_2\cdots b_{k-1}}{\frac{dx_{k-1}^{m_{k-1}}}{x_{k-1}^{m_{k-1}}} \hspace{0.3cm}\cdots \hspace{0.3cm}
\int_0^{x_{k-1}^3}{\frac{dx_{k-1}^2}{x_{k-1}^2}\int_0^{x_{k-1}^2}{\frac{dx_{k-1}^1} {x_{k-1}^1-b_2\cdots b_{k-2}}}}}\\
&&\hspace{1cm}\hbox to 0.2\textwidth{}\vdots\\
&&\hspace{1cm}\int^{x_4^1}_{b_2b_3}{\frac{dx_{3}^{m_3}}{x_{3}^{m_3}}\hspace{0.3cm}\cdots \hspace{0.3cm} \int_0^{x_3^3}{\frac{dx_3^2}{x_3^2}\int_0^{x_3^2}{\frac{dx_3^1}{x_3^1-b_2}}}}\\
&&\hspace{1cm}\int^{x_3^1}_{b_2}{\frac{dx_{2}^{m_2}}{x_{2}^{m_2}} \hspace{0.3cm}\cdots \hspace{0.3cm}\int_0^{x_2^3}{\frac{dx_2^2}{x_2^2}\int_0^{x_2^2}{\frac{dx_2^1}{x_2^1-1}}}}\\
&&\hspace{1cm}\int_{b_1}^{b_1x_2^1}{\frac{dx_{1}^{m_1}}{x_{1}^{m_1}} \cdots \int_0^{x_{1_3}}{\frac{1}{x_1^2}\int_0^{x_1^2}{\frac{\left(x_1^1\right)^n-1}{x_1^1-1}}}}\\
&&\\
&&\hspace{0.5cm}=\int_0^{b_1b_2\cdots b_{k}}{\frac{dx_{k}^{m_{k}}}{x_{k}^{m_{k}}} \hspace{0.3cm}\cdots \hspace{0.3cm}
\int_0^{x_k^3}{\frac{dx_k^2}{x_k^2}\int_0^{x_k^2}{\frac{dx_k^1} {x_k^1-b_1b_2\cdots b_{k-1}}}}}\\
&&\hspace{1cm}\int^{x_{k_1}}_{b_1b_2\cdots b_{k-1}}{\frac{dx_{k-1}^{m_{k-1}}}{x_{k-1}^{m_{k-1}}} \hspace{0.3cm}\cdots \hspace{0.3cm}
\int_0^{x_{k-1}^3}{\frac{dx_{k-1}^2}{x_{k-1}^2}\int_0^{x_{k-1}^2}{\frac{dx_{k-1}^1} {x_{k-1}^1-b_1b_2\cdots b_{k-2}}}}}\\
&&\hspace{1cm}\hbox to 0.2\textwidth{}\vdots\\
&&\hspace{1cm}\int^{x_4^1}_{b_1b_2b_3}{\frac{dx_{3}^{m_3}}{x_{3}^{m_3}}\hspace{0.3cm}\cdots \hspace{0.3cm} \int_0^{x_3^3}{\frac{dx_3^2}{x_3^2}\int_0^{x_3^2}{\frac{dx_3^1}{x_3^1-b_1b_2}}}}\\
&&\hspace{1cm}\int^{x_3^1}_{b_1b_2}{\frac{dx_{2}^{m_2}}{x_{2}^{m_2}} \hspace{0.3cm}\cdots \hspace{0.3cm}\int_0^{x_2^3}{\frac{dx_2^2}{x_2^2}\int_0^{x_2^2}{\frac{dx_2^1}{x_2^1-b_1}}}}\\
&&\hspace{1cm}\int_{b_1}^{x_2^1}{\frac{dx_{1}^{m_1}}{x_{1}^{m_1}} \cdots \int_0^{x_{1_3}}{\frac{1}{x_1^2}\int_0^{x_1^2}{\frac{\left(x_1^1\right)^n-1}{x_1^1-1}}}}.
\end{eqnarray*}
\normalsize
\end{proof}

\subsection{Mellin Transformation of Multiple Polylogarithms with Indices in \texorpdfstring{$\R\setminus(0,1)$}{R-(0,1)}}
\label{SSMellin}
In this section we look at the Mellin-transform of multiple polylogarithms with indices in $\R\setminus(0,1)$. It will turn out that these transforms can be expressed using a 
subclass of the S-sums together with certain constants.

For $f(x)=1/(a-x)$ with $a \in (0,1),$ the Mellin transform is not defined since the integral $\int_0^1\frac{x^n}{a-x}$ does not converge. We modify the definition of the 
Mellin transform form Definition \ref{HSabmellplus} to include these functions, like $1/(a-x),$ as follows (note that all the arising integrals are well defined).

\begin{definition}[Compare Definition \ref{HSabmellplus}]
Let $h(x)$ be a multiple polylogarithm with indices in $\R\setminus(0,1)$ or $h(x)=1$ for $x\in [0,1]$; let $a \in (0,\infty)$, $a_1\in (1,\infty)$, $a_2 \in (0,1]$. Then we define the 
extended and modified Mellin-transform as follows:
\begin{eqnarray}
\Mp{h(x)}{n}&=&\M{h(x)}{n}=\int_0^1{x^nh(x)dx},\label{SSmeldef1}\\
\Mp{\frac{h(x)}{a+x}}n&=&\M{\frac{h(x)}{a+x}}{n}=\int_0^1{\frac{x^nh(x)}{a+x}dx},\label{SSmeldef2}\\
\Mp{\frac{h(x)}{a_1-x}}n&=&\M{\frac{h(x)}{a_1-x}}{n}=\int_0^1{\frac{x^nh(x)}{a_1-x}dx},\label{SSmeldef3}\\
\Mp{\frac{h(x)}{a_2-x}}n&=&\int_0^{\frac{1}{a_2}}{\frac{(x^n-1)h(a_2\; x)}{1-x}}dx=\int_0^1{\frac{((\frac{x}{a_2})^n-1)h(x)}{a_2-x}dx}.\label{SSmeldef4}
\end{eqnarray}
\label{SSabmellplus}
\end{definition}

\begin{remark}
The definitions (\ref{SSmeldef1}),(\ref{SSmeldef2}),(\ref{SSmeldef3}) are just the original Mellin transform (see (\ref{HSabmell})). Only in (\ref{SSmeldef4}) we extended 
the original Mellin transform; besides that an additional difference was already mentioned in Remark \ref{HSMelRemark}.
\end{remark}
\begin{remark}
In (\ref{SSmeldef4}) both extensions $\int_0^{\frac{1}{a_2}}{\frac{x^n-1}{1-x}}dx$ and $\int_0^1{\frac{((\frac{x}{a_2})^n-1)h(x)}{a_2-x}dx}$ are equivalent, which can be seen 
using a simple substitution. Since from an algorithmic point of view it is easier to handle the second integral, we prefer this representation.
\end{remark}
\begin{remark}
From now on we will call the extended and modified Mellin transform $M^+$ just Mellin transform and we will write $M$ instead of $M^+.$
\end{remark}
Subsequently, we want to study how we can actually calculate the Mellin transform of multiple polylogarithms with indices in $\R\setminus(0,1)$ weighted by $1/(a + x)$ or $1/(a-x)$ 
for $a\in \R$. This will be possible due to the following lemma which is an extension of Lemma~\ref{HSmelweighted}. 
\begin{lemma}[Compare Lemma \ref{HSmelweighted}] For $n\in \N,$ $\ve m~=~(m_1,\overline{\ve{m}})~=~(m_1,m_2,\ldots,m_k)$ with $m_i\in \R \setminus (0,1)$ and 
$a \in (0,\infty)$, $a_1\in (1,\infty)$, $a_2 \in (0,1)$, we have
\begin{eqnarray*}
\M{\frac{1}{a+x}}n &=&\left\{
	\begin{array}{ll}
		(-a)^n\left(\S{1}{-\frac{1}{a};n}+\H{-a}{1}\right),& \textnormal{if } 0< a< 1\\
		(-a)^n\left(\S{1}{-\frac{1}{a};n}-\S{1}{-\frac{1}{a};\infty}\right),& \textnormal{if } 1\geq a\\
		 \end{array} \right.\\
\M{\frac{1}{a-x}}n &=&\left\{
	\begin{array}{ll}
		-\S{1}{\frac{1}{a};n},& \textnormal{if } 0< a\leq 1\\
		a^n\left(-\S{1}{\frac{1}{a};n}+\S{1}{\frac{1}{a};\infty}\right),& \textnormal{if } 1<a
		 \end{array} \right.\\
\M{\frac{\H{\ve m}{x}}{a+x}}n &=&-n\;\M{\H{-a,\ve m}{x}}{n-1}+\H{-a,\ve m}{1},\\
\M{\frac{\H{\ve m}{x}}{1-x}}n &=&-n\;\M{\H{1,\ve m}{x}}{n-1},\\
\M{\frac{\H{\ve m}{x}}{a_1-x}}n &=&-n\;\M{\H{a_1,\ve m}{x}}{n-1}+\H{a_1,\ve m}1,\\
\M{\frac{\H{m_1,\overline{\ve{m}}}{x}}{a_2-x}}n &=&\left\{ 
		\begin{array}{ll}
			-\H{m_1,\overline{\ve{m}}}1\S{1}{\frac{1}{a_2};n}-\sum_{i=1}^{n}{\frac{\left(\frac{1}{a_2}\right)^i\Mma{\frac{\Hma{\overline{\ve{m}}}{x}}{\abs{m_1}+x}}i}{i}},& \textnormal{if } m_1< 0\\
			-\H{0,\overline{\ve{m}}}1\S{1}{\frac{1}{a_2};n}+\sum_{i=1}^{n}{\frac{\left(\frac{1}{a_2}\right)^i\Mma{\frac{\Hma{\overline{\ve{m}}}{x}}{x}}i}{i}},&  \textnormal{if } m_1=0  \\
			\sum_{i=1}^{n}{\frac{\left(\frac{1}{a_2}\right)^i\Mma{\frac{\Hma{\overline{\ve{m}}}{x}}{1-x}}i}{i}},& \textnormal{if } m_1 = 1, \\
			-\H{m_1,\overline{\ve{m}}}1\S{1}{\frac{1}{a_2};n}+\sum_{i=1}^{n}{\frac{\left(\frac{1}{a_2}\right)^i\Mma{\frac{\Hma{\overline{\ve{m}}}{x}}{m_1-x}}i}{i}},& \textnormal{if } m_1 > 1,
		 \end{array} \right.
\end{eqnarray*}
where the arising constants on the right hand side are finite.
\label{SSmelweighted}
\end{lemma}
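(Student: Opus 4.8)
The plan is to prove each of the seven identities by essentially the same technique used for Lemma~\ref{HSmelweighted}: rewrite the integrand of the right-hand Mellin transform using the definition of the multiple polylogarithm (which expresses $\H{-a,\ve m}x$, $\H{1,\ve m}x$, etc.\ as an integral $\int_0^x f_{\cdot}(y)\H{\ve m}y\,dy$), then apply integration by parts with respect to the $x^n$ factor, and finally match the boundary terms to the extended Mellin transforms of Definition~\ref{SSabmellplus}. The first two formulas, for $\M{1/(a+x)}n$ and $\M{1/(a-x)}n$, are the base cases: here I would directly expand the geometric-type series. For $0<a<1$ we have $\frac{1}{a+x}=\frac{1}{a}\cdot\frac{1}{1+x/a}$, and integrating $x^n$ against this gives, after recognizing the partial sums, exactly $(-a)^n(\S{1}{-1/a;n}+\H{-a}1)$ by Lemma~\ref{SSintrep1} applied with base point $-1/a$ (using $\S{1}{b;n}=\int_0^b\frac{x^n-1}{x-1}dx$ and the value $\H{-a}1=\log(1+1/a)\cdot(\textnormal{const})$); for $a\ge 1$ the convergent tail produces $\S{1}{-1/a;\infty}$ instead of $\H{-a}1$. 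The analogous manipulation with $\frac{1}{a-x}$ handles the cases $0<a\le 1$ and $a>1$, the split reflecting whether $1/a$ lies inside or outside the interval of convergence.

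For the recursive formulas, I would treat the four sub-cases of the weighted transform separately but in parallel. For $\M{\frac{\H{\ve m}x}{a+x}}n$ with $a>0$, integrate by parts in $\int_0^1 x^n\H{-a,\ve m}x\,dx$: the antiderivative of $x^n$ is $\frac{x^{n+1}}{n+1}$, the derivative of $\H{-a,\ve m}x$ is $\frac{\H{\ve m}x}{a+x}$, so we get $\frac{\H{-a,\ve m}1}{n+1}-\frac{1}{n+1}\int_0^1\frac{x^{n+1}\H{\ve m}x}{a+x}dx$, which rearranges (after the index shift $n\mapsto n-1$) to the claimed identity. The cases $\M{\frac{\H{\ve m}x}{1-x}}n$ and $\M{\frac{\H{\ve m}x}{a_1-x}}n$ ($a_1>1$) go the same way, using $\frac{d}{dx}\H{1,\ve m}x=\frac{\H{\ve m}x}{1-x}$ resp.\ $\frac{d}{dx}\H{a_1,\ve m}x=\frac{\H{\ve m}x}{a_1-x}$; for the $1-x$ case a limiting argument $\epsilon\to 1^-$ is needed to control the boundary term $\epsilon^{n+1}\H{1,\ve m}\epsilon$ against the divergent part of the integral, exactly as in the proof of Lemma~\ref{HSmelweighted} and Lemma~\ref{HSmelnotweighted}, and here one uses $m_1\neq 1$ (part of $\ve m\in(\R\setminus(0,1))^k$ forces $m_1\le 0$ or $m_1\ge 1$, but the finiteness of $\H{1,\ve m}1$ is what matters — actually $m_1\le 0$ here suffices since indices lie in $\R\setminus(0,1)$, so no extraction of leading ones is needed).

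The genuinely delicate piece is the last formula, $\M{\frac{\H{m_1,\overline{\ve m}}x}{a_2-x}}n$ with $0<a_2<1$, because the Mellin transform here is the \emph{extended} one of (\ref{SSmeldef4}), namely $\int_0^1\frac{((x/a_2)^n-1)\H{\ve m}x}{a_2-x}dx$, not a naive integral. The plan is to peel off the first letter $m_1$ of $\H{m_1,\overline{\ve m}}x$ using $\H{m_1,\overline{\ve m}}x=\int_0^x f_{m_1}(y)\H{\overline{\ve m}}y\,dy$, substitute into the extended-Mellin integral, and interchange the order of integration; the inner integral over the original variable then telescopes into a partial sum $\sum_{i=1}^n\frac{(1/a_2)^i}{i}(\cdots)$ whose summand is itself a Mellin transform of $\H{\overline{\ve m}}x$ weighted by $f_{m_1}$. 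Because $f_{m_1}$ is $\frac{1}{|m_1|+x}$ when $m_1<0$, $\frac{1}{x}$ when $m_1=0$, $\frac{1}{1-x}$ when $m_1=1$, and $\frac{1}{m_1-x}$ when $m_1>1$, this produces precisely the four branches on the right-hand side. The main obstacle — and where care is required — is bookkeeping the subtracted ``$-1$'' in $(x/a_2)^n-1$ and the $\H{m_1,\overline{\ve m}}1$ boundary constant consistently: the $-1$ contributes a term $-\H{m_1,\overline{\ve m}}1\int_0^1\frac{(x/a_2)^n-1}{a_2-x}dx = -\H{m_1,\overline{\ve m}}1\,\S{1}{1/a_2;n}$ (using the definition (\ref{SSmeldef4}) of $\M{\frac{1}{a_2-x}}n$ already established as the second base case), while the genuine $x^n$ part yields the sum; in the $m_1=1$ case the constant $\H{1,\overline{\ve m}}1$ would be infinite, which is exactly why that branch has no $\S{1}{1/a_2;n}$ term and the two divergences cancel inside the telescoping sum, mirroring Remark~\ref{HSMelRemark}. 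Once the algebra is arranged so these cancellations are manifest, finiteness of all the right-hand constants follows from the finiteness statements for multiple polylogarithms at $1$ recorded after Definition~\ref{SShlogdef} together with $m_i\in\R\setminus(0,1)$.
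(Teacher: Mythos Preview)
Your approach is correct and essentially the same as the paper's. The paper also computes the base cases $\M{1/(a\pm x)}n$ via the geometric identity $\frac{x^n-c^n}{x-c}=\sum_{i=0}^{n-1}x^i c^{n-1-i}$, and handles the three ``easy'' recursive cases by the same integration by parts on $\int_0^1 x^{n-1}\H{c,\ve m}x\,dx$; for the $a_2\in(0,1)$ case the paper integrates by parts with $u=\H{m_1,\overline{\ve m}}x$ and $v=-\sum_{i=1}^n(x/a_2)^i/i$ (the antiderivative of $((x/a_2)^n-1)/(a_2-x)$), which is formally the same computation as your Fubini manoeuvre after writing $\H{m_1,\overline{\ve m}}x=\int_0^x f_{m_1}\H{\overline{\ve m}}$.

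One small correction to your parenthetical: indices in $\R\setminus(0,1)$ include $[1,\infty)$, so $m_1=1$ \emph{is} allowed, and the limiting argument is genuinely needed both in the $\M{\cdot/(1-x)}n$ formula and in the $m_1=1$ branch of the $a_2$ formula. The paper makes this explicit via $\epsilon\to 1^-$: the would-be-divergent boundary contribution $\H{1,\overline{\ve m}}{\epsilon}\bigl(\S{1}{1/a_2;n}-\S{1}{\epsilon/a_2;n}\bigr)$ vanishes because the second factor is $O(1-\epsilon)$ while $\H{1,\overline{\ve m}}{\epsilon}$ diverges only poly-logarithmically.
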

\begin{proof}
For $a>0$ we get
\begin{eqnarray*}
\M{\frac{1}{a+x}}n
	&=&\int_0^1{\frac{x^n}{a+x}dx}=\int_0^1{\frac{x^n-(-a)^n}{a+x}dx}+\int_0^1{\frac{(-a)^n}{a+x}dx}\\
	&=&\int_0^1{(-a)^{n-1}\sum_{i=0}^{n-1}{\frac{x^i}{(-a)^i}dx}}+(-a)^n\H{-a}{1}\\
	&=&(-a)^{n-1}\sum_{i=0}^{n-1}{\frac{1}{(-a)^i(i+1)}}+(-a)^n\H{-a}{1}\\
	&=&(-a)^n\S{1}{-\frac{1}{a};n}+(-a)^n\H{-a}{1}.
\end{eqnarray*}
If $a \geq 1$ this is equal to $(-a)^n\S{1}{-\frac{1}{a};n}-\S{1}{-\frac{1}{a};\infty}.$\\
For $0<a\leq1$ we get
\begin{eqnarray*}
\M{\frac{1}{a-x}}n
	&=&\frac{1}{a^n}\int_0^1{\frac{x^n-a^n}{a-x}dx}=-\frac{1}{a^n}\int_0^1{a^{n-1}\sum_{i=0}^{n-1}\left(\frac{x}{a}\right)^idx}\\
	&=&-\frac{1}{a}\sum_{i=0}^{n-1}\frac{1}{a^i}\int_0^1{x^idx}=-\frac{1}{a}\sum_{i=0}^{n-1}\frac{1}{a^i(1+i)}\\
	&=&-\S{1}{\frac{1}{a};n}.
\end{eqnarray*}
For $a>1$ we get
\begin{eqnarray*}
\M{\frac{1}{a-x}}n
	&=&\int_0^1{\frac{x^n}{a-x}dx}=\int_0^1{\frac{x^n-a^n}{a-x}dx}+a^n\int_0^1{\frac{1}{a-x}dx}\\
	&=&-a^n\S{1}{\frac{1}{a};n}+a^n\H{a}{1}=-a^n\S{1}{\frac{1}{a};n}+a^n\S{1}{\frac{1}{a};\infty}.
\end{eqnarray*}
For $a>0$ we get
\begin{eqnarray*}
\int_0^1 x^n \H{-a,\ve m}{x}dx
		&=&\frac{\H{-a,\ve m}1}{n+1}-\frac{1}{n+1}\int_0^1 \frac{x^{n+1}}{(a+x)}\H{\ve m}{x}dx\\
		&=&\frac{\H{-a,\ve m}1}{n+1}-\frac{1}{n+1}\M{\frac{\H{\ve m}{x}}{a+x}}{n+1}.
\end{eqnarray*}
Hence we get 
\begin{eqnarray*}
\M{\frac{\H{\ve m}{x}}{a+x}}{n+1} &=&-(n+1)\M{\H{-a,\ve m}{x}}{n}+\H{-a,\ve m}{1}.
\end{eqnarray*}
Similarly we get
\begin{eqnarray*}
\int_0^1 x^n \H{1,\ve m}{x}dx
		&=&\frac{1}{n+1}\lim_{\epsilon \rightarrow 1}\left(\epsilon^{n+1}\H{1,\ve m}{\epsilon}-\int_0^{\epsilon}
			{\frac{x^{n+1}-1}{1-x}\H{\ve m}x dx}+\H{1,\ve m}{\epsilon} \right)\\
		&=&\frac{1}{n+1}\int_0^{1}{\frac{x^{n+1}-1}{1-x}\H{\ve m}x dx}\\
		&=&\frac{1}{n+1}\M{\frac{\H{1,\ve m}{x}}{1-x}}{n+1}.
\end{eqnarray*}
And hence
\begin{eqnarray}
\M{\frac{\H{\ve m}{x}}{1-x}}{n+1} &=&-(n+1)\M{\H{1,\ve m}{x}}{n}.
\end{eqnarray}
For  $a_1\in (1,\infty)$ we get
\begin{eqnarray*}
\int_0^1 x^n \H{a_1,\ve m}{x}dx
		&=&\frac{\H{a_1,\ve m}1}{n+1}-\frac{1}{n+1}\int_0^1 \frac{x^{n+1}}{(a_1-x)}\H{\ve m}{x}dx\\
		&=&\frac{\H{a_1,\ve m}1}{n+1}-\frac{1}{n+1}\M{\frac{\H{\ve m}{x}}{a_1-x}}{n+1}.
\end{eqnarray*}
Hence we get 
\begin{eqnarray*}
\M{\frac{\H{\ve m}{x}}{a_1-x}}{n+1} &=&-(n+1)\M{\H{a_1,\ve m}{x}}{n}+\H{a_1,\ve m}{1}.
\end{eqnarray*}
For $a_2 \in (0,1)$ and $m_1<0$ we get
\begin{eqnarray*}
\M{\frac{\H{m_1,\ve m}{x}}{a_2-x}}n &=&\frac{1}{a_2^n}\int_0^1{\frac{(x^n-a_2^n)\H{m_1,\ve m}{x}}{a_2-x}dx}\\
	&=&\left.-\H{m_1,\ve m}{x}\sum_{i=1}^n{\frac{\left(\frac{x}{a_2}\right)^i}{i}}\right|_0^1+\int_0^1{\frac{\H{\ve m}x}{\abs{m_1}+x}\sum_{i=1}^n{\frac{\left(\frac{x}{a_2}\right)^i}{i}}dx}\\
	&=&-\H{m_1,\ve m}1\S{1}{\frac{1}{a_2};n}+\sum_{i=1}^n{\frac{\left(\frac{1}{a_2}\right)^i}{i}}\int_0^1{\frac{x^i\H{\ve m}x}{\abs{m_1}+x}dx}\\
	&=&-\H{m_1,\ve m}1\S{1}{\frac{1}{a_2};n}+\sum_{i=1}^n{\frac{\left(\frac{1}{a_2}\right)^i}{i}}\M{\frac{\H{\ve m}x}{\abs{m_1}+x}}{i},
\end{eqnarray*}
\begin{eqnarray*}
\M{\frac{\H{0,\ve m}{x}}{a_2-x}}n &=&\frac{1}{a_2^n}\int_0^1{\frac{(x^n-a_2^n)\H{0,\ve m}{x}}{a_2-x}dx}\\
	&=&\left.-\H{0,\ve m}{x}\sum_{i=1}^n{\frac{\left(\frac{x}{a_2}\right)^i}{i}}\right|_0^1+\int_0^1{\frac{\H{\ve m}x}{x}\sum_{i=1}^n{\frac{\left(\frac{x}{a_2}\right)^i}{i}}dx}\\
	&=&-\H{0,\ve m}1\S{1}{\frac{1}{a_2};n}+\sum_{i=1}^n{\frac{\left(\frac{1}{a_2}\right)^i}{i}}\int_0^1{\frac{x^i\H{\ve m}x}{x}dx}\\
	&=&-\H{0,\ve m}1\S{1}{\frac{1}{a_2};n}+\sum_{i=1}^n{\frac{\left(\frac{1}{a_2}\right)^i}{i}}\M{\frac{\H{\ve m}x}{x}}{i},
\end{eqnarray*}
and
\begin{eqnarray*}
\M{\frac{\H{1,\ve m}{x}}{a_2-x}}n &=&\frac{1}{a_2^n}\lim_{\epsilon \rightarrow 1} \int_0^{\epsilon}{\frac{(x^n-a_2^n)\H{m_1,\ve m}{x}}{a_2-x}dx}\\
	&=&\lim_{\epsilon \rightarrow 1}\left(\left.-\H{1,\ve m}{x}\sum_{i=1}^n{\frac{\left(\frac{x}{a_2}\right)^i}{i}}\right|_0^\epsilon+\int_0^\epsilon{\frac{\H{\ve m}x}{1-x}\sum_{i=1}^n{\frac{\left(\frac{x}{a_2}\right)^i}{i}}dx}\right)\\
	&=&\lim_{\epsilon \rightarrow 1}\left(-\H{1,\ve m}{\epsilon}\S{1}{\frac{\epsilon}{a_2};n}+\sum_{i=1}^n{\frac{\left(\frac{1}{a_2}\right)^i}{i}}\int_0^\epsilon{\frac{x^i\H{\ve m}x}{1-x}dx}\right)\\
	&=&\lim_{\epsilon \rightarrow 1}\Biggl(-\H{1,\ve m}{\epsilon}\S{1}{\frac{\epsilon}{a_2};n}+\S{1}{\frac{1}{a_2};n}\H{1,\ve m}{\epsilon}\\&&+\sum_{i=1}^n{\frac{\left(\frac{1}{a_2}\right)^i}{i}}\int_0^\epsilon{\frac{(x^i-1)\H{\ve m}x}{1-x}dx}\Biggr)\\
	&=&\sum_{i=1}^n{\frac{\left(\frac{1}{a_2}\right)^i}{i}}\M{\frac{\H{\ve m}x}{1-x}}{i}.
\end{eqnarray*}
For $a_2 \in (0,1)$ and $m_1>0$ we get
\begin{eqnarray*}
\M{\frac{\H{m_1,\ve m}{x}}{a_2-x}}n &=&\frac{1}{a_2^n}\int_0^1{\frac{(x^n-a_2^n)\H{m_1,\ve m}{x}}{a_2-x}dx}\\
	&=&\left.-\H{m_1,\ve m}{x}\sum_{i=1}^n{\frac{\left(\frac{x}{a_2}\right)^i}{i}}\right|_0^1+\int_0^1{\frac{\H{\ve m}x}{m_1-x}\sum_{i=1}^n{\frac{\left(\frac{x}{a_2}\right)^i}{i}}dx}\\
	&=&-\H{m_1,\ve m}1\S{1}{\frac{1}{a_2};n}+\sum_{i=1}^n{\frac{\left(\frac{1}{a_2}\right)^i}{i}}\int_0^1{\frac{x^i\H{\ve m}x}{m_1-x}dx}\\
	&=&-\H{m_1,\ve m}1\S{1}{\frac{1}{a_2};n}+\sum_{i=1}^n{\frac{\left(\frac{1}{a_2}\right)^i}{i}}\M{\frac{\H{\ve m}x}{m_1-x}}{i}.
\end{eqnarray*}
\end{proof}

Due to Lemma \ref{SSmelweighted} we are able to reduce the calculation of the Mellin transform of multiple polylogarithms with indices in 
$\R\setminus(0,1)$ weighted by $1/(a + x)$ or $1/(a-x)$ to the calculation of the Mellin transform of multiple polylogarithms with indices 
in $\R\setminus(0,1)$ which are not weighted, \ie to the calculation of expressions of the form $\M{\H{\ve m}{x}}n.$ To calculate $\M{\H{\ve m}{x}}n$ 
we proceed by recursion. First let us state the base cases, \ie the Mellin transforms of multiple polylogarithms with depth 1.
\begin{lemma}[Compare Lemma \ref{HSweight1mel}] For $n\in \N$, $a \in \R \setminus (0,1)$  we have
\begin{eqnarray*}
\M{\H{a}{x}}n=\left\{ 
		 \begin{array}{ll}
			\frac{-1}{(n+1)^2}\left(1+a^{n+1}(n+1)\S{1}{\frac{1}{a};n}\right.\\
				\hspace{1.2cm}\left.-(a^{n+1}-1)(n+1)\S{1}{\frac{1}{a};\infty}  \right),& \textnormal{if } a \leq -1
\vspace{0.3cm}\\
			\frac{-1}{(n+1)^2}\left(1+a^{n+1}(n+1)\S{1}{\frac{1}{a};n}\right.\\
				\hspace{1.2cm}\left.+(a^{n+1}-1)(n+1)\H{a}1\right),&  \textnormal{if } -1 < a < 0
\vspace{0.3cm}\\
			\frac{-1}{(n+1)^2},&  \textnormal{if } a=0
\vspace{0.3cm}\\
			\frac{1}{(n+1)^2}\left(1+(n+1)\S{1}{n}\right),&  \textnormal{if } a=1
\vspace{0.3cm}\\
			\frac{1}{(n+1)^2}\left(1+a^{n+1}(n+1)\S{1}{\frac{1}{a};n}\right.\\
				\hspace{1.2cm}\left.-(a^{n+1}-1)(n+1)\S{1}{\frac{1}{a};\infty}  \right),& \textnormal{if } a > 1 
		 \end{array} \right.
\end{eqnarray*}
where the arising constants are finite.
\label{SSweight1mel}
\end{lemma}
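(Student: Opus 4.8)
The plan is to prove Lemma~\ref{SSweight1mel} by direct computation of the integral $\M{\H{a}{x}}n=\int_0^1 x^n \H{a}{x}\,dx$ for each of the five ranges of $a$, always via integration by parts together with the weight-one cases of Lemma~\ref{SSmelweighted}. The unifying idea: $\H{a}{x}$ is an antiderivative of $f_a(x)$, so integration by parts converts $\int_0^1 x^n\H{a}{x}\,dx$ into a boundary term at $x=1$ plus an integral $-\frac{1}{n+1}\int_0^1 \frac{x^{n+1}}{\abs{a}-\sign{a}\,x}\,dx$, and the remaining integral is exactly (up to the weight-$1$ Mellin of the kernel) one of the quantities already evaluated in the proof of Lemma~\ref{SSmelweighted}. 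So the structure is: set up the IBP, identify the boundary term $\H{a}{1}$ (finite in each relevant range by the finiteness remarks following Definition~\ref{SShlogdef}), and substitute the appropriate closed form for $\M{\frac{1}{\abs a - \sign a\, x}}{n+1}$.

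Concretely, first I would dispose of the trivial case $a=0$: here $\H{0}{x}=\log x$, and $\int_0^1 x^n\log x\,dx = -\frac{1}{(n+1)^2}$ by elementary IBP, matching the stated formula. Next, for $a=1$: $\H{1}{x}=-\log(1-x)$, and IBP (being careful with the $\epsilon\to 1^-$ limit exactly as done in the proof of Lemma~\ref{SSmelweighted} for $\M{\frac{\H{\ve m}x}{1-x}}{n+1}$) gives $\int_0^1 x^n\H{1}{x}\,dx = \frac{1}{n+1}\int_0^1\frac{x^{n+1}-1}{1-x}\,dx = \frac{\S1{n+1}}{n+1}$; then rewriting $\S1{n+1}=\S1n+\frac{1}{n+1}$ yields $\frac{1}{(n+1)^2}(1+(n+1)\S1n)$ as claimed. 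For the three cases $a<-1$, $-1<a<0$, $a>1$ (all with $\H{a}{1}$ finite), I would carry out the IBP to get
$$
\M{\H{a}{x}}n=\frac{\H{a}{1}}{n+1}-\frac{1}{n+1}\M{\frac{1}{\abs a-\sign a\,x}}{n+1},
$$
and then plug in the weight-one formulas from Lemma~\ref{SSmelweighted}: for $a<0$ use $\M{\frac{1}{\abs a+x}}{n+1}$ with the two sub-cases ($\abs a\le 1$ giving the $\S1{-\frac1a;\infty}$ form, $\abs a > 1$, i.e.\ $a<-1$, giving the $\H{-a}1$ form — note the role reversal between the sign of $a$ here and in the $\frac{1}{a\pm x}$ lemma), and for $a>1$ use $\M{\frac{1}{a-x}}{n+1}=a^{n+1}(-\S1{\frac1a;n+1}+\S1{\frac1a;\infty})$. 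After substitution, a short manipulation ($\S1{\frac1a;n+1}=\S1{\frac1a;n}+\frac{(1/a)^{n+1}}{n+1}$, and $\H{a}{1}=\S1{\frac1a;\infty}$ when $a>1$, $\H{-a}1$ absorbed into the $\H{a}1$ term when $-1<a<0$, etc.) collapses everything into the stated closed forms.

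The main obstacle — and the only genuinely delicate point — is bookkeeping the constants and the boundary terms across the sign cases: the auxiliary function $f_a$ has kernel $\frac{1}{\abs a-\sign a\,x}$, so for $a<0$ one is really integrating against $\frac{1}{\abs a+x}$ and must match this to Lemma~\ref{SSmelweighted}'s $\M{\frac{1}{\abs a + x}}{n+1}$ with $\abs a$ in the role of its parameter, which flips the threshold behaviour ($a<-1 \leftrightarrow \abs a>1$); conversely for $a>1$ the kernel is $\frac{1}{a-x}$ with $a>1$, landing in the $1<a$ branch of $\M{\frac{1}{a-x}}{n}$. Getting the $(a^{n+1}-1)$ versus $a^{n+1}$ prefactors and the $+$/$-$ in front of $(n+1)\S1{\frac1a;\infty}$ exactly right requires tracking whether the boundary term $\H{a}1$ equals a sum value or a logarithm and whether it cancels against part of the Mellin transform of the kernel. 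I expect no conceptual difficulty beyond this careful case-matching; everything else is the same IBP that already appears in the proof of Lemma~\ref{HSweight1mel} and Lemma~\ref{SSmelweighted}, so I would largely defer to ``the computation is analogous to Lemma~\ref{HSweight1mel}, treating each of the five cases separately.''
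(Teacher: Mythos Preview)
Your proposal is correct and follows essentially the same approach as the paper: integration by parts in each of the five cases, producing the boundary term $\H{a}{1}/(n+1)$ and the Mellin transform of the kernel $f_a$, followed by the geometric-series evaluation. The paper carries out the kernel integral explicitly in each case rather than citing Lemma~\ref{SSmelweighted}, but this is only an organizational difference. One small slip: in your parenthetical you have the two $a<0$ sub-cases swapped---it is $\abs{a}\geq 1$ (i.e., $a\leq -1$) that produces the $\S{1}{1/a;\infty}$ form and $\abs{a}<1$ (i.e., $-1<a<0$) that leaves the $\H{a}{1}$ constant---but you already flag this bookkeeping as the delicate point, so the correction is immediate.
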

\begin{proof}
First let $a\leq-1.$ By integration by parts we get:
\begin{eqnarray*}
\M{\H{a}{x}}n&=&\int_0^1{x^n\H{a}x dx}=\left.\frac{x^{n+1}}{n+1}\H{a}{x}\right|_0^1-\int_0^1{\frac{x^{n+1}}{n+1}\frac{1}{\abs{a}+x}dx}\\
&=&\frac{\H{a}1}{n+1}-\frac{1}{n+1}\left(\int_0^1{\frac{x^{n+1}-a^{n+1}}{\abs{a}+x}dx}+a^{n+1}\int_0^1{\frac{1}{\abs{a}+x}dx}\right)\\
&=&\frac{\H{a}1}{n+1}(1-a^{n+1})-\frac{1}{n+1}\int_0^1{a^n\sum_{i=0}^n\frac{x^{i}}{a^i}dx}\\
&=&\frac{\H{a}1}{n+1}(1-a^{n+1})-\frac{a^n}{n+1}\sum_{i=0}^n\frac{1}{a^i(i+1)}\\
&=&\frac{-\S{1}{\frac{1}{a};\infty}}{n+1}(1-a^{n+1})-\frac{a^{n+1}}{n+1}\S{1}{\frac{1}{a};n+1}\\
&=&\frac{-1}{n+1}\left(\frac{1}{n+1}+a^{n+1}\S{1}{\frac{1}{a};n}-(a^{n+1}-1)\S{1}{\frac{1}{a};\infty}\right).
\end{eqnarray*}
For $-1<a<0$ we obtain:
\begin{eqnarray*}
\M{\H{a}{x}}n&=&\int_0^1{x^n\H{a}x dx}=\left.\frac{x^{n+1}}{n+1}\H{a}{x}\right|_0^1-\int_0^1{\frac{x^{n+1}}{n+1}\frac{1}{\abs{a}+x}dx}\\
&=&\frac{\H{a}1}{n+1}-\frac{1}{n+1}\left(\int_0^1{\frac{x^{n+1}-a^{n+1}}{\abs{a}+x}dx}+a^{n+1}\int_0^1{\frac{1}{\abs{a}+x}dx}\right)\\
&=&\frac{\H{a}1}{n+1}(1-a^{n+1})-\frac{1}{n+1}\int_0^1{a^n\sum_{i=0}^n\frac{x^{i}}{a^i}dx}\\
&=&\frac{\H{a}1}{n+1}(1-a^{n+1})-\frac{a^n}{n+1}\sum_{i=0}^n\frac{1}{a^i(i+1)}\\
&=&\frac{\H{a}1}{n+1}(1-a^{n+1})-\frac{a^{n+1}}{n+1}\S{1}{\frac{1}{a};n+1}\\
&=&\frac{-1}{(n+1)^2}\left(1+a^{n+1}(n+1)\S{1}{\frac{1}{a};n}+(a^{n+1}-1)(n+1)\H{a}1\right).
\end{eqnarray*}
For $a=0$ it follows that
\begin{eqnarray}
\M{\H{0}{x}}n&=&\int_0^1{x^n\H{0}x dx}=\left.\frac{x^{n+1}}{n+1}\H{0}{x}\right|_0^1-\int_0^1{\frac{x^{n+1}}{n+1}\frac{1}{x}dx}\nonumber\\
&=&-\int_0^1{\frac{x^n}{n+1}dx}=\left.-\frac{x^{n+1}}{(n+1)^2}\right|_0^1=-\frac{1}{(n+1)^2}.\nonumber
\end{eqnarray}
For $a=1$ we get:
\begin{eqnarray}
\int_0^1 x^n \H{1}{x}dx 
		&=&\lim_{\epsilon \rightarrow 1} \int_0^{\epsilon} x^n \H{1}{x}dx \nonumber\\
		&=&\lim_{\epsilon \rightarrow 1}\left( \left. \frac{x^{n+1}}{n+1}\H{1}{x}\right|_0^{\epsilon}-\int_0^{\epsilon}
			\frac{x^{n+1}}{(n+1)(1-x)}dx \right) \nonumber\\
		&=&\frac{1}{n+1}\lim_{\epsilon \rightarrow 1}\left(\epsilon^{n+1}\H{1}{\epsilon}-\int_0^{\epsilon}
			{\frac{x^{n+1}-1}{1-x}x dx}-\H{1}{\epsilon} \right) \nonumber\\
		&=&\frac{1}{n+1}\left(\lim_{\epsilon \rightarrow 1}(\epsilon^{n+1}-1)\H{1}{\epsilon}+\lim_{\epsilon \rightarrow 1}
			{\int_0^{\epsilon}\sum_{i=0}^n{x^i}dx}\right)\nonumber\\
		&=&\frac{1}{n+1}\left(0+\int_0^{1}\sum_{i=0}^n{x^i}dx\right)\nonumber\\
		&=&\frac{1}{n+1} \sum_{i=0}^n{\frac{1}{i+1}}=\frac{1}{(n+1)^2}\left(1+(n+1)\S{1}{n}\right). \nonumber
\end{eqnarray}
Finally for $a > 1$ we conclude that
\begin{eqnarray*}
\M{\H{a}{x}}n&=&\int_0^1{x^n\H{a}x dx}=\left.\frac{x^{n+1}}{n+1}\H{a}{x}\right|_0^1-\int_0^1{\frac{x^{n+1}}{n+1}\frac{1}{a-x}dx}\\
&=&\frac{\H{a}1}{n+1}-\frac{1}{n+1}\left(\int_0^1{\frac{x^{n+1}-a^{n+1}}{a-x}dx}+a^{n+1}\int_0^1{\frac{1}{a-x}dx}\right)\\
&=&\frac{\H{a}1}{n+1}(1-a^{n+1})+\frac{1}{n+1}\int_0^1{a^n\sum_{i=0}^n\frac{x^{i}}{a^i}dx}\\
&=&\frac{\H{a}1}{n+1}(1-a^{n+1})+\frac{a^n}{n+1}\sum_{i=0}^n\frac{1}{a^i(i+1)}\\
&=&\frac{\S{1}{\frac{1}{a};\infty}}{n+1}(1-a^{n+1})+\frac{a^{n+1}}{n+1}\S{1}{\frac{1}{a};n+1}\\
&=&\frac{1}{n+1}\left(\frac{1}{n+1}+a^{n+1}\S{1}{\frac{1}{a};n}-(a^{n+1}-1)\S{1}{\frac{1}{a};\infty}  \right).
\end{eqnarray*}
\end{proof}

The higher depth results for $\M{\H{\ve m}{x}}n$ can now be obtained by recursion:
\begin{lemma}[Compare Lemma \ref{HSmelnotweighted}] For $n\in \N$, $a \in \R\setminus(0,1)$ and $\ve m \in (\R\setminus(0,1))^k$,
\begin{eqnarray*}
\M{\H{a,\ve m}{x}}n=\left\{ 
		\begin{array}{ll}
			\frac{(1-a^{n+1})\H{a,\ve m}1}{n+1}-\frac{a^{n}}{n+1}\sum_{i=0}^{n}{\frac{\M{\H{\ve m}{x}}i}{a^i}},& \textnormal{if } a < 0\\
			\frac{\H{0,\ve m}1}{n+1}-\frac{1}{n+1} \M{\H{\ve m}{x}}n,&  \textnormal{if } a=0  \\
			\frac{1}{n+1} \sum_{i=0}^n{\M{\H{\ve m}{x}}n},&  \textnormal{if } a=1  \\
			\frac{(1-a^{n+1})\H{a,\ve m}1}{n+1}+\frac{a^{n}}{n+1}\sum_{i=0}^{n}{\frac{\M{\H{\ve m}{x}}i}{a^i}},& \textnormal{if } a > 1 
		 \end{array} \right.
\end{eqnarray*}
where the arising constants are finite.
\label{SSmelnotweighted}
\end{lemma}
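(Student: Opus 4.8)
The plan is to proceed exactly as in the proofs of Lemma~\ref{HSmelnotweighted} and of Lemma~\ref{SSweight1mel}: integrate $\int_0^1 x^n\H{a,\ve m}{x}\,dx$ by parts once, using the differentiation rule $\frac{d}{dx}\H{a,\ve m}{x}=f_a(x)\H{\ve m}{x}$, and then rewrite the leftover integral in terms of the lower--depth Mellin transforms $\M{\H{\ve m}{x}}{i}$ (which are available through the recursion) together with the single constant $\H{a,\ve m}{1}$. Since $f_a$ has the four distinct shapes $\tfrac1x$ (when $a=0$) and $\tfrac1{\abs a-\sign a\,x}$ otherwise, splitting into the four cases $a<0$, $a=0$, $a=1$, $a>1$ exactly as in the statement is the natural organisation, and each case is a short self-contained computation.

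The case $a=0$ is immediate: integration by parts gives $\int_0^1 x^n\H{0,\ve m}{x}\,dx=\tfrac1{n+1}\H{0,\ve m}{1}-\tfrac1{n+1}\int_0^1 x^n\H{\ve m}{x}\,dx$, which is the claimed identity. For $a<0$ one has $f_a(x)=\tfrac1{\abs a+x}$, regular on $[0,1]$, so the boundary term is just $\tfrac1{n+1}\H{a,\ve m}{1}$; writing $\tfrac{x^{n+1}}{\abs a+x}=a^n\sum_{i=0}^n(x/a)^i+\tfrac{a^{n+1}}{\abs a+x}$ (using $\abs a+x=x-a$ because $a<0$) and recognising $\int_0^1\tfrac{\H{\ve m}{x}}{\abs a+x}\,dx=\H{a,\ve m}{1}$ turns the leftover integral into $a^n\sum_{i=0}^n a^{-i}\M{\H{\ve m}{x}}{i}+a^{n+1}\H{a,\ve m}{1}$, and collecting the $\H{a,\ve m}{1}$ terms gives the stated formula. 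The case $a>1$ is completely parallel with $f_a(x)=\tfrac1{a-x}$ (again regular on $[0,1]$ since $a>1$) and the splitting $\tfrac{x^{n+1}}{a-x}=-a^n\sum_{i=0}^n(x/a)^i+\tfrac{a^{n+1}}{a-x}$; only the sign in front of the sum flips.

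The one case needing a little care is $a=1$, where $f_1(x)=\tfrac1{1-x}$ is singular at $x=1$. Here I would first replace $\int_0^1$ by $\lim_{\epsilon\to1^-}\int_0^{\epsilon}$, integrate by parts on $[0,\epsilon]$, and use $\tfrac{x^{n+1}}{1-x}=-\sum_{i=0}^n x^i+\tfrac1{1-x}$ together with $\int_0^{\epsilon}\tfrac{\H{\ve m}{x}}{1-x}\,dx=\H{1,\ve m}{\epsilon}$, exactly as in the $a=1$ subcase of the proof of Lemma~\ref{SSweight1mel}; this leaves $\int_0^{\epsilon} x^n\H{1,\ve m}{x}\,dx=\tfrac{\epsilon^{n+1}-1}{n+1}\H{1,\ve m}{\epsilon}+\tfrac1{n+1}\sum_{i=0}^n\int_0^{\epsilon} x^i\H{\ve m}{x}\,dx$. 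The only delicate point of the whole lemma is then to argue that $(\epsilon^{n+1}-1)\H{1,\ve m}{\epsilon}\to0$ as $\epsilon\to1^-$; this holds because $\H{1,\ve m}{\epsilon}$ grows at most like a fixed power of $\log(1-\epsilon)$ near $1$ (the analogue for these multiple polylogarithms of Remark~\ref{HSremlead1}), which is dominated by the vanishing factor $\epsilon^{n+1}-1$. Letting $\epsilon\to1^-$ yields $\tfrac1{n+1}\sum_{i=0}^n\M{\H{\ve m}{x}}{i}$. Finally I would remark that every constant $\H{a,\ve m}{1}$ occurring is finite — trivially for $a\leq0$ and $a>1$, since $f_a$ is regular on $[0,1]$, and in the $a=1$ case no such constant survives — and that the lower--depth quantities $\M{\H{\ve m}{x}}{i}$ are well defined by the recursion, so the scheme terminates.
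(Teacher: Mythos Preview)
Your proposal is correct and follows essentially the same approach as the paper: integration by parts in each case, the geometric-series splitting $\tfrac{x^{n+1}-a^{n+1}}{x-a}=a^n\sum_{i=0}^n(x/a)^i$ together with the recognition $\int_0^1 f_a(x)\H{\ve m}{x}\,dx=\H{a,\ve m}{1}$ for $a<0$ and $a>1$, and the $\epsilon\to1^-$ limit with $(\epsilon^{n+1}-1)\H{1,\ve m}{\epsilon}\to0$ for $a=1$. The only cosmetic difference is that the paper writes the split as $\int_0^1\tfrac{x^{n+1}-a^{n+1}}{\abs a\mp x}\H{\ve m}{x}\,dx+a^{n+1}\int_0^1\tfrac{\H{\ve m}{x}}{\abs a\mp x}\,dx$ directly, whereas you phrase the same decomposition as a pointwise identity for $\tfrac{x^{n+1}}{\abs a\mp x}$.
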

\begin{proof}
We get the following results by integration by parts. For $a<0$ we get:
\begin{eqnarray*}
&&\M{\H{a,\ve m}{x}}n
=\int_0^1{x^n\H{a,\ve m}x dx}=\left.\frac{x^{n+1}}{n+1}\H{a,\ve m}{x}\right|_0^1-\int_0^1{\frac{x^{n+1}\H{\ve m}x}{n+1}\frac{1}{\abs{a}+x}dx}\\
&&\hspace{2cm}=\frac{\H{a,\ve m}1}{n+1}-\frac{1}{n+1}\left(\int_0^1{\frac{x^{n+1}-a^{n+1}}{\abs{a}+x}\H{\ve m}x dx}+a^{n+1}\int_0^1{\frac{\H{\ve m}x}{\abs{a}+x}dx}\right)\\
&&\hspace{2cm}=\frac{\H{a,\ve m}1}{n+1}(1-a^{n+1})-\frac{1}{n+1}\int_0^1{a^n\sum_{i=0}^n\frac{x^{i}\H{\ve m}x}{a^i}dx}\\
&&\hspace{2cm}=\frac{\H{a,\ve m}1}{n+1}(1-a^{n+1})-\frac{a^n}{n+1}\sum_{i=0}^n\frac{1}{a^i}\M{\H{\ve m}x}i.
\end{eqnarray*}
For $a=0$ we get:
\begin{eqnarray*}
\int_0^1 x^n \H{0,\ve m}{x}dx &=& \left. \frac{x^{n+1}}{n+1}\H{0,\ve m}{x}\right|_0^1-\int_0^1 \frac{x^n}{n+1}\H{\ve m}{x}dx\\
				&=&\frac{\H{0,\ve m}1}{n+1}-\frac{1}{n+1} \M{\H{\ve m}{x}}n.
\end{eqnarray*}
For $a=1$ we it follows that
\begin{eqnarray}
\int_0^1 x^n \H{1,\ve m}{x}dx 
		&=&\lim_{\epsilon \rightarrow 1} \int_0^{\epsilon} x^n \H{1,\ve m}{x}dx \nonumber\\
		&=&\lim_{\epsilon \rightarrow 1}\left( \left. \frac{x^{n+1}}{n+1}\H{1,\ve m}{x}\right|_0^{\epsilon}-\int_0^{\epsilon}
			\frac{x^{n+1}}{(n+1)(1-x)}\H{\ve m}{x}dx \right) \nonumber\\
		&=&\frac{1}{n+1}\lim_{\epsilon \rightarrow 1}\left(\epsilon^{n+1}\H{1,\ve m}{\epsilon}-\int_0^{\epsilon}
			{\frac{x^{n+1}-1}{1-x}\H{\ve m}x dx}-\H{1,\ve m}{\epsilon} \right) \nonumber\\
		&=&\frac{1}{n+1}\left(\lim_{\epsilon \rightarrow 1}(\epsilon^{n+1}-1)\H{1,\ve m}{\epsilon}+\lim_{\epsilon \rightarrow 1}
			{\int_0^{\epsilon}\sum_{i=0}^n{x^i\H{\ve m}x}dx}\right)\nonumber\\
		&=&\frac{1}{n+1}\left(0+\int_0^{1}\sum_{i=0}^n{x^i\H{\ve m}x}dx\right)\nonumber\\
		&=&\frac{1}{n+1} \sum_{i=0}^n{\M{\H{\ve m}{x}}i}. \nonumber
\end{eqnarray}
For $a>1$ we conclude that
\begin{eqnarray*}
&&\M{\H{a,\ve m}{x}}n
=\int_0^1{x^n\H{a,\ve m}x dx}=\left.\frac{x^{n+1}}{n+1}\H{a,\ve m}{x}\right|_0^1-\int_0^1{\frac{x^{n+1}\H{\ve m}x}{n+1}\frac{1}{a-x}dx}\\
&&\hspace{2cm}=\frac{\H{a,\ve m}1}{n+1}-\frac{1}{n+1}\left(\int_0^1{\frac{x^{n+1}-a^{n+1}}{a-x}\H{\ve m}x dx}+a^{n+1}\int_0^1{\frac{\H{\ve m}x}{a-x}dx}\right)\\
&&\hspace{2cm}=\frac{\H{a,\ve m}1}{n+1}(1-a^{n+1})+\frac{1}{n+1}\int_0^1{a^n\sum_{i=0}^n\frac{x^{i}\H{\ve m}x}{a^i}dx}\\
&&\hspace{2cm}=\frac{\H{a,\ve m}1}{n+1}(1-a^{n+1})+\frac{a^n}{n+1}\sum_{i=0}^n\frac{1}{a^i}\M{\H{\ve m}x}i.
\end{eqnarray*}
\end{proof}

Using Lemma~\ref{SSmelweighted} together with Lemma~\ref{SSweight1mel} and Lemma~\ref{SSmelnotweighted} we are able to calculate the Mellin transform of 
multiple polylogarithms with indices in $\R \setminus (0,1).$ In addition, the polylogarithms can be weighted by $1/(a + x)$ or $1/(a-x)$ for $a\in\R.$
These Mellin transforms can be expressed using S-sums.
If we restrict the indices of the multiple polylogarithms to $\R \setminus ((-1,0)\cup (0,1)),$ it turns out that these Mellin transforms can be expressed using a subset 
of the S-sums. This subset consists of S-sums 
$\S{a_1,a_2,\ldots,a_l}{b_1,b_2,\ldots,b_k;n}$ with $(a_1,\ldots,a_k)\in \Z^k,$ $(b_2,\ldots,b_k)\in ([-1,1]\setminus \{0\})^{k-1}$ and $b_1\in \R\setminus \{0\}.$ 
We will call these sums $\bar{S}$-sums, see the following definition.

\begin{definition}[$\bar{S}$-sums and $\bar{H}$-multiple polylogarithms]
We define the set
\begin{eqnarray*}
&&\bar{S}:=\biggl\{\S{a_1,a_2,\ldots,a_k}{b_1,b_2,\ldots,b_k;n}\biggl|a_i\in \Z^* \textnormal{ for } 1\leq i\leq k ;b_1\in \R^*;\\
  &&\hspace{5.8cm} b_i \in [-1,1]\setminus \{0\}\textnormal{ for } 2\leq i\leq k\biggr\}
\end{eqnarray*}
and call the elements $\bar{S}$-sums and the set
$$
\bar{H}:=\left\{\H{m_1,m_2,\ldots,m_k}x\left| m_i \in \R \setminus ((-1,0)\cup (0,1)) \textnormal{ for } 1\leq i\leq k  \right.\right\}
$$
and call the elements $\bar{H}$-multiple polylogarithms.
\label{SSsubset}
\end{definition}

\begin{remark}
 Restricting to this class, we are able to construct the inverse Mellin transform.
\end{remark}

\subsection{The Inverse Mellin Transform of \texorpdfstring{$\bar{S}$}{S-bar}-sums}
\label{SSInvMellin}
As in the cases of harmonic sums, we can define an order on S-sums.
\begin{definition}[Order on S-sums]
Let $\S{\ve m_1}{\ve b_1; n}$ and $\S{\ve m_2}{\ve b_2;n}$ be S-sums with weights $w_1$, $w_2$ and depths $d_1$ and $d_2,$ respectively. Then
$$
		  	\begin{array}{ll}
						\S{\ve m_1}{\ve b_1; n} \prec \S{\ve m_2}{\ve b_2;n}, \ \textnormal{if} \ w_1<w_2, & \textnormal{or } \ (w_1=w_2 \ \textnormal{and} \ d_1<d_2).
			\end{array}
$$
We say that an S-sum $s_1$ is more complicated than an S-sum $s_2$ if $s_2 \prec s_1$.
For a set of S-sums we call an S-sum {\upshape most complicated} if it is a largest element with respect to $\prec$. 
\label{SSsord} 
\end{definition}

The following proposition guarantees that there is only one {\itshape most complicated} S-sum in the the Mellin transform of 
an $\bar{H}$-multiple polylogarithm.
\begin{prop}[Compare \cite{Ablinger2009}]
In the Mellin transform of a multiple polylogarithm $\H{\ve m}x \in \bar{H}$ weighted by $1/(c-x)$ or 
$1/(c+x)$ ($c\in\R^*$) there is only one {\upshape most complicated} 
S-sum $\S{\ve a}{\ve b; n}\in \bar{S},\ie$
\begin{eqnarray}\label{SSsinglmostcomp1}
\M{\frac{\H{\ve m}{x}}{c\pm x}}{n}=p\cdot\S{\ve a}{\ve b; n}+t
\end{eqnarray}
where $p\in\R^*$ and all S-sums in $t$ occur linearly and are smaller then $\S{\ve a}{\ve b; n}$.
\label{SSsinglmostcomp}
\end{prop}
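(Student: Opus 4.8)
The plan is to prove Proposition~\ref{SSsinglmostcomp} by induction on the depth $k$ of the $\bar H$-multiple polylogarithm $\H{\ve m}x=\H{m_1,\dots,m_k}x$, following exactly the recursive structure established by Lemmas~\ref{SSmelweighted}, \ref{SSweight1mel} and \ref{SSmelnotweighted}. The statement to be maintained along the induction is the one in \eqref{SSsinglmostcomp1}: the Mellin transform $\M{\frac{\H{\ve m}x}{c\pm x}}n$ is a linear combination of $\bar S$-sums in which there is a unique $\prec$-maximal one, it appears with nonzero rational-times-$\R$ coefficient $p$, and every other $\bar S$-sum occurs linearly and is strictly smaller. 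The base case $k=0$ (i.e. $h(x)=1$) is handled directly: by the first three cases of Lemma~\ref{SSmelweighted} the Mellin transform of $1/(c\pm x)$ is (up to a constant and a power-of-$c$ prefactor) a single depth-one sum $\S1{\pm 1/c;n}$, which is visibly the unique most complicated $\bar S$-sum present.

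For the inductive step I would split according to the leading index $m_1$ of $\ve m$, mirroring Lemma~\ref{SSmelweighted}. First peel off one letter: $\M{\frac{\H{m_1,\overline{\ve m}}x}{c\pm x}}n$ is rewritten, using the appropriate case of Lemma~\ref{SSmelweighted}, in terms of $\M{\H{m_1,\overline{\ve m}}x}n$ (the unweighted transform) plus finite constants; and then $\M{\H{m_1,\overline{\ve m}}x}n$ is expressed by Lemma~\ref{SSmelnotweighted} as $\frac{1}{n+1}\sum_{i=0}^n (\text{coeff})^{?}\M{\H{\overline{\ve m}}x}i$ (with the $a<0$, $a=0$, $a=1$, $a>1$ variants) plus constants. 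By induction, $\M{\H{\overline{\ve m}}x}i$ — or rather its weighted relatives — has a unique most complicated $\bar S$-sum $\S{\ve a'}{\ve b';i}\in\bar S$; applying the outer operation $\frac1{n+1}\sum_{i=0}^n b_1^i/i^{0}\,(\cdot)$ or $\sum_{i=1}^n \frac{(1/a_2)^i}{i}(\cdot)$ prepends one further index and one further argument, producing a sum of strictly larger weight, hence strictly $\prec$-larger. The crux is then a bookkeeping lemma: prepending one letter to a linear combination with a unique maximal element again yields a unique maximal element, because the weight of every summand increases by the same amount and depth by one, so the partial order is preserved and no collisions at the top can occur (and the leading coefficient $p$ stays nonzero, being a nonzero rational times the previous $p$). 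One must also verify that the arising constants — the values $\H{\cdot}1$ and the infinite $\bar S$-sums $\S{1}{1/a;\infty}$ — genuinely contribute to $t$ and not to the leading sum, which is immediate since they carry no $n$-dependence of $\bar S$-sum type.

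The main obstacle, and the place where care is required, is the case $m_1=1$ (and, relatedly, indices $m_1>1$ and $m_1<0$ that generate the weighted Mellin transforms $\M{\frac{\H{\overline{\ve m}}x}{1-x}}i$, $\M{\frac{\H{\overline{\ve m}}x}{m_1-x}}i$, etc. under the summation sign): here the recursion of Lemma~\ref{SSmelweighted} in the $a_2\in(0,1)$ branch reintroduces weighted polylogarithms of the \emph{same} depth as $\overline{\ve m}$, not smaller, so the naive induction on depth does not close. The fix is to induct on a refined measure — the pair (depth, number of leading letters equal to $1$ read off appropriately), or equivalently to first reduce the leading block of $1$'s just as trailing ones are extracted for harmonic sums (Section~\ref{HShpldefsec}), so that after the reduction the inner polylogarithm to which Lemma~\ref{SSmelweighted} is reapplied is genuinely simpler; each such reduction strictly decreases the refined measure while preserving the uniqueness-of-maximum property by the same bookkeeping lemma. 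Once that termination argument is set up, the remaining verification is the routine check that in every branch of Lemmas~\ref{SSweight1mel} and \ref{SSmelnotweighted} the newly created top sum lies in $\bar S$ — i.e. its first argument may be an arbitrary nonzero real while all subsequent arguments lie in $[-1,1]\setminus\{0\}$ — which follows because the outer operation only ever prepends the argument $\pm 1/c$ or $\pm1/a_2$ with $c,a_2$ of absolute value governed by the admissible index set $\R\setminus((-1,0)\cup(0,1))$, and the formula $c_i=b_{i-1}/b_i$ pattern keeps later arguments in the unit interval.
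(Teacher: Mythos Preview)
Your inductive approach via Lemmas~\ref{SSmelweighted}, \ref{SSweight1mel} and~\ref{SSmelnotweighted} is exactly the natural route, and it is the one the paper has in mind: the paper does not actually write out a proof but refers to the harmonic-sum case in~\cite{Ablinger2009}, where precisely this kind of induction (tracking the single most complicated sum through the recursion) is carried out. Your bookkeeping observation --- that prepending one letter increases weight and depth uniformly, hence preserves the $\prec$-order and in particular the uniqueness of the maximal element --- is the heart of the argument and is correct.

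However, the ``main obstacle'' you identify in the $a_2\in(0,1)$ branch is a phantom, and the refined measure you propose is unnecessary. In that branch of Lemma~\ref{SSmelweighted} the weighted Mellin transform of $\H{m_1,\overline{\ve m}}x$ (polylogarithm of depth~$k$) is expressed through weighted Mellin transforms of $\H{\overline{\ve m}}x$, which has depth $k-1$. You write that these have ``the same depth as $\overline{\ve m}$, not smaller'' --- true, but $\overline{\ve m}$ already has depth $k-1<k$, so the depth \emph{has} strictly decreased and the plain induction on depth closes. Moreover, since the indices of a $\bar H$-polylogarithm satisfy $m_1\in\{0,1\}\cup(-\infty,-1]\cup(1,\infty)$, the new weight $\frac{1}{|m_1|+x}$, $\frac{1}{x}$, $\frac{1}{1-x}$ or $\frac{1}{m_1-x}$ that appears is never again of the form $\frac{1}{a_2-x}$ with $a_2\in(0,1)$; after one step you are in the ``easy'' branches that convert to an unweighted Mellin transform and unwind via Lemma~\ref{SSmelnotweighted}. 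So no extraction of leading ones or secondary induction is needed.

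A minor slip worth correcting: in your description of the inductive step you say Lemma~\ref{SSmelweighted} rewrites $\M{\frac{\H{m_1,\overline{\ve m}}x}{c\pm x}}n$ ``in terms of $\M{\H{m_1,\overline{\ve m}}x}n$''. In the $a+x$, $1-x$ and $a_1-x$ cases the lemma actually produces $\M{\H{-a,\ve m}x}{n-1}$ (respectively $\H{1,\ve m}$, $\H{a_1,\ve m}$) --- one \emph{more} letter, not one fewer. The subsequent reduction to shorter index vectors happens entirely inside Lemma~\ref{SSmelnotweighted}. This does not affect the correctness of your plan, but the bookkeeping is cleaner if you first establish the uniqueness-of-maximum property for the \emph{unweighted} transforms by induction on depth, and then deduce the weighted case from Lemma~\ref{SSmelweighted} (either reducing to unweighted depth $k+1$, or recursively to weighted depth $k-1$).
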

The {\itshape most complicated} S-sum in the Mellin transform of a $\bar{H}$-multiple polylogarithm can be calculated using Algorithm \ref{SSmostcompsum}. Since all S-sums that occur in the 
Mellin transform are $\bar{S}$-sums, it is in fact a $\bar{S}$-sum. But even the reverse direction is possible: given an $\bar{S}$-sum, we can find an 
$\bar{H}$-multiple polylogarithm weighted by $1/(c-x)$ or $1/(c+x)$ such that the $\bar{S}$-sum is the {\itshape most complicated} S-sum in the Mellin transform of this weighted 
multiple polylogarithm. We will call this weighted multiple polylogarithm the {\itshape most complicated} weighted multiple polylogarithm in the inverse Mellin transform of the $\bar{S}$-sum. We can 
use Algorithm \ref{SSmostcomphlog} to compute the most-complicated multiple polylogarithm in the inverse Mellin transform of a
$\bar{S}$-sum. The proof of Proposition \ref{SSsinglmostcomp}, in particular the correctness of Algorithms \ref{SSmostcompsum} and \ref{SSmostcomphlog} are similar to the proof given in \cite{Ablinger2009} and is thus skipped here.

The computation of the inverse Mellin transform of a $\bar{S}$-sum $\S{a_1,a_2,\ldots,a_l}{b_1,b_2,\ldots,b_l;n}$ 
now is straightforward just as in Section \ref{HSInvMellin} (compare\cite{Ablinger2009,Remiddi2000}):
\begin{itemize}
	\item Locate the most complicated $\bar{S}$-sum.
	\item Construct the corresponding most complicated multiple polylogarithm.
	\item Add it and subtract it.
	\item Perform the Mellin transform to the subtracted version. This will cancel the original $\bar{S}$-sum.
	\item Repeat the above steps until there are no more $\bar{S}$-sums.
	\item Let $c$ be the remaining constant term; replace it by $\textnormal{M}^{-1}(c)$, or equivalently, multiply $c$ by $\delta(1-x)$.
\end{itemize}

\begin{algorithm}
\caption{The Most-Complicated S-Sum}
\label{SSmostcompsum}
\begin{algorithmic}
\State \bfseries input:\ \normalfont $\frac{\H{m_1,\ldots,m_k}x}{d+s*x},$ with $d\in \R^+$, $s\in\{-1,1\}$ and $\H{m_1,\ldots,m_k}x\in \bar{H}(x)$
\State \bfseries output:\ \normalfont $ p\cdot \S{\ve a}{\ve b;n}$ as in (\ref{SSsinglmostcomp1}) 
\Procedure{MostCompS}{$\frac{\H{m_1,\ldots,m_k}x}{d+s*x}$}
\State $(a_1,\ldots,a_k,a_{k+1})=(\sign{m_1},\ldots,\sign{m_k},1)$
\For{$i=k$ to $1$}
	\If{$a_i \neq 0$}
		\State $j=i-1$
		\While{$j>0$ and $a_j=0$}
			\State $j=j-1$
		\EndWhile
		\If{$j>0$ and $a_j<0$}
			\State $a_i=-a_i$
		\EndIf
		\If{$j=0$ and $s=1$}
			\State $a_i=-a_i$
		\EndIf
	\EndIf
\EndFor
\State delete the zeros in $(\abs{m_1},\ldots,\abs{m_k},1);$ let $(b_1,\ldots,b_{\bar{k}})$ be the result
\For{$i=\bar{k}$ to $2$}
	\State $b_i=\frac{b_i}{b_{i-1}}$
\EndFor
\State $b_1=\frac{b_1}{d}$
\State $l=$ number of entries $\leq 0$ in $(m_1,\ldots,m_k)$
\State let $\S{\bar{a}_1,\ldots,\bar{a}_{\bar{k}}}{n}$ be the sum that is denoted by the word $a_1 \cdots a_k$ (see Section \ref{HSalgrel})
\State $(b_1,\ldots,b_{\bar{k}})=(\sign{\bar{a}_1}b_1,\ldots,\sign{\bar{a}_{\bar{k}}}b_{\bar{k}})$
\State $(\bar{a}_1,\ldots,\bar{a}_{\bar{k}})=(\abs{\bar{a}_1},\ldots,\abs{\bar{a}_{\bar{k}}})$
\State \textbf{return} $s\cdot(-1)^l\cdot\S{\bar{a}_1,\ldots,\bar{a}_{\bar{k}}}{b_1,\ldots,b_{\bar{k}};n}$
\EndProcedure
\end{algorithmic}
\end{algorithm}

\begin{algorithm}
\caption{The Most-Complicated Harmonic Polylogarithm}
\label{SSmostcomphlog}
\begin{algorithmic}
\State \bfseries input:\ \normalfont $\S{a_1,\ldots,a_k}{b_1,\ldots,b_k;n},$ with $a_i \in \Z$ and $b_i \in \R^*$
\State \bfseries output:\ \normalfont the most-complicated multiple polylogarithm in the inverse Mellin transform of $\S{a_1,\ldots,a_k}{b_1,\ldots,b_k;n}$ together with a possibly
arising factor $(-1)^n$
\Procedure{MostCompH}{$\S{a_1,\ldots,a_k}{b_1,\ldots,b_k;n}$}
\State $(a_1,\ldots,a_k)=(\sign{b_1}a_1,\ldots,\sign{b_k}a_k)$
\State $l=$ number of entries $< 0$ in $(a_1,\ldots,a_k)$
\State $v=((-1)^l)^n$
\State $s=-(-1)^l$
\State let $m_1\cdots m_{\bar{k}}$ be the notation in the alphabet $\{-1,0,1\}$ of $\S{a_1,\ldots,a_k}{n}$ 
\State (see Section \ref{HSalgrel})
\State $j=1$
\If{$s=1$ and $\bar{k}\neq 1$}
	\While{$m_j=0$ and $j<\bar{k}-1$}
		\State $j=j+1$
	\EndWhile
	\State $m_j=-m_j$
\EndIf
\If{$\bar{k}\neq 1$}
	\State $d=\sign{m_j}$
\EndIf
\For{$i=j+1$ to $\bar{k}-1$}
	\If{$m_i\neq0$}
		\If{$d=-1$}
			\State $m_i=-m_i$
		\EndIf
		\State $d=\sign{m_i}$
	\EndIf
\EndFor
\State $l=$ number of entries $\leq 0$ in $(m_1,\ldots,m_{\bar{k}-1})$
\State $v=v\cdot s\cdot(-1)^l$
\State $b_k=\frac{1}{b_k}$
\For{$i=k-1$ to $1$}
	\State $b_i=\frac{b_{i+1}}{b_{i}}$
\EndFor
\State $d=b_1$
\State $j=1$
\For{$i=1$ to $\bar{k}-1$}
		\If{$m_i\neq 0$}
			\State $m_i=b_jm_i$
			\State $j=j+1$
		\EndIf
	\State $b_i=\frac{b_{i+1}}{b_{i}}$
\EndFor
\If{($d>1$ and $\bar{k}\neq 1$) or $s\neq -1$}
	\State $v=\frac{v}{d^n}$
\EndIf
\State \textbf{return} $v\cdot\frac{\H{m_1,\ldots,m_{\bar{k}-1}}x}{d+sx}$
\EndProcedure
\end{algorithmic}
\end{algorithm}

\begin{remark}
Again there is a second way to compute the inverse Mellin transform of an $\bar{S}$-sum: We can use the integral representation of Theorem \ref{SSintrep}. After repeated 
suitable integration by parts we can find the inverse Mellin transform.
\end{remark}

\section{Differentiation of S-Sums}
\label{SSdifferentiation}
We present two strategies to compute the differentiation of S-sums. The first approach follows the ideas carried out in Section \ref{HSdifferentiation} for harmonic sums. Since we rely here on the inverse 
Mellin transform, this method is restricted to $\bar{S}$-sums using our current technologies. Finally, we will develope a second approach that will work for S-sums in general.

\subsection{First Approach: Differentiation of \texorpdfstring{$\bar{S}$}{S-bar}-sums}
We are able to calculate the inverse Mellin transform of $\bar{S}$-sums $\S{a_1,\ldots,a_l}{b_1,\ldots,b_l;n}$. It turned out that they are linear combinations 
of multiple polylogarithms weighted by the factors $1/(a\pm x)$. By computing the Mellin transform of a $\bar{S}$-sum we find in fact an analytic continuation 
of the sum to $n\in\C$ up to countable many isolated points. As for harmonic sums (see Section \ref{HSdifferentiation}) this allows us to consider differentiation with respect to $n$, since we can differentiate the analytic 
continuation. Afterwards we may transform back to $\bar{S}$-sums applying the Mellin transformation.

Again differentiation turns out to be relatively easy if we represent the $\bar{S}$-sums using its inverse Mellin transform.
If we want to differentiate $\S{\ve a}{\ve b;n}$ with respect to $n$ we can proceed as follows:
\begin{itemize}
	\item Calculate the inverse Mellin transform of $\S{\ve a}{\ve b;n}.$
	\item Set the constants to zero and multiply the remaining terms of the inverse Mellin transform by $\H0x$. This is in fact differentiation with respect to $n$. 
	\item Calculate the Mellin transform of the multiplied inverse Mellin transform of $\S{\ve a}{\ve b;n}$ 
\end{itemize}

Looking in detail at this procedure, we see that first we express the $\bar{S}$-sum as a linear combination of expressions of the form $\M{\frac{\H{\ve m}x}{k\pm x}}n$ 
with $\H{\ve m}x \in \bar{H}(x).$ Next we have to differentiate these summands. For ``$+$'' and $k\in\R^+$ this leads to:
\begin{eqnarray*}
&&\frac{d}{d n}\M{\frac{\H{\ve m}x}{k+x}}n=\frac{d}{d n}\int_0^1\frac{x^n\H{\ve m}x}{k+x}dx=\int_0^1\frac{x^n\H{0}x\H{\ve m}x}{k+x}dx\\
&&\hspace{1cm}=\int_0^1\frac{x^n(\H{0,m_1,\ldots,m_l}x+\H{m_1,0,m_2\ldots,m_l}x+\cdots+\H{m_1,\ldots,m_l,0}x)}{k+x}dx\\
&&\hspace{1cm}=\M{\frac{\H{0,m_1,\ldots,m_l}x}{k+x}}n+\cdots+ \M{\frac{\H{m_1,\ldots,m_l,0}x}{k+x}}n.
\end{eqnarray*}
For ``$-$'' and $k\in (1,\infty)$ this leads to:
\begin{eqnarray*}
&&\frac{d}{d n}\M{\frac{\H{\ve m}x}{k-x}}n=\frac{d}{d n}\int_0^1\frac{x^n\H{\ve m}x}{k-x}dx=\int_0^1\frac{x^n\H{0}x\H{\ve m}x}{k-x}dx\\
&&\hspace{1cm}=\int_0^1\frac{x^n\H{0,m_1,\ldots,m_l}x+\H{m_1,0,m_2\ldots,m_l}x+\cdots+\H{m_1,\ldots,m_l,0}x}{k-x}dx\\
&&\hspace{1cm}=\M{\frac{\H{0,m_1,\ldots,m_l}x}{k- x}}n+\cdots+ \M{\frac{\H{m_1,\ldots,m_l,0}x}{k-x}}n.
\end{eqnarray*}
For ``$-$'' and $k\in(0,1]$ this leads to:
\begin{eqnarray}
\frac{d}{d n}\M{\frac{\H{\ve m}x}{k-x}}n&=&\frac{d}{d n}\int_0^1\frac{\left(\frac{x^n}{k^n}-1\right)\H{\ve m}x}{k-x}dx=
\int_0^1\frac{\frac{x^n}{k^n}\left(\H{0}x-\H{0}k\right)\H{\ve m}x}{k-x}dx\nonumber\\
&=&\int_0^1\frac{\left(\frac{x^n}{k^n}-1\right)\H{0}x\H{\ve m}x}{k-x}dx-\H{0}k\int_0^1\frac{\left(\frac{x^n}{k^n}-1\right)\H{\ve m}x}{k-x}dx\nonumber\\
&&+\int_0^1\frac{\H{0}{\frac{x}{k}}\H{\ve m}x}{k-x}dx.
\label{SSdiffconst1}
\end{eqnarray}
In the following lemmas we will see that $\int_0^1\frac{\Hma{0}{\frac{x}{k}}\H{\ve m}x}{k-x}dx$ is finite and in other words, the last integral in (\ref{SSdiffconst1}) is well defined. 
We will see how we can evaluate this constant in terms of multiple polylogarithms.
Hence in all three cases we eventually arrive at Mellin transforms of multiple polylogarithms.
\begin{lemma}
For a multiple polylogarithm $\H{\ve m}x$ with $\ve m=(m_1,m_2,\ldots,m_l),$ $m_i~\in~\R \setminus (0,1]$ and $i\in \N$ the integral
$$
\int_0^1\frac{\H{0}x^i\H{\ve m}x}{1-x}dx
$$ 
exists and can be expressed in terms of multiple polylogarithms.
\end{lemma}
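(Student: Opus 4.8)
The plan is to prove the explicit evaluation
$$
\int_0^1\frac{\H{0}{x}^i\H{\ve m}{x}}{1-x}dx=(-1)^i\,i!\,\H{\ve{0}_i,1,\ve m}{1},
$$
from which both claims of the lemma follow at once: the right-hand side is a multiple polylogarithm (up to the rational factor $(-1)^i i!$), and it is finite because the index word $(\ve{0}_i,1,m_1,\ldots,m_l)$ has smallest positive entry $q=1$ --- every $m_j>0$ satisfies $m_j>1$ since $m_j\notin(0,1]$ --- while its first entry is $0\ne q$, so finiteness follows from the facts collected right after Definition~\ref{SShlogdef}.

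First I would record convergence. Near $x=1$ one has $\H{0}{x}^i=(\log x)^i=O((1-x)^i)$ and $\H{\ve m}{x}\to\H{\ve m}{1}$ finite, so for $i\ge 1$ the integrand is bounded there; near $x=0$ the factor $\H{\ve m}{x}$ is $O(|\log x|^{l})$ in the extreme case $\ve m=\ve{0}_l$ and otherwise vanishes, so the integrand is $O(|\log x|^{i+l})$, hence integrable. Thus the integral exists absolutely, which also legitimises the improper integrations by parts used below. For $i=1$ the identity is the multiple-polylogarithm counterpart of Lemma~\ref{HSdifflem}: taking $u=\H{0}{x}$ and the antiderivative $\H{1,\ve m}{x}$ of $\H{\ve m}{x}/(1-x)$, integration by parts yields $\int_0^1\frac{\H{0}{x}\H{\ve m}{x}}{1-x}dx=\big[\H{0}{x}\H{1,\ve m}{x}\big]_0^1-\H{0,1,\ve m}{1}$, and the boundary term is $0$ because $\log x\cdot\H{1,\ve m}{x}\to 0$ as $x\to 1^-$ (the factor $\log x=O(1-x)$ overrides the at most logarithmic blow-up of $\H{1,\ve m}{x}$) and as $x\to 0^+$ ($\H{1,\ve m}{x}$ vanishes, overriding $\log x$); this is exactly the computation in the proof of Lemma~\ref{HSdifflem}.

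For $i\ge 2$ I would iterate integration by parts. A first step with $u=\H{0}{x}^i$ and $v=\H{1,\ve m}{x}$ turns the integral into $-\,i\int_0^1\frac{\H{0}{x}^{i-1}\H{1,\ve m}{x}}{x}dx$; each further step uses $u=\H{0}{x}^{j}$ together with the antiderivative $\H{0,\ve w}{x}$ of $\H{\ve w}{x}/x$, which lowers the exponent of $\H{0}{x}$ by one, prepends a $0$ to the index word, and brings out a factor $-j$. Every boundary term vanishes: at $x=1$ the surviving power $\H{0}{x}^{j}$ with $j\ge 1$ tends to $0$ faster than the accompanying polylogarithm (finite, or logarithmically divergent) can grow, and at $x=0$ every polylogarithm occurring in the process --- all of the shape $\H{\ve{0}_{j},1,\ve m}{x}$ --- contains a $1$ among its entries, hence is not a pure power of $\log x$ and vanishes at least linearly in $x$ up to powers of $\log x$, which dominates any $(\log x)^{j}$. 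After $i$ steps the integral collapses to $(-1)^i i!\int_0^1\frac{\H{\ve{0}_{i-1},1,\ve m}{x}}{x}dx$, which equals $(-1)^i i!\,\H{\ve{0}_i,1,\ve m}{1}$ and is the asserted identity; the sign and factorial come from the $i$ accumulated factors $-i,-(i-1),\ldots,-1$.

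The hard part is the bookkeeping of these boundary terms through the iteration, that is, keeping track of the behaviour at $0$ and at $1$ of the intermediate polylogarithms $\H{\ve{0}_{j},1,\ve m}{x}$: their leading zeroes interact with the explicit factor $\H{0}{x}^{j}$ to produce powers of $\log x$, and $\ve m$ may in addition carry trailing zeroes, so one must verify uniformly that the explicit $(\log x)^{j}$ is always suppressed. One point worth flagging is that it would be tempting to expand $\H{0}{x}^i\H{\ve m}{x}=i!\sum_{\ve r=\ve{0}_i\shuffle\ve m}\H{\ve r}{x}$ via the shuffle relation~(\ref{SShpro}) and integrate termwise, but the individual integrals $\int_0^1\H{\ve r}{x}/(1-x)\,dx$ diverge and only their sum is finite; the integration-by-parts route is preferable precisely because it never leaves the realm of convergent expressions. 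Finally, since $x$ ranges only over $(0,1)$, indices larger than $1$ cause no difficulty (their branch points lie beyond $1$) and the extraction caveat mentioned after Definition~\ref{SShlogdef} never comes into play, so the argument applies to every admissible $\ve m$.
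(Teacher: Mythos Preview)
Your argument is correct and follows the same integration-by-parts idea as the paper's proof; the paper performs one integration by parts to reach $-i\int_0^1\H{0}{x}^{i-1}\H{1,\ve m}{x}\,dx/x$ and then shuffle-expands the product, whereas you iterate the integration by parts $i$ times and obtain the tidier closed evaluation $(-1)^i i!\,\H{\ve 0_i,1,\ve m}{1}$. Your boundary-term bookkeeping (in particular the observation that each intermediate $\H{\ve 0_j,1,\ve m}{x}$ vanishes like $x$ times a polynomial in $\log x$ at $x=0$) is sound.
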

\begin{proof}
Using integration by parts we get
\begin{eqnarray*}
\int_0^1\frac{\H{0}x^i\H{\ve m}x}{1-x}dx&:=&\lim_{b \rightarrow 1^-}\lim_{a \rightarrow 0^+}{\int_a^{b}\frac{\H{0}x^i\H{\ve m}x}{1-x}dx}\\
	&=&\lim_{b \rightarrow 1^-}\lim_{a \rightarrow 0^+}\left(\left.\H{0}x^i\H{1,\ve m}x\right|_a^{b}-
		i{\int_a^{b}\frac{\H{0}x^{i-1}\H{1,\ve m}x}{x}dx}\right)\\
	&=&-i{\int_0^{1}\frac{\H{0}x^{i-1}\H{1,\ve m}x}{x}dx}.
\end{eqnarray*}
After expanding the product $\H{0}x^{i-1}\H{1,\ve m}x$ and applying the integral we end up in multiple polylogarithms at one with leading zero. Hence the integral is finite.
\end{proof}

\begin{lemma}
For $k\in(0,1)$ and $\ve a=(a_1,a_2,\ldots,a_l),$ $m_i \in\R \setminus (0,1]$ we have that
\begin{eqnarray*}
\int_0^1\frac{\H{0}{\frac{x}{k}}\H{\ve a}x}{k-x}dx
\end{eqnarray*}
is finite and can be expressed in terms of multiple polylogarithms.
\end{lemma}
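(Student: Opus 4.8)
The plan is to split the integral at the point $x=k$, which is the only place where the integrand can be singular, and to treat each of the two resulting pieces with the transforms already available in this section.

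On the interval $(0,k)$ I would write $\H{0}{x/k}=\H{0}{x}-\H{0}{k}$ and use that $\frac{\H{\ve a}{x}}{k-x}=\frac{d}{dx}\H{k,\ve a}{x}$ there; note $\H{k,\ve a}{x}$ is defined on $(0,k)$ since every positive entry of $\ve a$ is strictly larger than $1$, hence larger than $k$. Performing integration by parts on $[\epsilon,k-\delta]$ and letting $\epsilon\to 0^{+}$, $\delta\to 0^{+}$ then gives $\int_0^k\frac{\H{0}{x/k}\,\H{\ve a}{x}}{k-x}\,dx=-\H{0,k,\ve a}{k}$. Here one must check that the two boundary terms vanish: at the lower end this is immediate, and at the upper end the logarithmic blow-up of $\H{k,\ve a}{x}$ as $x\to k^{-}$ is cancelled by the factor $\H{0}{x}-\H{0}{k}=\log(x/k)$, which vanishes to first order. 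The value $\H{0,k,\ve a}{k}$ is finite since its leading index $0$ is different from its first positive index $k$, so the defining integral converges — this is exactly the finiteness criterion stated after Definition~\ref{SShlogdef}.

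On the interval $(k,1)$ I would substitute $v=x-k$, which turns the piece into $-\int_0^{1-k}\frac{\H{0}{1+v/k}\,\H{\ve a}{k+v}}{v}\,dv$, and then use the elementary identity $\H{0}{1+v/k}=\H{-k}{v}$. Since $a_1\in\R\setminus(0,k]$, $a_i\in\R\setminus(0,k)$ for $i\ge 2$, and no suffix of $\ve a$ equals $(1,0,\ldots,0)$ (because $1\notin\R\setminus(0,1]$), Lemma~\ref{SStransformplusa1} applies with $b=k$ and rewrites $\H{\ve a}{k+v}$ as a finite sum of products $\H{a_1-k,\ldots,a_j-k}{v}\,\H{a_{j+1},\ldots,a_l}{k}$ with $0\le j\le l$. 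Expanding each product $\H{-k}{v}\,\H{a_1-k,\ldots,a_j-k}{v}$ by the shuffle relation~(\ref{SShpro}) into single multiple polylogarithms $\H{\ve c}{v}$ — none of which has a leading zero, because neither $-k$ nor any $a_i-k$ is zero — and applying $\int_0^{1-k}\frac{\H{\ve c}{v}}{v}\,dv=\H{0,\ve c}{1-k}$, one obtains a finite $\R$-linear combination of the multiple polylogarithm values $\H{a_{j+1},\ldots,a_l}{k}$ and $\H{0,\ve c}{1-k}$; these are finite because their arguments lie in the respective domains of definition. Adding the two pieces finishes the proof.

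The step I expect to cause the most trouble is the treatment of the point $x=k$: the natural iterated-integral antiderivative $\H{k,\ve a}{x}$ only lives on $(0,k)$, so one cannot simply antidifferentiate across $x=k$, and one has to resist introducing analytic-continuation terms of the kind seen in~(\ref{SStrafo1dx11}); the split together with the substitution $v=x-k$ is exactly what circumvents this, while the vanishing of $\H{0}{x/k}$ at $x=k$ is what makes both the boundary term in the first piece and the final integrand near $v=0$ well behaved. The remaining work is purely the bookkeeping needed to verify that the shifted indices $a_i-k$ and the arguments $k$ and $1-k$ never produce a divergent multiple polylogarithm.
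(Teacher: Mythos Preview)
Your argument is correct, and it is a genuinely different route from the paper's proof. The paper first extracts trailing zeros from $\H{\ve a}{x}$, writing it as a polynomial in $\H{0}{x}$ with coefficients $\H{\ve m}{x}$ having no trailing zero; it then substitutes $x\mapsto kx$ to normalize the denominator to $1-x$ over $[0,1/k]$, expands $(\H{0}{x}+\H{0}{k})^w$ binomially, and only then splits the integral at $x=1$. The piece on $[0,1]$ is handled by the preceding lemma (integrals $\int_0^1\frac{\H{0}{x}^{i+1}\H{\ve m}{x}}{1-x}dx$), and the piece on $[1,1/k]$ by the shift $x\mapsto x+1$ together with Lemma~\ref{SSeinplustrafo}. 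Your approach dispenses with all of this preprocessing: no trailing-zero extraction, no rescaling, no binomial expansion, and no appeal to the preceding lemma, because your integration by parts on $(0,k)$ collapses that piece to the single finite value $-\H{0,k,\ve a}{k}$. The second halves of the two arguments are essentially the same shift trick (your Lemma~\ref{SStransformplusa1} with $b=k$ versus the paper's Lemma~\ref{SSeinplustrafo} with $b=1$ after rescaling). What the paper's route buys is uniformity with the $k=1$ case and reuse of the lemma just above; what yours buys is a shorter, self-contained argument with an explicit closed form for the $(0,k)$ piece.
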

\begin{proof}
If $\H{\ve a}x$ has trailing zeroes, \ie $a_l=0,$ we first extract them and get a linear combination of expressions of the form $$\int_0^1\frac{\H{0}{\frac{x}{k}}\H{0}x^w\H{\ve m}x}{k-x}dx$$
with $w\in \N$. Now let us look at such an expression:
\begin{eqnarray*}
\int_0^1\frac{\H{0}{\frac{x}{k}}\H{0}x^w\H{\ve m}x}{k-x}dx&=&\int_0^{\frac{1}{k}}\frac{\H{0}{x}\H{0}{kx}^w\H{\ve m}{kx}}{k-kx}kdx\\
&=&\int_0^{\frac{1}{k}}\frac{\H{0}{x}(\H{0}{x}+\H{0}{k})^w\H{\frac{m_1}{k},\ldots,\frac{m_l}{k}}{x}}{1-x}dx\\ 
&=&\int_0^{\frac{1}{k}}\frac{\H{\frac{m_1}{k},\ldots,\frac{m_l}{k}}{x}\sum_{i=0}^w{\binom{w}{i}\H{0}{x}^{i+1}\H{0}{k}^{w-i}}} {1-x}dx\\
&=&\sum_{i=0}^w{\binom{w}{i}\H{0}{k}^{w-i}\underbrace{\int_0^{\frac{1}{k}}\frac{\H{\frac{m_1}{k},\ldots,\frac{m_l}{k}}{x}\H{0}{x}^{i+1}}{1-x}dx}_A}
\end{eqnarray*}
\begin{eqnarray*}
A&=&\int_0^{1}\frac{\H{\frac{m_1}{k},\ldots,\frac{m_l}{k}}{x}\H{0}{x}^{i+1}}{1-x}dx+\int_1^{\frac{1}{k}}\frac{\H{\frac{m_1}{k},\ldots,\frac{m_l}{k}}{x}\H{0}{x}^{i+1}}{1-x}dx\\
&=&\underbrace{\int_0^1\frac{\H{\frac{m_1}{k},\ldots,\frac{m_l}{k}}{x}\H{0}{x}^{i+1}}{x}dx}_B-\underbrace{\int_0^{\frac{1}{k}-1}\frac{\H{\frac{m_1}{k},\ldots,\frac{m_l}{k}}{x+1}\H{0}{x+1}^{i+1}}{x}dx}_C.
\end{eqnarray*}
The integral $B$ is finite and expressible in terms of multiple polylogarithms due to the previous lemma.
After expanding the product $\H{\frac{m_1}{k},\ldots,\frac{m_l}{k}}{x+1}\H{0}{x+1}^{i+1}$, applying Lemma \ref{SSeinplustrafo} and applying the integral we end up in multiple polylogarithms at one and at $\frac{1}k-1$ which are all finite. Hence C is finite.
Summarizing $$\int_0^1\frac{\H{0}{\frac{x}{k}}\H{\ve a}x}{k-x}dx$$ is finite, since we can write it as a sum of finite integrals. 
\end{proof}

\begin{example}
Let us differentiate $\S{2}{2;n}.$ We have:
\begin{eqnarray*}
\S{2}{2;n}&=&\M{\frac{\H{0}x}{\frac{1}{2}-x}}n=\int_0^1{\frac{(2^n x^n-1)\H{0}x}{\frac{1}{2}-x}dx}.
\end{eqnarray*}
Differentiating with respect to $n$ yields:
\begin{eqnarray*}
&&\frac{d}{d n}\int_0^1{\frac{(2^n x^n-1)\H{0}x}{\frac{1}{2}-x}dx}=\int_0^1\frac{2^nx^n\left(\H{0}x+\H{0}2\right)\H{0}x}{\frac{1}{2}-x}dx\\
&&\hspace{1cm}=2 \int_0^1\frac{\left(2^nx^n-1\right)\H{0,0}x}{\frac{1}{2}-x}dx+\H{0}2\int_0^1\frac{\left(2^nx^n-1\right)\H{0}x}{\frac{1}{2}-x}dx\\
&&\hspace{1.5cm}+\int_0^1\frac{\H{0}{2\,x}\H{0}x}{\frac{1}{2}-x}dx\\
&&\hspace{1cm}=2 \M{\frac{\H{0,0}x}{\frac{1}{2}-x}}n+\H{0}2\M{\frac{\H{0}x}{\frac{1}{2}-x}}n+\int_0^1\frac{\H{0}{2\, x}\H{0}x}{\frac{1}{2}-x}dx\\
&&\hspace{1cm}=-2\S{3}{2;n}+\H{0}2\S{2}{2;n}+\H{0,0,-1}1+2\H{0,0,1}1+\H{0,1,-1}1.
\end{eqnarray*}
\end{example}

\subsection{Second Approach: Differentiation of S-sums}
Using the first approach we were able to compute the derivative for $\bar{S}$-sums. However we are also interested in the derivative of S-sums which are 
not in this subset. For those sums we failed in the first approach since we were not able to compute their inverse Mellin transform. Therefore we look at a 
different approach, which we will first illustrate using the following detailed example.

\begin{example}[Differentiation of $\SS{1,2,1}{\frac{1}{2},2,\frac{1}{3}}n$]
Using Theorem \ref{SSintrep} leads to
\small
\begin{eqnarray*}
\SS{1,2,1}{\frac{1}{2},2,\frac{1}{3}}n&=&\int_0^{\frac{1}{3}}{\frac{1}{x-1}\int_{1}^{x}{\frac{1}{y}\int_0^{y}{\frac{1}{z-\frac{1}{2}}\int_{\frac{1}{2}}^{z}{\frac{w^{n}-1}{w-1}}dwdz}dy}dx}\\
&=&\int_0^{\frac{1}{3}}{\frac{1}{x-1}\int^{1}_{x}{-\frac{1}{y}\int_0^{y}{\frac{1}{\frac{1}{2}-z}\int_{\frac{1}{2}}^{z}{\frac{w^{n}-1}{w-1}}dwdz}dy}dx}.
\end{eqnarray*}
\normalsize
Differentiation of the integral representation with respect to $n$ yields
\small
\begin{eqnarray*}
\int_0^{\frac{1}{3}}{\frac{1}{x-1}\int^{1}_{x}{-\frac{1}{y}\int_0^{y}{\frac{1}{\frac{1}{2}-z}\int_{\frac{1}{2}}^{z}{\frac{w^{n}\H0w}{w-1}}dwdz}dy}dx}=:A.
\end{eqnarray*}
\normalsize
We want to rewrite $A$ in terms of S-sums at $n$ and finite S-sums at $\infty$ (or finite multiple polylogarithms). Therefore we first rewrite $A$ in the following form:
\small
\begin{eqnarray*}
A&=&\underbrace{\int_0^{\frac{1}{3}}{\frac{1}{x-1}\int^{1}_{x}{-\frac{1}{y}\int_0^{y}{\frac{1}{\frac{1}{2}-z}\int_{\frac{1}{2}}^{z}{\frac{w^{n}-1}{w-1}\H0w}dwdz}dy}dx}}_{B:=}+\\
&&\underbrace{\int_0^{\frac{1}{3}}{\frac{1}{x-1}\int^{1}_{x}{-\frac{1}{y}\int_0^{y}{\frac{1}{\frac{1}{2}-z}\int_{\frac{1}{2}}^{z}{\frac{\H0w}{w-1}}dwdz}dy}dx}}_{C:=}.
\end{eqnarray*}
\normalsize
Let us first consider $B$: we split the integral at the zeroes of the denominators and get:
\small
\begin{eqnarray*}
B&=&\int_0^{\frac{1}{3}}{\frac{1}{x-1}\int^{\frac{1}{2}}_{x}{-\frac{1}{y}\int_0^{y}{\frac{1}{z-\frac{1}{2}}\int^{\frac{1}{2}}_{z}{-\frac{w^{n}-1}{w-1}\H0w}dwdz}dy}dx}+\\
&&\int_0^{\frac{1}{3}}{\frac{1}{x-1}\int^{1}_{\frac{1}{2}}{-\frac{1}{y}\int_0^{\frac{1}{2}}{\frac{1}{z-\frac{1}{2}}\int^{\frac{1}{2}}_{z}{-\frac{w^{n}-1}{w-1}\H0w}dwdz}dy}dx}+\\
&&\int_0^{\frac{1}{3}}{\frac{1}{x-1}\int^{1}_{\frac{1}{2}}{-\frac{1}{y}\int_{\frac{1}{2}}^{y}{\frac{1}{z-\frac{1}{2}}\int_{\frac{1}{2}}^{z}{\frac{w^{n}-1}{w-1}\H0w}dwdz}dy}dx}\\
&=&B_1+B_2+B_3.
\end{eqnarray*}
\normalsize
Starting from the inner integral and integrating integral by integral leads to
\small
\begin{eqnarray*}
B_1&=&\int_0^{\frac{1}{3}}\frac{1}{x-1}\int^{\frac{1}{2}}_{x}-\frac{1}{y}\int_0^{y}\frac{1}{z-\frac{1}{2}}\biggl(\H{0}{\frac{1}{2}}\SS{1}{\frac{1}{2}}n-\H{0}z\SS{1}{z}n\biggr. \\
   & & \biggl.+\SS{2}{\frac{1}{2}}n-\SS{2}{z}n\biggr)dzdydx\\
   &=&\cdots =\\
   &=&-\textnormal{H}_3(1) \textnormal{H}_{0,1,0}(1) \textnormal{S}_1\left(\frac{1}{2};n\right)+\textnormal{H}_0(2) \textnormal{H}_{3,0,\frac{3}{2}}(1) \textnormal{S}_1\left(\frac{1}{2};n\right)-\textnormal{H}_0(3) \textnormal{H}_{3,0,\frac{3}{2}}(1)
   \textnormal{S}_1\left(\frac{1}{2};n\right)\\
    &&-\textnormal{H}_{0,3,0,\frac{3}{2}}(1) \textnormal{S}_1\left(\frac{1}{2};n\right)-2 \textnormal{H}_{3,0,0,\frac{3}{2}}(1)
   \textnormal{S}_1\left(\frac{1}{2};n\right)-\textnormal{H}_0(2) \textnormal{H}_3(1) \textnormal{S}_{1,2}\hspace{-0.2em}\left(\frac{1}{2},1;n\right)\\
    &&+\textnormal{H}_0(3) \textnormal{H}_3(1) \textnormal{S}_{1,2}\hspace{-0.2em}\left(\frac{1}{2},2;n\right)+\textnormal{H}_{0,3}(1)
   \textnormal{S}_{1,2}\hspace{-0.2em}\left(\frac{1}{2},2;n\right)-2 \textnormal{H}_3(1) \textnormal{S}_{1,3}\hspace{-0.2em}\left(\frac{1}{2},1;n\right)\\
    &&+2 \textnormal{H}_3(1) \textnormal{S}_{1,3}\hspace{-0.2em}\left(\frac{1}{2},2;n\right)-\textnormal{H}_3(1)
   \textnormal{S}_{2,2}\hspace{-0.2em}\left(\frac{1}{2},1;n\right)+\textnormal{H}_3(1) \textnormal{S}_{2,2}\hspace{-0.2em}\left(\frac{1}{2},2;n\right)\\
    &&-\textnormal{H}_0(3)
   \textnormal{S}_{1,2,1}\hspace{-0.2em}\left(\frac{1}{2},2,\frac{1}{3};n\right)-\textnormal{S}_{1,2,2}\hspace{-0.2em}\left(\frac{1}{2},2,\frac{1}{3};n\right)-2
   \textnormal{S}_{1,3,1}\hspace{-0.2em}\left(\frac{1}{2},2,\frac{1}{3};n\right)\\
    &&-\textnormal{S}_{2,2,1}\hspace{-0.2em}\left(\frac{1}{2},2,\frac{1}{3};n\right).
\end{eqnarray*}
\normalsize
Applying the same strategy to $B_2$ and $B_3$ leads to
\small
\begin{eqnarray*}
B&=&-\textnormal{H}_3(1) \textnormal{H}_{0,1,0}(2) \textnormal{S}_1\left(\frac{1}{2};n\right)+\textnormal{H}_0(2) \textnormal{H}_{3,0,\frac{3}{2}}(1) \textnormal{S}_1\left(\frac{1}{2};n\right)
  -\textnormal{H}_0(3) \textnormal{H}_{3,0,\frac{3}{2}}(1)\textnormal{S}_1\left(\frac{1}{2};n\right)\\
&&-\textnormal{H}_{0,3,0,\frac{3}{2}}(1) \textnormal{S}_1\left(\frac{1}{2};n\right)-2 \textnormal{H}_{3,0,0,\frac{3}{2}}(1)
   \textnormal{S}_1\left(\frac{1}{2};n\right)+\textnormal{H}_0(3) \textnormal{H}_3(1) \textnormal{S}_{1,2}\hspace{-0.2em}\left(\frac{1}{2},2;n\right)\\
&&+\textnormal{H}_{0,3}(1) \textnormal{S}_{1,2}\hspace{-0.2em}\left(\frac{1}{2},2;n\right)-\textnormal{H}_0(3)
   \textnormal{S}_{1,2,1}\hspace{-0.2em}\left(\frac{1}{2},2,\frac{1}{3};n\right)-\textnormal{S}_{1,2,2}\hspace{-0.2em}\left(\frac{1}{2},2,\frac{1}{3};n\right)\\
&&-2\textnormal{S}_{1,3,1}\hspace{-0.2em}\left(\frac{1}{2},2,\frac{1}{3};n\right)-\textnormal{S}_{2,2,1}\hspace{-0.2em}\left(\frac{1}{2},2,\frac{1}{3};n\right).
\end{eqnarray*}
\normalsize
Let us now consider $C$: again we split the integral at the zeroes of the denominators and get:
\small
\begin{eqnarray*}
C&=&\int_0^{\frac{1}{3}}{\frac{1}{x-1}\int^{\frac{1}{2}}_{x}{-\frac{1}{y}\int_0^{y}{\frac{1}{z-\frac{1}{2}}\int^{\frac{1}{2}}_{z}{-\frac{\H0w}{w-1}}dwdz}dy}dx}+\\
&&\int_0^{\frac{1}{3}}{\frac{1}{x-1}\int^{1}_{\frac{1}{2}}{-\frac{1}{y}\int_0^{\frac{1}{2}}{\frac{1}{z-\frac{1}{2}}\int^{\frac{1}{2}}_{z}{-\frac{\H0w}{w-1}}dwdz}dy}dx}+\\
&&\int_0^{\frac{1}{3}}{\frac{1}{x-1}\int^{1}_{\frac{1}{2}}{-\frac{1}{y}\int_{\frac{1}{2}}^{y}{\frac{1}{z-\frac{1}{2}}\int_{\frac{1}{2}}^{z}{\frac{\H0w}{w-1}}dwdz}dy}dx}\\
&=&C_1+C_2+C_3.
\end{eqnarray*}
\normalsize
Starting from the inner integral and integrating integral by integral leads to
\small
\begin{eqnarray*}
C_1&=&\int_0^{\frac{1}{3}}{\frac{1}{x-1}\int^{\frac{1}{2}}_{x}{-\frac{1}{y}\int_0^{y}{\frac{\H{1,0}{\frac{1}{2}}-\H{1,0}{z}}{z-\frac{1}{2}}dz}dy}dx}\\
   &=&\int_0^{\frac{1}{3}}{\frac{1}{x-1}\int^{\frac{1}{2}}_{x}{-\frac{-\H{\frac{1}{2}}{y}\H{1,0}{\frac{1}{2}}+\H{\frac{1}{2},1,0}{y}} {y}}dy}dx\\
   &=&\int_0^{\frac{1}{3}}{\frac{\H{1,0}{\frac{1}{2}}\left(\H{0,\frac{1}{2}}{\frac{1}{2}}-\H{0,\frac{1}{2}}{x}\right)
      -\H{0,\frac{1}{2},1,0}{\frac{1}{2}}+\H{0,\frac{1}{2},1,0}{x}}{x-1}dx}\\
   &=&\textnormal{H}_0(2) \textnormal{H}_2(1) \textnormal{H}_3(1) \textnormal{H}_{0,1}(1)+\textnormal{H}_3(1) \textnormal{H}_{0,2}(1) \textnormal{H}_{0,1}(1)-\textnormal{H}_0(2) \textnormal{H}_3(1) \textnormal{H}_{0,1,2}(1)\\
    &&-\textnormal{H}_0(2) \textnormal{H}_2(1) \textnormal{H}_{3,0,\frac{3}{2}}(1)-\textnormal{H}_{0,2}(1)
   \textnormal{H}_{3,0,\frac{3}{2}}(1)-2 \textnormal{H}_3(1) \textnormal{H}_{0,0,1,2}(1)-\textnormal{H}_3(1) \textnormal{H}_{0,1,0,2}(1)\\
&&+\textnormal{H}_0(3) \textnormal{H}_{3,0,\frac{3}{2},3}(1)+\textnormal{H}_{0,3,0,\frac{3}{2},3}(1)+2
   \textnormal{H}_{3,0,0,\frac{3}{2},3}(1)+\textnormal{H}_{3,0,\frac{3}{2},0,3}(1).
\end{eqnarray*}
\normalsize
Applying the same strategy to $C_2$ and $C_3$ leads to
\small
\begin{eqnarray*}
C&=&\textnormal{H}_3(1) \textnormal{H}_0(2){}^2 \textnormal{H}_{0,-1}(1)+\textnormal{H}_2(1) \textnormal{H}_3(1) \textnormal{H}_0(2) \textnormal{H}_{0,1}(1)+\textnormal{H}_3(1) \textnormal{H}_0(2) \textnormal{H}_{-1,0,1}(1)\\
&&+\textnormal{H}_3(1) \textnormal{H}_0(2) \textnormal{H}_{0,-1,1}(1)-\textnormal{H}_3(1) \textnormal{H}_0(2)\textnormal{H}_{0,1,2}(1)-\textnormal{H}_2(1) \textnormal{H}_0(2) \textnormal{H}_{3,0,\frac{3}{2}}(1)\\
&&+\textnormal{H}_3(1) \textnormal{H}_{0,1}(1) \textnormal{H}_{0,2}(1)-\textnormal{H}_{0,2}(1) \textnormal{H}_{3,0,\frac{3}{2}}(1)-\textnormal{H}_3(1) \textnormal{H}_{-1,0,1,-1}(1)-2\textnormal{H}_3(1) \textnormal{H}_{0,0,1,2}(1)\\
&&-\textnormal{H}_3(1) \textnormal{H}_{0,1,0,2}(1)+\textnormal{H}_0(3) \textnormal{H}_{3,0,\frac{3}{2},3}(1)+\textnormal{H}_{0,3,0,\frac{3}{2},3}(1)+2\textnormal{H}_{3,0,0,\frac{3}{2},3}(1)+\textnormal{H}_{3,0,\frac{3}{2},0,3}(1).
\end{eqnarray*}
\normalsize
Adding $B$ and $C$ leads to the final result:
\small
\begin{eqnarray*}
&&\frac{\partial \SS{1,2,1}{\frac{1}{2},2,\frac{1}{3}}{n}} {\partial n}=\\
&&\hspace{0.5cm}\textnormal{H}_0(2) \textnormal{H}_{3,0,\frac{3}{2}}\hspace{-0.2em}(1) \textnormal{S}_1\left(\frac{1}{2};n\right)-\textnormal{H}_3(1) \textnormal{H}_{0,1,0}\hspace{-0.2em}(2) \textnormal{S}_1\left(\frac{1}{2};n\right)-\textnormal{H}_0(3) \textnormal{H}_{3,0,\frac{3}{2}}\hspace{-0.2em}(1)
   \textnormal{S}_1\left(\frac{1}{2};n\right)\\
&&\hspace{0.5cm}-\textnormal{H}_{0,3,0,\frac{3}{2}}\hspace{-0.2em}(1) \textnormal{S}_1\left(\frac{1}{2};n\right)-2 \textnormal{H}_{3,0,0,\frac{3}{2}}\hspace{-0.2em}(1)
   \textnormal{S}_1\left(\frac{1}{2};n\right)+\textnormal{H}_0(3) \textnormal{H}_3(1) \textnormal{S}_{1,2}\hspace{-0.2em}\left(\frac{1}{2},2;n\right)\\
&&\hspace{0.5cm}+\textnormal{H}_{0,3}\hspace{-0.2em}(1) \textnormal{S}_{1,2}\hspace{-0.2em}\left(\frac{1}{2},2;n\right)-\textnormal{H}_0(3)
   \textnormal{S}_{1,2,1}\hspace{-0.2em}\left(\frac{1}{2},2,\frac{1}{3};n\right)-\textnormal{S}_{1,2,2}\hspace{-0.2em}\left(\frac{1}{2},2,\frac{1}{3};n\right)\\
&&\hspace{0.5cm}-2 \textnormal{S}_{1,3,1}\hspace{-0.2em}\left(\frac{1}{2},2,\frac{1}{3};n\right)-\textnormal{S}_{2,2,1}\hspace{-0.2em}\left(\frac{1}{2},2,\frac{1}{3};n\right)+\textnormal{H}_3(1) \textnormal{H}_0(2){}^2 \textnormal{H}_{0,-1}\hspace{-0.2em}(1)\\
&&\hspace{0.5cm}+\textnormal{H}_2(1)\textnormal{H}_3(1) \textnormal{H}_0(2) \textnormal{H}_{0,1}\hspace{-0.2em}(1)+\textnormal{H}_3(1) \textnormal{H}_0(2) \textnormal{H}_{-1,0,1}\hspace{-0.2em}(1)+\textnormal{H}_3(1) \textnormal{H}_0(2) \textnormal{H}_{0,-1,1}\hspace{-0.2em}(1)\\
&&\hspace{0.5cm}-\textnormal{H}_3(1) \textnormal{H}_0(2) \textnormal{H}_{0,1,2}\hspace{-0.2em}(1)-\textnormal{H}_2(1) \textnormal{H}_0(2)\textnormal{H}_{3,0,\frac{3}{2}}\hspace{-0.2em}(1)+\textnormal{H}_3(1) \textnormal{H}_{0,1}\hspace{-0.2em}(1) \textnormal{H}_{0,2}\hspace{-0.2em}(1)-\textnormal{H}_{0,2}\hspace{-0.2em}(1) \textnormal{H}_{3,0,\frac{3}{2}}\hspace{-0.2em}(1)\\
&&\hspace{0.5cm}-\textnormal{H}_3(1) \textnormal{H}_{-1,0,1,-1}\hspace{-0.2em}(1)-2 \textnormal{H}_3(1) \textnormal{H}_{0,0,1,2}\hspace{-0.2em}(1)-\textnormal{H}_3(1)
   \textnormal{H}_{0,1,0,2}\hspace{-0.2em}(1)\\
&&\hspace{0.5cm}+\textnormal{H}_0(3) \textnormal{H}_{3,0,\frac{3}{2},3}\hspace{-0.2em}(1)+\textnormal{H}_{0,3,0,\frac{3}{2},3}\hspace{-0.2em}(1)+2 \textnormal{H}_{3,0,0,\frac{3}{2},3}\hspace{-0.2em}(1)+\textnormal{H}_{3,0,\frac{3}{2},0,3}\hspace{-0.2em}(1).
\end{eqnarray*}
\normalsize
\end{example}

\subsubsection*{Details of the Method}
Let us now look in detail at this method to compute the differentiation of a S-sum $\S{\ve m}{\ve b; n}=\S{m_1,\ldots,m_k}{b_1,\ldots,b_l; n}$. We start with the integral representation 
of $\S{\ve m}{\ve b; n}$ 
given in Theorem \ref{SSintrep}. We split the integral at $0$ and $1$ and at all zeroes of the denominators, \ie at $0,1,b_1,b_1b_2,\ldots, b_1b_2\cdots b_{l-1}$ such that 
$0,1,b_1,b_1b_2,\ldots, b_1b_2\cdots b_{k-1}$ are always at the borders of the integration domains and never inside the integration domains. We end up with a sum of integrals of the form
\begin{eqnarray*}
&&\pm \int_a^b\frac{dx_k}{x_k-c_k} \int_{u_k}^{o_k}\frac{dx_{k-1}}{x_{k-1}-c_{k-1}}\int_{u_{k-1}}^{o_{k-1}}\frac{dx_{k-2}}{x_{k-2}-c_{k-2}}
      \cdots\int_{u_3}^{o_3}\frac{dx_2}{x_2-c_2}\int_{u_2}^{o_2}\frac{dx_1}{x_1-c_1}\\
&&\times \int_e^f\frac{dy_r}{y_r-g_r} \int_{s_r}^{t_r}\frac{dy_{r-1}}{y_{r-1}-g_{r-1}}\int_{s_{r-1}}^{t_{r-1}}\frac{dy_{r-2}}{y_{r-2}-g_{r-2}}
      \cdots\int_{s_3}^{t_3}\frac{dy_2}{y_2-g_2}\int_{s_2}^{t_2}\frac{{y_1}^n-1}{y_1-1}dy_1
\end{eqnarray*}
where $(u_i,o_i)=(a,x_i)\vee (u_i,o_i)=(x_i,b)\vee (u_i,o_i)=(a,b);\hspace{0.3em}b> a \geq 0 \vee 0\geq b> a;\hspace{0.3em}c_i~\in~\R\setminus(a,b);$
$a\neq c_1\neq b$ and $(s_i,t_i)=(e,y_i)\vee (s_i,t_i)=(y_i,f);\hspace{0.3em}f>e \geq 1 \vee 1\geq f>e\geq 0 \vee 0\geq f>e;$\hspace{0.3em} $g_i\in \R\setminus(e,f).$\\
Note that due to the structure of the integral representation we have in addition:
\begin{eqnarray*}
&\text{if } c_i=a \longrightarrow  &u_i=a\\
&\text{if } c_i=b \longrightarrow  &o_i=b\\
&\text{if } g_i=e \longrightarrow  &s_i=e\\
&\text{if } g_i=f \longrightarrow  &t_i=f.
\end{eqnarray*}
In the case that $0\geq b>a$ we apply the transform $x_i\rightarrow-x_i$ to the integral and get
$$
\pm \int_{\bar{a}}^{\bar{b}}\frac{dx_k}{x_k-\bar{c}_k} \int_{\bar{u}_k}^{\bar{o}_k}\frac{dx_{k-1}}{x_{k-1}-\bar{c}_{k-1}} \int_{\bar{u}_{k-1}}^{\bar{o}_{k-1}}\frac{dx_{k-2}}{x_{k-2}-\bar{c}_{k-2}}
      \cdots\int_{\bar{u}_3}^{\bar{o}_3}\frac{dx_2}{x_2-\bar{c}_2}\int_{\bar{u}_2}^{\bar{o}_2}\frac{dx_1}{x_1-\bar{c}_1}
$$
with $\bar{a}=-b; \hspace{0.3em} \bar{b}=-a$ and $\bar{c}_i=-c_i;$ where $(\bar{u}_i,\bar{o}_i)=(\bar{a},x_i)\vee(\bar{u}_i,\bar{o}_i)=(x_i,\bar{b})\vee (\bar{u}_i,\bar{o}_i)=(\bar{a},\bar{b}); \hspace{0.3em} \bar{b}>\bar{a}\geq 0;$
$\bar{c}_i\in\R\setminus(\bar{a},\bar{b}); \bar{a}\neq\bar{c}_1\neq \bar{b}$ and
\begin{eqnarray*}
&\text{if } \bar{c}_i=\bar{a} \longrightarrow  &\bar{u}_i=\bar{a}\\
&\text{if } \bar{c}_i=\bar{b} \longrightarrow  &\bar{o}_i=\bar{b}.
\end{eqnarray*}
Hence we can always assume $b>a\geq0$.\\
In the case that $0\geq f>e$ we apply the transform $y_i\rightarrow-y_i$ to the integral and get
$$
\pm \int_{\bar{e}}^{\bar{f}}\frac{dy_r}{y_r-\bar{g}_r} \int_{\bar{s}_r}^{\bar{t}_r}\frac{dy_{r-1}}{y_{r-1}-\bar{g}_{r-1}} \int_{\bar{s}_{r-1}}^{\bar{t}_{r-1}}\frac{dx_{r-2}}{y_{r-2}-\bar{g}_{r-2}}
      \cdots\int_{\bar{s}_3}^{\bar{t}_3}\frac{dx_2}{x_2-\bar{g}_2}\int_{\bar{s}_2}^{\bar{t}_2}\frac{(-y_1)^n-1}{y_1+1}dy_1
$$
with $\bar{e}=-f; \hspace{0.3em} \bar{f}=-e$ and $\bar{g}_i=-g_i;$ where $(\bar{s}_i,\bar{t}_i)=(\bar{e},y_i)\vee(\bar{s}_i,\bar{t}_i)=(y_i,\bar{f}); \hspace{0.3em} \bar{f}>\bar{e}\geq 0;$
$\bar{g}_i\in\R\setminus(\bar{e},\bar{f})$ and
\begin{eqnarray*}
&\text{if } \bar{g}_i=\bar{e} \longrightarrow  &\bar{s}_i=\bar{e}\\
&\text{if } \bar{g}_i=\bar{f} \longrightarrow  &\bar{t}_i=\bar{f}.
\end{eqnarray*}
Hence we can rewrite the integral representation of Theorem \ref{SSintrep} as a sum of integrals of the form
\begin{eqnarray}
&&\pm \int_a^b\frac{dx_k}{x_k-c_k} \int_{u_k}^{o_k}\frac{dx_{k-1}}{x_{k-1}-c_{k-1}}
      \cdots\int_{u_3}^{o_3}\frac{dx_2}{x_2-c_2}\int_{u_2}^{o_2}\frac{dx_1}{x_1-c_1}\nonumber\\
&&\times \int_e^f\frac{dy_r}{y_r-g_r} \int_{s_r}^{t_r}\frac{dy_{r-1}}{y_{r-1}-g_{r-1}}
      \cdots\int_{s_3}^{t_3}\frac{dy_2}{y_2-g_2}\int_{s_2}^{t_2}\frac{{(\pm y_1)}^n-1}{y_1 \mp 1}dy_1
\label{SSintrepsplitted}
\end{eqnarray}
where $(u_i,o_i)=(a,x_i)\vee (u_i,o_i)=(x_i,b);\hspace{0.3em}b> a \geq 0;\hspace{0.3em}c_i\in \R\setminus(a,b);$
$a\neq c_1\neq b$ and $(s_i,t_i)=(e,y_i)\vee (s_i,t_i)=(y_i,f);\hspace{0.3em}f>e \geq 1 \vee 1\geq f>e\geq 0;$\hspace{0.3em} $g_i\in \R\setminus(e,f)$ and
\begin{eqnarray*}
&\text{if } c_i=a \longrightarrow  &u_i=a\\
&\text{if } c_i=b \longrightarrow  &o_i=b\\
&\text{if } g_i=e \longrightarrow  &s_i=e\\
&\text{if } g_i=f \longrightarrow  &t_i=f.
\end{eqnarray*}
Differentiating of an integral of the form (\ref{SSintrepsplitted}) with respect to $n$ leads to (note that we did not differentiate the factor $(-1)^n$):
\begin{eqnarray*}
&&\pm \int_a^b\frac{dx_k}{x_k-c_k} \int_{u_k}^{o_k}\frac{dx_{k-1}}{x_{k-1}-c_{k-1}}
      \cdots\int_{u_3}^{o_3}\frac{dx_2}{x_2-c_2}\int_{u_2}^{o_2}\frac{dx_1}{x_1-c_1}\\
&&\times \int_e^f\frac{dy_r}{y_r-g_r} \int_{s_r}^{t_r}\frac{dy_{r-1}}{y_{r-1}-g_{r-1}}
      \cdots\int_{s_3}^{t_3}\frac{dy_2}{y_2-g_2}\int_{s_2}^{t_2}\frac{{(\pm y_1)}^n\H{0}{y_1}}{y_1\mp1}dy_1\\
&&=\pm \int_a^b\frac{dx_k}{x_k-c_k} \int_{u_k}^{o_k}\frac{dx_{k-1}}{x_{k-1}-c_{k-1}}
       \cdots\int_{u_3}^{o_3}\frac{dx_2}{x_2-c_2}\int_{u_2}^{o_2}\frac{dx_1}{x_1-c_1}\\
&&\times \int_e^f\frac{dy_r}{y_r-g_r} \int_{s_r}^{t_r}\frac{dy_{r-1}}{y_{r-1}-g_{r-1}}
       \cdots\int_{s_3}^{t_3}\frac{dy_2}{y_2-g_2}\int_{s_2}^{t_2}\frac{{(\pm y_1)}^n-1}{y_1\mp1}\H{0}{y_1}dy_1\\
&&\pm \int_a^b\frac{dx_k}{x_k-c_k} \int_{u_k}^{o_k}\frac{dx_{k-1}}{x_{k-1}-c_{k-1}}
       \cdots\int_{u_3}^{o_3}\frac{dx_2}{x_2-c_2}\int_{u_2}^{o_2}\frac{dx_1}{x_1-c_1}\\
&&\times \int_e^f\frac{dy_r}{y_r-g_r} \int_{s_r}^{t_r}\frac{dy_{r-1}}{y_{r-1}-g_{r-1}}
      \cdots\int_{s_3}^{t_3}\frac{dy_2}{y_2-g_2}\int_{s_2}^{t_2}\frac{\H{0}{y_1}}{y_1\mp1}dy_1.
\end{eqnarray*}
Now we want to transform these integrals back to S-sums and constants related to S-sums, \ie S-sums at infinity or multiple polylogarithms evaluated at constants. For this task we have to consider 
the following 5 different classes of integrals
\begin{description}
 \item[(a)] $\int_a^b\frac{dx_k}{x_k-c_k} \int_{u_k}^{o_k}\frac{dx_{k-1}}{x_{k-1}-c_{k-1}}\cdots\int_{u_3}^{o_3}\frac{dx_2}{x_2-c_2}\int_{u_2}^{o_2}\frac{dx_1}{x_1-c_1}$
 \item[(b)] $\int_e^f\frac{dy_r}{y_r-g_r} \int_{s_r}^{t_r}\frac{dy_{r-1}}{y_{r-1}-g_{r-1}}\cdots\int_{s_3}^{t_3}\frac{dy_2}{y_2-g_2}\int_{s_2}^{t_2}\frac{\H{0}{y_1}}{y_1-1}dy_1$
 \item[(c)] $\int_e^f\frac{dy_r}{y_r-g_r} \int_{s_r}^{t_r}\frac{dy_{r-1}}{y_{r-1}-g_{r-1}}\cdots\int_{s_3}^{t_3}\frac{dy_2}{y_2-g_2}\int_{s_2}^{t_2}\frac{\H{0}{y_1}}{y_1+1}dy_1$
 \item[(d)] $\int_e^f\frac{dy_r}{y_r-g_r} \int_{s_r}^{t_r}\frac{dy_{r-1}}{y_{r-1}-g_{r-1}}\cdots\int_{s_3}^{t_3}\frac{dy_2}{y_2-g_2}\int_{s_2}^{t_2}\frac{y_1{}^n-1}{y_1-1}\H{0}{y_1}dy_1$
 \item[(e)] $\int_e^f\frac{dy_r}{y_r-g_r} \int_{s_r}^{t_r}\frac{dy_{r-1}}{y_{r-1}-g_{r-1}}\cdots\int_{s_3}^{t_3}\frac{dy_2}{y_2-g_2}\int_{s_2}^{t_2}\frac{{(-y_1)}^n-1}{y_1+1}\H{0}{y_1}dy_1$
\end{description}
with $b>a\geq 0$ \hspace{0.3em}$f>e\geq1$ or $1\geq f >e \geq 0$\hspace{0.3em}$(u_i,o_i)=(a,x_i)\vee (u_i,o_i)=(x_i,b);\hspace{0.3em}b> a \geq 0;\hspace{0.3em}c_i\in \R\setminus(a,b);$
$a\neq c_1\neq b$ and $(s_i,t_i)=(e,y_i)\vee (s_i,t_i)=(y_i,f);\hspace{0.3em}f>e \geq 1 \vee 1\geq f>e\geq 0;$\hspace{0.3em} $g_i\in \R\setminus(e,f)$ and
\begin{eqnarray*}
&\text{if } c_i=a \longrightarrow  &u_i=a\\
&\text{if } c_i=b \longrightarrow  &o_i=b\\
&\text{if } g_i=e \longrightarrow  &s_i=e\\
&\text{if } g_i=f \longrightarrow  &t_i=f.
\end{eqnarray*}
In the following we will use the abbreviations 
\begin{eqnarray*}
\H{\ve m}{x|y}&:=&\H{\ve m}{x}-\H{\ve m}{y}\\
\S{m_1,\ldots,m_k}{b_1,\ldots,b_{k-1},b_k|b_{k+1};n}&:=&\S{m_1,\ldots,m_k}{b_1,\ldots,b_{k-1},b_k;n}\\&&-\S{m_1,\ldots,m_k}{b_1,\ldots,b_{k-1},b_{k+1};n}.
\end{eqnarray*}
We start with integrals of type \textbf{(a)}. We will show that we can represent it in the form
\begin{eqnarray}
 \sum_{\iota=1}^pz_{\iota}\H{\ve m_{\iota}}{b,a}+\sum_{\iota=1}^qv_{\iota}\H{\ve n_{\iota}}{b_i-c_j|a-c_j}
\label{SSintconstform}
\end{eqnarray}
with $z_{\iota},v_{\iota}$ products of multiple polylogarithms evaluated at constants and $j\in\{1,2,\ldots,k\}.$ 
If $k=1,$ we get
\begin{eqnarray*}
 \int_a^b\frac{dx_1}{x_1-c_1}=\left\{ 
		  	\begin{array}{ll}
						-\sign{c_1}\H{c_1}{a|b},&  \textnormal{if }c_1>b \vee 0\geq c_1\\
						\H{0}{a-c_1|b-c_1},& \textnormal{if }a>c_1>0. 
			\end{array}\right.
\end{eqnarray*}
Let now $1<i\leq k.$ We will see that we can express the first $i-1$ integrals in the form 
\begin{eqnarray*}
 \sum_{\iota=1}^pz_{\iota}\H{\ve m_{\iota}}{o_i|u_i}+\sum_{\iota=1}^qv_{\iota}\H{\ve n_{\iota}}{o_i-c_j|u_i-c_j}
\end{eqnarray*}
with $z_{\iota},v_{\iota}$ products of multiple polylogarithms evaluated at constants and $j\in\{1,2,\ldots,i-1\}$. For $i=2$ this is obvious. We will now perform the induction step $i\rightarrow i+1,$ therefore we will look at the 
integrals 
$$A:=\int_{u_{i+1}}^{o_{i+1}}\frac{\H{\ve m}{o_i|u_i}}{x_i-c_i}dx_i \text{ and }B:=\int_{u_{i+1}}^{o_{i+1}}\frac{\H{\ve m}{o_i-c_j|u_i-c_j}}{x_i-c_i}dx_i$$
where $u_{k+1}:=a,o_{k+1}:=b$ and distinguish several cases:
\begin{itemize}
 \item $(u_i,o_i)=(a,x_i);c_i>b\vee0>c_i:$
	\begin{eqnarray*}
	  A&=&\int_{u_{i+1}}^{o_{i+1}}\frac{\H{\ve m}{x_i|a}}{x_i-c_i}dx_i=-\sign{c_i}\H{c_i,\ve m}{o_{i+1}|u_{i+1}}\\
	      &&+\sign{c_i}\H{c_i}{o_{i+1}|u_{i+1}}\H{\ve m}{a}\\
	  B&=&\int_{u_{i+1}}^{o_{i+1}}\frac{\H{\ve m}{x_i-c_j|a-c_j}}{x_i-c_i}dx_i=\int_{u_{i+1}-c_j}^{o_{i+1}-c_j}\frac{\H{\ve m}{x_i|a-c_j}}{x_i+c_j-c_i}dx_i\\
	   &=&-\sign{c_i-c_j}\H{c_i-c_j,\ve m}{o_{i+1}-c_j|u_{i+1}-c_j}\\
	      &&+\sign{c_i-c_j}\H{c_i-c_j}{o_{i+1}-c_j|u_{i+1}-c_j}\H{\ve m}{a-c_j}.
	\end{eqnarray*}
	Hence in all of the three possible cases, \ie $(u_{i+1},o_{i+1})$ equals $(a,x_{i+1})$, $(x_{i+1},b)$ or $(a,b)$ we arrive at the desired results.
 \item $(u_i,o_i)=(x_i,b);c_i>b\vee0>c_i:$
	\begin{eqnarray*}
	  A&=&\int_{u_{i+1}}^{o_{i+1}}\frac{\H{\ve m}{b|x_i}}{x_i-c_i}dx_i=\sign{c_i}\H{c_i,\ve m}{o_{i+1}|u_{i+1}}\\
	      &&-\sign{c_i}\H{c_i}{o_{i+1}|u_{i+1}}\H{\ve m}{a}\\
	  B&=&\int_{u_{i+1}}^{o_{i+1}}\frac{\H{\ve m}{b-c_j|x_i-c_j}}{x_i-c_i}dx_i=\int_{u_{i+1}-c_j}^{o_{i+1}-c_j}\frac{\H{\ve m}{b-c_j|x_i}}{x_i+c_j-c_i}dx_i\\
	   &=&\sign{c_i-c_j}\H{c_i-c_j,\ve m}{o_{i+1}-c_j|u_{i+1}-c_j}\\
	      &&-\sign{c_i-c_j}\H{c_i-c_j}{o_{i+1}-c_j|u_{i+1}-c_j}\H{\ve m}{b-c_j}.
	\end{eqnarray*}
	Hence in all of the three possible cases, \ie $(u_{i+1},o_{i+1})$ equals $(a,x_{i+1})$, $(x_{i+1},b)$ or $(a,b)$ we arrive at the desired results.
 \item $(u_i,o_i)=(a,x_i);a>c_i\geq 0:$
	\begin{eqnarray*}
	  A&=&\int_{u_{i+1}-c_i}^{o_{i+1}-c_i}\frac{\H{\ve m}{x_i+c_i}-\H{\ve m}{a}}{x_i}dx_i\\
	  B&=&\int_{u_{i+1}-c_i}^{o_{i+1}-c_i}\frac{\H{\ve m}{x_i+c_i-c_j}-\H{\ve m}{a}}{x_i}dx_i.
	\end{eqnarray*}
	Applying transforms discussed in Section \ref{SSRelatedArguments} to $\H{\ve m}{x_i+c_i}$ and to $\H{\ve m}{x_i+c_i-c_j}$ leads to
	\begin{eqnarray*}
	  A&=&\sum_{\rho}\int_{u_{i+1}-c_i}^{o_{i+1}-c_i}\xi_{\rho}\frac{\H{\ve m_{\rho}}{x_i}}{x_i}-\H{0}{o_{i+1}-c_i|u_{i+1}-c_i}\H{\ve m}a\\
	   &=&\sum_{\rho}\xi_{\rho}\H{\ve m_{\rho}}{o_{i+1}-c_i|u_{i+1}-c_i}-\H{0}{o_{i+1}-c_i|u_{i+1}-c_i}\H{\ve m}a\\
	  B&=&\sum_{\rho}\int_{u_{i+1}-c_i}^{o_{i+1}-c_i}\xi_{\rho}\frac{\H{\ve m_{\rho}}{x_i}}{x_i}-\H{0}{o_{i+1}-c_i|u_{i+1}-c_i}\H{\ve m}a\\
	   &=&\sum_{\rho}\mu_{\rho}\H{\ve m_{\rho}}{o_{i+1}-c_i|u_{i+1}-c_i}-\H{0}{o_{i+1}-c_i|u_{i+1}-c_i}\H{\ve m}a,
	\end{eqnarray*}
	where $\xi_{\rho}$ and $\mu_\rho$ are multiple polylogarithms evaluated at constants.
	Hence in all of the three possible cases, \ie $(u_{i+1},o_{i+1})$ equals $(a,x_{i+1})$, $(x_{i+1},b)$ or $(a,b)$ we get the desired results.
 \item $(u_i,o_i)=(x_i,b);a>c_i\geq 0:$\\
	This case can be basically handled like the previous one.
 \item $(u_i,o_i)=(x_i,b);c_i=b:$
	\begin{eqnarray*}
	  A&=&\int_{u_{i+1}-b}^{o_{i+1}-b}\frac{\H{\ve m}{x_i+b}-\H{\ve m}{a}}{x_i}dx_i\\
	  B&=&\int_{u_{i+1}-b}^{o_{i+1}-b}\frac{\H{\ve m}{x_i+b-c_j}-\H{\ve m}{a}}{x_i}dx_i.
	\end{eqnarray*}
	Applying transforms discussed in Section \ref{SSRelatedArguments} to $\H{\ve m}{x_i+c_i}$ and to $\H{\ve m}{x_i+c_i-c_j}$ leads to
	\begin{eqnarray*}
	  A&=&\sum_{\rho}\int_{u_{i+1}-b}^{o_{i+1}-b}\xi_{\rho}\frac{\H{\ve m_{\rho}}{x_i}}{x_i}=\sum_{\rho}\xi_{\rho}\H{\ve m_{\rho}}{o_{i+1}-b|u_{i+1}-b}\\
	  B&=&\sum_{\rho}\int_{u_{i+1}-b}^{o_{i+1}-b}\xi_{\rho}\frac{\H{\ve m_{\rho}}{x_i}}{x_i}=\sum_{\rho}\mu_{\rho}\H{\ve m_{\rho}}{o_{i+1}-b|u_{i+1}-b},
	\end{eqnarray*}
	where $\xi_{\rho}$ and $\mu_\rho$ are multiple polylogarithms evaluated at constants and none of the leading indices of $\H{\ve m_{\rho}}{x_i}$ equals $o_{i+1}-b$.
	Hence in all of the three possible cases, \ie $(u_{i+1},o_{i+1})$ equals $(a,x_{i+1})$, $(x_{i+1},b)$ or $(a,b)$ we obain the desired results.
 \item $(u_i,o_i)=(a,x_i);c_i=a:$\\
	This case can be handled similarly to the previous one.
\end{itemize}
Combining all these cases we see that we can always express an integral of type \textbf{(a)} in the form:
\begin{eqnarray*}
 \sum_{\iota=1}^pz_{\iota}\H{\ve m_{\iota}}{b|a}+\sum_{\iota=1}^qv_{\iota}\H{\ve n_{\iota}}{b_i-c_j|a-c_j}.
\end{eqnarray*}
Let us now look at the inner integrals of \textbf{(b)} and \textbf{(c)}. Since we have
\begin{eqnarray*}
 \int_{s_2}^{t_2}\frac{\H{0}{y_1}}{y_1-1}dy_1=-\H{1,0}{t_2|s_2} \text{ and } \int_{s_2}^{t_2}\frac{\H{0}{y_1}}{y_1+1}dy_1&=&\H{-1,0}{t_2|s_2}
\end{eqnarray*}
we can use basically the same arguments for integrals of type \textbf{(b)} and \textbf{(c)} as we used for integrals of type \textbf{(a)} to see that we can express them in the same form, \ie in the form 
(\ref{SSintconstform}).

In the following we will show that we can represent integrals of type \textbf{(d)}  in the form
\begin{eqnarray}
&&\sum_{\iota=1}^{p_1}z_{\iota}\left(\H{0}{\frac{f}{\alpha_\iota}}\S{\ve m_{\iota}}{\ve b_{\iota}|\frac{f}{\alpha_\iota};n}-\H{0}{\frac{e}{\alpha_{\iota}}}\S{\ve m_{\iota}}{\ve b_{\iota}|\frac{e}{\alpha_\iota};n}\right)\nonumber\\
  &&+\sum_{\iota=1}^{p_2} \bar{z}_{\iota} \left(\S{\ve w_{\iota}}{\ve v_{\iota}|\frac{f}{\beta_{\iota}};n}-\S{\ve w_{\iota}}{\ve v_{\iota}|\frac{e}{\beta_{\iota}};n}\right)+\sum_{\iota=1}^{p_3} \hat{z}_{\iota}\; \S{\ve h}{\ve q| \frac{g_j}{\gamma_{\iota}}; n}
\label{SSintdiffform}
\end{eqnarray}
with $z_{\iota},\bar{z}_{\iota},\hat{z}_{\iota}$ products of multiple polylogarithms evaluated at constants and $\alpha_{\iota},\beta_{\iota},\gamma_{\iota}\in\R^*.$ 
If $k=1,$ we get
\begin{eqnarray}
 \int_e^f\frac{y_1{}^n-1}{y_1-1}dy_1=\H{0}{f}\S{1}{f;n}-\H{0}{e}\S{1}{e;n}-\S{2}{f;n}+\S{2}{e;n}.
\label{SSintdiff1form}
\end{eqnarray}
Let now $1<i\leq k.$ We will see that we can express the first $i-1$ integrals in the form 
\begin{eqnarray*}
&&\sum_{\iota=1}^{p_1}z_{\iota}\left(\H{0}{\frac{t_i}{\alpha_\iota}}\S{\ve m_{\iota}}{\ve b_{\iota}|\frac{t_i}{\alpha_\iota};n}-\H{0}{\frac{s_i}{\alpha_{\iota}}}\S{\ve m_{\iota}}{\ve b_{\iota}|\frac{s_i}{\alpha_\iota};n}\right)\\
  &&+\sum_{\iota=1}^{p_2} v_{\iota} \left(\S{\ve w_{\iota}}{\ve v_{\iota}|\frac{t_i}{\beta_{\iota}};n}-\S{\ve w_{\iota}}{\ve v_{\iota}|\frac{s_i}{\beta_{\iota}};n}\right)\\
  &&+\sum_{\iota=1}^{p_3} \hat{z}_{\iota}\;\H{1,0}{\frac{t_{i}}{g_{i-1}}|\frac{s_{i}}{g_{i-1}}}+\sum_{\iota=1}^{p_4} \tilde{z}_{\iota}\;\H{0}{t_{i}-g_{i-1}|s_{i}-g_{i-s}}
\end{eqnarray*}
with $z_{\iota},\bar{z}_{\iota},\hat{z}_{\iota},\tilde{z}_{\iota}$ products of multiple polylogarithms evaluated at constants and S-sums not dependent on a integration variable. 
For $i=2$ this is obvious (compare \ref{SSintdiff1form}). We will now perform the induction step 
$i\rightarrow i+1.$ Therefore we will look at the integrals 
\begin{eqnarray*}
A&:=&\int_{s_{i+1}}^{t_{i+1}}\frac{\H{0}{\frac{t_i}{\alpha}}\S{\ve m}{\ve b|\frac{t_i}{\alpha};n}-\H{0}{\frac{s_i}{\alpha}} \S{\ve m}{\ve b|\frac{s_i}{\alpha};n} }  {y_i-g_i} dy_i\\
B&:=&\int_{s_{i+1}}^{t_{i+1}}\frac{\S{\ve w}{\ve v|\frac{t_i}{\beta};n}-\S{\ve w}{\ve v|\frac{s_i}{\beta};n}} {y_i-g_i}dy_i\\
C&:=&\int_{s_{i+1}}^{t_{i+1}}\frac{\H{1,0}{\frac{t_{i}}{g_{i-1}}|\frac{s_{i}}{g_{i-1}}}}{y_i-g_i}dy_i\\
D&:=&\int_{s_{i+1}}^{t_{i+1}}\frac{\H{0}{t_{i}-g_{i-1}|s_{i}-g_{i-1}}}{y_i-g_i} dy_i
\end{eqnarray*}
where $s_{k+1}:=e,t_{k+1}:=f.$ 
Using the method as described for integrals of type (a) we can handle $D$ and using simple integral transforms we can handle $C$ in the same way.
For $A$ and $B$ we distinguish several cases:
\begin{itemize}
 \item $(s_i,t_i)=(e,y_i);g_i>f\vee0>g_i:$
	\begin{eqnarray*}
	  A&=&\int_{s_{i+1}}^{t_{i+1}}\frac{\H{0}{\frac{y_i}{\alpha}}\S{\ve m}{\ve b|\frac{y_i}{\alpha};n}-\H{0}{\frac{e}{\alpha}} \S{\ve m}{\ve b|\frac{e}{\alpha};n} }  {y_i-g_i} dy_i\\
	   &=&\H{0}{\abs{t_{i+1}-g_i}|\abs{s_{i+1}-g_i}}\left(\H{0}{\frac{g_i}{\alpha}}\S{\ve m}{\ve b|\frac{g_i}{\alpha};n}-\H{0}{\frac{e}{\alpha}}\S{\ve m}{\ve b|\frac{e}{\alpha};n}\right)\\
	    &&-\H{1,0}{\frac{t_{i+1}}{g_i}|\frac{s_{i+1}}{g_i}}\S{\ve m}{\ve b|\frac{g_i}{\alpha};n}\\
	    &&+\H{0}{\frac{g_i}{\alpha}}\left(\S{\ve m,1}{\ve b,\frac{g_i}{\alpha}|\frac{t_{i+1}}{g_i};n}-\S{\ve m,1}{\ve b,\frac{g_i}{\alpha}|\frac{s_{i+1}}{g_i};n}\right)\\
	    &&+\left(\H{0}{\frac{t_{i+1}}{g_i}}\S{\ve m,1}{\ve b,\frac{g_i}{\alpha}|\frac{t_{i+1}}{g_i};n}-\H{0}{\frac{s_{i+1}}{g_i}}\S{\ve m,1}{\ve b,\frac{g_i}{\alpha}|\frac{s_{i+1}}{g_i};n}\right)\\
	    &&-\left(\S{\ve m,2}{\ve b,\frac{g_i}{\alpha}|\frac{t_{i+1}}{g_i};n}-\S{\ve m,2}{\ve b,\frac{g_i}{\alpha}|\frac{s_{i+1}}{g_i};n}\right)\\
	  B&=&\int_{s_{i+1}}^{t_{i+1}}\frac{\S{\ve w}{\ve v|\frac{y_i}{\beta};n}-\S{\ve w}{\ve v|\frac{e}{\beta};n}} {y_i-g_i}dy_i\\
	   &=&\S{\ve w,1}{\ve v,\frac{g_i}{\beta}|\frac{t_{i+1}}{g_i};n}-\S{\ve w,1}{\ve v,\frac{g_i}{\beta}|\frac{s_{i+1}}{g_i};n}\\
	      &&+\H{0}{\abs{t_{i+1}-g_i}|\abs{s_{i+1}-g_i}}\left(\S{\ve w}{\ve v|\frac{g_i}{\beta};n}-\S{\ve w}{\ve v|\frac{e}{\beta};n}\right).
	\end{eqnarray*}
	Hence in all of the three possible cases, \ie $(s_{i+1},t_{i+1})$ equals $(e,y_{i+1})$, $(y_{i+1},f)$ or $(e,f),$ we arrive at the desired results.
 \item $(s_i,t_i)=(y_i,f);g_i>f\vee0>g_i:$
	This case can be handled similarly to the previous one.
 \item $(s_i,t_i)=(e,y_i);e>g_i\geq 0:$
	\begin{eqnarray*}
	  A&=&\int_{s_{i+1}}^{t_{i+1}}\frac{\H{0}{\frac{y_i}{\alpha}}\S{\ve m}{\ve b|\frac{y_i}{\alpha};n}-\H{0}{\frac{e}{\alpha}} \S{\ve m}{\ve b|\frac{e}{\alpha};n} }  {y_i-g_i} dy_i\\
	   &=&\H{0}{t_{i+1}-g_i|s_{i+1}-g_i}\left(\H{0}{\frac{g_i}{\alpha}}\S{\ve m}{\ve b|\frac{g_i}{\alpha};n}-\H{0}{\frac{e}{\alpha}}\S{\ve m}{\ve b|\frac{e}{\alpha};n}\right)\\
	    &&-\H{1,0}{\frac{t_{i+1}}{g_i}|\frac{s_{i+1}}{g_i}}\S{\ve m}{\ve b|\frac{g_i}{\alpha};n}\\
	    &&+\H{0}{\frac{g_i}{\alpha}}\left(\S{\ve m,1}{\ve b,\frac{g_i}{\alpha}|\frac{t_{i+1}}{g_i};n}-\S{\ve m,1}{\ve b,\frac{g_i}{\alpha}|\frac{s_{i+1}}{g_i};n}\right)\\
	    &&+\left(\H{0}{\frac{t_{i+1}}{g_i}}\S{\ve m,1}{\ve b,\frac{g_i}{\alpha}|\frac{t_{i+1}}{g_i};n}-\H{0}{\frac{s_{i+1}}{g_i}}\S{\ve m,1}{\ve b,\frac{g_i}{\alpha}|\frac{s_{i+1}}{g_i};n}\right)\\
	    &&-\left(\S{\ve m,2}{\ve b,\frac{g_i}{\alpha}|\frac{t_{i+1}}{g_i};n}-\S{\ve m,2}{\ve b,\frac{g_i}{\alpha}|\frac{s_{i+1}}{g_i};n}\right)\\
	  B&=&\int_{s_{i+1}}^{t_{i+1}}\frac{\S{\ve w}{\ve v|\frac{y_i}{\beta};n}-\S{\ve w}{\ve v|\frac{e}{\beta};n}} {y_i-g_i}dy_i\\
	   &=&\S{\ve w,1}{\ve v,\frac{g_i}{\beta}|\frac{t_{i+1}}{g_i};n}-\S{\ve w,1}{\ve v,\frac{g_i}{\beta}|\frac{s_{i+1}}{g_i};n}\\
	      &&+\H{0}{t_{i+1}-g_i|s_{i+1}-g_i}\left(\S{\ve w}{\ve v|\frac{g_i}{\beta};n}-\S{\ve w}{\ve v|\frac{e}{\beta};n}\right).
	\end{eqnarray*}
	Hence in all of the three possible cases, \ie $(s_{i+1},t_{i+1})$ equals $(e,y_{i+1})$, $(y_{i+1},f)$ or $(e,f),$ we arrive at the desired results.
 \item $(s_i,t_i)=(y_i,f);e>g_i\geq 0:$\\
	This case can be basically handled like the previous one.
 \item $(s_i,t_i)=(y_i,f);g_i=f:$
	\begin{eqnarray*}
	  A&=&\int_{s_{i+1}}^{t_{i+1}}\frac{\H{0}{\frac{f}{\alpha}}\S{\ve m}{\ve b|\frac{f}{\alpha};n}-\H{0}{\frac{y_i}{\alpha}} \S{\ve m}{\ve b|\frac{y_i}{\alpha};n} }  {y_i-f} dy_i\\
	   &=&\H{1,0}{\frac{t_{i+1}}{f}|\frac{s_{i+1}}{f}}\S{\ve m}{\ve b|\frac{f}{\alpha};n}\\
 	     &&-\H{0}{\frac{f}{\alpha}}\left(\S{\ve m,1}{\ve b,\frac{f}{\alpha}|\frac{t_{i+1}}{f};n}-\S{\ve m,1}{\ve b,\frac{f}{\alpha}|\frac{s_{i+1}}{f};n}\right)\\
 	     &&-\left(\H{0}{\frac{t_{i+1}}{f}}\S{\ve m,1}{\ve b,\frac{f}{\alpha}|\frac{t_{i+1}}{f};n}-\H{0}{\frac{s_{i+1}}{f}}\S{\ve m,1}{\ve b,\frac{f}{\alpha}|\frac{s_{i+1}}{f};n}\right)\\
 	     &&+\left(\S{\ve m,2}{\ve b,\frac{f}{\alpha}|\frac{t_{i+1}}{f};n}-\S{\ve m,2}{\ve b,\frac{f}{\alpha}|\frac{s_{i+1}}{f};n}\right)\\
	  B&=&\int_{s_{i+1}}^{t_{i+1}}\frac{\S{\ve w}{\ve v|\frac{f}{\beta};n}-\S{\ve w}{\ve v|\frac{y_i}{\beta};n}} {y_i-f}dy_i\\
	   &=&-\S{\ve w,1}{\ve v,\frac{f}{\beta}|\frac{t_{i+1}}{\beta};n}+\S{\ve w,1}{\ve v,\frac{f}{\beta}|\frac{s_{i+1}}{\beta};n}.
	\end{eqnarray*}
	Hence in all of the three possible cases, \ie $(s_{i+1},t_{i+1})$ equals $(e,y_{i+1})$, $(y_{i+1},f)$ or $(e,f),$ we arrive at the desired results.
 \item $(s_i,t_i)=(e,y_i);g_i=e:$\\
	This case can be handled similarly to the previous one.
\end{itemize}

Combining all these cases we see that we can always express an integral of type (d) in the form given in (\ref{SSintdiffform}).
Let us now look at the inner integral of \textbf{(e)}. Since we have
\begin{eqnarray*}
 \int_{s_2}^{t_2}\frac{(-y_1)^n-1}{y_1+1}dy_1=\H{0}{t_2}\S{1}{-t_2;n}-\H{0}{s_2}\S{1}{-s_2;n}-\S{2}{-t_2;n}+\S{2}{-s_2;n}
\end{eqnarray*}
we can use basically the same arguments for integrals of type (e) as we used for integrals of type (d) to see that we can express them in the same form, \ie in the form 
(\ref{SSintdiffform}). Since we are able to express all the integrals in the desired forms, we are able to transform the differentiated integral representation of an S-sum 
$\S{\ve m}{\ve b; n}$ back to expressions involving S-sums at $n$ and S-sums at infinity (or equivalently multiple polylogarithms at constants) and hence we can differentiate 
arbitrary S-sums with respect to the upper summation limit.

\section{Relations between S-Sums}
\subsection{Algebraic Relations}
\label{SSalgrel}
We already mentioned in the beginning of this chapter that S-sums form a quasi shuffle algebra. As showed in \cite{Moch2002}, ideas from Section \ref{HSalgrel} can be carried over to S-sums if we
consider an alphabet $A$, where pairs $(m,x)$ with $m\in\N$ and $x\in\R^*$ form the letters, \ie we identify a S-sum 
$$
\S{m_1,m_2,\ldots,m_k}{x_1,x_2,\ldots,x_k;n}
$$
with the word $(m_1,x_1)(m_2,x_2)\ldots(m_k,x_k)$.
We define the degree of a letter $(m,x) \in A$ as $\abs{(m,x)}=m$ and we order the letters for  $m_1,m_2\in\N$ and $x_1,x_2\in\R^*$ by
\begin{eqnarray*}
\begin{array}{llll} 
	(m_1,x_1)	&\prec (m_2,x_2) 	&\textnormal{if }&m_1<m_2\\
	(m_1,x_1)	&\prec (m_1,x_2) 	&\textnormal{if }&\abs{x_1}<\abs{x_1}\\
	(m_1,-x_1)	&\prec (m_1,x_1) 	&\textnormal{if }&x_1>0.
\end{array}
\end{eqnarray*}
We extend this order lexicographically to words. Using this order, it can be showed analogously as in \cite{Ablinger2009} (compare \cite{Hoffman}) that the S-sums form 
a quasi shuffle algebra which is the free polynomial algebra on the 
\textit{Lyndon} words with alphabet $A.$
Hence the number of algebraic independent sums in this algebra can be counted by counting the number of \textit{Lyndon} words.
If we consider for example an alphabet with $n$ letters and we lock for the number of basis sums with depth $d,$  
we can use the first Witt formula~(\ref{HSWitt1}):
  $$N_A(d) = \frac{1}{d}\sum_{q|d}{\mu\left(\frac{d}{q}\right)n^q}.$$
In the subsequent example we look at S-sums of depth 2 on alphabets of 4 letters. Hence we obtain
$$ \frac{1}{2}\sum_{q|2}{\mu\left(\frac{2}{q}\right)4^q}=6$$
basis sums.

We can use an analogous method of the method presented in \cite{Bluemlein2004,Ablinger2009} for harmonic sums to find the basis S-sums together with the relations for the 
dependent S-sums. 
\begin{example} We consider the letters $(1,\frac{1}{2}),(1,-\frac{1}{2}),(3,\frac{1}{2}),(3,-\frac{1}{2}).$ At depth $d=2$ we obtain $16$ sums with these letters.
Using the relations
\small
\begin{eqnarray*}
\textnormal{S}_{1,3}\left(\frac{1}{2},\frac{1}{2};n\right)&=& -\textnormal{S}_{3,1}\left(\frac{1}{2},\frac{1}{2};n\right)+\textnormal{S}_1\left(\frac{1}{2};n\right)\textnormal{S}_3\left(\frac{1}{2};n\right)+\textnormal{S}_4\left(\frac{1}{4};n\right)\\
\textnormal{S}_{1,3}\left(-\frac{1}{2},-\frac{1}{2};n\right)&=&-\textnormal{S}_{3,1}\left(-\frac{1}{2},-\frac{1}{2};n\right)+\textnormal{S}_1\left(-\frac{1}{2};n\right)\textnormal{S}_3\left(-\frac{1}{2};n\right)+\textnormal{S}_4\left(\frac{1}{4};n\right)\\
\textnormal{S}_{1,3}\left(\frac{1}{2},-\frac{1}{2};n\right)&=&-\textnormal{S}_{3,1}\left(-\frac{1}{2},\frac{1}{2};n\right)+\textnormal{S}_1\left(\frac{1}{2};n\right)\textnormal{S}_3\left(-\frac{1}{2};n\right)+\textnormal{S}_4\left(-\frac{1}{4};n\right)\\
\textnormal{S}_{1,3}\left(-\frac{1}{2},\frac{1}{2};n\right)&=&-\textnormal{S}_{3,1}\left(\frac{1}{2},-\frac{1}{2};n\right)+\textnormal{S}_1\left(-\frac{1}{2};n\right)\textnormal{S}_3\left(\frac{1}{2};n\right)+\textnormal{S}_4\left(-\frac{1}{4};n\right)\\
\textnormal{S}_{1,1}\left(-\frac{1}{2},-\frac{1}{2};n\right)&=& \frac{1}{2} \textnormal{S}_1\left(-\frac{1}{2};n\right)^2+\frac{1}{2}\textnormal{S}_2\left(\frac{1}{4};n\right)\\
\textnormal{S}_{1,1}\left(\frac{1}{2},\frac{1}{2};n\right)&=& \frac{1}{2} \textnormal{S}_1\left(\frac{1}{2};n\right)^2+\frac{1}{2}\textnormal{S}_2\left(\frac{1}{4};n\right)\\
\textnormal{S}_{1,1}\left(\frac{1}{2},-\frac{1}{2};n\right)&=&-\textnormal{S}_{1,1}\left(-\frac{1}{2},\frac{1}{2};n\right)+\textnormal{S}_1\left(-\frac{1}{2};n\right)\textnormal{S}_1\left(\frac{1}{2};n\right)+\textnormal{S}_2\left(-\frac{1}{4};n\right)\\
\textnormal{S}_{3,3}\left(-\frac{1}{2},-\frac{1}{2};n\right)&=& \frac{1}{2}\textnormal{S}_3\left(-\frac{1}{2};n\right)^2+\frac{1}{2} \textnormal{S}_6\left(\frac{1}{4};n\right)\\
\textnormal{S}_{3,3}\left(\frac{1}{2},\frac{1}{2};n\right)&=& \frac{1}{2}\textnormal{S}_3\left(\frac{1}{2};n\right)^2+\frac{1}{2} \textnormal{S}_6\left(\frac{1}{4};n\right)\\
\textnormal{S}_{3,3}\left(\frac{1}{2},-\frac{1}{2};n\right)&=&-\textnormal{S}_{3,3}\left(-\frac{1}{2},\frac{1}{2};n\right)+\textnormal{S}_3\left(-\frac{1}{2};n\right)\textnormal{S}_3\left(\frac{1}{2};n\right)+\textnormal{S}_6\left(-\frac{1}{4};n\right)
\end{eqnarray*}
\normalsize
we find the $6$ basis sums
\small
\begin{eqnarray*}
&&\textnormal{S}_{3,1}\left(\frac{1}{2},\frac{1}{2};n\right),\textnormal{S}_{3,1}\left(-\frac{1}{2},\frac{1}{2};n\right),\textnormal{S}_{3,1}\left(\frac{1}{2},-\frac{1}{2};n\right),\textnormal{S}_{3,1}\left(-\frac{1}{2},-\frac{1}{2};n\right),\\
&&\textnormal{S}_{1,1}\left(-\frac{1}{2},\frac{1}{2};n\right),\textnormal{S}_{3,3}\left(-\frac{1}{2},\frac{1}{2};n\right),
\end{eqnarray*}
\normalsize
in which all the other 10 sums of depth 2 can be expressed.
\label{SSRelationEx1}
\end{example}

\subsection{Differential Relations}
\label{SSdiffrel}
In Section \ref{SSdifferentiation} we described the differentiation of S-sums with respect to the upper summation limit. 
The differentiation leads to new relations for instance we find
$$
\frac{d}{d n}\S{2}{2;n}=-\S{3}{2;n}+\H{0}2\S{2}{2;n}+\H{0,0,-1}1+2\H{0,0,1}1+\H{0,1,-1}1.
$$
As for harmonic sums we collect the derivatives with respect to $n:$
\begin{eqnarray}
\S{a_1,\ldots,a_k}{b_1,\ldots,b_k;n}^{(D)}
= \left\{\frac{\partial^N}{\partial n^N}\S{a_1,\ldots,a_k}{b_1,\ldots,b_k;n};
N \in \N\right\}.
\end{eqnarray}
Continuing the Example \ref{SSRelationEx1} we get
\begin{example}[Example \ref{SSRelationEx1} continued]From differentiation we get the additional relations:
\small
\begin{eqnarray*}
\textnormal{S}_{3,1}\hspace{-0.2em}\left(\frac{1}{2},\frac{1}{2};n\right)&=& \frac{1}{12} \frac{\partial}{\partial n}\textnormal{S}_3\hspace{-0.2em}\left(\frac{1}{4};n\right)-\frac{1}{2}
   \frac{\partial}{\partial n}\textnormal{S}_{2,1}\hspace{-0.2em}\left(\frac{1}{2},\frac{1}{2};n\right)-\frac{1}{2} \textnormal{H}_{1,0}\hspace{-0.2em}\left(\frac{1}{2}\right)
   \textnormal{S}_2\hspace{-0.2em}\left(\frac{1}{2};n\right)\\&&+\textnormal{H}_0\hspace{-0.2em}\left(\frac{1}{2}\right) \textnormal{S}_{2,1}\hspace{-0.2em}\left(\frac{1}{2},\frac{1}{2};n\right)-\frac{1}{2}
   \textnormal{H}_{\frac{1}{2}}\hspace{-0.2em}\left(\frac{1}{4}\right) \textnormal{H}_{0,1,0}\hspace{-0.2em}\left(\frac{1}{2}\right)+\frac{1}{12} \textnormal{H}_{0,0,1,0}\hspace{-0.2em}\left(\frac{1}{4}\right)\\&&+\frac{1}{2}
   \textnormal{H}_{\frac{1}{2},0,1,0}\hspace{-0.2em}\left(\frac{1}{4}\right)-\frac{1}{12} \textnormal{H}_0\hspace{-0.2em}\left(\frac{1}{4}\right) \textnormal{S}_3\hspace{-0.2em}\left(\frac{1}{4};n\right)-\frac{1}{4}
   \textnormal{S}_2\hspace{-0.2em}\left(\frac{1}{2};n\right)^2\\
\textnormal{S}_{3,1}\hspace{-0.2em}\left(-\frac{1}{2},-\frac{1}{2};n\right)&=& \frac{1}{12}
   \frac{\partial}{\partial n}\textnormal{S}_3\hspace{-0.2em}\left(\frac{1}{4};n\right)-\frac{1}{2}
   \frac{\partial}{\partial n}\textnormal{S}_{2,1}\hspace{-0.2em}\left(-\frac{1}{2},-\frac{1}{2};n\right)+\frac{1}{2} \textnormal{H}_{-\frac{1}{2},0}\hspace{-0.2em}\left(\frac{1}{4}\right)
   \textnormal{S}_2\hspace{-0.2em}\left(-\frac{1}{2};n\right)\\&&+\frac{1}{2} \textnormal{H}_0\hspace{-0.2em}\left(\frac{1}{4}\right) \textnormal{S}_{2,1}\hspace{-0.2em}\left(-\frac{1}{2},-\frac{1}{2};n\right)-\frac{1}{2}
   \textnormal{H}_{-\frac{1}{2}}\hspace{-0.2em}\left(\frac{1}{4}\right) \textnormal{H}_{0,-1,0}\hspace{-0.2em}\left(\frac{1}{2}\right)\\&&-\frac{1}{2}
   \textnormal{H}_{-\frac{1}{2},0,1,0}\hspace{-0.2em}\left(\frac{1}{4}\right)+\frac{1}{12} \textnormal{H}_{0,0,1,0}\hspace{-0.2em}\left(\frac{1}{4}\right)-\frac{1}{2}
   \textnormal{H}_{-\frac{1}{2}}\hspace{-0.2em}\left(\frac{1}{4}\right) \textnormal{H}_0\hspace{-0.2em}\left(\frac{1}{2}\right) \textnormal{S}_2\hspace{-0.2em}\left(-\frac{1}{2};n\right)\\&&-\frac{1}{12} \textnormal{H}_0\hspace{-0.2em}\left(\frac{1}{4}\right)
   \textnormal{S}_3\hspace{-0.2em}\left(\frac{1}{4};n\right)-\frac{1}{4} \textnormal{S}_2\hspace{-0.2em}\left(-\frac{1}{2};n\right)^2\\
\textnormal{S}_{3,1}\hspace{-0.2em}\left(-\frac{1}{2},\frac{1}{2};n\right)&=& \frac{1}{6}
   \frac{\partial}{\partial n}\textnormal{S}_3\hspace{-0.2em}\left(-\frac{1}{4};n\right)-\frac{1}{2}
   \frac{\partial}{\partial n}\textnormal{S}_{2,1}\hspace{-0.2em}\left(-\frac{1}{2},\frac{1}{2};n\right)-\frac{1}{2}
   \frac{\partial}{\partial n}\textnormal{S}_{2,1}\hspace{-0.2em}\left(\frac{1}{2},-\frac{1}{2};n\right)\\&&-\frac{1}{2} \textnormal{H}_{1,0}\hspace{-0.2em}\left(\frac{1}{2}\right)
   \textnormal{S}_2\hspace{-0.2em}\left(-\frac{1}{2};n\right)+\frac{1}{2} \textnormal{H}_{-\frac{1}{2},0}\hspace{-0.2em}\left(\frac{1}{4}\right)
   \textnormal{S}_2\hspace{-0.2em}\left(\frac{1}{2};n\right)\\&&+\textnormal{H}_0\hspace{-0.2em}\left(\frac{1}{2}\right) \textnormal{S}_{2,1}\hspace{-0.2em}\left(-\frac{1}{2},\frac{1}{2};n\right)+\frac{1}{2}
   \textnormal{H}_0\hspace{-0.2em}\left(\frac{1}{4}\right) \textnormal{S}_{2,1}\hspace{-0.2em}\left(\frac{1}{2},-\frac{1}{2};n\right)\\&&-\textnormal{S}_{3,1}\hspace{-0.2em}\left(\frac{1}{2},-\frac{1}{2};n\right)+\frac{1}{2}
   \textnormal{H}_{\frac{1}{2}}\hspace{-0.2em}\left(\frac{1}{4}\right) \textnormal{H}_{0,-1,0}\hspace{-0.2em}\left(\frac{1}{2}\right)+\frac{1}{2} \textnormal{H}_{\frac{3}{4}}\hspace{-0.2em}\left(\frac{1}{4}\right)
   \textnormal{H}_{0,1,0}\hspace{-0.2em}\left(\frac{1}{2}\right)\\&&+\frac{1}{2} \textnormal{H}_{-\frac{1}{2},0,-1,0}\hspace{-0.2em}\left(\frac{1}{4}\right)-\frac{1}{6}
   \textnormal{H}_{0,0,-1,0}\hspace{-0.2em}\left(\frac{1}{4}\right)-\frac{1}{2} \textnormal{H}_{\frac{1}{2},0,-1,0}\hspace{-0.2em}\left(\frac{1}{4}\right)\\&&-\frac{1}{2} \textnormal{H}_0\hspace{-0.2em}\left(\frac{1}{2}\right)
   \textnormal{H}_{\frac{3}{4}}\hspace{-0.2em}\left(\frac{1}{4}\right) \textnormal{S}_2\hspace{-0.2em}\left(\frac{1}{2};n\right)-\frac{1}{6} \textnormal{H}_0\hspace{-0.2em}\left(\frac{1}{4}\right)
   \textnormal{S}_3\hspace{-0.2em}\left(-\frac{1}{4};n\right)\\&&-\frac{1}{2} \textnormal{S}_2\hspace{-0.2em}\left(-\frac{1}{2};n\right)
   \textnormal{S}_2\hspace{-0.2em}\left(\frac{1}{2};n\right).
\end{eqnarray*}
\normalsize
Using all the relations we can reduce the number of basis sums at depth $d=2$ to $3$ by introducing differentiation. In our case the basis sum are:
\small
\begin{eqnarray*}
\textnormal{S}_{3,1}\hspace{-0.2em}\left(\frac{1}{2},-\frac{1}{2};n\right),\textnormal{S}_{1,1}\hspace{-0.2em}\left(-\frac{1}{2},\frac{1}{2};n\right),\textnormal{S}_{3,3}\hspace{-0.2em}\left(-\frac{1}{2},\frac{1}{2};n\right)
\end{eqnarray*}
\normalsize
Note that we introduced the letters $(2,\pm \frac{1}{2})$ and $(3,\pm \frac{1}{4}),$ however these letters appear just in sums of depth 2 with weight $3$ and are only used to express sums of weight $4.$ 
\label{SSRelationEx2}
\end{example}

\subsection{Duplication Relations}
\label{SSduplrel}
As for harmonic sums we have a duplication relation:
\begin{thm}[Duplication Relation]
 For $a_i \in \N$, $b_i \in \R^+$ and $n\in \N$ we have
$$
\sum{\S{a_m,\ldots,a_1}{\pm b_m,\ldots,\pm b_1;2\;n}}=\frac{1}{2^{\sum_{i=1}^m{a_i}-m}}\S{a_m,\ldots,a_1}{b_m^2,\ldots,b_1^2;n}.
$$ 
where we sum on the left hand side over the $2^m$ possible combinations concerning $\pm$.
\end{thm}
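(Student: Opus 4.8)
The plan is to prove the duplication relation by a direct generalization of the proof of the harmonic-sum duplication relation (Theorem~\ref{HSDuprel}), working from the defining nested sum. First I would set up the base case $m=1$: for a single S-sum $\S{a}{b;2n}=\sum_{i=1}^{2n}\frac{b^i}{i^a}$, I would split the sum over $i\in\{1,\ldots,2n\}$ into even and odd indices, so that summing $\S{a}{b;2n}+\S{a}{-b;2n}$ doubles the even-index contribution and cancels the odd-index contribution. The even part is $2\sum_{j=1}^{n}\frac{b^{2j}}{(2j)^a}=2^{1-a}\sum_{j=1}^n\frac{(b^2)^j}{j^a}=2^{1-a}\S{a}{b^2;n}$, which is exactly the claimed identity with $\sum a_i-m = a-1$.

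For the inductive step I would assume the statement for depth $m-1$ and write, using the definition of the S-sum,
\begin{eqnarray*}
\sum_{\pm}\S{a_m,\ldots,a_1}{\pm b_m,\ldots,\pm b_1;2n}
&=&\sum_{\pm}\sum_{i=1}^{2n}\frac{(\pm b_m)^i}{i^{a_m}}\S{a_{m-1},\ldots,a_1}{\pm b_{m-1},\ldots,\pm b_1;i},
\end{eqnarray*}
where the outer $\sum_\pm$ still ranges over all $2^m$ sign choices. The key manipulation is to first carry out the sum over the signs of $b_1,\ldots,b_{m-1}$ for \emph{fixed} sign of $b_m$ and fixed outer index $i$; but the inductive hypothesis is phrased for upper limit $2n$, i.e.\ for an even upper limit, so I cannot apply it directly to $\S{a_{m-1},\ldots}{\ldots;i}$ for odd $i$. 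The right way around this is to split the $i$-sum into even and odd parts \emph{before} summing over the sign of $b_m$: for odd $i$ the factor $(b_m)^i+(-b_m)^i=0$ kills those terms once the $b_m$-sign sum is done, so only even $i=2j$ survive, and then the inner sum has upper limit $2j$, to which the depth-$(m-1)$ hypothesis applies (after also summing the remaining $2^{m-1}$ signs of $b_1,\ldots,b_{m-1}$). Concretely, the surviving terms give
\[
\sum_{j=1}^{n}\frac{(b_m)^{2j}+(-b_m)^{2j}}{(2j)^{a_m}}\cdot\frac{1}{2^{\sum_{i=1}^{m-1}a_i-(m-1)}}\S{a_{m-1},\ldots,a_1}{b_{m-1}^2,\ldots,b_1^2;j}
=\frac{2}{2^{a_m}}\cdot\frac{1}{2^{\sum_{i=1}^{m-1}a_i-(m-1)}}\,\S{a_m,\ldots,a_1}{b_m^2,\ldots,b_1^2;n},
\]
and collecting the powers of $2$ yields the exponent $a_m + \bigl(\sum_{i=1}^{m-1}a_i-(m-1)\bigr)-1 = \sum_{i=1}^{m}a_i-m$, as required.

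The main obstacle I anticipate is getting the order of summation rigorously correct: one must justify interchanging the finite sum over the $2^{m-1}$ inner sign choices with the outer $i$-sum, and must be careful that the inductive hypothesis is applied only to even upper limits. Since everything in sight is a finite sum this interchange is harmless, but the bookkeeping of which signs are summed at which stage is where an error would creep in, so I would present it by peeling off the outermost letter $(a_m,b_m)$ first, grouping the $\pm b_m$ pair, using $(b_m)^i+(-b_m)^i = 2\,b_m^i\,[i\text{ even}]$, and only then invoking the depth-$(m-1)$ statement on the remaining $2^{m-1}$ sign sum. The hypothesis $b_i\in\R^+$ guarantees $b_i^2$ is again a legitimate (positive) S-sum argument, so no issue of definedness arises, and the restriction $n\in\N$ makes all upper limits honest positive integers. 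A graphical/iterated-integral proof via Theorem~\ref{SSintrep} would also be possible but is more cumbersome; the combinatorial argument above is the cleanest route.
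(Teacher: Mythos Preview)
Your proposal is correct and follows essentially the same approach as the paper: induction on the depth $m$, with the base case handled by splitting $\sum_{i=1}^{2n}$ into even and odd indices, and the inductive step carried out by peeling off the outermost letter, using $(\pm b_m)^i$ to kill the odd-$i$ contributions after summing over the sign of $b_m$, and then applying the depth-$(m-1)$ hypothesis to the surviving inner sums with even upper limit $2j$. Your explicit attention to the fact that the inductive hypothesis may only be invoked for even upper limits is precisely the bookkeeping point the paper's proof handles implicitly.
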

\begin{proof}
We proceed by induction on $m.$ For $m=1$ we get
\begin{eqnarray*}
 \S{a}{-b,2n}
	&=&\sum_{i=1}^{2n}{\frac{(-b)^i}{i^a}}=\sum_{i=1}^{n}{\left(\frac{b^{2i}}{(2i)^a}-\frac{b^{2i-1}}{(2i-1)^a}\right)} 
		=\frac{1}{2^a}\S{a}{b^2;n}-\sum_{i=1}^n{\frac{b^{2i-1}}{(2i-1)^a}}\\
	&=&\frac{1}{2^a}\S{a}{b^2;n}-\sum_{i=1}^n{\left(\frac{b^{2i}}{(2i)^a}+\frac{b^{2i-1}}{(2i-1)^a}-\frac{b^{2i}}{(2i)^a}\right)}\\ 
	&=&\frac{1}{2^a}\S{a}{b^2;n}-\S{a}(b;2n)+\frac{1}{2^a}\S{a}{b^2;n}\\
	&=&\frac{1}{2^{a-1}}\S{a}{b^2,n}-\S{a}{b;2n}.
\end{eqnarray*}
Now suppose the theorem holds for $m:$
\begin{eqnarray*}
&&\sum{\S{a_{m+1}, \ldots,a_{1}}{\pm b_{m+1},\ldots, \pm b_1;2n}}=\\
	&&\hspace{2cm}= \sum_{i=1}^n\biggl(\frac{(\pm b_{m+1}^{2i})}{(2i)^{a_{m+1}}}\sum{\S{a_m,\ldots,a_1}{\pm b_m,\ldots,\pm b_1;2\;i}}\biggr.\\
		&&\hspace{2.5cm}+\biggl.\frac{\pm b_{m+1}^{2i-1}}{^(2i-1)^{a_{m+1}}}\sum{\S{a_m,\ldots,a_1}{\pm b_m,\ldots,\pm b_1;2\;i+1}}\biggr)\\
	&&\hspace{2cm}=2\sum_{i=1}^n\frac{(b^2)^i}{(2i)^{a_{m+1}}}\sum{\S{a_m,\ldots,a_1}{\pm b_m,\ldots,\pm b_1;2\;i}}\\
	&&\hspace{2cm}=2\sum_{i=1}^n\frac{(b^2)^i}{(2i)^{a_{m+1}}}\frac{1}{2^{\sum_{i=1}^m a_i-m}}{\S{a_m,\ldots,a_1}{b_m^2,\ldots,b_1^2;i}}\\
	&&\hspace{2cm}=\frac{1}{2^{\sum_{i=1}^{m+1} a_i-(m+1)}}\S{a_{m+1},a_m,\ldots,a_1}{\pm b_{m+1},\pm b_m,\ldots,\pm b_1;n}.
\end{eqnarray*}
\end{proof}

\begin{example}[Example \ref{SSRelationEx2} continued]The duplication relations are 
\small
\begin{eqnarray*}
\textnormal{S}_{3,3}\hspace{-0.2em}\left(\frac{1}{2},\frac{1}{2};2 n\right)&=& -\textnormal{S}_{3,3}\hspace{-0.2em}\left(-\frac{1}{2},-\frac{1}{2};2n\right)-\textnormal{S}_{3,3}\hspace{-0.2em}\left(-\frac{1}{2},\frac{1}{2};2 n\right)\\&&+\frac{1}{16}\textnormal{S}_{3,3}\hspace{-0.2em}\left(\frac{1}{4},\frac{1}{4};n\right)-\textnormal{S}_{3,3}\hspace{-0.2em}\left(\frac{1}{2},-\frac{1}{2};2 n\right)\\
\textnormal{S}_{1,1}\hspace{-0.2em}\left(\frac{1}{2},\frac{1}{2};2n\right)&=& -\textnormal{S}_{1,1}\hspace{-0.2em}\left(-\frac{1}{2},-\frac{1}{2};2 n\right)-\textnormal{S}_{1,1}\hspace{-0.2em}\left(-\frac{1}{2},\frac{1}{2};2n\right)\\&&+\textnormal{S}_{1,1}\left(\frac{1}{4},\frac{1}{4};n\right)-\textnormal{S}_{1,1}\left(\frac{1}{2},-\frac{1}{2};2n\right)\\
\textnormal{S}_{3,1}\hspace{-0.2em}\left(\frac{1}{2},\frac{1}{2};2 n\right)&=& -\textnormal{S}_{3,1}\hspace{-0.2em}\left(-\frac{1}{2},-\frac{1}{2};2n\right)-\textnormal{S}_{3,1}\hspace{-0.2em}\left(-\frac{1}{2},\frac{1}{2};2 n\right)\\&&+\frac{1}{4}\textnormal{S}_{3,1}\hspace{-0.2em}\left(\frac{1}{4},\frac{1}{4};n\right)-\textnormal{S}_{3,1}\hspace{-0.2em}\left(\frac{1}{2},-\frac{1}{2};2 n\right)\\
\textnormal{S}_{1,3}\hspace{-0.2em}\left(\frac{1}{2},\frac{1}{2};2n\right)&=& -\textnormal{S}_{1,3}\hspace{-0.2em}\left(-\frac{1}{2},-\frac{1}{2};2 n\right)-\textnormal{S}_{1,3}\hspace{-0.2em}\left(-\frac{1}{2},\frac{1}{2};2 n\right)\\&&+\frac{1}{4}\textnormal{S}_{1,3}\hspace{-0.2em}\left(\frac{1}{4},\frac{1}{4};n\right)-\textnormal{S}_{1,3}\hspace{-0.2em}\left(\frac{1}{2},-\frac{1}{2};2 n\right).
\end{eqnarray*}
\normalsize
Note that from duplication we do not get a further reduction in this case. We would have to introduce new sums of the same depth and weight to express the basis sums form Example \ref{SSRelationEx2}.
\label{SSRelationEx3}
\end{example}

\subsection{Examples for Specific Index Sets}
In this subsection we want to present the numbers of basis sums for specific index sets at special depths or weights. In the Tables \ref{SSdepth2table}, \ref{SSdepth3table} and \ref{SSdepth4table} 
we summarize the number of algebraic basis sums at the possible index sets at depths $2,3,4$ respectively. To illustrate how these tables are to be understood we take a closer look at the depth $d=3$ example with the 
index set $\{(a_1,a_1,a_2),(x_1,x_2,x_3)\};$ here, \eg $\{(a_1,a_1,a_2),(x_1,x_2,x_3)\}$ stands for a multiset and each element is allowed to be taken once to build all the possible S-sums. 
There we obtain the $18$ sums:
\small
\begin{eqnarray*}
&& \S{a_1, a_1, a_2}{x_1, x_2, x_3;n}, \S{a_1, a_1, a_2}{x_1, x_3, x_2;n}, \S{a_1, a_1, a_2}{x_2, x_1, x_3;n}, \\
&& \S{a_1, a_1, a_2}{x_2, x_3, x_1;n}, \S{a_1, a_1, a_2}{x_3, x_1, x_2;n}, \S{a_1, a_1, a_2}{x_3, x_2, x_1;n}, \\
&& \S{a_1, a_2, a_1}{x_1, x_2, x_3;n}, \S{a_1, a_2, a_1}{x_1, x_3, x_2;n}, \S{a_1, a_2, a_1}{x_2, x_1, x_3;n}, \\ 
&& \S{a_1, a_2, a_1}{x_2, x_3, x_1;n}, \S{a_1, a_2, a_1}{x_3, x_1, x_2;n}, \S{a_1, a_2, a_1}{x_3, x_2, x_1;n}, \\
&& \S{a_2, a_1, a_1}{x_1, x_2, x_3;n}, \S{a_2, a_1, a_1}{x_1, x_3, x_2;n}, \S{a_2, a_1, a_1}{x_2, x_1, x_3;n}, \\
&& \S{a_2, a_1, a_1}{x_2, x_3, x_1;n}, \S{a_2, a_1, a_1}{x_3, x_1, x_2;n}, \S{a_2, a_1, a_1}{x_3, x_2, x_1;n}.
\end{eqnarray*}
\normalsize
Using relations of the form 
\small
\begin{eqnarray*}
&&\S{a_1, a_1, a_2}{x_1, x_2, x_3;n} = \S{a_2}{x_3;n} \S{a_1, a_1}{x_1, x_2;n} + \S{a_1, a_1 + a_2}{x_1, x_2 x_3;n}\\
				&&\hspace{1cm} - \S{a_1}{x_1;n} \S{a_2, a_1}{x_3, x_2;n} - \S{a_2, 2 a_1}{x_3, x_1 x_2;n} + \S{a_2, a_1, a_1}{x_3, x_2, x_1;n}
\end{eqnarray*}
\normalsize
we can express all these $18$ sums using the following $6$ basis sums (with the price of introducing additional S-sums of lower depth subject to the relations given by the 
quasi shuffle algebra):
\small
\begin{eqnarray*}
&& \S{a_2, a_1, a_1}{x_1, x_2, x_3;n}, \S{a_2, a_1, a_1}{x_1, x_3, x_2;n}, \S{a_2, a_1, a_1}{x_2, x_1, x_3;n},\\
&& \S{a_2, a_1, a_1}{x_2, x_3, x_1;n}, \S{a_2, a_1, a_1}{x_3, x_1, x_2;n}, \S{a_2, a_1, a_1}{x_3, x_2, x_1;n}.
\end{eqnarray*}
\normalsize

\begin{table}\centering
\begin{tabular}{|c | r | r | r|}
\hline	
Index set& sums &basis sums & dependent sums\\
\hline	
  $\{a_1,a_1\},\{x_1,x_1\}$ &    1 &   0 &   1 \\
  $\{a_1,a_1\},\{x_1,x_2\}$ &    2 &   1 &   1 \\
\hline
  $\{a_1,a_2\},\{x_1,x_1\}$ &    2 &   1 &   1 \\
  $\{a_1,a_2\},\{x_1,x_2\}$ &    4 &   2 &   2 \\
\hline
\end{tabular}
\caption{\label{SSdepth2table}Number of basis sums for different index sets at depth 2.}
\end{table}

\begin{table}\centering
\begin{tabular}{|c | r | r | r|}
\hline	
Index set& sums &basis sums & dependent sums\\
\hline	
  $\{a_1,a_1,a_1\},\{x_1,x_1,x_1\}$ &    1 &   0 &   1  \\
  $\{a_1,a_1,a_1\},\{x_1,x_1,x_2\}$ &    3 &   1 &   2  \\
  $\{a_1,a_1,a_1\},\{x_1,x_2,x_3\}$ &    6 &   2 &   4  \\
\hline	
  $\{a_1,a_1,a_2\},\{x_1,x_1,x_1\}$ &    3 &   1 &   2  \\
  $\{a_1,a_1,a_2\},\{x_1,x_1,x_2\}$ &    9 &   3 &   6  \\
  $\{a_1,a_1,a_2\},\{x_1,x_2,x_3\}$ &   18 &   6 &  12  \\
\hline	
  $\{a_1,a_2,a_3\},\{x_1,x_1,x_1\}$ &    6 &   2 &   4  \\
  $\{a_1,a_2,a_3\},\{x_1,x_1,x_2\}$ &   18 &   6 &  12  \\
  $\{a_1,a_2,a_3\},\{x_1,x_2,x_3\}$ &   36 &  12 &  24  \\
\hline
\end{tabular}
\caption{\label{SSdepth3table}Number of basis sums for different index sets at depth 3.}
\end{table}

\begin{table}\centering
\begin{tabular}{|c | r | r | r|}
\hline	
Index set& sums &basis sums & dependent sums\\
\hline	
  $\{a_1,a_1,a_1,a_1\},\{x_1,x_1,x_1,x_1\}$ &    1 &   0 &   1  \\
  $\{a_1,a_1,a_1,a_1\},\{x_1,x_1,x_1,x_2\}$ &    4 &   1 &   3  \\
  $\{a_1,a_1,a_1,a_1\},\{x_1,x_1,x_2,x_2\}$ &    6 &   1 &   5  \\
  $\{a_1,a_1,a_1,a_1\},\{x_1,x_1,x_2,x_3\}$ &   12 &   3 &   9  \\
  $\{a_1,a_1,a_1,a_1\},\{x_1,x_2,x_3,x_4\}$ &   24 &   6 &  18  \\
\hline
  $\{a_1,a_1,a_1,a_2\},\{x_1,x_1,x_1,x_1\}$ &    4 &   1 &   3  \\
  $\{a_1,a_1,a_1,a_2\},\{x_1,x_1,x_1,x_2\}$ &   16 &   4 &  12  \\
  $\{a_1,a_1,a_1,a_2\},\{x_1,x_1,x_2,x_2\}$ &   24 &   6 &  18  \\
  $\{a_1,a_1,a_1,a_2\},\{x_1,x_1,x_2,x_3\}$ &   48 &  12 &  36  \\
  $\{a_1,a_1,a_1,a_2\},\{x_1,x_2,x_3,x_4\}$ &   96 &  24 &  72  \\
\hline	
  $\{a_1,a_1,a_2,a_2\},\{x_1,x_1,x_1,x_1\}$ &    6 &   1 &   5  \\
  $\{a_1,a_1,a_2,a_2\},\{x_1,x_1,x_1,x_2\}$ &   24 &   6 &  18  \\
  $\{a_1,a_1,a_2,a_2\},\{x_1,x_1,x_2,x_2\}$ &   36 &   8 &  28  \\
  $\{a_1,a_1,a_2,a_2\},\{x_1,x_1,x_2,x_3\}$ &   72 &  18 &  54  \\
  $\{a_1,a_1,a_2,a_2\},\{x_1,x_2,x_3,x_4\}$ &  144 &  36 & 108  \\
\hline	
  $\{a_1,a_1,a_2,a_3\},\{x_1,x_1,x_1,x_1\}$ &   12 &   3 &   9  \\
  $\{a_1,a_1,a_2,a_3\},\{x_1,x_1,x_1,x_2\}$ &   48 &  12 &  36  \\
  $\{a_1,a_1,a_2,a_3\},\{x_1,x_1,x_2,x_2\}$ &   72 &  18 &  54  \\
  $\{a_1,a_1,a_2,a_3\},\{x_1,x_1,x_2,x_3\}$ &  144 &  36 & 108  \\
  $\{a_1,a_1,a_2,a_3\},\{x_1,x_2,x_3,x_4\}$ &  288 &  72 & 216  \\
\hline	
  $\{a_1,a_2,a_3,a_4\},\{x_1,x_1,x_1,x_1\}$ &   24 &   6 &  18  \\
  $\{a_1,a_2,a_3,a_4\},\{x_1,x_1,x_1,x_2\}$ &   96 &  24 &  72  \\
  $\{a_1,a_2,a_3,a_4\},\{x_1,x_1,x_2,x_2\}$ &  144 &  36 & 108  \\
  $\{a_1,a_2,a_3,a_4\},\{x_1,x_1,x_2,x_3\}$ &  288 &  72 & 216  \\
  $\{a_1,a_2,a_3,a_4\},\{x_1,x_2,x_3,x_4\}$ &  576 & 144 & 432  \\
\hline	
\end{tabular}
\caption{\label{SSdepth4table}Number of basis sums for different index sets at depth 4.}
\end{table}

In Table \ref{SSweighttable1} we summarize the number of algebraic basis sums at specified weights for arbitrary indices in the $x_i$, while in Table \ref{SSweighttable2} we summarize 
the number of algebraic and differential bases sums for $x_i\in \{1,-1,1/2,-1/2,2,-2\}$ and where each of the indices $\{1/2,-1/2,2,-2\}$ is allowed to appear just once in each sum. 
To illustrate how these tables are to be understood we take a closer look at two examples.
At weight $w=3$ with index set $\{x_1,x_2,x_3\},$ we consider the $19$ sums:
\small
\begin{eqnarray*}
&& \S{3}{x_1 x_2 x_3;n}, \S{1, 2}{x_1, x_2 x_3;n}, \S{1, 2}{x_2, x_1 x_3;n}, \S{1, 2}{x_1 x_2, x_3;n}, \S{1, 2}{x_3, x_1 x_2;n},\\
&& \S{1, 2}{x_1 x_3, x_2;n}, \S{1, 2}{x_2 x_3, x_1;n}, \S{2, 1}{x_1, x_2 x_3;n}, \S{2, 1}{x_2, x_1 x_3;n}, \S{2, 1}{x_1 x_2, x_3;n},\\
&& \S{2, 1}{x_3, x_1 x_2;n}, \S{2, 1}{x_1 x_3, x_2;n}, \S{2, 1}{x_2 x_3, x_1;n}, \S{1, 1, 1}{x_1, x_2, x_3;n},  \S{1, 1, 1}{x_1, x_3, x_2;n},\\ 
&& \S{1, 1, 1}{x_2, x_1, x_3;n}, \S{1, 1, 1}{x_2, x_3, x_1;n}, \S{1, 1, 1}{x_3, x_1, x_2;n},  \S{1, 1, 1}{x_3, x_2, x_1;n}.
\end{eqnarray*}
\normalsize
Using relations of the form 
\small
\begin{eqnarray*}
\S{1, 2}{x_2 x_3, x_1;n} &=& \S{1}{x_2 x_3;n} \S{2}{x_1;n} + \S{3}{x_1 x_2 x_3;n} - \S{2, 1}{x_1, x_2 x_3;n}, \\
\S{1, 1, 1}{x_1, x_2, x_3;n} &=& -\S{1}{x_3;n} \S{1, 1}{x_2, x_1;n} + \S{1}{x_1;n} \S{1, 1}{x_2, x_3;n}\\
					&& + \S{2, 1}{x_1 x_2, x_3;n}- \S{2, 1}{x_2 x_3, x_1;n} + \S{1, 1, 1}{x_3, x_2, x_1;n}
\end{eqnarray*}
\normalsize
we can express all these $19$ sums using the following $9$ basis sums:
\small
\begin{eqnarray*}
&&\S{3}{x_1 x_2 x_3;n}, \S{2, 1}{x_1, x_2 x_3;n}, \S{2, 1}{x_2, x_1 x_3;n}, \S{2, 1}{x_1 x_2, x_3;n}, \S{2, 1}{x_3, x_1 x_2;n},\\
&&\S{2, 1}{x_1 x_3, x_2;n}, \S{2, 1}{x_2 x_3, x_1;n}, \S{1, 1, 1}{x_3, x_1, x_2;n}, \S{1, 1, 1}{x_3, x_2, x_1;n}.
\end{eqnarray*}
\normalsize

\begin{table}\centering
\begin{tabular}{|c | c | r | r | r|}
\hline	
weight&index set& sums &basis sums & dependent sums\\
\hline	
 2& $\{x_1,x_1\}$ &    2 &   1 &   1 \\
  & $\{x_1,x_2\}$ &    3 &   3 &   1 \\
\hline
 3& $\{x_1,x_1,x_1\}$ &    6 &   3 &   3 \\
  & $\{x_1,x_1,x_2\}$ &   12 &   6 &   6 \\
  & $\{x_1,x_2,x_3\}$ &   19 &   9 &  10 \\
\hline
 4& $\{x_1,x_1,x_1,x_1\}$ &   20 &   8 &  12 \\
  & $\{x_1,x_1,x_1,x_2\}$ &   50 &  20 &  30 \\
  & $\{x_1,x_1,x_2,x_2\}$ &   64 &  24 &  40 \\
  & $\{x_1,x_1,x_2,x_3\}$ &  106 &  40 &  66 \\
  & $\{x_1,x_2,x_3,x_4\}$ &  175 &  64 & 111 \\
\hline
 5& $\{x_1,x_1,x_1,x_1,x_1\}$ &   70 &  25 &   45 \\
  & $\{x_1,x_1,x_1,x_1,x_2\}$ &  210 &  70 &  140 \\
  & $\{x_1,x_1,x_1,x_2,x_2\}$ &  325 & 105 &  220 \\
  & $\{x_1,x_1,x_1,x_2,x_3\}$ &  555 & 175 &  380 \\
  & $\{x_1,x_1,x_2,x_2,x_3\}$ &  725 & 225 &  500 \\
  & $\{x_1,x_1,x_2,x_3,x_4\}$ & 1235 & 375 &  860 \\
  & $\{x_1,x_2,x_3,x_4,x_5\}$ & 2101 & 625 & 1476 \\
\hline
 6& $\{x_1,x_1,x_1,x_1,x_1,x_1\}$ &   252 &   75 &   177\\
  & $\{x_1,x_1,x_1,x_1,x_1,x_2\}$ &   882 &  252 &   630\\
  & $\{x_1,x_1,x_1,x_1,x_2,x_2\}$ &  1596 &  438 &  1158\\
  & $\{x_1,x_1,x_1,x_2,x_2,x_2\}$ &  1911 &  522 &  1389\\
  & $\{x_1,x_1,x_1,x_1,x_2,x_3\}$ &  2786 &  756 &  2030\\
  & $\{x_1,x_1,x_1,x_2,x_2,x_3\}$ &  4431 & 1176 &  3255\\
  & $\{x_1,x_1,x_2,x_2,x_3,x_3\}$ &  5886 & 1539 &  4347\\
  & $\{x_1,x_1,x_1,x_2,x_3,x_4\}$ &  7721 & 2016 &  5705\\
  & $\{x_1,x_1,x_2,x_2,x_3,x_4\}$ & 10251 & 2646 &  7605\\
  & $\{x_1,x_1,x_2,x_3,x_4,x_5\}$ & 17841 & 4536 & 13305 \\
  & $\{x_1,x_2,x_3,x_4,x_5,x_6\}$ & 31031 & 7776 & 23255\\
\hline
\end{tabular}
\caption{\label{SSweighttable1}Number of basis sums for different index sets up to weight 6.}
\end{table}

At weight $w=2$ with index set $\{1,-1,1/2,-1/2,2,-2\},$ where each of the indices $\{1/2,-1/2,2,-2\}$ is allowed to appear just once, we consider the $38$ sums:
\small
\begin{eqnarray*}
&&\S{-2}{n},\S{2}{n},\S{-1,-1}{n},\S{-1,1}{n},\S{1,-1}{n},\S{1,1}{n},\S{2}{-2;n},\S{2}{-\frac{1}{2};n},\\
&&\S{2}{\frac{1}{2};n},\S{2}{2;n},\S{1,1}{-2,-1;n},\S{1,1}{-2,-\frac{1}{2};n},\S{1,1}{-2,\frac{1}{2};n},\S{1,1}{-2,1;n},\\
&&\S{1,1}{-2,2;n},\S{1,1}{-1,-2;n},\S{1,1}{-1,-\frac{1}{2};n},\S{1,1}{-1,\frac{1}{2};n},\S{1,1}{-1,2;n},\\
&&\S{1,1}{-\frac{1}{2},-2;n},\S{1,1}{-\frac{1}{2},-1;n},\S{1,1}{-\frac{1}{2},\frac{1}{2};n},\S{1,1}{-\frac{1}{2},1;n},\S{1,1}{-\frac{1}{2},2;n},\\
&&\S{1,1}{\frac{1}{2},-2;n},\S{1,1}{\frac{1}{2},-1;n},\S{1,1}{\frac{1}{2},-\frac{1}{2};n},\S{1,1}{\frac{1}{2},1;n},\S{1,1}{\frac{1}{2},2;n},\\
&&\S{1,1}{1,-2;n},\S{1,1}{1,-\frac{1}{2};n},\S{1,1}{1,\frac{1}{2};n},\S{1,1}{1,2;n},\S{1,1}{2,-2;n},\S{1,1}{2,-1;n},\\
&&\S{1,1}{2,-\frac{1}{2};n},\S{1,1}{2,\frac{1}{2};n},\S{1,1}{2,1;n}.
\end{eqnarray*}
\normalsize
Using relations of the form 
\small
\begin{eqnarray*}
\S{1,1}{-\frac{1}{2},2;n}&=&-\frac{\partial}{\partial n}\S{-1}{n}+\H{-1,0}{1}+\S{1}{-\frac{1}{2};n}\;\S{1}{2;n}-\S{1,1}{2,-\frac{1}{2};n}\\
\S{1,1}{-\frac{1}{2},-2;n}&=&-\frac{\partial}{\partial n}\S{1}{n}-\H{1,0}{1}+\S{1}{-2;n}\;\S{1}{-\frac{1}{2};n}-\S{1,1}{-2,-\frac{1}{2};n}
\end{eqnarray*}
\normalsize
we can express all these $38$ sums using the following $17$ basis sums:
\small
\begin{eqnarray*}
&&\S{1,-1}{n},\S{1,1}{-2,-\frac{1}{2};n},\S{1,1}{-2,\frac{1}{2};n},\S{1,1}{-2,2;n},\S{1,1}{-1,-2;n},\\
&&\S{1,1}{-1,-\frac{1}{2};n},\S{1,1}{-1,\frac{1}{2};n},\S{1,1}{-1,2;n},\S{1,1}{-\frac{1}{2},\frac{1}{2};n},\S{1,1}{\frac{1}{2},-\frac{1}{2};n},\\
&&\S{1,1}{1,-2;n},\S{1,1}{1,-\frac{1}{2};n},\S{1,1}{1,\frac{1}{2};n},\S{1,1}{1,2;n},\S{1,1}{2,-2;n},\\
&&\S{1,1}{2,-\frac{1}{2};n},\S{1,1}{2,\frac{1}{2};n}.
\end{eqnarray*}
\normalsize
Note that we have
\begin{eqnarray*}
N_D(w)&=&N_S(w)-N_S(w-1)\\
N_{AD}(w)&=&N_A(w)-N_A(w-1)
\end{eqnarray*}
where $N_S(w),N_D(w)$ and $N_{AD}(w)$ are the number of sums, the number algebraic basis sums and the number of basis sums using algebraic and differential relations at weight $w$ respectively.

\begin{table}\centering
\begin{tabular}{|r | r | r | r | r|}
\hline	
weight& sums &a-basis sums & d-basis sums & ad-basis sums\\
\hline
 1&     6 &      6 &     6 &     6\\	
 2&    38 &     23 &    32 &    17\\
 3&   222 &    120 &   184 &    97\\
 4&  1206 &    654 &   984 &   543\\
 5&  6150 &   3536 &  4944 &  2882\\
 6& 29718 &  18280 & 23568 & 14744\\
\hline
\end{tabular}
\caption{\label{SSweighttable2}Number of basis sums up to weight 6 with index set $\{1,-1,1/2,-1/2,2,-2\}.$ Each of the indices $\{1/2,-1/2,2,-2\}$ is allowed to appear just once in each sum.}
\end{table}

\section{S-Sums at Infinity}

Of course not all S-sums are finite at infinity, since for example $\lim_{n\rightarrow \infty} \S{2}{2;n}$ does not exist.
In fact, we have the following theorem that extends Lemma \ref{HSconsumlem} which is used in the proof of the theorem:
\begin{thm}
Let $a_1, a_2, \ldots a_k \in \N$ and $x_1, x_2, \ldots x_k \in \R^*$ for $k \in \N.$
The S-sum $\S{a_1,a_2,\ldots,a_k}{x_1,x_2,\ldots,x_k;n}$ is absolutely convergent, when $n\rightarrow \infty$, if and only if one of the following conditions holds:
\begin{itemize}
 \item [1.] $\abs{x_1}<1 \wedge \abs{x_1 x_2}\leq 1 \wedge \ldots \wedge \abs{x_1 x_2 \cdots x_k}\leq 1$
 \item [2.] $a_1>1 \wedge \abs{x_1}=1 \wedge \abs{x_2}\leq 1 \wedge \ldots \wedge \abs{x_2 \cdots x_k}\leq 1.$
\end{itemize}
In addition the S-sum is conditional convergent (convergent but not absolutely convergent) if and only if
\begin{itemize}
 \item [3.] $a_1=1 \wedge x_1=-1 \wedge \abs{x_2}\leq 1 \wedge \ldots \wedge \abs{ x_2 \cdots x_k}\leq 1.$
\end{itemize}
\label{SSconsumthm}
\end{thm}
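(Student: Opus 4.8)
The statement to prove is the convergence criterion for S-sums at infinity (Theorem~\ref{SSconsumthm}). The plan is to reduce everything to the depth-one building blocks $\sum_{i\geq 1} x^i/i^a$ and then handle the nested structure by an induction on the depth $k$. The key observation is that for a depth-one sum $\S{a}{x;n}=\sum_{i=1}^n x^i/i^a$, absolute convergence as $n\to\infty$ holds iff $|x|<1$, or $|x|=1$ and $a>1$; while conditional-but-not-absolute convergence holds iff $x=-1$ and $a=1$ (the alternating harmonic series). This is classical and I would simply cite it. Everything else is bookkeeping about how the tail $\sum_{i_1\geq i_2\geq\cdots\geq i_k}$ behaves.

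First I would set up the notation: write $\S{\ve a}{\ve x;n}=\sum_{i=1}^n \frac{x_1^i}{i^{a_1}}\S{a_2,\ldots,a_k}{x_2,\ldots,x_k;i}$ and treat the inner sum $T(i):=\S{a_2,\ldots,a_k}{x_2,\ldots,x_k;i}$ by the induction hypothesis applied (carefully) to the shifted data. The subtlety is that the induction hypothesis as stated concerns convergence \emph{at infinity}, but here we need growth estimates for the partial sums $T(i)$ as $i\to\infty$ even in the divergent case. So the induction should really establish a slightly stronger statement: a two-sided asymptotic control, namely that $T(i)$ is either absolutely convergent (hence bounded), or grows like a fixed power of $\log i$ times a bounded oscillating factor, or — in the borderline cases where some partial product of the $x_j$ has modulus exactly one — grows polynomially like $i^{c}(\log i)^{d}$ for an explicitly determined $c\ge 0,\ d\ge 0$. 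With such a growth bound for $T(i)$ in hand, the outer sum $\sum_i x_1^i T(i)/i^{a_1}$ is governed by $\sum_i |x_1|^i i^{c-a_1}(\log i)^{d}$, and one reads off exactly conditions 1 and 2 for absolute convergence; condition 3 (conditional convergence) arises only from the genuinely alternating case $x_1=-1$, $a_1=1$, $T(i)$ bounded, where Dirichlet's test gives convergence and the divergence of $\sum 1/i$ (with $T(i)$ bounded below away from zero for large $i$) gives non-absolute convergence.

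The forward direction (these conditions are sufficient) is then the straightforward half: under condition~1 I would bound $\abs{\S{\ve a}{\ve x;n}}\leq \sum_{i_1\geq\cdots\geq i_k\geq 1}\prod_j|x_j|^{i_j}/i_j^{a_j}$ and use $|x_1|<1$ to kill the geometric factor, absorbing the weaker constraints $|x_1\cdots x_j|\leq 1$ to keep the inner tails bounded; under condition~2 the outer index carries $|x_1|=1$ but $a_1>1$ forces $\sum_i i^{-a_1}\cdot(\text{bounded inner sum})<\infty$. For the converse (necessity) I would argue contrapositively: if none of 1,2,3 holds, then either some $|x_1\cdots x_j|>1$ for the \emph{first} such $j$, forcing an exponentially growing subsum that even cancellations cannot tame (here one isolates the dominant diagonal-type contribution $i_1=\cdots=i_j$ to exhibit a lower bound on a suitable subsequence), or $|x_1|=1$ with $a_1=1$ and $x_1\ne -1$, which reduces to a non-convergent (non-alternating, or complex-unit) sum $\sum_i x_1^i/i$ weighted by an inner factor that does not decay.

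The main obstacle will be the borderline necessity cases, i.e.\ carefully showing that when $|x_1\cdots x_j|>1$ for the least such $j$ the whole multiple sum fails to converge even conditionally: one must argue that cancellation among the phases $x_1^{i_1}\cdots$ cannot rescue an exponentially large block, which I would do by passing to the generating-function / partial-summation picture and lower-bounding a cleverly chosen block sum (e.g.\ restricting $i_1,\ldots,i_j$ to an interval of length comparable to the exponential scale, where all summands have aligned sign). A secondary technical point is making the induction hypothesis strong enough — stating it as the explicit $i^{c}(\log i)^{d}$ growth dichotomy rather than mere convergence — so that the outer step goes through uniformly; once that strengthened inductive statement is correctly formulated, the bulk of the proof is routine estimation with geometric series, Dirichlet's test, and the standard behavior of $\sum i^{-s}(\log i)^{d}$.
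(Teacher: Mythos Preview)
Your approach to sufficiency (conditions 1 and 2) and to conditional convergence (condition 3) is essentially the paper's: induction on the depth $k$, bounding the inner partial sum, and for condition~3 extracting leading ones so that the divergent part becomes a polynomial in $\S{1}{i}$ whose $\log(i)^p$ growth is killed by the alternating $(-1)^i/i$. Your explicit $i^{c}(\log i)^{d}$ dichotomy is exactly the strengthened hypothesis the paper uses implicitly when it invokes the asymptotics of $\S{1}{i}^p$.

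The genuine difference is in the \emph{necessity} direction. You propose to show divergence (when none of 1, 2, 3 holds) by isolating dominant diagonal-type contributions and lower-bounding block sums with aligned signs --- an elementary but delicate argument that you yourself flag as the main obstacle. The paper instead invokes the integral representation of Theorem~\ref{SSintrep}: if $a_1=1$ and $x_1=1$, the innermost integral $\int_0^1 (y^n-1)/(y-1)\,dy$ diverges as $n\to\infty$; if some $|x_1\cdots x_j|>1$, then the innermost integration variable ranges over an interval whose endpoint exceeds $1$ in absolute value, and the integral again blows up. This sidesteps entirely the cancellation issues you anticipate. Your elementary route is plausible for real $x_i$ (sign alignment on blocks is available), but the integral representation buys a two-line argument where you would need a careful case-by-case lower bound; conversely, your approach is self-contained and does not depend on the machinery of Section~\ref{SSIntegralRep}. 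One small caution on your side: in the non-absolute part of condition~3 you assume $T(i)$ is eventually bounded away from zero, which is not automatic (the limit $T(\infty)$ could in principle vanish); the paper simply asserts this part ``can be seen easily'' and does not spell it out either.
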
 

\begin{proof}
First we want to show that the S-sum converges absolutely if either condition 1 or condition 2 holds. Therefore we proceed by induction on the depth $k.$ Here we prove the slightly more 
general case of S-sums, where we allow the last index $a_k$ to equal zero. Note that we extend the definition of S-sums in the obvious way. For $k=1$ the statement is obvious in both cases. 
Now let us look at the first condition and assume that the statement holds for depth $k-1:$
We distinguish two cases. First let $\abs{x_k}<1:$ 
\begin{eqnarray*}
\sum_{i=1}^n\abs{\frac{{x_1}^i}{i^{a_1}}\S{a_2,\ldots,a_k}{x_2,\ldots,x_k;i}} &\leq& \S{a_1,a_2,\ldots,a_k}{\abs{x_1},\abs{x_2},\ldots,\abs{x_k};n}\\
&\leq&  \S{a_1,a_2,\ldots,a_{k-1}-1}{\abs{x_1},\abs{x_2},\ldots,\abs{x_{k-1}};n}.
\end{eqnarray*}
If $a_{k-1}-1=0$ we extend the definition of S-sums in the obvious way. 
The last sum is of depth $k-1$ and it converges absolutely since $\abs{x_1}<1 \wedge \abs{x_1 x_2}\leq 1 \wedge \ldots \wedge \abs{x_1 x_2 \cdots x_{k-1}}\leq 1$ and due to the 
induction hypothesis. Thus $\S{a_1,a_2,\ldots,a_k}{x_1,x_2,\ldots,x_k;n}$ converges absolutely since the partial sums are bounded.\\
For $\abs{x_k}>1$ we get:
\begin{eqnarray*}
\sum_{i=1}^n\abs{\frac{{x_1}^i}{i^{a_1}}\S{a_2,\ldots,a_k}{x_2,\ldots,x_k;i}} &\leq& \S{a_1,a_2,\ldots,a_k}{\abs{x_1},\abs{x_2},\ldots,\abs{x_k};n}\\
&\leq&  \S{a_1,a_2,\ldots,a_{k-1}-1}{\abs{x_1},\abs{x_2},\ldots,\abs{x_{k-2}},\abs{x_{k-1}x_k};n}.
\end{eqnarray*}
The last sum is of depth $k-1$ and it converges absolutely since $\abs{x_1}<1 \wedge \abs{x_1 x_2}\leq 1 \wedge \ldots \wedge \abs{x_1 x_2 \cdots x_{k-2}}\leq~1 \wedge \abs{x_1 x_2 \cdots x_{k-1}x_k}\leq 1$ and due to
the induction hypothesis. Hence $\S{a_1,a_2,\ldots,a_k}{x_1,x_2,\ldots,x_k;n}$ converges absolutely.

Let us look at the second condition. For the special case $k=2$ and $a_1>2$ we get:
\begin{eqnarray*}
\sum_{i=1}^n\abs{\frac{1}{i^{a_1}}\S{a_2}{x_2;i}} &\leq& \S{a_1,a_2}{1,\abs{x_2};n} \leq \S{a_1-1}{1;n}=\S{a_1-1}{n}.
\end{eqnarray*}
$\SS{a_1,a_2}{1,x_2}{n}$ converges absolutely since $\S{a_1-1}{n}$ converges (base case).
For $k=2$ and $a_1=2$ we get
\begin{eqnarray*}
\sum_{i=1}^n\abs{\frac{1}{i^{2}}\S{a_2}{x_2;i}} &\leq& \S{2,a_2}{1,1;n}=\S{2,a_2}{n}.
\end{eqnarray*}
$\abs{\S{2,a_2}{1,x_2;n}}$ converges absolutely since $\S{2,a_2}{n}$ converges (see Lemma \ref{HSconsumlem}).
Assume now that the second condition holds for depth $k-1$ with $k\geq 3$:
We distinguish two cases. First let $\abs{x_k}<1:$ 
\begin{eqnarray*}
\sum_{i=1}^n\abs{\frac{{x_1}^i}{i^{a_1}}\S{a_2,\ldots,a_k}{x_2,\ldots,x_k;i}} &\leq& \S{a_1,a_2,\ldots,a_k}{1,\abs{x_2},\ldots,\abs{x_k};n}\\
&\leq&  \S{a_1,a_2,\ldots,a_{k-1}-1}{1,\abs{x_2},\ldots,\abs{x_{k-1}};n}.
\end{eqnarray*}
The last sum is of depth $k-1$ and it converges absolutely since $a_1>1 \wedge \abs{ x_2}\leq 1 \wedge \ldots \wedge \abs{x_2 \cdots x_{k-1}}\leq 1$ and due to the induction hypothesis. 
Therefore $\S{a_1,a_2,\ldots,a_k}{x_1,x_2,\ldots,x_k;n}$ converges absolutely.\\
For $\abs{x_k}>1$ we get:
\begin{eqnarray*}
\sum_{i=1}^n\abs{\frac{{x_1}^i}{i^{a_1}}\S{a_2,\ldots,a_k}{x_2,\ldots,x_k;i}} &\leq& \S{a_1,a_2,\ldots,a_k}{1,\abs{x_2},\ldots,\abs{x_k};n}\\
&\leq&  \S{a_1,a_2,\ldots,a_{k-1}-1}{1,\abs{x_2},\ldots,\abs{x_{k-2}},\abs{x_{k-1}x_k};n}.
\end{eqnarray*}
The last sum is of depth $k-1$ and it converges absolutely since $a_1>1  \wedge \abs{ x_2}\leq 1 \wedge \ldots \wedge \abs{x_2 \cdots x_{k-2}}\leq~1 \wedge \abs{x_2 \cdots x_{k-1}x_k}\leq 1$ and
due to the induction hypothesis. Consequently $\S{a_1,a_2,\ldots,a_k}{x_1,x_2,\ldots,x_k;n}$ converges absolutely.

Let us now look at the third condition: Note, that it can be seen easily, that the considered sums are not absolutely convergent. To show that they are conditional convergent, we again proceed by induction on the depth $k.$ For $k=1$ the statement is obvious. We assume that it holds for depth $k$ and we consider
$$
\S{1,a_1,\ldots,a_{k}}{-1,x_1,\ldots,x_{k};n}=\sum_{i=1}^n\frac{(-1)^i}{i}\S{a_1,\ldots,a_{k}}{x_1,\ldots,x_{k};i}.
$$
If $\abs{x_1}<1$ or $\abs{x_1}=1$ and $a_1>1$ the sum $\S{a_1,\ldots,a_{k}}{x_1,\ldots,x_{k};i}$ is convergent for $i\rightarrow \infty$
 due to condition 1 respectively 2, while if $x_1=-1$ and $a_1=1$ the sum $\S{a_1,\ldots,a_{k}}{x_1,\ldots,x_{k};i}$ is convergent for $i\rightarrow \infty$ due to the induction hypothesis and hence 
the sum $\S{1,a_1,\ldots,a_{k}}{-1,x_1,\ldots,x_{k};n}$ converges in these cases for $n\rightarrow \infty.$ 
The last case we have to consider is $x_1=1$ and $a_1=1:$ if $a_i=1$ and $x_i=1$ for all $1\leq i\leq k$ then we are in fact considering the sum 
$\S{-1,1,\ldots,1}n$ which is convergent for $n\rightarrow \infty$ due to Lemma \ref{HSconsumlem}. Now suppose that not for all $i$ we have $a_i=1$ and $x_i=1.$\\ 
Let $d$ be the smallest index such that $a_d\neq1$ or $x_d\neq1,$ \ie we consider the case
\begin{eqnarray*}
&&\S{1,1,\ldots,1,a_d,\ldots,a_{k}}{-1,1,\ldots,1,x_d,\ldots,x_{k};n}=\\
&&\hspace{2cm}\sum_{i=0}^n\frac{(-1)^i}{i}\S{1,\ldots,1,a_d,\ldots,a_{k}}{1,\ldots,1,x_d,\ldots,x_{k};i}.
\end{eqnarray*}
Similar as for harmonic sums we can extract the leading ones of the S-sum $\S{1,\ldots,1,a_d,\ldots,a_{k}}{1,\ldots,1,x_d,\ldots,x_{k};i}$ and end up in a univariate polynomial in $\S1n$ with coefficients in S-sums which are convergent 
for $i\rightarrow \infty$ due to condition 1 or 2 or due to the induction hypothesis (compare Remark \ref{HSextractleading1rem} and Section \ref{HSinfval}). Hence we can write
$$
\S{1,\ldots,1,a_d,\ldots,a_{k}}{1,\ldots,1,x_d,\ldots,x_{k};i}=\sum_{j=0}^{d-1}\S1i^jc_j(i),
$$
where the $c_j(i)$ are convergent $i\rightarrow \infty.$ For $p\in\N,$ $\S1i^p$ diverges for $i\rightarrow \infty$, but the divergence is of order $\log(i)^p$ (see Lemma \ref{HSexpandS1}). 
Hence $\frac{\S1i^j}{i}c_j(i)$ tends to zero for $i\rightarrow \infty$ and thus 
$$
\S{1,1,\ldots,1,a_d,\ldots,a_{k}}{-1,1,\ldots,1,x_d,\ldots,x_{k};n}=\sum_{j=0}^{d-1}\sum_{i=1}^{n}(-1)^i\frac{\S1i^j}{i}c_j(i)
$$
converges for $n\rightarrow \infty.$ 

If non of the three conditions holds 
\ie $$(a_1=1 \wedge x_1=1)\vee \abs{x_1}>1 \vee \abs{x_1 x_2}> 1 \vee \ldots \vee \abs{x_1 x_2 \cdots x_k}> 1$$ 
we show that the S-sum diverges.
To see that the S-sum diverges if non of the three conditions holds we use the integral representation of Theorem~\ref{SSintrep}. If $a_1=1 \wedge x_1=1,$ the innermost integral $$\int_0^1\frac{y^n-1}{y-1}dy$$ 
in the integral representation of Theorem~\ref{SSintrep} diverges for $n\rightarrow\infty$ and hence the S-sum diverges.\\
If  $\abs{x_1}>1 \vee \abs{x_1 x_2}> 1 \vee \ldots \vee \abs{x_1 x_2 \cdots x_k}> 1$ then the absolute value of $z$ the upper integration limit of the innermost integral
$$\int_0^{z}\frac{y^n-1}{y-1}$$
exceeds $1$ and hence the absolute value of the integral tends to infinity for $n\rightarrow\infty$ and the S-sum diverges.  
\end{proof}

\subsection{Relations between S-Sums at Infinity}\label{SSInfRelations}
In this section we will state several types of relations between the values of S-sums at infinity which are of importance in the following chapters.
\begin{itemize}
\item The first type of relations originates from the algebraic relations of S-sums, see Section \ref{SSalgrel}. These relations remain valid when we consider 
 them at infinity. We will refer to these relations as the quasi shuffle or stuffle relations.
\item The duplication relations from Section \ref{SSduplrel} remain valid if we consider sums which are finite at infinity, since it makes 
 no difference whether the argument is $\infty$ or $2\cdot\infty.$
\item We can generalize the relation form \cite{Vermaseren1998} (see Section \ref{HSInfRelations}) for harmonic sums to S-sums. For convergent sums we have:
\begin{eqnarray*}
&&\S{m_1,\ldots,m_p}{x_1,\ldots,x_p;\infty}\S{k_1,\ldots,k_q}{y_1,\ldots,y_p;\infty}=\\
	&&\hspace{2cm}\lim_{n \rightarrow \infty}\sum_{i=1}^n\frac{y_1^i \S{m_1,\ldots,m_p}{x_1,\ldots,x_p;n-i}\S{k_2,\ldots,k_q}{y_2,\ldots,y_p;i}} {i^{k_1}}.
\end{eqnarray*}
Using
\begin{eqnarray*}
&&\lim_{n \rightarrow \infty}\sum_{i=1}^n\frac{y_1^i \S{m_1,\ldots,m_p}{x_1,\ldots,x_p;n-i}\S{k_2,\ldots,k_q}{y_2,\ldots,y_p;i}} {i^{k_1}}=\\
&&\hspace{1cm}\sum_{a=1}^{k_1} \binom{k_1+m_1-1-a}{m_1-1}\sum_{i=1}^n\frac{x_1^i}{i^{m_1+k_1-a}}\\
&&\hspace{2cm}\sum_{j=1}^i\frac{\left(\frac{y_1}{x_1}\right)^j\S{m_2,\ldots,m_p}{x_2,\ldots,x_p;i-j}\S{k_2,\ldots,k_q}{y_2,\ldots,y_p;j}} {j^a}\\
&&\hspace{1cm}+\sum_{a=1}^{m_1} \binom{k_1+m_1-1-a}{m_1-1}\sum_{i=1}^n\frac{y_1^i}{i^{m_1+k_1-a}}\\
&&\hspace{2cm}\sum_{j=1}^i\frac{\left(\frac{x_1}{y_1}\right)^j\S{k_2,\ldots,k_q}{y_2,\ldots,y_p;i-j}\S{m_2,\ldots,m_p}{x_2,\ldots,x_p;j}}{j^a}
\end{eqnarray*}
we can rewrite the right hand side in terms of S-sums. We will refer to these relations as the shuffle relations since one could also obtain them from the shuffle algebra 
of multiple polylogarithms.
\end{itemize}

\subsection{S-Sums of Roots of Unity at Infinity}
\label{SSRootsofUnity}

In this section we look at the values of infinite S-sums at roots of unity (compare \cite{Ablinger2011}). We define
\begin{eqnarray*}
\sigma_{k_1,...,k_m}(x_1, ..., x_m):=\lim_{n \rightarrow \infty} S_{k_1,...,k_m}(x_1, ..., x_m; n),
\end{eqnarray*}
with $S_{k_1,...,k_m}(x_1, ..., x_m; n)$ a S-sum, $k_1 \neq 1 $ for $x_1 = 1$ and $x_j$ is a $l-$th root of unity ($l\geq 1$), 
\ie $x_j\in \left \{e_l| e_l^l = 1, e_l \in \mathbb{C} \right\}.$

We seek the relations between the sums of $w = 1,2$. They can be expressed in terms
of polylogarithms by~: 
\begin{eqnarray}
\sigma_w(x) &=& \Li_w(x),~~~~~ w \in \N, w\geq 1 \\
\sigma_1(x) =\Li_1(x) &=& - \ln(1-x)\\ 
\sigma_{1,1}(x,y) &=& \Li_2(x) + \frac{1}{2} \ln^2(1-x)+ \Li_2\left(- \frac{x(1-y)}{1-x}\right) \label{SSrurel1}
\end{eqnarray}
and $^*$ denotes complex conjugation.
At weight $w=1$ we use the suitable relations from Section \ref{SSInfRelations} together with relations due to complex conjugation (indicated by $^*$)
\begin{eqnarray*}
 \sigma_{1}(e_j^k)^*=\sigma_{1}(e_j^{j-k})=\sigma_{1}(e_j^k)-\frac{j-2k}{j}\pi.
\end{eqnarray*}
In addition we extend Proposition 2.3 of Ref.~\cite{GON1}, where we consider S-sums
$\S{k_1,...,k_m}{x_1, ..., x_m; n}$ with  $n \in \N_0, k_i \in \N, 
x_i \in 
\mathbb{C}, |x_i| \leq 1$. 
Let $l \in \N$ and 
\begin{eqnarray}
\label{SSPOW}
y_i^{l} = x_i 
\end{eqnarray}
then \cite[(6.12)]{Ablinger2011}
\begin{eqnarray}
\label{SSDISTR}
\S{k_1,...,k_m}{x_1, ..., x_m; n} = \prod_{j=1}^m l^{k_j-1} \sum_{y_i^l = x_i} \S{k_1,...,k_m}{y_1, ..., y_m; l n}~.
\end{eqnarray}
Here the sum is over the $l$th roots of $x_i$ for $i \in \{1,\ldots,m\}$.
Equation~(\ref{SSDISTR}) is called {\itshape distribution  relation}. It  follows from the Vieta's theorem for (\ref{SSPOW}) and properties of 
symmetric polynomials \cite{ALG1}. If $x_1 \neq 1$ for $k_1 = 1$ the limit
\begin{eqnarray}
\label{SSDISTR1}
\sigma_{k_1,...,k_m}(x_1, ..., x_m) = \lim_{N \rightarrow \infty}S_{k_1,...,k_m}(x_1, ..., x_m;  N)
\end{eqnarray}
exists.
One may apply (\ref{SSDISTR},\ref{SSDISTR1}) to roots of unity $x_j$ and $y_j$, 
\ie $x_j = \exp(2 \pi i n_j/m_j)$ and $y_{j}^k = \exp(2 \pi i k n_j/(m_j l)), k \in \{1, \ldots, (l-1)\}, n_j, m_j \in \N$ and $i$ the imaginary unit. 

We combined all these relations together with the relations form Section \ref{SSInfRelations} and solved the corresponding linear systems using computer algebra methods
and derived the representations for the dependent sums analytically in Table~\ref{TT:XXa} we summarize the number of basis elements.
\begin{table}
\centering
\scalebox{0.80}{%
\begin{tabular}{|c|r|r|r|r|r|r|r|r|r|r|r|r|r|r|r|r|r|r|r|r|}
\hline
\multicolumn{1}{|c}{$l$}             &
\multicolumn{1}{|c}{1}                 &
\multicolumn{1}{|c}{2}                 &
\multicolumn{1}{|c}{3}                 &
\multicolumn{1}{|c}{4}                 &
\multicolumn{1}{|c}{5}                 &
\multicolumn{1}{|c}{6}                 &
\multicolumn{1}{|c}{7}                 &
\multicolumn{1}{|c}{8}                 &
\multicolumn{1}{|c}{9}                 &
\multicolumn{1}{|c}{10}                &
\multicolumn{1}{|c}{11}                &
\multicolumn{1}{|c}{12}                &
\multicolumn{1}{|c}{13}                &
\multicolumn{1}{|c}{14}                &

\multicolumn{1}{|c}{15}                &
\multicolumn{1}{|c}{16}                &
\multicolumn{1}{|c}{17}                &
\multicolumn{1}{|c}{18}                &
\multicolumn{1}{|c}{19}                &
\multicolumn{1}{|c|}{20}               \\
\hline
{\sf basis} 
&  0  & 1  & 2  & 2  & 3  & 3  & 4  & 3 & 4 & 4 &  6 & 4 & 7 & 5 & 6 & 5 & 9 & 5 & 10 & 6  
\\
{\sf new} 
&  0  & 1  & 2  & 0  & 2  & 0  & 3  & 1 & 2 & 0 &  5 & 1 & 6 & 0 & 2 & 2 & 8 & 0 & 9 & 2 
\\
\hline
\end{tabular}
}
\caption{\label{TT:XXa}The number of the basis elements spanning the $w=1$ S-sums at  
$l$th roots of unity up to 20.}
\end{table}
The new elements contributing at the respective level of cyclotomy
for $l \leq 20$ are~:\\

\begin{tabular}{l l c r l}
$l = 1$  & $-$ 						&\hspace{1cm} & $l = 11$ & $\left.\sigma_1(e_{11}^k)\right|_{k=1}^5$\\
$l = 2$  & $\sigma_1(e_2)$      			&\hspace{1cm} & $l = 12$ & $\sigma_1(e_{12})$\\
$l = 3$  & $\sigma_1(e_3), \pi$ 			&\hspace{1cm} & $l = 13$ & $\left.\sigma_1(e_{13}^k)\right|_{k=1}^6$\\
$l = 4$  & $-$ 						&\hspace{1cm} & $l = 14$ & $-$\\
$l = 5$  & $\sigma_1(e_5),  \sigma_1(e_5^2)$ 		&\hspace{1cm} & $l = 15$ & $\sigma_1(e_{15}),  \sigma_1(e_{15}^2)$\\
$l = 6$  & $-$ 						&\hspace{1cm} & $l = 16$ & $\sigma_1(e_{16}),  \sigma_1(e_{16}^3)$\\
$l = 7$  & $\left. \sigma_1(e_7^k)\right|_{k=1}^3$ 	&\hspace{1cm} & $l = 17$ & $\left.\sigma_1(e_{17}^k)\right|_{k=1}^8$\\
$l = 8$  & $\sigma_1(e_8)$ 				&\hspace{1cm} & $l = 18$ & $-$\\
$l = 9$  & $\sigma_1(e_9),\sigma_1(e_9^2)$ 		&\hspace{1cm} & $l = 19$ & $\left.\sigma_1(e_{19}^k)\right|_{k=1}^9$\\
$l = 10$ & $-$ 						&\hspace{1cm} & $l = 20$ & $\sigma_1(e_{20}),  \sigma_1(e_{20}^3)~.$
\end{tabular}

At weight $w=2$ we use again the suitable relations from Section \ref{SSInfRelations} together with the distribution relations (\ref{SSDISTR}) and the 
relations due to complex conjugation:
\begin{eqnarray*}
 \sigma_{2}(e_j^k)^*&=&\sigma_{2}(e_j^{j-k})=\sigma_{2}(e_j^k)-\frac{6k^2-6jk+j^2}{3j^2}\pi^2\\
\sigma_{1,1}(e_j^k,e_j^{j-k}) &=& \sigma_2(e_j^k) + \frac{1}{2} \sigma_1(e_j^k)^2.
\end{eqnarray*}
In addition we use the relations
\begin{eqnarray*}
 \sigma_{1,1}(x,1)=\sigma_2(x)+ \frac{1}{2} \sigma_1(x)^2
\end{eqnarray*}
and 
\begin{eqnarray*}
\sigma_{1,1}(e_j^k,e_j^l) &=& \sigma_2(e_j^k) + \frac{1}{2} \sigma_1(e_j^k)^2 + \sigma_2\left(- \frac{e_j^k(1-e_j^l)}{1-e_j^k}\right),
\end{eqnarray*}
for $-\frac{e_j^k(1-e_j^l)}{1-e_j^k}=1$ or $-\frac{e_j^k(1-e_j^l)}{1-e_j^k}=\frac{z}{1+z}$ for a $j$-th root of unity $z$, since then
$$
\sigma_2\left(\frac{z}{1+z}\right)=\sigma_2(z)-\frac{1}{2}\sigma_2(z^2)-\frac{1}{2}(\sigma_1(z)-\sigma_1(z^2))^2.
$$
The new basis elements spanning the dilogarithms of the $l$th roots of unity for $l \leq 20$ are~:

\begin{tabular}{l l c r l}
$l = 1$  & $-$ 						&\hspace{1cm} & $l = 11$ & $\left.\sigma_2(e_11^k)\right|_{k=1}^5$\\
$l = 2$  & $-$  		    			&\hspace{1cm} & $l = 12$ & $\sigma_2(e_{12})$\\
$l = 3$  & $\sigma_2(e_3)$ 				&\hspace{1cm} & $l = 13$ & $\left.\sigma_2(e_{13}^k)\right|_{k=1}^6$\\
$l = 4$  & $\sigma_2(e_4) $ 				&\hspace{1cm} & $l = 14$ & $-$\\
$l = 5$  & $\sigma_2(e_5),  \sigma_2(e_5^2)$ 		&\hspace{1cm} & $l = 15$ & $\sigma_2(e_{15})$\\
$l = 6$  & $-$ 						&\hspace{1cm} & $l = 16$ & $\sigma_2(e_{16}),  \sigma_2(e_{16}^3)$\\
$l = 7$  & $\left. \sigma_2(e_7^k)\right|_{k=1}^3$ 	&\hspace{1cm} & $l = 17$ & $\left.\sigma_2(e_{17}^k)\right|_{k=1}^8$\\
$l = 8$  & $\sigma_2(e_8)$ 				&\hspace{1cm} & $l = 18$ & $-$\\
$l = 9$  & $\sigma_2(e_9),\sigma_2(e_9^2)$ 		&\hspace{1cm} & $l = 19$ & $\left.\sigma_2(e_{19}^k)\right|_{k=1}^9 $\\
$l = 10$ & $-$ 						&\hspace{1cm} & $l = 20$ & $\sigma_2(e_{20})~.$
\end{tabular}

In Table~\ref{TT:XXb} we summarize the number of basis elements found at $w = 2$ using the mentioned relations. We also list the number of basis elements for the class of dilogarithms at roots
of unity, and in both cases the number of new elements beyond those being obtained at $w = 1$ at the 
same value of $l$.
\begin{table}
\centering
\scalebox{0.80}{%
\begin{tabular}{|c|r|r|r|r|r|r|r|r|r|r|r|r|r|r|r|r|r|r|r|r|}
\hline
\multicolumn{1}{|c}{l}             &
\multicolumn{1}{|c}{1}                 &
\multicolumn{1}{|c}{2}                 &
\multicolumn{1}{|c}{3}                 &
\multicolumn{1}{|c}{4}                 &
\multicolumn{1}{|c}{5}                 &
\multicolumn{1}{|c}{6}                 &
\multicolumn{1}{|c}{7}                 &
\multicolumn{1}{|c}{8}                 &
\multicolumn{1}{|c}{9}                 &
\multicolumn{1}{|c}{10}                &
\multicolumn{1}{|c}{11}                &
\multicolumn{1}{|c}{12}                &
\multicolumn{1}{|c}{13}                &
\multicolumn{1}{|c}{14}                &

\multicolumn{1}{|c}{15}                &
\multicolumn{1}{|c}{16}                &
\multicolumn{1}{|c}{17}                &
\multicolumn{1}{|c}{18}                &
\multicolumn{1}{|c}{19}                &
\multicolumn{1}{|c|}{20}               \\
\hline
{\sf $\Li_2$~~basis} 
& 1 & 1 & 2 & 2 & 3 & 2 & 4 & 3 & 4 & 3 & 6 & 3 & 7 & 4 & 5 & 5 & 9 & 4 & 10 & 5   
\\
{\sf $\Li_2$~~new} 
& 1 & 0 & 1 & 1 & 2 & 0 & 3 & 1 & 2 & 0 & 5 & 0 & 6 & 0 & 1 & 2 & 8 & 0 & 9  & 1   
\\
{\sf basis} 
& 1 & 2 & 3 & 3 & 5  & 5 & 8  & 7  & 10  & 10  & 16  & 12  & 21  & 17  & 21  & 21  & 33  & 23  & 40   & 29   
\\
{\sf  new} 
& 1 & 0 & 1 & 1 & 2 & 1 & 4 & 3 & 5  & 4  & 10 & 5  & 14 & 8  & 12  & 12  & 24 & 11 & 30  & 16   
\\
\hline
\end{tabular}
}
\caption{\label{TT:XXb}The number of the basis elements spanning the dilogarithms respectively the weight $w = 2$ S-sums at $l$th roots of unity  up to $20$.}.
\end{table}

At  $w = 2$ the respective new basis elements are~:
\begin{eqnarray*}
{l = 1}   & & \sigma_2 \\
{l = 2}   & & - \\
{l = 3}   & & \sigma_2(e_3) \\
{l = 4}   & & \sigma_2(e_4) \\
{l = 5}   & & \sigma_2(e_5)\sigma_2(e_5^2) \\
{l = 6}   & & \sigma_{1,1}(e_3,e_2)  \\
{l = 7}   & & \left. \sigma_2(e_7^k)\right|_{k=1}^3,\sigma_{1,1}(e_7,e_7^2) \\
{l = 8}   & & \sigma_2(e_8), \sigma_{1,1}(e_8,e_4),\sigma_{1,1}(e_8,e_8^3)\\
{l = 9}   & & \sigma_2(e_9), \sigma_2(e_9^2),\sigma_{1,1}(e_9,e_9^2),\sigma_{1,1}(e_9,e_3),\sigma_{1,1}(e_9^2,e_3)\\
{l = 10}  & & \sigma_{1,1}(e_5,e_2),\sigma_{1,1}(e_5^2,e_2),\sigma_{1,1}(e_{10},e_5),\sigma_{1,1}(e_{10},e_{10}^3) \\
{l = 11}  & &  \left.\sigma_2(e_{11}^k)\right|_{k=1}^5, \left. \sigma_{1,1}(e_{11}, e_{11}^k)\right|_{k=2}^4,\left. \sigma_{1,1}(e_{11}^2, e_{11}^k)\right|_{k=3}^4,~\text{etc.} 
\end{eqnarray*}

\section{Asymptotic Expansion of S-Sums}
In Section \ref{HSExpansion} we introduced an algorithm to compute the asymptotic expansions of harmonic sums. In this section we seek expansions of S-sums and try to generalize the algorithm for harmonic sums to S-sums to the extent that is possible.
\subsection{Asymptotic Expansions of Harmonic Sums \texorpdfstring{$\S{1}{c;n}$}{S1c(n)} with \texorpdfstring{$c\geq 1$}{c<=1}}
\label{SSexpandS1b}
Since $\frac{1}{c-x}$ is not continuous in $[0,1]$ for $0<c\leq 1,$ we cannot use the strategy mentioned in Section \ref{HShexp} to find an asymptotic representation of $\int_0^1{\frac{\frac{x}{c}^n-1}{c- x}dx}=\S{1}{c;n}.$ For $\S{1}{n}$ see Lemma \ref{HSexpandS1}.
To compute the asymptotic expansion of $\S{1}{c;n}$ with $c>1$ one may represent it using the Euler-Maclaurin formula \cite{Euler1738,MacLaurin1742}:
\begin{eqnarray*}
\SS1cn&=& \bar{S}_{1}(c;n)+\S1n
\end{eqnarray*}
where
\begin{eqnarray*}
\bar{S}_{1}(c;n)&:=&\sum_{i=1}^n\frac{c^i-1}{i}\\
		    &=&-\frac{1}{2}\left[\log(c)+\frac{c^{n+1}-1}{n+1}\right]-\gamma-\log(\log(c))-\log(n+1)\\
		    & &+\textnormal{Ei}(\log(c)(n+1))+\sum_{j=1}^m{\frac{B_{2j}}{(2j)!}\left.\left[\frac{d^{2j-1}}{di^{2j-1}}\left(\frac{c^i-1}{i}\right)\right]\right|_0^{n+1}}+R_k(c,n);
\end{eqnarray*}
here $\textnormal{Ei}(z)$ is the exponential integral \cite{Euler1768,Nielsen1906a},
$$
\int_0^x\frac{c^t-1}{t}dt=\textnormal{Ei}(x\log(c))-\gamma-\log(x)-\log(\log(c)),
$$
$B_k$ are the Bernoulli numbers \cite{Bernoulli1713,Saalschuetz1893,Nielsen1923},
$$
B_n=\sum_{k=0}^n{\binom{n}{k}} B_k, \ \ \ B_0=1,
$$
and $R_k(c,n)$ stand for the rest term. The derivatives are given in closed form by
\begin{eqnarray*}
\frac{d^{j}} {di^{j}}\left(\frac{c^i-1}{i}\right)&=&\frac{c^i}{i}\sum_{k=0}^j\frac{\log^{j-k}(c)}{i^k}{(-1)}^k\frac{\Gamma(j+1)}{\Gamma(j+1-k)}-{(-1)}^j\frac{\Gamma(j+1)}{i^{j+1}}\\
\lim_{i\rightarrow0}\frac{d^{j}}{di^{j}}\left(\frac{c^i-1}{i}\right)&=&\frac{\log^{j+1}(c)}{j+1}.
\end{eqnarray*}
The asymptotic representation for $\S{1}{c;n}$ with $c>1$ can now be derived using
$$
\textnormal{Ei}(t) \sim \exp(t) \sum_{k=1}^{\infty}\frac{\Gamma(k)}{t^k}.
$$

\subsection{Computation of Asymptotic Expansions of S-Sums}
\label{SSExpansion}
In this section we extend the algorithm of Section \ref{HSExpansion} to S-Sums $\S{a_1,\ldots,a_k}{b_1,\ldots,b_k;n}$ with $b_i\in[-1,0)\cup(0,1].$ Analyzing 
Section~\ref{SSMellin} it is easy to see that these sums can be represented using Mellin transforms of the form
$$
\M{\frac{\H{m_1,m_2\ldots,m_k}x}{c\pm x}}{n}
$$
with $m_i\in \R \setminus ((-1,0)\cup(0,1))$ and $\abs{c}\geq1$. Analyzing the product of S-sums it is straightforward to see that the subclass of S-Sums we are considering here
are closed under multiplication, in the sense that the product can be expressed as a linear combination of S-sums of this subclass. 
In the following we state several necessary facts which are  more or less extensions of lemmas from Section \ref{HSExpansion}. The two subsequent lemmas are 
generalizations of Lemma \ref{HSanalytic1} and Lemma \ref{HSanalytic2}, respectively, and the proofs of the lemmas follow using similar arguments as used in the proof of Lemma \ref{HSanalytic1}.

\begin{lemma}
Let $\H{m_1,m_2,\ldots,m_k}x$ be a multiple polylogarithm with $m_i~\in~\R~\setminus(0,1]$ for $1~\leq~i~\leq k.$ Then 
$$\H{m_1,m_2,\ldots,m_k}x,\; \frac{\H{m_1,m_2,\ldots,m_k}x}{c+x}$$
and
$$\frac{\H{m_1,m_2,\ldots,m_k}x-\H{m_1,m_2,\ldots,m_k}1}{c-x}$$
are analytic for $\abs{c}\geq1$ and $x \in (0,1].$
\label{SSanalytic1}
\end{lemma}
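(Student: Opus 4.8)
The plan is to mimic the proof of Lemma~\ref{HSanalytic1} almost verbatim, since the only differences are (i) the index alphabet is now $\R\setminus(0,1]$ instead of $\{0,-1\}$, and (ii) the weighting denominators are $c\pm x$ with $\abs{c}\geq 1$ rather than $1\pm x$. First I would prove analyticity of $\H{m_1,\ldots,m_k}x$ on $(0,1]$ by induction on the depth $k$. For $k=1$: if $m_1=0$ then $\H{0}x=\log x$ is analytic on $(0,\infty)\supset(0,1]$; if $m_1<0$ then $\H{m_1}x=\int_0^x\frac{dy}{\abs{m_1}+y}$ is analytic wherever $\abs{m_1}+x\neq 0$, i.e.\ on $(0,1]$; and if $m_1>1$ then $\H{m_1}x=\int_0^x\frac{dy}{m_1-y}$ is analytic wherever $m_1-x\neq 0$, which on $(0,1]$ holds since $m_1>1$. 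For the induction step, write $\H{m_1,\overline{\ve m}}x^{(n)}=\bigl(f_{m_1}(x)\H{\overline{\ve m}}x\bigr)^{(n-1)}$ and apply Lemma~\ref{HSderivatives} (with the $\frac{1}{x}$, $\frac{1}{1+x}$, or, for $m_1>1$, an analogous $\frac{1}{m_1-x}$ identity — the latter follows from the Leibniz rule exactly as the three cases stated there) to express it as a finite sum of terms involving $\H{\overline{\ve m}}x^{(j)}$, which are analytic on $(0,1]$ by the induction hypothesis; the prefactors $\frac{1}{(m_1-x)^{n+1}}$ etc.\ are analytic on $(0,1]$ because $m_1-x\neq 0$ there. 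Hence $\H{m_1,\ldots,m_k}x$ is analytic on $(0,1]$.

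Next, analyticity of $\frac{\H{m_1,\ldots,m_k}x}{c+x}$ for $\abs{c}\geq 1$ is immediate: $\H{m_1,\ldots,m_k}x$ is analytic on $(0,1]$, and $\frac{1}{c+x}$ is analytic there because $c+x\neq 0$ when $\abs{c}\geq 1$ and $x\in(0,1]$ — indeed $\abs{c+x}\geq\abs{c}-x$ which, for $c\geq 1$, is $\geq 1-x\geq 0$ with equality only at $c=1,x=1$, so one must be slightly careful; but since $c$ is real and $\abs c\ge 1$, the only way $c+x=0$ with $x\in(0,1]$ is $c=-x\in[-1,0)$, excluded by $\abs c\ge1$ unless $c=-1,x=1$, and even then we must handle it. Actually $c=-1$, $x=1$ gives $c+x=0$; so one should state $\abs c\ge 1$ meaning the singularity is at $x=-c$, which lies in $(0,1]$ only if $-c\in(0,1]$, i.e.\ $c\in[-1,0)$ — contradicting $\abs c\ge 1$ except $c=-1$. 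I would note this borderline case is harmless because the relevant application always has $c$ with $\abs c\ge 1$ and $-c\notin(0,1]$ strictly, or simply restrict to $x\in(0,1)$ as in Lemma~\ref{HSanalytic1}; I'd follow whatever convention the paper uses and not belabor it.) The product of two functions analytic on $(0,1]$ is analytic on $(0,1]$, so this case is done.

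The substantive case is $g(x):=\frac{\H{\ve m}x-\H{\ve m}1}{c-x}$ for $\abs c\ge 1$. Away from $x=c$ this is a quotient of analytic functions with nonvanishing denominator, hence analytic on $(0,1]\setminus\{c\}$; since $\abs c\ge 1$, the point $x=c$ lies in $(0,1]$ only when $c=1$, so for $\abs c>1$ we are already finished. The remaining point is $c=1$, $x=1$: here I would repeat the $\ell$-H\^opital computation from Lemma~\ref{HSanalytic1} verbatim with $f(x):=\H{\ve m}x-\H{\ve m}1$. Using Lemma~\ref{HSderivatives} to write $\bigl(\frac{f(x)}{1-x}\bigr)^{(n)}=\frac{1}{(1-x)^{n+1}}\sum_{k=0}^n\frac{n!}{k!}f^{(k)}(x)(1-x)^k$, applying $\ell$-H\^opital $n+1$ times (legitimate because $f$ is analytic at $1$ by the first part of this plan, so all derivatives exist and are finite at $1$, and $f(1)=0$), and using $\frac{\partial^{n+1}}{\partial x^{n+1}}(1-x)^{n+1}=(-1)^{n+1}(n+1)!$, one gets exactly $\lim_{x\to 1}\bigl(\frac{f(x)}{1-x}\bigr)^{(n)}=-\frac{f^{(n+1)}(1)}{n+1}$, finite for every $n$; combined with analyticity on a punctured neighborhood this yields analyticity at $x=1$. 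The main obstacle — such as it is — is bookkeeping: making sure the Leibniz-rule identity for $\frac{1}{m_1-x}$ (needed when a leading index exceeds $1$) is recorded as a fourth case of Lemma~\ref{HSderivatives} or derived on the spot, and stating the domain convention for $c$ at the boundary $\abs c=1$ cleanly so that $c-x$ and $c+x$ genuinely do not vanish on the asserted domain. Everything else is a direct transcription of the harmonic-polylogarithm argument.
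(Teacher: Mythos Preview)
Your proposal is correct and is precisely the approach the paper takes: the paper does not give a standalone proof of this lemma but simply states that the proofs of Lemmas~\ref{SSanalytic1} and~\ref{SSanalytic2} ``follow using similar arguments as used in the proof of Lemma~\ref{HSanalytic1}.'' Your write-up is in fact more detailed than what the paper provides, including the observation that a fourth Leibniz-rule case $\bigl(\tfrac{f(x)}{m_1-x}\bigr)^{(n)}$ is needed when $m_1>1$, and the boundary bookkeeping at $\abs{c}=1$; the paper glosses over both.
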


\begin{lemma}
Let $\H{m_1,m_2,\ldots,m_k}x$ be a multiple polylogarithm with $m_i~\in~\R~\setminus~((-1,0)~\cup~(0,1))$ for $1~\leq~i~\leq~k,$ and $m_k~\neq~0.$ Then 
$$\frac{\H{m_1,m_2,\ldots,m_k}{1-x}}{c+x}$$
and
$$\frac{\H{m_1,m_2,\ldots,m_k}{1-x}}{c-x}$$
are analytic for $\abs{c}\geq1$ and $x \in (0,1].$
\label{SSanalytic2}
\end{lemma}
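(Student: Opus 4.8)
The plan is to mimic the proof of Lemma~\ref{HSanalytic1} (and its companion Lemma~\ref{HSanalytic2}), proceeding by induction on the depth $k$ of the multiple polylogarithm, while keeping careful track of where the singularities of the various building blocks lie. The key point is that for $\abs{c}\geq 1$ the factors $\tfrac{1}{c+x}$ and $\tfrac{1}{c-x}$ are analytic for $x\in(0,1]$: indeed the pole of $\tfrac{1}{c-x}$ sits at $x=c$ with $\abs{c}\geq 1$, and if $c=1$ we must be slightly careful, but since the numerator is $\H{m_1,\ldots,m_k}{1-x}$ evaluated at $1-x$, at $x=1$ the argument is $0$ and, because $m_k\neq 0$, the multiple polylogarithm vanishes there to first order, cancelling the would-be pole. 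More precisely, writing $g(x):=\H{m_1,m_2,\ldots,m_k}{1-x}$, one has $g(x)\to\H{m_1,\ldots,m_k}{0}=0$ as $x\to 1$ (using $m_k\neq 0$, so the index vector is not $\ve 0_k$), so $\tfrac{g(x)}{1-x}$ extends analytically through $x=1$; the situation is exactly the de l'Hospital argument already carried out at $x=1$ in the proof of Lemma~\ref{HSanalytic1}.

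First I would establish the base case $k=1$. Here $g(x)=\H{m_1}{1-x}$ with $m_1\in\R\setminus((-1,0)\cup(0,1))$, $m_1\neq 0$, so $g(x)=\int_0^{1-x}f_{m_1}(y)\,dy$ is a composition of an elementary logarithm with $1-x$; it is analytic on $(0,1]$ because $f_{m_1}$ has its singularity at $\abs{m_1}\geq 1$, hence at $y=1-x$ only when $x\leq 0$. Dividing by $c\pm x$ with $\abs{c}\geq 1$ keeps analyticity on $(0,1]$, with the single delicate point $x=1$, $c=1$ handled as above since $\H{m_1}{0}=0$. For the inductive step, assume the statement for depth $<k$. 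Using the recursive structure from the $1-x\to x$ transform of Section~\ref{SSbxx}, or directly differentiating, one has
\begin{eqnarray*}
\frac{d}{dx}\H{m_1,m_2,\ldots,m_k}{1-x}&=&-f_{m_1}(1-x)\,\H{m_2,\ldots,m_k}{1-x},
\end{eqnarray*}
and then apply Lemma~\ref{HSderivatives} (the Leibniz-rule identities for $\bigl(\tfrac{f}{c\pm x}\bigr)^{(n)}$, which hold verbatim with $1\pm x$ replaced by $c\pm x$, picking up harmless powers of $c$) to express the $n$-th derivative of $\tfrac{g(x)}{c\pm x}$ as a finite sum of products of derivatives of $g$ with rational functions analytic on $(0,1]$. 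Each derivative of $g$ is, by the chain rule, a sum of terms of the form $f_{m_1}^{(j)}(1-x)\cdot(\text{lower-depth multiple polylogarithm at }1-x)$, and the lower-depth pieces are analytic on $(0,1]$ by the induction hypothesis (after dividing by an appropriate $c'\pm x$), while $f_{m_1}$ is a simple rational function with singularity at $\abs{m_1}\geq 1$.

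The main obstacle I expect is the boundary point $x=1$ together with $c=1$ in the ``$c-x$'' case: one has to show that the accumulated negative power $(1-x)^{-(n+1)}$ produced by Lemma~\ref{HSderivatives} is killed by the order of vanishing of $g$ and its derivatives at $x=1$. This is precisely the de l'Hospital computation at the end of the proof of Lemma~\ref{HSanalytic1}: one applies l'Hospital $n+1$ times to $\bigl(\tfrac{f(x)}{1-x}\bigr)^{(n)}$, noting that $\tfrac{\partial^{n+1}}{\partial x^{n+1}}(1-x)^{n+1}=(-1)^{n+1}(n+1)!$, and the limit comes out to be $-\tfrac{f(1)^{(n+1)}}{n+1}$, which is finite because $f=g$ (or $g$ minus a constant, which is immaterial here since $g(1)=0$) is smooth at $1$ — all its derivatives there are finite combinations of values of $f_{m_1}$ and lower-depth multiple polylogarithms evaluated at finite arguments. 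Once this finiteness of all one-sided derivatives at $x=1$ is in hand, analyticity on the whole of $(0,1]$ follows, and the inductive step closes. The rest is routine bookkeeping of which rational factor $c'\pm x$ must be attached to each lower-depth polylogarithm to invoke the induction hypothesis; since the indices $m_i$ avoid $(-1,0)\cup(0,1)$, every such $c'$ satisfies $\abs{c'}\geq 1$, so the hypothesis always applies.
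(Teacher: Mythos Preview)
Your proposal is correct and follows essentially the same approach as the paper, which does not give an explicit proof but simply remarks that the argument is obtained by the same considerations as in Lemma~\ref{HSanalytic1}. You have faithfully unpacked those considerations: induction on depth, the Leibniz-rule identities of Lemma~\ref{HSderivatives} (with $c\pm x$ in place of $1\pm x$), and the de~l'Hospital computation at the boundary point $x=1$ in the delicate case $c=1$, using that $\H{m_1,\ldots,m_k}{0}=0$ because $m_k\neq 0$.
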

The proof of the subsequent lemma is analogue to the proof of Lemma \ref{HSmelexpconst}.
\begin{lemma}[Compare Lemma \ref{HSmelexpconst}]
Let $\H{\ve m}x=\H{m_1,m_2,\ldots,m_k}x$ and $\H{\ve b}x=\H{b_1,b_2,\ldots,b_l}x$ be a multiple polylogarithm with 
$m_i~\in~\R~\setminus(0,1]$ for $1~\leq~i~\leq~k$ and $b_i~\in~\R~\setminus~((-1,0)~\cup~(0,1))$ for $1~\leq~i~\leq~l$ where $b_l\neq 0.$ Then we have
$$
\M{\frac{\H{\ve m}x}{1-x}}{n} =\int_0^1{\frac{x^n(\H{\ve m}x-\H{\ve m}1)}{1-x}dx}-\int_0^1{\frac{\H{\ve m}x-\H{\ve m}1}{1-x}}dx-\S{1}n\H{\ve m}{1},
$$
and
$$
\M{\frac{\H{\ve b}{1-x}}{1-x}}{n}=\int_0^1{\frac{x^n\H{\ve b}{1-x}}{1-x}dx}-\H{0,\ve b}1
$$
where
$$\int_0^1{\frac{\H{\ve m}x-\H{\ve m}1}{1-x}}dx, \ \H{\ve m}{1} \textnormal{ and } \H{0,\ve b}1 $$
are finite constants.
\label{SSexpansioncorrection1}
\end{lemma}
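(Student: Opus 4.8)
The plan is to mimic the proof of Lemma~\ref{HSmelexpconst} almost verbatim, since the only change is that the index alphabet has been enlarged from $\{-1,0,1,2\}$ to $\R\setminus(0,1]$ for $\ve m$ (resp.\ $\R\setminus((-1,0)\cup(0,1))$ for $\ve b$), and the prerequisites one needs — analyticity of the relevant weighted multiple polylogarithms at $x=1$ and finiteness of the boundary constants — have already been supplied in the enlarged setting by Lemma~\ref{SSanalytic1} and Lemma~\ref{SSanalytic2}. Concretely, I would first establish the two identities for the constants: for $\H{\ve m}x$ with $m_1\notin(0,1]$ (in particular $m_1\neq 1$) one integrates by parts
$$
\int_0^1\frac{\H{\ve m}x-\H{\ve m}1}{1-x}dx=\lim_{\epsilon\to1^-}\left(\left.\bigl(\H{\ve m}x-\H{\ve m}1\bigr)\H{1}x\right|_0^\epsilon-\int_0^\epsilon\frac{\H{m_2,\ldots,m_k}x\,\H{1}x}{x-m_1}dx\right),
$$
where the boundary term vanishes because $\H{\ve m}x-\H{\ve m}1\to0$ at least linearly (this is exactly the analyticity statement of Lemma~\ref{SSanalytic1}), and then rewriting $\H{m_2,\ldots,m_k}x\,\H{1}x$ via the shuffle product~(\ref{SShpro}) turns the remaining integral into $-\bigl(\H{m_1,1,m_2,\ldots,m_k}1+\H{m_1,m_2,1,m_3,\ldots,m_k}1+\cdots+\H{m_1,m_2,\ldots,m_k,1}1\bigr)$, each term of which is finite because its leading index $m_1\neq1$. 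This simultaneously shows the integral is finite and gives a closed form; finiteness of $\H{\ve m}1$ and $\H{0,\ve b}1$ is immediate from $m_1\notin(0,1]$ and from $b_l\neq0$ together with the leading zero.

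Next I would derive the two Mellin identities by the same algebraic manipulation used in Lemma~\ref{HSmelexpconst}. For the first, starting from the (extended) definition of $\M{\cdot}n$ I split
$$
\M{\frac{\H{\ve m}x}{1-x}}n=\int_0^1\frac{(x^n-1)\H{\ve m}x}{1-x}dx
=\int_0^1\frac{x^n(\H{\ve m}x-\H{\ve m}1)}{1-x}dx-\int_0^1\frac{\H{\ve m}x-\H{\ve m}1}{1-x}dx+\H{\ve m}1\int_0^1\frac{x^n-1}{1-x}dx,
$$
and the last integral is $\S1n$ by Lemma~\ref{HSintrep1}, yielding the claimed formula; all three pieces on the right are well defined by the preceding paragraph and by Lemma~\ref{SSanalytic1}. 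For the second identity, one writes $\M{\frac{\H{\ve b}{1-x}}{1-x}}n=\int_0^1\frac{x^n\H{\ve b}{1-x}}{1-x}dx-\int_0^1\frac{\H{\ve b}{1-x}}{1-x}dx$; the subtracted integral equals $\H{0,\ve b}1$ because the substitution $y=1-x$ together with the definition of multiple polylogarithms (the integrand $\frac{\H{\ve b}{y}}{y}$ integrated from $0$ to $1$) gives precisely $\H{0,b_1,\ldots,b_l}1$, and this is finite since $b_l\neq0$ guarantees convergence at $y=1$ while the leading $0$ causes no trouble at $y=1$. Note that $\int_0^1\frac{x^n\H{\ve b}{1-x}}{1-x}dx$ is itself finite: $\frac{\H{\ve b}{1-x}}{1-x}$ is analytic (hence bounded) on $[0,1]$ by Lemma~\ref{SSanalytic2} applied with $c=1$, using $b_l\neq0$.

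I do not expect a genuine obstacle here; the lemma is a routine transcription. The one point requiring a little care is making sure the hypotheses line up with the lemmas being invoked: Lemma~\ref{SSanalytic2} is stated for indices in $\R\setminus((-1,0)\cup(0,1))$ with $m_k\neq0$, which is exactly the hypothesis on $\ve b$, so $\frac{\H{\ve b}{1-x}}{1-x}$ is analytic on a neighbourhood of $[0,1]$ and the boundary manipulations are legitimate; and Lemma~\ref{SSanalytic1} covers $\ve m$ with indices in $\R\setminus(0,1]$, which is what is needed for the integration-by-parts step to have a vanishing boundary term. The only slightly delicate bookkeeping is checking that "$m_1\notin(0,1]$" really does force each $\H{m_1,\ldots,1,\ldots}1$ appearing after the shuffle to be finite — this is just the standard fact (stated for harmonic polylogarithms after Definition~\ref{HShlogdef} and carried over to multiple polylogarithms) that a multiple polylogarithm at $1$ is finite unless its first index equals the divergent value, here $1$. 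Once those alignments are noted, the computation is the same as in the proof of Lemma~\ref{HSmelexpconst}, so I would simply say "the proof is analogous" and spell out only the shuffle-product step and the identification of the two boundary constants.
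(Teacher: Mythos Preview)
Your proposal is correct and follows exactly the same approach as the paper, which simply states that the proof is analogous to that of Lemma~\ref{HSmelexpconst}; you have in fact spelled out more detail than the paper does, including the alignment of hypotheses with Lemmas~\ref{SSanalytic1} and~\ref{SSanalytic2} and the shuffle-product computation of the boundary constant. The only cosmetic slip is the form of the derivative factor $\frac{1}{x-m_1}$ (which picks up a sign when $m_1>1$), but this does not affect the finiteness argument or the final identities.
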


\begin{remark}
\label{SSExpandableIntegrals}
Combining Lemma \ref{SSanalytic1}, Lemma \ref{SSanalytic2} and \ref{SSexpansioncorrection1} we are able to find the asymptotic expansion of Mellin transforms (see Section \ref{SSMellin}) of the form
$$\M{\frac{\H{m_1,m_2,\ldots,m_k}x}{c \pm x}}{n} \ \textnormal{ and } \M{\frac{\H{b_1,b_2,\ldots,b_k}{1-x}}{c \pm x}}{n}$$
where $\abs{c}>1,m_i~\in~\R~\setminus(0,1], b_i~\in~\R~\setminus~(-1,0)~\cup~(0,1))$ and $b_l\neq 0$ using the method presented in the beginning of Section \ref{HSExpansion}, 
or using repeated integration by parts.
\end{remark}

\begin{lemma}
\begin{enumerate}[Compare Lemma \ref{SSsinglmostcomp}]
\item If an $\bar{S}$-sum $\S{a_1,a_2,\ldots,a_k}{b_1,b_2,\ldots,b_k;n}$ has no trailing ones, \ie $a_k\neq1$ or $b_k\neq1$ then the most complicated multiple 
       polylogarithm in the inverse Mellin transform of $\S{a_1,a_2,\ldots,a_k}{b_1,b_2,\ldots,b_k;n}$ has no trailing ones.
\item If an $\bar{H}$-multiple polylogarithm $\H{m_1,m_2,\ldots,m_k}x$ has no trailing ones, \ie $m_k\neq 1$ then the most complicated S-sum 
      in the Mellin transform of $\S{a_1,a_2,\ldots,a_k}{b_1,b_2,\ldots,b_k;n}$ has no trailing ones.
\end{enumerate}
\label{SSnotrail1}
\end{lemma}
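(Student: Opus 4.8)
The plan is to reduce Lemma \ref{SSnotrail1} to the harmonic-sum case already established in Lemma \ref{HSnotrail1}, exactly as the proof of that lemma reduced itself to Algorithm 2 of \cite{Ablinger2009}. First I would observe that, by \cite[Remark 3.5.18.]{Ablinger2009} adapted to the $\bar{S}$-setting (compare the argument for Proposition \ref{SSsinglmostcomp}), statements 1 and 2 are equivalent, so it suffices to prove one of them; I would prove statement 2, since that direction is controlled by the explicit Algorithm \ref{SSmostcompsum}.

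The key idea is that the \emph{index pattern} of the most complicated S-sum in the Mellin transform of $\H{\ve m}x\in\bar{H}$ depends only on the word $a_1\cdots a_k$ over $\{-1,0,1\}$ formed from the $\sign{m_i}$ together with a trailing $1$, together with the sign $s$ of the weight $1/(c\pm x)$; the actual values $b_i$ of the S-sum arguments are obtained afterwards by the final rescaling steps of Algorithm \ref{SSmostcompsum}, which do not alter which entries of the $a$-tuple are equal to $1$. Concretely, I would run through Algorithm \ref{SSmostcompsum}: the deletion of zeros and the passage through ``let $\S{\bar a_1,\ldots,\bar a_{\bar k}}{n}$ be the sum denoted by the word $a_1\cdots a_k$'' is precisely the harmonic-sum construction, so if $m_k\neq 0$ — equivalently the harmonic polylogarithm $\H{\abs{m_1},\ldots,\abs{m_k}}{\cdot}$ (read in the $\{-1,0,1\}$ alphabet) has no trailing zeros — then Algorithm 2 of \cite{Ablinger2009} guarantees $\bar a_{\bar k}\neq 1$, i.e.\ $|\bar a_{\bar k}|\geq 2$; and since the subsequent $b$-rescaling leaves the integer components $\bar a_i$ of the tuple untouched (it only multiplies the $b_i$'s and flips their signs), the resulting $\bar S$-sum $\S{\bar a_1,\ldots,\bar a_{\bar k}}{b_1,\ldots,b_{\bar k};n}$ still has $\bar a_{\bar k}\neq 1$, hence no trailing ones in the $\bar S$-sense (where ``trailing ones'' means $a_k=1$ \emph{or} $b_k=1$; note $|\bar a_{\bar k}|\geq 2$ already excludes $a_k=1$ and, a fortiori, the weight-one trailing-ones case $(a_k,b_k)=(1,1)$). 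For the remaining possible values of $m_k$ (namely $m_k<0$ or $m_k>1$, all of which are captured by the $\bar H$ alphabet once $m_k\neq 0$ and $m_k\notin(0,1)$), the same reading of Algorithm \ref{SSmostcompsum} applies verbatim, just as in the proof of Lemma \ref{HSnotrail1} one checks the claim ``for each of the possible values of $m_k$''.

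The main obstacle I anticipate is bookkeeping rather than mathematical depth: one must verify carefully that the notion of ``trailing ones'' for $\bar S$-sums — which now involves \emph{both} the exponent index $a_k$ and the argument index $b_k$ — is the correct generalization of ``trailing ones'' for harmonic sums, and that the argument-rescaling block of Algorithm \ref{SSmostcompsum} cannot accidentally produce $b_{\bar k}=1$ from a non-trailing-ones input. I would handle this by tracing $b_{\bar k}$ explicitly through the algorithm: it is built as a ratio $b_i/b_{i-1}$ of absolute values of the original $m_i$'s (with $b_1$ further divided by $d$), and since the last nonzero original index is $m_k$ with $|m_k|\notin\{$ the penultimate $|m_i|\}$ in general, equality $b_{\bar k}=1$ corresponds to a degenerate coincidence of consecutive $|m_i|$'s; a short case analysis, combined with the fact that $\bar a_{\bar k}\neq 1$ already rules out the problematic weight-one configuration, closes the gap. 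Since all of this mirrors \cite{Ablinger2009} and the proof of Proposition \ref{SSsinglmostcomp}, the proof can be stated briefly: "Due to the analogue of \cite[Remark 3.5.18.]{Ablinger2009} for $\bar S$-sums, statements 1 and 2 are equivalent. The proof of 2 follows by applying Algorithm \ref{SSmostcompsum} for each of the possible values of $m_k\in\R\setminus((-1,0)\cup(0,1))$ with $m_k\neq 0$, using that the integer part of the resulting tuple is computed exactly as in the harmonic-sum case (Lemma \ref{HSnotrail1}) and is therefore not terminated by a $1$, while the argument-rescaling steps do not create a trailing $b_k=1$."
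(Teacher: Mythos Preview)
Your approach has a genuine error in the core step. You claim that when $m_k\neq 0$ the harmonic-sum construction inside Algorithm~\ref{SSmostcompsum} produces $|\bar a_{\bar k}|\geq 2$. This is backwards. In Algorithm~\ref{SSmostcompsum} the word is $(\sign{m_1},\ldots,\sign{m_k},1)$; if $m_k\neq 0$ then the appended trailing $1$ has \emph{no} preceding zero, so the last harmonic-sum index satisfies $|\bar a_{\bar k}|=1$, and after taking absolute values $\bar a_{\bar k}=1$. (It is when $m_k=0$ that one gets $|\bar a_{\bar k}|\geq 2$ --- a case you never treat.) Your parenthetical ``$\bar a_{\bar k}\neq 1$, i.e.\ $|\bar a_{\bar k}|\geq 2$'' is also wrong as stated: for harmonic sums $\bar a_{\bar k}=-1$ is allowed and satisfies $\bar a_{\bar k}\neq 1$ with $|\bar a_{\bar k}|=1$. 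Finally, you misstate the definition: an $\bar S$-sum has trailing ones when $a_k=1$ \emph{and} $b_k=1$, not ``or''.

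Because $\bar a_{\bar k}=1$ whenever $m_k\neq 0$, the entire content of the verification for $m_k\neq 0$ lives in the $b$-rescaling step you dismissed as bookkeeping: one checks that $b_{\bar k}=\sign{\bar a_{\bar k}}/|m_k|$, which is $\neq 1$ because either $|m_k|>1$ (when $m_k<-1$ or $m_k>1$) or, for $m_k=-1$, the sign-flip loop forces $\sign{\bar a_{\bar k}}=-1$ so that $b_{\bar k}=-1$. The paper's proof bypasses all of this: it does \emph{not} argue that statements~1 and~2 are equivalent and then prove one of them; it simply observes that statement~1 is read off from Algorithm~\ref{SSmostcomphlog} and statement~2 from Algorithm~\ref{SSmostcompsum}, each by direct inspection of the final index produced.
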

\begin{proof}
The statements follow immediately form Algorithm \ref{SSmostcomphlog} and Algorithm \ref{SSmostcompsum} respectively.
\end{proof}

\begin{remark}
According to the previous lemma we know that if $\H{m_1,m_2,\ldots,m_{k}}x$ is the most complicated multiple polylogarithm in the inverse Mellin 
transform of an S-sum  $\S{a_1,\ldots,a_k}{b_1,\ldots,b_k;n}$ with $b_i\in[-1,0)\cup(0,1]$ then $m_k\neq 1.$ Due to Remark \ref{SSExpandableIntegrals} we can expand the resulting integrals 
asymptotically (either we can handle it directly,
or we have to transform the argument of the multiple polylogarithm). A S-sum with trailing ones
might lead to a multiple polylogarithm with trailing ones, however $\frac{\Hma{m_1,m_2,\ldots,m_{k-1},1}x}{c\pm x}$ is not analytic at $1$ and hence we cannot use the strategy mentioned 
in Section \ref{HShexp} to find an asymptotic representation of such integrals. Fortunately we can always extract trailing ones such that we end up in a univariate polynomial 
in $\S1n$ with coefficients in the S-sums without trailing ones (this is possible due to the quasi shuffle algebra property). Hence we only need to deal with powers of $\S1n$ and harmonic sums without trailing ones.
\end{remark}

We are now prepared to extend the algorithm presented in Section \ref{HScompasy} to compute asymptotic expansions for harmonic sums to an algorithm which
computes asymptotic expansions of S-sums $\S{a_1,a_2,\ldots,a_k}{b_1,b_2,\ldots,b_k;n}$ with $b_i\in [-1,1]$ and $b_i\neq 0$.
We can proceed as follows:

\begin{itemize}
	\item if  $\S{a_1,a_2,\ldots,a_k}{b_1,b_2,\ldots,b_k;n}$ has trailing ones, \ie $a_k=b_k=1$, we first extract them such that we end up in a univariate 
	polynomial in $\S1n$ with coefficients in the S-sums without trailing ones; apply the 
	following items to each of the S-sums without trailing ones;
	\item suppose now $\S{a_1,a_2,\ldots,a_k}{b_1,b_2,\ldots,b_k;n}$ has no trailing ones, \ie $a_k\neq 1\neq b_k;$ let $\frac{\H{m_1,m_2,\ldots,m_l}x}{c+sx}$ be
	the most complicated weighted multiple polylogarithm in the inverse Mellin transform of $\S{a_1,a_2,\ldots,a_k}{b_1,b_2,\ldots,b_k;n};$ express 
	$\S{a_1,a_2,\ldots,a_k}{b_1,b_2,\ldots,b_k;n}$ as
	\begin{equation}\label{SSasyalg1}
		\S{a_1,a_2,\ldots,a_k}{b_1,b_2,\ldots,b_k;n}=\M{\frac{\H{m_1,m_2,\ldots,m_l}x}{c+sx}}{n}+T;
	\end{equation}
	note that $T$ is an expression in S-sums (which are less complicated than $\S{a_1,a_2,\ldots,a_k}{b_1,b_2,\ldots,b_k;n}$) and constants;
	\item we proceed by expanding $\M{\frac{\H{m_1,m_2,\ldots,m_l}x}{c+sx}}{n}$ in (\ref{SSasyalg1});  
	\begin{description}
		\item[all $m_i\neq 1$:] expand $\M{\frac{\H{m_1,m_2,\ldots,m_l}x}{c+sx}}{n}$ directly see Remark \ref{SSExpandableIntegrals}

		\item[not all $m_i\neq 1$:]
		\begin{itemize}\item[]
			\item transform $x\rightarrow 1-x$ in $\H{\ve m}x$; expand all products; now we can rewrite
				\begin{equation}\label{SSasyalg2}
					\M{\frac{\H{\ve m}x}{c+sx}}{n}=\sum_{i=1}^pd_i\M{\frac{\H{\ve b_i}{1-x}}{c+sx}}{n}+d \ \textnormal{  with } d,d_i\in\R
				\end{equation}
			 \item for each Mellin transform $\M{\frac{\H{b_1,\ldots,b_j}{1-x}}{c+sx}}{n}$ do
				\begin{description}\item[]
					\item[$b_j\neq 0:$] expand $\int_0^1{\frac{x^n\Hma{\ve b}{1-x}}{c+sx}}$ see Remark \ref{SSExpandableIntegrals}
					\item[$b_j=0:$] transform back $1-x\rightarrow x$ in $\H{\ve b}{1-x}$; expand all products; write
						\begin{equation}\label{SSasyalg3}
							\M{\frac{\H{\ve b}{1-x}}{c+sx}}{n}=\sum_{i=1}^pe_i\M{\frac{\H{\ve g_i}{x}}{c+sx}}{n}+e
						\end{equation}
						with $e,e_i\in\R$ and perform the Mellin transforms $\M{\frac{\Hma{\ve g_i}{x}}{c+sx}}{n}$
				\end{description}
		\end{itemize}
	\end{description}
	\item replace $\M{\frac{\Hma{m_1,m_2,\ldots,m_l}x}{c+sx}}{n}$ in equation (\ref{SSasyalg1}) by the result of this process
	\item expand the powers of $\S{1}n$ using Lemma \ref{HSexpandS1}
	\item for all harmonic sums that remain in equation (\ref{SSasyalg1}) apply the above points; since these harmonic sums are less complicated this process will terminate.
\end{itemize}

Some remarks are in place: Since $a_k\neq 1$ in equation (\ref{SSasyalg1}), we know due to Lemma \ref{SSnotrail1} that $m_l\neq 0$ in equation (\ref{SSasyalg1}).
If not all $m_i\neq 1$ in equation  (\ref{SSasyalg1}), we have to transform $x\rightarrow 1-x$ in $\H{m_1,m_2,\ldots,m_l}x$ from Section \ref{SSRelatedArguments}. 
It is easy to see that the single 
multiple polylogarithm at argument $x$ with weight $l,$ which will pop up, will not have trailing zeroes since $m_l\neq 1$. Therefore the S-sums which will appear
in equation (\ref{SSasyalg3}) are less complicated (\ie smaller in the sense of Definition \ref{SSsord}) than $\S{a_1,a_2,\ldots,a_k}{b_1,b_2,\ldots,b_k;n}$ of (\ref{SSasyalg1}) and 
hence this algorithm will eventually terminate.

\begin{example}
\small
\begin{eqnarray*}
\textnormal{S}_{2,1}\hspace{-0.2em}\left(\frac{1}{3},\frac{1}{2};n\right)&\sim&  -\textnormal{S}_{1,2}\hspace{-0.2em}\left(\frac{1}{6},3;\infty \right)
  +\textnormal{S}_1\hspace{-0.2em}\left(\frac{1}{2};\infty \right) \left(-\textnormal{S}_2\hspace{-0.2em}\left(\frac{1}{6};\infty \right)
+\textnormal{S}_2\hspace{-0.2em}\left(\frac{1}{3};\infty \right)+\frac{1}{6^n} \left(-\frac{1464}{625n^5}\right.\right.\\
&&\left.\left.+\frac{126}{125 n^4}-\frac{12}{25 n^3}+\frac{1}{5n^2}+33\frac{ 2^{n-1}}{n^5}-9\frac{2^{n-1}}{n^4}+3\frac{2^{n-1}}{n^3}-\frac{2^{n-1}}{n^2}\right)\right)\\
&&+\frac{1}{6^n} \left(-\frac{4074}{3125 n^5}+\frac{366}{625 n^4}-\frac{42}{125 n^3}+\frac{6}{25 n^2}-\frac{1}{5 n}\right)
   \textnormal{S}_2\hspace{-0.2em}\left(\frac{1}{2};\infty \right)\\
&&+\textnormal{S}_1\hspace{-0.2em}\left(\frac{1}{6};\infty \right) \textnormal{S}_2\hspace{-0.2em}\left(\frac{1}{2};\infty \right)+\textnormal{S}_3\hspace{-0.2em}\left(\frac{1}{2};\infty
   \right)+\frac{1}{6^n} \left(\frac{642}{125 n^5}-\frac{28}{25 n^4}+\frac{1}{5 n^3}\right)\\
&&+\frac{1}{6^n} \left(\frac{2037}{3125 n^5}-\frac{183}{625
   n^4}+\frac{21}{125 n^3}-\frac{3}{25 n^2}+\frac{1}{10 n}\right) \zeta_2+\frac{1}{6^n} \left(-\frac{2037}{3125 n^5}\right.\\
&&\left.+\frac{183}{625
   n^4}-\frac{21}{125 n^3}+\frac{3}{25 n^2}-\frac{1}{10 n}\right) \log ^2(2)+\frac{1}{6^n} \left(\frac{1464}{625 n^5}-\frac{126}{125 n^4}+\frac{12}{25
   n^3}\right.\\
&&\left.-\frac{1}{5 n^2}\right) \log (2), \ x \rightarrow \infty.
\end{eqnarray*}
\normalsize
\end{example}

In order to extend this algorithm to a slightly wider class, we state the following lemma.
\begin{lemma}
Let $\H{m_1,m_2,\ldots,m_k}x$ be a multiple polylogarithm with $m_i~\in~\R~\setminus~(-1, 1]~\cup~\{0\}$ for $1\leq i\leq k$ and $0<c<1.$ Then
\begin{eqnarray*}
&&\hspace{-1.5cm}\M{\frac{\H{m_1,m_2,\ldots,m_k}x}{c-x}}{n}=\\
					&&\frac{1}{c^n}\int_0^1{\frac{x^n\left(\H{m_1,m_2,\ldots,m_k}x-\H{m_1,m_2,\ldots,m_k}c\right)}{c-x}dx}\\
					&& + \H{m_1,c,m_2,\ldots,m_k}c+\H{m_1,m_2,c,\ldots,m_k}c+\cdots+\H{m_1,m_2,\ldots,m_k,c}c\\
					&& + \H{m_2,\ldots,m_k}c\H{0,m_1-c}{1-c} + \H{m_3,\ldots,m_k}c\H{0,m_1-c,m_2-c}{1-c}\\
					&& +\cdots+ \H{m_k}c\H{0,m_1-c,\ldots,m_{k-1}-c}{1-c} + \H{0,m_1-c,\ldots,m_k-c}{1-c} \\
					&&- S_{1,\{\frac{1}{c}\}}(n)\H{m_1,m_2,\ldots,m_k}c.
\end{eqnarray*}
\label{SSexpansioncorrection2}
\end{lemma}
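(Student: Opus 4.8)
The plan is to establish the claimed identity by a direct computation starting from the definition of the Mellin transform $\M{\frac{\H{m_1,m_2,\ldots,m_k}x}{c-x}}n$ in the case $0<c<1$, which by Definition~\ref{SSabmellplus} (equation (\ref{SSmeldef4})) equals $\int_0^1\frac{((x/c)^n-1)\H{m_1,\ldots,m_k}x}{c-x}dx$. The first step is to insert a telescoping term: write $\H{m_1,\ldots,m_k}x = \left(\H{m_1,\ldots,m_k}x - \H{m_1,\ldots,m_k}c\right) + \H{m_1,\ldots,m_k}c$ in the numerator, so that
\begin{eqnarray*}
\M{\frac{\H{\ve m}x}{c-x}}n &=& \frac{1}{c^n}\int_0^1\frac{x^n\left(\H{\ve m}x-\H{\ve m}c\right)}{c-x}dx + \H{\ve m}c\int_0^1\frac{(x/c)^n-1}{c-x}dx\\
&&-\frac{1}{c^n}\int_0^1\frac{c^n\left(\H{\ve m}x-\H{\ve m}c\right)}{c-x}dx.
\end{eqnarray*}
The middle integral is $-S_{1}\left(\tfrac{1}{c};n\right)$ by the formula for $\M{\frac{1}{c-x}}n$ in Lemma~\ref{SSmelweighted} (the case $0<a\leq 1$, here with the convergence extension), producing the final term $-S_{1}\left(\tfrac{1}{c};n\right)\H{\ve m}c$. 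The remaining task is to evaluate the constant $\int_0^1\frac{\H{\ve m}x-\H{\ve m}c}{c-x}dx$ and show it equals the block of multiple polylogarithms appearing on the right-hand side.

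The second step handles this constant integral by integration by parts, exactly in the spirit of Lemma~\ref{HSdifflem} and Lemma~\ref{HSmelexpconst}. Taking $\H{1,\ve m}$-type antiderivatives adapted to the denominator $1/(c-x)$: since $\frac{d}{dx}\H{c,\ve m}{x} = \frac{\H{\ve m}x}{c-x}$ (noting $c<1$ so $c$ is a legitimate index for the multiple polylogarithm, $c\in\R\setminus(0,1)$ is false but the extended definition covers $0<c<1$ as an allowed positive index since there is no smaller positive one in $\ve m$), we get
$$
\int_0^1\frac{\left(\H{\ve m}x-\H{\ve m}c\right)}{c-x}dx = \left.\left(\H{\ve m}x-\H{\ve m}c\right)\H{c}{x}\right|_0^1 - \int_0^1\frac{\H{c}x}{c-x}\,\H{\ve m}x^{\prime}\,dx,
$$
where the derivative of $\H{\ve m}x$ re-expands into lower-depth pieces; iterating this IBP (or, more efficiently, using the shuffle/argument machinery) turns the integral into a sum over all insertions of the letter $c$ into $\ve m$, yielding $\H{m_1,c,m_2,\ldots,m_k}c + \H{m_1,m_2,c,\ldots,m_k}c + \cdots + \H{m_1,m_2,\ldots,m_k,c}c$, plus boundary contributions evaluated at $x=1$. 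The boundary terms at $x=1$ involve $\H{\ve m}{1}$ against antiderivatives, and this is where the $\H{\cdot}{1-c}$ polylogarithms enter.

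The third step is to rewrite those boundary contributions at $x=1$ using the $b-x\to x$ (Section~\ref{SSbxx}) and $x+b\to x$ (Lemma~\ref{SSeinplustrafo}, Lemma~\ref{SStransformplusa1}) argument transformations. Concretely, after the substitution $x=1-t$ near the endpoint, integrals of the form $\int_0^{1-c}\frac{\H{\ve m}{1-t}}{\cdot}dt$ appear, and applying the transformation rules to $\H{\ve m}{1-t}$ and then integrating over $t\in(0,1-c)$ produces precisely the telescoping chain
$$
\H{m_2,\ldots,m_k}c\,\H{0,m_1-c}{1-c} + \H{m_3,\ldots,m_k}c\,\H{0,m_1-c,m_2-c}{1-c} + \cdots + \H{0,m_1-c,\ldots,m_k-c}{1-c},
$$
where the shifts $m_i - c$ come directly from the $x+b\to x$ lemma (the argument is shifted by $b=c$, shifting every index by $c$) and the leading $0$ index records the $\frac{1}{t}$ measure from the $\H0$-type inner integration. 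Collecting all three steps and adding the $\frac{1}{c^n}\int_0^1\frac{x^n(\H{\ve m}x-\H{\ve m}c)}{c-x}dx$ term gives the stated identity.

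\textbf{Main obstacle.} The delicate part is the third step: carefully tracking which argument transformation applies (the hypothesis $m_i\in\R\setminus((-1,1]\cup\{0\})$, i.e. $m_i<-1$, is exactly what guarantees $m_i-c<-1<0$ and $m_i\notin(0,b)$, so Lemma~\ref{SStransformplusa1} and the $b-x\to x$ expansion apply cleanly with no $(1,0,\ldots,0)$-type exceptional cases), and verifying that the boundary evaluations at $x=1$ of the iterated-by-parts expression reorganize without leftover terms into exactly the claimed chain of products. One must also confirm finiteness of every constant that appears — $\H{m_1,c,m_2,\ldots}c$ etc. are finite because the index $c$ only ever sits in non-leading position after the first index $m_1<-1\neq c$, and the $\H{0,m_1-c,\ldots}{1-c}$ are finite because $1-c\in(0,1)$ lies strictly inside the domain and no index equals $1-c$ nor causes a leading-$0$ blow-up issue. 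The bookkeeping of signs and of the combinatorial pattern of insertions is the only real work; no deep idea beyond iterated integration by parts plus the argument-shift lemmas of Section~\ref{SSRelatedArguments} is needed.
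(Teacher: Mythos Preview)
Your step~1 decomposition matches the paper exactly: the Mellin transform splits into the $\tfrac{1}{c^n}\int_0^1 x^n(\H{\ve m}x-\H{\ve m}c)/(c-x)\,dx$ piece, the $-S_1(1/c;n)\H{\ve m}c$ piece, and minus the constant $\int_0^1\frac{\H{\ve m}x-\H{\ve m}c}{c-x}\,dx$. The issue is how you evaluate that constant. You propose integration by parts over all of $[0,1]$ with antiderivative $\H{c}{x}$, but $\H{c}{x}=-\log(1-x/c)$ is only defined as a real multiple polylogarithm for $x\in(0,c)$ and blows up at $x=c$; so there is no legitimate IBP across $[0,1]$ and no well-defined ``boundary contribution at $x=1$.'' Your step~3 then writes the substitution as $x=1-t$ while invoking Lemma~\ref{SStransformplusa1}, which governs $\H{\ve m}{t+c}$, not $\H{\ve m}{1-t}$; these are different transformations and the shifts $m_i-c$ you (correctly) anticipate cannot come from the $b-x\to x$ rule of Section~\ref{SSbxx}.

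The missing idea, which the paper supplies, is simply to split the constant integral at $x=c$. On $[0,c]$ your IBP/shuffle argument is valid (it is precisely the analogue of the manipulation in the proof of Lemma~\ref{HSmelexpconst}, with the letter $c$ in place of $1$), and the boundary term at $x\to c^-$ vanishes because $\H{\ve m}x-\H{\ve m}c$ kills the logarithmic divergence of $\H{c}{x}$; this yields $-\H{m_1,c,m_2,\ldots,m_k}{c}-\cdots-\H{m_1,\ldots,m_k,c}{c}$. On $[c,1]$ one substitutes $x=t+c$ with $t\in(0,1-c)$, applies Lemma~\ref{SStransformplusa1} to $\H{\ve m}{t+c}$ (your check that the hypothesis $m_i\notin(0,c]$ holds is correct, though your claim $m_i-c<-1$ fails when $m_i>1$), cancels the constant $\H{\ve m}{c}$, and integrates $\frac{1}{t}\H{m_1-c,\ldots}{t}$ to get the $\H{0,m_1-c,\ldots}{1-c}$ chain. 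So your toolkit is right; what is missing is the split at $c$ that makes each half well-defined.
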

\begin{proof}We have
\begin{eqnarray*}
\M{\frac{\H{m_1,m_2,\ldots,m_k}x}{c-x}}{n}&=&\int_0^1{\frac{((\frac{x}{c})^n-1)\H{m_1,m_2,\ldots,m_k}x}{c-x}dx}=\\
			&=&\int_0^1{\frac{(\frac{x}{c})^n(\H{m_1,m_2,\ldots,m_k}x-\H{m_1,m_2,\ldots,m_k}c)}{c-x}dx}\\
			&&+\int_0^1{\frac{(\frac{x}{c})^n\H{m_1,m_2,\ldots,m_k}c-\H{m_1,m_2,\ldots,m_k}x}{c-x}dx}\\
			&=&\int_0^1{\frac{(\frac{x}{c})^n(\H{m_1,m_2,\ldots,m_k}x-\H{m_1,m_2,\ldots,m_k}c)}{c-x}dx}\\
			&&+\underbrace{\H{m_1,m_2,\ldots,m_k}c\int_0^1{\frac{(\frac{x}{c})^n-1}{c-x}dx}}_{A:=}\\
			&&-\underbrace{\int_0^1{\frac{\H{m_1,m_2,\ldots,m_k}x-\H{m_1,m_2,\ldots,m_k}c}{c-x}dx}}_{B:=}.
\end{eqnarray*}
with
\begin{eqnarray*}
A&=&- \S{1}{\frac{1}{c};n}\H{m_1,m_2,\ldots,m_k}c\\
B&=&\underbrace{\int_0^c{\frac{\H{m_1,m_2,\ldots,m_k}x-\H{m_1,m_2,\ldots,m_k}c}{c-x}dx}}_{B_1:=}\\
			&&\hspace{2.5cm}+\underbrace{\int_c^1{\frac{\H{m_1,m_2,\ldots,m_k}x-\H{m_1,m_2,\ldots,m_k}c}{c-x}dx}}_{B_2:=}\\
\end{eqnarray*}
where
\begin{eqnarray*}
B_1&=&-\H{m_1,c,m_2,\ldots,m_k}c-\cdots-\H{m_1,m_2,\ldots,m_k,c}c\\
B_2&=&\int_0^{1-c}{\frac{\H{m_1,m_2,\ldots,m_k}{x+c}-\H{m_1,m_2,\ldots,m_k}c}{-x}dx}\\
			&=&-\int_0^{1-c}{\frac{\H{m_2,\ldots,m_k}c\H{m_1-c}{x} +\cdots + \H{m_1-c,\ldots,m_k-c}{x}}{x}dx}\\
			&=&-\H{m_2,\ldots,m_k}c\H{0,m_1-c}{1-c} - \H{m_3,\ldots,m_k}c\H{0,m_1-c,m_2-c}{1-c}\\
			&&-\cdots- \H{m_k}c\H{0,m_1-c,\ldots,m_{k-1}-c}{1-c} - \H{0,m_1-c,\ldots,m_k-c}{1-c};
\end{eqnarray*}
thus the lemma follows.

\end{proof}

\begin{remark}
Using this lemma we can extend the presented algorithm to $\bar{S}-$sums $\S{a_1,\ldots,a_k}{b_1,\ldots,b_k;n}$ provided that if $\abs{b_1}>1,$ we have that
$\abs{b_i}<1$ for $2<i<k$.
\end{remark}

\cleardoublepage 
 
\chapter{Cyclotomic Harmonic Sums}
\label{CSchapter}
In Chapter \ref{SSchapter} we already generalized harmonic sums and worked with S-sums, whereas in this chapter we want to extend harmonic sums into a different direction that leads to 
{\itshape cyclotomic harmonic sums} (compare \cite{Ablinger2011,Ablinger2012}). Most of the facts that we will present here were already published in \cite{Ablinger2011}, however here 
we will work out the theoretic background in detail. For instance the theorems are supplemented by proofs, the algorithms mentioned in \cite{Ablinger2011} are worked out in detail and more 
examples are given. Note that all methods presented in this chapter can be applied also for harmonic sums and harmonic polylogarithms.
\section{Definition and Structure of Cyclotomic Harmonic Sums}
\label{CSdef}
\begin{definition}[Cyclotomic Harmonic Sums]
Let $a_i,k \in \N,b_i,n\in\N_0$ and $c_i\in\Z^*$ for $i\in \{1,2,\ldots,k\}.$ We define
\begin{eqnarray*}
&&\S{(a_1,b_1,c_1),(a_2,b_2,c_2),\ldots,(a_k,b_k,c_k)}{n}=\\
&&\hspace{2cm}=\sum_{i_1 \geq i_2,\cdots i_k \geq 1}\frac{\sign{c_1}^{i_1}}{(a_1 i_1+b_1)^{\abs{c_1}}}\frac{\sign{c_2}^{i_2}}{(a_2 i_2+b_2)^{\abs{c_2}}}\cdots\frac{\sign{c_k}^{i_1}}{(a_k i_k+b_k)^{\abs{c_k}}}\\
&&\hspace{2cm}=\sum_{i_1=1}^n\frac{\sign{c_1}^{i_1}}{(a_1 i_1+b_1)^{\abs{c_1}}}\sum_{i_2=1}^{i_1}\frac{\sign{c_2}^{i_2}}{(a_2 i_2+b_2)^{\abs{c_2}}}\cdots\sum_{i_k=1}^{i_{k-1}}\frac{\sign{c_k}^{i_1}}{(a_k i_k+b_k)^{\abs{c_k}}}.
\end{eqnarray*}
$k$ is called the depth and $w=\sum_{i=0}^k\abs{c_i}$ is called the weight of the cyclotomic harmonic sum $\S{(a_1,b_1,c_1),(a_2,b_2,c_2),\ldots,(a_k,b_k,c_k)}{n}$.
\label{CSdefsum}
\end{definition}

\subsection{Product}
In the following we will see that as for harmonic sums and S-sums it is possible to express the product of cyclotomic harmonic sums with the same upper summation limit as linear 
combination of single cyclotomic harmonic sums. Hence cyclotomic harmonic sums form a quasi shuffle algebra. We start with the subsequent lemma.
\begin{lemma}
Let $a_1,a_2,c_1,c_2 \in \N,b_1,b_2,i \in \N_0.$ If $a_1 b_2\neq a_2 b_1,$ we have
\begin{eqnarray*}
\frac{1}{(a_1 i +b_1)^{c_1}(a_2 i +b_2)^{c_2}}&=&\\
&&\hspace{-2cm}(-1)^{c_1}\sum_{j=1}^{c_1}{ (-1)^j \binom{c_1+c_2-j-1}{c_2-1}\frac{a_1^{c_2}a_2^{c_1-j}}{(a_1 b_2-a_2 b_1)^{c_1+c_2-j}}\frac{1}{(a_1 i+b_1)^j}}\\
&&\hspace{-2cm}+(-1)^{c_2}\sum_{j=1}^{c_2}{ (-1)^j \binom{c_1+c_2-j-1}{c_1-1}\frac{a_1^{c_2-j}a_2^{c_1}}{(a_2 b_1-a_1 b_2)^{c_1+c_2-j}}\frac{1}{(a_2 i+b_2)^j}},
\end{eqnarray*}
and if $a_1 b_2= a_2 b_1,$ we have
\begin{eqnarray*}
\frac{1}{(a_1 i +b_1)^{c_1}(a_2 i +b_2)^{c_2}}&=&\left(\frac{a_1}{a_2}\right)^{c_2}\frac{1}{(a_1 i+b_1)^{c_1+c_2}}.
\end{eqnarray*}
\label{CSapart}
\end{lemma}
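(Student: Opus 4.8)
The statement to prove is Lemma \ref{CSapart}, a partial fraction decomposition for $\frac{1}{(a_1 i + b_1)^{c_1}(a_2 i + b_2)^{c_2}}$.

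The plan is to treat this as a standard partial fraction decomposition in the variable $i$ (or, more cleanly, introduce an auxiliary variable $x$ and decompose the rational function $\frac{1}{(a_1 x + b_1)^{c_1}(a_2 x + b_2)^{c_2}}$ over $\Q(x)$, then substitute $x = i$). Setting $\Delta := a_1 b_2 - a_2 b_1$, the two linear factors $a_1 x + b_1$ and $a_2 x + b_2$ are coprime precisely when $\Delta \neq 0$; when $\Delta = 0$ they are proportional, namely $a_2 x + b_2 = \frac{a_2}{a_1}(a_1 x + b_1)$, which immediately gives the second (easy) case. So the real content is the case $\Delta \neq 0$.

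For the case $\Delta\neq 0$, the existence of a decomposition of the form
\begin{eqnarray*}
\frac{1}{(a_1 x + b_1)^{c_1}(a_2 x + b_2)^{c_2}}&=&\sum_{j=1}^{c_1}\frac{A_j}{(a_1 x + b_1)^j}+\sum_{j=1}^{c_2}\frac{B_j}{(a_2 x + b_2)^j}
\end{eqnarray*}
is standard. To identify the coefficients $A_j$, I would multiply both sides by $(a_1 x + b_1)^{c_1}$ and evaluate the resulting Taylor expansion at the point $x_0 = -b_1/a_1$: this gives $A_{c_1 - m} = \frac{1}{m!}\left.\frac{d^m}{dx^m}\frac{1}{(a_2 x + b_2)^{c_2}}\right|_{x = x_0}$ for $0 \le m \le c_1 - 1$. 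Now $\frac{d^m}{dx^m}(a_2 x + b_2)^{-c_2} = (-1)^m a_2^m \frac{(c_2+m-1)!}{(c_2-1)!}(a_2 x + b_2)^{-c_2 - m}$, and at $x_0 = -b_1/a_1$ one has $a_2 x_0 + b_2 = \frac{a_1 b_2 - a_2 b_1}{a_1} = \Delta/a_1$. Substituting and writing $j = c_1 - m$, i.e. $m = c_1 - j$, and using $\binom{c_2 + m - 1}{m} = \binom{c_1 + c_2 - j - 1}{c_1 - j} = \binom{c_1+c_2-j-1}{c_2-1}$, yields after bookkeeping exactly the claimed formula $A_j = (-1)^{c_1}(-1)^j\binom{c_1+c_2-j-1}{c_2-1}\frac{a_1^{c_2}a_2^{c_1-j}}{\Delta^{c_1+c_2-j}}$. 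The coefficients $B_j$ are obtained symmetrically by interchanging the roles of the two factors, noting that $a_1 x + b_1$ evaluated at $x = -b_2/a_2$ equals $-\Delta/a_2$, which accounts for the sign flip $\Delta \to -\Delta$ in the second sum. Finally, substituting $x = i$ gives the lemma.

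The main obstacle is purely the combinatorial/index bookkeeping: correctly tracking the signs coming from $(-1)^m$ in the derivative and from the negative power of $\Delta$, matching $m = c_1 - j$ against the binomial coefficient identity, and checking the powers of $a_1$ and $a_2$ line up. There is no conceptual difficulty — everything reduces to the Leibniz/Taylor computation of the partial fraction coefficients at the poles. I would present the $\Delta = 0$ case in one line, then do the $A_j$ computation in detail and remark that $B_j$ follows "by symmetry" (swapping $(a_1,b_1,c_1)\leftrightarrow(a_2,b_2,c_2)$, which sends $\Delta\to-\Delta$).
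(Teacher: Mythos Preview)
Your proof is correct; the partial fraction computation via Taylor expansion at the poles is the standard route, and your bookkeeping of the signs, powers, and the binomial identity $\binom{c_1+c_2-j-1}{c_1-j}=\binom{c_1+c_2-j-1}{c_2-1}$ checks out. The paper itself states this lemma without proof (it is used as an auxiliary partial-fractioning identity for the product formula of cyclotomic harmonic sums), so there is nothing to compare against --- your argument fills the gap appropriately.
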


Having this lemma in mind we find the following product formulas.

\begin{thm}
Let $a_i,d_i,k,l,n \in \N,b_i,e_i \in \N_0$ and $c_i,f_i \in \Z^*.$
If $a_1 e_1\neq d_1 b_1$ we have
\begin{eqnarray*}
&&\S{(a_1,b_1,c_1),\ldots,(a_k,b_k,c_k)}{n}\S{(d_1,e_1,f_1),\ldots,(d_l,e_l,f_l)}{n}=\\
&&\sum_{i=1}^n\frac{\sign{c_1}^i}{(a_1 i+b_1)^{\abs{c_1}}}\S{(a_2,b_2,c_2),\ldots,(a_k,b_k,c_k)}{i}\S{(d_1,e_1,f_1),\ldots,(d_l,e_l,f_l)}{i}\\
&&+\sum_{i=1}^n\frac{\sign{f_1}^i}{(d_1 i+e_1)^{\abs{f_1}}}\S{(a_1,b_1,c_1),\ldots,(a_k,b_k,c_k)}{i}\S{(d_2,e_2,f_2),\ldots,(d_l,e_l,f_l)}{i}\\
&&-\sum_{i=1}^n \left( (-1)^{\abs{c_1}}\sum_{j_1=1}^{\abs{c_1}}  (-1)^j \binom{\abs{c_1}+\abs{f_1}-j-1}{\abs{f_1}-1}\frac{a_1^{\abs{f_1}}d_1^{\abs{c_1}-j}}{a_1 e_1-d_1 b_1}\frac{1}{(a_1 i+b_1)^j}\right.\\
&&\hspace{1cm}\left.+(-1)^{\abs{f_1}}\sum_{j=1}^{\abs{f_1}}{ (-1)^j \binom{\abs{c_1}+\abs{f_1}-j-1}{\abs{f_1}-1}\frac{a_1^{\abs{f_1}-j}d_1^{\abs{c_1}}}{d_1 b_1-a_1 e_1}\frac{1}{(d_1 i+e_1)^j}}\right)\\
&&\hspace{1cm}\S{(a_2,b_2,c_2),\ldots,(a_k,b_k,c_k)}{i}\S{(d_2,e_2,f_2),\ldots,(d_l,e_l,f_l)}{i},
\end{eqnarray*}
and if $a_1 e_1= d_1 b_1,$ we have
\begin{eqnarray*}
&&\S{(a_1,b_1,c_1),\ldots,(a_k,b_k,c_k)}{n}\S{(d_1,e_1,f_1),\ldots,(d_l,e_l,f_l)}{n}=\\
&&\sum_{i=1}^n\frac{\sign{c_1}^i}{(a_1 i+b_1)^{\abs{c_1}}}\S{(a_2,b_2,c_2),\ldots,(a_k,b_k,c_k)}{i}\S{(d_1,e_1,f_1),\ldots,(d_l,e_l,f_l)}{i}\\
&&+\sum_{i=1}^n\frac{\sign{f_1}^i}{(d_1 i+e_1)^{\abs{f_1}}}\S{(a_1,b_1,c_1),\ldots,(a_k,b_k,c_k)}{i}\S{(d_2,e_2,f_2),\ldots,(d_l,e_l,f_l)}{i}\\
&&-\left(\frac{a_1}{d_1}\right)^{\abs{f_1}}\sum_{i=1}^n\frac{(\sign{c_1} \sign{f_1})^i}{(a_1 i+b_1)^{\abs{c_1}+\abs{f_1}}}
	\S{(a_2,b_2,c_2),\ldots,(a_k,b_k,c_k)}{i}\S{(d_2,e_2,f_2),\ldots,(d_l,e_l,f_l)}{i}.
\end{eqnarray*}
\label{CSprod}
\end{thm}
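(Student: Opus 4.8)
The plan is to follow the proofs of the product formulas (\ref{HShsumproduct}) for harmonic sums and (\ref{SSsumproduct}) for S-sums, with the elementary partial-fraction step replaced by Lemma~\ref{CSapart}. First I would peel off the outermost summation index from each factor, using Definition~\ref{CSdefsum} to write
\begin{eqnarray*}
\S{(a_1,b_1,c_1),\ldots,(a_k,b_k,c_k)}{n}&=&\sum_{i=1}^n\frac{\sign{c_1}^i}{(a_1i+b_1)^{\abs{c_1}}}\,\S{(a_2,b_2,c_2),\ldots,(a_k,b_k,c_k)}{i},\\
\S{(d_1,e_1,f_1),\ldots,(d_l,e_l,f_l)}{n}&=&\sum_{j=1}^n\frac{\sign{f_1}^j}{(d_1j+e_1)^{\abs{f_1}}}\,\S{(d_2,e_2,f_2),\ldots,(d_l,e_l,f_l)}{j}.
\end{eqnarray*}
Multiplying the two right-hand sides and inserting the elementary identity $\sum_{i=1}^n\sum_{j=1}^n a_{ij}=\sum_{i=1}^n\sum_{j=1}^i a_{ij}+\sum_{j=1}^n\sum_{i=1}^j a_{ij}-\sum_{i=1}^n a_{ii}$ (the same one behind (\ref{HShsumproduct}); see Figure~\ref{HSfig1}) splits the product into three pieces.

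For the first two pieces I would use Definition~\ref{CSdefsum} in the reverse direction: because $\S{(d_2,\ldots)}{j}$ enters only through its upper summation limit, the restricted inner sum $\sum_{j=1}^i\frac{\sign{f_1}^j}{(d_1j+e_1)^{\abs{f_1}}}\S{(d_2,e_2,f_2),\ldots}{j}$ is exactly $\S{(d_1,e_1,f_1),\ldots,(d_l,e_l,f_l)}{i}$, so the $\sum_i\sum_{j\le i}$ piece equals $\sum_{i=1}^n\frac{\sign{c_1}^i}{(a_1i+b_1)^{\abs{c_1}}}\S{(a_2,\ldots)}{i}\,\S{(d_1,\ldots)}{i}$. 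Symmetrically, after renaming $j\to i$, the $\sum_j\sum_{i\le j}$ piece equals $\sum_{i=1}^n\frac{\sign{f_1}^i}{(d_1i+e_1)^{\abs{f_1}}}\S{(a_1,\ldots)}{i}\,\S{(d_2,\ldots)}{i}$. These two already match the first two lines of each claimed identity.

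The remaining (diagonal) piece is $-\sum_{i=1}^n\frac{\sign{c_1}^i\sign{f_1}^i}{(a_1i+b_1)^{\abs{c_1}}(d_1i+e_1)^{\abs{f_1}}}\,\S{(a_2,\ldots)}{i}\,\S{(d_2,\ldots)}{i}$, where $\sign{c_1}^i\sign{f_1}^i=(\sign{c_1}\sign{f_1})^i$; note $a_1i+b_1\geq1$ and $d_1i+e_1\geq1$ for $i\geq1$ since $a_1,d_1\in\N$ and $b_1,e_1\in\N_0$, so no denominator vanishes and Lemma~\ref{CSapart} applies. Reading the lemma with its exponents $c_1,c_2$ taken as $\abs{c_1},\abs{f_1}$ and its pair $(a_2,b_2)$ taken as $(d_1,e_1)$, its hypothesis $a_1b_2\neq a_2b_1$ becomes $a_1e_1\neq d_1b_1$. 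In that case the lemma rewrites $\frac{1}{(a_1i+b_1)^{\abs{c_1}}(d_1i+e_1)^{\abs{f_1}}}$ as a finite combination of the terms $\frac{1}{(a_1i+b_1)^{j}}$ and $\frac{1}{(d_1i+e_1)^{j}}$ with constant coefficients; substituting this expansion and pulling the coefficients out of the $i$-sum produces the stated third line. When instead $a_1e_1=d_1b_1$, the lemma collapses the product to $(a_1/d_1)^{\abs{f_1}}(a_1i+b_1)^{-(\abs{c_1}+\abs{f_1})}$, which gives the third line of the second identity at once.

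Thus the whole argument is a direct computation once Lemma~\ref{CSapart} is granted; the only delicate point is the bookkeeping in that last substitution — matching the exponents, the binomial coefficients, and the sign of $a_1e_1-d_1b_1$ against $d_1b_1-a_1e_1$ precisely as they occur in the lemma, and keeping the factor $(\sign{c_1}\sign{f_1})^i$ attached to each resulting summand. The concluding observation that iterated application rewrites any product of cyclotomic harmonic sums as a linear combination of single ones then follows by induction on the total depth, but is not part of the present statement.
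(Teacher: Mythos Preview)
Your proposal is correct and follows exactly the paper's own proof: the paper also invokes the identity $\sum_{i=1}^n\sum_{j=1}^n a_{ij}=\sum_{i=1}^n\sum_{j=1}^i a_{ij}+\sum_{j=1}^n\sum_{i=1}^j a_{ij}-\sum_{i=1}^n a_{ii}$ to obtain the three-term decomposition and then applies Lemma~\ref{CSapart} to the diagonal term. You have simply spelled out more of the bookkeeping that the paper compresses into ``we get immediately'' and ``Using Lemma~\ref{CSapart} we get the results.''
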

\begin{proof}
From the identity $$\sum_{i=1}^n\sum_{j=1}^n a_{i j}=\sum_{i=1}^n\sum_{j=1}^i a_{i j}+\sum_{j=1}^n\sum_{i=1}^j a_{i j}-\sum_{i=1}^na_{i i}$$ we get immediately
\begin{eqnarray*}
&&\S{(a_1,b_1,c_1),\ldots,(a_k,b_k,c_k)}{n}\S{(d_1,e_1,f_1),\ldots,(d_l,e_l,f_l)}{n}=\\
&&\sum_{i=1}^n\frac{\sign{c_1}^i}{(a_1 i+b_1)^{\abs{c_1}}}\S{(a_2,b_2,c_2),\ldots,(a_k,b_k,c_k)}{i}\S{(d_1,e_1,f_1),\ldots,(d_l,e_l,f_l)}{i}\\
&&+\sum_{i=1}^n\frac{\sign{f_1}^i}{(d_1 i+e_1)^{\abs{f_1}}}\S{(a_1,b_1,c_1),\ldots,(a_k,b_k,c_k)}{i}\S{(d_2,e_2,f_2),\ldots,(d_l,e_l,f_l)}{i}\\
&&-\sum_{i=1}^n\frac{(\sign{c_1} \sign{f_1})^i}{(a_1 i+b_1)^{\abs{c_1}}(d_1 i+e_1)^{\abs{f_1}}}
	\S{(a_2,b_2,c_2),\ldots,(a_k,b_k,c_k)}{i}\S{(d_2,e_2,f_2),\ldots,(d_l,e_l,f_l)}{i}.
\end{eqnarray*}
Using Lemma \ref{CSapart} we get the results.
\end{proof}

\subsubsection*{Extracting Leading or Trailing Ones.}
\label{CSextracttrailing}
We say that a cyclotomic harmonic sum $\S{(a_1,b_1,c_1),\ldots,(a_k,b_k,c_k)}{n}$ has leading or trailing ones, if $c_1=1$ or $c_k=1,$ respectively. 
If $c_i=1$ for $1\leq i\leq j<k$ and $c_{j+1}\neq 1,$
we say that the sum has $j-$leading ones, analogously, if $c_i=1$ for $1<j\leq i\leq k$ and $c_{j-1}\neq 1,$
we say that the sum has $k-j+1-$trailing ones. In addition, we call a cyclotomic harmonic sum
$\S{(a_1,b_1,c_1),\ldots,(a_k,b_k,c_k)}{n}$ which has only ones \ie $c_i=1$ for $1\leq i\leq k$ a {\itshape linear} cyclotomic harmonic sum. 

As for harmonic sums we can use the product of the cyclotomic harmonic sums to 
single out leading or trailing ones: we consider the sum
$$
\S{(a_1,b_1,1),(a_2,b_2,1),\ldots,(a_j,b_j,1),(a_{j+1},b_{j+1},c_{j+1}),\ldots,(a_k,b_k,c_k)}{n}
$$
with $c_{j+1}\neq1,$ \ie this sum has $j-$leading ones. Let $p$ be the expansion of 
$$\S{(a_1,b_1,1),(a_2,b_2,1),\ldots,(a_j,b_j,1)}{n}\S{(a_{j+1},b_{j+1},c_{j+1}),\ldots,(a_k,b_k,c_k)}{n}$$ 
using Theorem \ref{CSprod}. The only cyclotomic harmonic sum with $j-$leading ones in $p$ is $\S{(a_1,b_1,1),\ldots,(a_j,b_j,1),(a_{j+1},b_{j+1},c_{j+1}),\ldots,(a_k,b_k,c_k)}{n},$ all 
the other sums in $p$ have less leading ones. Now define 
$$q:=p-\S{(a_1,b_1,1),\ldots,(a_j,b_j,1),(a_{j+1},b_{j+1},c_{j+1}),\ldots,(a_k,b_k,c_k)}{n},$$
\ie all the cyclotomic harmonic sums in $q$ have less than $j-$leading ones.
We can now write 
\begin{eqnarray*}
&&\S{(a_1,b_1,1),\ldots,(a_j,b_j,1),(a_{j+1},b_{j+1},c_{j+1}),\ldots,(a_k,b_k,c_k)}{n}\\
&&\hspace{2cm}=\S{(a_1,b_1,1),\ldots,(a_j,b_j,1)}{n}\S{(a_{j+1},b_{j+1},c_{j+1}),\ldots,(a_k,b_k,c_k)}{n}-q;
\end{eqnarray*}
in this way we can express $\S{(a_1,b_1,1),\ldots,(a_j,b_j,1),(a_{j+1},b_{j+1},c_{j+1}),\ldots,(a_k,b_k,c_k)}{n}$ using sums with less then $j-$leading ones and 
$\S{(a_1,b_1,1),\ldots,(a_j,b_j,1)}{n}$. Applying the same strategy to the
remaining sums with leading ones which are not {\itshape linear} cyclotomic harmonic sums we can eventually decompose the cyclotomic harmonic sum 
$\S{(a_1,b_1,1),(a_2,b_2,1),\ldots,(a_j,b_j,1),(a_{j+1},b_{j+1},c_{j+1}),\ldots,(a_k,b_k,c_k)}{n}$ in a multivariate polynomial in {\itshape linear} cyclotomic harmonic sum with coefficients 
in the cyclotomic harmonic sums without leading ones. 
\begin{example}
\begin{eqnarray*}
&&\textnormal{S}_{(2,1,1),(3,2,1),(1,0,2)}(n)=-\textnormal{S}_{(1,0,1)}(n)+\frac{1}{2} \textnormal{S}_{(1,0,2)}(n)+\frac{1}{2} \textnormal{S}_{(1,0,2)}(n) \textnormal{S}_{(2,1,1)}(n)\\
&&\hspace{2cm}+2 \textnormal{S}_{(2,1,1)}(n)-\frac{5}{2}\textnormal{S}_{(1,0,2),(2,1,1)}(n)-\textnormal{S}_{(2,1,1)}(n) \textnormal{S}_{(1,0,2),(3,2,1)}(n)\\
&&\hspace{2cm}+3 \textnormal{S}_{(1,0,2),(3,2,1)}(n)-\frac{3}{4} \textnormal{S}_{(2,1,1),(1,0,1)}(n)+\textnormal{S}_{(1,0,2)}(n)\textnormal{S}_{(2,1,1),(3,2,1)}(n)\\
&&\hspace{2cm}+\frac{9}{4} \textnormal{S}_{(2,1,1),(3,2,1)}(n)+\textnormal{S}_{(1,0,2),(3,2,1),(2,1,1)}(n).
\end{eqnarray*}
\end{example}

\begin{remark}
We can always decompose a cyclotomic harmonic sum $\S{(a_1,b_1,c_1),\ldots,(a_k,b_k,c_k)}{n}$ in a multivariate polynomial in {\itshape linear} cyclotomic harmonic sums 
with coefficients in the cyclotomic harmonic sums without leading ones. 
Note, that in a similar way it is possible to extract trailing ones. Hence we can also decompose a cyclotomic harmonic sum $\S{(a_1,b_1,c_1),\ldots,(a_k,b_k,c_k)}{n}$ in a multivariate 
polynomial in {\itshape linear} cyclotomic harmonic sums with coefficients in the cyclotomic harmonic sums without trailing ones. Clearly, restricting to harmonic sums we get just the 
results stated in Section \ref{HSdef}.
\label{CSextractleading1rem}
\end{remark}

\begin{example}
\begin{eqnarray*}
&&\textnormal{S}_{(1,0,2),(2,1,1),(3,2,1)}(n)=-\frac{3}{4} \textnormal{S}_{(1,0,1)}(n)+\frac{1}{2} \textnormal{S}_{(1,0,2)}(n)+\textnormal{S}_{(1,0,2)}(n) \textnormal{S}_{(3,2,1)}(n)\\
&&\hspace{2cm}+\frac{9}{4} \textnormal{S}_{(3,2,1)}(n)-2\textnormal{S}_{(1,0,1),(3,2,1)}(n)-\textnormal{S}_{(3,2,1)}(n) \textnormal{S}_{(2,1,1),(1,0,2)}(n)\\
&&\hspace{2cm}-2 \textnormal{S}_{(2,1,1),(1,0,2)}(n)+\textnormal{S}_{(1,0,2)}(n) \textnormal{S}_{(2,1,1),(3,2,1)}(n)+4\textnormal{S}_{(2,1,1),(3,2,1)}(n)\\
&&\hspace{2cm}+2 \textnormal{S}_{(3,2,1),(1,0,2)}(n)+\textnormal{S}_{(3,2,1),(2,1,1),(1,0,2)}(n).
\end{eqnarray*}
\end{example}

\subsection{Synchronization}
In this subsection we consider cyclotomic harmonic sums with upper summation limit $n+c,$ $kn$ and $kn+c$ for $c\in\Z$ and $k\in \N$.
\begin{lemma}
Let $a_i,n,k,c \in \N, b_i\in\N_0$ and $c_i\in \Z^*$ for $i\in (1,2,\ldots,k).$ Then for $n\geq 0,$
\begin{eqnarray*}
\S{(a_1,b_1,c_1),\ldots,(a_k,b_k,c_k)}{n+c}&=&\S{(a_1,b_1,c_1),\ldots,(a_k,b_k,c_k)}{n}\\
	&&+\sum_{j=1}^c{\frac{\sign{c_1}^{j+n}\S{(a_2,b_2,c_2),\ldots,(a_k,b_k,c_k)}{n+j}}{(a_1 (j+n)+b_1)^{\abs{c_1}}}},
\end{eqnarray*}
and for $n\geq c,$
\begin{eqnarray*}
\S{(a_1,b_1,c_1),\ldots,(a_k,b_k,c_k)}{n-c}&=&\S{(a_1,b_1,c_1),\ldots,(a_k,b_k,c_k)}{n}\\
	&&+\sum_{j=1}^c{\frac{\sign{c_1}^{j+n-c}\S{(a_2,b_2,c_2),\ldots,(a_k,b_k,c_k)}{n-c+j}}{(a_1(j+n-c)+b_1)^{\abs{c_1}}}}.	
\end{eqnarray*}
\end{lemma}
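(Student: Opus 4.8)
Both identities are read off directly from the defining recursion of a cyclotomic harmonic sum in its outermost summation index. Write $\Lambda=(a_1,b_1,c_1),(a_2,b_2,c_2),\ldots,(a_k,b_k,c_k)$ and $\Lambda'=(a_2,b_2,c_2),\ldots,(a_k,b_k,c_k)$. Unravelling Definition~\ref{CSdefsum} one step, for every $m\in\N_0$ one has
\[
\S{\Lambda}{m}=\sum_{i_1=1}^{m}\frac{\sign{c_1}^{\,i_1}}{(a_1 i_1+b_1)^{\abs{c_1}}}\,\S{\Lambda'}{i_1},
\]
where for $k=1$ the symbol $\S{\Lambda'}{i_1}$ is interpreted as the empty cyclotomic sum, conventionally equal to $1$. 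Since $\S{\Lambda'}{i_1}$ is just a fixed sequence in $i_1$, no induction on the depth is required; the proof consists entirely of splitting the range of $i_1$ and re-indexing.

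\textbf{The two steps.} For the shift to $n+c$ (which needs only $n\ge 0$, so that both sides are defined) I would apply the recursion with $m=n+c$ and split $\sum_{i_1=1}^{n+c}=\sum_{i_1=1}^{n}+\sum_{i_1=n+1}^{n+c}$. The first block is $\S{\Lambda}{n}$ by definition; in the second block the substitution $i_1=n+j$ with $j=1,\dots,c$ produces
\[
\sum_{j=1}^{c}\frac{\sign{c_1}^{\,n+j}}{(a_1(n+j)+b_1)^{\abs{c_1}}}\,\S{\Lambda'}{n+j},
\]
which is exactly the stated correction term. For the downward shift, valid when $n\ge c$, the same splitting applied to the recursion at $m=n$ gives
\[
\S{\Lambda}{n}=\S{\Lambda}{n-c}+\sum_{j=1}^{c}\frac{\sign{c_1}^{\,n-c+j}}{(a_1(n-c+j)+b_1)^{\abs{c_1}}}\,\S{\Lambda'}{n-c+j},
\]
where now $i_1=n-c+j$ runs over the tail range $n-c+1,\dots,n$; isolating $\S{\Lambda}{n-c}$ then gives the second identity.

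\textbf{On the obstacle.} There is no genuine obstacle here: the only points deserving attention are the re-indexing of the alternating factor $\sign{c_1}^{\,i_1}$ — this is what converts the exponent to $n+j$, respectively $n-c+j$ — and the degenerate cases. If $c=0$, or if $n=c$ in the downward relation, the correction sum is empty and the identity is a tautology; the hypotheses $n\ge 0$ and $n\ge c$ are precisely what guarantees that every argument of every cyclotomic sum appearing is a nonnegative integer, so that the recursion applies verbatim. An equivalent route is to first prove the one-step relations (the case $c=1$), which are immediate from the recursion, and then obtain the general statement by a short induction on $c$; I would nevertheless favour the direct range-splitting above, since it exhibits the closed form of the correction term transparently.
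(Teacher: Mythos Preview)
Your approach is correct and is exactly the natural one; the paper states this lemma without proof, so there is nothing to compare against beyond confirming that your range-splitting of the outermost summation is the intended (and only reasonable) argument. One small remark: your derivation for the downward shift gives $\S{\Lambda}{n-c}=\S{\Lambda}{n}-\sum_{j=1}^{c}(\cdots)$, so the ``$+$'' in the paper's second displayed identity is evidently a typo for ``$-$''; your computation is the correct one.
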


Given a cyclotomic harmonic sum of the form $\S{(a_1,b_1,c_1),\ldots,(a_k,b_k,c_k)}{n+c}$ with $c\in\Z,$ we can apply the previous lemma recursively in order to synchronize the upper summation limit of the 
arising cyclotomic harmonic sums to $n$.

\begin{lemma}
For $a, k \in \N, b\in\N_0$, $c\in \Z^*$, $k\geq 2$  :
\begin{eqnarray*}
\S{(a,b,c)}{k\cdot n}=\sum_{i=0}^{k-1}\sign{c}^i\S{(k\cdot a,b-a \cdot i, \sign{c}^k\abs{c})}{n}.
\end{eqnarray*}
\label{CSmultint1}
\end{lemma}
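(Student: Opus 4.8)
The statement only involves a depth-one cyclotomic harmonic sum, so by Definition~\ref{CSdefsum},
\[
\S{(a,b,c)}{k\cdot n}=\sum_{j=1}^{k n}\frac{\sign{c}^{j}}{(a j+b)^{\abs{c}}}.
\]
The plan is to reorganise this single sum according to the residue of $j$ modulo $k$. Every $j\in\{1,\ldots,k n\}$ can be written uniquely as $j=k\ell-i$ with $\ell\in\{1,\ldots,n\}$ and $i\in\{0,1,\ldots,k-1\}$: for a fixed $\ell$, letting $i$ run through $0,\ldots,k-1$ produces exactly the block $\{k(\ell-1)+1,\ldots,k\ell\}$, and these blocks tile $\{1,\ldots,k n\}$. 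First I would record this bijection and observe that $a j+b=k a\,\ell+(b-a i)\ge a+b\ge 1$ for every admissible $(\ell,i)$; hence all denominators appearing below are nonzero, even though the symbolic index $b-a i$ may itself be negative and thus formally lies slightly outside the index convention of Definition~\ref{CSdefsum}.

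Next I would substitute $j=k\ell-i$. For the numerator, $\sign{c}^{j}=\sign{c}^{k\ell-i}=\sign{c}^{k\ell}\,\sign{c}^{-i}=\sign{c}^{k\ell}\,\sign{c}^{i}$, using $\sign{c}^{2}=1$; for the denominator, $(a j+b)^{\abs{c}}=(k a\,\ell+b-a i)^{\abs{c}}$. Splitting the resulting double sum yields
\[
\S{(a,b,c)}{k\cdot n}=\sum_{i=0}^{k-1}\sign{c}^{i}\sum_{\ell=1}^{n}\frac{\sign{c}^{k\ell}}{(k a\,\ell+b-a i)^{\abs{c}}}.
\]

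Finally I would identify the inner $\ell$-sum with a cyclotomic harmonic sum. Since $\abs{c}>0$ we have $\abs{\sign{c}^{k}\abs{c}}=\abs{c}$ and $\sign{\sign{c}^{k}\abs{c}}=\sign{c}^{k}$, so Definition~\ref{CSdefsum} gives
\[
\S{(k a,\,b-a i,\,\sign{c}^{k}\abs{c})}{n}=\sum_{\ell=1}^{n}\frac{\bigl(\sign{c}^{k}\bigr)^{\ell}}{(k a\,\ell+b-a i)^{\abs{c}}}=\sum_{\ell=1}^{n}\frac{\sign{c}^{k\ell}}{(k a\,\ell+b-a i)^{\abs{c}}}.
\]
Substituting this back turns the previous display into $\S{(a,b,c)}{k\cdot n}=\sum_{i=0}^{k-1}\sign{c}^{i}\,\S{(k a,\,b-a i,\,\sign{c}^{k}\abs{c})}{n}$, which is the assertion. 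The argument is purely a reindexing, so there is no real obstacle; the only points that need care are checking that $j\mapsto(\ell,i)$ is a bijection onto $\{1,\ldots,k n\}$ and noting that the identity still makes sense when $b-a i<0$, both of which are immediate.
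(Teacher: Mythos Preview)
Your proof is correct and follows essentially the same approach as the paper: both split the summation index $j$ according to its residue class modulo $k$ via $j=k\ell-i$ and then identify each resulting inner sum with a depth-one cyclotomic harmonic sum. Your version is slightly more careful in explicitly verifying the bijection and the positivity of the denominators, but the underlying reindexing argument is identical.
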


\begin{proof}
\begin{eqnarray*}
\S{(a,b,c)}{x,k\cdot n}&=&\sum_{j=1}^{k\cdot n}\frac{\sign{c}^j}{(a j +b)^{\abs{c}}}=\\
		&&\hspace{-3cm}=\sum_{j=1}^{n}\frac{\sign{c}^{kj}}{(a (k \cdot j) +b)^{\abs{c}}}+\sum_{j=1}^{n}\frac{\sign{c}^{kj-1}}{(a (k \cdot j - 1) +b)^{\abs{c}}}+\cdots+\sum_{j=1}^{n}\frac{\sign{c}^{kj-(k-1)}}{(a (k \cdot j-(k-1)) +b)^{\abs{c}}}\\
		&&\hspace{-3cm}=\sum_{j=1}^{n}\frac{\sign{c}^{kj}}{((a k) j +b)^{\abs{c}}}+\sum_{j=1}^{n}\frac{\sign{c}^{kj}\sign{c}^{-1}}{((a k) j + (b-a))^{\abs{c}}}+\cdots+\sum_{j=1}^{n}\frac{\sign{c}^{kj}\sign{c}^{-(k-1)}}{((a k) j + (b-(k-1)a))^{\abs{c}}}\\
		&&\hspace{-3cm}=\sum_{i=0}^{k-1}\sign{c}^i\S{(k\cdot a,b-a \cdot i, \sign{c}^k\abs{c})}{n}.
\end{eqnarray*}
\end{proof}

\begin{thm}
For $a_i, m \in \N, b_i, k \in \N_0$, $c_i\in \Z^*$, $k\geq 2$  :
\begin{eqnarray*}
&&\S{(a_m,b_m,c_m),(a_{m-1},b_{m-1}, c_{m-1}),\ldots,(a_1,b_1,c_1)}{k \cdot n}=\\
&&\hspace{3cm}\sum_{i=0}^{m-1}\sum_{j=1}^{n} \frac{\S{(a_{m-1},b_{m-1},c_{m-1}),\ldots,(a_1,b_1,c_1)}{k \cdot j-i}\sign{c_m}^{k\cdot j -i}} {(a_m (k\cdot j-i)+b_1)^{\abs{c_m}}}.
\end{eqnarray*}
\label{CSmultint}
\end{thm}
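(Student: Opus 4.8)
The plan is to prove Theorem~\ref{CSmultint} by a single structural manipulation: peel off the outermost summation in the definition of the left-hand side and then split the resulting summation index into residue classes modulo $k$. Concretely, by Definition~\ref{CSdefsum},
\begin{equation*}
\S{(a_m,b_m,c_m),\ldots,(a_1,b_1,c_1)}{k\cdot n}
= \sum_{l=1}^{k\cdot n} \frac{\sign{c_m}^{l}}{(a_m l + b_m)^{\abs{c_m}}}\,
\S{(a_{m-1},b_{m-1},c_{m-1}),\ldots,(a_1,b_1,c_1)}{l}.
\end{equation*}
The nested factor here has depth $m-1$ and is evaluated at $l$, not at $k\cdot n$, so it stays untouched; no recursion on $m$ is needed, and in particular the argument is a pure re-indexing of a finite sum.

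Next I would record the elementary bijection
\begin{equation*}
\{0,1,\ldots,k-1\}\times\{1,2,\ldots,n\}\longrightarrow\{1,2,\ldots,k\cdot n\},\qquad (i,j)\longmapsto k\cdot j-i,
\end{equation*}
which is injective because $k(j-j')=i-i'$ with $\abs{i-i'}<k$ forces $j=j'$ and $i=i'$, and is therefore a bijection since the two finite sets have the same cardinality $k\cdot n$. Re-indexing the sum over $l$ via $l=k\cdot j-i$ transforms the displayed identity into
\begin{equation*}
\S{(a_m,b_m,c_m),\ldots,(a_1,b_1,c_1)}{k\cdot n}
=\sum_{i=0}^{k-1}\sum_{j=1}^{n}\frac{\sign{c_m}^{k\cdot j-i}}{(a_m(k\cdot j-i)+b_m)^{\abs{c_m}}}\,
\S{(a_{m-1},b_{m-1},c_{m-1}),\ldots,(a_1,b_1,c_1)}{k\cdot j-i},
\end{equation*}
which is exactly the asserted formula. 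Here one checks that $k\cdot j-i\geq k-(k-1)=1$ for all admissible pairs, so every inner cyclotomic harmonic sum is evaluated at a positive integer and no boundary convention is in play.

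This is precisely the depth-one computation in the proof of Lemma~\ref{CSmultint1}, carried one step further: in that lemma each residue-class sum is additionally collapsed into a single cyclotomic sum at argument $n$ through $a(k\cdot j-i)+b=(k\cdot a)j+(b-a\cdot i)$, whereas here the inner depth-$(m-1)$ sum at argument $k\cdot j-i$ blocks that further collapse, which is why the statement is left in the present form. The only point requiring genuine care is the claim that $\{k\cdot j-i:\,0\leq i\leq k-1\}$ is a complete set of residue representatives for the block $k(j-1)+1,\ldots,k j$, \ie that $(i,j)\mapsto k\cdot j-i$ is a bijection onto $\{1,\ldots,k\cdot n\}$; once that is settled, the remainder is bookkeeping and there are no convergence issues since every sum involved is finite.
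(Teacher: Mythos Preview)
Your proof is correct. The paper states Theorem~\ref{CSmultint} without proof, so there is nothing to compare against; your argument --- peel off the outermost summand via Definition~\ref{CSdefsum} and re-index by the bijection $(i,j)\mapsto kj-i$ from $\{0,\ldots,k-1\}\times\{1,\ldots,n\}$ onto $\{1,\ldots,kn\}$ --- is the natural one and mirrors the depth-one computation in Lemma~\ref{CSmultint1}.

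One remark: the statement as printed carries two evident typos --- the outer sum should range over $i=0,\ldots,k-1$ (not $m-1$), and the denominator should contain $b_m$ (not $b_1$). Your derivation yields precisely this corrected formula, so when you write ``which is exactly the asserted formula'' you are silently fixing both; it would be cleaner to flag this explicitly rather than let the discrepancy pass unremarked.
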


After applying Theorem \ref{CSmultint} we can synchronize the cyclotomic harmonic sums in the inner sum with upper summation limit $k \cdot j-i$ to upper the summation limit $k \cdot j.$ 
Now we can apply Theorem \ref{CSmultint} to these sums. Repeated application of this procedure leads eventually to cyclotomic harmonic sums with upper summation limit $n.$

\section{Definition and Structure of Cyclotomic Harmonic Polylogarithms}

In Definition \ref{HShlogdef} we defined harmonic polylogarithms and in Definition \ref{SShlogdef} we already extended harmonic polylogarithms to multiple polylogarithms by extending the considered 
alphabet. In this section we will extend the harmonic ploylogarithms from Definition \ref{HShlogdef} into an other direction, \ie we will end up with cyclotomic harmonic polylogarithms. We start by defining the following
auxiliary function: For $a \in \N$ and $b \in \N,$ $b < \varphi(a)$ (here $\varphi(b)$ denotes Euler's totient function \cite{TOTIENT1,TOTIENT2})  we define
\begin{eqnarray}
&&f_a^b:(0,1)\mapsto \R\nonumber\\
&&f_a^b(x)=\left\{ 
		\begin{array}{ll}
				\frac{1}{x}, &  \textnormal{if }a=b=0  \\
				\frac{x^b}{\Phi_a(x)}, & \textnormal{otherwise},
		\end{array} 
		\right.  \nonumber
\end{eqnarray}
where $\Phi_a(x)$ denotes the $a$th cyclotomic polynomial \cite{LANG}:
\begin{eqnarray}
\Phi_a(x) = \frac{x^a-1}{\ds \prod_{d|a, d < a} \Phi_d(x)}.
\end{eqnarray}
The first cyclotomic polynomials are given by
\begin{eqnarray*}
\Phi_1(x) &=& x - 1 \\
\Phi_2(x) &=& x + 1 \\
\Phi_3(x) &=& x^2 + x + 1 \\
\Phi_4(x) &=& x^2 + 1 \\
\Phi_5(x) &=& x^4 + x^3 + x^2 + x+ 1 \\
\Phi_6(x) &=& x^2 - x + 1 \\
\Phi_7(x) &=& x^6 + x^5 + x^4 + x^3 + x^2 + x+ 1 \\
\Phi_8(x) &=& x^4 + 1 \\
\Phi_9(x) &=& x^6 + x^3 + 1 \\
\Phi_{10}(x) &=& x^4 - x^3 + x^2 - x+ 1 \\
\Phi_{11}(x) &=& x^{10} + x^9 + x^8 + x^7 + x^6 + x^5 + x^4 + x^3 + x^2 + x+ 1 \\
\Phi_{12}(x) &=& x^4 - x^2 + 1,~~\text{etc.}
\end{eqnarray*}

Now we are ready to define cyclotomic polylogarithms:

\begin{definition}[Cyclotomic Harmonic Polylogarithms]
Let $m_i=(a_i,b_i) \in \N^2,$ $b_i<\varphi(a_i);$ we define for $x\in (0,1):$
\begin{eqnarray}
\H{}{x}&=&1,\nonumber\\
\H{m_1,\ldots,m_k}{x} &=&\left\{ 
		  	\begin{array}{ll}
						\frac{1}{k!}(\log{x})^k,&  \textnormal{if }m_i=(0,0) \textnormal{ for } 1\leq i \leq k\\
						  &\\
						\int_0^x{f_{a_1}^{b_1}(y) \H{m_2,\ldots,m_k}{y}dy},& \textnormal{otherwise}. 
					\end{array} \right.  \nonumber
\end{eqnarray}
The length $k$ of the vector $\ve m=(m_1,\cdots,m_k)$ is called the weight of the cyclotomic harmonic polylogarithm $\H{\ve m}x.$
\label{CShlogdef}
\end{definition}

\begin{example}
\begin{eqnarray*}
\H{(5,3)}x&=&\int_0^x\frac{y^3}{\Phi_5(y)}dy=\int_0^x\frac{y^3}{y^4 + y^3 + y^2 + y+ 1}dy\\
\vspace{1cm}\\
\H{(3,1),(2,0),(5,2)}x&=&\int_0^x\frac{y}{\Phi_3(y)}\int_0^y\frac{1}{\Phi_2(z)}\int_0^z\frac{w^2}{\Phi_5(w)}dwdzdy\\
\vspace{1cm}\\
\H{(0,0),(2,0),(1,0)}x&=&\int_0^x\frac{1}{y}\int_0^y\frac{1}{z+1}\int_0^z\frac{1}{w-1}dwdzdy=-\H{0,-1,1}x.
\end{eqnarray*}
\end{example}

A cyclotomic harmonic polylogarithm $\H{\ve m}x=\H{m_1,\ldots,m_w}x$ is an analytic functions for $x\in(0,1).$ For the limits $x\rightarrow 0$ and  $x\rightarrow 1$ we 
have (compare Section \ref{HShpldefsec}):
 \begin{itemize}
  \item It follows from the definition that if $m_v \neq (0,0)$ for all $v$ with $1\leq v\leq w$, then $\H{\ve m}0~=~0.$
  \item If $m_1\neq (1,0)$ or if $m_1=(1,0)$ and $m_v=(0,0)$ for all $v$ with $1<v\leq w,$ then $\H{\ve m}1$ is finite.
   \item If $m_1=(1,0)$ and $m_v\neq (0,0)$ for some $v$ with $1<v\leq w$, $\lim_{x\rightarrow 1^-} \H{\ve m}x$ behaves as a combination of powers of $\log(1-x).$
 \end{itemize}
 We define $\H{\ve m}0:=\lim_{x\rightarrow 0^+} \H{\ve m}x$ and $\H{\ve m}1:=\lim_{x\rightarrow 1^-} \H{\ve m}x$ if the limits exist.

\begin{remark}
For the derivatives we have for all $x\in (0,1)$ that $$ \frac{d}{d x} \H{(m_1,n_1),(m_2,n_2),\ldots,(m_k,n_k)}{x}=f_{m_1}^{n_1}(x)\H{(m_2,n_2),\ldots,(m_k,n_k)}{x}. $$ 
\end{remark}

\begin{remark}
For harmonic polylogarithms we considered the alphabet $\{-1,0,1\}.$ The letters $-1$ and $0$ correspond to $(2,0)$ and $(0,0)$ respectively. The letter $1$ corresponds to $(1,0).$ However $1$ indicates 
an iteration of $\frac{1}{1-x}$ while $(1,0)$ indicates an iteration of $\frac{1}{x-1};$ hence there is a change of sign. From now on we will sometimes mix both notations in the obvious way:
\begin{example}
\begin{eqnarray*}
 \H{(0,0),-1,(1,0),(3,1)}x&=&-\H{0,-1,1,(3,1)}x=\H{(0,0),(2,0),(1,0),(3,1)}x.
\end{eqnarray*}
\end{example}
\end{remark}

\begin{remark}
Again the product of two cyclotomic harmonic polylogarithms of the same argument can be expressed using the formula (compare (\ref{HShpro}))
\begin{equation}
\label{CShpro}
\H{\ve p}x\H{\ve q}x=\sum_{\ve r= \ve p \shuffle \ve q}\H{\ve r}x
\end{equation}
in which $\ve p \shuffle \ve q$ represent all merges of $\ve p$ and $\ve q$ in which the relative orders of the elements of $\ve p$ and $\ve q$ are preserved. Hence 
cyclotomic harmonic polylogarithms form a shuffle algebra.
\end{remark}

The number of basis elements spanning the shuffle algebra are given by
\begin{eqnarray}
N^{\rm basic}({w}) = \frac{1}{w} \sum_{d|{w}} \mu\left(\frac{w}{d}\right) M^d, {w} \geq 1
\end{eqnarray}
basis elements according to the 1st Witt formula \cite{Witt1937}. Here $\mu$ denotes the M\"obius function
\cite{MOEBIUS1,MOEBIUS2}. The number of basic cyclotomic harmonic polylogarithms in dependence of $w$ and $M$
is given in Table~\ref{CShreltab}. It applies to any alphabet containing $M$ letters.

\begin{table}[ht]\centering
\begin{tabular}{|r|r|r|r|r|r|r|r|}
\hline
\multicolumn{1}{|c}{          } &
\multicolumn{7}{|c|}{\sf Number of letters } \\
\multicolumn{1}{|c}{\sf weight} &
\multicolumn{1}{|c}{2} &
\multicolumn{1}{|c}{3} &
\multicolumn{1}{|c}{4} &
\multicolumn{1}{|c}{5} &
\multicolumn{1}{|c}{6} &
\multicolumn{1}{|c}{7} &
\multicolumn{1}{|c|}{8} \\
\hline
      1 &   2 &   3 &    4 &     5 &      6 &      7 &       8  \\
      2 &   1 &   3 &    6 &    10 &     15 &     21 &      28  \\
      3 &   2 &   8 &   20 &    40 &     70 &    112 &     168  \\
      4 &   3 &  18 &   60 &   150 &    315 &    588 &    1008  \\
      5 &   6 &  48 &  204 &   624 &   1554 &   3360 &    6552  \\
      6 &   9 & 116 &  670 &  2580 &   7735 &  19544 &   43596  \\
      7 &  18 & 312 & 2340 & 11160 &  39990 & 117648 &  299592  \\
      8 &  30 & 810 & 8160 & 48750 & 209790 & 729300 & 2096640  \\
\hline
\end{tabular}
\caption{\label{CShreltab}Number of basic cyclotomic harmonic polylogarithms in dependence of the
           number of letters and weight.}
\end{table}

\begin{example}
 We consider the alphabet $\{(0,0),(2,0),(4,1),(5,3)\}$ at weight two. We obtain the 6 basis cyclotomic harmonic polylogarithms
$$ \textnormal{H}_{(2,0),(0,0)}(x),\textnormal{H}_{(4,1),(0,0)}(x),\textnormal{H}_{(4,1),(2,0)}(x),\textnormal{H}_{(5,3),(0,0)}(x),
  \textnormal{H}_{(5,3),(2,0)}(x),\textnormal{H}_{(5,3),(4,1)}(x),$$
together with the relations
\begin{eqnarray*}
\textnormal{H}_{(0,0),(0,0)}(x)&=& \frac{1}{2} \textnormal{H}_{(0,0)}(x){}^2,\\
\textnormal{H}_{(0,0),(2,0)}(x)&=& \textnormal{H}_{(0,0)}(x) \textnormal{H}_{(2,0)}(x)-\textnormal{H}_{(2,0),(0,0)}(x),\\
\textnormal{H}_{(0,0),(4,1)}(x)&=& \textnormal{H}_{(0,0)}(x) \textnormal{H}_{(4,1)}(x)-\textnormal{H}_{(4,1),(0,0)}(x),\\
\textnormal{H}_{(0,0),(5,3)}(x)&=& \textnormal{H}_{(0,0)}(x) \textnormal{H}_{(5,3)}(x)-\textnormal{H}_{(5,3),(0,0)}(x),\\
\textnormal{H}_{(2,0),(2,0)}(x)&=& \frac{1}{2} \textnormal{H}_{(2,0)}(x){}^2,\\
\textnormal{H}_{(2,0),(4,1)}(x)&=& \textnormal{H}_{(2,0)}(x) \textnormal{H}_{(4,1)}(x)-\textnormal{H}_{(4,1),(2,0)}(x),\\
\textnormal{H}_{(2,0),(5,3)}(x)&=& \textnormal{H}_{(2,0)}(x) \textnormal{H}_{(5,3)}(x)-\textnormal{H}_{(5,3),(2,0)}(x),\\
\textnormal{H}_{(4,1),(4,1)}(x)&=& \frac{1}{2} \textnormal{H}_{(4,1)}(x){}^2,\\
\textnormal{H}_{(4,1),(5,3)}(x)&=& \textnormal{H}_{(4,1)}(x) \textnormal{H}_{(5,3)}(x)-\textnormal{H}_{(5,3),(4,1)}(x),\\
\textnormal{H}_{(5,3),(5,3)}(x)&=& \frac{1}{2} \textnormal{H}_{(5,3)}(x){}^2.
\end{eqnarray*}
\end{example}

\section{Identities between Cyclotomic Harmonic Polylogarithms of Related Arguments}
\label{CSRelatedArguments}
\subsection{\texorpdfstring{$\frac{1}{x}\rightarrow x$}{1/x->x}}
\label{CS1dxx}
First we consider just cyclotomic harmonic polylogarithms with indices not equal to $(1,0).$
Proceeding recursively on the weight $w$ of the cyclotomic harmonic polylogarithm we have for $0<x<1:$
\begin{eqnarray*}
\H{(0,0)}{\frac{1}{x}}&=&-\H{(0,0)}{x},
\end{eqnarray*}
 and in addition, for $a\in \N,b\in \N_0, a>1,b<\varphi{(a)}$ we can use
\begin{eqnarray}
\int_0^{\frac{1}{x}}\frac{y^b}{\Phi_a(y)}dy=\H{(a,b)}1-\int_{\frac{1}{x}}^1\frac{y^b}{\Phi_a(y)}dy=\H{(a,b)}1+\int_{x}^1\frac{\frac{1}{y^b}}{y^2 \Phi_a(\frac{1}{y})}dy.
\end{eqnarray}
At this point we can perform a partial fraction decomposition on the integrand and finally rewrite the resulting integrals in terms of cyclotomic harmonic polylogarithms at $x$ and $1$.
Now let us look at higher weights $w>1.$ We consider $\H{m_1,m_2,\ldots,m_w}{\frac{1}{x}}$ and suppose that we can already apply the transformation for cyclotomic harmonic polylogarithms 
of weight $<w.$ For $(a,b)\neq (1,0)$ we get:
\begin{eqnarray*}
\H{(0,0),m_2,\ldots,m_w}{\frac{1}{x}}&=&\H{(0,0),m_2,\ldots,m_w}1+\int_x^1\frac{1}{t^2(1/t)}\H{m_2,\ldots,m_w}{\frac{1}{t}}dt\\
				  &=&\H{(0,0),m_2,\ldots,m_w}1+\int_x^1\frac{1}{t}\H{m_2,\ldots,m_w}{\frac{1}{t}}dt\\
\H{(a,b),m_2,\ldots,m_w}{\frac{1}{x}}&=&\H{(a,b),m_2,\ldots,m_w}1+\int_x^1\frac{\frac{1}{y^b}}{y^2\Phi_a(\frac{1}{y})}\H{m_2,\ldots,m_w}{\frac{1}{y}}dy.
\end{eqnarray*}
At this point we again have to perform a partial fraction decomposition on the integrand and since we know the transformation for weights $<w$ we can apply it to $\H{m_2,\ldots,m_w}{\frac{1}{t}}$ 
and finally we obtain the required weight $w$ identity by using the definition of the cyclotomic harmonic polylogarithms.\\
The index $(1,0)$ in the index set leads to a branch point at $1$ and a branch cut discontinuity in the complex plane for $x\in(1,\infty).$ This corresponds to the branch point at $x=1$ 
and the branch cut discontinuity in the complex plane for $x\in(1,\infty)$ of $\log(1-x)=\H{(1,0)}x.$  However the analytic properties of the logarithm are well known and we
can set for $0<x<1$ for instance
\begin{eqnarray}
\H{(1,0)}{\frac{1}{x}}&=&\H{(1,0)}{x}-\H{(0,0)}{x}+i\pi \label{CStrafo1dx11}
\end{eqnarray}
by approaching the argument $\frac{1}{x}$ form the lower half complex plane.
The strategy now is as follows: if a cyclotomic harmonic polylogarithm has leading $(1,0),$ we remove them and end up with cyclotomic harmonic polylogarithms without leading $(1,0)$ and 
powers of $\H{(1,0)}{\frac{1}{x}}.$ We know how to deal 
with the cyclotomic harmonic polylogarithms without leading $(1,0)$ due to the previous part of this section and for the powers of $\H{(1,0)}{\frac{1}{x}}$ we can 
use~(\ref{CStrafo1dx11}).

\begin{example}
\begin{eqnarray*}
\textnormal{H}_{(3,1),(5,2)}(x)&=& \left(\textnormal{H}_{(5,0)}(1)+\textnormal{H}_{(5,2)}(1)\right)\left(-\textnormal{H}_{(0,0)}\left(\frac{1}{x}\right)\right)\\
&&-\left(\textnormal{H}_{(5,0)}(1)+\textnormal{H}_{(5,2)}(1)\right)\left(\textnormal{H}_{(3,0)}(1)-\textnormal{H}_{(3,0)}\left(\frac{1}{x}\right)\right)\\
&&-\left(\textnormal{H}_{(5,0)}(1)+\textnormal{H}_{(5,2)}(1)\right)\left(\textnormal{H}_{(3,1)}(1)-\textnormal{H}_{(3,1)}\left(\frac{1}{x}\right)\right)\\
&&+\textnormal{H}_{(0,0),(5,0)}\left(\frac{1}{x}\right)-\textnormal{H}_{(3,0),(5,0)}\left(\frac{1}{x}\right)-\textnormal{H}_{(3,1),(5,0)}\left(\frac{1}{x}\right)\\
&&-\textnormal{H}_{(0,0),(5,0)}(1)+\textnormal{H}_{(3,0),(5,0)}(1)+\textnormal{H}_{(3,1),(5,0)}(1)+\textnormal{H}_{(3,1),(5,2)}(1).
\end{eqnarray*}
\end{example}

\subsection{\texorpdfstring{$\frac{1-x}{1+x} \rightarrow x$}{(1-x)/(1+x)->x}}
\label{CS1x1x}
We restrict the index set now to letters out of $\{1,0,-1,(4,0),(4,1)\},$
and proceed recursively on the weight $w$ of the cyclotomic harmonic polylogarithm. For the base cases we have
\begin{eqnarray}
\H{-1}{\frac{1-x}{1+x}}&=&\H{-1}{1}-\H{-1}x\\
\H{0}{\frac{1-x}{1+x}}&=&-\H{1}{x}+\H{-1}x\\
\H{1}{\frac{1-x}{1+x}}&=&-\H{-1}{1}-\H{0}{x}+\H{-1}{x}\label{CStrafo1x1x}\\
\H{(4,0)}{\frac{1-x}{1+x}}&=&\H{(4,0)}{1}+\H{(4,0)}{x}\\
\H{(4,1)}{\frac{1-x}{1+x}}&=&-\H{(4,1)}{1}-\H{-1}{x}+\H{(4,1)}{x}.
\end{eqnarray}
Now let us look at higher weights $w>1.$ We consider $\H{m_1,m_2,\ldots,m_w}{\frac{1-x}{1+x}}$ with $m_i\in\{1,0,-1,(4,0),(4,1)\}$ and suppose that we can already apply the transformation for harmonic polylogarithms
 of weight $<w.$ If $m_1=1,$ we can remove leading ones and end up with harmonic polylogarithms without leading ones and powers of $\H{1}{\frac{1-x}{1+x}}.$ For the powers of $\H{1}{\frac{1-x}{1+x}}$ we
 can use (\ref{CStrafo1x1x}); therefore, only the cases in which the first index $m_1\neq 1$ are to be considered. We get (compare \cite{Remiddi2000} and Sections \ref{HS1x1x} and \ref{SS1x1x}):
\begin{eqnarray*}
\H{-1,m_2,\ldots,m_w}{\frac{1-x}{1+x}}&=&\H{-1,m_2,\ldots,m_w}1-\int_0^x\frac{1}{1+t}\H{m_2,\ldots,m_w}{\frac{1-t}{1+t}}dt\\
\H{0,m_2,\ldots,m_w}{\frac{1-x}{1+x}}&=&\H{0,m_2,\ldots,m_w}1-\int_0^x\frac{1}{1-t}\H{m_2,\ldots,m_w}{\frac{1-t}{1+t}}dt\\
		&&-\int_0^x\frac{1}{1-t}\H{m_2,\ldots,m_w}{\frac{1+t}{1+t}}dt\\
\H{(4,0),m_2,\ldots,m_w}{\frac{1-x}{1+x}}&=&\H{(4,0),m_2,\ldots,m_w}1-\int_0^x\frac{1}{1+t^2}\H{m_2,\ldots,m_w}{\frac{1-t}{1+t}}dt\\
\H{(4,1),m_2,\ldots,m_w}{\frac{1-x}{1+x}}&=&\H{(4,1),m_2,\ldots,m_w}1-\int_0^x\frac{1}{1+t}\H{m_2,\ldots,m_w}{\frac{1-t}{1+t}}dt\\
					& &+\int_0^x\frac{t}{1+t^2}\H{m_2,\ldots,m_w}{\frac{1-t}{1+t}}dt.
\end{eqnarray*}
Since we know the transform for weights $<w$ we can apply it to $\H{m_2,\ldots,m_w}{\frac{1+t}{1+t}}$ and finally we obtain the required weight $w$ identity by using the definition of the cyclotomic 
harmonic polylogarithms.

\section{Power Series Expansion of Cyclotomic Harmonic Polylogarithms}
In order to be able to compute the power series expansion of cyclotomic harmonic polylogarithms, we state the following lemma.
\begin{lemma}
 Let $\Phi_n(x)$ be the $n-$th cyclotomic polynomial and let $\sum_{i\geq 0}{f_i x^i}$ be the power series expansion of $f(x):=\frac{1}{\Phi_n(x)}$ about zero.
Then the sequence $(f_i)_{i\geq 0}$ is periodic with period $n$. For $0\leq i<n$ the $i$-th coefficient $f_i$ equals the coefficient of $x^i$ in $-\prod_{d|n, d < n} \Phi_d(x).$
\label{CScyclotopow}
\end{lemma}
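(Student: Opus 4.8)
The plan is to exploit the defining factorization of $x^n-1$ into cyclotomic polynomials. Recall that $x^n-1=\prod_{d\mid n}\Phi_d(x)$, so that, writing $P(x):=\prod_{d\mid n,\,d<n}\Phi_d(x)$, one has $\Phi_n(x)\,P(x)=x^n-1$. Since $\deg\Phi_d=\varphi(d)$ and $\sum_{d\mid n}\varphi(d)=n$, the polynomial $P$ has degree $\deg P=n-\varphi(n)$, which is strictly less than $n$ because $\varphi(n)\ge 1$; in the edge case $n=1$ the product defining $P$ is empty, $P\equiv 1$, and the argument below goes through verbatim.

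Next I would rewrite $f$ using this factorization and expand the geometric series about $x=0$:
$$
f(x)=\frac{1}{\Phi_n(x)}=\frac{P(x)}{x^n-1}=-P(x)\sum_{k\ge 0}x^{nk},
$$
which is valid as a formal power series (equivalently, for $|x|<1$). Writing $-P(x)=\sum_{i=0}^{n-1}q_i\,x^i$ (with $q_i=0$ for $i>n-\varphi(n)$), the key observation is that the blocks $(-P(x))\,x^{nk}$, $k\ge 0$, occupy pairwise disjoint ranges of exponents, precisely because $\deg(-P)<n$. Hence no superposition or cancellation occurs between distinct blocks, and collecting the coefficient of $x^j$ for $j=nk+i$ with $0\le i<n$ yields $f_j=q_i=q_{j\bmod n}$.

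This immediately gives both assertions: the sequence $(f_i)_{i\ge 0}$ is periodic with period $n$ since $f_j$ depends only on $j\bmod n$, and for $0\le i<n$ the coefficient $f_i$ equals $q_i$, i.e.\ the coefficient of $x^i$ in $-P(x)=-\prod_{d\mid n,\,d<n}\Phi_d(x)$, as claimed. The only point requiring any care is the degree bound $\deg P<n$, which is exactly what legitimises the block decomposition of the series; everything else is a routine manipulation.
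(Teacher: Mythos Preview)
Your proof is correct and rests on the same two ingredients as the paper's: the factorization $x^n-1=\Phi_n(x)\prod_{d\mid n,\,d<n}\Phi_d(x)$ and the degree bound $\deg P=n-\varphi(n)<n$. The paper's argument is the mirror image of yours: instead of expanding $1/(x^n-1)$ as a geometric series and reading off the block structure, it multiplies the unknown series $\sum f_i x^i$ by $x^n-1$, equates with $P(x)$, and deduces $f_{n+i}=f_i$ from the vanishing of all coefficients of index $\geq n$ on the $P$ side. Your route is slightly more direct in that it produces the explicit formula $f_j=q_{j\bmod n}$ in one step, whereas the paper first obtains periodicity and then reads the initial values; both are equally elementary.
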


\begin{proof}
We have that $$f(x)=\frac{1}{\Phi_n(x)} = \frac{\ds \prod_{d|n, d < n} \Phi_d(x)}{x^n-1},$$ hence 
\begin{eqnarray*}
\prod_{d|n, d < n} \Phi_d(x)&=&(x^n-1) f(x)=(x^n-1) \sum_{i\geq 0}{f_i x^i}=\sum_{i\geq 0}{f_i x^{n+i}}-\sum_{i\geq 0}{f_i x^i}\\
      &=&\sum_{i\geq 0}{f_i x^{n+i}}-\sum_{i=0}^{n-1}{f_i x^i}-\sum_{i\geq 0}{f_{n+i} x^{n+i}}\\
      &=&\sum_{i\geq 0}{(f_i-f_{n+i}) x^{n+i}}-\sum_{i=0}^{n-1}{f_i x^i}.
\end{eqnarray*}
Since the degree of $\prod_{d|n, d < n} \Phi_d(x)$ is $n-\varphi(n)<n,$ we get that $f_{n+i}-f_n=0$ and hence the sequence $(f_i)_{i\geq 0}$ is periodic with period $n.$
\end{proof}

In general, the cyclotomic harmonic polylogarithms $\H{\ve m}x$ do not have a regular Taylor series expansion. This is due to the
effect that trailing zeroes, $\ie$ the letter $(0,0)$ in the index set may cause powers of $\log(x).$ Hence the proper expansion
is one in terms of both $x$ and $\log(x)$. We first look at the cyclotomic harmonic polylogarithms of depth one. Let $a, b \in \N;$ and $0<x<1;$ due
 to the previous lemma we can write $${\frac{1}{\Phi_a(x)}}=\sum_{q=0}^{a-1}f_q\sum_{i=0}^{\infty}x^{a i +q}.$$
Hence we get
\begin{eqnarray*}
\H{(a,b)}x&=&\int_0^x{\frac{y^b}{\Phi_a(y)}}dy=\sum_{q=0}^{a-1}f_q\int_0^x\sum_{i=0}^{\infty}y^{a i+q+b}\\
	    &=&\sum_{q=0}^{a-1}f_q\sum_{i=0}^{\infty}\frac{ x^{a i+q+b+1}}{a i+q+b+1}\\
	    &=&\sum_{q=0}^{a-1}f_q\sum_{i=1}^{\infty}\frac{x^{a i+q-a+b+1}}{a i+q-a+b+1}.
\end{eqnarray*}
\begin{remark}
If $z=k a$ for some $k \in \N$ we can write this as well in the form:
\begin{eqnarray*}
\H{(a,b)}x &=&\sum_{j=1}^k\sum_{q=0}^{a-1}f_q\sum_{i=1}^{\infty}\frac{x^{z i+a(j-1)+q-z+b+1}}{z i+a(j-1)+q-z+b+1}.
\end{eqnarray*}
\label{CScyclotopowremark}
\end{remark}
\begin{lemma}
Let $\H{\ve m}x$ be a cyclotomic harmonic polylogarithm with depth $d.$ We assume that its power series expansion is of the form
$$\H{\ve m}x=\sum_{j=1}^w\sum_{i=1}^{\infty}\frac{x^{z i+c_j}}{(z i+c_j)^{g_j}}\S{\ve n_j}i$$
for $x \in (0,1)$, $w,g_j\in \N$ and $c_j\in \Z.$
We can get the depth $d+1$ expansion of $\H{(a,b),\ve m}x$ provided that $a=0$ or $a k=z$ for some $k \in \N$ using
\begin{eqnarray*}
\H{(0,0),\ve m}x&=&\sum_{i=1}^{\infty}\frac{x^{z i+c_j}}{(z i+c_j)^{g_j+1}}\S{\ve n_j}i\\
\H{(a,b),\ve m}x&=&\sum_{j=1}^w{\sum_{q=0}^{z-1}f_q\sum_{i=1}^{\infty}\frac{ x^{z i+q+b+c_j+1}}{(z i+q+b+c_j+1)}\S{(z,c_j,g_j),\ve n_j}i}
\end{eqnarray*}
where $$\frac{1}{\Phi_a(x)}=\sum_{q=0}^{z-1}f_q\sum_{p=0}^{\infty}x^{z p +q}.$$
\end{lemma}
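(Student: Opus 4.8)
The plan is to prove both formulas by direct computation, mimicking the depth-one calculation carried out just above the statement and the analogous inductive step of Theorem~\ref{HSpowexp}. Throughout I assume the hypothesis that $a=0$ or $ak=z$ for some $k\in\N$, so that the exponents appearing in the integrand and in the assumed expansion of $\H{\ve m}x$ are compatible; this is exactly what makes the term-by-term integration produce a cyclotomic $S$-sum rather than something outside the class.

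For the first identity, $\H{(0,0),\ve m}x=\int_0^x \frac{1}{y}\H{\ve m}y\,dy$ by Definition~\ref{CShlogdef}. I would substitute the assumed expansion $\H{\ve m}y=\sum_{j=1}^w\sum_{i=1}^\infty \frac{y^{zi+c_j}}{(zi+c_j)^{g_j}}\S{\ve n_j}i$, interchange sum and integral (justified on $[0,x]\subset[0,1)$ by absolute convergence, exactly as in the proofs of Lemma~\ref{HSintrep1} and Theorem~\ref{HSpowexp}), and integrate $\int_0^x y^{zi+c_j-1}\,dy = \frac{x^{zi+c_j}}{zi+c_j}$. This raises the power of $(zi+c_j)$ in the denominator from $g_j$ to $g_j+1$ and leaves $\S{\ve n_j}i$ untouched, giving the claimed formula. (Here one should note that $c_j$ may be nonpositive in principle, but the hypothesis and the shape of the expansion guarantee $zi+c_j\geq 1$ for $i\geq 1$, so no pole is crossed.)

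For the second identity I would start from $\H{(a,b),\ve m}x=\int_0^x \frac{y^b}{\Phi_a(y)}\H{\ve m}y\,dy$ and use Lemma~\ref{CScyclotopow} to write $\frac{1}{\Phi_a(y)}=\sum_{q=0}^{z-1}f_q\sum_{p=0}^\infty y^{zp+q}$ — note that since $ak=z$, periodicity with period $a$ gives periodicity with period $z$ as well, so I may legitimately group the geometric-type series into blocks of length $z$ (this is precisely the content of Remark~\ref{CScyclotopowremark}). Then
\begin{align*}
\H{(a,b),\ve m}x&=\int_0^x \Bigl(\sum_{q=0}^{z-1}f_q\sum_{p=0}^\infty y^{zp+q+b}\Bigr)\Bigl(\sum_{j=1}^w\sum_{i=1}^\infty \frac{y^{zi+c_j}}{(zi+c_j)^{g_j}}\S{\ve n_j}i\Bigr)dy.
\end{align*}
After interchanging the (absolutely convergent) sums and integrating each monomial $y^{zp+q+b+zi+c_j}$, one gets a factor $\frac{x^{z(p+i)+q+b+c_j+1}}{z(p+i)+q+b+c_j+1}$. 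The standard Cauchy-product reindexing $m=p+i$, collecting $\sum_{i=1}^m \frac{\S{\ve n_j}i}{(zi+c_j)^{g_j}}$, recognizes exactly the defining inner sum of $\S{(z,c_j,g_j),\ve n_j}m$ (in the notation of Definition~\ref{CSdefsum}, the new leading letter being $(z,c_j,g_j)$ because the denominators are $(zi+c_j)^{g_j}$ and the $S$-sum picks up one extra summation level). Relabelling $m\to i$ yields the stated formula. This reindexing/identification step is the main obstacle: one must be careful that the outer index starts at $p=0$ while the expansion index starts at $i=1$, so the combined index $m=p+i$ runs over $m\geq 1$, and one must check that the coefficient of $x^{zm+q+b+c_j+1}$ is genuinely $\S{(z,c_j,g_j),\ve n_j}m$ and not an off-by-one shift — the same subtlety that appears in the $\H{1,\ve m}x$ case of Theorem~\ref{HSpowexp} and in the power-series lemma of Chapter~\ref{SSchapter}. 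Once the bookkeeping is pinned down, the rest is routine term-by-term integration.
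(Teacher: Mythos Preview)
Your proposal is correct and follows essentially the same route as the paper's proof: both cases are handled by substituting the assumed expansion, interchanging sum and integral, and in the $(a,b)$ case using Lemma~\ref{CScyclotopow} (with period $z$ via Remark~\ref{CScyclotopowremark}) followed by a Cauchy-product reindexing that identifies the inner sum as $\S{(z,c_j,g_j),\ve n_j}{i}$. The paper's version performs the index shift $i\mapsto i+1$ before the Cauchy product rather than after, but this is the same bookkeeping you flagged.
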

\begin{proof}
We start with $\H{(0,0),\ve m}x$. We have to consider $w$ integrals of the form
\begin{eqnarray*}
\int_0^x{\frac{1}{y}\sum_{i=1}^{\infty}\frac{y^{z i+c}}{(z i+c)^g}\S{\ve n}i}dy&=&\sum_{i=1}^{\infty}\frac{1}{(z i+c)^g}\S{\ve n}i\int_0^x{y^{z i+c-1}}dy\\
	&=&\sum_{i=1}^{\infty}\frac{1}{(z i+c)^g}\S{\ve n}i\frac{x^{z i+c}}{z i+c}dy\\
	&=&\sum_{i=1}^{\infty}\frac{x^{z i+c}}{(z i+c)^{g+1}}\S{\ve n}i.
\end{eqnarray*}
For $\H{(a,b),\ve m}x$ with $a k=z$ we have to consider $w$ integrals of the type
$$
\int_0^x{\frac{y^b}{\Phi_a(y)}\sum_{i=1}^{\infty}\frac{y^{z i+c}}{(z i+c)^g}\S{\ve n}i}dy.
$$
We use Lemma \ref{CScyclotopow} and write $$\frac{1}{\Phi_a(x)}=\sum_{q=0}^{z-1}f_q\sum_{p=0}^{\infty}x^{z p +q}.$$
Now we get
\begin{eqnarray*}
 &&\int_0^x{\frac{y^b}{\Phi_a(y)}\sum_{i=1}^{\infty}\frac{y^{z i+c}}{(z i+c)^g}\S{\ve n}i}dy\\
 &&\hspace{2cm}=\int_0^xy^b\sum_{q=0}^{z-1}f_q\sum_{p=0}^{\infty}y^{z p +q}\sum_{i=1}^{\infty}\frac{y^{z i+c}}{(z i+c)^g}\S{\ve n}idy\\
 &&\hspace{2cm}=\sum_{q=0}^{z-1}f_q\int_0^x\sum_{p=0}^{\infty}y^{z p +q+b}\sum_{i=0}^{\infty}\frac{y^{z i+z+c}}{(z i+z+c)^g}\S{\ve n}{i+1}dy\\
 &&\hspace{2cm}=\sum_{q=0}^{z-1}f_q\int_0^x\sum_{i=0}^{\infty}\sum_{p=0}^{i}y^{z(i-p) +q+b}\frac{y^{z p+z+c}}{(z p+z+c)^g}\S{\ve n}{p+1}dy\\
 &&\hspace{2cm}=\sum_{q=0}^{z-1}f_q\int_0^x\sum_{i=0}^{\infty}y^{z i +q+b+c+z}\S{(z,c,g),\ve n}{i+1}dy\\
 &&\hspace{2cm}=\sum_{q=0}^{z-1}f_q\sum_{i=0}^{\infty}\frac{x^{z i +q+b+c+z+1}}{z i +q+b+c+z+1}\S{(z,c,g),\ve n}{i+1}dy\\
 &&\hspace{2cm}=\sum_{q=0}^{z-1}f_q\sum_{i=1}^{\infty}\frac{x^{z i +q+b+c+1}}{z i +q+b+c+1}\S{(z,c,g),\ve n}{i}dy.\\
\end{eqnarray*}
\end{proof}

\begin{remark}
 Using the above strategy we can get the expansion of a cyclotomic harmonic polylogarithm $\H{(a_d,b_d),(a_{d-1},b_{d-1}),\ldots,(a_1,b_1)}x$
 up to arbitrary weight $d$. If we set z equal to the least common multiple of the $a_i$ which are not zero (this is possible due to Lemma 
\ref{CScyclotopowremark}) we can guarantee that the induction step works. 
\end{remark}

\begin{example}
\begin{eqnarray*}
\textnormal{H}_{(3,1),(1,0)}(x)&=& \sum _{i=1}^\infty \frac{x^{3 i+1} \textnormal{S}_{(3,1,1)}(i)}{3 i+1}-\sum _{i=1}^\infty \frac{x^{3 i+1} \textnormal{S}_{(3,2,1)}(i)}{3 i+1}
		+\sum _{i=1}^\infty \frac{x^{3 i+2}\textnormal{S}_{(3,2,1)}(i)}{3 i+2}\\
      &&-\frac{1}{3} \sum _{i=1}^\infty \frac{x^{3 i} \textnormal{S}_{(3,1,1)}(i)}{i}-\frac{1}{3} \sum _{i=1}^\infty \frac{\textnormal{S}_1(i) x^{3 i+2}}{3i+2}
		+\frac{1}{3} \sum _{i=1}^\infty \frac{\textnormal{S}_1(i) x^{3 i+3}}{3 i+3}\\
      &&+\frac{3}{2} \sum _{i=1}^\infty \frac{x^{3 i+1}}{3 i+1}-\sum _{i=1}^\infty \frac{x^{3i+1}}{(3 i+1)^2}-\sum _{i=1}^\infty \frac{x^{3 i+1}}{3 i+2}
		+\frac{1}{2} \sum _{i=1}^\infty \frac{x^{3 i+2}}{3 i+2}\\
      &&-\sum _{i=1}^\infty \frac{x^{3 i+2}}{(3i+2)^2}-\sum _{i=1}^\infty \frac{x^{3 i}}{3 i+1}.
\end{eqnarray*}
\end{example}

\subsection{Asymptotic Behavior of Extended Harmonic Polylogarithms}
\label{CSasybeh}
Combining Section \ref{CS1dxx} together with the power series expansion of cyclotomic harmonic polylogarithms we can determine the asymptotic behavior of cyclotomic harmonic polylogarithms. Let us look at the cyclotomic harmonic polylogarithm 
$\H{\ve m}x$ and define $y:=\frac{1}{x}.$ Using Section \ref{CS1dxx} on $\H{\ve m}{\frac{1}{y}}=\H{\ve m}x$ we can rewrite $\H{\ve m}x$ in terms of cyclotomic harmonic polylogarithms at argument $y$ together with some constants.
Now we can get the power series expansion of the harmonic polylogarithms at argument $y$ about 0 easily using the previous part of this Section. Since sending $x$ to infinity corresponds to sending $y$ to zero,
 we get the asymptotic behavior of $\H{\ve m}x.$
\begin{example}
\begin{eqnarray*}
\textnormal{H}_{(3,0),(3,1)}(x)&=&-\textnormal{H}_{(0,0)}\left(x\right) \left(\sum _{i=1}^\infty \frac{\left(\frac{1}{x}\right)^{3 i-2}}{3 i-2}
	-\sum _{i=1}^\infty\frac{\left(\frac{1}{x}\right)^{3 i-1}}{3 i-1}\right)\\
      &&-\textnormal{H}_{(3,0)}(1) \left(-\sum _{i=1}^\infty \frac{\left(\frac{1}{x}\right)^{3 i-2}}{3 i-2}+\sum_{i=1}^\infty \frac{\left(\frac{1}{x}\right)^{3 i-1}}{3 i-1}+\textnormal{H}_{(3,0)}(1)\right)\\
      &&+\frac{1}{3} \sum _{i=1}^\infty \frac{\left(\frac{1}{x}\right)^{3 i} S_{(3,1,1)}(i)}{i}-\sum _{i=1}^\infty \frac{\left(\frac{1}{x}\right)^{3 i-1} S_{(3,1,1)}(i)}{3 i-1}\\
      &&+\frac{1}{3} \sum _{i=1}^\infty \frac{S_1(i)\left(\frac{1}{x}\right)^{3 i+1}}{3 i+1}-\frac{1}{3} \sum _{i=1}^\infty \frac{S_1(i) \left(\frac{1}{x}\right)^{3 i+2}}{3 i+2}
	-\sum _{i=1}^\infty\frac{\left(\frac{1}{x}\right)^{3 i-2}}{(3 i-2)^2}\\
      &&-\frac{1}{2} \sum _{i=1}^\infty \frac{\left(\frac{1}{x}\right)^{3 i-1}}{3 i-1}+\sum _{i=1}^\infty\frac{\left(\frac{1}{x}\right)^{3 i-1}}{(3 i-1)^2}
	+\sum _{i=1}^\infty \frac{\left(\frac{1}{x}\right)^{3 i}}{3 i+1}-\frac{1}{2} \sum _{i=1}^\infty\frac{\left(\frac{1}{x}\right)^{3 i-1}}{3 i+1}\\
      &&-\textnormal{H}_{(3,0),(0,0)}(1)+\textnormal{H}_{(3,0),(3,0)}(1)+2 \textnormal{H}_{(3,0),(3,1)}(1).
\end{eqnarray*}
\end{example}

\subsection{Values of Cyclotomic Harmonic Polylogarithms at 1 Expressed by Cyclotomic Harmonic Sums at Infinity}

As worked out in the previous part of this section, the expansion of the cyclotomic harmonic polylogarithms without trailing zeros is a combination of sums of the form:
$$
\sum_{i=1}^\infty x^{ai+b} \frac{\S{\ve n}i}{(ai+b)^c}, \ a,b,c \in \N.
$$
For $x\rightarrow1$ these sums turn into cyclotomic harmonic sums at infinity if $c\neq1$:
$$
\sum_{i=1}^\infty x^{ai+b} \frac{ \S{\ve n}i}{(ai+b)^c}\rightarrow \S{(a,b,c),\ve n}{\infty}.
$$
Hence the values of cyclotomic harmonic polylogarithms at one are related to the values of the cyclotomic harmonic sums at infinity.
\begin{example}
$$
\H{(0,0),(3,2)}{1}=\frac{1}{9}\S{(1,0,2)}{\infty}-\S{(3,1,2)}{\infty}.
$$
\end{example}

If $c=1,$ these sums turn into
$$
\sum_{i=1}^\infty x^{ai+b} \frac{\S{\ve n}i}{(ai+b)}.
$$
Sending $x$ to one gives:
$$
\lim_{x\rightarrow1}\sum_{i=1}^\infty x^{ai+b} \frac{\S{\ve n}i}{(ai+b)}=\infty
$$
We see that these limits do not exist: this corresponds to the infiniteness of the cyclotomic harmonic sums with leading ones: $\lim_{k\rightarrow \infty}\S{(a,b,1),\ve n}{k}=\infty.$

\section{Integral Representation of Cyclotomic Harmonic Sums}
\label{CSintrep}
In this section we look at integral representations of cyclotomic harmonic sums. It will turn out that we can find representations in form of Mellin-type transforms of cyclotomic 
polylogarithms. The integral representation of cyclotomic harmonic sums of depth one can be derived easily: 
\begin{lemma}
Let $a,c\in\N,b\in \N_0,$ $d \in\{1,-1\}$ and $n\in\N;$ then
\begin{eqnarray*}
\SS{(a,b,1)}dn&=&\int_0^{1}{\frac{{x_1}^{a+b-1}\left(d^n{x_1}^{a n}-1\right)}{{x_1}^a-d}dx_1}\\
\SS{(a,b,2)}dn&=&\int_0^{1}{\frac{1}{x_2}\int_0^{x_2}{\frac{{x_1}^{a+b-1}\left( d^n {x_1}^{a n}-1\right)}{{x_1}^a-d}dx_1}dx_2}\\
\SS{(a,b,c)}dn&=&\int_0^1{\frac{1}{x_c}\int_0^{x_c}{\frac{1}{x_{c-1}} \cdots \int_0^{x_3}\frac{1}{x_2}{\int_0^{x_2}{\frac{{x_1}^{a+b-1}\left(d^n {x_1}^{a n}-1\right)}{{x_1}^a-d}dx_1}}\cdots}dx_c}.
\end{eqnarray*}
\label{CSintrep1}
\end{lemma}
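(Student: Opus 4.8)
The plan is to proceed by induction on the weight $c$, exactly in the spirit of the proofs of Lemma~\ref{HSintrep1} and Lemma~\ref{SSintrep1}. The computational heart of the argument is an elementary identity for the innermost integrand: since $d\in\{1,-1\}$ we have $d^2=1$, hence $d(x^a-d)=dx^a-1$, so that
\begin{eqnarray*}
\frac{x^{a+b-1}\left(d^nx^{an}-1\right)}{x^a-d}
 &=& \frac{x^{a+b-1}\bigl((dx^a)^n-1\bigr)}{x^a-d}
  = \frac{x^{a+b-1}(dx^a-1)\sum_{j=0}^{n-1}(dx^a)^j}{x^a-d}\\
 &=& d\,x^{a+b-1}\sum_{j=0}^{n-1}(dx^a)^j
  = \sum_{i=1}^n d^i x^{ai+b-1}.
\end{eqnarray*}
In particular the rational integrand is in fact a polynomial in $x$ (the apparent pole at $x=1$ in the case $d=1$ is removable), so every nested integral appearing in the statement is taken over an integrand that is continuous on $[0,1]$, and no convergence question ever arises.

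For the base case $c=1$ I would substitute the identity above and integrate term by term,
\[
\int_0^1\frac{x^{a+b-1}\left(d^nx^{an}-1\right)}{x^a-d}\,dx
 =\sum_{i=1}^n d^i\int_0^1 x^{ai+b-1}\,dx
 =\sum_{i=1}^n\frac{d^i}{ai+b}
 =\SS{(a,b,1)}dn,
\]
which is legitimate because $ai+b\ge a\ge 1>0$ for all $i\ge1$ (also covering the case $b=0$).

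For the inductive step I would prove the slightly stronger claim that, for every $t\in(0,1]$,
\[
\int_0^t\frac{1}{x_c}\int_0^{x_c}\frac{1}{x_{c-1}}\cdots\int_0^{x_2}\frac{x_1^{a+b-1}\left(d^nx_1^{an}-1\right)}{x_1^a-d}\,dx_1\cdots dx_c
 =\sum_{i=1}^n\frac{d^i\,t^{ai+b}}{(ai+b)^c};
\]
the lemma then follows by setting $t=1$. Assuming this for weight $c-1$, the left-hand side at weight $c$ equals $\int_0^t x_c^{-1}\bigl(\sum_{i=1}^n d^i x_c^{ai+b}/(ai+b)^{c-1}\bigr)dx_c=\sum_{i=1}^n\frac{d^i}{(ai+b)^{c-1}}\int_0^t x_c^{ai+b-1}dx_c=\sum_{i=1}^n d^i t^{ai+b}/(ai+b)^c$, closing the induction; here again the integrand $x_c^{-1}\sum_i d^i x_c^{ai+b}/(ai+b)^{c-1}$ is a genuine polynomial since $ai+b-1\ge0$. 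The only step needing any care is the base-case reduction of the rational function into a geometric sum, in particular keeping track of the sign in $d(x^a-d)=dx^a-1$ when $d=-1$ and observing that the outcome is a polynomial; after that, everything is routine.
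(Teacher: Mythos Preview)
Your proof is correct and follows precisely the route the paper takes for the analogous Lemmas~\ref{HSintrep1} and~\ref{SSintrep1}: expand the innermost rational function as a finite geometric sum and then induct on the weight. In fact the paper does not spell out a proof of this particular lemma (it merely says the depth-one integral representation ``can be derived easily''), so your argument fills in exactly the expected details; your explicit introduction of the variable upper bound $t$ and the check that $ai+b-1\ge0$ make the induction hypothesis and the polynomiality of the integrands slightly more transparent than in the paper's treatment of the harmonic-sum case.
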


Let us now look at the integral representation of cyclotomic harmonic sums of higher depths. We consider the sum 
$\S{(a_1,b_1,c_1),(a_2,b_2,c_2),\ldots,(a_k,b_k,c_k)}n$ and apply Lemma \ref{CSintrep1} to the innermost sum $(a = a_k, b = b_k, c = \abs{c_k}, d=\sign{c_k})$. 
One now may perform the next sum in the
same way, provided $a_{k-1} | a_k$. At this point we may need the fact:
\begin{eqnarray*}
\sum_{i=1}^n\frac{(d y^a)^i}{(ai+b)^c}=\frac{1}{y^b}\int_0^y{\frac{1}{x_c}\int_0^{x_c}{\frac{1}{x_{c-1}} \cdots \int_0^{x_3}\frac{1}{x_2}{
  \int_0^{x_2}{\frac{{x_1}^{a+b-1}\left(d^n {x_1}^{a n}-1\right)}{{x_1}^a-d}dx_1}}\cdots }dx_c}
\end{eqnarray*}
for $n,a,b,c,k\in \N,d=\pm1$ and $y\in\R.$ If $a_{k-1} \nmid a_k,$ one transforms the integration variables such that the next denominator can be generated, etc.
In this way, the sum $ \S{(a_1,b_1,c_1),(a_2,b_2,c_2),\ldots,(a_k,b_k,c_k)}n $
can be represented in terms of linear combinations of Poincar\'e-iterated integrals. Evidently, the representation of
the cyclotomic harmonic sum $\S{(a_1,b_1,c_1),(a_2,b_2,c_2),\ldots,(a_k,b_k,c_k)}n$ in terms of a (properly regularized) Mellin transform will be related to the
Mellin variable $k n$, with $k$ beingthe least common multiple of $a_1, ... ,a_k$.
 
Let us illustrate the principle steps in case of the following example (compare~\cite{Ablinger2011}):
\begin{eqnarray*}
\S{(3,2,2),(2,1,-1)}{n} = \sum_{k=1}^{n} \frac{1}{(3k+2)^2}\sum_{l=1}^{k} \frac{(-1)^l}{(2l+1)}~.
\end{eqnarray*}
The first sum yields
\begin{eqnarray*}
\S{(3,2,2),(2,1,-1)}{n} =  \sum_{k=1}^{n} \int_0^1 \frac{1}{(3k + 2)^2} \frac{x^2((-x^2)^{k} - 1)}{x^2+1}dx~.
\end{eqnarray*}
Setting $x = y^3$ one obtains
\begin{eqnarray}
\label{CSeq:ex1}
\S{(3,2,2),(2,1,-1)}{n} &=& 12 \int_0^1\frac{y^8}{y^6+1}
\sum_{k=1}^n \frac{(-y^6)^k-1}{(6k+4)^2}dy
\nonumber\\
&=& 12\int_0^1\frac{y^4}{y^6+1} \Biggl\{
\int_0^y
\frac{1}{z} \int_0^z t^9~\frac{(-t^6)^{n}-1}{t^6+1}dtdz
\nonumber\\ && \hspace*{2.9cm}
- y^4 \int_0^1
\frac{1}{z} \int_0^z t^9~\frac{t^{6n}-1}{t^6-1}dtdz
\Biggr\}dy \nonumber\\
&=& 12\int_0^1 \frac{y^4}{y^6+1}
\int_0^y
\frac{1}{z} \int_0^z t^9~\frac{(-t^6)^{n}-1}{t^6+1}dtdzdy
\nonumber\\ &&
- \left(4 -\pi \right) \int_0^1
\frac{1}{z} \int_0^z t^9~\frac{t^{6n}-1}{t^6-1}dtdz
\Biggr\}~.
\end{eqnarray}
 In general, the polynomials $$x_1^a \pm 1$$ in Lemma \ref{CSintrep1} are either cyclotomic or decompose into products of cyclotomic
 polynomials in other cases. All factors divide $(x^a)^l - 1$, resp. $(-x^a)^l - 1$. We remark that (\ref{CSeq:ex1}) is not yet written in terms of a Mellin
 transform. This can be achieved using partial fractioning and integration by parts (for details we refer to the next sections):

\begin{eqnarray*}
\label{CSeq:ex2}
&&\hspace{-1cm}\S{(3,2,2),(2,1,-1)}{n} =\\&& \frac{1}{6}(4 - \pi) \int_0^1 x^3(x^{6n}-1)
\left[6
+ f_1^0(x)
- f_2^0(x)
- 2 f_3^0(x) \right. \\ &&
\hspace*{2cm}
- f_3^1(x) \left.
- 2 f_6^0(x)
+ f_6^1(x)\right] \H{(0,0)}x dx\\ &&
-2 \int_0^1 x^3 \left[(-1)^n x^{6n} - 1\right]\left[3 - f_4^0(x) - 2 f_{12}^0(x) + 2
f_{12}^2(x) \right] \H{(0,0)}x dx
\\ 
&&
-\frac{4}{3} \left[\H{(0,0),(4,0)}1-\H{(0,0),(12,0)}1+2\H{(0,0),(12,2)}1 \right]
\int_0^1  x^3 \left[(-1)^n x^{6n} - 1\right] \\
&&
\hspace*{5cm} \times \left[3 - f_4^0(x) - 2 f_{12}^0(x) + 2
f_{12}^2(x) \right]dx \\ &&
+\frac{4}{3}
\int_0^1 x^3 \left[(-1)^n x^{6n} - 1\right]
\left[\H{(0,0),(4,0)}x-\H{(0,0),(12,0)}x+2\H{(0,0),(12,2)}x \right]
\\
&&
\hspace*{5cm}
\times \left[3 - f_4^0(x) - 2 f_{12}^0(x) + 2
f_{12}^2(x) \right]dx~.
\end{eqnarray*}

\subsection{Mellin Transform of Cyclotomic Harmonic Polylogarithms}
\label{CSCyloMel}
In this subsection we look at the Mellin transformation of cyclotomic polylogarithms. Therefore we extend the Mellin transform to cyclotomic harmonic polylogarithms (compare \cite{Ablinger2011}).
\begin{definition}
Let $h(x)$ be a cyclotomic harmonic polylogarithm, $n,k\in\N;p\in\N^*$ with $p<k$ and $a\in\N,a>1$. We extend the Mellin-transform as follows:
\begin{eqnarray}
\M{h(x)}{k n+p}&=&\int_0^1{x^{k\;n+p}h(x)dx},\nonumber\\
\M{\frac{h(x)}{\Phi_a(x)}}{k\;n+p}&=&\int_0^1{\frac{x^{k n+p}h(x)}{\Phi_a(x)}dx},\nonumber\\
\M{\frac{h(x)}{x-1}}{k\;n+p}&=&\int_0^1{\frac{(x^{k n+p}-1)h(x)}{x-1}dx}.\nonumber
\label{CSabmellplus}
\end{eqnarray}
\end{definition}
We cannot compute the Mellin transformation
$$
\M{\H{(a_1,b_1),(a_2,b_2),\ldots}x}{k\;n+p}
$$ 
of a cyclotomic polylogarithm $\H{(a_1,b_1),(a_2,b_2),\ldots}x$ for general $n \in \N.$ However we will always be able to calculate the Mellin transformation of
$$
\M{\H{(a_1,b_1),(a_2,b_2),\ldots}x}{k\;n+p},
$$
where $k\in\N$ is a multiple of the least common multiple of $a_1,a_2,\ldots$ and $p\in\N.$
It will turn out that these Mellin transforms of cyclotomic polylogarithms can be expressed using cyclotomic harmonic sums.
The following lemma will be needed several times in this subsection.

\begin{lemma}Let $a,k,v,n \in \N, p\in \N_0$ and  $0<x<1.$ If $v a=k,$ we have
\begin{eqnarray*}
 \frac{x^{kn+p}-x^p}{\Phi_a(x)}&=&-\sum_{j=0}^{a-1}f_j\sum_{t=0}^{v-1}\sum_{s=0}^{n-1}x^{at+ks+j+p},
\end{eqnarray*}
and if $a=2k$ we have
\begin{eqnarray*}
 \frac{x^{kn+p}-(-1)^nx^p}{\Phi_{a}(x)}&=&-\sum_{j=0}^{k-1}f_j\sum_{s=0}^{n-1}(-1)^{n+s}x^{ks+j+p},
\end{eqnarray*}
where $f_j$ are the coefficients of the power series expansion of $f(x):=\frac{1}{\Phi_a(x)}$ about zero, \ie $f(x)=\sum_{j=0}^{\infty} f_j x^j$ (see Lemma \ref{CScyclotopow}). 
\label{CScycloapart}
\end{lemma}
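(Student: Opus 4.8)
The plan is to prove both identities by turning the left-hand sides into finite geometric sums, using the partial-fraction form of $1/\Phi_a(x)$ that comes out of Lemma~\ref{CScyclotopow}. First I would record the algebraic fact underlying everything: setting $g_a(x):=-\prod_{d\mid a,\,d<a}\Phi_d(x)$, the defining relation $\Phi_a(x)=(x^a-1)/\prod_{d\mid a,\,d<a}\Phi_d(x)$ gives $\Phi_a(x)\,g_a(x)=1-x^a$, hence $1/\Phi_a(x)=g_a(x)/(1-x^a)$ as rational functions (both sides defined for $0<x<1$). By Lemma~\ref{CScyclotopow} (and the identity established in its proof, together with $\deg g_a=a-\varphi(a)<a$) one has $g_a(x)=\sum_{j=0}^{a-1}f_jx^j$, so $1/\Phi_a(x)=\bigl(\sum_{j=0}^{a-1}f_jx^j\bigr)/(1-x^a)$.

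For the first identity, with $k=va$, I would write $x^{kn+p}-x^p=x^p\bigl((x^a)^{vn}-1\bigr)=-x^p(1-x^a)\sum_{m=0}^{vn-1}x^{am}$, using $u^{vn}-1=(u-1)\sum_{m=0}^{vn-1}u^m$ with $u=x^a$. Multiplying by $1/\Phi_a(x)$ cancels the factor $1-x^a$ and yields $\dfrac{x^{kn+p}-x^p}{\Phi_a(x)}=-\sum_{j=0}^{a-1}f_j\sum_{m=0}^{vn-1}x^{am+j+p}$. Finally I would split the index $m\in\{0,\dots,vn-1\}$ bijectively as $m=vs+t$ with $0\le s\le n-1$, $0\le t\le v-1$, and note $am=avs+at=ks+at$; this rewrites the inner sum as $\sum_{t=0}^{v-1}\sum_{s=0}^{n-1}x^{ks+at+j+p}$, which is exactly the claimed formula.

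For the second identity, with $a=2k$, the key preliminary step is that $\Phi_{2k}(x)$ divides $x^k+1$: indeed $x^k+1=(x^{2k}-1)/(x^k-1)$, $\Phi_{2k}(x)\mid x^{2k}-1$, and $\Phi_{2k}(x)$ is coprime to $x^k-1$ since its roots are primitive $2k$-th roots of unity, not $k$-th roots of unity. Hence $h(x):=(x^k+1)/\Phi_{2k}(x)$ is a polynomial of degree $k-\varphi(2k)<k$. Writing $1/\Phi_{2k}(x)=h(x)/(x^k+1)=h(x)\sum_{i\ge 0}(-1)^ix^{ki}$ and comparing coefficients of $x^j$ for $0\le j<k$ gives $f_j=[x^j]h(x)$, so $h(x)=\sum_{j=0}^{k-1}f_jx^j$. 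Then, using $u^n-w^n=(u-w)\sum_{m=0}^{n-1}u^mw^{\,n-1-m}$ with $u=x^k$, $w=-1$ and the identity $(-1)^{\,n-1-m}=-(-1)^{n+m}$, I would write $x^{kn+p}-(-1)^nx^p=x^p\bigl((x^k)^n-(-1)^n\bigr)=-x^p(x^k+1)\sum_{m=0}^{n-1}(-1)^{n+m}x^{km}$. Dividing by $\Phi_{2k}(x)$ replaces $x^k+1$ by $h(x)=\sum_{j=0}^{k-1}f_jx^j$, and renaming $m$ as $s$ gives $\dfrac{x^{kn+p}-(-1)^nx^p}{\Phi_{2k}(x)}=-\sum_{j=0}^{k-1}f_j\sum_{s=0}^{n-1}(-1)^{n+s}x^{ks+j+p}$, as claimed.

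I expect the only genuinely delicate point to be the second case: isolating the divisor $h(x)=(x^k+1)/\Phi_{2k}(x)$, checking it is a polynomial of degree below $k$, and identifying its coefficients with $f_0,\dots,f_{k-1}$. The first case is a direct geometric-series manipulation plus the bookkeeping of the re-indexing $m=vs+t$. Throughout I would keep the computations at the level of rational functions (equivalently, formal power series valid on $(0,1)$), so no convergence issues arise beyond the standing hypothesis $0<x<1$.
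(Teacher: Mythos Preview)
Your proof is correct and follows essentially the same underlying idea as the paper, but the execution is slightly different and arguably cleaner. The paper multiplies $x^{kn+p}-x^p$ by the full power series $\sum_{j=0}^{a-1}f_j\sum_{t\ge 0}x^{at+j}$ from Lemma~\ref{CScyclotopow} and then telescopes the resulting infinite sums ($\sum_{t\ge 0}x^{a(t+vn)}-\sum_{t\ge 0}x^{at}=-\sum_{t=0}^{vn-1}x^{at}$); you instead cancel the factor $1-x^a$ (respectively $x^k+1$) at the level of rational functions before ever expanding, which avoids manipulating infinite series. For the second identity the paper simply writes $\tfrac{1}{\Phi_{2k}(x)}=\sum_{j=0}^{k-1}f_j\sum_{i\ge 0}(-1)^ix^{ki+j}$ without further comment, whereas you supply the missing justification via the polynomial $h(x)=(x^k+1)/\Phi_{2k}(x)$ and its coefficient identification with $f_0,\dots,f_{k-1}$; this is a genuine improvement in rigor over the paper's version.
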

\begin{proof}
 Due to Lemma \ref{CScyclotopow} we get
 \begin{eqnarray*}
 \frac{x^{kn+p}-x^p}{\Phi_a(x)}&=&(x^{kn+p}-x^p)\sum_{j=0}^{a-1}f_j\sum_{t=0}^{\infty}x^{at+j}\\
	&=&x^p\sum_{j=0}^{a-1}f_j\sum_{t=0}^{\infty}(x^{at+j+kn}-x^{at+j})\\
	&=&x^p\sum_{j=0}^{a-1}f_jx^j\sum_{t=0}^{\infty}(x^{a(t+vn)}-x^{at})\\
	&=&-x^p\sum_{j=0}^{a-1}f_jx^j\sum_{t=0}^{vn-1}x^{at}\\
	&=&-\sum_{j=0}^{a-1}f_j\sum_{t=0}^{v-1}\sum_{s=0}^{n-1}x^{at+ks+j+p}
 \end{eqnarray*}
and if $a=2k$ we get
 \begin{eqnarray*}
 \frac{x^{kn+p}-(-1)^nx^p}{\Phi_{a}(x)}&=&\sum_{j=0}^{\frac{a}{2}-1}f_j\sum_{s=0}^{\infty}x^{\frac{a}{2}i+j}(-1)^{i}(x^{kn+p}-(-1)^nx^{p})\\
	&=&-x^p\sum_{j=0}^{k-1}f_jx^j\sum_{s=0}^{\infty}(-1)^{i}(x^{k(n+s)}-(-1)^nx^{ks})\\
	&=&-\sum_{j=0}^{k-1}f_j\sum_{s=0}^{n-1}(-1)^{n+s}x^{ks+j+p}.
 \end{eqnarray*}.
\end{proof}

Now we can look at the depth one cases.
\begin{lemma} Let $k,a,n \in \N,b,p\in\N_0$ with $a>1, a|k.$ We have 
\begin{eqnarray*}
\M{\H{(0,0)}{x}}{k \;n+p}&=& -\frac{1}{(kn+p+1)^2},\\
\M{\H{(1,0)}{x}}{k \;n+p}&=& -\frac{\S1{kn+p+1}}{kn+p+1}\\&=&-\frac{1}{kn+p+1}\left(\S1{p+1}+\sum_{i=0}^{n-1}\sum_{t=0}^{k-1}\frac{1}{ki+t+p+2}\right),\\
\M{\H{(a,b)}{x}}{k \;n+p}&=&\frac{1}{kn+p+1}\Biggl(\H{(a,b)}{1}-\H{(a,p+b+1)}1\Biggr.\\
			  &&\Biggl.+\sum_{j=0}^{a-1}f_j\sum_{t=0}^{\frac{k}{a}-1}\sum_{s=0}^{n-1}\frac{1}{at+ks+j+p+b+2}\Biggr)
\end{eqnarray*}
where $f_j$ are the coefficients of the power series expansion of $f(x):=\frac{1}{\Phi_a(x)}$ about zero, \ie $f(x)=\sum_{j=0}^{\infty} f_j x^j$ (see Lemma \ref{CScyclotopow}). 
\label{CSweight1mel}
\end{lemma}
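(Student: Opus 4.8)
The plan is to prove each of the three Mellin-transform formulas by a direct computation, using integration by parts (as in the depth-one lemmas of the harmonic-sums and S-sums chapters, e.g.\ Lemma~\ref{HSweight1mel} and Lemma~\ref{SSweight1mel}) together with Lemma~\ref{CScycloapart} to resolve the cyclotomic denominator into an explicit finite double sum. So the three cases $\H{(0,0)}x=\H{0}x=\log x$, $\H{(1,0)}x$, and $\H{(a,b)}x$ will be handled in turn.

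First I would dispatch the case $\H{(0,0)}x$: since $\H{(0,0)}x=\H{0}x=\log x$, integration by parts gives
$$
\int_0^1 x^{kn+p}\log x\,dx=\left.\frac{x^{kn+p+1}}{kn+p+1}\log x\right|_0^1-\int_0^1\frac{x^{kn+p}}{kn+p+1}\,dx=-\frac{1}{(kn+p+1)^2},
$$
which is exactly the claimed formula. Next, for $\H{(1,0)}x$: note $\H{(1,0)}x=\int_0^x\frac{1}{y-1}dy=-\H{1}x=\log(1-x)$ in the notation of the harmonic polylogarithms, or one can simply use $\frac{d}{dx}\H{(1,0)}x=\frac{1}{x-1}$. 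Integration by parts plus the geometric-series identity $\frac{x^{kn+p}-1}{x-1}=\sum_{i=0}^{kn+p-1}x^i$ yields
$$
\int_0^1 x^{kn+p}\H{(1,0)}x\,dx=-\frac{1}{kn+p+1}\sum_{i=0}^{kn+p}\frac{1}{i+1}=-\frac{\S{1}{kn+p+1}}{kn+p+1},
$$
and then the inner sum is split as $\S{1}{kn+p+1}=\S{1}{p+1}+\sum_{i=p+2}^{kn+p+1}\frac1i$; reindexing the tail over a block of length $k$ iterated $n$ times, i.e.\ writing the summation index as $ki+t+p+2$ with $0\le t\le k-1$ and $0\le i\le n-1$, gives the stated closed form. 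This reindexing is routine once one checks that the range $\{p+2,\dots,kn+p+1\}$ is partitioned correctly, using $a\mid k$ (here trivially).

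The substantive case is $\H{(a,b)}x$ with $a>1$, $a\mid k$. Here I would integrate by parts:
$$
\int_0^1 x^{kn+p}\H{(a,b)}x\,dx=\frac{\H{(a,b)}1}{kn+p+1}-\frac{1}{kn+p+1}\int_0^1\frac{x^{kn+p+b+1}}{\Phi_a(x)}dx,
$$
using $\frac{d}{dx}\H{(a,b)}x=\frac{x^b}{\Phi_a(x)}$ and the finiteness of $\H{(a,b)}1$ (guaranteed for $a>1$). The remaining integral I would rewrite as
$$
\int_0^1\frac{x^{kn+p+b+1}-x^{p+b+1}}{\Phi_a(x)}dx+\int_0^1\frac{x^{p+b+1}}{\Phi_a(x)}dx,
$$
where the second summand is $\H{(a,p+b+1)}1$, and the first summand is handled by Lemma~\ref{CScycloapart} (applied with $v=k/a$ and shift $p+b+1$ in place of $p$), which turns $\frac{x^{kn+p+b+1}-x^{p+b+1}}{\Phi_a(x)}$ into $-\sum_{j=0}^{a-1}f_j\sum_{t=0}^{k/a-1}\sum_{s=0}^{n-1}x^{at+ks+j+p+b+1}$. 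Integrating this finite sum term by term over $[0,1]$ produces exactly $-\sum_{j}f_j\sum_{t}\sum_{s}\frac{1}{at+ks+j+p+b+2}$, and assembling the pieces gives the claimed formula.

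The main obstacle, such as it is, will be bookkeeping: keeping the exponent shifts ($b$, then $b+1$ from the integration by parts, then the $+2$ from integrating $x^{\,\cdot}$) consistent, and making sure the hypothesis $a\mid k$ is invoked precisely where Lemma~\ref{CScycloapart} requires $va=k$ so that the double-sum reindexing is legitimate. There is no deep step here — each formula follows from one integration by parts and one application of an already-established partial-fraction/series identity — but I would be careful to state explicitly that all constants appearing ($\H{(a,b)}1$, $\H{(a,p+b+1)}1$, $\S1{p+1}$) are finite under the stated hypotheses, so that the manipulations of the (regularized, in the $(1,0)$ case) integrals are valid.
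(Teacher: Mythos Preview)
Your proposal is correct and follows essentially the same approach as the paper: integration by parts followed by Lemma~\ref{CScycloapart} for the cyclotomic denominator, with the same splitting of the remaining integral into $\H{(a,p+b+1)}1$ plus the difference handled by the lemma. The paper in fact only writes out the first and third identities and omits the $(1,0)$ case, which you supply via the geometric-series argument; your treatment is slightly more complete in this respect but otherwise identical in method.
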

\begin{proof}We give a proof of the first and third identity:
\begin{eqnarray*}
 \M{\H{(0,0)}{x}}{k \;n+p}&=&\int_0^1x^{kn+p}\H{(0,0)}{x}dx=-\int_0^1\frac{x^{kn+p+1}}{kn+p+1}\frac{1}{x}dx\\
&=&-\frac{1}{(kn+p+1)^2}.
\end{eqnarray*}
Using integration by parts and Lemma \ref{CScycloapart} we get
\begin{eqnarray*}
\M{\H{(a,b)}{x}}{k \;n+p}&=&\int_0^1x^{kn+p}\H{(a,b)}{x}dx=\\
&=&\frac{1}{kn+p+1}\H{(a,b)}{1}-\int_0^1\frac{x^{kn+p+b+1}}{(kn+p+1)\Phi_a(x)}dx\\
&=&\frac{1}{kn+p+1}\Biggl(\H{(a,b)}{1}-\H{(a,p+b+1)}1\Biggr.\\
  &&\Biggl.-\int_0^1\frac{x^{kn+p+b+1}-x^{p+b+1}}{\Phi_a(x)}dx\Biggr)\\
&=&\frac{1}{kn+p+1}\Biggl(\H{(a,b)}{1}-\H{(a,p+b+1)}1\Biggr.\\
  &&\Biggl.+\sum_{j=0}^{a-1}f_j\sum_{t=0}^{\frac{k}{a}-1}\sum_{s=0}^{n-1}\int_0^1x^{at+ks+j+p+b+1}dx\Biggr)\\
&=&\frac{1}{kn+p+1}\Biggl(\H{(a,b)}{1}-\H{(a,p+b+1)}1\Biggr.\\
  &&\Biggl.+\sum_{j=0}^{a-1}f_j\sum_{t=0}^{\frac{k}{a}-1}\sum_{s=0}^{n-1}\frac{1}{at+ks+j+p+b+2}\Biggr).
\end{eqnarray*}
\end{proof}

For higher weights we need the following lemma.
\begin{lemma}Let $k,a \in \N,b,p\in\N_0$ with $a>1, a|k$ and $b<\varphi(a)$. We have 
\begin{eqnarray*}
\M{\H{(0,0),\ve m}x}{k \;n + p}&=&\frac{1}{k\;n+p+1}\Biggl(\H{(0,0),\ve m}1 -\M{\H{\ve m}x}{kn+p}\Biggr)\\
\M{\H{(1,0),\ve m}x}{k \;n + p}&=&\frac{-1}{k\;n+p+1}\Biggl(\int_0^1\frac{x^{p+1}-1}{x-1}\H{\ve m}xdx\Biggr.\\
				  &&\Biggl.+\sum_{i=0}^{n-1}\sum_{l=0}^{k-1}\M{\H{\ve m}x}{ki+l+p+1} \Biggr)\\
 \M{\H{(a,b),\ve m}x}{k \;n + p}&=&\frac{1}{k\;n+p+1}\Biggl(\H{(a,b),\ve m}1-\H{(a,p+b+1),\ve m}1 \Biggr.\\
				  &&\Biggl.+\sum_{i=0}^{n-1} \sum_{j=0}^{a-1} f_j \sum_{l=0}^{\frac{k}{a}-1}\M{\H{m}x}{k\;i+j+b+p+1+a\;l} \Biggr),
\end{eqnarray*}
where $f_j$ are the coefficients of the power series expansion of $f(x):=\frac{1}{\Phi_a(x)}$ about zero \ie $f(x)=\sum_{j=0}^{\infty} f_j x^j$ (see Lemma \ref{CScyclotopow}). 
\label{CSmelnotweighted}
\end{lemma}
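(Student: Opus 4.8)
The plan is to prove all three identities by integration by parts against $x^{kn+p}$, using in each case that $\frac{d}{dx}\H{(a,b),\ve m}x=f_a^b(x)\,\H{\ve m}x$, and then rewriting the remaining integrand so that the exponents of $x$ match the Mellin arguments $ki+\cdots$ occurring on the right-hand side. Throughout I would use Lemma~\ref{CScycloapart} to expand ratios of the form $\frac{x^{N}-x^{p'}}{\Phi_a(x)}$ into finite sums of monomials.

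The case $\H{(0,0),\ve m}x$ is immediate: integration by parts gives the boundary term $\frac{1}{kn+p+1}\H{(0,0),\ve m}1$ together with $-\frac{1}{kn+p+1}\int_0^1 x^{kn+p+1}\cdot\frac1x\,\H{\ve m}x\,dx$, and the latter integral is exactly $\M{\H{\ve m}x}{kn+p}$. For the case $a>1$ with $b<\varphi(a)$, integration by parts produces $\frac{1}{kn+p+1}\H{(a,b),\ve m}1-\frac{1}{kn+p+1}\int_0^1\frac{x^{kn+p+1+b}}{\Phi_a(x)}\H{\ve m}x\,dx$. I would then split $\frac{x^{kn+p+1+b}}{\Phi_a(x)}=\frac{x^{kn+p+1+b}-x^{p+1+b}}{\Phi_a(x)}+\frac{x^{p+1+b}}{\Phi_a(x)}$; the second summand integrates against $\H{\ve m}x$ to $\H{(a,p+b+1),\ve m}1$, which is finite since $\Phi_a(1)\neq 0$ for $a>1$, while the first summand is handled by Lemma~\ref{CScycloapart} (applied with its ``$p$'' replaced by $p+1+b$ and $v=k/a$), yielding $-\sum_{j=0}^{a-1}f_j\sum_{t=0}^{k/a-1}\sum_{s=0}^{n-1}x^{at+ks+j+p+1+b}$. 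Integrating term by term and relabelling $s\to i$, $t\to l$ gives precisely the triple sum of $\M{\H{\ve m}x}{ki+j+b+p+1+al}$ in the claimed formula.

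The delicate case is $\H{(1,0),\ve m}x$, because $\H{(1,0),\ve m}1$ may diverge when $\ve m$ has leading ones. Here I would mimic the proof of Lemma~\ref{HSmelnotweighted}: carry out integration by parts on $\int_0^\epsilon$ and let $\epsilon\to 1^-$ at the end. Using $\frac{x^{kn+p+1}}{x-1}=\sum_{m=0}^{kn+p}x^m+\frac{1}{x-1}$, the $\frac{1}{x-1}$-part reassembles into $\H{(1,0),\ve m}\epsilon$, so the whole expression becomes $\frac{1}{kn+p+1}\bigl((\epsilon^{kn+p+1}-1)\H{(1,0),\ve m}\epsilon-\int_0^\epsilon\sum_{m=0}^{kn+p}x^m\H{\ve m}x\,dx\bigr)$; the first term vanishes in the limit because $\H{(1,0),\ve m}\epsilon$ grows only like a power of $\log(1-\epsilon)$. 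Finally I would split $\sum_{m=0}^{kn+p}=\sum_{m=0}^{p}+\sum_{m=p+1}^{kn+p}$, recognize $\sum_{m=0}^{p}x^m=\frac{x^{p+1}-1}{x-1}$, and set $m=ki+l+p+1$ with $0\le i\le n-1$, $0\le l\le k-1$ in the second block to obtain the stated double sum of $\M{\H{\ve m}x}{ki+l+p+1}$.

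The routine parts are the bookkeeping with the finite sums coming out of Lemma~\ref{CScycloapart} and the term-by-term integration (justified since all sums are finite). The main obstacle is the $(1,0)$-case: one must argue carefully via the $\epsilon\to 1^-$ regularization that the boundary contribution vanishes and that the remaining finite sum can be reorganized into the asserted shape; the $a=0$ and $a>1$ cases then follow with no further subtlety.
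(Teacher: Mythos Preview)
Your proposal is correct and follows essentially the same approach as the paper: integration by parts followed by Lemma~\ref{CScycloapart} for the $(a,b)$-case. The paper in fact only writes out the third identity explicitly, so your treatment of the $(0,0)$-case and especially the $(1,0)$-case via the $\epsilon\to 1^-$ regularization (in direct analogy with Lemma~\ref{HSmelnotweighted}) supplies the details the paper omits; the reindexing $m=ki+l+p+1$ you describe is exactly what is needed.
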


\begin{proof}We just give a proof of the third identity using integration by parts and Lemma \ref{CScycloapart}:
\begin{eqnarray*}
  \M{\H{(a,b),\ve m}x}{k \;n + p}&=&\frac{1}{k\;n+p+1}\Biggl(\H{(a,b),\ve m}1-\int_0^1\frac{x^{kn+p+b+1}}{\Phi_a(x)}\H{\ve m}dx\Biggr)\\
      &=&\frac{1}{k\;n+p+1}\Biggl(\H{(a,b),\ve m}1-\int_0^1\frac{x^{p+b+1}}{\Phi_a(x)}\H{\ve m}xdx\Biggr.\\
      &&-\int_0^1\frac{x^{kn+p+b+1}-x^{p+b+1}}{\Phi_a(x)}\H{\ve m}xdx\Biggr)\\
      &=&\frac{1}{kn+p+1}\Biggl(\H{(a,b),\ve m}{1}-\H{(a,p+b+1),\ve m}1\Biggr.\\
	  &&\Biggl.+\sum_{j=0}^{a-1}f_j\sum_{t=0}^{\frac{k}{a}-1}\sum_{s=0}^{n-1}\frac{1}{at+ks+j+p+b+2}\Biggr).
\end{eqnarray*}
\end{proof}
We can even handle slightly more cases as mentioned in the beginning of this subsection:
\begin{lemma} Let $k,n \in \N,b,p\in \N_0.$ We have 
\begin{eqnarray*}
\M{\H{(2\;k ,b)}{x}}{k \;n+p}&=&\frac{1}{kn+p+1}\Biggl(\H{(2k,b)}{1}-(-1)^n\H{(2k,p+b+1)}1\Biggr.\\
		&&\Biggl.+(-1)^n\sum_{j=0}^{k-1}f_j\sum_{t=0}^{n-1}(-1)^t\frac{1}{k t+j+p+b+2}\Biggr)\\
\M{\H{(2\;k ,b),\ve m}{x}}{k \;n+p}&=&\frac{1}{kn+p+1}\Biggl(\H{(2k,b).\ve m}{1}-(-1)^n\H{(2k,p+b+1),\ve m}1\Biggr.\\
		&&\Biggl.+(-1)^n\sum_{j=0}^{k-1}f_j\sum_{t=0}^{n-1}(-1)^t\M{\H{\ve m}x}{k t+j+p+b+1}\Biggr)
\end{eqnarray*}
where $f_j$ are the coefficients of the power series expansion of $f(x):=\frac{1}{\Phi_{2k}(x)}$ about zero, \ie $f(x)=\sum_{j=0}^{\infty} f_j x^j$ (see Lemma \ref{CScyclotopow}). 
\label{CSmelnotweighted2k}
\end{lemma}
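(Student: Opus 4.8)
The plan is to mirror the proofs of Lemma~\ref{CSweight1mel} and Lemma~\ref{CSmelnotweighted}; the only new ingredient is that the denominator now carries the cyclotomic index $2k$ while the Mellin variable is built on the base $k$, so $2k \nmid k$ and the earlier lemmas do not apply verbatim. First I would establish the depth-one identity. Integration by parts gives
\begin{eqnarray*}
\M{\H{(2k,b)}{x}}{kn+p}&=&\int_0^1 x^{kn+p}\H{(2k,b)}{x}\,dx\\
&=&\frac{\H{(2k,b)}{1}}{kn+p+1}-\frac{1}{kn+p+1}\int_0^1\frac{x^{kn+p+b+1}}{\Phi_{2k}(x)}\,dx,
\end{eqnarray*}
where the boundary term at $x=0$ vanishes because $(2k,b)\neq(0,0)$ and the one at $x=1$ is finite because the leading index is not $(1,0)$.

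Next I would split the remaining integral as
$\int_0^1 x^{kn+p+b+1}/\Phi_{2k}(x)\,dx = (-1)^n\H{(2k,p+b+1)}{1}+\int_0^1 \bigl(x^{kn+p+b+1}-(-1)^n x^{p+b+1}\bigr)/\Phi_{2k}(x)\,dx$
and apply the second case of Lemma~\ref{CScycloapart} (with $a=2k$ and $p$ replaced by $p+b+1$), which rewrites the last integrand as $-\sum_{j=0}^{k-1}f_j\sum_{s=0}^{n-1}(-1)^{n+s}x^{ks+j+p+b+1}$. Integrating term by term, pulling the factor $(-1)^n$ out of the inner sum, and collecting the two minus signs (the one from integration by parts and the one supplied by Lemma~\ref{CScycloapart}) yields exactly the claimed formula, including the global $(-1)^n$ and the alternating inner sum $\sum_t(-1)^t/(kt+j+p+b+2)$.

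For the higher-weight identity I would run the same argument while carrying the nested cyclotomic harmonic polylogarithm $\H{\ve m}{x}$ along in the integrand: integration by parts produces $\H{(2k,b),\ve m}{1}/(kn+p+1)$ minus $(kn+p+1)^{-1}\int_0^1 x^{kn+p+b+1}\H{\ve m}{x}/\Phi_{2k}(x)\,dx$, and then the identical split together with Lemma~\ref{CScycloapart} reduces the remaining integral to the lower Mellin transforms $\M{\H{\ve m}{x}}{kt+j+p+b+1}$, again with the factor $(-1)^n$ and the alternating sum in $t$. The only point requiring care is the sign bookkeeping across these reductions; once it is tracked correctly the statement follows, since analyticity of $\H{\ve m}{x}$ on $(0,1)$ and the finiteness of the occurring boundary values $\H{(2k,b),\ve m}{1}$ and $\H{(2k,p+b+1),\ve m}{1}$ were already established in the discussion following Definition~\ref{CShlogdef}. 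I do not expect a genuine obstacle beyond this routine sign accounting.
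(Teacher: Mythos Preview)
Your proposal is correct and follows exactly the approach indicated by the paper, which merely states that the proof follows using integration by parts together with Lemma~\ref{CScycloapart}. You have filled in precisely those details, including the correct invocation of the second case ($a=2k$) of Lemma~\ref{CScycloapart} and the sign bookkeeping that produces the global $(-1)^n$ and the alternating inner sum.
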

\begin{proof}
The proof follows using integration by parts and Lemma \ref{CScycloapart}.
\end{proof}

Combining the Lemmas \ref{CSweight1mel},\ref{CSmelnotweighted} and \ref{CSmelnotweighted2k} we can compute the desired Mellin transformations of cyclotomic polylogarithms up
to arbitrary weight. Using the following lemma we can as well compute the Mellin transformation of cyclotomic polylogarithms weighted by cyclotomic polynomials.
\begin{lemma}For $a,k,n \in \N,p\in\N_0$ we have
\begin{eqnarray*}
\M{\frac{\H{\ve m}{x}}{\Phi_1(x)}}{k\;n+p} &=&-\left(kn+p\right)\M{\H{(1,0),\ve m}{x}}{k\;n+p} \\
\M{\frac{\H{\ve m}{x}}{\Phi_a(x)}}{k\;n+p} &=& \H{(a,0),\ve m}1-\left(kn+p\right)\M{\H{(a,0),\ve m}x}{kn+p-1}.
\end{eqnarray*}
\label{CSmelweighted}
\end{lemma}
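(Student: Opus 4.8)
The plan is to prove both identities exactly as Lemma~\ref{HSmelweighted} is proved, namely by integration by parts together with the recursive definition of the cyclotomic harmonic polylogarithms (Definition~\ref{CShlogdef}), treating the cyclotomic polynomial $\Phi_1$ separately because it vanishes at $x=1$. Write $N:=kn+p$; since $k,n\in\N$ and $p\in\N_0$ we have $N\geq 1$, so $N-1\in\N_0$ and every Mellin transform appearing is at a nonnegative integer argument. Since $0<\varphi(a)$ for all $a\in\N$, the index $(a,0)$ is always admissible, and the differentiation rule for cyclotomic harmonic polylogarithms gives $\frac{d}{dx}\H{(a,0),\ve m}{x}=f_a^0(x)\H{\ve m}{x}=\frac{\H{\ve m}{x}}{\Phi_a(x)}$ for every $a\in\N$.

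For $a>1$ the polynomial $\Phi_a$ is continuous and strictly positive on $[0,1]$, so $\M{\frac{\H{\ve m}{x}}{\Phi_a(x)}}{N}=\int_0^1\frac{x^N\H{\ve m}{x}}{\Phi_a(x)}dx$ requires no regularization, and $\H{(a,0),\ve m}{1}$ is finite because the leading letter $(a,0)\neq(1,0)$. Integrating $\int_0^1 x^{N-1}\H{(a,0),\ve m}{x}dx$ by parts, differentiating the cyclotomic harmonic polylogarithm, the boundary term is $\frac{1}{N}\H{(a,0),\ve m}{1}$ and one obtains
\begin{eqnarray*}
\M{\H{(a,0),\ve m}{x}}{N-1}=\frac{\H{(a,0),\ve m}{1}}{N}-\frac{1}{N}\M{\frac{\H{\ve m}{x}}{\Phi_a(x)}}{N},
\end{eqnarray*}
which rearranges to the second claimed identity.

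For $\Phi_1(x)=x-1$ I would start from the modified definition $\M{\frac{\H{\ve m}{x}}{x-1}}{N}=\int_0^1\frac{(x^N-1)\H{\ve m}{x}}{x-1}dx$. Here $\frac{d}{dx}\H{(1,0),\ve m}{x}=\frac{\H{\ve m}{x}}{x-1}$, but $\H{(1,0),\ve m}{1}$ may diverge, so one works with $\int_0^\epsilon x^{N-1}\H{(1,0),\ve m}{x}dx$, integrates by parts, and splits $\frac{x^N}{x-1}=\frac{x^N-1}{x-1}+\frac{1}{x-1}$; the $\frac{1}{x-1}$ piece integrated against $\H{\ve m}{x}$ reproduces a copy of $\H{(1,0),\ve m}{\epsilon}$, leaving the boundary term $\frac{\epsilon^N-1}{N}\H{(1,0),\ve m}{\epsilon}$. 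This tends to $0$ as $\epsilon\to 1^-$ because, as recorded after Definition~\ref{CShlogdef}, $\H{(1,0),\ve m}{\epsilon}$ grows at most like a power of $\log(1-\epsilon)$ while $\epsilon^N-1$ vanishes linearly in $1-\epsilon$. Passing to the limit yields $\M{\H{(1,0),\ve m}{x}}{N-1}=-\frac{1}{N}\M{\frac{\H{\ve m}{x}}{x-1}}{N}$, i.e.\ the first identity (with the Mellin argument on the right-hand side equal to $kn+p-1$).

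The only genuine obstacle is this regularization in the $\Phi_1$ case: one must justify that the $\epsilon$-limits exist and that the limit of the remaining integral coincides with $\M{\frac{\H{\ve m}{x}}{x-1}}{N}$ as defined, which rests on the near-$1$ asymptotics of cyclotomic harmonic polylogarithms with leading letter $(1,0)$. Everything else is routine: the differentiation rule, the positivity of $\Phi_a$ on $[0,1]$ for $a>1$, and—as a sanity check—the geometric-series identity $\frac{x^N-1}{x-1}=\sum_{j=0}^{N-1}x^j$, which exhibits $\M{\frac{\H{\ve m}{x}}{\Phi_1(x)}}{N}=\sum_{j=0}^{N-1}\M{\H{\ve m}{x}}{j}$ as a finite sum of ordinary Mellin transforms.
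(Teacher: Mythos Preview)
Your proof is correct and follows exactly the approach the paper intends: the paper's own proof is the single sentence ``The proof follows immediately by integration by parts,'' and you have supplied precisely those integration-by-parts computations, including the $\epsilon$-regularization for the $\Phi_1$ case carried over from Lemma~\ref{HSmelweighted}. Your parenthetical remark is also right: the first identity as printed has a typo, and the Mellin argument on the right-hand side should be $kn+p-1$, consistent both with your derivation and with the harmonic-sum special case in Lemma~\ref{HSmelweighted}.
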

\begin{proof}
 The proof follows immediately by integration by parts.
\end{proof}

\begin{example}
\small
\begin{eqnarray*}
&&\M{\H{(4,1),(2,0)}x}{4n+1}=n\left(-\frac{ \textnormal{S}_{(1,0,2)}(\infty )}{32 n+16}+\frac{ \textnormal{S}_{(2,-1,1)}(\infty )}{16 n+8}-\frac{ \textnormal{S}_{(2,1,1)}(\infty )}{16 n+8}
	-\frac{ \textnormal{S}_{(2,1,2)}(\infty)}{8 n+4}\right.\\
&&\hspace{1cm}-\frac{ \textnormal{S}_{(4,-1,1)}(\infty )}{8 n+4}-\frac{7 n \textnormal{S}_{(4,1,1)}(\infty )}{12 n+6}+\frac{ \textnormal{S}_{(4,1,2)}(\infty )}{2 n+1}+\frac{17 n
   \textnormal{S}_{(4,3,1)}(\infty )}{24 n+12}+\frac{ \textnormal{S}_{(4,3,2)}(\infty )}{2 n+1}\\
&&\hspace{1cm}+\frac{ \textnormal{S}_{(1,0,1),(1,0,1)}(\infty )}{32 n+16}-\frac{
   \textnormal{S}_{(1,0,1),(2,1,1)}(\infty )}{16 n+8}-\frac{ \textnormal{S}_{(2,1,1),(1,0,1)}(\infty )}{16 n+8}+\frac{ \textnormal{S}_{(2,1,1),(2,1,1)}(\infty )}{8 n+4}\\
&&\hspace{1cm}-\frac{
   \textnormal{S}_{(4,1,1),(4,1,1)}(\infty )}{2 n+1}+\frac{ \textnormal{S}_{(4,1,1),(4,3,1)}(\infty )}{2 n+1}+\frac{ \textnormal{S}_{(4,3,1),(4,1,1)}(\infty )}{2 n+1}-\frac{
   \textnormal{S}_{(4,3,1),(4,3,1)}(\infty )}{2 n+1}\\
&&\hspace{1cm}-\frac{ \textnormal{S}_{(2,1,1)}(n)}{24 n+12}+\frac{ \textnormal{S}_{(4,1,1)}(n)}{4 n+2}-\frac{ \textnormal{S}_{(4,3,1)}(n)}{6 n+3}-\frac{
   \textnormal{S}_{(1,0,1),(2,1,1)}(n)}{32 n+16}+\frac{ \textnormal{S}_{(1,0,1),(4,1,1)}(n)}{16 n+8}\\
&&\hspace{1cm}+\frac{ \textnormal{S}_{(1,0,1),(4,3,1)}(n)}{16 n+8}+\frac{
   \textnormal{S}_{(2,1,1),(1,0,1)}(n)}{32 n+16}+\frac{ \textnormal{S}_{(2,1,1),(2,1,1)}(n)}{16 n+8}-\frac{ \textnormal{S}_{(2,1,1),(4,1,1)}(n)}{8 n+4}\\
&&\hspace{1cm}\left.-\frac{
   \textnormal{S}_{(2,1,1),(4,3,1)}(n)}{8 n+4}-\frac{ \textnormal{S}_{1,1}(n)}{64 n+32}+\frac{1}{12 n+6}\right).
\end{eqnarray*}
\normalsize
\end{example}

\subsection{Differentiation of Cyclotomic Harmonic Sums}
\label{CSdifferentiation}
We already considered the differentiation of harmonic sums with respect to the upper summation limit. In a similar way, we can differentiate cyclotomic harmonic sums 
since they can be represented as linear combinations of Mellin transforms of cyclotomic harmonic polylogarithms,
as has been illustrated in Section~\ref{CSCyloMel}.
Based on this representation, the differentiation of these sums can be expressed in the following format:
\begin{eqnarray}
\label{CSDIF1}
\frac{\partial^m}{\partial n^m}
\S{(a_1,b_1,c_1),\ldots,(a_k,b_k,c_k)}n = \sum_{i=1}^s e_i \int_0^1~x^{l n}
~l^m~
\H{0}x^m (f_{\alpha_i}^{\beta_i}(x))^{u_i}
\H{\ve g_i}xdx,
\end{eqnarray}
where $m,\alpha_i,\beta_i,u_i \in \N$ and $l,e_i\in \Z.$
The product $\H{0}x^m \H{\ve g_i}x$ may be transformed into a linear
combination of cyclotomic harmonic polylogarithms following Section \ref{CShpro}. Finally, using the inverse Mellin transform, the derivative (\ref{CSDIF1}) of a cyclotomic
harmonic sum w.r.t.\ $n$ is given as a polynomial expression in terms of cyclotomic harmonic sums and cyclotomic harmonic polylogarithms at $x=1$. Together with the previous 
sections, the derivative (\ref{CSDIF1}) can be expressed as a polynomial expression with rational coefficients in terms of cyclotomic harmonic sums and their values at
$n \rightarrow \infty$.
\begin{example}We start with the integral representation:
\begin{eqnarray*}
\S{(2,1,2)}n&=&-\frac{1}{2}\int_0^1\frac{(x^{2 n}-1)\H{(0,0)}{x}}{x-1}dx+\frac{1}{2} \int_0^1\frac{x^{2 n} \H{(0,0)}{x}}{x+1}dx\\
	    &&-\int_0^1x^{2 n} \H{(0,0)}{x}dx+\frac{1}{4} \S{2}{\infty}-1.
\end{eqnarray*}
Differentiation with respect to $n$ leads to
\begin{eqnarray*}
\frac{\partial}{\partial n}\S{(2,1,2)}n&=&-\frac{1}{2}\int_0^1\frac{2 x^{2 n}\H{(0,0)}{x}\H{(0,0)}{x}}{x-1}dx+\frac{1}{2} \int_0^1\frac{2 x^{2 n}\H{(0,0)}{x} \H{(0,0)}{x}}{x+1}dx\\
	    &&-\int_0^1 2 x^{2 n}\H{(0,0)}{x} \H{(0,0)}{x}dx+\frac{1}{4} \S{2}{\infty}-1.
\end{eqnarray*}
After expanding the products of the cyclotomic harmonic polylogarithms and performing the Mellin transforms we end up with
\begin{eqnarray*}
\frac{\partial}{\partial n}\S{(2,1,2)}n&=&-4 \S{(2,1,3)}{n}+\frac{7}{4} \S{(1,0,3)}{\infty}+2 \S{(2,1,3)}{\infty}-2.
\end{eqnarray*}
\end{example}

\section{Relations between Cyclotomic Harmonic Sums}
\label{CSRelations}

\subsection{Algebraic Relations}
\label{CSalgrel}

We already mentioned in Section \ref{CSdef} that cyclotomic harmonic sums form a quasi shuffle algebra. We will now take a closer look at the quasi shuffle algebra property. 
Therefore we will define the following set:
\begin{eqnarray}
\mathcal{C}(n)&=&\left\{q(s_1,\ldots,s_r)\left|r\in \N; \right. s_i \textnormal{ a cyclotomic harmonic sum at }n ;\ q\in \R[x_1,\ldots,x_r]\right\}. \label{C}\nonumber
\end{eqnarray}
From Section \ref{CSdef} we know that we can always expand a product of cyclotomic harmonic sums into a linear combination of cyclotomic harmonic sums. For two cyclotomic harmonic sums  $s_1$ and $s_2$ let
 $L(s_1,s_2)$ denote the expansion 
of $s_1s_2$ into a linear combination of cyclotomic harmonic sums.
Now we can define the ideal $\mathcal{I}$ on $\mathcal{C}(n)$:
\begin{eqnarray*}
\mathcal{I}(n)&=&\left\{s_1s_2-L(s_1,s_2)\left| s_i \textnormal{ a cyclotomic harmonic sum at }n\right.\right\}. \label{CSI}\\
\end{eqnarray*}
By construction $\mathcal{C}(n)/\mathcal{I}$  is a quasi shuffle algebra with alphabet
\begin{eqnarray*}
 A_C:=\left\{(a,b,\pm c)|a,c\in\N;b\in\N_0,b<a\right\}.
\end{eqnarray*}
We define the degree of a letter $(a,b,\pm c) \in A_C$ as $\abs{(a,b,\pm c)}:=c$. For \linebreak $a_1,a_2~\in~\N;~c_1,c_2~\in~\N;~b_1,~b_2~\in~\N_0,~b_i~<~a_i$ and $c_3\in\Z^*$ we order the letters in $A_C$ by 
\begin{eqnarray*}
\begin{array}{lll} 
	(a_1,b_1,\pm c_1)	&\prec (a_2,b_2,\pm c_2) 	&\textnormal{if } c_1<c_2\\
	(a_1,b_1,-c_1)		&\prec (a_2,b_2,c_1) 		&\\
	(a_1,b_1,c_3)		&\prec (a_2,b_2,c_3)		&\textnormal{if } a_1<a_2\\
	(a_1,b_1,c_3)		&\prec (a_1,b_2,c_3)		&\textnormal{if } b_1<b_2
\end{array}
\end{eqnarray*}
and extend this order lexicographically to words. Using this order it can be shown analogously to \cite{Ablinger2009,Hoffman} that the quasi shuffle algebra $\mathcal{C}(n)/\mathcal{I}$ is the free polynomial algebras on the 
\textit{Lyndon} words with alphabet $A_C.$
Hence the number of algebraic independent sums in $\mathcal{S}(n)/\mathcal{I}$ which we also call basis sums 
is the number of \textit{Lyndon} words.
If we consider for example an index set with 4 letters, say $$\{\alpha_1,\alpha_2,\alpha_3,\alpha_4\}$$ and we look for the number of basis sums where the index  
$\alpha_i$ appears $n_i$ times, we can use the second Witt formula (\ref{HSWitt2}):
 $$\frac{1}{n}\sum_{d|n}{\mu(d)\frac{(\frac{n}{d})!}{(\frac{n_1}{d})!\cdots(\frac{n_4}{d})!}}, \ \ \ n=\sum_{i=1}^4{n_i}.$$
For a specific alphabet we can count as well the number of basis sums at a certain weight. For the alphabet 
$$\left\{\frac{1}{k},\frac{(-1)^k}{k},\frac{1}{2k+1},\frac{(-1)^k}{2k+1}\right\}$$ 
the number of basis sums at $w\geq 1$ is given by
 $$N_A(w) = \frac{1}{w}\sum_{d|w}{\mu\left(\frac{w}{d}\right)5^d}.$$
This is due to the fact that the cyclotomic harmonic sums of weight $w$ with alphabet $\{(\pm 1)^k/k, (\pm 1)^k/(2k+1)\}$ can be viewed as the words 
of length $w$ out of an alphabet with the five letters $\{0,(1,0,1),(1,0,-1),(2,1,1),(2,1,-1)\}$ and that the number of \textit{Lyndon} words of length $n$ over 
an alphabet of length $q$ is given by the first Witt formula~(\ref{HSWitt1}).

We can use an analogous method of the method presented in \cite{Ablinger2009,Bluemlein2004} for harmonic sums to find the basis sums together with the relations for the 
dependent sums; compare Section \ref{HSalgrel}. Here we want to give an example for cyclotomic harmonic sums at weight $w=2$. 
\begin{example} We consider the alphabet $\{(\pm 1)^k/k, (\pm 1)^k/(2k+1)\}.$ At weight $w=2$ we have for instance the 10 basis sums:
\begin{eqnarray*}
&&\S{(1,0,-2)}{n},\S{(1,0,2)}{n},\S{(2,1,-2)}{n},\S{(2,1,2)}{n},\S{(1,0,-1),(1,0,1)}{n},\S{(1,0,-1),(2,1,1)}{n},\\
&&\S{(1,0,-1),(2,1,-1)}{n},\S{(1,0,1),(2,1,1)}{n},\S{(2,1,-1),(1,0,1)}{n},\S{(2,1,-1),(2,1,1)}{n}
\end{eqnarray*}
together with the relations for the remaining sums:
\begin{eqnarray*}
\S{(2,1,1),(2,1,1)}{n}&=& \frac{1}{2} \S{(2,1,1)}{n}^2+\frac{1}{2}\S{(2,1,2)}{n}\\
\S{(2,1,1),(1,0,1)}{n}&=& \S{(2,1,1)}{n} \S{(1,0,1)}{n}+\S{(1,0,1)}{n}-2\S{(2,1,1)}{n}\\&&-\S{(1,0,1),(2,1,1)}{n}\\
\S{(1,0,1),(1,0,1)}{n}&=& \frac{1}{2} \S{(1,0,1)}{n}^2+\frac{1}{2}\S{(1,0,2)}{n}\\
\S{(2,1,1),(2,1,-1)}{n}&=& \S{(2,1,-2)}{n}+\S{(2,1,-1)}{n}\S{(2,1,1)}{n}-\S{(2,1,-1),(2,1,1)}{n}\\
\S{(2,1,1),(1,0,-1)}{n}&=& \S{(2,1,1)}{n}\S{(1,0,-1)}{n}+\S{(1,0,-1)}{n}-2\S{(2,1,-1)}{n}\\&&-\S{(1,0,-1),(2,1,1)}{n}\\
\S{(1,0,1),(2,1,-1)}{n}&=& \S{(1,0,-1)}{n}+\S{(1,0,1)}{n}\S{(2,1,-1)}{n}-2 \S{(2,1,-1)}{n}\\&&-\S{(2,1,-1),(1,0,1)}{n}\\
\S{(1,0,1),(1,0,-1)}{n}&=&\S{(1,0,-2)}{n}+\S{(1,0,-1)}{n}\S{(1,0,1)}{n}-\S{(1,0,-1),(1,0,1)}{n}\\
\S{(2,1,-1),(2,1,-1)}{n}&=& \frac{1}{2} \S{(2,1,-1)}{n}^2+\frac{1}{2} \S{(2,1,2)}{n}\\
\S{(2,1,-1),(1,0,-1)}{n}&=&\S{(1,0,1)}{n}+\S{(1,0,-1)}{n} \S{(2,1,-1)}{n}-2\S{(2,1,1)}{n}\\&&-\S{(1,0,-1),(2,1,-1)}{n}\\
\S{(1,0,-1),(1,0,-1)}{n}&=& \frac{1}{2}\S{(1,0,-1)}{n}^2+\frac{1}{2} \S{(1,0,2)}{n}.
\end{eqnarray*}
Hence we can use the 10 basis sums together with sums of lower weight to express all sums of weight $w=2$. Note that the sums of lower weight in this example 
are not yet reduced to a basis.
\label{CSRelationEx1}
\end{example}

\subsection{Differential Relations}
\label{CSdiffrel}
In Section \ref{CSdifferentiation} we described the differentiation of cyclotomic harmonic sums with respect to the upper summation limit. 
The differentiation leads to new relation. For instance we find
$$
\frac{\partial}{\partial n}\S{(2,1,2)}n=-4 \S{(2,1,3)}{n}+\frac{7}{4} \S{(1,0,3)}{\infty}+2 \S{(2,1,3)}{\infty}-2.
$$
Continuing the Example \ref{CSRelationEx1} we get
\begin{example}[Example \ref{CSRelationEx1} continued]From differentiation we get the additional relations
\begin{eqnarray*}
\S{(1,0,-2)}{n}&=& -\frac{\partial}{\partial n}\S{(1,0,-1)}{n}+\frac{1}{4} \S{(1,0,2)}{\infty}-\S{(2,1,2)}{\infty}-1\\
\S{(1,0,2)}{n}&=& \S{(1,0,2)}{\infty}-\frac{\partial}{\partial n}\S{(1,0,1)}{n}\\
\S{(2,1,-2)}{n}&=& -\frac{1}{2}\frac{\partial}{\partial n}\S{(2,1,-1)}{n}+\S{(4,1,2)}{\infty}-\S{(4,3,2)}{\infty}-\frac{1}{9}\\
\S{(2,1,2)}{n}&=& -\frac{1}{2} \frac{\partial}{\partial n}\S{(2,1,1)}{n}+\frac{3}{8} \S{(1,0,2)}{\infty}+\frac{1}{2} \S{(2,1,2)}{\infty}-\frac{1}{2}.
\end{eqnarray*}
Hence we could reduce the number of basis sums at weight $w=2$ to $6$ by introducing differentiation. The basis sums are:
\begin{eqnarray*}
 &&\S{(1,0,-1),(1,0,1)}{n},\S{(1,0,-1),(2,1,-1)}{n},\S{(1,0,-1),(2,1,1)}{n},\S{(1,0,1),(2,1,1)}{n},\\
 &&\S{(2,1,-1),(1,0,1)}{n},\S{(2,1,-1),(2,1,1)}{n}.
\end{eqnarray*}
\label{CSRelationEx2}
\end{example}

Note that we collect the derivatives in 
\begin{eqnarray*}
\S{(a_1,b_1,c_1),...,(a_k,b_k,c_k)}{n}^{(D)}
= \left\{\frac{\partial^N}{\partial n^N}\S{(a_1,b_1,c_1),...,(a_k,b_k,c_k)}{n};
N \in \N\right\}.
\end{eqnarray*}
and identify an appearance of a derivative of a cyclotomic harmonic sum with the cyclotomic harmonic sum itself.
For a motivation see also Section \ref{HSdiffrel}.

\subsection{Multiple Argument Relations}
\label{CSmultargrel}
For harmonic sums we found duplication relations. For cyclotomic harmonic sums we can even look at upper summation limits of the form $kn$ with $k\in \N$, as the 
synchronization of Section \ref{CSdef} suggests: We can always
synchronize a cyclotomic harmonic sum with upper summation limit $kn$ with $k\in \N$ to sums with upper summation limit $n.$ This leads again to new relations
for example we have
$$
\S{(2,1,2)}{3 n}=\frac{1}{9} \S{(2,1,2)}{n}+\S{(6,1,2)}{n}+\S{(6,5,2)}{n}-\frac{1}{(6 n+3)^2}-\frac{1}{(6 n+5)^2}+\frac{34}{225}.
$$

In the following subsection we consider duplication relations in detail.

\subsubsection*{Duplication Relations}
\label{CSduplrel}
Unlike for harmonic sums, where we had just one type of duplication relations, there are two different types of duplication relations for cyclotomic harmonic sums. 
We summarize them in the following theorems; compare \cite{Ablinger2011}.
\begin{thm}
We have the following relation:
\begin{eqnarray*}
\sum{\S{(a_m,b_m,\pm c_m),(a_{m-1},b_{m-1}, \pm c_{m-1}),\ldots,(a_1,b_1, \pm c_1)}{2n}}=2^m \S{(2 a_m,b_m, c_m),\ldots,(2 a_1,b_1, c_1)}{n}
\label{halfint}
\end{eqnarray*}
where we sum on the left hand side over the $2^m$ possible combinations concerning $\pm$.
\end{thm}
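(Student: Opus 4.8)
The plan is to prove the identity by induction on the depth $m$, following the pattern of the corresponding duplication relations for harmonic sums (Theorem~\ref{HSDuprel}) and for S-sums (Section~\ref{SSduplrel}). The single mechanism that drives the whole argument is the elementary observation that, since $\{+c,-c\}$ always consists of one positive and one negative number, $\sign{c}^i+\sign{-c}^i=1+(-1)^i$; this factor vanishes for every odd index $i$, so summing an innermost term over both signs replaces a summation of length $2n$ by one of length $n$ in which the first parameter $a$ is doubled to $2a$ while $b$ and $c$ stay unchanged. For the base case $m=1$ I would simply compute
\begin{eqnarray*}
\S{(a_1,b_1,c_1)}{2n}+\S{(a_1,b_1,-c_1)}{2n}
&=&\sum_{i=1}^{2n}\frac{1+(-1)^i}{(a_1 i+b_1)^{\abs{c_1}}}
=2\sum_{j=1}^{n}\frac{1}{(2a_1 j+b_1)^{\abs{c_1}}}\\
&=&2\,\S{(2a_1,b_1,c_1)}{n}.
\end{eqnarray*}

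For the inductive step I would split the $2^{m+1}$-fold sum on the left into the sum over the sign attached to the outermost index $(a_{m+1},b_{m+1},\pm c_{m+1})$ and the sum over the remaining $2^m$ sign choices, which affect only the inner cyclotomic harmonic sum of depth $m$. Peeling off the outermost summation variable via Definition~\ref{CSdefsum} gives
\begin{eqnarray*}
&&\sum\S{(a_{m+1},b_{m+1},\pm c_{m+1}),\ldots,(a_1,b_1,\pm c_1)}{2n}\\
&&\quad=\sum_{i=1}^{2n}\frac{1+(-1)^i}{(a_{m+1} i+b_{m+1})^{\abs{c_{m+1}}}}\sum\S{(a_m,b_m,\pm c_m),\ldots,(a_1,b_1,\pm c_1)}{i},
\end{eqnarray*}
where the inner $\sum$ still runs over the $2^m$ sign combinations. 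The factor $1+(-1)^i$ restricts the outer summation to even $i=2j$; substituting and invoking the induction hypothesis $\sum\S{(a_m,b_m,\pm c_m),\ldots}{2j}=2^m\,\S{(2a_m,b_m,c_m),\ldots,(2a_1,b_1,c_1)}{j}$ leaves
\begin{eqnarray*}
&&2^{m+1}\sum_{j=1}^{n}\frac{1}{(2a_{m+1} j+b_{m+1})^{\abs{c_{m+1}}}}\S{(2a_m,b_m,c_m),\ldots,(2a_1,b_1,c_1)}{j}\\
&&\quad=2^{m+1}\,\S{(2a_{m+1},b_{m+1},c_{m+1}),\ldots,(2a_1,b_1,c_1)}{n},
\end{eqnarray*}
which is exactly the claim at depth $m+1$.

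The computation is essentially routine and I do not anticipate a genuine obstacle. The only steps that require care are the bookkeeping of the $2^{m+1}$ sign combinations — making sure that the sign carried by each interior summation variable is summed over independently of the outermost one, so that the factorisation of the double sum into the $i$-dependent coefficient and the inner $2^m$-fold sum is legitimate — and checking that passing from upper limit $2n$ to $2j$ in the inner sum produces precisely the shape demanded by the induction hypothesis, with the doubled parameters $2a_i$ in the correct positions and $b_i,c_i$ unchanged. This index bookkeeping is where an error would most easily slip in.
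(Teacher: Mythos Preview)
Your proof is correct and follows essentially the same inductive argument as the paper: both prove the base case by observing that summing over the two signs of the outermost index kills the odd-index terms, and both handle the inductive step by peeling off the outermost summation, summing over its sign to reduce to even $i=2j$, and then applying the induction hypothesis to the inner depth-$m$ sum at argument $2j$. Your use of the factor $1+(-1)^i$ is a slightly more compact way of carrying out the even/odd split that the paper performs explicitly, but the mechanism is identical.
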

\begin{proof}
We proceed by induction on $m.$ Let $m=1:$
\begin{eqnarray*}
\S{(a_1,b_1,c_1)}{2n}+\S{(a_1,b_1,-c_1)}{2n}&=&\sum_{i=1}^{2 n}\frac{1}{(a_1 i +b_1)^{c_1}}+\sum_{i=1}^{2 n}\frac{(-1)^i}{(a_1 i +b_1)^{c_1}}\\
		&=&\sum_{i=1}^{n}\left( \frac{1}{(a_1 2i +b_1)^{c_1}}+\frac{1}{(a_1 (2i -1) +b_1)^{c_1}}\right)\\
		&&+\sum_{i=1}^{n}\left( \frac{1}{(a_1 2i +b_1)^{c_1}}-\frac{1}{(a_1 (2i -1) +b_1)^{c_1}}\right)\\
		&=&2 \sum_{i=1}^{n}\frac{1}{(2 a_1 i +b_1)^{c_1}}=2 \S{(2 a_1,b_1,c_1)}{n}.\\
\end{eqnarray*}
In the following we use the abbreviation:
$$
A(n):=\sum{\S{(a_m,b_m,\pm c_m),\ldots,(a_1,b_1,\pm c_1)}{n}}.
$$
Now we assume that the theorem holds for $m:$
\begin{eqnarray*}
&&\sum{\S{(a_{m+1},b_{m+1},\pm c_{m+1}),\ldots,(a_1,b_1,\pm c_1)}{2n}}\\
	&&\hspace{1cm}=\sum_{i=1}^{2n}\frac{1}{(a_{m+1}i+b_{m+1})^{c_{m+1}}}\sum{\S{(a_m,b_m,\pm c_m),\ldots,(a_1,b_1,\pm c_1)}{i}}\\
	&&\hspace{1cm} \ +\sum_{i=1}^{2n}\frac{(-1)^i}{(a_{m+1}i+b_{m+1})^{c_{m+1}}}\sum{\S{(a_m,b_m,\pm c_m),\ldots,(a_1,b_1,\pm c_1)}{i}}\\
	&&\hspace{1cm}=\sum_{i=1}^{2n}\frac{A(i)}{(a_{m+1}i+b_{m+1})^{c_{m+1}}}+\sum_{i=1}^{2n}\frac{(-1)^i A(i)}{(a_{m+1}i+b_{m+1})^{c_{m+1}}}\\
	&&\hspace{1cm}=\sum_{i=1}^{n}\left(\frac{A(2i)}{(2 a_{m+1}i+b_{m+1})^{c_{m+1}}}+\frac{A(2 i-1)}{((2 i-1) a_{m+1}+b_{m+1})^{c_{m+1}}}\right)\\
	&&\hspace{1cm} \ +\sum_{i=1}^{n}\left(\frac{(-1)^{2i}A(2i)}{(2 a_{m+1}i+b_{m+1})^{c_{m+1}}}+\frac{(-1)^{2i-1}A(2 i-1)}{( (2 i-1) a_{m+1}+b_{m+1})^{c_{m+1}}}\right)\\
	&&\hspace{1cm}= 2 \sum_{i=1}^{n}\frac{A(2i)}{(2 a_{m+1}i+b_{m+1})^{c_{m+1}}}\\
	&&\hspace{1cm}= 2 \sum_{i=1}^{n}\frac{1}{(2 a_{m+1}i+b_{m+1})^{c_{m+1}}} 2^m \S{(2 a_m,b_m,c_m),\ldots,(2 a_1,b_1,c_1)}{i}\\
	&&\hspace{1cm}= 2^{m+1} \S{(2 a_{m+1},b_{m+1},c_{m+1}),\ldots,(2 a_1,b_1, c_1)}{n}.
\end{eqnarray*}
\end{proof}

Similar to the previous relation we have the following theorem.
\begin{thm}
Let $d_i \in \{-1,1\};$ we have the following relation:
\begin{eqnarray*}
&&\sum{d_m d_{m-1}\cdots d_1 \S{(a_m,b_m,d_m c_m),(a_{m-1},b_{m-1}, d_{m-1} c_{m-1}),\ldots,(a_1,b_1,d_1 c_1)}{2n}}\\
	&&\hspace{1cm}=2^m \S{(2 a_m,b_m-a_m, c_m),\ldots,(2 a_1,b_1-a_1, c_1)}{n},
\label{halfint2}
\end{eqnarray*}
where we sum on the left hand side over the $2^m$ possible combinations concerning $d_i$.
\end{thm}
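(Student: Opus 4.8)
The plan is to follow the template of the proof of the preceding duplication theorem — induction on $m$ together with the splitting of a sum over $1\le i\le 2n$ into its even- and odd-index parts — but with one extra bookkeeping twist forced by the alternating sign $d_1\cdots d_m$. First I would abbreviate the left-hand side as
\[
B_m(N):=\sum_{d_1,\ldots,d_m\in\{-1,1\}} d_1\cdots d_m\,\S{(a_m,b_m,d_m c_m),(a_{m-1},b_{m-1},d_{m-1}c_{m-1}),\ldots,(a_1,b_1,d_1 c_1)}{N},
\]
with the convention $B_0(N):=1$, so that the assertion reads $B_m(2n)=2^m\,\S{(2a_m,b_m-a_m,c_m),\ldots,(2a_1,b_1-a_1,c_1)}{n}$. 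Peeling off the outermost summation variable in Definition~\ref{CSdefsum} (the one attached to the letter $(a_m,b_m,d_m c_m)$), using $\sign{d_m c_m}=d_m$ since $c_m\in\N$, and interchanging that variable with the sum over $d_m$, one obtains
\[
B_m(N)=\sum_{i=1}^{N}\frac{B_{m-1}(i)}{(a_m i+b_m)^{c_m}}\sum_{d_m\in\{-1,1\}}d_m^{\,i+1}.
\]
Since $\sum_{d_m}d_m^{\,i+1}=1+(-1)^{i+1}$ equals $2$ for odd $i$ and $0$ for even $i$, writing the odd indices $i\le N$ as $i=2j-1$ with $1\le j\le\lceil N/2\rceil$ and using $a_m(2j-1)+b_m=2a_m j+(b_m-a_m)$ gives the clean recursion
\[
B_m(N)=2\sum_{j=1}^{\lceil N/2\rceil}\frac{B_{m-1}(2j-1)}{(2a_m j+b_m-a_m)^{c_m}}.
\]

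The crucial observation — and essentially the only difference from the earlier proof, where it is the even indices that survive and no shift of the second component occurs — is that the right-hand side above depends on $N$ only through $\lceil N/2\rceil$; hence $B_m(2K-1)=B_m(2K)$ for all $K\ge1$ and $m\ge1$. Thus, once $B_m(2K)=2^m\,\S{(2a_m,b_m-a_m,c_m),\ldots}{K}$ is known, the value at the odd argument $2K-1$ is the same, which is exactly what the recursion needs one level up. I would then run the induction on $m$: the base case $m=1$ is the short computation $B_1(2n)=\S{(a_1,b_1,c_1)}{2n}-\S{(a_1,b_1,-c_1)}{2n}=2\sum_{i=1,\ i\ \mathrm{odd}}^{2n}(a_1 i+b_1)^{-c_1}=2\,\S{(2a_1,b_1-a_1,c_1)}{n}$, entirely analogous to the one in the preceding theorem; for the inductive step the hypothesis for $m-1$ combined with $B_{m-1}(2j-1)=B_{m-1}(2j)$ yields $B_{m-1}(2j-1)=2^{m-1}\,\S{(2a_{m-1},b_{m-1}-a_{m-1},c_{m-1}),\ldots}{j}$, and substituting this into the recursion for $B_m(2n)$ produces
\[
B_m(2n)=2\sum_{j=1}^{n}\frac{2^{m-1}\,\S{(2a_{m-1},b_{m-1}-a_{m-1},c_{m-1}),\ldots}{j}}{(2a_m j+b_m-a_m)^{c_m}}=2^m\,\S{(2a_m,b_m-a_m,c_m),(2a_{m-1},b_{m-1}-a_{m-1},c_{m-1}),\ldots}{n},
\]
the last equality being simply the defining recursion of the cyclotomic harmonic sum with outer letter $(2a_m,b_m-a_m,c_m)$ (again $c_m\in\N$, so the sign factor is $1$). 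Since the left-hand side of the theorem is precisely $B_m(2n)$, this finishes the argument.

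The main obstacle is therefore conceptual rather than computational: recognizing that the extra factor $d_1\cdots d_m$ moves the recursion onto the odd arguments $2j-1$, where the defining recursion of cyclotomic sums is not directly in the right form, and resolving this via the parity identity $B_m(2K-1)=B_m(2K)$ (equivalently, by carrying the slightly strengthened induction hypothesis $B_m(N)=2^m\,\S{(2a_m,b_m-a_m,c_m),\ldots}{\lceil N/2\rceil}$ for all $N$). One should also note in passing that $2a_m j+b_m-a_m\ge a_m+b_m>0$ for $j\ge1$, so every denominator is positive and all the sums on the right are well defined even though $b_m-a_m$ may be negative. Everything else — the interchange of finite sums, the reindexing of odd indices, and the collapse $1+(-1)^{i+1}$ — is elementary and runs parallel to the proof of the previous theorem.
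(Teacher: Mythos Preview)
Your proof is correct and follows essentially the same strategy as the paper's: induction on $m$, with the pivotal step being the parity identity $B_m(2K-1)=B_m(2K)$ (what the paper writes as $A(2n-1)=A(2n)$), which lets the inductive hypothesis at the odd argument $2j-1$ be replaced by its value at $2j$. The only cosmetic difference is how that parity identity is obtained: the paper subtracts the $i=2n$ term from $A(2n)$ and observes that $\sum_{d_m}d_m=0$ kills it, whereas you read it off directly from your recursion by noting that the odd-index filter makes the right-hand side depend on $N$ only through $\lceil N/2\rceil$ --- the two arguments are equivalent.
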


\begin{proof}
We proceed by induction on $m.$ Let $m=1:$
\begin{eqnarray*}
\S{(a_1,b_1,c_1)}{2n}-\S{(a_1,b_1,-c_1)}{2n}&=&\sum_{i=1}^{2 n}\frac{1}{(a_1 i +b_1)^{c_1}}+\sum_{i=1}^{2 n}\frac{(-1)^i}{(a_1 i +b_1)^{c_1}}\\
		&=&\sum_{i=1}^{n}\left( \frac{1}{(a_1 2i +b_1)^{c_1}}+\frac{1}{(a_1 (2i -1) +b_1)^{c_1}}\right)\\
		&&-\sum_{i=1}^{n}\left( \frac{1}{(a_1 2i +b_1)^{c_1}}-\frac{1}{(a_1 (2i -1) +b_1)^{c_1}}\right)\\
		&=&2 \sum_{i=1}^{n}\frac{1}{((2i -1) a_1+b_1)^{c_1}}=2 \S{(2 a_1,b_1-a_1,c_1)}{n}.\\
\end{eqnarray*}
In the following we use the abbreviation:
$$
A(n):=\sum{d_m \cdots d_1 \S{(a_m,b_m,d_m c_m),\ldots,(a_1,b_1, d_1 c_1)}{n}}.
$$
Note that
\begin{eqnarray*}
A(2 n-1)&=&A(2n)-\sum{d_m \cdots d_1 \frac{d_m^{2n} \S{(a_{m-1},b_{m-1},d_{m-1} c_{m-1}),\ldots,(a_1,b_1, d_1 c_1)}{n} } {(a_m i+b_m)^{c_m}}}\\
	&=&A(2n)-\sum{d_{m-1} \cdots d_1 \frac{\S{(a_{m-1},b_{m-1},d_{m-1} c_{m-1}),\ldots,(a_1,b_1, d_1 c_1)}{n} } {(a_m i+b_m)^{c_m}}}\\
	&&+\sum{d_{m-1} \cdots d_1 \frac{\S{(a_{m-1},b_{m-1},d_{m-1} c_{m-1}),\ldots,(a_1,b_1, d_1 c_1)}{n} } {(a_m i+b_m)^{c_m}}}=A(2n).
\end{eqnarray*}
Now we assume that the theorem holds for $m.$ Then we conclude that
\begin{eqnarray*}
&&\sum{d_{m+1} \cdots d_1 \S{(a_{m+1},b_{m+1},d_{m+1} c_{m+1}),\ldots,(a_1,b_1,d_1 c_1)}{2n}}\\
	&&\hspace{1cm}=\sum_{i=1}^{2n}\frac{1}{(a_{m+1}i+b_{m+1})^{c_{m+1}}}\sum{d_m \cdots d_1\S{(a_m,b_m,d_m c_m),\ldots,(a_1,b_1,d_1 c_1)}{i}}\\
	&&\hspace{1cm} \ -\sum_{i=1}^{2n}\frac{(-1)^i}{(a_{m+1}i+b_{m+1})^{c_{m+1}}}\sum{d_m \cdots d_1\S{(a_m,b_m,d_m c_m),\ldots,(a_1,b_1,d_1 c_1)}{i}}\\
	&&\hspace{1cm}=\sum_{i=1}^{2n}\frac{A(i)}{(a_{m+1}i+b_{m+1})^{c_{m+1}}}-\sum_{i=1}^{2n}\frac{(-1)^i A(i)}{(a_{m+1}i+b_{m+1})^{c_{m+1}}}\\
	&&\hspace{1cm}=\sum_{i=1}^{n}\left(\frac{A(2i)}{(2 a_{m+1}i+b_{m+1})^{c_{m+1}}}+\frac{A(2 i-1)}{((2 i-1) a_{m+1}+b_{m+1})^{c_{m+1}}}\right)\\
	&&\hspace{1cm} \ -\sum_{i=1}^{n}\left(\frac{(-1)^{2i}A(2i)}{(2 a_{m+1}i+b_{m+1})^{c_{m+1}}}+\frac{(-1)^{2i-1}A(2 i-1)}{( (2 i-1) a_{m+1}+b_{m+1})^{c_{m+1}}}\right)\\
	&&\hspace{1cm}= 2 \sum_{i=1}^{n}\frac{A(2i-1)}{((2 i -1) a_{m+1}+b_{m+1})^{c_{m+1}}}= 2 \sum_{i=1}^{n}\frac{A(2i)}{((2 i -1) a_{m+1}+b_{m+1})^{c_{m+1}}}\\
	&&\hspace{1cm}= 2 \sum_{i=1}^{n}\frac{1}{(2 a_{m+1}i+b_{m+1}- a_{m+1})^{c_{m+1}}} 2^m \S{(2 a_m,b_m-a_m, c_m),\ldots,(2 a_1,b_1-a_1, c_1)}{n}\\
	&&\hspace{1cm}= 2^{m+1} \S{(2 a_{m+1},b_{m+1}-a_{m+1},c_{m+1}),\ldots,(2 a_1,b_1-a_1,c_1)}{n}.
\end{eqnarray*}
\end{proof}

Examples for these relations are
\begin{eqnarray*}
\S{(2,1,2)}{2 n}&=&2 \S{(4,1,2)}{n}-\S{(2,1,-2)}{2 n} \\
\S{(2,1,2)}{2 n}&=&\S{(2,1,-2)}{2 n}+2 \S{(4,3,2)}{n}-\frac{2}{(4 n+3)^2}+\frac{2}{9}.
\end{eqnarray*}

\begin{example}[Example \ref{HSRelationEx2} continued] At weight $w=2$ with letters $\{(\pm 1)^k/k, (\pm 1)^k/(2k+1)\}$ the duplication relations and the multiple 
integer relations do not lead to a further reduction of the basis.
\end{example}
\begin{remark}
 As for differentiation, when we are counting basis elements, we identify an appearance of a cyclotomic harmonic sum $\S{\ve a}{kn}$ with the cyclotomic harmonic sum $\S{\ve a}{n}$.
\end{remark}

\subsection{Number of Basis Elements for Specific Alphabets}

In this section we look at the different number of basis sums that we get using combinations of several of the relations discussed in the previous sections. First we restrict to the alphabet
$$\left\{\frac{(\pm 1)^k}{k}, \frac{(\pm 1)^k}{(2k+1)}\right\}$$
and look for the number of basis sums. We conjecture the following formulas which we checked up to weight 5 and we summarize the concrete numbers in Table \ref{cyclo2uptow5} up to weight 5:
\begin{eqnarray*}
{\sf N_S}(w)&=&4\cdot5^{w-1}\\
{\sf N_A}(w)&\overset{w>1}{=}& \frac{1}{w}\sum_{d|w}{\mu\left(\frac{w}{d}\right)5^d}\\
{\sf N_{D}}(w)&=&{\sf N_S}(w)-{\sf N_S}(w-1)\overset{w>1}{=}16\cdot5^{w-2}\\
{\sf N_{H_1}}(w)={\sf N_{H_2}}(w)={\sf N_{M}}(w)&=&{\sf N_S}(w)-2^{w-1}=4\cdot5^{w-1}-2^{w-1}\\
{\sf N_{H_1H_2}}(w)&=&{\sf N_S}(w)-(2\cdot2^{w-1}-1)=4\cdot5^{w-1}-(2\cdot2^{w-1}-1)\\
{\sf N_{H_1M}}(w)={\sf N_{H_2M}}(w)&=&{\sf N_S}(w)-2\cdot2^{w-1}=4\cdot5^{w-1}-2\cdot2^{w-1}\\
{\sf N_{H_1H_2M}}(w)&=&{\sf N_S}(w)-(3\cdot2^{w-1}-1)=4\cdot5^{w-1}-(3\cdot2^{w-1}-1)\\
{\sf N_{AD}}(w)&=&{\sf N_A}(w)-{\sf N_A}(w-1)\\
	&\overset{w>2}{=}&\frac{1}{w}\sum_{d|w}{\mu\left(\frac{w}{d}\right)5^d}-\frac{1}{w-1}\sum_{d|(w-1)}{\mu\left(\frac{w-1}{d}\right)5^d} \\
{\sf N_{AH_1H_2M}}(w)&\overset{w>1}{=}& \frac{1}{w}\sum_{d|w}{\mu\left(\frac{w}{d}\right)5^d}-\left(3\cdot\frac{1}{w}\sum_{d|w}{\mu\left(\frac{w}{d}\right)2^d}-1\right) \\
{\sf N_{DH_1H_2M}}(w)&=&{\sf N_{H_1H_2M}}(w)-{\sf N_{H_1H_2M}}(w-1)\\&\overset{w>1}{=}&16\cdot5^{w-2}-3\cdot2^{w-2} \\
{\sf N_{ADH_1H_2M}}(w)&=&{\sf N_{AH_1H_2M}}(w)-{\sf N_{AH_1H_2M}}(w-1) \\
	&\overset{w>2}{=}&\frac{1}{w}\sum_{d|w}{\mu\left(\frac{w}{d}\right)(5^d-3\cdot2^d)}\\&&-\frac{1}{w-1}\sum_{d|(w-1)}{\mu\left(\frac{w-1}{d}\right)(5^d-3\cdot2^d)}
\end{eqnarray*}
Here  {$\sf N_S$} counts the number of cyclotomic harmonic sums concerning the alphabet $\{(\pm 1)^k/k, (\pm 1)^k/(2k+1)\}$, while {$\sf N_A, N_D, N_{H_1},N_{H_2}$} and {$\sf N_M$} count the number 
of basis sums using algebraic, differential, the first dupplication, the second dupplication and the multiple integer relations respectively. All the other quantities count the number 
of basis sums with respect to combinations of different types of relations. For instance
{$\sf N_{ADH_1H_2M}$} gives the number of basis cyclotomic harmonic sum using all previously mentioned relations.

\begin{table}\centering
\scalebox{0.85}{%
\begin{tabular}{| r | r | r | r | r | r | r | r | r | r | r | r | }
\hline	
$w$& ${\sf N_S}$ & ${\sf N_A}$ & ${\sf N_D}$& ${\sf N_{H_1}}$& ${\sf N_{H_1H_2}}$& ${\sf N_{H_1M}}$& ${\sf N_{H_1H_2M}}$ &${\sf N_{AD}}$& ${\sf N_{AH_1H_2M}}$& ${\sf N_{DH_1H_2M}}$& ${\sf N_{ADH_1H_2M}}$  \\
\hline	
  1 &     4 &   4 &    4 &    3 &    3 &    2 &    2 &   4 &    2 &    2 &   2\\
  2 &    20 &  10 &   16 &   18 &   17 &   16 &   15 &   6 &    8 &   13 &   6\\
  3 &   100 &  40 &   80 &   96 &   93 &   92 &   89 &  30 &   35 &   74 &  27\\
  4 &   500 & 150 &  400 &  492 &  485 &  484 &  477 & 110 &  142 &  388 & 107\\
  5 &  2500 & 624 & 2000 & 2484 & 2469 & 2468 & 2453 & 474 &  607 & 1976 & 465\\ 
\hline
\end{tabular}
}
\caption{\label{cyclo2uptow5}Number of basis cyclotomic harmonic sums concerning the alphabet $\{(\pm 1)^k/k, (\pm 1)^k/(2k+1)\}$ and different relations up to weight 5.}
\end{table}

Now we consider the number of basic cyclotomic harmonic sums concerning the alphabet
$$\left\{\frac{(\pm 1)^k}{k}, \frac{(\pm 1)^k}{(3k+1)},\frac{(\pm 1)^k}{(3k+2)}\right\}$$
up to weight 4. We conjecture the following formulas which we checked up to weight 4 and we summarize the concrete numbers in Table \ref{cyclo3uptow4} up to weight 4: 
\begin{eqnarray*}
{\sf N_S}(w)&=&6\cdot7^{w-1}\\
{\sf N_A}(w)&\overset{w>1}{=}& \frac{1}{w}\sum_{d|w}{\mu\left(\frac{w}{d}\right)7^d}\\
{\sf N_{D}}(w)&=&{\sf N_S}(w)-{\sf N_S}(w-1)\overset{w>1}{=}36\cdot7^{w-2}\\
{\sf N_{H_1}}(w)={\sf N_{M}}(w)&=&{\sf N_S}(w)-2\cdot 3^{w-1}=6\cdot7^{w-1}-2\cdot 3^{w-1}\\
{\sf N_{H_2}}(w)&=&{\sf N_S}(w)-2^{w-1}=6\cdot7^{w-1}-2^{w-1}\\
{\sf N_{H_1H_2}}(w)&=&{\sf N_S}(w)-(2^{w-1}+2\cdot 3^{w-1})=6\cdot7^{w-1}-2\cdot 3^{w-1}-2^{w-1}\\
{\sf N_{H_1H_2M}}(w)&=&{\sf N_S}(w)-(2^{w-1}+4\cdot 3^{w-1}-1)=6\cdot7^{w-1}-4\cdot3^{w-1}-2^{w-1}+1\\
{\sf N_{AD}}(w)&=&{\sf N_A}(w)-{\sf N_A}(w-1)\\
	&\overset{w>2}{=}&\frac{1}{w}\sum_{d|w}{\mu\left(\frac{w}{d}\right)7^d}-\frac{1}{w-1}\sum_{d|(w-1)}{\mu\left(\frac{w-1}{d}\right)7^d} \\
{\sf N_{AH_1H_2M}}(w)&\overset{w>1}{=}& \frac{1}{w}\sum_{d|w}{\mu\left(\frac{w}{d}\right)(7^d-2\cdot3^d-2^d)}+1\\
{\sf N_{DH_1H_2M}}(w)&=&{\sf N_{H_1H_2M}}(w)-{\sf N_{H_1H_2M}}(w-1)\\&\overset{w>1}{=}&36\cdot7^{w-2}-8\cdot3^{w-2}-2^{w-2} \\
{\sf N_{ADH_1H_2M}}(w)&=&{\sf N_{AH_1H_2M}}(w)-{\sf N_{AH_1H_2M}}(w-1) \\
	&\overset{w>2}{=}&\frac{1}{w}\sum_{d|w}{\mu\left(\frac{w}{d}\right)(7^d-2\cdot 3^d- 2^d)}\\&&-\frac{1}{w-1}\sum_{d|(w-1)}{\mu\left(\frac{w-1}{d}\right)(7^d-2\cdot3^d-2^d)}.
\end{eqnarray*}

\begin{table}\centering
\scalebox{0.85}{%
\begin{tabular}{| r | r | r | r | r | r | r  | r | r | r | r | r| }
\hline	
$w$& ${\sf N_S}$ & ${\sf N_A}$ & ${\sf N_D}$& ${\sf N_{H_1}}$ & ${\sf N_{H_2}}$ & ${\sf N_{H_1H_2}}$ &${\sf N_{AD}}$&${\sf N_{H_1H_2M}}$& ${\sf N_{AH_1H_2M}}$& ${\sf N_{DH_1H_2M}}$& ${\sf N_{ADH_1H_2M}}$  \\
\hline	
  1 &     6 &   6 &    6 &    4 &    5 &    3 &    6 &    2 &    2 &    2 &    2 \\
  2 &    42 &  21 &   36 &   36 &   40 &   34 &   15 &   29 &   15 &   27 &   13 \\
  3 &   294 & 112 &  252 &  276 &  290 &  272 &   91 &  255 &   95 &  226 &   80 \\
  4 &  2058 & 588 & 1764 & 2004 & 2050 & 1996 &  476 & 1943 &  550 & 1688 &  455 \\
\hline
\end{tabular}
}
\caption{\label{cyclo3uptow4}Number of basis cyclotomic harmonic sums concerning the alphabet $\{(\pm 1)^k/k, (\pm 1)^k/(3k+1), (\pm 1)^k/(3k+2)\}$ and different relations up to weight~4.}
\end{table}

Finally in Table \ref{cyclouptol20w1} and Table \ref{cyclouptol20w2} we summarize the number of basis elements for different cyclotomies at weight $w=1$ and $w=2$ respectively. At 
weight $w=1$ we conjecture the following counting relation for the basis elements which was tested up to $l = 300.$ Let $p_i$ be pairwise distinct primes $>2,$ and let $k_i$ be positive integers.
The number ${\sf N}(l)$ of basis elements at $w = 1$ and cyclotomy $l$ are given by
\begin{eqnarray*}
{\sf N}(l)&=&\left\{
		  	\begin{array}{lll}
						l,\  & l=1 \text{ or } l=2^k\\
						\prod_{i=1}^n(p_i-1)p_i^{k_i-1},\  & l=\prod_{i=1}^np_i^{k_i}\\
						2^k\cdot \prod_{i=1}^n(p_i-1)p_i^{k_i-1},\  & l=2^k\prod_{i=1}^np_i^{k_i}.
			\end{array}
		\right.
\end{eqnarray*} 
For weight $w=2$ we guessed explicit formulas for some of the columns:
\begin{eqnarray*}
{\sf N_S}(l)&=&2\cdot l (2\cdot l + 1)\\
{\sf N_A}(l)&=&\frac{1}{2}\sum_{d|2}{\mu\left(\frac{2}{d}\right)(2l+1)^d}= l (2\cdot l + 1)\\
{\sf N_{D}}(l)&=&2\cdot l (2\cdot l + 1)-2\cdot l=4\cdot l^2\\
{\sf N_{H_1H_2}}(l)&=&\frac{1}{4} (-1)^l (l+1)+\frac{1}{4} (2\cdot l+1) (7\cdot l-1)\\
{\sf N_{AD}}(l)&=& l (2\cdot l + 1)-2\cdot l = 2\cdot l^2-l.
\end{eqnarray*}

\begin{table}
\centering
\scalebox{0.80}{
\begin{tabular}{|r|r|r|r|r|r|r|r|r|r|r|r|r|r|r|r|r|r|r|r|r|}
\hline
\multicolumn{1}{|c}{$l$}             &
\multicolumn{1}{|c}{1}                 &
\multicolumn{1}{|c}{2}                 &
\multicolumn{1}{|c}{3}                 &
\multicolumn{1}{|c}{4}                 &
\multicolumn{1}{|c}{5}                 &
\multicolumn{1}{|c}{6}                 &
\multicolumn{1}{|c}{7}                 &
\multicolumn{1}{|c}{8}                 &
\multicolumn{1}{|c}{9}                 &
\multicolumn{1}{|c}{10}                &
\multicolumn{1}{|c}{11}                &
\multicolumn{1}{|c}{12}                &
\multicolumn{1}{|c}{13}                &
\multicolumn{1}{|c}{14}                &
\multicolumn{1}{|c}{15}                &
\multicolumn{1}{|c}{16}                &
\multicolumn{1}{|c}{17}                &
\multicolumn{1}{|c}{18}                &
\multicolumn{1}{|c}{19}                &
\multicolumn{1}{|c|}{20}               \\
\hline
{\sf sums} &  2 & 4 & 6 & 8 & 10 & 12 & 14 & 16 & 18 & 20 & 22 & 24 & 26 & 28 & 30 & 32 & 34 & 36 & 38 & 40 \\
\hline
{\sf basis}
& 1 & 2 & 2 & 4 &  4 &  4 &  6 &  8 & 6 & 8 & 10 & 8 &12 & 12 & 8 & 16 & 16 & 12 & 18 & 16\\
\hline
{\sf new basis}
& 1 & 1 & 1 & 3 &  3 &  1 &  5 &  6 & 4 & 3 & 9 & 3 &11 & 5 & 3 & 12 & 15 & 4 & 17 & 9\\
{\sf sums}
& & & & & & & & & &
& & & & & & & & & & \\
\hline
\end{tabular}
}
\caption{\label{cyclouptol20w1}The number of the $w = 1$ cyclotomic harmonic sums up to $l =20$ and the corresponding number of basis elements at fixed value of $l$, and the new basis elements in ascending sequence.}
\end{table}

\begin{table}
\centering
\scalebox{0.80}{
\begin{tabular}{|r|r|r|r|r|r|r|r|r|r|}
\hline
 $l$& ${\sf N_S}$ & ${\sf N_A}$ & ${\sf N_D}$&  ${\sf N_{H_1H_2}}$& ${\sf N_{M}}$ &${\sf N_{AD}}$& ${\sf N_{AH_1H_2M}}$& ${\sf N_{DH_1H_2M}}$& ${\sf N_{ADH_1H_2M}}$  \\
\hline
  1 &   6 &   3 &   4 &   4 &   6 &   1 &   2  &   3 &   1\\
  2 &  20 &  10 &  16 &  17 &  18 &   6 &   8  &  13 &   6\\
  3 &  42 &  21 &  36 &  34 &  36 &  15 &  15  &  27 &  13\\
  4 &  72 &  36 &  64 &  62 &  66 &  28 &  29  &  52 &  25\\
  5 & 110 &  55 & 100 &  92 & 104 &  45 &  44  &  83 &  40\\
  6 & 156 &  78 & 144 & 135 & 126 &  66 &  55  & 103 &  51\\
  7 & 210 & 105 & 196 & 178 & 204 &  91 &  87  & 167 &  81\\
  8 & 272 & 136 & 256 & 236 & 252 & 120 & 110  & 208 & 102\\
  9 & 342 & 171 & 324 & 292 & 300 & 153 & 128  & 247 & 122\\
 10 & 420 & 210 & 400 & 365 & 372 & 190 & 162  & 311 & 154\\
 11 & 506 & 253 & 484 & 434 & 500 & 231 & 215  & 419 & 205\\
 12 & 600 & 300 & 576 & 522 & 492 & 276 & 213  & 412 & 205\\
 13 & 702 & 351 & 676 & 604 & 696 & 325 & 300  & 587 & 288\\
 14 & 812 & 406 & 784 & 707 & 738 & 378 & 321  & 623 & 309\\
 15 & 930 & 465 & 900 & 802 & 784 & 435 & 335  & 655 & 327\\
 16 &1056 & 528 &1024 & 920 & 984 & 496 & 428  & 832 & 412\\
 17 &1190 & 595 &1156 &1028 &1184 & 561 & 512  &1007 & 496\\
 18 &1332 & 666 &1296 &1161 &1098 & 630 & 474  & 927 & 462\\
 19 &1482 & 741 &1444 &1282 &1476 & 703 & 639  &1259 & 621\\
 20 &1640 & 820 &1600 &1430 &1464 & 780 & 635  &1244 & 619\\
\hline
\end{tabular}
}
\caption{\label{cyclouptol20w2}Number of basis elements of the $ w = 2$ cyclotomic harmonic sums up to cyclotomy $l = 20$.}
\end{table}

\section{Cyclotomic Harmonic Sums at Infinity}

Not all cyclotomic harmonic sums are finite at infinity, since for example $\lim_{n\rightarrow \infty} \S{(2,1,1)}n$ does not exist.
In fact, we have the following lemma, compare Lemma \ref{CSconsumlem}:
\begin{lemma}
Let $a_i, p \in \N,b_i\in\N_0$ and $c_i,\in \Z^*.$
The cyclotomic harmonic sum $\S{(a_1,b_1,c_1),\ldots,(a_p,b_p,c_p)}n$ is convergent, when $n\rightarrow \infty$, if and only if $c_1 \neq 1.$
\label{CSconsumlem}
\end{lemma}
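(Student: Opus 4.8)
The plan is to establish both implications by induction on the depth $p$, exploiting the nested structure of Definition \ref{CSdefsum} together with the fact that the inner sum can only grow polylogarithmically. For $p=1$ the statement is immediate: if $c_1=1$ then $\S{(a_1,b_1,1)}{n}=\sum_{i=1}^{n}(a_1i+b_1)^{-1}$ grows like $\tfrac{1}{a_1}\log n$ and hence diverges, while if $\abs{c_1}\ge 2$ the series $\sum_{i\ge1}\sign{c_1}^{i}(a_1i+b_1)^{-\abs{c_1}}$ converges absolutely by comparison with $\sum i^{-2}$ (the sign factor is harmless). So I assume $p\ge2$ and the statement for depth $p-1$, and write $\S{(a_1,b_1,c_1),\overline{\ve{m}}}{n}=\sum_{i=1}^{n}\sign{c_1}^{i}(a_1i+b_1)^{-\abs{c_1}}\,\S{\overline{\ve{m}}}{i}$ with $\overline{\ve{m}}=(a_2,b_2,c_2),\ldots,(a_p,b_p,c_p)$.

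For the direction $c_1\ne1\Rightarrow$ convergence, the key ingredient is a crude a priori bound on the inner sum: relaxing the summation order of a nested sum of nonnegative terms can only increase it, and $(a_jl+b_j)^{-\abs{c_j}}\le(a_jl+b_j)^{-1}$, so $\bigl|\S{\overline{\ve{m}}}{i}\bigr|\le\prod_{j=2}^{p}\S{(a_j,b_j,1)}{i}=O\bigl((\log(i+2))^{p-1}\bigr)$. If $\abs{c_1}\ge2$ this gives $\sum_{i\ge1}\bigl|\S{\overline{\ve{m}}}{i}\bigr|(a_1i+b_1)^{-\abs{c_1}}\le C\sum_{i\ge1}(\log(i+2))^{p-1}(a_1i+b_1)^{-2}<\infty$, so $\S{(a_1,b_1,c_1),\overline{\ve{m}}}{n}$ converges absolutely, settling the ``if'' part.

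For the direction $c_1=1\Rightarrow$ divergence, the most robust route is the integral representation of Section \ref{CSintrep}. With outermost letter $(a_1,b_1,1)$, that representation, after the partial-fractioning and integration-by-parts normalization into Mellin transforms of Section \ref{CSCyloMel}, contains a term of the shape $\int_0^1\frac{(x^{kn+p}-1)\,h(x)}{x-1}\,dx$ with $h$ a cyclotomic harmonic polylogarithm that is continuous and non-vanishing (indeed positive) at $x=1$, once all $(1,0)$-letters have been pulled out. Since $\frac{x^{kn+p}-1}{x-1}=-\sum_{j=0}^{kn+p-1}x^{j}\to-\tfrac{1}{1-x}$ on $(0,1)$, this term grows like $-h(1)\log n$, whereas the finitely many remaining terms stay uniformly bounded; hence the sum diverges. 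A purely summatory alternative, when $\overline{\ve{m}}$ has no leading ones: by the induction hypothesis $T_i:=\S{\overline{\ve{m}}}{i}\to T_\infty:=\S{\overline{\ve{m}}}{\infty}$, and splitting $T_i=T_\infty+(T_i-T_\infty)$ together with the tail estimate $|T_i-T_\infty|=O\bigl((\log(i+2))^{p-1}/i\bigr)$ (the same polylogarithmic bounds applied to $\sum_{j>i}$) gives $\S{(a_1,b_1,1),\overline{\ve{m}}}{n}=T_\infty\sum_{i=1}^{n}(a_1i+b_1)^{-1}+O(1)$, which diverges provided $T_\infty\ne0$; the case where $\overline{\ve{m}}$ does have leading ones is reduced to this one by first extracting them via Remark \ref{CSextractleading1rem} and tracking the dominant term, a product of a divergent linear cyclotomic harmonic sum with a convergent one.

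The main obstacle is precisely this divergence half, namely ruling out cancellation of the genuinely divergent contribution. In the Mellin-representation route it amounts to verifying that the coefficient of the $\int_0^1\frac{(x^{kn+p}-1)h(x)}{x-1}dx$ term is nonzero and that $h(1)>0$ (equivalently, that the relevant cyclotomic harmonic polylogarithm value stays positive, which should follow from positivity of the defining integrands after the $(1,0)$-letters are removed); in the summatory route it amounts to knowing that a convergent cyclotomic harmonic sum is not zero at infinity. This is exactly the point that is disposed of, at the level of the single diverging integral $\int_0^1\frac{x^{n}-1}{x-1}\,dx\to-\infty$, in the proof of Theorem \ref{SSconsumthm} for S-sums and, via \cite{Minh2000}, in Lemma \ref{HSconsumlem} for harmonic sums; adapting that bookkeeping to the cyclotomic alphabet is where the real work lies, the convergence direction being routine.
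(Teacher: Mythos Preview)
The paper does not supply an explicit proof of this lemma; it simply states it and implicitly defers to the analogous Lemma~\ref{HSconsumlem} for harmonic sums and Theorem~\ref{SSconsumthm} for S-sums. So there is no ``paper's proof'' to compare against line by line; your argument must stand on its own.

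It does not. In the convergence direction you only handle $\abs{c_1}\ge 2$, both in the base case and in the inductive step, and never treat $c_1=-1$. Since $c_1\in\Z^*$, the hypothesis $c_1\neq 1$ allows $c_1=-1$, and this case is not absolutely convergent: for depth one the sum $\sum_{i\ge1}(-1)^i(a_1i+b_1)^{-1}$ diverges absolutely but converges conditionally, and for depth $p\ge 2$ your polylogarithmic bound $\bigl|\S{\overline{\ve m}}{i}\bigr|=O\bigl((\log(i+2))^{p-1}\bigr)$ only yields terms of size $(\log i)^{p-1}/i$, which are not absolutely summable. Proving convergence for $c_1=-1$ genuinely requires a different argument (e.g., an adaptation of the conditional-convergence step in the proof of Theorem~\ref{SSconsumthm}, where one extracts leading ones and shows the general term tends to zero so that the Leibniz criterion applies). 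As written, your ``if'' direction is incomplete.

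On the divergence direction you are candid that you have not closed the argument: both your Mellin-representation sketch and your summatory alternative hinge on a nonvanishing claim ($h(1)\neq0$, respectively $T_\infty\neq0$) that you do not establish, and you explicitly say ``this is where the real work lies.'' The analogue in Theorem~\ref{SSconsumthm} is handled by tracking the innermost integral $\int_0^1\frac{y^n-1}{y-1}\,dy$ directly through the integral representation, without needing to know that some limit of an inner sum is nonzero; pushing that same idea through the cyclotomic integral representation of Lemma~\ref{CSintrep1} would avoid the $T_\infty\neq 0$ obstruction altogether. At present, however, neither direction of your proof is complete.
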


\subsection{Relations between Cyclotomic Harmonic Sums at Infinity}
\label{CSInfRelations}

The values of the cyclotomic harmonic polylogarithms at argument
$x=1$ and, related to it, the associated cyclotomic harmonic sums at $n \rightarrow
\infty$ occur in various relations of the finite cyclotomic harmonic sums
and the Mellin transforms of cyclotomic harmonic polylogarithms; compare \cite{Ablinger2011}. In this
Section we investigate their relations and basis representations. The
infinite cyclotomic harmonic sums extend the Euler-Zagier and multiple
zeta values \cite{Broadhurst2010,MZV1,MZV2,MZV3} (see also Section \ref{HSInfRelations}) and are related at lower weight and depth to other
known special numbers. We define
$$
\sigma_{(a_1,b_1,c_1),\ldots,(a_l,b_l,c_l)}:=\lim_{n \rightarrow \infty} \S{(a_1,b_1,c_1),\ldots,(a_l,b_l,c_l)}n.
$$
We first consider the sums of weight $w = 1$ and $w = 2$ up to cyclotomy $l~=~20$.
Afterwards the relations of the infinite cyclotomic harmonic sums associated to
the summands 
\begin{eqnarray}
\label{eqSel}
\frac{1}{k^{l_1}},~~~~~~
\frac{(-1)^k}{k^{l_2}},~~~~~~
\frac{1}{(2k+1)^{l_3}},~~~~~~
\frac{(-1)^k}{(2k+1)^{l_4}},
\end{eqnarray}
with $l_i\in\N$ up to weight $ w = 6$ are worked out.

We consider the following relations:
\begin{itemize}
\item The algebraic relations of cyclotomic harmonic sums (see Section \ref{CSalgrel}) remain valid when we consider
 them at infinity. We will refer to these relations as the stuffle relations.
\item The duplication relations from Section \ref{CSduplrel} remain valid if we consider sums which are finite at infinity, since it makes no difference
 whether the argument is $\infty$ or $2\cdot\infty.$
\item We can generalize the relation form \cite{Vermaseren1998} (see Section \ref{HSInfRelations}) for harmonic sums to cyclotomic harmonic sums. For not both $c_1=1$ and $f_1=1$ we have (compare \cite{Ablinger2011}):
\begin{eqnarray*}
&&\S{(a_1,b_1,c_1),\ldots,(a_k,b_k,c_k)}{\infty}\S{(d_1,e_1,f_1),\ldots,(d_l,e_l,f_l)}{\infty}=\\
	&&\hspace{1cm}\lim_{n \rightarrow \infty}\sum_{i=1}^n\frac{\sign{f_1}^i \S{(a_1,b_1,c_1),\ldots,(a_k,b_k,c_k)}{n-i}\S{(d_2,e_2,f_2),\ldots,(d_l,e_l,f_l)}i} {(d_1 i+e_1)^{\abs{f_1}}}
\end{eqnarray*}
Using
\begin{eqnarray*}
&&\sum_{i=1}^n\frac{\S{(a_1,b_1,c_1),\ldots,(a_k,b_k,c_k)}{n-i}\S{(d_2,e_2,f_2),\ldots,(d_l,e_l,f_l)}{i} \sign{f_1}^i} {(d_1i+e_1)^{\abs{f_1}}}=\\
&&\hspace{0.5cm} \sum_{k=1}^{\abs{f_1}} \binom{\abs{c_1}+\abs{f_1}-k-1}{\abs{c_1}-1} a_1^{\abs{f_1}-k} d_1^{\abs{c_1}}\sum_{i=1}^n\frac{\sign{c_1}^i}{(d_1 b_1 + a_1 e_1 + i a_1 d_1)^{\abs{c_1} \abs{f_1} -k}} \\
&&\hspace{0.5cm} \hspace{0.5cm} \sum_{j=1}^{i-1}\frac{\S{(a_2,b_2,c_2),\ldots,(a_k,b_k,c_k)}{i-j}\S{(d_2,e_2,f_2),\ldots,(d_l,e_l,f_l)}{j} \left(\frac{\sign{f_1}}{\sign{c_1}}\right)^j} {(d_1 j +e_1)^{k}}\\
&&\hspace{0.5cm} +\sum_{k=1}^{\abs{c_1}} \binom{\abs{c_1}+\abs{f_1}-k-1}{c_2-1} a_1^{\abs{f_1}} d_1^{\abs{c_1}-k}\sum_{i=1}^n\frac{\sign{f_1}^i}{(d_1 b_1 + a_1 e_1 + i a_1 d_1)^{\abs{c_1} \abs{f_1} -k}} \\
&&\hspace{0.5cm} \hspace{0.5cm} \sum_{j=1}^{i-1}\frac{\S{(d_2,e_2,f_2),\ldots,(d_l,e_l,f_l)}{i-j}\S{(a_2,b_2,c_2),\ldots,(a_k,b_k,c_k)}{j} \left(\frac{\sign{c_1}}{\sign{f_1}}\right)^j} {(a_1 j +b_1)^{k}},
\end{eqnarray*}
we can rewrite the right hand side in terms of cyclotomic harmonic sums. We will refer to these relations as the shuffle relations since one could also obtain them from the shuffle algebra 
of cyclotomic harmonic polylogarithms.
\item The multiple argument relations from Section \ref{CSmultargrel} remain valid if we consider sums which are finite at infinity, since it makes no difference
 whether the argument is $\infty$ or $k\cdot\infty.$
\end{itemize}

\subsubsection*{Cyclotomic Harmonic Sums up to Weight \texorpdfstring{$w = 2$}{w=2} at Infinity}

We study the sums up to weight $w = 2$ and cyclotomy $l = 20$~\footnote{Relations between colored
nested infinite harmonic sums have been investigated also in Refs.~\cite{DELIGNE,DG}
recently.}, based on the sets of sums using the letters
\begin{eqnarray} \label{eq:def1} \frac{(\pm 1)^k}{(lk+m)^n},~~~~1
\leq n \leq 2, 1 \leq l \leq 20, m
< l~. \end{eqnarray}
We use the stuffle (quasi-shuffle) relations for the sums, the shuffle relations on the side of the
associated cyclotomic harmonic polylogarithms, and the multiple argument relations for these
sums.
In some cases the latter request to include sums which are outside the above pattern. In these cases the
corresponding relations are not accounted for.

At $w = 1$ the respective numbers of basis elements is summarized in Table~\ref{tablew1l20}.
The independent sums at $w = 1$ up to $ l = 6$ are~:
\begin{eqnarray*}
&&\sigma_{(1, 0, 1)},
\sigma_{(1, 0, -1)},
\sigma_{(2, 1, -1)},
\sigma_{(3, 1,  1)},
\sigma_{(3, 1, -1)},
\sigma_{(4, 1, -1)},
\sigma_{(4, 3, -1)},
\sigma_{(5, 1,  1)},
\sigma_{(5, 1, -1)},
\nonumber\\ &&
\sigma_{(5, 2, -1)},
\sigma_{(5, 3, -1)},
\sigma_{(6, 1, -1)}~,
\label{CSbasw1}
\end{eqnarray*}
see \cite{Ablinger2011} for equivalent representations. The dependent sums up to $l = 6$ are
\begin{eqnarray*}
\label{Equ:Sigma1}
\sigma_{(2, 1, 1)} &=& -1 - \sigma_{(1, 0, -1)} + \frac{1}{2} \sigma_{(1, 0, 1)}
\\
\sigma_{(3, 2, 1)} &=& - \frac{1}{2} - \frac{1}{3}\sigma_{(1, 0, -1)}
                         - \sigma_{(3, 1, -1)} + \sigma_{(3, 1, 1)}
\\
\sigma_{(3, 2, -1)} &=&   \frac{1}{2} + \frac{2}{3} \sigma_{(1, 0, -1)}
                         + \sigma_{(3, 1, -1)}
\\
\sigma_{(4, 1, 1)} &=&
   - \frac{1}{2} - \frac{3}{4} \sigma_{(1, 0, -1)} + \frac{1}{4} \sigma_{(1, 0, 1)}
   + \frac{1}{2} \sigma_{(2, 1, -1)}
\\
\sigma_{(4, 3, 1)} &=&
-\frac{5}{6} - \frac{3}{4} \sigma_{(1, 0, -1)} + \frac{1}{4} \sigma_{(1, 0, 1)}
- \frac{1}{2} \sigma_{(2, 1, -1)}
\\
\sigma_{(5, 2, 1)} &=&
\frac{1}{5} \sigma_{(1, 0, -1)}
+ \sigma_{(5, 1, 1)}
- \sigma_{(5, 2, -1)}
\\
\sigma_{(5, 3, 1)} &=&
- \frac{1}{3}
- \frac{1}{5} \sigma_{(1, 0, -1)}
- \sigma_{(5, 1, -1})
+ \sigma_{(5, 1, 1)}
\\
\sigma_{(5, 4, 1)} &=& -\frac{7}{12}
- \frac{2}{5} \sigma_{(1, 0, -1)}
- \sigma_{(5, 1, -1)}
+ \sigma_{(5, 1, 1)}
- \sigma_{(5, 3, -1)}
\\
\sigma_{(5, 4, -1}) &=& \frac{7}{12}
+ \frac{4}{5} \sigma_{(1, 0, -1)}
+ \sigma_{(5, 1, -1)}
- \sigma_{(5, 2, -1)}
+ \sigma_{(5, 3, -1)}
\\
\sigma_{(6, 1, 1)} &=&
   - \frac{1}{6} \sigma_{(1, 0, - 1)}
   + \frac{1}{2} \sigma_{(3, 1, -1)}
   + \frac{1}{2} \sigma_{(3, 1, 1)}
\\
\sigma_{(6, 5, 1)} &=&
- \frac{7}{10} - \frac{2}{3} \sigma_{(1, 0, -1})
- \sigma_{(3, 1, -1)}
+ \frac{1}{2} \sigma_{(3, 1, 1)}
\\
\sigma_{(6, 5, -1)} &=&
\frac{2}{15}
+ \frac{4}{3} \sigma_{(2, 1, -1)}
- \sigma_{(6, 1, -1)},~\text{etc.}
\label{CSSigma2}
\end{eqnarray*}
\begin{table}
\centering
\scalebox{0.80}{
\begin{tabular}{|r|r|r|r|r|r|r|r|r|r|r|r|r|r|r|r|r|r|r|r|r|}
\hline
\multicolumn{1}{|c}{$l$}             &
\multicolumn{1}{|c}{1}                 &
\multicolumn{1}{|c}{2}                 &
\multicolumn{1}{|c}{3}                 &
\multicolumn{1}{|c}{4}                 &
\multicolumn{1}{|c}{5}                 &
\multicolumn{1}{|c}{6}                 &
\multicolumn{1}{|c}{7}                 &
\multicolumn{1}{|c}{8}                 &
\multicolumn{1}{|c}{9}                 &
\multicolumn{1}{|c}{10}                &
\multicolumn{1}{|c}{11}                &
\multicolumn{1}{|c}{12}                &
\multicolumn{1}{|c}{13}                &
\multicolumn{1}{|c}{14}                &
\multicolumn{1}{|c}{15}                &
\multicolumn{1}{|c}{16}                &
\multicolumn{1}{|c}{17}                &
\multicolumn{1}{|c}{18}                &
\multicolumn{1}{|c}{19}                &
\multicolumn{1}{|c|}{20}               \\
\hline
{\sf sums} &  2 & 4 & 6 & 8 & 10 & 12 & 14 & 16 & 18 & 20 & 22 & 24 & 26 & 28 & 30 & 32 & 34 & 36 & 38 &
40
\\
\hline
{\sf basis}
& 2 & 3 & 4 & 5 &  6 &  6 &  8 &  9 & 8 & 10 & 12 & 10 &14 & 14 & 11 & 17 & 18 & 14 & 20 & 18\\
\hline
{\sf new basis}
& 2 & 1 & 2 & 2 &  4 &  1 &  6 &  4 & 4 & 3 & 10 & 2 &12 & 5 & 3 & 8 & 16 & 4 & 18 & 6\\
{\sf sums}
& & & & & & & & & &
& & & & & & & & & & \\
\hline
\end{tabular}
}
\caption{\label{tablew1l20}The number of the $w = 1$ cyclotomic harmonic sums at infinity up to $l =20$, the basis elements at fixed value of $l$, and the new basis elements in ascending sequence.}
\end{table}
The following counting relations for the basis elements were tested up to $l = 700$ using computer algebra
methods.
Let $p,p_i,q$ be pairwise distinct primes $>2,$ and let $k,k_i$ be positive integers.
We conjecture that the number ${\sf N}(l)$ of basis elements at $w = 1$ and cyclotomy $l$ are given by
\begin{eqnarray*}
{\sf N}(l)&=&\left\{
		  	\begin{array}{lll}
						l+1,\  & l=1 \text{ or } l=2^k\\
						(p-1)p^{k-1}+2,\  & l=p^k\\
						2\cdot {\sf N}\left(2^{k-1}\prod_{i=1}^np_i^{k_i}\right)-n-1,\  & l=2^k\prod_{i=1}^np_i^{k_i}\\
						(q-1)\cdot{\sf N}\left(\prod_{i=1}^np_i^{k_i}\right)-n(q-2)-q+3,\  & l=q\prod_{i=1}^np_i^{k_i}\\
						q\cdot{\sf N}\left(q^{k-1}\prod_{i=1}^np_i^{k_i}\right)
-(n+2)(q-1),\  & l=q^k\prod_{i=1}^np_i^{k_i},\ k>1~.\\
			\end{array}
		\right.
\end{eqnarray*}

Let us now consider the case $w = 2$. Applying the relations given in the beginning of this section
the results given in Table~\ref{tablew2l20} are obtained for the number of basis
elements. Again we solved the corresponding linear systems using computer algebra methods
and derived the representations for the dependent sums analytically.
\begin{table}
\centering
\scalebox{0.80}{
\begin{tabular}{|r|r|r|r|r|r|r|r|r|}
\hline
\multicolumn{1}{|c}{$l$ }                &
\multicolumn{1}{|c}{$\sf N_S$}              &
\multicolumn{1}{|c}{\sf SH}                &
\multicolumn{1}{|c}{\sf A}                &
\multicolumn{1}{|c}{\sf A + SH}           &
\multicolumn{1}{|c}{$\sf A + SH + H_1$}      &
\multicolumn{1}{|c}{$\sf A  + SH + H_1 +H_2$}      &
\multicolumn{1}{|c|}{$\sf A + SH + H_1 + H_2 +M$}\\
\hline
  1 &   6 &   4 &   3 &   1 &   1 &   1 &    1 \\
  2 &  20 &  13 &  10 &   3 &   3 &   2 &    1 \\
  3 &  42 &  27 &  21 &   7 &   6 &   6 &    5 \\
  4 &  72 &  46 &  36 &  12 &  11 &  10 &    3 \\
  5 & 110 &  70 &  55 &  19 &  17 &  17 &   16 \\
  6 & 156 &  99 &  78 &  27 &  25 &  24 &    5 \\
  7 & 210 & 133 & 105 &  37 &  34 &  34 &   33 \\
  8 & 272 & 172 & 136 &  48 &  45 &  44 &   12 \\
  9 & 342 & 216 & 171 &  61 &  57 &  57 &   52 \\
 10 & 420 & 265 & 210 &  75 &  71 &  70 &   22 \\
 11 & 506 & 319 & 253 &  91 &  86 &  86 &   85 \\
 12 & 600 & 378 & 300 & 108 & 103 & 102 &   21 \\
 13 & 702 & 442 & 351 & 127 & 121 & 121 &  120 \\
 14 & 812 & 551 & 406 & 147 & 141 & 140 &   49 \\
 15 & 930 & 585 & 465 & 169 & 162 & 162 &  145 \\
 16 &1056 & 664 & 528 & 192 & 185 & 184 &   50 \\
 17 &1190 & 748 & 595 & 217 & 209 & 209 &  208 \\
 18 &1332 & 837 & 666 & 243 & 235 & 234 &   63 \\
 19 &1482 & 931 & 741 & 271 & 262 & 262 &  261 \\
 20 &1640 &1030 & 820 & 300 & 291 & 290 &   74 \\
\hline
\end{tabular}
}
\caption{\label{tablew2l20}Number of basis elements of the $ w = 2$ cyclotomic harmonic sums at infinity up to cyclotomy $l = 20$ after applying the quasi-shuffle algebra of
the sums (A), the shuffle algebra of the cyclotomic harmonic polylogarithms (SH), and the
three multiple argument relations ($\sf H_1, H_2, M$) of the sums.}
\end{table}

The number of the weight $ w = 2$ infinite sums for cyclotomy $l$ is
\begin{eqnarray}
{\sf N_S}(l) = 2l (2l+1)~.
\end{eqnarray}
One may guess, based on the results for $l \leq 20$,  counting relations for the
length of the bases listed in Table~\ref{tablew2l20}. We found for all but the last column:
\begin{eqnarray*}
{\sf N_A}(l)  &=& l(2l+1)
\\
{\sf N_{SH}}(l)  &=& \frac{(5l+3)l}{2}
\\
{\sf N_{A,SH}}(l) &=& \frac{6l^2+1-(-1)^l}{8}
\\
{\sf N_{A,SH,H_1}}(l) &=& \frac{6l^2-4l+7-(-1)^l}{8}
\\
{\sf N_{A,SH,H_1,H_2}}(l) &=& \frac{6l^2-4l+3(1-(-1)^l)}{8}~.
\end{eqnarray*}

\subsubsection*{Cyclotomic Harmonic Sums of Higher Weights at Infinity}
First we consider the iterated summation of the terms 
\begin{eqnarray*}
\frac{1}{k^{l_1}},~~~~~~
\frac{(-1)^k}{k^{l_2}},~~~~~~
\frac{1}{(2k+1)^{l_3}},~~~~~~
\frac{(-1)^k}{(2k+1)^{l_4}},
\end{eqnarray*}
with $l_i\in\N.$ A corresponding sum $\S{(a_1,b_1,c_1),\ldots,(a_l,b_l,c_l)}n$
diverges if it has leading ones, \ie if $c_1~=~1$. However, these 
divergences
can be regulated by polynomials in $\sigma_{(1,0,1)}~=~\lim_{n\rightarrow \infty}\S{1}n$ and cyclotomic harmonic sums, which are convergent for
$n \rightarrow \infty$, very similar to the case of the usual harmonic sums; see Remark \ref{HSextractleading1rem}.
In Table~\ref{CSrell2w6} we present the number of basis elements obtained applying the
respective relations up to weight $w = 6$. The representation of all sums were
computed by means of computer algebra in explicit form.
We derive the cumulative basis, quoting only the new elements in the next
weight. For suitable integral representations
over known functions we refer to \cite{Ablinger2011}.
\begin{table}
\begin{center}
\scalebox{0.80}{
\begin{tabular}{| r | r | r | r | r | r | r | r | r|}
\hline
{\sf weight} & $\sf N_S$ & $\sf A$  & $\sf SH$ &  $\sf A + SH $ &  $\sf A+SH+H_1$   &
$\sf A + SH +H_1 + H_2$ & $\sf A + SH + H_1 + H_2 + M$\\
\hline
   1 &     4 &    4 &    4 &   4 &   4 &   3 &   3  \\
   2 &    20 &   10 &   13 &   3 &   3 &   2 &   1  \\
   3 &   100 &   40 &   46 &   6 &   6 &   5 &   3  \\
   4 &   500 &  150 &  163 &  10 &  10 &   9 &   6  \\
   5 &  2500 &  624 &  650 &  21 &  21 &  19 &  13  \\
   6 & 12500 & 2580 & 2635 &  36 &  36 &  34 &  25  \\
 \hline
\end{tabular}
}
\end{center}
\caption{\label{CSrell2w6}Basis representations of the infinite cyclotomic harmonic sums over the
alphabet $\{(\pm 1)^k/k, (\pm 1)^k/(2k+1)\}$ after applying the stuffle (A), shuffle
(SH) relations, their combination, and their application together with the three
multiple argument relations ($\sf H_1, H_2, M$).} 
\end{table}
One possible choice of basis elements is~:
\begin{eqnarray*}
w = 1: \hspace*{5cm}\\
&&\hspace*{-6cm}\sigma_{(1, 0, 1)},\sigma_{(1, 0, -1)}, \sigma_{(2, 1, -1)}\\
w = 2: \hspace*{5cm}\\
&&\hspace*{-6cm}\sigma_{(2, 1, -2)} \\
w = 3: \hspace*{5cm}\\
&&\hspace*{-6cm} \sigma_{(1, 0, 3)},\sigma_{(1, 0, -2), (2, 1, -1)}, \sigma_{(2, 1, -2), (1, 0, -1)} \\
w = 4: \hspace*{5cm}\\
&&\hspace*{-6cm}\sigma_{(1, 0, -1), (1, 0, 1), (1, 0, 1),(1, 0, 1)}, \sigma_{(2, 1, -4)}, \sigma_{(2, 1, -3), (2, 1, -1)}, \sigma_{(1, 0, -2), (1, 0, -1), (2, 1, -1)},\\
&&\hspace*{-6cm} \sigma_{(1, 0, -2), (2, 1, -1), (1, 0, 1)},\sigma_{(1, 0, -2), (2, 1, -1), (2, 1, 1)}\\
w = 5: \hspace*{5cm}\\
&& \hspace*{-6cm}\sigma_{(1, 0, 5)} \sigma_{(1, 0, -1), (1, 0, 1), (1, 0, 1), (1, 0, 1), (1, 0, 1)} \sigma_{(1, 0, -4), (2, 1, -1)} \sigma_{(1, 0, 4), (2, 1, -1)}\\
&& \hspace*{-6cm}\sigma_{(2, 1, -4), (1, 0, -1)},\sigma_{(1, 0, -3), (1, 0, -1), (2, 1, 1)},\sigma_{(1, 0, -3), (2, 1, -1), (2, 1, -1)},\sigma_{(1, 0, 3), (2, 1, -1), (2, 1, -1)},\\
&& \hspace*{-6cm}\sigma_{(2, 1, -3), (2, 1, -1), (2, 1, 1)},\sigma_{(1, 0, -2), (1, 0, -1), (1, 0, -1), (2, 1, -1)},\sigma_{(1, 0, -2), (1, 0, -1), (2, 1, -1), (1, 0, -1)}, \\
&& \hspace*{-6cm}\sigma_{(1, 0, -2), (2, 1, -1), (1, 0, 1), (1, 0, 1)},\sigma_{(1, 0, -2), (2, 1, -1), (2, 1, 1), (1, 0, 1)}\\
w = 6: \hspace*{5cm}\\
&& \hspace*{-6cm}
\sigma_{(1, 0, -5), (1, 0, -1)} ,\sigma_{(1, 0, -1), (1, 0, 1), (1, 0, 1), (1, 0, 1), (1, 0, 1),(1, 0, 1)} ,\sigma_{(2, 1, -6)}\\
&& \hspace*{-6cm}
\sigma_{(1, 0, -2), (2, 1, -1), (2, 1, 1), (1, 0, 1), (1, 0, 1)},
\sigma_{(1, 0, -2), (2, 1, -1), (1, 0, 1), (1, 0, 1), (1, 0, 1)},
\nonumber\\
&& \hspace*{-6cm}
\sigma_{(1, 0, -2), (1, 0, -1), (1, 0, -1), (2, 1, -1), (2, 1, 1)},
\sigma_{(1, 0, -2), (1, 0, -1), (1, 0, -1), (2, 1, -1), (1, 0, 1)},
\nonumber\\
&& \hspace*{-6cm}
\sigma_{(1, 0, -2), (1, 0, -1), (1, 0, -1), (1, 0, -1), (2, 1, -1)},
\sigma_{(1, 0, 3), (2, 1, -1), (2, 1, -1), (1, 0, 1)},
\nonumber\\
&& \hspace*{-6cm}
\sigma_{(1, 0, -3), (1, 0, -1), (2, 1, 1), (1, 0, 1)},
\sigma_{(1, 0, -3), (1, 0, -1), (1, 0, 1), (2, 1, 1)},
\nonumber\\
&& \hspace*{-6cm}
\sigma_{(2, 1, -3), (2, 1, -1), (2, 1, -1), (2, 1, -1)},
\sigma_{(1, 0, -3), (2, 1, -1), (2, 1, -1), (1, 0, -1)},
\nonumber\\
&& \hspace*{-6cm}
\sigma_{(1, 0, -3), (2, 1, -1), (1, 0, -1), (2, 1, -1)},
\sigma_{(1, 0, -3), (1, 0, -1), (2, 1, -1), (2, 1, -1)},
\sigma_{(2, 1, 4), (1, 0, -1), (2, 1, -1)},
\nonumber\\
&& \hspace*{-6cm}
\sigma_{(1, 0, 4), (2, 1, -1), (1, 0, -1)},
\sigma_{(1, 0, 4), (1, 0, -1), (2, 1, -1)},
\sigma_{(2, 1, -4), (1, 0, -1), (2, 1, 1)},
\nonumber\\
&& \hspace*{-6cm}
\sigma_{(2, 1, -4), (1, 0, -1), (1, 0, 1)},
\sigma_{(1, 0, -4), (2, 1, -1), (2, 1, 1)},
\sigma_{(1, 0, -4), (2, 1, -1), (1, 0, 1)},
\nonumber\\
&& \hspace*{-6cm}
\sigma_{(1, 0, -4), (1, 0, -1), (2, 1, -1)},
\sigma_{(2, 1, -4), (2, 1, -2)},
\sigma_{(2, 1, -5), (2, 1, -1)}~.
\end{eqnarray*}

As a further example of the infinite cyclotomic harmonic sums we consider the iterated summation of the terms 
\begin{eqnarray*}
\frac{1}{k^{l_1}},~~~~~~
\frac{(-1)^k}{k^{l_2}},~~~~~~
\frac{1}{(3k+1)^{l_3}},~~~~~~
\frac{(-1)^k}{(3k+1)^{l_4}},~~~~~~
\frac{1}{(3k+2)^{l_4}},~~~~~~
\frac{(-1)^k}{(3k+2)^{l_5}},
\end{eqnarray*}
with $l_i\in\N.$
In Table~\ref{CSrell3w4} we list the corresponding numbers of basis elements up to weight 4.
\begin{table}
\begin{center}
\scalebox{0.80}{
\begin{tabular}{| r | r | r | r | r | r | r | r | r|}
\hline
{\sf weight} & $\sf N_S$ & $\sf A$  & $\sf SH$ &  $\sf A + SH $ &  $\sf A+SH+H_1$   &
$\sf A + SH +H_1 + H_2$ & $\sf A + SH + H_1 + H_2 + M$\\
\hline
   1 &     6 &    6 &    6 &   6 &   5 &   4 &   4  \\
   2 &    42 &   21 &   27 &   7 &   6 &   6 &   5  \\
   3 &   294 &  112 &  131 &  17 &  16 &  16 &  15  \\
   4 &  2058 &  588 &  651 &  40 &  37 &  37 &  36  \\
 \hline
\end{tabular}
}
\end{center}
\caption{\label{CSrell3w4}Basis representations of the infinite cyclotomic harmonic sums over the
alphabet $\{(\pm 1)^k/k, (\pm 1)^k/(3k+1), (\pm 1)^k/(3k+2)\}$ after applying the stuffle (A), shuffle
(SH) relations, their combination, and their application together with the three
multiple argument relations ($\sf H_1, H_2, M$).} 
\end{table}

\section{Asymptotic Expansion of Cyclotomic Harmonic Sums}
\label{CSExpansion}
In this section we want to generalize the algorithm for the computation of the asymptotic expansion of harmonic sums to cyclotomic harmonic sums. Again the integral representation of these sums will 
be the starting point. 
In the algorithm for the computation of the asymptotic expansion of harmonic sums the possibility to express a harmonic polylogarithm $\H{\ve m}x$ in terms of harmonic polylogarithms at argument 
$1-x$ plays a decisive role. This remains true for cyclotomic harmonic polylogarithms. In the case of harmonic polylogarithms we had to extend the index set to $\{-1,0,1,2\}$ in order to be able 
to perform this transformation. Since in general we cannot express a cyclotomic harmonic polylogarithm $\H{\ve m}x$ in terms of cyclotomic harmonic polylogarithms at argument $1-x,$ we will introduce a new set 
of functions which is closely related to cyclotomic harmonic polylogarithms and which we will call {\itshape shifted cyclotomic harmonic polylogarithms}.

\subsection{Shifted Cyclotomic Harmonic Polylogarithms}
\label{CSshiftedhlogs}
In the following we present new additional aspects which can not be found in \cite{Ablinger2011}.
We start by defining the new auxiliary function $g:$ For $a \in \N$ and $b \in \N,$ $b < \varphi(a)$ we define
\begin{eqnarray}
&&g_a^b:(0,1)\mapsto \R\nonumber\\
&&g_a^b(x)=\left\{ 
		\begin{array}{ll}
				\frac{1}{1-x}, &  \textnormal{if }a=b=0\\
				\frac{(1-x)^b}{\Phi_a(1-x)}, & \textnormal{otherwise}.  
		\end{array} 
		\right.  \nonumber
\end{eqnarray}
Here $\Phi_a(x)$ denotes again the $a$th cyclotomic polynomial and $\varphi$ denotes Euler's totient function. Now we are ready to define {\itshape shifted cyclotomic harmonic polylogarithms}:
\begin{definition}[Shifted Cyclotomic Harmonic Polylogarithms]
Let $m_i~=~(a_i,b_i)$ where $a_i,b_i~\in~\N$ with $b_i<\varphi(a_i);$  we define for $x\in (0,1):$
\begin{eqnarray}
\Hs{}{x}&=&1,\nonumber\\
\Hs{m_1,\ldots,m_k}{x} &=&\left\{ 
		  	\begin{array}{ll}
						\frac{(-1)^k}{k!}(\log{x})^k,&  \textnormal{if } m_i=(1,0) \textnormal{ for } 1\leq i \leq k \\
									    &\\
						\int_0^x {g_{a_1}^{b_1}(y)} \Hs{m_2,\ldots,m_k}{y}dy,& \textnormal{otherwise}. 
					\end{array} \right.  \nonumber
\end{eqnarray}
The length $k$ of the vector $\ve m$ is called the weight of the shifted cyclotomic harmonic polylogarithm $\H{\ve m}x.$
\label{CShlogdefext}
\end{definition}

\begin{example}
\begin{eqnarray*}
\Hs{(5,3)}x&=&\int_0^x\frac{(1-y)^3}{\Phi_5(1-y)}dy=\int_0^x\frac{(1-y)^3}{y^4 - 5 y^3 +10 y^2 - 10 y+ 5}dy\\
\vspace{1cm}\\
\Hs{(3,1),(2,0),(5,2)}x&=&\int_0^x\frac{1-y}{\Phi_3(1-y)}\int_0^y\frac{1}{\Phi_2(1-z)}\int_0^z\frac{(1-w)^2}{\Phi_5(1-w)}dwdzdy\\
\vspace{1cm}\\
\Hs{(1,0),(0,0),(0,0)}x&=&\int_0^x\frac{1}{-y}\int_0^y\frac{1}{1-z}\int_0^z\frac{1}{1-w}dwdzdy=-\H{(0,0),(1,0),(1,0)}x.
\end{eqnarray*}
\end{example}

\begin{remark}Note that these functions are again analytic for $x\in(0,1)$ and 
for the derivatives we have for all $x\in (0,1)$ that $$ \frac{d}{d x} \Hs{(a_1,b_1),(a_2,b_2),\ldots,(a_k,b_k)}{x}=g_{a_1}^{b_1}(x)\Hs{(a_2,b_2),\ldots,(a_k,b_k)}{x}. $$ 
\end{remark}
In the following we will see that shifted cyclotomic harmonic polylogarithms at argument $1-x$ can be expressed in terms of cyclotomic harmonic polylogarithms at argument $x$ together with constants. 
We proceed recursively on the weight $w$ of the shifted cyclotomic harmonic polylogarithm. In the base case we have for $x \in (0,1)$  and $a,b\in \N$,$1<a$, $b<\varphi(a):$
\begin{eqnarray}
\Hs{(0,0)}{1-x}&=&-\H{(0,0)}{x}\label{CSshiftedtrafo0}\\
\Hs{(1,0)}{1-x}&=&-\H{(1,0)}{x}\label{CSshiftedtrafo1}\\
\Hs{(a,b)}{1-x}&=&\H{a,b}1-\H{(a,b)}{x}.
\end{eqnarray}
Now let us look at higher weights $w>1.$ We consider $\Hs{m_1,m_2,\ldots,m_w}{1-x}$ and suppose that we can already apply the transform for shifted cyclotomic harmonic 
polylogarithms of weight $<w.$ If $m_1=(0,0),$ we can remove leading $(0,0)$ and end up with shifted cyclotomic harmonic polylogarithms without leading $(0,0)$ and powers 
of $\Hs{(0,0)}{1-x}.$ For the powers of $\H{(0,0)}{1-x}$ we can use (\ref{CSshiftedtrafo0}); therefore, only the cases in which the first index $m_1\neq (0,0)$ are to 
be considered. If $m_i=(1,0)$ for all $1<i\leq w,$ we are in fact dealing with a power of $\Hs{(1,0)}{1-x}$ and hence we can use (\ref{CSshiftedtrafo1}) to transform these 
cyclotomic harmonic polylogarithms.
Now let us assume that not all $m_i=(1,0).$ For $m_1=(1,0)$ we get:
\begin{eqnarray*}
\Hs{(1,0),m_2,\ldots,m_w}{1-x}&=&\int_0^{1-x}{\frac{\Hs{m_2,\ldots,m_w}y}{-y}dy}\\
		&=&-\Hs{(1,0),m_2,\ldots,m_w}1+\int_{1-x}^{1}{\frac{\Hs{m_2,\ldots,m_w}y}{y}dy}\\
		&=&-\Hs{(1,0),m_2,\ldots,m_w}1+\int_{0}^{x}{\frac{\Hs{m_2,\ldots,m_w}{1-y}}{1-y}dy}.
\end{eqnarray*}
For $m_1=(a,b)$ with $a,b\in \N,1<a, b<\varphi(a)$ we get:
\begin{eqnarray*}
\Hs{(a,b),m_2,\ldots,m_w}{1-x}&=&\int_0^{1-x}{\frac{(1-y)^b\Hs{m_2,\ldots,m_w}y}{\Phi_a(1-y)}dy}\\
		&=&\Hs{(a,b),m_2,\ldots,m_w}1-\int_{1-x}^{1}{\frac{(1-y)^b\Hs{m_2,\ldots,m_w}y}{\Phi_a(1-y)}dy}\\
		&=&\Hs{(a,b),m_2,\ldots,m_w}1-\int_{0}^{x}{\frac{y^b\Hs{m_2,\ldots,m_w}{1-y}}{\Phi_a(y)}dy}.
\end{eqnarray*}
Since we know the transform for weights $<w$ we can apply it to $\Hs{m_2,\ldots,m_w}{1-y}$ and 
finally we obtain the required weight $w$ identity by using the definition of the cyclotomic harmonic polylogarithms.
\begin{example}
\begin{eqnarray*}
\Hs{(5,2),(3,1)}{1-x}&=&-\textnormal{H}_{(3,1)}(1) \textnormal{H}_{(5,2)}(x)+\textnormal{H}_{(5,2),(3,1)}(x)+\textnormal{H}_{(3,1)}(1) \textnormal{H}_{(5,2)}(1)\\
		    & &-\textnormal{H}_{(5,2),(3,1)}(1).
\end{eqnarray*}
\end{example}

\begin{remark}
For $a,b\in \N_0$ with $a>1$ and $b<\phi(a)$ we have $\Hs{(a,b)}1=\H{(a,b)}1.$ For the values of shifted cyclotomic harmonic polylogarithms at argument $1$ of higher weights
we can use the above procedure to express them by cyclotomic harmonic polylogarithms at argument $1.$
Note that we can as well perform the reverse direction, \ie we can find a combination of shifted cyclotomic harmonic 
polylogarithms at argument $1-x$ together with some constants to express a cyclotomic harmonic polylogarithm $\H{\ve m}x.$ 
\end{remark}
\begin{example}
\begin{eqnarray*}
 \H{(3,1),(5,2)}{x}&=&-\H{(5,2)}{1} \Hs{(3,1)}{1-x}+\Hs{(3,1),(5,2)}{1-x}+\H{(3,1)}{1}\H{(5,2)}{1}\\&&-\Hs{(3,1),(5,2)}{1}\\
\Hs{(3,1),(5,2)}{1}&=&\H{(3,1)}{1} \H{(5,2)}{1}-\H{(3,1),(5,2)}{1}.
\end{eqnarray*}
\end{example}

\subsection{Computation of Asymptotic Expansions of Cyclotomic Harmonic Sums}
Before we can state an algorithm to compute the asymptotic expansions for cyclotomic harmonic sums, we have to extend the Lemmas \ref{HSanalytic1}, \ref{HSanalytic2} and \ref{HSmelexpconst}. 
Since the proofs of these lemmas can be easily extended we will omit them here.
\begin{lemma}
Let $\H{m_1,m_2,\ldots,m_k}x$ be a cyclotomic harmonic polylogarithm with $m_i\neq~(1,0)$ for $1\leq i\leq k$ and let $a\in\N,$ with $a>1.$ Then 
$$\H{m_1,m_2,\ldots,m_k}x,\; \frac{\H{m_1,m_2,\ldots,m_k}x}{\Phi_a(x)}$$
and
$$\frac{\H{m_1,m_2,\ldots,m_k}x-\H{m_1,m_2,\ldots,m_k}1}{1-x}$$
are analytic for $x \in (0,\infty).$
\label{CSanalytic1}
\end{lemma}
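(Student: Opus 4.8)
The plan is to mimic the proof of Lemma~\ref{HSanalytic1} almost verbatim, replacing the index letters $\{0,-1\}$ by cyclotomic letters $(a_i,b_i)$ with $a_i>1$ (the ``otherwise'' branch of $f_{a_i}^{b_i}$), together with $(0,0)$ which plays the role of the old letter $0$. The crucial point is that the letter $(1,0)$ — the only index which produces a branch point at $x=1$ — is excluded, so all the auxiliary functions $f_{a_i}^{b_i}(x)=x^{b_i}/\Phi_{a_i}(x)$ (and $1/x$ for $(0,0)$) that occur are analytic on $(0,\infty)$, since $\Phi_{a_i}$ has no real positive root for $a_i\neq 1$.

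First I would proceed by induction on the depth $k$. For $k=0$ the statement is trivial and for $k=1$ one has $\H{(0,0)}x=\log x$ and $\H{(a,b)}x=\int_0^x y^b/\Phi_a(y)\,dy$, all analytic on $(0,\infty)$. For the induction step, write $\H{(a_1,b_1),m_2,\ldots,m_k}x^{(n)}=\bigl(f_{a_1}^{b_1}(x)\H{m_2,\ldots,m_k}x\bigr)^{(n-1)}$ and apply the Leibniz rule; since $f_{a_1}^{b_1}$ is analytic on $(0,\infty)$ and, by the induction hypothesis, so is $\H{m_2,\ldots,m_k}x$, the product and hence $\H{m_1,\ldots,m_k}x$ is analytic on $(0,\infty)$. (One can, if desired, record explicit closed forms for the $n$-th derivatives analogous to Lemma~\ref{HSderivatives}, obtained by expanding $1/\Phi_a$ via partial fractions; but this is not needed for analyticity.) Analyticity of $\H{m_1,\ldots,m_k}x/\Phi_a(x)$ is then immediate on $(0,\infty)$ because $\Phi_a$ is a product of (shifted) cyclotomic factors none of which vanishes for $x>0$ when $a>1$, so the denominator never vanishes.

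The only genuinely delicate part — as in the harmonic case — is analyticity of
$$g(x):=\frac{\H{m_1,m_2,\ldots,m_k}x-\H{m_1,m_2,\ldots,m_k}1}{1-x}$$
\emph{at the point $x=1$}. For $x\in(0,1)\cup(1,\infty)$ analyticity follows from the previous cases since the numerator $f(x):=\H{m_1,\ldots,m_k}x-\H{m_1,\ldots,m_k}1$ is analytic there (the cyclotomic polylogarithm is analytic on $(0,\infty)$, and the value $\H{m_1,\ldots,m_k}1$ is a finite constant because $m_1\neq(1,0)$, cf.\ the remarks after Definition~\ref{CShlogdef}). At $x=1$ I would repeat the de l'Hospital computation from the proof of Lemma~\ref{HSanalytic1} verbatim: using Lemma~\ref{HSderivatives} (which is purely formal and applies to any analytic $f$) one writes
$$\left(\frac{f(x)}{1-x}\right)^{(n)}=\frac{1}{(1-x)^{n+1}}\sum_{k=0}^n\frac{n!}{k!}f(x)^{(k)}(1-x)^k,$$
and since $f(1)=0$ one applies l'Hospital $n+1$ times; the boundary terms all die and one gets $\lim_{x\to1}\bigl(f(x)/(1-x)\bigr)^{(n)}=-f(1)^{(n+1)}/(n+1)$, a finite number because $f=\H{m_1,\ldots,m_k}{\cdot}-\H{m_1,\ldots,m_k}1$ is analytic in a neighbourhood of $1$ (here again $m_1\neq(1,0)$ is what guarantees analyticity of $\H{m_1,\ldots,m_k}x$ at $x=1$). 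Hence all derivatives of $g$ extend continuously across $x=1$ and the Taylor series of $g$ around $1$ has positive radius of convergence, so $g$ is analytic on $(0,\infty)$.

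The main obstacle is purely bookkeeping: one must check that the hypothesis $m_i\neq(1,0)$ for all $i$ is used exactly where needed — namely to ensure (i) that $\H{m_1,\ldots,m_k}1$ is finite so that the subtraction in $g$ makes sense, and (ii) that $\H{m_1,\ldots,m_k}x$ is genuinely analytic \emph{at} $x=1$ (not merely one-sidedly), since a leading letter $(1,0)$ would introduce a $\log(1-x)$ singularity there. Apart from that, the argument is a transcription of the harmonic-sums proof, with ``$1/(1+x)$'' replaced by ``$1/\Phi_a(x)$'', ``$1/x$'' by the $(0,0)$-letter function, and the analyticity of $1/\Phi_a$ on $(0,\infty)$ for $a>1$ supplied by the elementary fact that $\Phi_a$ has no positive real zero for $a\geq 2$.
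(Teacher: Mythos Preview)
Your proposal is correct and follows exactly the route the paper intends: the paper itself omits the proof of this lemma, stating only that ``the proofs of these lemmas can be easily extended'' from Lemma~\ref{HSanalytic1}, and you have carried out precisely that extension. Your identification of the key facts --- that $\Phi_a$ has no positive real zero for $a>1$, that the exclusion of the letter $(1,0)$ guarantees both finiteness of $\H{m_1,\ldots,m_k}1$ and analyticity of $\H{m_1,\ldots,m_k}x$ at $x=1$ --- is exactly what is needed, and the de~l'Hospital computation at $x=1$ is copied verbatim from the proof of Lemma~\ref{HSanalytic1}, as it should be.
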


\begin{lemma}
Let $\Hs{m_1,m_2,\ldots,m_k}x$ be a shifted cyclotomic harmonic polylogarithm with $m_k\neq (1,0)$ and let $a\in\N.$ Then 
$$\frac{\Hs{m_1,m_2,\ldots,m_k}{1-x}}{\Phi_a(x)}$$
is analytic for $x \in (0,2).$
\label{CSanalytic2}
\end{lemma}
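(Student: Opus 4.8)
The plan is to mimic the proof of Lemma~\ref{HSanalytic2} (about ordinary harmonic polylogarithms), since the statement here is its exact analogue with the cyclotomic letter $(a,b)$ replacing the letters from $\{1,0,-1,2\}$. The key structural facts needed are: (i) the cyclotomic polynomial $\Phi_a(x)$ has no zeros on the real interval $[0,2]$ when $a>1$ except possibly... actually $\Phi_a(x)>0$ for all real $x$ when $a>2$, and $\Phi_2(x)=x+1>0$ on $(0,2]$; the only real root of any $\Phi_a$ is at $x=1$ for $\Phi_1$, which is excluded here since $a\in\N$ with $a>1$ is the relevant range (and for $a=1$ the hypothesis $m_k\neq(1,0)$ is what saves us, but here $\Phi_a$ in the denominator has $a\in\N$ general — I would first remark that $\frac{1}{\Phi_a(x)}$ is itself analytic on $(0,2)$ whenever $\Phi_a$ has no zero there, which holds for every $a\neq 1$, and separately handle $a=1$ using the trailing-index hypothesis $m_k\neq(1,0)$ exactly as in Lemma~\ref{HSanalytic2}); (ii) the shifted cyclotomic harmonic polylogarithm $\Hs{m_1,\ldots,m_k}{1-x}$ is built by iterated integration of the functions $g_{a_i}^{b_i}(y)=\frac{(1-y)^{b_i}}{\Phi_{a_i}(1-y)}$, whose only possible singularity on a neighborhood of $[0,1]$ comes from a zero of $\Phi_{a_i}(1-y)$ at $y=0$ — i.e.\ from the letter $(1,0)$, since $\Phi_1(1-y)=-y$.

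First I would prove an auxiliary lemma, the analogue of Lemma~\ref{CSanalytic1}'s shifted counterpart: if $\Hs{m_1,\ldots,m_k}{x}$ has $m_i\neq(1,0)$ for all $i$, then $\Hs{m_1,\ldots,m_k}{x}$ and $\frac{\Hs{m_1,\ldots,m_k}{x}}{1-x}$ are analytic on $(-\infty,1)$, while if only the trailing index satisfies $m_k\neq(1,0)$ then $\Hs{m_1,\ldots,m_k}{x}$ is analytic on $(0,1)$ and extends analytically across $x=0$. This goes by induction on the depth $k$ using the derivative formula $\frac{d}{dx}\Hs{(a_1,b_1),\ldots}{x}=g_{a_1}^{b_1}(x)\Hs{(a_2,b_2),\ldots}{x}$ together with the Leibniz-rule identities of Lemma~\ref{HSderivatives} (which apply verbatim to $\frac{1}{1-x}$ and, after noting $\Phi_a$ is analytic and nonvanishing near the relevant point, to $\frac{1}{\Phi_a(x)}$); the subtle point where trailing-$(1,0)$ must be excluded is exactly the behavior of $\Hs{m_1,\ldots,m_{k-1},(1,0)}{x}$ near $x=0$, handled as in the proof of Lemma~\ref{HSanalytic1} by the de l'Hospital computation showing $\lim_{x\to 1}(f(x)/(1-x))^{(n)}=-f(1)^{(n+1)}/(n+1)$ applied now near $x=0$.

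Then the main statement follows quickly: substitute $x\mapsto 1-x$, so that $\Hs{m_1,\ldots,m_k}{1-x}$ as a function of $x$ on $(0,2)$ corresponds to $\Hs{m_1,\ldots,m_k}{u}$ with $u=1-x\in(-1,1)$, and by the auxiliary lemma this is analytic for $u\in(-\infty,1)$, hence for $x\in(0,\infty)\supset(0,2)$; meanwhile $\frac{1}{\Phi_a(x)}$ is analytic on all of $(0,2)$ for every admissible $a$ (here $a\in\N$; for $a=1$ one uses that the factor of the numerator forces analyticity exactly when $m_k\neq(1,0)$, but in the present lemma the denominator is $\Phi_a$ with the problematic $a=1$ case covered by the hypothesis on $m_k$ — I would spell this out carefully). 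A product of two functions analytic on $(0,2)$ is analytic on $(0,2)$, which gives the claim.

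I expect the main obstacle to be the bookkeeping around which index controls which potential singularity: the hypothesis $m_k\neq(1,0)$ (a condition on the \emph{innermost} letter) must be shown to be precisely what guarantees analyticity at the problematic endpoint, while leading $(1,0)$'s are harmless because they only produce powers of $\log$ of the integration variable which, under the shift, become powers of $\log(1-x)$ — analytic away from $x=1$. This is exactly the phenomenon isolated in Lemma~\ref{HSanalytic1} and Lemma~\ref{HSanalytic2}, so the real work is checking that the cyclotomic-polynomial denominators $\Phi_a$, $a>1$, genuinely introduce no new real singularities on $(0,2)$ — which is immediate from the fact that the nontrivial cyclotomic polynomials have no real roots other than $\pm1$ (and $\Phi_1,\Phi_2$ contribute $x=1$ and $x=-1$ respectively, both outside $(0,2)$ in the relevant sense, with $x=1$ handled by the trailing-index hypothesis). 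I would state this root fact explicitly and cite it rather than reprove it, then carry out the induction in parallel with the earlier proofs.
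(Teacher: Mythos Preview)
Your proposal is correct and follows essentially the same approach as the paper, which omits the proof entirely with the remark that it ``can be easily extended'' from Lemmas~\ref{HSanalytic1} and~\ref{HSanalytic2}. Your identification of the two key ingredients --- that $\Phi_a$ has no zeros on $(0,2)$ for $a\neq 1$ (with the $a=1$ case handled by the vanishing of the numerator at $x=1$ under the hypothesis $m_k\neq(1,0)$), and that the only letter $g_a^b$ singular at $y=0$ is $g_1^0$ --- is exactly what drives the argument, and your inductive treatment of intermediate $(1,0)$ indices (where the inner factor vanishes at $y=0$, cancelling the pole of $g_1^0$) is the right mechanism. One small tightening: your auxiliary lemma as phrased (``extends analytically across $x=0$'') should be stated as analyticity on all of $(-1,1)$, not just a neighborhood of $0$, since you need $u=1-x$ to range over $(-1,1)$; but your subsequent reasoning already covers this, because the $g_a^b$ for $(a,b)\neq(1,0)$ are analytic on the whole of $(-1,1)$ and the $(1,0)$ pole at $y=0$ is absorbed by the inductive vanishing.
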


\begin{lemma}
Let $\H{\ve m}x=\H{m_1,m_2,\ldots,m_k}x$ be a cyclotomic harmonic polylogarithm with $m_1\neq~(1,0)$ and let $\Hs{\ve b}x=\Hs{b_1,b_2,\ldots,b_l}x$ 
be a shifted cyclotomic harmonic polylogarithm with $b_l\neq~(1,0).$ Then we have 
\begin{eqnarray*}
\M{\frac{\H{\ve m}x}{x-1}}{in+p}&=&\int_0^1{\frac{x^{in+p}(\H{\ve m}x-\H{\ve m}1)}{x-1}dx}-\int_0^1{\frac{\H{\ve m}x-\H{\ve m}1}{x-1}}dx\\
	    &&-\S{1}{in+p}\H{\ve m}{1}
\end{eqnarray*}
and
\begin{eqnarray*}
\M{\frac{\Hs{\ve b}{1-x}}{x-1}}{jn+p}&=&\int_0^1{\frac{x^{jn+p}\Hs{\ve b}{1-x}}{x-1}dx}+\Hs{(1,0),\ve b}1
\end{eqnarray*}
where
$$\int_0^1{\frac{\H{\ve m}x-\H{\ve m}1}{x-1}}dx, \ \H{\ve m}{1} \textnormal{ and } \Hs{(1,0),\ve b}1 $$
are finite constants.
\label{CSmelexpconst}
\end{lemma}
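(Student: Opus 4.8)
\textbf{Proof plan for Lemma~\ref{CSmelexpconst}.} The strategy mirrors exactly the proof of Lemma~\ref{HSmelexpconst} (the harmonic polylogarithm case), now adapted to the cyclotomic setting, and the only genuinely new ingredient is checking that the various boundary terms vanish and that the arising constants are finite. First I would establish the second, simpler identity. Write $\M{\frac{\Hs{\ve b}{1-x}}{x-1}}{jn+p}=\int_0^1 \frac{(x^{jn+p}-1)\Hs{\ve b}{1-x}}{x-1}dx = \int_0^1 \frac{x^{jn+p}\Hs{\ve b}{1-x}}{x-1}dx - \int_0^1 \frac{\Hs{\ve b}{1-x}}{x-1}dx$. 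For the second integral I would integrate by parts using the fact that $g_{1}^{0}(x) = \frac{1}{-x}$ is (up to sign) the logarithmic derivative appearing in the definition of shifted cyclotomic polylogarithms, equivalently that $\int_0^x \frac{1}{1-y}dy = \H{(1,0)}{x}$ reversed: more precisely, since $\frac{d}{dx}\Hs{(1,0),\ve b}{x}=g_1^0(x)\Hs{\ve b}{x}=\frac{-1}{x}\Hs{\ve b}{x}$ and we are evaluating at $1-x$, a substitution $x\mapsto 1-x$ turns $\int_0^1 \frac{\Hs{\ve b}{1-x}}{x-1}dx$ into $-\int_0^1 \frac{\Hs{\ve b}{y}}{y}dy$, which integrates to $\Hs{(1,0),\ve b}{1}$ up to sign; the boundary term at $y=0$ vanishes because $b_l\neq(1,0)$ forces $\Hs{\ve b}{0}=0$ (analogue of the facts listed after Definition~\ref{CShlogdef}, carried over to shifted polylogarithms), and at $y=1$ the product of $\log$-type growth of $\Hs{\ve b}$ against the vanishing $\Hs{(1,0)}$ factor is harmless. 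This gives the claimed $+\Hs{(1,0),\ve b}1$, and the constant is finite precisely because $b_l\neq(1,0)$.

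For the first identity I would follow the algebraic manipulation in the proof of Lemma~\ref{HSmelexpconst} verbatim: starting from $\M{\frac{\H{\ve m}x}{x-1}}{in+p}=\int_0^1 \frac{(x^{in+p}-1)\H{\ve m}x}{x-1}dx$, insert $\pm x^{in+p}\H{\ve m}1 \pm \H{\ve m}1$ in the numerator to split it as
\begin{eqnarray*}
\M{\frac{\H{\ve m}x}{x-1}}{in+p}&=&\int_0^1\frac{x^{in+p}(\H{\ve m}x-\H{\ve m}1)}{x-1}dx-\int_0^1\frac{\H{\ve m}x-\H{\ve m}1}{x-1}dx\\
&&+\H{\ve m}1\int_0^1\frac{x^{in+p}-1}{x-1}dx,
\end{eqnarray*}
and then recognize the last integral as $\S{1}{in+p}$ by Lemma~\ref{CSintrep1} (the $a=1$, $b=0$, $c=1$ case, which is just $\int_0^1\frac{x_1^{in+p}-1}{x_1-1}dx_1$). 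It remains to argue that $\int_0^1\frac{\H{\ve m}x-\H{\ve m}1}{x-1}dx$ and $\H{\ve m}1$ are finite. For $\H{\ve m}1$ this is immediate from the finiteness criterion after Definition~\ref{CShlogdef} together with the hypothesis $m_1\neq(1,0)$. For the integral, I would integrate by parts exactly as in Lemma~\ref{HSmelexpconst}: $\frac{1}{x-1}$ is the derivative of $\H{(1,0)}{x}$, so $\int_0^1\frac{\H{\ve m}x-\H{\ve m}1}{x-1}dx = [(\H{\ve m}x-\H{\ve m}1)\H{(1,0)}{x}]_0^1 - \int_0^1 f_{m_1}^{\bullet}(x)(\text{lower depth})\H{(1,0)}{x}dx$; the boundary term at $x=1$ vanishes because $(\H{\ve m}x-\H{\ve m}1)\to 0$ faster than $\H{(1,0)}{x}=\log(1-x)$ diverges (this is where $m_1\neq(1,0)$ enters again — it guarantees $\H{\ve m}$ is continuous, not log-singular, at $1$), the boundary term at $x=0$ vanishes since $\H{(1,0)}{0}=0$, and the remaining integral produces a linear combination of cyclotomic harmonic polylogarithms of the form $\H{m_1,(1,0),m_2,\ldots,m_k}{1}$, all finite since $m_1\neq(1,0)$.

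The main obstacle — and it is a mild one — is verifying the vanishing of the boundary terms rigorously, i.e.\ controlling the rate at which $\H{\ve m}x-\H{\ve m}1\to 0$ and $\Hs{\ve b}x\to 0$ near the endpoints. This requires knowing that $\H{\ve m}$ behaves near $x=1$ like a convergent power series (when $m_1\neq(1,0)$) and that $\Hs{\ve b}$ near $x=0$ vanishes to first order (when $b_l\neq(1,0)$); both follow from the structure results for cyclotomic and shifted cyclotomic polylogarithms already recorded in the excerpt (the bullet points after Definition~\ref{CShlogdef}, plus the analogous statements for shifted polylogarithms which follow from the same induction on weight). Everything else is a routine translation of the harmonic-polylogarithm proof, replacing $f_{m_1}$ by $f_{a_1}^{b_1}$ and the letter $1$ by $(1,0)$, and invoking Lemma~\ref{CSintrep1} in place of the weight-one Mellin transform lemma.
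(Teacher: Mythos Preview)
Your approach is exactly what the paper does: it simply states that the proof is analogous to that of Lemma~\ref{HSmelexpconst}, and you have correctly spelled out that analogy---the algebraic splitting of the numerator for the first identity, the integration by parts against $\H{(1,0)}{x}$ to show finiteness of the constant (using $m_1\neq(1,0)$), and the substitution $y=1-x$ together with the definition of $\Hs{(1,0),\ve b}{}$ for the second identity (using $b_l\neq(1,0)$). One small caution: the cyclotomic chapter uses the denominator $x-1$ rather than $1-x$, so when you carry over the computation from Lemma~\ref{HSmelexpconst} the sign of the $\S{1}{in+p}\H{\ve m}{1}$ term flips relative to the harmonic case---your displayed split correctly produces $+\H{\ve m}{1}\S{1}{in+p}$, so track this against the stated sign in the lemma.
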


\begin{remark}
\label{CSExpandableIntegrals}
Combining Lemma \ref{CSanalytic1}, Lemma \ref{CSanalytic2} and \ref{CSmelexpconst} we are able to expand Mellin transforms of the form
$$\M{\frac{\H{m_1,m_2,\ldots,m_k}x}{\Phi_a(x)}}{in+p} \ \textnormal{ and } \M{\frac{\Hs{b_1,b_2,\ldots,b_l}{1-x}}{\Phi_a(x)}}{jn+p}$$
where $m_r\neq (1,0)$ for $1\leq r\leq k$ and $b_l\neq (1,0)$ and suitable $i$ and $j$ (see Section \ref{CSCyloMel}) following the method presented in Section \ref{HShexp}, 
or using repeated integration by parts.
\end{remark}

\begin{remark}
\label{CSTrailing1rem}
Analyzing the method to compute the inverse Mellin transform of cyclotomic harmonic sums presented in Section \ref{CSintrep} we find out that the cyclotomic harmonic 
polylogarithms with highest weight in the inverse Mellin transform of a cyclotomic harmonic sum without trailing ones do not have trailing $(1,0),$ while the cyclotomic harmonic 
polylogarithms with highest weight in the inverse Mellin transform of a cyclotomic harmonic sum with trailing ones have trailing $(1,0).$\\
Using the method presented in Section \ref{CSshiftedhlogs} we can always express a cyclotomic harmonic polylogarithm without trailing $(1,0)$ at argument $x$ using shifted cyclotomic harmonic 
polylogarithms at argument $1-x$ where the shifted cyclotomic harmonic polylogarithms with highest weight do not have trailing $(1,0).$ 
\end{remark}

Now we are ready to state an algorithm to compute asymptotic expansions of cyclotomic harmonic sums assuming that we are able to expand linear cyclotomic harmonic sums (we will treat
linear cyclotomic harmonic sums in the subsequent subsection).
If we want to find the asymptotic expansions of a cyclotomic harmonic sum $\S{m_1,m_2,\ldots,m_k}n=\S{(a_1,b_1,c_1),\ldots,(a_k,b_k,c_k)}n,$ we can proceed as follows:

\begin{itemize}
	\item if $\S{m_1,m_2,\ldots,m_k}n$ has trailing ones, \ie $c_k=1$ we first extract them (see Section \ref{CSextracttrailing}); 
	treat linear cyclotomic harmonic sums in the subsequent subsection; apply the following items to each of the 
	cyclotomic harmonic sums without trailing ones;
	\item suppose now $\S{m_1,m_2,\ldots,m_k}n=\S{(a_1,b_1,c_1),\ldots,(a_k,b_k,c_k)}n$ has no trailing ones, \ie $c_k\neq 1;$ let 
	$\frac{\H{\ve q_1}x}{\Phi_{r_1}(x)},\ldots,\frac{\H{\ve q_v}x}{\Phi_{r_v}(x)}$ be the weighted cyclotomic harmonic polylogarithms of highest weight in the inverse Mellin 
	transform of $\S{m_1,m_2,\ldots,m_k}n$ and let $\M{\frac{\H{\ve q_1}x}{\Phi_{r_1}(x)}}{l_1n+p_1},\ldots, \M{\frac{\H{\ve q_v}x}{\Phi_{r_v}(x)}}{l_vn+p_v}$ be 
	the derived associated Mellin transforms; we can express $\S{m_1,m_2,\ldots,m_k}n$ as
	\begin{equation}\label{CSasyalg1}
		\S{m_1,m_2,\ldots,m_k}n=c_1\M{\frac{\H{\ve q_1}x}{\Phi_{r_1}(x)}}{l_1n+p_1}+\cdots +c_v\M{\frac{\H{\ve q_v}x}{\Phi_{r_v}(x)}}{l_vn+p_v}+T,
	\end{equation}
	where $c_i\in \R$ and $T$ is an expression in cyclotomic harmonic sums (which have smaller weight than $\S{m_1,m_2,\ldots,m_k}n$) and constants;
	\item we proceed by expanding each of the $\M{\frac{\H{\ve q_i}x}{\Phi_{r_i}(x)}}{l_i n+p_i}$; let $\M{\frac{\H{q_1,\ldots,q_s}x}{\Phi_{r}(x)}}{l n+p}$ be one of these 
	Mellin transforms:
	\begin{description}
		\item[all $q_i\neq (1,0)$:] expand $\M{\frac{\H{q_1,q_2,\ldots,q_s}x}{\Phi_{r}(x)}}{n}$ directly see Remark \ref{CSExpandableIntegrals}

		\item[not all $m_i\neq (1,0)$:]
		\begin{itemize}\item[]
			\item express $\H{\ve q}x$ in terms of shifted cyclotomic harmonic polylogarithms at argument $1-x$; expand all products; then we get
				\begin{equation}\label{CSasyalg2}
					\M{\frac{\H{\ve q}x}{\Phi_{r}(x)}}{ln+p}=\sum_{i=1}^tc_i\M{\frac{\Hs{\ve f_i}{1-x}}{\Phi_{r}(x)}}{ln+p}+c \ \textnormal{  with } c,c_i\in\R
				\end{equation}
			 \item for each Mellin transform $\M{\frac{\Hs{f_1,\ldots,f_j}{1-x}}{\Phi_{r}(x)}}{ln+p}$ do
				\begin{description}\item[]
					\item[$f_j\neq (1,0):$] expand $\M{\frac{\Hsma{\ve f}{1-x}}{\Phi_{r}(x)}}{ln+p}$ as given in Remark \ref{CSExpandableIntegrals}
					\item[$f_j=(1,0):$] express $\Hs{\ve f}{1-x}$  by cyclotomic harmonic polylogarithms at $x$ see Section~\ref{CSshiftedhlogs}; expand all products; hence we can write
						\begin{equation}\label{CSasyalg3}
							\M{\frac{\Hs{\ve f}{1-x}}{\Phi_{r}(x)}}{ln+p}=\sum_{i=1}^td_i\M{\frac{\H{\ve g_i}{x}}{\Phi_{r}(x)}}{ln+p}+d
						\end{equation}
						with $d,d_i\in\R$ and perform the Mellin transforms $\M{\frac{\H{\ve g_i}{x}}{\Phi_{r}(x)}}{ln+p};$ see Section~\ref{CSCyloMel}
				\end{description}
		\end{itemize}
	\end{description}
	\item replace the $\M{\frac{\H{\ve q_i}x}{\Phi_{r_i}(x)}}{l_i n+p_i}$ in equation (\ref{CSasyalg1}) by the results of this process
	\item for all nonlinear cyclotomic harmonic sums that remain in equation (\ref{CSasyalg1}) apply the above points; since these cyclotomic harmonic sums have smaller weights
	      than $\S{m_1,m_2,\ldots,m_k}n$ this process will terminate
\end{itemize}

Some remarks are in place: Since $c_k\neq 1$ in equation (\ref{CSasyalg1}), we know (see Remark \ref{CSTrailing1rem}) that the cyclotomic harmonic polylogarithms $\H{\ve q_i}x$ 
in equation (\ref{CSasyalg1}) do not have trailing index $(1,0)$. Since the cyclotomic harmonic polylogarithms $\H{\ve q_i}x$ do not have the trailing index $(1,0),$ the shifted
cyclotomic harmonic polylogarithms of highest weight in equation (\ref{CSasyalg2}) will not have the trailing index $(1,0)$ (see Remark \ref{CSTrailing1rem}). Therefore the 
nonlinear cyclotomic harmonic sums which will appear in equation (\ref{CSasyalg3}) have smaller weight than $\S{m_1,m_2,\ldots,m_k}n$ of (\ref{CSasyalg1}) and hence this 
algorithm will eventually terminate.

\begin{example}Using the algorithm from above we compute the asymptotic expansion of $\S{(3,1,2),(1,0,1)}n$ up to order $3$:
\small
\begin{eqnarray*}
&&\frac{1}{972} \biggl(
-\frac{1}{n^3}\bigl(120\textnormal{H}_{(3,0),(0,0)}(1)+60 \textnormal{H}_{(3,1),(0,0)}(1)-60 \zeta_2+67\bigr)
+\frac{1}{n^2}\bigl(168 \textnormal{H}_{(3,0),(0,0)}(1)\bigl.\\&&\bigr.+84 \textnormal{H}_{(3,1),(0,0)}(1)-84 \zeta_2+81\bigr)
+\frac{108}{n}\bigl(-2\textnormal{H}_{(3,0),(0,0)}(1)-\textnormal{H}_{(3,1),(0,0)}(1)+\zeta_2-1\bigr)\\
&&-162 \bigl(-12 \textnormal{H}_{(3,0)}(1)-6 \textnormal{H}_{(3,1)}(1)+12-6 \textnormal{H}_{(3,0),(0,0)}(1)-3 \textnormal{H}_{(3,1),(0,0)}(1)\bigr.\\
&&-4\textnormal{H}_{(3,0),(0,0),(1,0)}(1)-6 \textnormal{H}_{(3,0),(1,0),(0,0)}(1)+4 \textnormal{H}_{(3,0),(3,0),(0,0)}(1)+2 \textnormal{H}_{(3,0),(3,1),(0,0)}(1)\\
&&\bigl.-2 \textnormal{H}_{(3,1),(0,0),(1,0)}(1)-6 \textnormal{H}_{(3,1),(1,0),(0,0)}(1)+8
   \textnormal{H}_{(3,1),(3,0),(0,0)}(1)+4 \textnormal{H}_{(3,1),(3,1),(0,0)}(1)\\
&&+3 \zeta_2-2 \zeta_3-3\bigr)\biggr)+\frac{1}{3} \textnormal{S}_{(3,1,1)}(n) \left(2 \textnormal{H}_{(3,0),(0,0)}(1)+\textnormal{H}_{(3,1),(0,0)}(1)-\zeta_2-6\right)\\
&&+\frac{1}{3} \textnormal{S}_{(3,2,1)}(n) \left(2 \textnormal{H}_{(3,0),(0,0)}(1)+\textnormal{H}_{(3,1),(0,0)}(1)-\zeta_2+3\right)\\
&&+\frac{1}{162} \textnormal{S}_1(n) \left(-72 \textnormal{H}_{(3,0),(0,0)}(1)-36 \textnormal{H}_{(3,1),(0,0)}(1)-\frac{11}{n^3}+\frac{15}{n^2}-\frac{18}{n}+36 \zeta_2+54\right). 
\end{eqnarray*}
\normalsize
In order to expand the remaining linear cyclotomic sums, \ie $\textnormal{S}_{(3,1,1)}(n),\textnormal{S}_{(3,2,1)}(n)$ and $\S{1}n,$ we refer to the next section.
\label{CSasyexp1}
\end{example}

\subsubsection*{Asymptotic Expansion of {\itshape Linear} Cyclotomic Harmonic Sums}
So far we do not know how to compute the asymptotic expansion of Mellin transforms of the form
$$ 
\M{\frac{\H{m_1,m_2,\ldots,m_k}x}{\Phi_a(x)}}{in+p}
$$
with $m_k=(1,0).$\\
Since the cyclotomic harmonic polylogarithms with highest weight in the inverse Mellin transform of a cyclotomic harmonic sum with trailing ones have trailing $(1,0),$ we have to find 
a new method for these sums. From Section \ref{HShexp} we know how to compute the expansion of $\S{1,\ldots,1}n$ and in the following we will connect the computation of the asymptotic 
expansion of general linear cyclotomic sums to $\S{1,\ldots,1}n.$\\
We start with the weight $w=1$ case, \ie we consider the sum $\S{(a,b,1)}n.$ If $b=0,$ we have 
$$
\S{(a,b,1)}n=\frac{1}{a}\S{1}n,
$$
and since we know the expansion of $\S{1}n$ we can handle this case. Now assume that $b\neq 0.$ Then we have
$$
\S{(a,b,1)}n=\sum_{i=1}^{n}\left(\frac{1}{a i+b}-\frac{1}{a i}\right)+\frac{1}{a}\S{1}n.
$$
The integral representation of $$\sum_{i=1}^{n}\left(\frac{1}{a i+b}-\frac{1}{a i}\right)$$ yields (compare Lemma \ref{CSintrep1})
\begin{eqnarray*}
\sum_{i=1}^{n}\left(\frac{1}{a i+b}-\frac{1}{a i}\right)&=&\int_0^1\frac{x^{a-1} \left(x^{b}-1\right) \left(x^{a n}-1\right)}{x^{a}-1}dx\\
				&=&\int_0^1x^{a-1}\left(x^{a n}-1\right)\frac{\Phi_{b} \prod_{d|a, d < a}\Phi_{d}}{\Phi_{a}\prod_{d|b, d < b}\Phi_{d}}dx\\
				&=&\int_0^1x^{a-1}\left(x^{a n}-1\right)\frac{\Phi_{b} \prod_{d|a, 1 < d < a}\Phi_{d} }{\Phi_{a}\prod_{d|b, 1 < d < b}\Phi_{d}}dx.
\end{eqnarray*}
At this point we may perform a partial fraction decomposition (note that there is no factor of $(1-x)$ in the denominator) and split the integrals. We can already compute the
asymptotic expansions of these integrals (see Remark \ref{CSTrailing1rem}) and since we know the expansion of $\S1n$ we can handle the weight one case.\\
We will now assume that we are 
able to compute the asymptotic expansions of linear cyclotomic harmonic sums with weight $w<k$ and we will consider the weight $w=k$ case.
Note that we can always relate a cyclotomic harmonic sum with an index $(a_i,0,c_i)$ to a cyclotomic harmonic sum with index $(1,0,c_i)$ since
 $$\S{(a_1,b_1,c_1),\ldots,(a_i,0,c_i),\ldots,(a_k,b_k,c_k)}n=\frac{1}{{a_i}^{c_i}}\S{(a_1,b_1,c_1),\ldots,(1,0,c_i),\ldots,(a_k,b_k,c_k)}n;$$
hence in the following we can always assume that $a_i=1$ if an index $(a_i,0,c_i)$ is present.\\
Let now $\S{m_1,\ldots,m_k}n=\S{(a_1,b_1,1),\ldots,(a_k,b_k,1)}n$ be a linear cyclotomic harmonic sum with $a_i=1$ if $b_i=0$ and let $q$ be the number of indices $m_i$ that 
are different from $(1,0,1).$ 
If $m_k=(1,0,1),$ we extract the trailing $(1,0,1)$ and end up with a polynomial in $\S{1}n$ with coefficients in cyclotomic harmonic sums (these sums do not have to be linear 
cyclotomic sums) without trailing $(1,0,1).$ From the nonlinear cyclotomic harmonic sums we extract possible trailing ones (see Remark \ref{CSextractleading1rem}). We can handle 
the powers of $\S{1}n$ and for the sums without trailing ones we can use the algorithm of the previous section. What remains are linear sums without trailing $(1,0,1)$ having at 
most $q$ indices different form $(1,0,1).$\\
Assume now that $\S{m_1,\ldots,m_k}n=\S{(a_1,b_1,1),\ldots,(a_k,b_k,1)}n$ is a linear cyclotomic harmonic sum with $q$ indices $m_i$ different from $(1,0,1),$ where $m_k\neq (1,0,1).$
Note that we have
\begin{eqnarray*}
\S{m_1,\ldots,m_k}n&=&\overbrace{
\sum_{i_1=1}^n\frac{1}{a_1 i_1+b_1}\cdots\sum_{i_{k-1}=1}^{i_{k-2}}\frac{1}{a_{k-1} i_{k-1}+b_{k-1}}\sum_{i_{k}=1}^{i_{k-1}}\left(\frac{1}{a_{k} i_{k}+b_{k}}-\frac{1}{a_k i_k}\right)}^{=:A}\\
&&+\frac{1}{a_k}\S{m_1,\ldots,m_{k-1},(1,0,1)}n.
\end{eqnarray*}
We now look for an integral representation of $A.$ For the inner sum we get as before
\begin{eqnarray*}
\sum_{i_{k}=1}^{i_{k-1}}\left(\frac{1}{a_{k} i_{k}+b_{k}}-\frac{1}{a_k i_k}\right)&=&\int_0^1\frac{x^{a_k-1} \left(x^{b_k}-1\right) \left(x^{a_k i_{k-1}}-1\right)}{x^{a_k}-1}dx\\
				&=&\int_0^1x^{a_k-1}\left(x^{a_k i_{k-1}}-1\right)\frac{\Phi_{b_k} \prod_{d|a_k, d < a_k}\Phi_{d}}{\Phi_{a_k}\prod_{d|b_k, d < b_k}\Phi_{d}}dx\\
				&=&\int_0^1x^{a_k-1}\left(x^{a_k i_{k-1}}-1\right)\frac{\Phi_{b_k} \prod_{d|a_k, 1 < d < a_k}\Phi_{d} }{\Phi_{a_k}\prod_{d|b_k, 1 < d < b_k}\Phi_{d}}dx.
\end{eqnarray*}
At this point we may perform a partial fraction decomposition (note that there is no factor of $(1-x)$ in the denominator), split the integrals and proceed to get an integral 
representation of $A$ as described in Section \ref{CSintrep}. Finally we can use repeated integration by parts to get a Mellin type representation of $A.$ The cyclotomic 
harmonic polylogarithms at highest weight will not have trailing $(1,0)$ because of the absence of $(1-x)$ in the denominator and hence we can compute the asymptotic expansions
of the Mellin transform of these cyclotomic harmonic polylogarithms (see Remark \ref{CSTrailing1rem}). If there are cyclotomic harmonic polylogarithms of lower weights with 
trailing $(1,0),$ we can perform the Mellin transform which will lead to sums of lower weight which we can already expand asymptotically.\\
The last step is to compute the expansion of $\S{m_1,\ldots,m_{k-1},(1,0,1)}n.$ If $q=1,$ we have $\S{m_1,\ldots,m_{k-1},(1,0,1)}n=\S{1,\ldots,1}n$ and hence we already know the 
expansion. If $q>1,$ we reduced at least the number of indices different from $(1,0,1)$ by one, and we can now apply the described strategy to $\S{m_1,\ldots,m_{k-1},(1,0,1)}n.$ Summarizing, we
step by step reduce the number of indices different from $(1,0,1).$ Eventually we will arrive at a cyclotomic sum with all indices equal to $(1,0,1).$ This case we already handled.
\begin{example}We consider the sum $\S{(3,1,1),(3,2,1)}n.$ We have
$$
\S{(3,1,1),(3,2,1)}n=\overbrace{\sum_{i=1}^n\frac{1}{3i+1}\sum_{j=1}^i\left(\frac{1}{3j+2}-\frac{1}{3j}\right)}^{=:A}+\frac{1}{3}\S{(3,1,1),(1,0,1)}n.
$$
The integral representation of $A$ yields
\small
\begin{eqnarray*}
&&\frac{1}{6} \left(-2 H_{(3,0)}(1)+2 H_{(3,1)}(1)-1\right) \int_0^1 \frac{x^{3 n}-1}{x-1} \, dx+\frac{1}{3} \int_0^1 \frac{\left(x^{3 n}-1\right) H_{(3,0)}(x)}{x-1} \, dx\\
&&+\int_0^1 x^{3n} H_{(3,0)}(x) \, dx+\frac{2}{3} \left(H_{(3,0)}(1)-H_{(3,1)}(1)-1\right) \int_0^1 \frac{x^{3 n}-1}{x^2+x+1} \, dx\\
&&+\frac{1}{6} \left(2 H_{(3,0)}(1)-2 H_{(3,1)}(1)+1\right) \int_0^1\frac{x \left(x^{3 n}-1\right)}{x^2+x+1} \, dx-\frac{2}{3} \int_0^1 \frac{\left(x^{3 n}-1\right) H_{(3,0)}(x)}{x^2+x+1} \, dx\\
&&-\frac{1}{3} \int_0^1 \frac{x \left(x^{3 n}-1\right)H_{(3,0)}(x)}{x^2+x+1} \, dx+\frac{-2 H_{(3,0)}(1)+2 H_{(3,1)}(1)+2 n+1}{2 (3 n+1)}\\
&&-\int_0^1 x^{3 n+1} \, dx+\frac{1}{3} \int_0^1 \left(x^{3 n}-1\right) \, dx.
\end{eqnarray*}
\normalsize
We can handle all these integrals and the asymptotic expansion of $A$ up to order $3$ is
\small
\begin{eqnarray*}
&&\textnormal{H}_{(3,1)}(1) \left(\frac{1}{3} \textnormal{H}_{(2,0)}(2)+\frac{10}{243 n^3}-\frac{11}{108 n^2}+\frac{5}{18 n}+\frac{1}{3} (\log (n)+\gamma )-\frac{1}{6}\right)-\frac{2}{3}
   \textnormal{H}_{(3,0)}(1){}^2\\&&+\left(\frac{1}{3} \textnormal{H}_{(3,1)}(1)+\frac{2}{3}\right) \textnormal{H}_{(3,0)}(1)+\frac{1}{3} \textnormal{H}_{(3,1)}(1){}^2-\frac{1}{6} \textnormal{H}_{(2,0)}(2)+\frac{1}{3} \textnormal{H}_{(3,0),(1,0)}(1)\\
&&+\frac{2}{3}\textnormal{H}_{(3,0),(3,0)}(1)+\frac{1}{3} \textnormal{H}_{(3,1),(3,0)}(1)-\frac{445}{4374 n^3}+\frac{85}{648 n^2}-\frac{23}{108 n}+\frac{1}{6} (-\log (n)-\gamma ).
\end{eqnarray*}
\normalsize
Now we have to consider $\S{(3,1,1),(1,0,1)}n.$ Removing the trailing $(1,0,1)$ yields
$$
\S{(3,1,1),(1,0,1)}n=\textnormal{S}_{(3,1,1)}(n) \textnormal{S}_{(1,0,1)}(n)+\textnormal{S}_{(1,0,1)}(n)-3 \textnormal{S}_{(3,1,1)}(n)-\textnormal{S}_{(1,0,1),(3,1,1)}(n).
$$
The depth one sums are easy to handle and it remains to expand $\textnormal{S}_{(1,0,1),(3,1,1)}(n).$ We have
$$
\S{(1,0,1),(3,1,1)}n=\overbrace{\sum_{i=1}^n\frac{1}{i}\sum_{j=1}^i\left(\frac{1}{3j+1}-\frac{1}{3j}\right)}^{=:B}+\frac{1}{3}\S{1,1}n.
$$
We can handle $B$ as we have already worked with $A.$ The asymptotic expansion of $B$ up to order $3$ is
\small
\begin{eqnarray*}
&&\textnormal{H}_{(3,0)}(1) \left(\textnormal{H}_{(2,0)}(2)-\frac{1}{12 n^2}+\frac{1}{2 n}+\log (n)+\gamma -2\right)+\textnormal{H}_{(3,1)}(1) \biggl(\textnormal{H}_{(2,0)}(2)-\frac{1}{12 n^2}+\frac{1}{2 n}\biggr.
\\&&\biggl.+\log (n)+\gamma -1\biggr)-\textnormal{H}_{(2,0)}(2)+\textnormal{H}_{(3,0),(1,0)}(1)-\textnormal{H}_{(3,0),(3,0)}(1)-\textnormal{H}_{(3,0),(3,1)}(1)\\
&&+\textnormal{H}_{(3,1),(1,0)}(1)-2 \textnormal{H}_{(3,1),(3,0)}(1)-2 \textnormal{H}_{(3,1),(3,1)}(1)-\frac{101}{1458 n^3}+\frac{19}{108 n^2}-\frac{11}{18 n}\\&&-\log (n)-\gamma +3.
\end{eqnarray*}
\normalsize
Since we know the expansion of $\S{1,1}n$ (see Section \ref{HSExpansion}) we can combine all these expansions and end up at the asymptotic expansion of $\S{(3,1,1),(3,2,1)}n$ up to order $3:$
\small
\begin{eqnarray*}
&&\frac{\frac{29}{162}-\frac{1}{108} \textnormal{H}_{(2,0)}(2)}{n^2}+\textnormal{H}_{(3,0)}(1) \biggl(-\frac{1}{3} \textnormal{H}_{(2,0)}(2)+\frac{1}{3} \textnormal{H}_{(3,1)}(1)+\frac{1}{108 n^2}-\frac{1}{18 n}-\frac{1}{9}(\log
   (n)+\gamma)\biggr.\\
&&\biggl.+\frac{2}{3}\biggr)+(\log (n)+\gamma ) \left(\frac{1}{9} \textnormal{H}_{(2,0)}(2)+\frac{10}{729 n^3}-\frac{11}{324 n^2}+\frac{5}{54 n}-\frac{1}{6}\right)+\textnormal{H}_{(3,1)}(1)
   \biggl(\frac{10}{243 n^3}\biggr.\\
&&\biggl.-\frac{1}{12 n^2}+\frac{1}{6 n}+\frac{1}{9} (\log (n)+\gamma )-\frac{1}{6}\biggr)+\frac{\frac{1}{18} \textnormal{H}_{(2,0)}(2)-\frac{25}{108}}{n}-\frac{2}{3}
   \textnormal{H}_{(3,0)}(1){}^2+\frac{1}{3} \textnormal{H}_{(3,1)}(1){}^2\\
&&-\frac{1}{6} \textnormal{H}_{(2,0)}(2)+\textnormal{H}_{(3,0),(3,0)}(1)+\frac{1}{3} \textnormal{H}_{(3,0),(3,1)}(1)-\frac{1}{3}
   \textnormal{H}_{(3,1),(1,0)}(1)+\textnormal{H}_{(3,1),(3,0)}(1)\\
&&+\frac{2}{3} \textnormal{H}_{(3,1),(3,1)}(1)-\frac{2285}{17496 n^3}+\frac{1}{18} (\log (n)+\gamma )^2-\frac{\zeta_2}{18}.
\end{eqnarray*}
\normalsize
\end{example}

\begin{example}[Example \ref{CSasyexp1} continued]The asymptotic expansion of $\S{(3,1,2),(1,0,1)}n$ up to order $3$ yields
\small
\begin{eqnarray*}
&&-\frac{1}{3} \zeta_2 \textnormal{H}_{(2,0)}(2)+\frac{2}{3} \textnormal{H}_{(2,0)}(2) \textnormal{H}_{(3,0),(0,0)}(1)+\frac{1}{3} \textnormal{H}_{(2,0)}(2) \textnormal{H}_{(3,1),(0,0)}(1)\\
&&+\frac{2}{3}\textnormal{H}_{(3,0),(0,0),(1,0)}(1)+\textnormal{H}_{(3,0),(1,0),(0,0)}(1)-\frac{2}{3} \textnormal{H}_{(3,0),(3,0),(0,0)}(1)-\frac{1}{3} \textnormal{H}_{(3,0),(3,1),(0,0)}(1)\\
&&+\frac{1}{3}\textnormal{H}_{(3,1),(0,0),(1,0)}(1)+\textnormal{H}_{(3,1),(1,0),(0,0)}(1)-\frac{4}{3} \textnormal{H}_{(3,1),(3,0),(0,0)}(1)-\frac{2}{3} \textnormal{H}_{(3,1),(3,1),(0,0)}(1)\\
&&+\frac{47}{972 n^3}-\frac{1}{108 n^2}+\left(-\frac{11}{162n^3}+\frac{5}{54 n^2}-\frac{1}{9 n}\right) (\log (n)+\gamma )-\frac{1}{9 n}+\frac{\zeta_3}{3}.
\end{eqnarray*}
\normalsize
\end{example}
Note that the arising constants in the final result are not yet reduced with respect to the relations given in Section \ref{CSInfRelations}.

\cleardoublepage  

\chapter{Cyclotomic S-Sums}
\label{CSSchapter}

\def\firstcircle{(0,0) circle (3.5cm and 1.5cm)}
\def\secondcircle{(0:4cm) circle (3.5cm and 1.5cm)}
\def\thirdcircle{(0:2cm) circle (6cm and 2.5cm)}

\colorlet{circle area}{green!20}

\tikzset{filled/.style={fill=circle area, draw=circle edge, thick},
    outline/.style={draw=circle edge, thick}}

\colorlet{circle edge}{black!100}
\colorlet{circle area}{black!20}
\setlength{\parskip}{5mm}
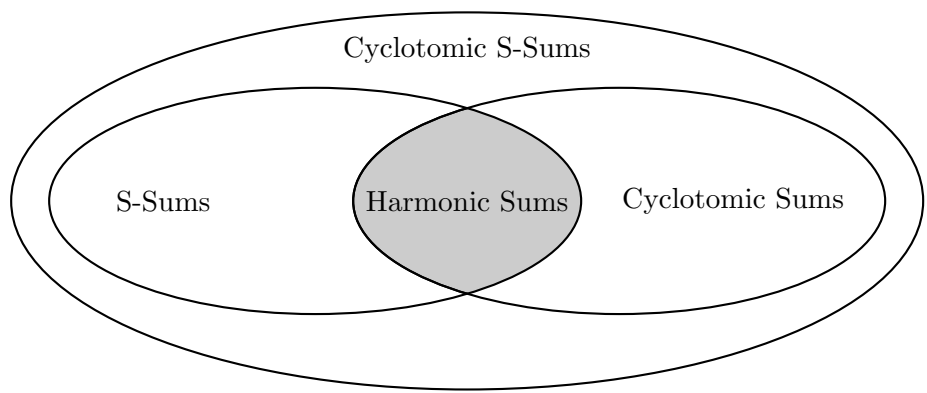
\begin{figure}
\centering
\begin{tikzpicture}
     \begin{scope}
         \clip \firstcircle;
         \fill[filled] \secondcircle;
     \end{scope}
     \draw[outline] \firstcircle;
     \draw[outline] \secondcircle;
     \draw[outline] \thirdcircle;
     \draw (2,0) node {Harmonic Sums};
     \draw (-2,0) node {S-Sums};
     \draw (5.5,0) node {Cyclotomic Sums};
     \draw (2,2) node {Cyclotomic S-Sums};
\end{tikzpicture}
\caption{\label{connectionfigure}Relations between the different extensions of harmonic sums.}
\end{figure}

In this chapter we will extend the definition of harmonic sums once more. We will consider {\itshape cyclotomic S-sums}, which will unify the two already treated extensions, 
\ie S-sums and cyclotomic harmonic sums are subsets of the {\itshape cyclotomic S-sums} as indicated in Figure \ref{connectionfigure}.

\section{Definition and Structure of Cyclotomic S-Sums}
\label{CSSdef}
\begin{definition}[Cyclotomic S-Sums]
Let $a_i,c_i,n,k \in \N,b_i\in\N_0,$ with $a_i>b_i$ and $x_i \in \R^*$ for $i\in \{1,2,\ldots,k\}.$ We define
\begin{eqnarray*}
&&\S{(a_1,b_1,c_1),(a_2,b_2,c_2),\ldots,(a_k,b_k,c_k)}{x_1,x_2,\ldots,x_k;n}=\\
&&\hspace{2cm}=\sum_{i_1 \geq i_2,\cdots i_k \geq 1}\frac{x_1^{i_1}}{(a_1 i_1+b_1)^{c_1}}\frac{x_2^{i_2}}{(a_2 i_2+b_2)^{c_2}}\cdots\frac{x_k^{i_1}}{(a_k i_k+b_k)^{c_k}}\\
&&\hspace{2cm}=\sum_{i_1=1}^n\frac{x_1^{i_1}}{(a_1 i_1+b_1)^{c_1}}\sum_{i_2=1}^{i_1}\frac{x_2^{i_2}}{(a_2 i_2+b_2)^{c_2}}\cdots\sum_{i_k=1}^{i_{k-1}}\frac{x_k^{i_1}}{(a_k i_k+b_k)^{c_k}}.
\end{eqnarray*}
$k$ is called the depth and $w=\sum_{i=0}^kc_i$ is called the weight of the cyclotomic S-sum $\S{(a_1,b_1,c_1),(a_2,b_2,c_2),\ldots,(a_k,b_k,c_k)}{x_1,x_2,\ldots,x_k;n}$.
\end{definition}

For later use we define the following set
\begin{eqnarray}
\mathcal{CS}(n)=\left\{q(s_1,\ldots,s_r)\left|r\in \N; \right. s_i \textnormal{ a cyclotomic S-sum at }n ;\ q\in \R[x_1,\ldots,x_r]\right\};
\label{CS}\nonumber
\end{eqnarray}
note that $\mathcal{CS}(n)\supseteq\mathcal{C}(n)\supseteq\mathcal{S}(n)$ as introduced on pages \pageref{C} and \pageref{S}.

\subsection{Product}
The product of harmonic sums can be generalized to {\itshape cyclotomic S-sums}. It turns out that they form again a quasi-shuffle algebra.

\begin{thm}
Let $a_i,c_i,d_i,f_i,k,l \in \N,b_i,e_i,n\in \N_0$ and $x_i,y_i \in \R^*.$
If $a_1 e_1\neq d_1 b_1,$ we have
\begin{eqnarray*}
&&\hspace{-1cm}\S{(a_1,b_1,c_1),\ldots,(a_k,b_k,c_k)}{x_1,x_2,\ldots,x_k;n}\S{(d_1,e_1,f_1),\ldots,(d_l,e_l,f_l)}{y_1,y_2,\ldots,y_l;n}=\\
&&\sum_{i=1}^n\frac{x_1^i\S{(a_2,b_2,c_2),\ldots,(a_k,b_k,c_k)}{x_2,\ldots,x_k;i}\S{(d_1,e_1,f_1),\ldots,(d_l,e_l,f_l)}{y_1,\ldots,y_l;i}}{(a_1 i+b_1)^{c_1}}\\
&&+\sum_{i=1}^n\frac{y_1^i\S{(a_1,b_1,c_1),\ldots,(a_k,b_k,c_k)}{x_1,\ldots,x_k;i}\S{(d_2,e_2,f_2),\ldots,(d_l,e_l,f_l)}{y_2,\ldots,y_l;i}}{(d_1 i+e_1)^{f_1}}\\
&&-\sum_{i=1}^n \left( (-1)^{c_1}\sum_{j_1=1}^{c_1}  (-1)^j \binom{c_1+f_1-j-1}{f_1-1}\frac{a_1^{f_1}d_1^{c_1-j}}{a_1 e_1-d_1 b_1}\frac{1}{(a_1 i+b_1)^j}\right.\\
&&\hspace{1cm}\left.+(-1)^{f_1}\sum_{j=1}^{f_1}{ (-1)^j \binom{c_1+f_1-j-1}{f_1-1}\frac{a_1^{f_1-j}d_1^{c_1}}{d_1 b_1-a_1 e_1}\frac{1}{(d_1 i+e_1)^j}}\right)\\
&&\hspace{1cm}\S{(a_2,b_2,c_2),\ldots,(a_k,b_k,c_k)}{x_1,\ldots,x_k;i}\S{(d_2,e_2,f_2),\ldots,(d_l,e_l,f_l)}{y_2,\ldots,y_l;i},
\end{eqnarray*}
and if $a_1 e_1= d_1 b_1$ we have
\begin{eqnarray*}
&&\hspace{-1cm}\S{(a_1,b_1,c_1),\ldots,(a_k,b_k,c_k)}{x_1,x_2,\ldots,x_k;n}\S{(d_1,e_1,f_1),\ldots,(d_l,e_l,f_l)}{y_1,y_2,\ldots,y_l;n}=\\
&&\sum_{i=1}^n\frac{x_1^i\S{(a_2,b_2,c_2),\ldots,(a_k,b_k,c_k)}{x_2,\ldots,x_k;i}\S{(d_1,e_1,f_1),\ldots,(d_l,e_l,f_l)}{y_1,\ldots,y_l;i}}{(a_1 i+b_1)^{c_1}}\\
&&+\sum_{i=1}^n\frac{y_1^i\S{(a_1,b_1,c_1),\ldots,(a_k,b_k,c_k)}{x_1,\ldots,x_k;i}\S{(d_2,e_2,f_2),\ldots,(d_l,e_l,f_l)}{y_2,\ldots,y_l;i}}{(d_1 i+e_1)^{f_1}}\\
&&-\frac{a_1^{f_1}}{d_1^{f_1}}\sum_{i=1}^n\frac{x_1^i y_1^i\S{(a_2,b_2,c_2),\ldots,(a_k,b_k,c_k)}{x_1,\ldots,x_k;i}\S{(d_2,e_2,f_2),\ldots,(d_l,e_l,f_l)}{y_2,\ldots,y_l;i}}{(a_1 i+b_1)^{c_1+f_1}}.
\end{eqnarray*}
\end{thm}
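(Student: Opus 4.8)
The statement is the exact cyclotomic-$S$-sum analogue of Theorem~\ref{CSprod} (and of the $S$-sum product formula~(\ref{SSsumproduct})), so the plan is to mimic that proof verbatim, the only new ingredient being that the summands now carry extra factors $x_i^{i_j}$ and $y_i^{i_j}$. First I would invoke the elementary index-splitting identity already used twice in the excerpt,
\begin{eqnarray*}
\sum_{i=1}^n\sum_{j=1}^n a_{ij}=\sum_{i=1}^n\sum_{j=1}^i a_{ij}+\sum_{j=1}^n\sum_{i=1}^j a_{ij}-\sum_{i=1}^n a_{ii},
\end{eqnarray*}
applied to
\begin{eqnarray*}
a_{ij}=\frac{x_1^i\,\S{(a_2,b_2,c_2),\ldots}{x_2,\ldots,x_k;i}}{(a_1 i+b_1)^{c_1}}\cdot\frac{y_1^j\,\S{(d_2,e_2,f_2),\ldots}{y_2,\ldots,y_l;j}}{(d_1 j+e_1)^{f_1}},
\end{eqnarray*}
which produces exactly the three sums on the right-hand side, except that the third (diagonal) sum is still in the unfactored form
\begin{eqnarray*}
-\sum_{i=1}^n\frac{(x_1 y_1)^i\,\S{(a_2,b_2,c_2),\ldots}{x_2,\ldots,x_k;i}\,\S{(d_2,e_2,f_2),\ldots}{y_2,\ldots,y_l;i}}{(a_1 i+b_1)^{c_1}(d_1 i+e_1)^{f_1}}.
\end{eqnarray*}

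The second and decisive step is to apply the partial-fraction Lemma~\ref{CSapart} to the kernel $\frac{1}{(a_1 i+b_1)^{c_1}(d_1 i+e_1)^{f_1}}$. In the generic case $a_1 e_1\neq d_1 b_1$ this rewrites the kernel as a linear combination of $\frac{1}{(a_1 i+b_1)^{j}}$ ($1\le j\le c_1$) and $\frac{1}{(d_1 i+e_1)^{j}}$ ($1\le j\le f_1$) with the binomial-coefficient weights spelled out in the lemma; substituting this back in front of the remaining product of cyclotomic $S$-sums gives precisely the bracketed expression in the statement. In the degenerate case $a_1 e_1=d_1 b_1$ Lemma~\ref{CSapart} instead collapses the kernel to $\left(\frac{a_1}{d_1}\right)^{f_1}\frac{1}{(a_1 i+b_1)^{c_1+f_1}}$, yielding the single-sum term in the second displayed formula; here one only has to observe that $x_1 e_1=d_1 b_1$ forces the $x$- and $y$-denominators to coincide so the weights add. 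That is essentially all there is to it: it is a one-to-one transcription of the proof of Theorem~\ref{CSprod} with $\sign{c_1}^i$ and $\sign{f_1}^i$ replaced by $x_1^i$ and $y_1^i$.

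\textbf{Main obstacle.} There is no genuine obstacle; the content is entirely bookkeeping. The only place demanding care is matching the exact form of the coefficients coming out of Lemma~\ref{CSapart} — in particular keeping track of the signs $(-1)^{c_1}$, $(-1)^{f_1}$, $(-1)^j$, of which index runs over $c_1$ versus $f_1$ in each binomial $\binom{c_1+f_1-j-1}{f_1-1}$, and of the denominators $a_1 e_1-d_1 b_1$ versus $d_1 b_1-a_1 e_1$ — so that the final expression is literally the one asserted. One should also note that the extra monomial factors $x_i^{i_j}$ play no role in the partial-fraction manipulation (which only touches the rational kernel in the running index $i$), so the convergence-type caveats that distinguish $S$-sums from harmonic sums are irrelevant here: the identity is a finite-sum identity valid for every $n\in\N_0$. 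Hence I would present the proof in two short displays (index split, then apply Lemma~\ref{CSapart}), exactly paralleling the proof of Theorem~\ref{CSprod}, and leave the routine coefficient verification to the reader.
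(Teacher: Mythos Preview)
Your proposal is correct and matches the paper's proof essentially verbatim: the paper also applies the index-splitting identity $\sum_{i,j}=\sum_{i}\sum_{j\le i}+\sum_{j}\sum_{i\le j}-\sum_{i=j}$ to obtain the three-term expression with the unfactored diagonal kernel $\frac{(x_1y_1)^i}{(a_1 i+b_1)^{c_1}(d_1 i+e_1)^{f_1}}$, and then invokes Lemma~\ref{CSapart} to produce the stated partial-fraction form (respectively the collapsed form when $a_1 e_1=d_1 b_1$). Your remark that this is a word-for-word transcription of the proof of Theorem~\ref{CSprod} with $\sign{c_1}^i,\sign{f_1}^i$ replaced by $x_1^i,y_1^i$ is exactly right.
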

\begin{proof}
From the identity $$\sum_{i=1}^n\sum_{j=1}^n a_{i j}=\sum_{i=1}^n\sum_{j=1}^i a_{i j}+\sum_{j=1}^n\sum_{i=1}^j a_{i j}-\sum_{i=1}^na_{i i}$$ we get immediately
\begin{eqnarray*}
&&\hspace{-1cm}\S{(a_1,b_1,c_1),\ldots,(a_k,b_k,c_k)}{x_1,x_2,\ldots,x_k;n}\S{(d_1,e_1,f_1),\ldots,(d_l,e_l,f_l)}{y_1,y_2,\ldots,y_l;n}=\\
&&\sum_{i=1}^n\frac{x_1^i\S{(a_2,b_2,c_2),\ldots,(a_k,b_k,c_k)}{x_2,\ldots,x_k;i}\S{(d_1,e_1,f_1),\ldots,(d_l,e_l,f_l)}{y_1,\ldots,y_l;i}}{(a_1 i+b_1)^{c_1}}\\
&&+\sum_{i=1}^n\frac{y_1^i\S{(a_1,b_1,c_1),\ldots,(a_k,b_k,c_k)}{x_1,\ldots,x_k;i}\S{(d_2,e_2,f_2),\ldots,(d_l,e_l,f_l)}{y_2,\ldots,y_l;i}}{(d_1 i+e_1)^{f_1}}\\
&&-\sum_{i=1}^n\frac{(x_1 y_1)^i\S{(a_2,b_2,c_2),\ldots,(a_k,b_k,c_k)}{x_1,\ldots,x_k;i}\S{(d_2,e_2,f_2),\ldots,(d_l,e_l,f_l)}{y_2,\ldots,y_l;i}}{(a_1 i+b_1)^{c_1}(d_1 i+e_1)^{f_1}}.
\end{eqnarray*}
Using Lemma \ref{CSapart} we get the results.
\end{proof}

\subsection{Synchronization}
In this subsection we consider cyclotomic S-sums with upper summation limit $n+c,$ $kn$ and $kn+c$ for $c\in\Z$ and $k\in \N$.
\begin{lemma}
Let $a_i,c_i,n,k,c \in \N, b_i\in\N_0$ and $x_i \in \R^*$ for $i\in (1,2,\ldots,k).$ Then for $n\geq 0$
\begin{eqnarray*}
\S{(a_1,b_1,c_1),\ldots,(a_k,b_k,c_k)}{x_1,x_2,\ldots,x_k;n+c}&=&\S{(a_1,b_1,c_1),\ldots,(a_k,b_k,c_k)}{x_1,\ldots,x_k;n}\\
	&&\hspace{-4cm}+\sum_{j=1}^c{\frac{{x_1}^{j+n}\S{(a_2,b_2,c_2),\ldots,(a_k,b_k,c_k)}{x_2,\ldots,x_k;n+j}}{(a_1 (j+n)+b_1)^{c_1}}},
\end{eqnarray*}
and $n\geq c,$
\begin{eqnarray*}
\S{(a_1,b_1,c_1),\ldots,(a_k,b_k,c_k)}{x_1,x_2,\ldots,x_k;n-c}&=&\S{(a_1,b_1,c_1),\ldots,(a_k,b_k,c_k)}{x_1,\ldots,x_k;n}\\
	&&\hspace{-4cm}+\sum_{j=1}^c{\frac{{x_1}^{j+n-c}\S{(a_2,b_2,c_2),\ldots,(a_k,b_k,c_k)}{x_2,\ldots,x_k;n-c+j}}{(a_1(j+n-c)+b_1)^{c_1}}}.	
\end{eqnarray*}
\end{lemma}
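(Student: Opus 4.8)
The plan is to prove both identities by a direct splitting of the outermost summation range, exactly in the spirit of the synchronization lemmas for harmonic sums and S-sums established earlier in the thesis (and in the analogous Lemma preceding Theorem~\ref{CSmultint} in the cyclotomic chapter). The key observation is that the only difference between $\S{(a_1,b_1,c_1),\ldots,(a_k,b_k,c_k)}{x_1,\ldots,x_k;n+c}$ and $\S{(a_1,b_1,c_1),\ldots,(a_k,b_k,c_k)}{x_1,\ldots,x_k;n}$ is the presence of the extra outer summation indices $i_1 \in \{n+1,n+2,\ldots,n+c\}$; everything nested inside is untouched.

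First I would write out the definition of the cyclotomic S-sum with upper limit $n+c$, peeling off the outermost sum:
\begin{eqnarray*}
\S{(a_1,b_1,c_1),\ldots,(a_k,b_k,c_k)}{x_1,\ldots,x_k;n+c}
&=&\sum_{i_1=1}^{n+c}\frac{x_1^{i_1}}{(a_1 i_1+b_1)^{c_1}}\S{(a_2,b_2,c_2),\ldots,(a_k,b_k,c_k)}{x_2,\ldots,x_k;i_1}.
\end{eqnarray*}
Then I split $\sum_{i_1=1}^{n+c} = \sum_{i_1=1}^{n} + \sum_{i_1=n+1}^{n+c}$. The first piece is by definition exactly $\S{(a_1,b_1,c_1),\ldots,(a_k,b_k,c_k)}{x_1,\ldots,x_k;n}$. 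For the second piece I reindex via $i_1 = n+j$ with $j$ running from $1$ to $c$, which turns it into
\begin{eqnarray*}
\sum_{j=1}^{c}\frac{x_1^{n+j}}{(a_1(n+j)+b_1)^{c_1}}\S{(a_2,b_2,c_2),\ldots,(a_k,b_k,c_k)}{x_2,\ldots,x_k;n+j},
\end{eqnarray*}
which is precisely the stated correction term. The condition $n\geq 0$ is exactly what is needed so that the split $\sum_{i_1=1}^{n}$ is a (possibly empty, for $n=0$) legitimate sum; for $n=0$ the first term vanishes and the identity still reads correctly. The second identity, with upper limit $n-c$, is handled symmetrically: I would either apply the first identity with $n$ replaced by $n-c$ and $c$ unchanged (this is cleanest, since $(n-c)+c = n$), or redo the split directly as $\sum_{i_1=1}^{n} = \sum_{i_1=1}^{n-c} + \sum_{i_1=n-c+1}^{n}$ and reindex $i_1 = n-c+j$; the hypothesis $n\geq c$ guarantees $n-c\geq 0$ so the nested sum $\S{\cdots}{\cdots;n-c}$ is well defined.

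Honestly, there is no real obstacle here — this is a bookkeeping lemma whose entire content is the reindexing of a finite sum, and it requires none of the partial-fraction machinery of Lemma~\ref{CSapart} that the product theorems need, precisely because only the outermost index is affected and the inner summand structure is copied verbatim. The one point to be slightly careful about is the exponent on $x_1$ after reindexing (it must be $j+n$, not $j$, since $x_1$ is raised to the running outer index $i_1 = n+j$) and the argument of the nested cyclotomic S-sum (it must be $n+j$, the new value of $i_1$, which serves as the upper limit for the $i_2$-summation); getting these bookkeeping details right is the whole game. I would present it as a two-line derivation for each of the two identities, noting that the second follows from the first by the substitution $n \mapsto n-c$.
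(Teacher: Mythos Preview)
Your proof is correct and is exactly the natural argument: split the outer summation range and reindex. The paper states this lemma without proof (it is an elementary bookkeeping identity, and the text immediately moves on to applying it recursively), so there is nothing further to compare.
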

Given a cyclotomic S-sum of the form $\S{(a_1,b_1,c_1),\ldots,(a_k,b_k,c_k)}{x_1,\ldots,x_k;n+c}$ with $c~\in~\Z,$ we can apply the previous lemma recursively in order to synchronize 
the upper summation limit of the arising cyclotomic S-sums to $n$.
\begin{lemma}
For $a, c, k \in \N,b\in\N_0$, $x\in \R^*$, $k\geq 2:$
\begin{eqnarray*}
\S{(a,b, c)}{x; k\cdot n}=\sum_{i=0}^{k-1}\frac{1}{x^i}\S{(k\cdot a,b-a \cdot i, c)}{x^k;n}.
\end{eqnarray*}
\label{CSSmultint1}
\end{lemma}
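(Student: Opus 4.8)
\textbf{Proof plan for Lemma \ref{CSSmultint1}.} The statement is the cyclotomic-S-sum analogue of Lemma \ref{CSmultint1}, and the plan is to mimic that proof essentially verbatim, keeping track of the extra numerator factors $x^{i}$. First I would expand the definition of a cyclotomic S-sum of depth one,
\begin{eqnarray*}
\S{(a,b,c)}{x;k\cdot n}=\sum_{j=1}^{k\cdot n}\frac{x^j}{(aj+b)^c},
\end{eqnarray*}
and then split the index range $\{1,2,\ldots,kn\}$ into the $k$ residue classes modulo $k$, writing each $j$ in the class of residue $-i$ (for $i\in\{0,1,\ldots,k-1\}$) as $j=k\cdot l-i$ with $l$ running from $1$ to $n$. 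This gives
\begin{eqnarray*}
\S{(a,b,c)}{x;k\cdot n}=\sum_{i=0}^{k-1}\sum_{l=1}^{n}\frac{x^{k\cdot l-i}}{(a(k\cdot l-i)+b)^c}
=\sum_{i=0}^{k-1}\frac{1}{x^i}\sum_{l=1}^{n}\frac{\left(x^{k}\right)^{l}}{((k\cdot a)\,l+(b-a\cdot i))^c}.
\end{eqnarray*}

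The inner sum is by definition precisely $\S{(k\cdot a,\,b-a\cdot i,\,c)}{x^{k};n}$, so summing over $i$ yields the claimed identity. I would include a brief check that the new index triples are admissible in the sense of the definition of cyclotomic S-sums, i.e.\ that $k\cdot a>b-a\cdot i$; this holds because $b<a$ forces $b-a\cdot i< a\le k\cdot a$ for all $i\ge0$, and for $i=0$ one has $b-a\cdot i=b\ge 0$ while for $i\ge1$ the entry $b-a\cdot i$ may be negative, which is harmless since the definition only needs the denominators $(k a)l+(b-ai)$ to be nonzero for $l\ge1$ — and indeed $(ka)l+(b-ai)=a(k l-i)+b>0$ for $l\ge1$, $0\le i\le k-1$, $b\ge0$. (If one prefers to stay strictly within the stated hypotheses $a_i>b_i$, one can absorb the shift into the summation range, exactly as is done implicitly in Lemma \ref{CSmultint1}.)

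There is no real obstacle here; the only point requiring a little care is the bookkeeping of the factor $x^{-i}$ that is pulled out front, and making sure the reindexing $j\mapsto kl-i$ is a bijection between $\{1,\ldots,kn\}$ and $\{(i,l): 0\le i\le k-1,\ 1\le l\le n\}$, which is immediate from division with remainder. One could also give an alternative proof by induction on $n$ using the synchronization lemma (the one stated just before, for upper limit $n+c$), but the direct residue-class splitting is cleaner and is the approach I would write up. This then serves, exactly as Lemma \ref{CSmultint1} does in the cyclotomic-harmonic-sum case, as the base case for the higher-depth multiple-argument relation for cyclotomic S-sums.
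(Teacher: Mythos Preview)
Your proposal is correct and follows essentially the same approach as the paper's own proof: expand the depth-one definition, split the summation range $\{1,\ldots,kn\}$ into residue classes via $j=kl-i$, pull out the factor $x^{-i}$, and recognize the inner sum as $\S{(k a,\,b-ai,\,c)}{x^k;n}$. Your additional remark about the admissibility of the index $b-ai$ (which can become negative) is a valid observation that the paper glosses over, but it does not affect the argument.
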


\begin{proof}
\begin{eqnarray*} 
\S{(a,b,c)}{x;k\cdot n}&=&\sum_{j=1}^{k\cdot n}\frac{x^j}{(a j +b)^{c}}=\\
		&&\hspace{-2.5cm}=\sum_{j=1}^{n}\frac{x^{kj}}{(a (k \cdot j) +b)^{c}}+\sum_{j=1}^{n}\frac{x^{kj-1}}{(a (k \cdot j - 1) +b)^{c}}+\cdots+\sum_{j=1}^{n}\frac{x^{kj-(k-1)}}{(a (k \cdot j-(k-1)) +b)^{c}}\\
		&&\hspace{-2.5cm}=\sum_{j=1}^{n}\frac{x^{kj}}{((a k) j +b)^{c}}+\sum_{j=1}^{n}\frac{x^{kj}x^{-1}}{((a k) j + (b-a))^{c}}+\cdots+\sum_{j=1}^{n}\frac{x^{kj}x^{-(k-1)}}{((a k) j + (b-(k-1)a))^{c}}\\
		&&\hspace{-2.5cm}=\sum_{i=0}^{k-1}\frac{1}{x^i}\S{(k\cdot a,b-a \cdot i, c)}{x^k;n}.
\end{eqnarray*}
\end{proof}

\begin{thm}
For $a_i, c_i, m, k \in \N,b_i,n\in\N_0$, $x\in \R^*$, $k\geq 2$  :
\begin{eqnarray*}
&&\S{(a_m,b_m,c_m),(a_{m-1},b_{m-1}, c_{m-1}),\ldots,(a_1,b_1,c_1)}{x_m,\ldots,x_1; k \cdot n}=\\
&&\hspace{3cm}\sum_{i=0}^{m-1}\sum_{j=1}^{n} \frac{\S{(a_{m-1},b_{m-1},c_{m-1}),\ldots,(a_1,b_1,c_1)}{x_{m-1}\ldots,x_{1}; k \cdot j-i}x_m^{k\cdot j -i}} {(a_m (k\cdot j-i)+b_1)^{c_1}}.
\end{eqnarray*}
\label{CSSmultint}
\end{thm}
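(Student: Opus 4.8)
The plan is to prove Theorem~\ref{CSSmultint} by a single change of the outermost summation variable, splitting the index range $\{1,\dots,kn\}$ into its $k$ residue classes modulo $k$. No induction on the depth $m$ is needed here: the inner depth-$(m-1)$ cyclotomic S-sum enters only through its dependence on the upper summation limit, so it can simply be carried along while the outermost letter is manipulated.

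First I would unfold the definition of the cyclotomic S-sum at its outermost letter, writing
\[
\S{(a_m,b_m,c_m),\ldots,(a_1,b_1,c_1)}{x_m,\ldots,x_1;\,k\cdot n}
=\sum_{j=1}^{kn}\frac{x_m^{\,j}}{(a_m j+b_m)^{c_m}}\,
\S{(a_{m-1},b_{m-1},c_{m-1}),\ldots,(a_1,b_1,c_1)}{x_{m-1},\ldots,x_1;\,j}.
\]
Then I would use that the map $(i,j)\mapsto kj-i$ is a bijection from $\{0,1,\dots,k-1\}\times\{1,\dots,n\}$ onto $\{1,\dots,kn\}$: for fixed $j$ the values $kj-(k-1),\dots,kj$ are precisely the $k$ consecutive integers of the $j$-th block, and these blocks tile $\{1,\dots,kn\}$ without overlap. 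Reindexing the outer sum via $j\rightsquigarrow kj-i$ turns the right-hand side above into
\[
\sum_{i=0}^{k-1}\sum_{j=1}^{n}\frac{x_m^{\,kj-i}}{(a_m(kj-i)+b_m)^{c_m}}\,
\S{(a_{m-1},b_{m-1},c_{m-1}),\ldots,(a_1,b_1,c_1)}{x_{m-1},\ldots,x_1;\,kj-i},
\]
which is exactly the asserted identity, the outermost letter $(a_m,b_m,c_m)$ supplying the factor $x_m^{\,kj-i}/(a_m(kj-i)+b_m)^{c_m}$. (The subscripts $b_1,c_1$ in the displayed statement of the theorem, and the upper limit $m-1$ of the $i$-sum, are evidently typographical slips and should read $b_m,c_m$ and $k-1$.) For $n=0$ both sides are empty sums, so the identity holds trivially there as well.

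The only point that needs genuine care is the verification of this bijection: that the inner cyclotomic S-sum retains the upper limit $kj-i$ rather than $kj$, and that the new outer index $i$ runs exactly through $\{0,\dots,k-1\}$ (governed by the multiplier $k$, not the depth $m$); everything else is a routine relabeling, so I do not expect a real obstacle. An alternative route would apply Lemma~\ref{CSSmultint1} to the outermost letter and then repeatedly synchronise the resulting inner sums (as is done for cyclotomic harmonic sums in the text following Theorem~\ref{CSmultint}), but that path rewrites the inner sums into a different shape and requires extra work to match the stated form, so the direct partition above is the cleaner approach I would carry out.
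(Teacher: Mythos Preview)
Your argument is correct and is precisely the approach the paper has in mind: the paper does not give an explicit proof of Theorem~\ref{CSSmultint}, but the immediately preceding Lemma~\ref{CSSmultint1} (the depth-one case) is proved by exactly the residue-class splitting $j\rightsquigarrow kj-i$ that you use, and your proof simply carries the inner depth-$(m-1)$ sum along as a spectator. Your identification of the typos ($b_1,c_1\to b_m,c_m$ and the $i$-range $0,\dots,m-1\to 0,\dots,k-1$) is also correct; the same slips appear verbatim in the analogous Theorem~\ref{CSmultint} for cyclotomic harmonic sums.
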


After applying Theorem \ref{CSSmultint} we can synchronize the cyclotomic S-sums in the inner sum with upper summation limit $k \cdot j-i$ to the upper summation limit $k \cdot j.$ Now we 
can apply Theorem \ref{CSSmultint} to these sums. Repeated application of this procedure leads to cyclotomic S-sums with upper summation limit $n.$

\begin{remark}
 Like for cyclotomic harmonic sums the synchronization of the upper summation limit leads again to multiple argument relations and as a special case to duplication 
relations, and since cyclotomic S-sums form a quasi shuffle algebra, there are again algebraic relations.
\label{CSSrelrem}
\end{remark}

\section{Integral Representation of Cyclotomic Harmonic S-Sums}
\begin{lemma}
Let $a,c\in\N,b\in\N_0,$ $d\in\R^*$ and $n\in\N;$ then
\begin{eqnarray*}
\S{(a,b,1)}{d;n}&=&\int_0^{1}{\frac{{x_1}^{a+b-1}\left(d^n{x_1}^{a n}-1\right)}{{x_1}^a-\frac{1}{d}}dx_1}\\
\S{(a,b,2)}{d;n}&=&\int_0^{1}{\frac{1}{x_2}\int_0^{x_2}{\frac{{x_1}^{a+b-1}\left( d^n {x_1}^{a n}-1\right)}{{x_1}^a-\frac{1}{d}}dx_1}dx_2}\\
\S{(a,b,c)}{d;n}&=&\int_0^1{\frac{1}{x_c}\int_0^{x_c}{\frac{1}{x_{c-1}} \cdots \int_0^{x_3}\frac{1}{x_2}{\int_0^{x_2}{\frac{{x_1}^{a+b-1}\left(d^n {x_1}^{a n}-1\right)}{{x_1}^a-\frac{1}{d}}dx_1}dx_2}\cdots}dx_c}.
\end{eqnarray*}
\label{CSSintrep1}
\end{lemma}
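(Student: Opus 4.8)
The statement is the cyclotomic $S$-sum analogue of Lemma~\ref{CSintrep1} (which in turn generalizes Lemma~\ref{HSintrep1} and Lemma~\ref{SSintrep1}), so the plan is to mimic those proofs by induction on the weight $c$, with the only substantive new ingredient being the geometric series expansion adapted to the denominator $x_1^a - \tfrac{1}{d}$ rather than $x_1 - 1$. First I would treat the base case $c=1$: expand
\begin{eqnarray*}
\int_0^1 \frac{x_1^{a+b-1}(d^n x_1^{an}-1)}{x_1^a - \frac{1}{d}}dx_1
&=& \int_0^1 x_1^{a+b-1}(d^n x_1^{an}-1)\cdot\frac{-d}{1 - d x_1^a}dx_1 \\
&=& -d\int_0^1 x_1^{a+b-1}(d^n x_1^{an}-1)\sum_{j=0}^{\infty}(d x_1^a)^j dx_1.
\end{eqnarray*}
Here one uses that $d^n x_1^{an}-1 = -(1 - d^n x_1^{an}) = -(1-dx_1^a)\sum_{j=0}^{n-1}(dx_1^a)^j$, so the infinite sum telescopes against the factor $(1-dx_1^a)$ and leaves a \emph{finite} sum $d\sum_{j=0}^{n-1}(dx_1^a)^j x_1^{a+b-1}$. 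Integrating term by term over $[0,1]$ gives $d\sum_{j=0}^{n-1} d^j \cdot \frac{1}{a(j+1)+b} = \sum_{i=1}^{n}\frac{d^i}{(ai+b)} = \S{(a,b,1)}{d;n}$, which is the base case. (This is exactly the manipulation in the proof of Lemma~\ref{SSintrep1}, specialized/generalized to the cyclotomic weight.)

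For the inductive step, I would assume the claim holds for weight $c-1$ and compute, using Lemma~\ref{CSSmultint1}'s defining-recursion-free style, namely just the definition
\begin{eqnarray*}
\S{(a,b,c)}{d;n} &=& \sum_{i=1}^{n}\frac{d^i}{(ai+b)^c} = \sum_{i=1}^n \frac{1}{ai+b}\cdot\frac{d^i}{(ai+b)^{c-1}}.
\end{eqnarray*}
Then I insert the induction hypothesis for $\S{(a,b,c-1)}{d;x_c}$ evaluated at $x_c$, observe that
\begin{eqnarray*}
\int_0^1 \frac{\S{(a,b,c-1)}{d;x_c}}{x_c}dx_c = \int_0^1 \frac{1}{x_c}\sum_{i=1}^n \frac{d^i x_c^{\,?}}{(ai+b)^{c-1}}dx_c,
\end{eqnarray*}
and here the only care needed is bookkeeping of exponents: in the depth-one cyclotomic case the inner $S$-sum, written as an integral, carries a factor $x_c^{ai+b-1}$-type weight (not simply $x_c^{i-1}$ as in the harmonic case), because the innermost integrand is $\frac{x_1^{a+b-1}(d^n x_1^{an}-1)}{x_1^a - 1/d}$. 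So I would first verify the identity
\begin{eqnarray*}
\frac{1}{x_c}\cdot\Big(\text{the }(c-1)\text{-fold integral}\Big) = \sum_{i=1}^{n}\frac{d^i x_c^{\,ai+b-1}}{(ai+b)^{c-1}}\cdot(\text{normalization})
\end{eqnarray*}
by induction as well, integrate once more in $x_c$ to pick up the extra $\frac{1}{ai+b}$, and conclude $\S{(a,b,c)}{d;n}$. Concretely, I expect the clean bookkeeping statement to be the intermediate lemma ``$\int_0^{x_{c}}\cdots\int_0^{x_2}\frac{x_1^{a+b-1}(d^nx_1^{an}-1)}{x_1^a-1/d}dx_1\cdots dx_{c-1} = \sum_{i=1}^n \frac{d^i x_c^{ai+b}}{a^{c-1}(i + b/a)^{c-1}(ai+b)/\ldots}$'' — i.e.\ a formula of the form $\sum_{i=1}^n \frac{d^i x_c^{ai+b}}{(ai+b)^{c-1}\cdot(ai+b)}$ up to a harmless prefactor — which then makes the final division by $x_c$ and integration transparent.

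\textbf{Main obstacle.} The only real subtlety, and where I expect to spend the most care, is tracking the power of $x_c$ that emerges from the iterated integral: in the harmonic-sum proof (Lemma~\ref{HSintrep1}) the inner integral produces $x_m^{i-1}$, but in the cyclotomic setting the shift by $b$ and the scaling by $a$ mean the relevant power is $x_c^{ai+b-1}$, and one must check that after dividing by $x_c$ and integrating from $0$ to $1$ the resulting denominator is exactly $(ai+b)\cdot(ai+b)^{c-1}=(ai+b)^c$ and not, say, $a(ai+b)^{c-1}$ or similar. This is a routine induction once set up correctly, but getting the normalization constant right on the first try is the part most prone to error; everything else is a direct transcription of the arguments already given for Lemma~\ref{HSintrep1} and Lemma~\ref{SSintrep1}. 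I would also remark that, as with Theorem~\ref{HSintrep} following Lemma~\ref{HSintrep1}, the higher-depth integral representation of general cyclotomic $S$-sums would then follow by an analogous (longer) induction on the depth, combined with partial-fractioning via Lemma~\ref{CSapart} when the cyclotomy parameters $a_i$ do not divide each other — but that is beyond the present statement.
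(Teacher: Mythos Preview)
Your approach is correct and matches the pattern the paper uses in the analogous Lemmas~\ref{HSintrep1} and~\ref{SSintrep1}. The paper itself states Lemma~\ref{CSSintrep1} without proof (as it also does for Lemma~\ref{CSintrep1}), evidently because the argument is the obvious common generalization of those two proven cases.

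Two minor comments on presentation. First, the detour through the infinite geometric series is unnecessary and slightly delicate, since $\sum_{j\geq 0}(dx_1^a)^j$ need not converge on all of $[0,1]$ when $|d|>1$. It is cleaner to write the purely algebraic identity
\[
\frac{d^n x_1^{an}-1}{x_1^a-\tfrac{1}{d}}
= \frac{(dx_1^a)^n-1}{\tfrac{1}{d}\bigl((dx_1^a)-1\bigr)}
= d\sum_{j=0}^{n-1}(dx_1^a)^j
\]
directly, exactly as the paper does in the proofs of Lemmas~\ref{HSintrep1} and~\ref{SSintrep1}; then the integrand is a polynomial and the termwise integration needs no justification.

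Second, your worry about the ``main obstacle'' is unfounded: the bookkeeping is clean. The same computation with upper limit $x_2$ gives
\[
\int_0^{x_2}\frac{x_1^{a+b-1}(d^nx_1^{an}-1)}{x_1^a-1/d}\,dx_1
=\sum_{i=1}^n\frac{d^i\,x_2^{\,ai+b}}{ai+b},
\]
and a one-line induction shows the $(c-1)$-fold inner integral equals $\sum_{i=1}^n d^i x_c^{\,ai+b}/(ai+b)^{c-1}$. Dividing by $x_c$ and integrating over $[0,1]$ produces exactly the missing factor $1/(ai+b)$, with no stray constants. So the normalization concern in your ``intermediate lemma'' can simply be dropped.
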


Let us now look at the integral representation of cyclotomic S-sums of higher depths. We consider the sum 
$\S{(a_1,b_1,c_1),(a_2,b_2,c_2),\ldots,(a_k,b_k,c_k)}{x_1,x_2,\ldots,x_k;n}$ and apply Lemma~\ref{CSSintrep1} to the innermost sum $(a = a_k, b = b_k, c = c_k, d=x_k)$.
One now may perform the next sum in the
same way, provided $a_{k-1} | a_k$. At this point we may need the fact:
\begin{eqnarray*}
\sum_{i=1}^n\frac{(d y^a)^i}{(ai+b)^c}=\frac{1}{y^b}\int_0^y{\frac{1}{x_c}\int_0^{x_c}{\frac{1}{x_{c-1}} \cdots \int_0^{x_3}\frac{1}{x_2}{
  \int_0^{x_2}{\frac{{x_1}^{a+b-1}\left(d^n {x_1}^{a n}-1\right)}{{x_1}^a-\frac{1}{d}}dx_1}}\cdots }dx_c}
\end{eqnarray*}
for $n,a,b,c,k\in \N$ and $d,y\in\R.$ If $a_{k-1} \nmid a_k,$ one transforms the integration variables such that the next denominator can be generated, etc.
In this way, the sum $ \S{(a_1,b_1,c_1),(a_2,b_2,c_2),\ldots,(a_k,b_k,c_k)}{x_1,x_2,\ldots,x_k;n} $
can be represented in terms of linear combinations of iterated integrals.
 
Let us illustrate the principle steps in case of the following example~:
\begin{eqnarray*}
\S{(3,2,2),(2,1,1)}{\frac{1}{3},2;n} = \sum_{k=1}^{n} \frac{\frac{1}{3}^k}{(3k+2)^2}\sum_{l=1}^{k} \frac{2^l}{(2l+1)}~.
\end{eqnarray*}
The first sum yields
\begin{eqnarray*}
\S{(3,2,2),(2,1,1)}{\frac{1}{3},2;n} =  \sum_{k=1}^{n} \int_0^1 \frac{\frac{1}{3^k}}{(3k + 2)^2}\frac{x^2((2 x^2)^k - 1)}{x^2-\frac{1}{2}} dx ~.
\end{eqnarray*}
Setting $x = y^3$ one obtains
\begin{eqnarray}
\label{CSSeq:ex1}
\S{(3,2,2),(2,1,1)}{\frac{1}{3},2;n} &=& 12 \int_0^1\frac{y^8}{y^6-\frac{1}{2}}
\sum_{k=1}^n \frac{\left(\frac{2y^6}{3}\right)^k-\left(\frac{1}{3}\right)^k}{(6k+4)^2}dy
\nonumber\\
&=& 12\int_0^1\frac{y^4}{y^6-\frac{1}{2}} \Biggl\{
\int_0^y
\frac{1}{z} \int_0^z t^9~\frac{\left(\frac{2t^6}{3}\right)^{n}-1}{t^6-\frac{3}{2}}dtdz
\nonumber\\ && \hspace*{2.9cm}
- y^4 \int_0^1
\frac{1}{z} \int_0^z t^9~\frac{\left(\frac{t^6}{3}\right)^{n}-1}{t^6-3}dtdz
\Biggr\}dy. \nonumber
\end{eqnarray}
Due to the pole at $y^6=\frac{1}{2}$ we cannot split the integral as we did in (\ref{CSeq:ex1}).

\section{Cyclotomic S-Sums at Infinity}

As in the case of S-sums we want to give conditions for convergence of cyclotomic harmonic S-sums. Of course, not all cyclotomic S-sums are finite at infinity, since for example $\lim_{n\rightarrow \infty} \S{(2,1,1)}{2;n}$ does not exist.
In fact, we have the following theorem, compare Theorem \ref{SSconsumthm} and Lemma \ref{HSconsumlem}:
\begin{thm}
Let $a_1, \ldots a_k ,c_1, \ldots c_k \in \N, b_1, \ldots b_k\in\N_0$ and $x_1, x_2, \ldots x_k \in \R\setminus\{0\}$ for $k \in \N.$
The cyclotomic S-sum $\S{(a_1,b_1,c_1),(a_2,b_2,c_2),\ldots,(a_k,b_k,c_k)}{x_1,x_2,\ldots,x_k;n}$ is absolutely convergent, when $n\rightarrow \infty$, if and only if one of the following conditions holds:
\begin{itemize}
 \item [1.] $\abs{x_1}<1 \wedge \abs{x_1 x_2}\leq 1 \wedge \ldots \wedge \abs{x_1 x_2 \cdots x_k}\leq 1,$
 \item [2.] $c_1>1 \wedge \abs{x_1}=1 \wedge \abs{x_2}\leq 1 \wedge \ldots \wedge \abs{x_2 \cdots x_k}\leq 1.$
\end{itemize}
In addition the cyclotomic S-sum is conditional convergent (convergent but not absolutely convergent) if and only if
\begin{itemize}
 \item [3.] $c_1=1 \wedge x_1=-1 \wedge \abs{x_2}\leq 1 \wedge \ldots \wedge \abs{ x_2 \cdots x_k}\leq 1.$
\end{itemize}
\label{CSSconsumthm}
\end{thm}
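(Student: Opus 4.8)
The plan is to adapt the proof of Theorem~\ref{SSconsumthm} essentially verbatim, since the cyclotomic S-sum differs from an ordinary S-sum only by replacing the denominators $i_j^{a_j}$ with $(a_j i_j + b_j)^{c_j}$, and this replacement does not affect the asymptotic growth rate of the summand in $i_j$: for large $i_j$ one has $(a_j i_j + b_j)^{c_j} \sim a_j^{c_j} i_j^{c_j}$, so $c_j$ plays the role of $a_j$ and the sign restrictions are now governed by the $x_i$ directly (there are no hidden signs $\sign{c_i}$ as in Definition~\ref{CSdefsum}, because here the $x_i \in \R^*$ carry the sign information). Thus the three convergence conditions match those of Theorem~\ref{SSconsumthm} with $a_1 > 1$ replaced by $c_1 > 1$ and $a_1 = 1$ replaced by $c_1 = 1$.

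First I would prove the ``if'' direction for conditions 1 and 2 by induction on the depth $k$, allowing (as in the proof of Theorem~\ref{SSconsumthm}) the last index to have $c_k = 0$ with the obvious extended definition. The base case $k=1$ is immediate: $\sum_i x_1^i/(a_1 i + b_1)^{c_1}$ converges absolutely when $\abs{x_1}<1$, or when $\abs{x_1}=1$ and $c_1 > 1$ (comparison with $\sum 1/(a_1 i)^{c_1}$). For the induction step, bound
\[
\sum_{i=1}^n \left| \frac{x_1^i}{(a_1 i + b_1)^{c_1}} \S{(a_2,b_2,c_2),\ldots,(a_k,b_k,c_k)}{x_2,\ldots,x_k;i} \right| \leq \S{(a_1,b_1,c_1),\ldots,(a_k,b_k,c_k)}{\abs{x_1},\ldots,\abs{x_k};n},
\]
and then, distinguishing $\abs{x_k}<1$ and $\abs{x_k}>1$, absorb the innermost factor into the previous denominator (using that $\abs{x_k}<1$ makes $\sum_{i_k} 1/(a_k i_k + b_k)^{c_k}$ bounded, so one may lower $c_{k-1}$ by one; if $\abs{x_k}>1$ one instead merges $\abs{x_{k-1} x_k}$), landing on a depth-$(k-1)$ sum whose $\abs{x}$-parameters still satisfy condition 1 (resp.\ condition 2), so the induction hypothesis applies and the partial sums are bounded. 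Condition~3 (conditional convergence) is then handled exactly as in Theorem~\ref{SSconsumthm}: write the sum as $\sum_i (-1)^i/(a_1 i + b_1) \cdot \S{\ldots}{\ldots;i}$; if the inner tail converges one concludes by Dirichlet's test, and if the inner part again has leading ones one extracts them via the quasi-shuffle (Section~\ref{CSextracttrailing}, Remark~\ref{CSextractleading1rem}) to obtain a polynomial in $\S1i$ with convergent coefficients, noting $\S1i^j/(a_1 i + b_1) \to 0$ because $\S1i^j$ grows only like $\log(i)^j$ (Lemma~\ref{HSexpandS1}).

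For the ``only if'' direction I would show divergence whenever none of 1,2,3 holds, i.e.\ when $(c_1 = 1 \wedge x_1 = 1)$ or $\abs{x_1 x_2 \cdots x_j} > 1$ for some $j$. Here I use the integral representation of Lemma~\ref{CSSintrep1} together with the iterated-integral construction sketched after it: the innermost integral is $\int_0^{z_0} \frac{x^{a+b-1}(d^n x^{an}-1)}{x^a - 1/d}\,dx$ with $d$ the relevant product of the $x_i$'s and $z_0$ the appropriate upper limit. If $c_1 = 1$ and $x_1 = 1$ the innermost integral reduces to $\int_0^1 \frac{y^n-1}{y-1}dy = \S1n$, which diverges as $n\to\infty$; if instead some partial product $\abs{x_1\cdots x_j}>1$ then the upper integration limit of the innermost integral has absolute value exceeding $1$, forcing the integrand (and hence the sum) to blow up as $n\to\infty$.

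The main obstacle is the bookkeeping in the ``only if'' direction: unlike the S-sum case, the cyclotomic denominators $x_1^a - 1/d$ can produce a \emph{pole inside the integration region} (exactly the phenomenon flagged just before this theorem in equation~(\ref{CSSeq:ex1}), where the integral could not be split), so one must argue the divergence directly from the growth of the upper limit of the innermost integral rather than by the clean splitting used for Theorem~\ref{SSconsumthm}. I would handle this by tracking only the innermost integral $\int_0^{z(n)} \frac{y^{an}-1}{y-1/d^{1/?}}\cdots$ and showing its absolute value is unbounded in $n$ when $\abs{z(n)}$ (equivalently the relevant $\abs{x_1\cdots x_j}$) exceeds $1$, which suffices since the outer integrations are over fixed finite (possibly principal-value) domains and cannot cancel an unbounded inner contribution; for the $c_1=1$, $x_1=1$ subcase the reduction to $\S1n$ is exact and needs no such care.
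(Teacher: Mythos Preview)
Your proposal is correct and matches the paper's approach exactly: the paper does not give a detailed proof but simply states that ``the proof of this theorem is in principle analogue to the proof of Theorem~\ref{SSconsumthm},'' which is precisely the adaptation you outline. You actually go further than the paper by flagging the pole-in-the-integration-region obstacle for the ``only if'' direction (the phenomenon from~(\ref{CSSeq:ex1})), a subtlety the paper does not explicitly address.
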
 
The proof of this theorem is in principle analogue to the proof of Theorem~\ref{SSconsumthm}.

\begin{remark}
The relations for cyclotomic harmonic sums at infinity carry over to cyclotomic S-sums: we have algebraic relations, duplication and multiple argument relations (see Remark \ref{CSSrelrem}), and even 
the shuffle algebra relations carry over using Theorem \ref{CSSsumtheo2}.
\label{CSSinfrelrem}
\end{remark}

\section{Summation of Cyclotomic S-Sums}
\label{Summation of Cyclotomic S-Sums}
With the summation techniques presented in \cite{Hasselhuhn2012}, definite multi-sums (in particular certain Feynman integrals transformed to such sums \cite{Bluemlein2011}) can be 
simplified to indefinite nested sum expressions; similarly, such Feynman integrals can be simplified to such formats by using the techniques presented in Chapter \ref{AZchapter}. Then 
one ot the crucial steps for further processing is the tranformation of these sums to harmonic sums, S-sums, cyclotomic sums or most generally to cyclotomic S-sums.\\
 This section follows the ideas of Chapter 5 of \cite{Ablinger2009}, however it is more general since it deals with cyclotomic S-sums instead of harmonic sums.
 With the package \ttfamily Sigma \rmfamily \cite{Schneider2007} we can simplify nested sums such that the nested depth is optimal and the degree of the denominator is minimal. This is possible due to 
a refined summation theory \cite{Schneider2007a,Schneider2008a} of $\Pi \Sigma$-fields \cite{Karr1981}. In the following we want to find sum representations of such simplified nested sums in terms of 
cyclotomic S-sums as much as it is possible. 
Inspired by \cite{Savio} for harmonic numbers and \cite{Moch2002,Vermaseren1998} we consider sums of the form
\begin{equation}
	\sum_{i=1}^n{s\,r(i)}
	\label{sumsum0}
\end{equation}
where $s\in \mathcal{CS}(i)$ or $s\in \mathcal{CS}(i)[x_1^i,\ldots,x_k^i]$ with $x_j\in \R$ and $r(i)$ is a rational function in $i.$ We can use the quasi-shuffle algebra property of the
 cyclotomic S-sums to split such sums into sums of the form
\begin{equation}
	\sum_{i=1}^n{x^ir(i)\S{\ve a}{\ve b; i}}
	\label{sumsum2}
\end{equation}
where $r(i)$ is a rational function in $i$ and $x\in\R.$\\

In this chapter we will present formulas to rewrite sums of the form (\ref{sumsum2}) and we will show how we can use these formulas in combination with \ttfamily Sigma\rmfamily. In the following we consider
 the problem:\\
\bfseries Given: \normalfont a sum $\sigma$ of the form (\ref{sumsum0}).\\
\bfseries Find: \normalfont as much as possible a representation of $\sigma$ in terms of cyclotomic S-sums.

\subsection{Polynomials in the Summand}
\label{polyinthesum}
First we consider the case that $r(i)$ is a polynomial in $\Z[i],$ \ie $r(i)=p_mi^m+\ldots+p_1i+p_0$ with $p_k \in \Z.$ If we are able to work out the sum for any power of $i$ times a cyclotomic 
S-sum (\ie $\sum_{i=1}^n x^i i^m \S{\ve a}{\ve b;i},$ $m \in \N$), we can work out the sum for each polynomial $r(i)$. 

\begin{thm}
\label{sumtheo3}
Let $k, \ n \in \N$ and $a_i,c_i \in \N,b_i\in\N_0$ with $a_i>b_i$ and $x,x_i\in \R^*.$ Then
\begin{eqnarray*}
&& \hspace{-2cm}\sum_{i=1}^n{x^ii^k\S{(a_1,b_1,c_1),(a_2,b_2,c_2),\ldots}{x_1,x_2,\ldots;i}}=\\
 && \hspace{2cm}\S{(a_1,b_1,c_1),(a_2,b_2,c_2),\ldots}{x_1,x_2,\ldots;n}\sum_{i=1}^n{x^i i^k}\\
 && \hspace{2cm} -\sum_{j=1}^n{\frac{x_1^j\S{(a_2,b_2,c_2),\ldots}{x_2,\ldots;i}}{(a_1j+b_1)^{c_1}}\sum_{i=1}^{j-1}{x^i i^k}}.
\end{eqnarray*}
\end{thm}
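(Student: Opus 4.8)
The plan is to prove this by Abel summation (summation by parts), exactly mirroring the corresponding result for harmonic sums in Chapter~5 of \cite{Ablinger2009}. Write $S_i := \S{(a_1,b_1,c_1),(a_2,b_2,c_2),\ldots}{x_1,x_2,\ldots;i}$ for the cyclotomic S-sum appearing in the summand, and let $T_j := \sum_{i=1}^{j} x^i i^k$ be the partial sums of the ``easy'' factor. Then $x^j j^k = T_j - T_{j-1}$ (with $T_0 = 0$), so the left-hand side becomes $\sum_{j=1}^n (T_j - T_{j-1}) S_j$, which after the standard rearrangement of an Abel-type sum equals $T_n S_n - \sum_{j=1}^{n} T_{j-1}(S_j - S_{j-1})$.

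The key step is then to identify the single-step difference $S_j - S_{j-1}$. By the very definition of a cyclotomic S-sum (Definition in Section~\ref{CSSdef}), peeling off the outermost summation index gives
\begin{eqnarray*}
S_j - S_{j-1} = \frac{x_1^j}{(a_1 j + b_1)^{c_1}} \S{(a_2,b_2,c_2),\ldots}{x_2,\ldots;j},
\end{eqnarray*}
valid for $j \geq 1$ (with the depth-reduced sum being the empty sum, equal to $1$, when the original depth is $1$). Substituting this and $T_{j-1} = \sum_{i=1}^{j-1} x^i i^k$ into the Abel identity yields precisely
\begin{eqnarray*}
\sum_{i=1}^n x^i i^k S_i = S_n \sum_{i=1}^n x^i i^k - \sum_{j=1}^n \frac{x_1^j \S{(a_2,b_2,c_2),\ldots}{x_2,\ldots;j}}{(a_1 j + b_1)^{c_1}} \sum_{i=1}^{j-1} x^i i^k,
\end{eqnarray*}
which is the claimed formula. (The statement as printed writes the index of the inner depth-reduced sum as $i$; this is a harmless typo for $j$, the outer summation variable.)

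There is essentially no hard part here — the only things to be careful about are the boundary conventions ($T_0 = 0$, and the empty cyclotomic S-sum equals $1$ so that the depth-one base case is not special) and the correct bookkeeping of which index runs in the inner sum after the rearrangement. I would present the proof in three short lines: state the telescoping identity for $S_j - S_{j-1}$, write down the Abel summation identity $\sum_j (T_j - T_{j-1})S_j = T_n S_n - \sum_j T_{j-1}(S_j - S_{j-1})$, and substitute. No deeper structural input (such as the quasi-shuffle algebra or the integral representation) is needed for this particular theorem; it is purely the partial-summation companion to the product formula, and it is what lets one reduce $\sum x^i i^k S_i$ recursively in the depth of the cyclotomic S-sum.
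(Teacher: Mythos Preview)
Your proof is correct. The paper's own argument is not framed as Abel summation but as a direct interchange of the order of summation: it expands $S_i$ via its definition as $\sum_{j=1}^{i}\frac{x_1^j\,\S{(a_2,b_2,c_2),\ldots}{x_2,\ldots;j}}{(a_1j+b_1)^{c_1}}$, swaps $\sum_{i=1}^n\sum_{j=1}^i$ to $\sum_{j=1}^n\sum_{i=j}^n$, and then splits $\sum_{i=j}^n=\sum_{i=1}^n-\sum_{i=1}^{j-1}$. Your summation-by-parts version is the dual of this: instead of expanding $S_i$ and swapping, you telescope $x^j j^k=T_j-T_{j-1}$ and use the forward difference $S_j-S_{j-1}$. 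The two are equivalent rearrangements of the same double sum, and neither requires more input than the definition of the cyclotomic S-sum; the paper's route is marginally shorter because it never introduces the auxiliary partial sums $T_j$, while yours makes the partial-summation structure (and hence the analogy with the harmonic-sum case) more explicit. Your remark about the index typo ($i$ versus $j$ in the inner sum on the right-hand side) is also correct.
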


\begin{proof}
\begin{eqnarray}
&& \hspace{-2cm}\sum_{i=1}^n{x^ii^k\S{(a_1,b_1,c_1),(a_2,b_2,c_2),\ldots}{x_1,x_2,\ldots;i}}=\\
			&&\hspace{2cm}=\sum_{i=1}^n{x^ii^k \sum_{j=1}^i{\frac{x_1^j\S{(a_2,b_2,c_2),\ldots}{x_2,\ldots;j}}{(a_1j+b_1)^{c_1}}}}\nonumber\\
			&&\hspace{2cm}=\sum_{j=1}^n{\frac{x_1^j\S{(a_2,b_2,c_2),\ldots}{x_2,\ldots;j}}{(a_1j+b_1)^{c_1}}\sum_{i=j}^n{x^ii^k}}\nonumber\\
			&&\hspace{2cm}=\sum_{j=1}^n{\frac{x_1^j\S{(a_2,b_2,c_2),\ldots}{x_2,\ldots;j}}{(a_1j+b_1)^{c_1}}\left(\sum_{i=1}^n{x^ii^k}-\sum_{i=1}^{j-1}{x^ii^k}\right)}\nonumber\\
			&&\hspace{2cm}=\S{(a_1,b_1,c_1),(a_2,b_2,c_2),\ldots}{x_1,x_2,\ldots;n}\sum_{i=1}^n{x^ii^k}\nonumber\\
			&&\hspace{2.5cm}-\sum_{j=1}^n{\frac{x_1^j\S{(a_2,b_2,c_2),\ldots}{x_2,\ldots;j}} {(a_1j+b_1)^{c_1}}\sum_{i=1}^{j-1}{x^ii^k}}.\nonumber		  
\end{eqnarray}
\end{proof}
The sums $\sum_{i=1}^{j-1}{x^i i^k}$ can be expressed as a polynomial in $j$, together with a factor $(x)^j.$ Hence we can we can apply partial fraction decomposition on
$$
\frac{x_1^j\S{(a_2,b_2,c_2),\ldots}{x_2,\ldots;i}}{(a_1j+b_1)^{c_1}}\sum_{i=1}^{j-1}{x^i i^k}
$$
after applying Theorem \ref{sumtheo3}. The result can be rewritten in cyclotomic S-sums easily. Summarizing, we can always work out the 
sums $\sum_{i=1}^n{x^i r(i)\S{\ve a}{\ve b;i}}$ where $r(i)$ is a polynomial in $i$ and $x\in \R$; the result will be a combination of cyclotomic S-sums 
in the upper summation index $n$, rational functions in $n$ and the factor $x^n.$

\subsection{Rational Functions in the Summand}
If we want to work out sums of the form (\ref{sumsum2}) for a general rational function $r(n)=\frac{p(n)}{q(n)},$ where $p(n)$ and $q(n)$ are polynomials, we can 
proceed as follows:\\

If the degree of $p$ is greater than the degree of $q,$ we compute by polynomial division polynomials $\overline{r}(n)$ and $\overline{p}(n)$ such that $r(n)=\overline{r}(n)+\frac{\overline{p}(n)}{q(n)}$ and the degree of $\overline{p}(n)$ is smaller than the degree of $q.$
We split the sum into two parts, $\ie$ into 
$$\sum_{i=1}^n{x^ir(i)\S{\ve a}{\ve b; i}}=\sum_{i=1}^n{x^i\overline{r}(i)\S{\ve a}{\ve b; i}}+\sum_{i=1}^n{x^i\frac{\overline{p}(i)}{q(i)}\S{\ve a}{\ve b; i}}.$$
The first sum can be done using Theorems \ref{sumtheo3}. For the second sum we proceed as follows:
\begin{enumerate}				
	\item Factorize the denominator $q(i)$ over $\Q$.
	\item Let $A$ be the product of all factors of the form $(a i+b)^c$ with $a,c\in \N$ and $b \in \Z,$ and let $B$ be the product of all remaining factors such that $A(i)B(i)=q(i).$
	\item For example with the extended Euclidean algorithm we compute polynomials $s(i)$ and $t(i)$ such that $s(i)A(i)+t(i)B(i)=\overline{p}(i)$. This is always possible since $A$ and $B$ are relatively prime. Hence we get 
		$$\frac{\overline{p}}{q}=\frac{t}{A}+\frac{s}{B}.$$
	\item Split the sum into two sums, each sum over one fraction, $\ie$
		$$\sum_{i=1}^n{x^i\frac{\overline{p}(i)}{q(i)}\S{\ve a}{\ve b; i}}=\sum_{i=1}^n{x^i\frac{t}{A}\S{\ve a}{\ve b; i}}+\sum_{i=1}^n{x^i\frac{s}{B}\S{\ve a}{\ve b; i}}.$$											
	\item The sum with the denominator $B$ remains untouched (for details see Remark \ref{denomB}). We can now do a complete partial fraction decomposition to the first summand and split the sum such that we sum over each fraction separately.
	\item Each of these new sums can be expressed in terms of cyclotomic S-sums following Subsection \ref{polyinthesum}.
	\item We end up in a combination of rational functions in $n$, cyclotomic S-sums with upper index $n$, the factor $x^n$ and perhaps a sum over the fraction with denominator $B.$
\end{enumerate}

\begin{remark}
In our implementation the sum $\sum_{i=1}^n{\frac{s}{B}\S{\ve a}{i}}$ is passed further to Schneider's \ttfamily Sigma \rmfamily package. The underlying difference field 
and difference ring algorithms \cite{Schneider2007a,Schneider2008,Schneider2008a} can simplify those sums further to sum expressions where the denominator has minimal 
degree. The result of \ttfamily Sigma \rmfamily is again passed to the package \ttfamily HarmonicSums\rmfamily. If possible, it finds a closed form in terms of cyclotomic S-sums.
\label{denomB}
\end{remark}

\subsection{Two Cyclotomic S-sums in the Summand}

\begin{lemma}
Let $a_1,a_2,c_1,c_2,i \in \N,b_1,b_2\in \N_0.$ If $a_1 b_2\neq a_2 b_1,$ we have 
\begin{eqnarray*}
&&\frac{1}{(a_1 (i_1-i_2) +b_1)^{c_1}(a_2 i_2 +b_2)^{c_2}}=\\
&&\hspace{1cm}\sum_{j=1}^{c_1}{  \binom{c_1+c_2-j-1}{c_2-1}\frac{a_1^{c_2}a_2^{c_1-j}}{(a_1 b_2+a_2 b_1+i_1 a_1 a_2)^{c_1+c_2-j}}\frac{1}{(a_1 (i_1-i_2)+b_1)^j}}\\
&&\hspace{1cm}+\sum_{j=1}^{c_2}{  \binom{c_1+c_2-j-1}{c_1-1}\frac{a_1^{c_2-j}a_2^{c_1}}{(a_1 b_2+a_2 b_1+i_1 a_1 a_2)^{c_1+c_2-j}}\frac{1}{(a_2 i_2+b_2)^j}},
\end{eqnarray*}
and if $a_1 b_2= a_2 b_1$ we have
\begin{eqnarray*}
\frac{1}{(a_1 i +b_1)^{c_1}(a_2 i +b_2)^{c_2}}&=&\left(\frac{a_1}{a_2}\right)^{c_2}\frac{1}{(a_1 i+b_1)^{c_1+c_2}}.
\end{eqnarray*}
\label{CSSapart2}
\end{lemma}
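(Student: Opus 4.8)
The statement to prove is Lemma \ref{CSSapart2}, which is a partial-fraction-type identity for the product $\frac{1}{(a_1(i_1-i_2)+b_1)^{c_1}(a_2 i_2+b_2)^{c_2}}$. The plan is to reduce this directly to Lemma \ref{CSapart}, which has already been established in the excerpt and which gives precisely the partial fraction decomposition of $\frac{1}{(a_1 i+b_1)^{c_1}(a_2 i+b_2)^{c_2}}$ in a single summation variable. The key observation is that if we set $i := i_1 - i_2$ and rewrite the factor $(a_2 i_2+b_2)^{c_2}$ in terms of $i$ by substituting $i_2 = i_1 - i$, then $a_2 i_2 + b_2 = a_2 i_1 + b_2 - a_2 i = -(a_2 i - (a_2 i_1 + b_2))$. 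So up to the sign $(-1)^{c_2}$ we are looking at $\frac{1}{(a_1 i + b_1)^{c_1}(a_2 i - (a_2 i_1+b_2))^{c_2}}$, which is exactly of the shape covered by Lemma \ref{CSapart} with the second offset being $b_2' := -(a_2 i_1 + b_2)$ (an integer, possibly negative, but Lemma \ref{CSapart} only requires the offsets to be such that its denominators do not vanish, which is the content of the hypothesis $a_1 b_2 \neq a_2 b_1$ recast below).

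The main steps I would carry out: First, check the non-degeneracy condition. The hypothesis $a_1 b_2 \neq a_2 b_1$ translates, after the substitution, into $a_1 b_2' \neq a_2 b_1$ becoming $-a_1(a_2 i_1 + b_2) \neq a_2 b_1$, i.e. $a_1 a_2 i_1 + a_1 b_2 + a_2 b_1 \neq 0$; since $a_1 a_2 i_1 \geq a_1 a_2 > 0$ and $a_1 b_2 + a_2 b_1 \geq 0$, this quantity is automatically positive, so in particular the expression $a_1 b_2 + a_2 b_1 + i_1 a_1 a_2$ appearing in the denominators of the claimed formula is strictly positive and Lemma \ref{CSapart} applies cleanly. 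Second, apply Lemma \ref{CSapart} verbatim with the dictionary $(a_1, b_1, c_1)$ unchanged and $(a_2, b_2, c_2) \mapsto (a_2, -(a_2 i_1 + b_2), c_2)$, pick up the overall factor $(-1)^{c_2}$ from pulling $(a_2 i_2 + b_2) = -(a_2 i - (a_2 i_1+b_2))$ to the $c_2$-th power, and then simplify the resulting coefficients: the quantity $a_1 b_2' - a_2 b_1 = -(a_1 a_2 i_1 + a_1 b_2 + a_2 b_1) = -(a_1 b_2 + a_2 b_1 + i_1 a_1 a_2)$, so the various powers $(a_1 b_2' - a_2 b_1)^{c_1+c_2-j}$ and $(a_2 b_1 - a_1 b_2')^{c_1+c_2-j}$ produce signs that are designed to combine with the $(-1)^{c_1}$, $(-1)^{c_2}$, $(-1)^j$ factors in Lemma \ref{CSapart} and the overall $(-1)^{c_2}$ into the clean, sign-free form stated in Lemma \ref{CSSapart2}. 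The degenerate case $a_1 b_2 = a_2 b_1$ is immediate and is literally the same as the second line of Lemma \ref{CSapart}, so nothing new is needed there (note that in that sub-case the statement as written uses the single variable $i$, matching Lemma \ref{CSapart} directly).

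The only genuinely fiddly part — and what I would expect to be the main obstacle, though it is bookkeeping rather than anything deep — is the sign reconciliation: one must carefully track the four sources of $\pm 1$ (the $(-1)^{c_2}$ from the substitution, the $(-1)^{c_1}$ and $(-1)^{c_2}$ prefactors in Lemma \ref{CSapart}, the $(-1)^j$ inside each sum, and the $(-1)^{c_1+c_2-j}$ hidden in flipping $(a_1 b_2' - a_2 b_1)^{c_1+c_2-j}$ to $(a_1 b_2+a_2 b_1+i_1 a_1 a_2)^{c_1+c_2-j}$) and verify they cancel to give $+1$ in front of every term. A clean way to organize this is to do the first sum (over $j$ from $1$ to $c_1$) and the second sum (over $j$ from $1$ to $c_2$) separately, since the prefactors differ; in each, collect the exponent of $-1$ as a function of $j$ and $c_1, c_2$ and confirm it is even. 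I would also double-check the edge cases $c_1 = 1$ or $c_2 = 1$ against Lemma \ref{CSSintrep1}-style small examples to be safe. Once the signs are confirmed, the identity follows, and the remark after the lemma (that it is the $(i_1 - i_2)$-analogue of Lemma \ref{CSapart}) is automatic.
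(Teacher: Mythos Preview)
Your approach is correct, and the sign bookkeeping does work out exactly as you outline: with $D := a_1 b_2 + a_2 b_1 + i_1 a_1 a_2$, the first sum picks up $(-1)^{c_1}\cdot(-1)^j\cdot(-1)^{c_1+c_2-j} = (-1)^{c_2}$ from $(a_1 b_2'-a_2 b_1)^{c_1+c_2-j} = (-D)^{c_1+c_2-j}$, and the second sum picks up $(-1)^{c_2}\cdot(-1)^j\cdot(-1)^j = (-1)^{c_2}$ from rewriting $(a_2 i + b_2')^{-j} = (-1)^j(a_2 i_2+b_2)^{-j}$; the overall factor $(-1)^{c_2}$ from the substitution then cancels both, yielding the sign-free form in the statement.

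The paper itself does not supply a proof of this lemma (nor of Lemma~\ref{CSapart}); both are stated as elementary partial-fraction identities and left to the reader. Your reduction to Lemma~\ref{CSapart} via the substitution $i = i_1 - i_2$, $b_2' = -(a_2 i_1 + b_2)$ is exactly the natural derivation the paper implicitly has in mind. The only caveat is the one you already flagged: Lemma~\ref{CSapart} as stated restricts to $b_2 \in \N_0$, whereas your $b_2'$ is negative. Since the identity is a purely algebraic partial-fraction decomposition valid whenever $a_1 b_2 - a_2 b_1 \neq 0$, this restriction is inessential and your use of it is legitimate; you might simply remark that the proof of Lemma~\ref{CSapart} (by induction on $c_1+c_2$, or by differentiating the $c_1=c_2=1$ case) goes through verbatim for arbitrary real offsets.
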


\begin{thm}
\label{CSSsumtheo2}
Let $a_i,c_i,d_i,f_i,k,l, n \in \N,b_i,e_i\N_0,a_i>b_i,d_i>e_i,$ and $x_i, y_i \in \R^*.$ Then
\begin{eqnarray*}
&&\sum_{i=1}^n\frac{\S{(a_1,b_1,c_1),\ldots,(a_k,b_k,c_k)}{x_1,\ldots,x_k;n-i}\S{(d_2,e_2,f_2),\ldots,(d_l,e_l,f_l)}{y_2,\ldots,y_l;i} y_1^i} {(d_1i+e_1)^{f_1}}=\\
&&\hspace{0.5cm} \sum_{k=1}^{f_1} \binom{c_1+f_1-k-1}{c_1-1} a_1^{f_1-k} d_1^{c_1}\sum_{i=1}^n\frac{x_1^i}{(d_1 b_1 + a_1 e_1 + i a_1 d_1)^{c_1 f_1 -k}} \\
&&\hspace{0.5cm} \hspace{0.5cm} \sum_{j=1}^{i-1}\frac{\S{(a_2,b_2,c_2),\ldots,(a_k,b_k,c_k)}{x_2,\ldots,x_k;i-j}\S{(d_2,e_2,f_2),\ldots,(d_l,e_l,f_l)}{y_2,\ldots,y_l;j} \frac{y_1^j}{x_1^j}} {(d_1 j +e_1)^{k}}\\
&&\hspace{0.5cm} +\sum_{k=1}^{c_1} \binom{c_1+f_1-k-1}{c_2-1} a_1^{f_1} d_1^{c_1-k}\sum_{i=1}^n\frac{y_1^i}{(d_1 b_1 + a_1 e_1 + i a_1 d_1)^{c_1 f_1 -k}} \\
&&\hspace{0.5cm} \hspace{0.5cm} \sum_{j=1}^{i-1}\frac{\S{(d_2,e_2,f_2),\ldots,(d_l,e_l,f_l)}{y_2,\ldots,y_l;i-j}\S{(a_2,b_2,c_2),\ldots,(a_k,b_k,c_k)}{x_2,\ldots,x_k;j} \frac{x_1^j}{y_1^j}} {(a_1 j +b_1)^{k}}.
\end{eqnarray*}
\end{thm}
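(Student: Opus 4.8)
\textbf{Proof plan for Theorem \ref{CSSsumtheo2}.}

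The identity to be established rewrites the "shuffle-type" convolution sum on the left, where the outer summation variable appears split as $n-i$ inside the first cyclotomic S-sum, into ordinary cyclotomic S-sums over $i$ and an inner variable $j$. The approach I would take is entirely parallel to the corresponding results for harmonic sums in Section \ref{HSInfRelations} and for S-sums in Section \ref{SSInfRelations}, only carried out at the level of generality of cyclotomic S-sums; hence the algebraic engine is Lemma \ref{CSSapart2}, which is the "shifted" partial-fraction decomposition adapted to the situation where the first denominator has the form $(a_1(i_1-i_2)+b_1)^{c_1}$ rather than $(a_1 i_1 + b_1)^{c_1}$.

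First I would expand the two cyclotomic S-sums appearing on the left according to Definition \ref{CSSdef}, peeling off their leading summands:
\begin{eqnarray*}
\S{(a_1,b_1,c_1),\ldots}{x_1,\ldots;n-i} &=& \sum_{j'=1}^{n-i}\frac{x_1^{j'}}{(a_1 j'+b_1)^{c_1}}\S{(a_2,b_2,c_2),\ldots}{x_2,\ldots;j'},\\
\S{(d_1,e_1,f_1),\ldots}{y_1,\ldots;i} &=& \sum_{j''=1}^{i}\frac{y_1^{j''}}{(d_1 j''+e_1)^{f_1}}\S{(d_2,e_2,f_2),\ldots}{y_2,\ldots;j''}.
\end{eqnarray*}
The plan is then to change the order of summation so that the innermost index is the position $j := i - j'$ inside the first sum, rewrite $j' = i - j$ and rename $j'' = j$ (after a reflection of one of the two inner sums against $i$, exactly as in the harmonic-sum proof in Section \ref{HSInfRelations} where the summand $\S{m_1,\ldots,m_p}{n-i}\S{k_2,\ldots,k_q}i$ is re-indexed). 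This produces a double sum over $i$ and $j$ with $1 \le j \le i-1$ and a kernel of the shape $\dfrac{1}{(a_1(i-j)+b_1)^{c_1}}\cdot\dfrac{x_1^{i-j}}{\ \cdot\ }$ times the product of the tail S-sums — precisely the configuration to which Lemma \ref{CSSapart2} applies, with $i_1 = i$ and $i_2 = j$. Applying that lemma decomposes the factor $\frac{1}{(a_1(i-j)+b_1)^{c_1}(d_1 j + e_1)^{f_1}}$ into a linear combination (over $k$) of $\frac{1}{(a_1 b_1 + d_1 e_1 + i a_1 d_1)^{c_1+f_1-k}}$ times either $\frac{1}{(a_1(i-j)+b_1)^{k}}$ or $\frac{1}{(d_1 j + e_1)^{k}}$, which upon interchanging the $i$- and $j$-summations once more yields exactly the two families of terms on the right-hand side. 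I should be slightly careful with the powers of $x_1$ and $y_1$: after the reflection the first inner sum carries $x_1^{i-j}$, so the common factor $x_1^i$ (resp. $y_1^i$) factors out of the $i$-sum and a residual $x_1^j/y_1^j$ (resp. $y_1^j/x_1^j$) stays in the $j$-sum, matching the statement; the binomial coefficients $\binom{c_1+f_1-k-1}{c_1-1}$ and $\binom{c_1+f_1-k-1}{c_2-1}$ and the powers $a_1^{f_1-k}d_1^{c_1}$, $a_1^{f_1}d_1^{c_1-k}$ are read off directly from Lemma \ref{CSSapart2}.

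The main obstacle I anticipate is purely bookkeeping rather than conceptual: correctly handling the reflection $j' \mapsto i - j'$ across the two nested sums (deciding which of the two factors gets reflected, and tracking the resulting boundary terms and the off-by-one in the inner range $1 \le j \le i-1$ versus $1 \le j \le i$), and then making sure the sign/exponent matches — in particular that the exponent $c_1 f_1 - k$ written in the statement is really $c_1 + f_1 - k$ coming out of Lemma \ref{CSSapart2} (this looks like a typographical slip in the statement, and I would verify it against the analogous $c_1+f_1-k$ in Lemma \ref{CSapart} and in the cyclotomic-harmonic-sum shuffle relation of Section \ref{CSInfRelations}). The degenerate case $a_1 e_1 = d_1 b_1$ of Lemma \ref{CSSapart2} does not arise here because $a_1 b_2 \ne a_2 b_1$ is implicitly the generic configuration — but if needed it is handled by the second branch of that lemma, collapsing the two $k$-sums into a single term with denominator $(a_1(i-j)+b_1)^{c_1+f_1}$, and I would note that this special case is routine. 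Once the interchange of summation order and the partial fractioning are set up consistently, the equality is a term-by-term identification, so no further estimates or limiting arguments are required — note in particular that this is a finite-$n$ identity, so (unlike the infinity statements it feeds into, cf. Remark \ref{CSSinfrelrem}) there is no convergence issue to address.
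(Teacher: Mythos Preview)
Your approach is essentially the paper's: expand the outer layer of the first cyclotomic S-sum, reindex to bring the denominator into the shape $(a_1(i-j)+b_1)^{c_1}(d_1 j+e_1)^{f_1}$, and apply Lemma~\ref{CSSapart2}. One small correction: only the first sum $\S{(a_1,b_1,c_1),\ldots}{x_1,\ldots;n-i}$ needs its leading layer peeled off --- the $(d_1,e_1,f_1)$-layer is already explicit on the left-hand side, so your second displayed expansion is superfluous. The paper's reindexing is precisely the chain $j\mapsto j+i$, interchange of summation order, then relabel $i\leftrightarrow j$, which is the reflection you anticipate; your flagging of the exponent $c_1 f_1 - k$ as a slip for $c_1+f_1-k$ is also correct.
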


\begin{proof}
 We use the following abbreviations: 
\begin{eqnarray*}
S_1(n)&=&\S{(a_1,b_1,c_1),\ldots,(a_k,b_k,c_k)}{x_1,x_2,\ldots,x_k;n}\\
\bar{S}_1(n)&=&\S{(a_2,b_2,c_2),\ldots,(a_k,b_k,c_k)}{x_2,x_3,\ldots,x_k;n} \\
S_2(n)&=&\S{(d_2,e_2,f_2),\ldots,(d_l,e_l,f_l)}{y_2,y_3,\ldots,y_l;n}.
\end{eqnarray*}
We have
\begin{eqnarray*}
\sum_{i=1}^n\frac{S_1(n-i)S_2(i) y_1^i} {(d_1i+e_1)^{f_1}}&=&\sum_{i=1}^n\sum_{j=1}^{n-i}\frac{\bar{S}_1(j)S_2(i) y_1^i}{(a_1 j + b_1)^{c_1} (d_1i+e_1)^{f_1}}\\
	&=&\sum_{i=1}^n\sum_{j=i+1}^{n}\frac{\bar{S}_1(j-i)S_2(i) y_1^i}{(a_1 (j-i) + b_1)^{c_1} (d_1i+e_1)^{f_1}}\\
	&=&\sum_{j=1}^n\sum_{i=1}^{j-1}\frac{\bar{S}_1(j-i)S_2(i) y_1^i}{(a_1 (j-i) + b_1)^{c_1} (d_1i+e_1)^{f_1}}\\
	&=&\sum_{i=1}^n\sum_{j=1}^{i-1}\frac{\bar{S}_1(i-j)S_2(i) y_1^j}{(a_1 (i-j) + b_1)^{c_1} (d_1j+e_1)^{f_1}}.
\end{eqnarray*}
Now we use Lemma \ref{CSSapart2} and complete the proof:
\begin{eqnarray*}
&&\sum_{i=1}^n\sum_{j=1}^{i-1}\frac{\bar{S}_1(i-j)S_2(i) y_1^j}{(a_1 (i-j) + b_1)^{c_1} (d_1j+e_1)^{f_1}}=\\
&&\hspace{1cm}\sum_{k=1}^{f_1} \binom{c_1+f_1-k-1}{c_1-1} a_1^{f_1-k} d_1^{c_1}\sum_{i=1}^n\frac{x_1^i \hspace{1cm}\sum_{j=1}^{i-1}\frac{\bar{S}_1(i-j)S_2(j)\left(\frac{y_1}{x_1}\right)^j} {(d_1 j +e_1)^{k}}}{(d_1 b_1 + a_1 e_1 + i a_1 d_1)^{c_1 f_1 -k}}\\
&&\hspace{1cm}+\sum_{k=1}^{c_1} \binom{c_1+f_1-k-1}{c_2-1} a_1^{f_1} d_1^{c_1-k}\sum_{i=1}^n\frac{y_1^i \sum_{j=1}^{i-1}\frac{S_2(i-j)\bar{S}_1(j) \left(\frac{x_1}{y_1}\right)^j} {(a_1 j +b_1)^{k}}}{(d_1 b_1 + a_1 e_1 + i a_1 d_1)^{c_1 f_1 -k}}.
\end{eqnarray*}
\end{proof}

\section{Reducing the Depth of Cyclotomic S-Sums}
In this section we will define cyclotomic Euler-Sums (compare \cite{Flajolet1998} for the depth 2 case). We will show that we can use cyclotomic Euler-Sums to express cyclotomic S-sums and 
thereby we are able to reduce the nested depth.

\begin{definition}[Cyclotomic Euler-Sums]For $a,c,n,i,k\in \N; b\in\N_0$ and $x\in \R^*$ we define cyclotomic Euler-Sums of depth $d=1$ as
$$
\textnormal{E}_{(a,b,c,x)}(n)=\sum_{i=1}^n\frac{x^i}{(a i+b)^c}=\S{(a,b,c)}{x,n}
$$
and cyclotomic Euler-Sums of depth $d>1$ as
$$
\textnormal{E}_{(a,b,c,x)}(e_1,e_2,\ldots,e_k)(n)=\sum_{i=1}^n\frac{x^i e_1(i)e_2(i)\cdots e_k(i)}{(a i+b)^c}
$$
where $e_1(i),e_2(i),\ldots,e_k(i)$ are cyclotomic Euler-Sums of depth less than $d$ and at least one of the cyclotomic Euler-sums $e_1(i),e_2(i),\ldots,e_k(i)$ has depth $d-1.$
\end{definition}

\begin{example}
\begin{eqnarray*}
  \textnormal{E}_{(2, 1, 4, 2)}\left(\textnormal{E}_{(3, 1, 2, 3)}, \textnormal{E}_{(5, 2, 6, 4)}(\textnormal{E}_{(1, 2, 3, 5)})\right)(n)&=&
\sum _{i=1}^n \frac{2^{i} \sum _{j=1}^{i} \frac{3^{j}}{\left(3 j+1\right){}^2} \sum _{j=1}^{i} \frac{4^{j} \sum _{k=1}^{j} \frac{5^{k}}{\left(k+2\right){}^3}}{\left(5 j+2\right){}^6}}{\left(2 i+1\right){}^4}\\
\textnormal{E}_{(2, 1, 4, 2)}(\textnormal{E}_{(3, 1, 2, 3)}, (\textnormal{E}_{(5, 2, 6, 4)}), n)
	&=&\sum _{i=1}^n \frac{2^{i} \sum _{j=1}^{i} \frac{3^{j} \sum _{k=1}^{j} \frac{4^{k}}{\left(5 k+2\right){}^6}}{\left(3 j+1\right){}^2}}{\left(2 i+1\right){}^4}\\
	&=&\textnormal{S}_{(2,1,4),(3,1,2),(5,2,6)}(2,3,4;n).
\end{eqnarray*}
\end{example}

\begin{lemma}
Consider the functions $I:\N \mapsto \R$ and $T:\N \mapsto \R.$ Let $n,a_1,c_1,a_2,c_2\in\N,b_1,b_2\in\N_0.$ Then we have:
\begin{eqnarray*}
\sum_{i=1}^n{\frac{T(i)\sum_{j=1}^i{\frac{I(j)}{(a_2 j+b_2)^{c_2}}}}{(a_1 i+b_1)^{c_1}}}&=&
	\sum_{i=1}^n{\frac{T(i)I(i)}{(a_1 i+b_1)^{c_1}(a_2 i+b_2)^{c_2}}}-\sum_{i=1}^n{\frac{I(i)\sum_{j=1}^i{\frac{T(j)}{(a_1 j+b_1)^{c_1}}}}{(a_2 i+b_2)^{c_2}}}\\
	&&+\sum_{i=1}^n{\frac{I(i)}{(a_2 i+b_2)^{c_2}}}\sum_{i=1}^n{\frac{T(i)}{(a_1 i+b_1)^{c_1}}}.
\end{eqnarray*}
 \label{lem1}
\end{lemma}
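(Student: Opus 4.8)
The identity to be shown is a ``swap the order of summation'' formula: on the left we have a double sum $\sum_{i=1}^n \sum_{j=1}^i$ with one index running up to the other. The plan is to apply exactly the elementary rearrangement
$$
\sum_{i=1}^n\sum_{j=1}^i a_{ij} \;=\; \sum_{i=1}^n a_{ii} \;+\; \sum_{i=1}^n\sum_{j=1}^i a_{ij} \;-\; \sum_{i=1}^n a_{ii},
$$
or, more to the point, the companion of the identity $\sum_{i=1}^n\sum_{j=1}^n a_{ij}=\sum_{i=1}^n\sum_{j=1}^i a_{ij}+\sum_{j=1}^n\sum_{i=1}^j a_{ij}-\sum_{i=1}^n a_{ii}$ that is already used repeatedly in this chapter (e.g.\ in the proof of Theorem~\ref{CSprod}). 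Concretely, with
$$
a_{ij} \;=\; \frac{T(j)}{(a_1 j + b_1)^{c_1}}\cdot\frac{I(i)}{(a_2 i + b_2)^{c_2}},
$$
the right-hand side of the rearrangement identity is precisely the three terms on the right of the lemma, and I just need to identify the left-hand side with the left-hand side of the lemma.

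\textbf{Key steps.} First I would write the left-hand side of the lemma as the double sum $\sum_{i=1}^n\sum_{j=1}^i a_{ij}$ with $a_{ij}$ as above, interchanging the roles of the names: here the outer index is $i$ (going with $T$ and the $(a_1 i+b_1)^{c_1}$ denominator) and the inner index is $j$ (going with $I$ and the $(a_2 j+b_2)^{c_2}$ denominator). Second, I apply the triangular-summation identity in the form
$$
\sum_{j=1}^n\sum_{i=1}^j a_{ij} \;=\; \sum_{i=1}^n\sum_{j=1}^n a_{ij} \;-\; \sum_{i=1}^n\sum_{j=1}^i a_{ij} \;+\; \sum_{i=1}^n a_{ii},
$$
i.e.\ I solve the standard identity for $\sum_{i=1}^n\sum_{j=1}^i a_{ij}$ and rename. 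Third, I evaluate each of the three resulting pieces: $\sum_{i=1}^n\sum_{j=1}^n a_{ij}$ factors as $\big(\sum_{i=1}^n I(i)/(a_2 i+b_2)^{c_2}\big)\big(\sum_{j=1}^n T(j)/(a_1 j+b_1)^{c_1}\big)$, which is the product term; $\sum_{i=1}^n a_{ii}$ is $\sum_{i=1}^n T(i)I(i)/\big((a_1 i+b_1)^{c_1}(a_2 i+b_2)^{c_2}\big)$, the ``diagonal'' term; and $\sum_{j=1}^n\sum_{i=1}^j a_{ij}$ is the remaining nested sum $\sum_{i=1}^n \frac{I(i)}{(a_2 i+b_2)^{c_2}}\sum_{j=1}^i \frac{T(j)}{(a_1 j+b_1)^{c_1}}$. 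Matching signs then gives exactly the claimed formula. I should double-check that the roles of $(a_1,b_1,c_1)$ versus $(a_2,b_2,c_2)$ are assigned consistently between $T,I$ and the two denominators throughout, since that is the only place a sign or an index-swap error can creep in.

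\textbf{Main obstacle.} There is essentially no analytic difficulty here — no convergence issue (everything is a finite sum), no appeal to Lemma~\ref{CSSapart2} (that lemma is relevant for reducing the diagonal product $1/\big((a_1 i+b_1)^{c_1}(a_2 i+b_2)^{c_2}\big)$ to a linear combination of simple powers, but the present lemma leaves that product as is). The only real care needed is bookkeeping: making sure the reindexing $\sum_{i=1}^n\sum_{j=1}^i \to \sum_{j=1}^n\sum_{i=j}^n$ and the subsequent relabeling are done correctly, and that the term $T(i)I(i)$ in the diagonal is attached to the right product of denominators. So the ``hard part'' is merely presenting the index manipulation cleanly; the proof is a one-line application of the triangular summation identity once $a_{ij}$ is chosen.
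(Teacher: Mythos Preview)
Your approach is exactly the paper's: the proof there is the one-line remark that the identity follows immediately from the triangular-summation formula $\sum_i\sum_j a_{ij}=\sum_i\sum_{j\leq i}a_{ij}+\sum_j\sum_{i\leq j}a_{ij}-\sum_i a_{ii}$ applied to $a_{ij}=\frac{I(i)T(j)}{(a_2 i+b_2)^{c_2}(a_1 j+b_1)^{c_1}}$, and then solved for the desired nested sum. Your own caution about the bookkeeping is warranted---as written, your $a_{ij}$ actually makes $\sum_{i}\sum_{j\leq i}a_{ij}$ equal to the \emph{second} nested sum on the right rather than the left-hand side---but once the roles of $i$ and $j$ are matched consistently the argument is correct and identical to the paper's.
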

\begin{proof}
This follows immediately from
\begin{eqnarray*}
\sum_{i=1}^n{\frac{I(i)}{(a_2 i+b_2)^{c_2}}}\sum_{i=1}^n{\frac{T(i)}{(a_1 i+b_1)^{c_1}}}&=&\sum_{i=1}^n{\sum_{j=1}^i{\frac{I(i)T(j)}{(a_2 i+b_2)^{c_2}(a_1 j+b_1)^{c_1}}}}\\
	&&+\sum_{j=1}^n{\sum_{i=1}^j{\frac{I(i)T(j)}{(a_2 i+b_2)^{c_2}(a_1 j+b_1)^{c_1}}}}\\&&-\sum_{i=1}^n{\frac{I(i)T(i)}{(a_1 i+b_1)^{c_1}(a_2 i+b_2)^{c_2}}}.
\end{eqnarray*}
\end{proof}

\begin{thm}
\label{dephreducethm}
A cyclotomic S-sum of depth $k$ can be expressed as a polynomial in $\Q[s_1,s_2,\ldots,s_n]$ where the $s_i$ are cyclotomic Euler-sums with depth less or equal $\log_2{(k)}+1$.
\end{thm}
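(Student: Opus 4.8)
The plan is to prove the theorem by induction on the depth $k$, using Lemma~\ref{lem1} as the central tool for a divide-and-conquer argument. The key observation is that Lemma~\ref{lem1} allows us to take a nested sum whose summand contains an inner sum of a certain depth and rewrite it so that the roles of the ``outer'' and ``inner'' parts are swapped, at the cost of introducing a product of two sums of smaller depth and a ``diagonal'' term $\sum_i \frac{T(i)I(i)}{(a_1i+b_1)^{c_1}(a_2i+b_2)^{c_2}}$ which, by Lemma~\ref{CSapart}, partial-fractions into cyclotomic S-sums of smaller depth. The aim is to apply this balancing trick so that a cyclotomic S-sum of depth $k$ is expressed through Euler-sums each of depth at most roughly $k/2$, and then iterate.

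First I would set up the induction. The base cases $k=1$ and $k=2$ are essentially immediate: a depth-$1$ cyclotomic S-sum is by definition a cyclotomic Euler-sum of depth $1$, and for $k=2$ one checks $\log_2(2)+1 = 2$, so there is nothing to reduce. For the inductive step, given $\S{(a_1,b_1,c_1),\ldots,(a_k,b_k,c_k)}{x_1,\ldots,x_k;n}$, I would split the index string at position $m := \lceil k/2 \rceil$, writing the sum as
\begin{equation*}
\sum_{i=1}^{n} \frac{x_1^i\, T(i)\, \sum_{j=1}^{i}(\text{inner block depending on }i)}{(a_1 i + b_1)^{c_1}},
\end{equation*}
where $T(i)$ collects the ``top'' part of the nesting (which, unrolled, is itself a cyclotomic S-sum of depth $m-1$ in $i$) and the inner block is a cyclotomic S-sum of depth $k-m$. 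Applying Lemma~\ref{lem1} (iterated down the nesting structure, or applied to a suitably chosen cut) rewrites this as a $\Q$-linear combination of: (a) products $s' \cdot s''$ of two cyclotomic Euler-sums where $s'$ has depth $\le m-1$ and $s''$ has depth $\le k-m$ — these are already in the desired polynomial form provided $\max(m-1,k-m) \le \log_2 k$; (b) ``swapped'' nested sums of depth $\le k$ but with the heavier block now on the inside, which by the same partial-fractioning via Lemma~\ref{CSapart} reduce to cyclotomic S-sums of strictly smaller depth; and (c) diagonal terms which, by Lemma~\ref{CSapart}, become cyclotomic S-sums of depth $k-1$. All of (b) and (c) have depth $< k$, so the induction hypothesis applies to them.

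The key numerical point to verify is the depth bookkeeping: one must check that halving the string and recursing gives the bound $\log_2(k)+1$. Concretely, if $D(k)$ is the maximal Euler-sum depth needed for a cyclotomic S-sum of depth $k$, the recursion yields $D(k) \le 1 + \max(D(\lceil k/2\rceil), D(\lfloor k/2\rfloor), D(k-1))$ for the product terms (the ``$1+$'' coming from wrapping the outer summation $\sum_i x^i/(a_1i+b_1)^{c_1}$ around the product of two Euler-sums to make a single Euler-sum), together with $D(k') \le \log_2(k')+1$ for $k' < k$ from the induction hypothesis, and this closes to $D(k) \le \log_2(k)+1$ since $1 + \log_2(\lceil k/2 \rceil) + 1 \le \log_2(k) + 1 + 1$ — here one has to be a little careful and argue that the depth-$(k-1)$ diagonal terms, and any leftover length-$k$ terms after swapping, are handled by recursing on them rather than naively, so that the $D(k-1)$ branch does not spoil the logarithmic bound.

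The main obstacle I anticipate is precisely the last point: controlling the recursion so that the depth truly halves at each level rather than merely decreasing by one. A single application of Lemma~\ref{lem1} produces terms that are still of depth $k$ (the swapped nested sum), and one must show these genuinely simplify — via Lemma~\ref{CSapart} the denominator product collapses, which lowers the depth, but one must track that this collapse happens at the right place in the nesting and does not merely shuffle the problem around. The cleanest route is probably to phrase the induction not on depth alone but on a well-founded order (depth, then some secondary complexity measure counting how ``unbalanced'' the nesting is), so that each rewriting step strictly decreases the measure; then one separately does the arithmetic showing the balanced recursion achieves $\log_2(k)+1$. I would also need to double-check the edge behaviour of $\lceil \log_2 k\rceil$ versus $\log_2 k + 1$ for small $k$ and for $k$ not a power of $2$, but this is routine once the structural recursion is in place.
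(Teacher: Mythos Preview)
Your overall intuition---balance the nesting so that each half has depth roughly $k/2$, then recurse---matches the paper's strategy, and you correctly anticipate both that the swapped term from Lemma~\ref{lem1} is the crux and that an auxiliary induction on an ``imbalance'' measure is the cure. But the concrete starting decomposition you wrote down is invalid: a depth-$k$ cyclotomic S-sum cannot be written as $\sum_{i=1}^n \frac{x_1^i\,T(i)\cdot\bigl(\sum_{j=1}^i\cdots\bigr)}{(a_1i+b_1)^{c_1}}$ with $T(i)$ itself a depth-$(m-1)$ S-sum. The summand of the outermost sum is the single depth-$(k-1)$ S-sum $\S{(a_2,b_2,c_2),\ldots,(a_k,b_k,c_k)}{\ldots;i}$, and it does not factor as a product of a ``top part'' and a ``bottom part''; there is no position-$m$ cut you can make that produces such a factorization. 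Lemma~\ref{lem1} only applies when the summand already has the shape $T(i)\cdot\sum_j I(j)/(\cdots)$, which a bare S-sum has only for the trivial choice $T\equiv 1$.

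The paper fills this gap by introducing the right intermediate object. Applying Lemma~\ref{lem1} once with $T\equiv 1$ to a depth-$(k{+}1)$ S-sum produces, besides terms of depth $\leq k$ handled by the outer induction, an Euler-sum of the shape $\sum_i \S{a_{s},\ldots,a_1}{i}\,\S{b_{r},\ldots,b_1}{i}/i^{c}$ with $(s,r)=(k-1,1)$. The paper then proves a separate statement for sums of this shape by a nested induction on $m=s+r$ and on the imbalance $d=s-r$: a further application of Lemma~\ref{lem1}, now with $T$ the shallower factor, peels one layer off the deeper factor and yields (i) a term with $m'=m-1$, (ii) a product of two S-sums of depths $s$ and $r+1$ handled by the outer hypothesis on $k$, and (iii) a swapped term with $(s',r')=(s-1,r+1)$, hence $d'=d-2$. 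Iterating until $d\in\{0,1\}$ leaves two factors each of depth at most $\lceil m/2\rceil$, and one more outer summation gives the bound $\log_2(m)+1$. This triple induction (on $k$, then $m$, then $d$) is exactly the well-founded order you gestured at; the missing ingredient in your proposal is identifying the family $\sum_i S_s(i)S_r(i)/(\cdots)$ as the object on which to run it.
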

\begin{proof}For sake of simplicity and readability we will give the proof for harmonic sums only. The proof for cyclotomic S-sums would require Lemma \ref{CSapart} at some points, but in general the
proof follows analogously.
Consider the harmonic sum $\S{a_k,a_{k-1},\ldots,a_1}{n}.$ We proceed by induction on the depth $k.$\\
$k=1: \ \S{a_1}{n}$ has depth $1\leq\log_2{(1)}+1=1$.\\
$k=2: \ \S{a_2,a_1}{n}$ has depth $2\leq\log_2{(2)}+1=2$.\\
Now let $k>2$ and we assume that the theorem holds for all depths $\leq k$ (we will refer to this induction hypothesis by (IH1)). We want to show that the theorem holds for $k+1:$\\
\begin{eqnarray*}
\S{a_{k+1},a_{k},\ldots,a_1}{n}
	&=&\sum_{i=1}^n{\frac{\S{a_k,a_{k-1},\ldots,a_1}{i}}{i^{a_{k+1}}}}\\
	&\stackrel{\text{Lemma} \ \ref{lem1}}{=}&
		\sum_{i=1}^n{\frac{\S{a_{k-1},\ldots,a_1}{i}}{i^{a_{k+1}+a_k}}}
		+\sum_{i=1}^n{\frac{\S{a_{k-1},\ldots,a_1}{i}}{i^{a_{k}}}}\sum_{i=1}^n{\frac{1}{i^{a_{k+1}}}}\\
		&&-\sum_{i=1}^n{\frac{\S{a_{k-1},\ldots,a_1}{i}\sum_{j=1}^i{\frac{1}{j^{a_{k+1}}}}}{i^{a_k}}}\\
	&=&	\S{a_{k+1}+a_k,a_{k-1},\ldots,a_1}{n}+\S{a_{k},a_{k-1},\ldots,a_1}{n}\S{a_{k+1}}{n}\\
		&&-\sum_{i=1}^n{\frac{\S{a_{k-1},\ldots,a_1}{i}\sum_{j=1}^i{\frac{1}{j^{a_{k+1}}}}}{i^{a_k}}}.
\end{eqnarray*}
$\S{a_{k+1}}{n}$ has depth $1<\log_2{(k+1)}+1.$
The depth of $\S{a_{k+1}+a_k,a_{k-1},\ldots,a_1}{n}$ and $\S{a_{k},a_{k-1},\ldots,a_1}{n}$ is $k$. Using the induction hypothesis (IH1) we can express these sums by polynomials of sums 
with depth $\leq\log_2{(k)}+1$. Since $\log_2{(k)}+1<\log_2{(k+1)}+1$ it remains to show that we can express the sum
$$
\sum_{i=1}^n{\frac{\S{a_{k-1},\ldots,a_1}{i}\sum_{j=1}^i{\frac{1}{j^{a_{k+1}}}}}{i^{a_k}}}
$$
by polynomials of sums with depth $\leq\log_2{(k+1)}+1.$
Let therefore be $m\leq k.$ We will show the more general statement that for $r\geq1,$ $s\geq1,$ $s\geq r$ and $r+s=m$ the sum
$$
\sum_{i=1}^n{\frac{\S{a_{s},\ldots,a_1}{i}\S{b_{r},\ldots,b_1}{i}}{i^{c}}}
$$
can be express by polynomials of sums with depth $\leq\log_2{(m+1)}+1.$ We proceed by induction on $m.$\\
$m=2: \ \sum_{i=1}^n{\frac{\S{a_1}{i}\S{b_1}{i}}{i^{c}}}$ has depth $2\leq\log_2{(2+1)}+1$.\\
Now assume that the property holds for $m-1$ (we will refer to this induction hypothesis by (IH2)). We want to show that the property holds for $m.$ We will distinguish three different cases:\\

\begin{description}
 \item[s=r:] Using (IH1) the sum $$\sum_{i=1}^n{\frac{\S{a_{s},\ldots,a_1}{i}\S{b_{s},\ldots,b_1}{i}}{i^{c}}}$$ can be express by polynomials of sums with depth $\leq\log_2{(\frac{m}{2})}+1+1=\log_2{(m)}+1\leq\log_2{(m+1)}+1$. Hence the property holds.
 \item[s=r+1:] Using (IH1) the sum $$\sum_{i=1}^n{\frac{\S{a_{s},\ldots,a_1}{i}\S{b_{s-1},\ldots,b_1}{i}}{i^{c}}}$$ can be express by polynomials of sums with depth $\leq\log_2{(\frac{m+1}{2})}+1+1=\log_2{(m+1)}+1$. Hence the property holds.
 \item[s=r+d:] For $d=0$ and $d=1$ see the previous cases. Suppose now the property holds for $d<\hat{d}.$ We will show that the property holds for $\hat{d}$ (we will refer to this induction hypothesis by (IH3)):
\begin{eqnarray*}
\sum_{i=1}^n{\frac{\S{a_{s},\ldots,a_1}{i}\S{b_{s-\hat{d}},\ldots,b_1}{i}}{i^{c}}}&\stackrel{\text{Lemma} \ \ref{lem1}}{=}&
	\sum_{i=1}^n{\frac{\S{a_{s-1},\ldots,a_1}{i}\S{b_{s-\hat{d}},\ldots,b_1}{i}}{i^{c+a_s}}}\\
	&&+\S{a_s,a_{s-1},\ldots,a_1}{n}\S{b_{c,s-\hat{d}},\ldots,b_1}{n}\\
	&&-\sum_{i=1}^n{\frac{\S{a_{s-1},\ldots,a_1}{i}\S{c,b_{s-\hat{d}},\ldots,b_1}{i}}{i^{a_s}}}.
\end{eqnarray*}
Due to (IH2) the first sum can be express by polynomials of sums with depth $\leq\log_2{(m-1+1)}+1\leq\log_2{(m+1)}+1$.
Due to (IH1) the sum $\S{a_s,a_{s-1},\ldots,a_1}{n}$ can be express by polynomials of sums with depth $\leq\log_2{(s+1)}+1\leq\log_2{(m+1)}+1$.
Due to (IH1) the sum $\S{b_{c,s-\hat{d}},\ldots,b_1}{n}$ can be express by polynomials of sums with depth $\leq\log_2{(s-\hat{d}+1)}+1\leq\log_2{(m+1)}+1$.
Due to (IH3) the last sum can be express by polynomials of sums with depth $\leq\log_2{(m+1)}+1.$
Hence the property hods for $\hat{d}.$
\end{description}
We proved that we can express the sum
$$
\sum_{i=1}^n{\frac{\S{a_{s},\ldots,a_1}{i}\S{b_{r},\ldots,b_1}{i}}{i^{c}}}
$$
by polynomials of sums with depth $\leq\log_2{(m+1)}+1.$ Hence we can express the sum
$$
\sum_{i=1}^n{\frac{\S{a_{k-1},\ldots,a_1}{i}\sum_{j=1}^i{\frac{1}{j^{a_{k+1}}}}}{i^{a_k}}}=\sum_{i=1}^n{
\frac{\S{a_{k-1},\ldots,a_1}{i}\S{a_{k+1}}{i}}{i^{a_k}}}
$$
by polynomials of sums with depth $\leq\log_2{(k+1)}+1.$ This finishes the proof.
\end{proof}

\begin{example}
\begin{eqnarray*}
&&\S{(2,1,4),(3,1,2),(5,2,6)}{2,3,4,n}=
432 \sum _{i=1}^n \frac{6^{i} \sum _{j=1}^{i} \frac{4^{j}}{\left(5 j+2\right){}^6}}{2 i+1}
+108\sum _{i=1}^n \frac{6^{i} \sum _{j=1}^{i} \frac{4^{j}}{\left(5 j+2\right){}^6}}{\left(2 i+1\right){}^2}\\
&&\hspace{1cm}+24 \sum _{i=1}^n \frac{6^{i} \sum _{j=1}^{i} \frac{4^{j}}{\left(5 j+2\right){}^6}}{\left(2 i+1\right){}^3}
+4 \sum _{i=1}^n \frac{6^{i} \sum _{j=1}^{i} \frac{4^{j}}{\left(5 j+2\right){}^6}}{\left(2 i+1\right){}^4}
-648 \sum _{i=1}^n \frac{6^{i} \sum _{j=1}^{i} \frac{4^{j}}{\left(5 j+2\right){}^6}}{3 i+1}\\
&&\hspace{1cm}+\sum _{i=1}^n \frac{2^{i}}{\left(2 i+1\right){}^4} \sum _{i=1}^n \frac{3^{i} \sum _{j=1}^{i} \frac{4^{j}}{\left(5 j+2\right){}^6}}{\left(3 i+1\right){}^2}
+81 \sum _{i=1}^n \frac{6^{i} \sum _{j=1}^{i} \frac{4^{j}}{\left(5 j+2\right){}^6}}{\left(3 i+1\right){}^2}\\
&&\hspace{1cm}-\sum _{i=1}^n \frac{3^{i} \sum _{j=1}^{i} \frac{2^{j}}{\left(2 j+1\right){}^4} \sum _{j=1}^{i} \frac{4^{j}}{\left(5 j+2\right){}^6}}{\left(3 i+1\right){}^2}\\
&&\hspace{1cm}=432 \text{E}_{(2,1,1,6)}(\text{E}_{(5,2,6,4)})(n)+108 \text{E}_{(2,1,2,6)}(\text{E}_{(5,2,6,4)})(n)\\
&&\hspace{1cm}+24\text{E}_{(2,1,3,6)}(\text{E}_{(5,2,6,4)})(n)+4 \text{E}_{(2,1,4,6)}(\text{E}_{(5,2,6,4)})(n)\\
&&\hspace{1cm}-648\text{E}_{(3,1,1,6)}(\text{E}_{(5,2,6,4)})(n)+\text{E}_{(2,1,4,2)}(n)\text{E}_{(3,1,2,3)}(\text{E}_{(5,2,6,4)})(n)\\
&&\hspace{1cm}+81 \text{E}_{(3,1,2,6)}(\text{E}_{(5,2,6,4)})(n)-\text{E}_{(3,1,2,3)}(\text{E}_{(5,2,6,4)},\text{E}_{(2,1,4,2)})(n).
\end{eqnarray*}
\end{example}

\begin{remark}
 In \SigmaP\ indefinite nested sums and products are represented in so-called depth-optimal $\Pi\Sigma^*-$extensions \cite{Schneider2008a}, \ie the sums are represented with respect to 
optimal nesting depth \cite{Schneider2007a}. 
The underlying algorithm of Theorem \ref{dephreducethm} supports \SigmaP\ in this represantation, and might lead for various instances to more sufficient algorithms.
\end{remark}

\begin{remark}
 We mention that the reversed direction, \ie transforming Euler sums to cyclotomic S-sums can be achieved with the method presented in Section \ref{Summation of Cyclotomic S-Sums}.
\end{remark}

\cleardoublepage  

\chapter{The Package \ttfamily HarmonicSums \rmfamily}
\label{Packagechapter}
This chapter is dedicated to the presentation of the basic features of the package \ttfamily HarmonicSums \rmfamily which was developed \cite{Ablinger2009} and which was extended and generalized 
in the frame of this thesis. All the algorithms for 
harmonic sums, S-sums, cyclotomic harmonic sums, cyclotomic S-sums, harmonic polylogarithms, multiple polylogarithms and cyclotomic harmonic polylogarithms which were presented in the previous 
chapters of this thesis have been implemented.
This chapter contains a whole Mathematica session that runs throughout the sections. The inputs are given in the way how one has to type them into Mathematica and the outputs are
displayed as Mathematica gives them back.
We start the session by loading the package:
\begin{fmma}
{
\In \text{\bf <\hspace{-0.15cm} < HarmonicSums.m}\\
\fbox{\parbox{12cm}{\footnotesize HarmonicSums by Jakob Ablinger -RISC Linz- Version 1.0 (01/03/12)}}
}
\end{fmma}

\subsection*{Defintion of the Nested Sums}
Harmonic sums, S-sums, cyclotomic harmonic sums and cyclotomic S-sums are denoted by the letter \ttfamily S \rmfamily as we can see in the following examples.\\
The command \ttfamily ToHarmonicSumsSum \rmfamily yields the definition of the sums.
\begin{fmma}
{
\In \text{\bf S[1, 2, 3, 4, n]//ToHarmonicSumsSum}\\
\Out {\sum _{\iota _1=1}^\text{n} \frac{\sum _{\iota _2=1}^{\iota _1}\frac{\sum _{\iota _3=1}^{\iota _2} \frac{\sum _{\iota_4=1}^{\iota _3} \frac{1}{\iota _4^4}}{\iota _3^3}}{\iota_2^2}}{\iota _1}}\\
}
{
\In \text{\bf S[1, 2, 3, \{2, 1/3, 4\}, n]//ToHarmonicSumsSum}\\
\Out {\sum _{\iota _1=1}^\text{n} \frac{2^{\iota _1} \sum _{\iota _2=1}^{\iota _1} \frac{3^{-\iota _2} \sum _{\iota _3=1}^{\iota _2} \frac{4^{\iota _3}}{\iota _3^3}}{\iota _2^2}}{\iota _1}}\\
}
{
\In \text{\bf S[\{\{3, 2, 1\}, \{4, 1, 2\}, \{2, 0, -2\}\}, n]//ToHarmonicSumsSum}\\
\Out {\sum _{\iota _1=1}^\text{n} \frac{\sum _{\iota _2=1}^{\iota _1} \frac{\sum _{\iota _3=1}^{\iota _2} \frac{(-1)^{\iota _3}}{4 \iota _3^2}}{\left(4 \iota _2+1\right){}^2}}{3 \iota _1+2}}\\
}
{
\In \text{\bf S[\{\{3, 2, 1\}, \{4, 1, 2\}, \{2, 0, -2\}\}, \{2, 1/3, 4\}, n]//ToHarmonicSumsSum}\\
\Out {\sum _{\iota _1=1}^\text{n} \frac{2^{\iota _1} \sum _{\iota _2=1}^{\iota _1} \frac{3^{-\iota _2} \sum _{\iota _3=1}^{\iota _2} 4^{\iota _3+1} \iota _3^2}{\left(4 \iota _2+1\right){}^2}}{3
   \iota _1+2}}\\
}
\end{fmma}
Note that for internal reasons, sometimes the name \ttfamily CS \rmfamily is used to denote cyclotomic harmonic sums and cyclotomic S-sums.
\begin{fmma}
{
\In \text{\bf CS[\{\{3, 2, 1\}, \{4, 1, 2\}, \{2, 0, -2\}\}, n]//ToHarmonicSumsSum}\\
\Out {\sum _{\iota _1=1}^\text{n} \frac{\sum _{\iota _2=1}^{\iota _1} \frac{\sum _{\iota _3=1}^{\iota _2} \frac{(-1)^{\iota _3}}{4 \iota _3^2}}{\left(4 \iota _2+1\right){}^2}}{3 \iota _1+2}}\\
}
{
\In \text{\bf CS[\{\{3, 2, 1\}, \{4, 1, 2\}, \{2, 0, -2\}\}, \{2, 1/3, 4\}, n]//ToHarmonicSumsSum}\\
\Out {\sum _{\iota _1=1}^\text{n} \frac{2^{\iota _1} \sum _{\iota _2=1}^{\iota _1} \frac{3^{-\iota _2} \sum _{\iota _3=1}^{\iota _2} 4^{\iota _3+1} \iota _3^2}{\left(4 \iota _2+1\right){}^2}}{3
   \iota _1+2}}\\
}
\end{fmma}

\subsection*{Transformation to Nested Sums}
Using the command \ttfamily TransformToSSums \rmfamily the algorithm described in Section \ref{Summation of Cyclotomic S-Sums} is performed to rewrite nested sum expressions in terms of harmonic 
sums, S-sums, cyclotomic harmonic sums and cyclotomic S-sums.
\begin{fmma}
{
\In {\text{\bf{$\sum _{\iota _1=2}^{n-1} \frac{(-5)^{\iota _1} \left(\sum _{\iota _2=1}^{\iota _1} \frac{2^{\iota _2}}{\iota _2^3}\right) \sum _{\iota _2=1}^{\iota _1} \frac{1}{\iota _2+1}}{5 \iota
   _1+3}$}}\text{\bf//TransformToSSums}}\\
\Out {\frac{1}{40 (3 + 5 \text{n})}(75 + 125 \text{n} + 
  200 \text{n} \text{S}[\{\{5, 3, 1\}, \{1, 0, 3\}, \{1, 0, 1\}\}, \{-5, 2, 1\}, \text{n}] +\\ 
  300 \text{n} \text{S}[\{\{5, 3, 1\}, \{1, 0, 3\}\}, \{-5, 2\}, \text{n}] - 
 120 \text{S}[\{\{5, 3, 1\}, \{1, 0, 3\}, \{1, 0, 1\}\}, \{-5, 2, 1\}, \text{n}] - \\
  200 \text{n} \text{S}[\{\{5, 3, 1\}, \{1, 0, 4\}\}, \{-5, 2\}, \text{n}] + 
  120 \text{S}[\{\{5, 3, 1\}, \{1, 0, 1\}, \{1, 0, 3\}\}, \{-5, 1, 2\}, \text{n}] + \\
  200 \text{n} \text{S}[\{\{5, 3, 1\}, \{1, 0, 1\}, \{1, 0, 3\}\}, \{-5, 1, 2\}, \text{n}] + 
  120 \text{S}[\{\{5, 3, 1\}, \{1, 0, 4\}\}, \{-5, 2\}, \text{n}] + \\
   180 \text{S}[\{\{5, 3, 1\}, \{1, 0, 3\}\}, \{-5, 2\}, \text{n}]- 
  12 (-1)^\text{n} 5^{1 + \text{n}} \text{S}[3, \{2\}, \text{n}] - 12 \text{S}[4, \{-10\}, \text{n}] - \\
  20 \text{n} \text{S}[4, \{-10\}, \text{n}] + 8 (-1)^\text{n} 5^{1 + \text{n}} \text{S}[4, \{2\}, \text{n}] + 
  12 \text{S}[1, 3, \{-5, 2\}, \text{n}] + 20 \text{n} \text{S}[1, 3, \{-5, 2\}, \text{n}] - \\
  8 (-1)^\text{n} 5^{1 + \text{n}} \text{S}[1, 3, \{1, 2\}, \text{n}] - 
  8 (-1)^\text{n} 5^{1 + \text{n}} \text{S}[3, 1, \{2, 1\}, \text{n}])}\\
}
\end{fmma}
If in addition \SigmaP\ is loaded into Mathematica, sums can be handled whose denominators do not factor linearly over $\Q$ see Remark \ref{denomB}.

\subsection*{Synchronization}
After executing \ttfamily AutoSync[True] \rmfamily harmonic sums, S-sums, cyclotomic harmonic sums and cyclotomic S-sums at argument $a\cdot n+b$ for $a\in\N,b\in\N_0$ are synchronized automatically to 
argument $n$.
\begin{fmma}
{
\In \text{\bf S[1, 2, 2 n + 1]}\\
\Out {1 + 2 \text{S}[\{\{2, 1, 1\}\}, n] + \text{S}[\{\{2, 1, 2\}\}, n] + 
 \frac{1}{2} \text{S}[\{\{1, 0, 1\}, \{2, 1, 2\}\}, n] + \\
 \frac{1}{4} \text{S}[\{\{2, 1, 1\}, \{1, 0, 2\}\}, n] + \text{S}[\{\{2, 1, 1\}, \{2, 1, 2\}\}, n] + 
 \frac{1}{8} \text{S}[1, 2, n]}\\
}
{
\In \text{\bf S[\{\{2, 1, 2\}, \{3, 1, 2\}\}, 2 \text{n} + 1]}\\
\Out {\frac{1}{144}- \frac{18}{125} \text{S}[\{\{3, 2, 1\}\}, \text{n}] - \frac{9}{100} \text{S}[\{\{3, 2, 2\}\}, \text{n}] + 
 \frac{24}{125} \text{S}[\{\{4, 1, 1\}\}, \text{n}] - \frac{39}{400} \text{S}[\{\{4, 1, 2\}\}, \text{n}] + \\
 \frac{1}{16} \text{S}[\{\{4, 3, 2\}\}, \text{n}] + \frac{1}{4} \text{S}[\{\{4, 1, 2\}, \{3, 2, 2\}\}, \text{n}] + 
 \text{S}[\{\{4, 1, 2\}, \{6, 1, 2\}\}, \text{n}] +\\ \frac{1}{4} \text{S}[\{\{4, 3, 2\}, \{3, 2, 2\}\}, \text{n}] + 
 \text{S}[\{\{4, 3, 2\}, \{6, 1, 2\}\}, \text{n}]}\\
}
\end{fmma}

\subsection*{Definition of the Nested Integrals}
Harmonic polylogarithms, multiple polylogarithms and cyclotomic harmonic polylogarithms are denoted by the letter \ttfamily H \rmfamily as we can see in the following examples. The command 
\ttfamily ToHarmonicSumsIntegrate \rmfamily yields the definition of the nested integrals.
\begin{fmma}
{
\In \text{\bf H[0, 1, 0, -1, x]//ToHarmonicSumsIntegrate}\\
\Out {\int_0^\text{x} \frac{\int_0^{a_1} \frac{\int_0^{a_2}\frac{\int_0^{a_3} \frac{1}{a_4+1} \, da_4}{a_3} \,da_3}{a_2-1} \, da_2}{a_1} \, da_1}\\
}
{
\In \text{\bf H[1, 2, -3, 4, x]//ToHarmonicSumsIntegrate}\\
\Out {\int_0^\text{x} \frac{\int_0^{a_1} \frac{\int_0^{a_2}\frac{\int_0^{a_3} \frac{1}{a_4-4} \, da_4}{a_3+3} \, da_3}{a_2-2} \, da_2}{a_1-1} \, da_1}\\
}
{
\In \text{\bf H[\{3, 1\}, \{5, 1\}, \{2, 0\}, x]//ToHarmonicSumsIntegrate}\\
\Out {\int_0^\text{x} \frac{a_1 \left(\int_0^{a_1} \frac{a_2\left(\int_0^{a_2} \frac{1}{a_3+1} \,da_3\right)}{a_2^4+a_2^3+a_2^2+a_2+1} \,da_2\right)}{a_1^2+a_1+1} \, da_1}\\
}
\end{fmma}

\subsection*{Shuffle and Quasi-Shuffle Product}
We can use the functions \ttfamily LinearExpand \rmfamily and \ttfamily LinearHExpand \rmfamily to expand products of harmonic sum, S-sums, cyclotomic harmonic sums and cyclotomic S-sums and products 
of harmonic polylogarithms, multiple polylogarithms and cyclotomic harmonic polylogarithms, respectively.
\begin{fmma}
{
\In \text{\bf S[1,-2, \{1, 4\}, n] S[1, \{-3\}, n]//LinearExpand}\\
\Out {-\text{S}[1, -1, \{1, -12\}, \text{n}] - \text{S}[2, -2, \{-3, 4\}, \text{n}] + 
 \text{S}[1, -2, 1, \{1, 4, -3\}, \text{n}] + \text{S}[1, 1, -2, \{-3, 1, 4\}, \text{n}] + 
 \text{S}[1, 1, -2, \{1, -3, 4\}, \text{n}]}\\
}
{
\In \text{\bf S[\{\{3, 2, 1\}, \{2, 0, -2\}\}, \{1, 4\}, n] S[\{\{3, 1, 1\}\}, \{-3\}, n]//LinearExpand}\\
\Out {-\text{S}[\{\{3, 1, 1\}, \{2, 0, -2\}\}, \{-3, 4\}, \text{n}] + 
 \text{S}[\{\{3, 2, 1\}, \{2, 0, -2\}\}, \{-3, 4\}, \text{n}] +\\
 \text{S}[\{\{3, 1, 1\}, \{3, 2, 1\}, \{2, 0, -2\}\}, \{-3, 1, 4\}, \text{n}] + 
 \text{S}[\{\{3, 2, 1\}, \{2, 0, -2\}, \{3, 1, 1\}\}, \{1, 4, -3\}, \text{n}] + \\
 \text{S}[\{\{3, 2, 1\}, \{3, 1, 1\}, \{2, 0, -2\}\}, \{1, -3, 4\}, \text{n}]}\\
}
{
\In \text{\bf H[1, 2, x] H[3, 4, x]//LinearHExpand}\\
\Out {\text{H}[1, 2, 3, 4, \text{x}] + \text{H}[1, 3, 2, 4, \text{x}] + \text{H}[1, 3, 4, 2, \text{x}] + 
 \text{H}[3, 1, 2, 4, \text{x}] + \text{H}[3, 1, 4, 2, \text{x}] + \text{H}[3, 4, 1, 2, \text{x}]}\\
}
{
\In \text{\bf H[\{3, 1\}, \{5, 1\}, x] H[\{5, 2\}, \{2, 0\}, x]//LinearHExpand}\\
\Out {\text{H}[\{3, 1\}, \{5, 1\}, \{5, 2\}, \{2, 0\}, \text{x}] + 
 \text{H}[\{3, 1\}, \{5, 2\}, \{2, 0\}, \{5, 1\}, \text{x}] + \\
 \text{H}[\{3, 1\}, \{5, 2\}, \{5, 1\}, \{2, 0\}, \text{x}] + 
 \text{H}[\{5, 2\}, \{2, 0\}, \{3, 1\}, \{5, 1\}, \text{x}] + \\
 \text{H}[\{5, 2\}, \{3, 1\}, \{2, 0\}, \{5, 1\}, \text{x}] + 
 \text{H}[\{5, 2\}, \{3, 1\}, \{5, 1\}, \{2, 0\}, \text{x}]}\\
}
\end{fmma}

\subsection*{Transformation of the Argument of the Nested Integrals}
The function \ttfamily TransformH \rmfamily is used to transform the argument of harmonic polylogarithms, multiple polylogarithms and cyclotomic harmonic polylogarithms as described in the Sections 
\ref{HSRelatedArguments}, \ref{SSRelatedArguments} and \ref{CSRelatedArguments}. Some examples are:
\begin{fmma}
{
\In \text{\bf TransformH[H[-1, 0, (1 - x)/(1 + x)], x]}\\
\Out {\text{H}[-1, -1, \text{x}] + \text{H}[-1, 0, 1] + \text{H}[-1, 1, \text{x}]}\\
}
{
\In \text{\bf TransformH[H[-3, 0, -2, x + 2], x]}\\
\Out {\text{H}[-2, 2] \text{H}[-5, -2, \text{x}] + \text{H}[-5, \text{x}] \text{H}[0, -2, 2] + \text{H}[-5, -2, -4, \text{x}] + 
 \text{H}[-3, 0, -2, 2]}\\
}
{
\In \text{\bf TransformH[H[-3, 0, -2, 2 x], x]}\\
\Out {\text{H}[-3/2, 0, -1, \text{x}]}\\
}
{
\In \text{\bf TransformH[H[1, -2, 1 - x], x]}\\
\Out {-\text{H}[-2, 1] \text{H}[0, \text{x}] + \text{H}[-2, 0, 1] - \text{H}[-2, 1, 1] + \text{H}[0, -2, 1] + 
 \text{H}[0, 3, \text{x}]}\\
}
{
\In \text{\bf TransformH[H[{2, 1}, {2, 0}, x], 1/x]}\\
\Out {-\text{H}[\{0, -1\}, \{0, 0\}, 1] + \text{H}[\{0, -1\}, \{0, 0\}, 1/\text{x}] + 
 \text{H}[\{0, -1\}, \{2, 0\}, 1] - \text{H}[\{0, -1\}, \{2, 0\}, 1/\text{x}] + \\
 \text{H}[\{0, 0\}, \{0, 0\}, 1] - \text{H}[\{0, 0\}, \{0, 0\}, 1/\text{x}] - 
 \text{H}[\{0, 0\}, \{2, 0\}, 1] + \text{H}[\{0, 0\}, \{2, 0\}, 1/\text{x}] - \\
 \text{H}[\{2, 0\}, \{0, 0\}, 1] + \text{H}[\{2, 0\}, \{0, 0\}, 1/\text{x}] + 
 \text{H}[\{2, 0\}, \{2, 0\}, 1] - \text{H}[\{2, 0\}, \{2, 0\}, 1/\text{x}] + \\ \text{H}[\{2, 1\}, \{2, 0\}, 1]}\\
}
\end{fmma}

\subsection*{Power Series Expansions of the Nested Integrals}
The function \ttfamily HToS \rmfamily can be used to compute the power series expansions of harmonic polylogarithms, multiple polylogarithms and cyclotomic harmonic polylogarithms about $0$. 
\ttfamily SToH \rmfamily is used to perform the reverse direction.
\begin{fmma}
{
\In \text{\bf HToS[H[-1, 0, -1, x]]}\\
\Out {\sum _{\iota _1=1}^{\infty } \frac{\text{S}[2,\iota _1] (-\text{x})^{\iota _1}}{\iota
   _1}-\sum _{\iota _1=1}^{\infty } \frac{(-\text{x})^{\iota _1}}{\iota _1^3}}\\
}
{
\In \text{\bf SToH[$\sum _{\iota _1=1}^{\infty } \frac{(-x)^{\iota _1} \text{S[6,$\iota _1$]}}{\iota _1}$]}\\
\Out {\text{H}[-1, 0, -1, \text{x}] - \text{H}[0, 0, -1, \text{x}]}\\
}
{
\In \text{\bf HToS[H[-3, 0, -1/2, x]]}\\
\Out {\sum _{\iota _1=1}^{\infty } \frac{3^{-\iota _1} (-\text{x})^{\iota _1} \text{S}[2,\{6\},\iota
   _1]}{\iota _1}-\sum _{\iota _1=1}^{\infty } \frac{2^{\iota _1} (-\text{x})^{\iota
   _1}}{\iota _1^3}}\\
}
{
\In \text{\bf SToH[$\sum _{\iota _1=1}^{\infty } \frac{3^{-\iota _1} (-x)^{\iota _1} \text{S[2,\{6\},$\iota _1$]}}{\iota _1}$]}\\
\Out {\text{H}[-3, 0, -\frac{1}{2}, \text{x}] - \text{H}[0, 0, -\frac{1}{2}, \text{x}]}\\
}
{
\In \text{\bf HToS[H[{3, 1}, {1, 0}, x]]}\\
\Out {\sum _{\iota _1=1}^{\infty } \frac{\text{x}^{3 \iota _1+1} \text{S}[\{\{3,-2,1\}\},\iota _1]}{3 \iota _1+1}
-\sum _{\iota _1=1}^{\infty } \frac{\text{x}^{3 \iota _1+1}
   \text{S}[\{\{3,-1,1\}\},\iota _1]}{3 \iota _1+1}+\sum _{\iota _1=1}^{\infty }
   \frac{\text{x}^{3 \iota _1+2} \text{S}[\{\{3,-1,1\}\},\iota _1]}{3 \iota _1+2}-\sum
   _{\iota _1=1}^{\infty } \frac{\text{x}^{3 \iota _1+2} \text{S}[\{\{3,0,1\}\},\iota _1]}{3
   \iota _1+2}+\sum _{\iota _1=1}^{\infty } \frac{\text{x}^{3 \iota _1+3}
   \text{S}[\{\{3,0,1\}\},\iota _1]}{3 \iota _1+3}-\sum _{\iota _1=1}^{\infty }
   \frac{\text{x}^{3 \iota _1} \text{S}[\{\{3,-2,1\}\},\iota _1]}{3 \iota _1}}\\
}
\end{fmma}
The function \ttfamily HInfSeries \rmfamily can be used to compute the asymptotic behavior of harmonic polylogarithms, multiple polylogarithms and cyclotomic harmonic polylogarithms for $x \rightarrow \infty$.
\begin{fmma}
{
\In \text{\bf HInfSeries[H[-1, -2, x], x]}\\
\Out {\sum _{\iota _1=1}^{\infty } \frac{\left(-\frac{1}{x}\right)^{\iota _1}
   S_1\left(2;\iota _1\right)}{\iota _1}-\text{H}[0,\text{x}]\left(\sum _{\iota _1=1}^{\infty }
   \frac{\left(-\frac{1}{\text{x}}\right)^{\iota _1}}{\iota _1}\right)-\text{H}[0,2] \left(-\sum
   _{\iota _1=1}^{\infty } \frac{\left(-\frac{1}{\text{x}}\right)^{\iota _1}}{\iota
   _1}+\text{H}[0,x\text{x}]-\text{H}[-1,1]\right)-\sum _{\iota _1=1}^{\infty }
   \frac{\left(-\frac{1}{\text{x}}\right)^{\iota _1}}{\iota
   _1^2}+\text{H}[-1,-2,1]+\text{H}[-1,0,1]-\text{H}[-1,-\frac{1}{2},1]-\text{H}[0,0,1]+\text{H}[0,-\frac{1}{2},1]+\frac{1}{2} \text{H}[0,\text{x}]{}^2}\\
}
{
\In \text{\bf HInfSeries[H[\{3, 1\}, x], x]}\\
\Out {-\sum _{\iota _1=1}^{\infty } \frac{\left(\frac{1}{\text{x}}\right)^{3 \iota _1}}{3 \iota
   _1}+\sum _{\iota _1=1}^{\infty } \frac{\left(\frac{1}{\text{x}}\right)^{3 \iota _1-2}}{3
   \iota _1-2}-\text{H}[\{0,0\},\frac{1}{\text{x}}]+\text{H}[\{0,0\},1]-\text{H}[\{3,0\},1]}\\
}
\end{fmma}

Note that the function \ttfamily HarmonicSumsSeries \rmfamily can be used to compute the power series expansion of expressions involving harmonic polylogarithms, multiple polylogarithms and cyclotomic 
harmonic polylogarithms up to a specified order. As well it can be used to determine the asymptotic behavior of expressions involving harmonic polylogarithms, multiple polylogarithms and cyclotomic 
harmonic polylogarithms up to a specified order.
\begin{fmma}
{
\In \text{\bf HarmonicSumsSeries[$x^3$+H[-1, 0, -1, x]/x, x, 0, 5]}\\
\Out {\left(\frac{49 \text{x}^4}{1080}+\frac{\text{x}^3}{18}+\frac{\text{x}^2}{18}\right)
   \text{H}[0,\text{x}]-\frac{139 \text{x}^4}{2400}+\frac{11 \text{x}^3}{12}-\frac{11 \text{x}^2}{108}}\\
}
{
\In \text{\bf HarmonicSumsSeries[1/$x^3$+H[-1, -2, x]/x, x, $\infty$, 5]}\\
\Out {-\frac{131 \text{x}^4}{960}+\frac{\text{x}^3}{6}+\frac{1}{\text{x}^3}-\frac{5 \text{x}^2}{24}+\frac{\text{x}}{4}}\\
}
\end{fmma}

\subsection*{Transforming Nested Integrals at Special Values to Nested Sums at Infinity and Vice Versa}
The function \ttfamily HToSinf \rmfamily expresses values given by finite harmonic polylogarithms and multiple polylogarithms at real arguments using harmonic sums and S-sums at infinity while 
\ttfamily SinfToH \rmfamily transforms harmonic sums and S-sums at infinity to harmonic polylogarithms and multiple polylogarithms at real arguments. In addition \ttfamily HToSinf \rmfamily 
can be used to express values given by finite cyclotomic harmonic polylogarithms at argument $1$ using cyclotomic harmonic sums at infinity.
\begin{fmma}
{
\In \text{\bf HToSinf[H[-2, -1, 1]]}\\
\Out {-\text{S}[-2, \infty] + \text{S}[1, 1, \{-\frac{1}{2}, 2\}, \infty]}\\
}
{
\In \text{\bf SinfToH[S[1, 1, \{$-\frac{1}{2}$, 2\}, $\infty$]]}\\
\Out {\text{H}[-2, -1, 1] - \text{H}[0, -1, 1]}\\
}
{
\In \text{\bf HToSinf[H[\{2, 0\}, \{2, 1\}, 1]]}\\
\Out {\frac{1}{2} \text{S}[\{\{1,0,1\}\},\infty]+\frac{1}{4} \text{S}[\{\{1,0,2\}\},\infty]-\text{S}[\{\{2,1,1\}\},\infty]-\text{S}[\{\{2,1,2\}\},\infty]\\-\frac{1}{4} \text{S}[\{\{1,0,1\},\{1,0,1\}\},\infty]-\frac{1}{2}
   \text{S}[\{\{1,0,1\},\{2,1,1\}\},\infty]+\frac{1}{2} \text{S}[\{\{2,1,1\},\{1,0,1\}\},\infty]+\\ \text{S}[\{\{2,1,1\},\{2,1,1\}\},\infty]}\\
}
\end{fmma}

\subsection*{Mellin Transformation and Inverse Mellin Transformation}
To compute the Mellin transform of possibly weighted harmonic polylogarithms, multiple polylogarithms and cyclotomic harmonic polylogarithms \ttfamily hlog[x] \rmfamily we can use the command 
\ttfamily Mellin[hlog[x],x,n]\rmfamily. Using the option \ttfamily PlusFunctionDefinition$\rightarrow$2 \rmfamily the function \ttfamily Mellin \rmfamily computes 
the Mellin transform as it is defined in \cite{Ablinger2009} and \cite{Remiddi2000}.
\begin{fmma}
{
\In \text{\bf Mellin[H[1, 0, x]/(1 + x), x, n]}\\
\Out {(-1)^\text{n} (-\text{S}[-1,2,\text{n}]-2 \text{S}[3,\infty]+\text{S}[-2,-1,\infty]+\text{S}[-1,-2,\infty])}\\
}
{
\In \text{\bf Mellin[H[2, 3, x]/(2 + x), x, n]}\\
\Out {(-2)^\text{n} \biggl(\text{S}[-1,\text{n}] \text{S}[2,\left\{\frac{1}{3}\right\},\infty]-\text{S}[1,\left\{-\frac{1}{2}\right\},\text{n}]
   \text{S}[2,\left\{\frac{1}{3}\right\},\infty]+\text{S}[1,1,\left\{-1,\frac{1}{2}\right\},\text{n}] \text{S}[1,\left\{\frac{1}{3}\right\},\infty]
-\\ \text{S}[1,1,\left\{-1,\frac{3}{2}\right\},\text{n}] \text{S}[1,\left\{\frac{1}{3}\right\},\infty]-\text{S}[-1,\text{n}]
   \text{S}[1,1,\left\{\frac{1}{2},\frac{2}{3}\right\},\infty]+\\ \text{S}[1,\left\{-\frac{1}{2}\right\},\text{n}]
   \text{S}[1,1,\left\{\frac{1}{2},\frac{2}{3}\right\},\infty]
+\text{S}[1,1,1,\left\{-1,\frac{3}{2},\frac{1}{3}\right\},\text{n}]-\text{S}[3,\left\{\frac{1}{3}\right\},\infty]
+\\ \text{S}[1,2,\left\{-\frac{1}{2},-\frac{2}{3}\right\},\infty]+\text{S}[2,1,\left\{\frac{1}{2},\frac{2}{3}\right\},\infty]
-\text{S}[1,1,1,\left\{-\frac{1}{2},-1,\frac{2}{3}\right\},\infty]\biggr)}\\
}
{
\In \text{\bf Mellin[H[\{3, 1\}, \{3, 0\}, x], x, 3 n + 1]}\\
\Out {\frac{\text{n}}{36 \text{n}+24} \biggl(-24 \text{S}[\{\{3,2,1\}\},\text{n}] \text{S}[\{\{3,1,1\}\},\infty]
	+\bigl(4 \text{S}[\{\{3,1,1\}\},\text{n}] +4 \text{S}[\{\{3,2,1\}\},\text{n}]+\\ 6 \bigr)
  \text{S}[\{\{1,0,1\}\},\infty]
      +6 \text{S}[\{\{3,1,1\}\},\text{n}] \bigl(2 \text{S}[\{\{3,1,1\}\},\infty]-4 \text{S}[\{\{3,2,1\}\},\infty]-1\bigr)
       +\\ 12 \text{S}[\{\{3,2,1\}\},\text{n}] \text{S}[\{\{3,2,1\}\},\infty]+6 \text{S}[\{\{3,2,1\}\},\text{n}]-4 \text{S}[\{\{3,1,1\},\{1,0,1\}\},\text{n}]
       +\\12 \text{S}[\{\{3,1,1\},\{3,2,1\}\},\text{n}]-4 \text{S}[\{\{3,2,1\},\{1,0,1\}\},\text{n}]+12 \text{S}[\{\{3,2,1\},\{3,1,1\}\},\text{n}]
       +\\24 \text{S}[\{\{3,1,1\}\},\infty]+12 \text{S}[\{\{3,1,2\}\},\infty]-42 \text{S}[\{\{3,2,1\}\},\infty]-12 \text{S}[\{\{3,2,2\}\},\infty]
       +\\8 \text{S}[\{\{1,0,1\},\{3,1,1\}\},\infty]+4 \text{S}[\{\{1,0,1\},\{3,2,1\}\},\infty]
       -12 \text{S}[\{\{3,1,1\},\{3,1,1\}\},\infty]\\-24 \text{S}[\{\{3,1,1\},\{3,2,1\}\},\infty]-12 \text{S}[\{\{3,2,1\},\{3,1,1\}\},\infty]
       +12 \text{S}[\{\{3,2,1\},\{3,2,1\}\},\infty]-9\biggr)}\\
}
\end{fmma}
To compute the inverse Mellin transform of a harmonic sum or a $\bar{S}-$sum denoted by \ttfamily sum \rmfamily we can use the command \ttfamily InvMellin[sum,n,x]\rmfamily. Note that
\ttfamily Delta1x \rmfamily denotes the Dirac-$\delta$-distribution $\delta(1-x)\in D'[0,1]$. For cyclotomic 
harmonic sums and S-sums which are not $\bar{S}-$sums \ttfamily InvMellin \rmfamily yields an integral representations, where  \ttfamily Mellin[a[x,n]]\rmfamily$:=\int_0^1a(x,n)dx$ and 
\ttfamily Mellin[a[x,n],\{x,c,d\}]\rmfamily$:=\int_c^da(x,n)dx.$
\begin{fmma}
{
\In \text{\bf InvMellin[S[1, 2, n], n, x]}\\
\Out {\frac{\text{H}[1,0,\text{x}]}{1-\text{x}}}\\
}
{
\In \text{\bf InvMellin[S[1, 2, \{1, 1/3\}, n], n, x]}\\
\Out {\text{Delta1x} \biggl(-\text{S}[1,\left\{\frac{1}{3}\right\},\infty] \text{S}[2,\left\{\frac{1}{3}\right\},\infty]-2
   \text{S}[3,\left\{\frac{1}{3}\right\},\infty]+\text{S}[1,2,\left\{\frac{1}{3},1\right\},\infty]+\\ \text{S}[2,1,\left\{\frac{1}{3},1\right\},\infty]\biggr)
+\frac{3^{-\text{n}} \text{S}[2,\left\{\frac{1}{3}\right\},\infty]}{3-\text{x}}-\frac{\text{S}[2,\left\{\frac{1}{3}\right\},\infty]}{1-\text{x}}-\frac{3^{-\text{n}}\text{H}[3,0,\text{x}]}{\text{x}-3}}\\
}
{
\In \text{\bf InvMellin[S[\{\{3, 1, 2\}\}, n], n, x]}\\
\Out {-\text{Mellin}[\text{x}^{3 \text{n}} \text{H}[0,\text{x}]]-\frac{1}{3} \text{Mellin}[\frac{(\text{x}^{3 \text{n}}-1) \text{H}[0,\text{x}]}{\text{x}-1}]+\frac{1}{3} \biggl(2
   \text{Mellin}[\frac{(\text{x}^{3 \text{n}}-1) \text{H}[0,\text{x}]}{\text{x}^2+\text{x}+1}]+\\ \text{Mellin}[\frac{\text{x} (\text{x}^{3 \text{n}}-1) \text{H}[0,\text{x}]}{\text{x}^2+\text{x}+1}]\biggr)-1}\\
}
{
\In \text{\bf InvMellin[S[1, 1, \{2, 3\}, n], n, x]}\\
\Out {\text{H}[0,4] \text{Mellin}[\frac{\text{x}^\text{n}-1}{\text{x}-1},\{\text{x},2,6\}]-\text{Mellin}[\frac{\text{x}^\text{n} \text{H}[0,\text{x}-2]}{\text{x}-1},\{\text{x},2,6\}]\\+\text{Mellin}[\frac{\left(\text{x}^\text{n}-1\right)
   \text{H}[2,\text{x}]}{\text{x}-1},\{\text{x},0,2\}]+\text{Mellin}[\frac{\text{H}[0,\text{x}]}{\text{x}+1},\{\text{x},0,4\}]}\\
}
\end{fmma}

\subsection*{Differentiation of the Nested Sums}
In order to differentiate harmonic sums, S-sums or cyclotomic harmonic sums, we can use the function \ttfamily DifferentiateSSum\rmfamily. Note that \ttfamily z2, z3,... \rmfamily denote the values of 
the Riemann zeta function $\zeta(s)$ at $s=2, s=3,... $ while \ttfamily ln2 \rmfamily $=\log 2.$
\begin{fmma}
{
\In \text{\bf DifferentiateSSum[S[3, 1, n], n]}\\
\Out {-\text{S}[3,2,\text{n}]-3\;\text{S}[4,1,\text{n}]+\text{z2}\;\text{S}[3,\text{n}]-\text{z2}\;\text{z3}+\frac{9\;\text{z5}}{2}}\\
}
{
\In \text{\bf DifferentiateSSum[S[3, 1, \{2, 3\}, n], n]}\\
\Out {-\text{H}[1,0,3] \text{S}[3,\{2\},\text{n}]+\text{H}[0,2] \text{S}[3,1,\{2,3\},\text{n}]+\text{H}[0,3] \text{S}[3,1,\{2,3\},\text{n}]-\text{S}[3,2,\{2,3\},\text{n}]-\\3
   \text{S}[4,1,\{2,3\},\text{n}]+\text{H}[1,0,2] \text{H}[0,-2,-2,-1]-\text{H}[1,0,2] \text{H}[0,-2,-2,4]+\text{H}[0,-2,-1] \text{H}[0,1,0,2]\\-\text{H}[0,-2,4]
   \text{H}[0,1,0,2]-\text{H}[0,2] \text{H}[0,-2,-2,-1,-1]+\text{H}[0,2] \text{H}[0,-2,-2,-1,4]\\-\text{H}[2,1]
   \text{H}[0,0,1,0,2]-\text{H}[0,-2,-2,-1,-2,-1]+\text{H}[0,-2,-2,-1,-2,4]+\text{H}[2,0,0,1,0,1]}\\
}
{
\In \text{\bf DifferentiateSSum[S[\{\{2, 1, 1\}, \{2, 0, 1\}\}, n], n]}\\
\Out {\frac{1}{24} \biggl(-96\;\text{ln2}\;\text{S}[\{\{2,1,-1\}\},\infty]{}^2-192\;\text{ln2}\;\text{S}[\{\{2,1,-1\}\},\infty]
      +\\16 \text{S}[\{\{2,1,1\}\},\text{n}]\;\text{S}[\{\{2,1,-1\}\},\infty]{}^2+32\;\text{S}[\{\{2,1,1\}\},\text{n}] \text{S}[\{\{2,1,-1\}\},\infty]
      +6 \bigl(\text{S}[\{\{2,1,1\}\},\text{n}]+\\1\bigr) \text{S}[\{\{1,0,2\}\},\infty]+12 \bigl(\text{S}[\{\{2,1,2\}\},\text{n}]+1\bigr) \text{S}[\{\{1,0,1\}\},\infty]
      -12 \text{S}[1,\infty]\;\text{S}[\{\{2,1,2\}\},\text{n}]+\\4\;\text{S}[\{\{2,1,1\}\},\text{n}]
      -12 \text{S}[\{\{2,1,1\},\{1,0,2\}\},\text{n}]-24 \text{S}[\{\{2,1,2\},\{1,0,1\}\},\text{n}]+\\16 \text{S}[\{\{2,1,-1\}\},\infty]{}^2
      +32 \text{S}[\{\{2,1,-1\}\},\infty]-96 \text{ln2}-12\;\text{S}[1,\infty]+21\;\text{z3}+4\biggr)}\\
}
\end{fmma}

\subsection*{Series Expansions of the Nested Sums}
The function \ttfamily TaylorSeries[sum,x,n,ord] \rmfamily can be used to calculate the Taylor series expansion around \ttfamily x \rmfamily of a harmonic sum, S-sum or cyclotomic harmonic 
sum \ttfamily sum \rmfamily at argument \ttfamily n \rmfamily up to order \ttfamily ord\rmfamily.
\begin{fmma}
{
\In \text{\bf TaylorSeries[S[2, 1, n], 0, n, 3]}\\
\Out {\text{n}^3 \left(\frac{74\; \text{z2}^3}{105}-\text{z3}^2\right)+\text{n}^2 \left(2\;\text{z2}\;\text{z3}-\frac{11\;\text{z5}}{2}\right)+\frac{7\;\text{n}\;\text{z2}^2}{10}}\\
}
{
\In \text{\bf TaylorSeries[S[\{\{3, 1, 2\}\}, n], 0, n, 2]}\\
\Out {\frac{1}{2}\;\text{n}^2 \left(36\;\text{H}[\{3,0\},0,0,0,1]+18 \text{H}[\{3,1\},0,0,0,1]-\frac{36\;\text{z2}^2}{5}+54\right)+\text{n} (4\;\text{H}[\{3,0\},0,0,1]+\\2 \text{H}[\{3,1\},0,0,1]+2
   \;\text{z3}-6)}\\
}
\end{fmma}

The function \ttfamily SExpansion[sum,n,ord] \rmfamily can be used to calculate the asymptotic expansion of a harmonic sum, $\bar{S}$-sum or a cyclotomic harmonic 
sum \ttfamily sum \rmfamily at argument \ttfamily n \rmfamily up to order \ttfamily ord\rmfamily.
\begin{fmma}
{
\In \text{\bf SExpansion[S[3, 1, n], n, 3]}\\
\Out {\left(\frac{1}{2 \text{n}^3}-\frac{1}{2 \text{n}^2}\right) \text{L}[\text{n}]-\frac{1}{6 \text{n}^3}-\frac{1}{4
   \text{n}^2}+\frac{\text{z2}^2}{2}}\\
}
{
\In \text{\bf SExpansion[S[3, 1, \{1/2, 1/3\}, n], n, 3]}\\
\Out {\text{S}[1,\left\{\frac{1}{3}\right\},\infty] \left(-\text{S}[3,\left\{\frac{1}{6}\right\},\infty]+\text{S}[3,\left\{\frac{1}{2}\right\},\infty]
+6^{-\text{n}} \left(\frac{1}{5 \text{n}^3}-\frac{3^\text{n}}{\text{n}^3}\right)\right)+\\6^{-\text{n}} \left(\frac{12}{25 \text{n}^3}-\frac{1}{5 \text{n}^2}\right) \text{S}[2,\left\{\frac{1}{3}\right\},\infty]
+6^{-\text{n}} \left(\frac{42}{125 \text{n}^3}-\frac{6}{25 \text{n}^2}+\frac{1}{5 \text{n}}\right) \text{S}[3,\left\{\frac{1}{3}\right\},\infty]
+\\ \text{S}[2,\left\{\frac{1}{6}\right\},\infty] \text{S}[2,\left\{\frac{1}{3}\right\},\infty]-\text{S}[1,\left\{\frac{1}{6}\right\},\infty]
 \text{S}[3,\left\{\frac{1}{3}\right\},\infty]-\text{S}[4,\left\{\frac{1}{3}\right\},\infty]+\text{S}[1,3,\left\{\frac{1}{6},2\right\},\infty]
+\\6^{-\text{n}} \left(\frac{1}{5 \text{n}^2}-\frac{12}{25 \text{n}^3}\right) \text{H}[0,3,1]+6^{-\text{n}} \left(-\frac{42}{125 \text{n}^3}+\frac{6}{25 \text{n}^2}-\frac{1}{5 \text{n}}\right) \text{H}[0,0,3,1]-\frac{\text{H}[3,1] 6^{-\text{n}}}{5
   \text{n}^3}}\\
}
{
\In \text{\bf SExpansion[S[\{\{3, 1, 1\}, \{2, 1, 1\}\}, n], n, 3]}\\
\Out {\text{H}[\{6,0\},1] \left(-\frac{1}{3} \text{H}[-1,5]-\frac{\text{L}[\text{n}]}{3}+\text{ln2}-\frac{10}{243 \text{n}^3}+\frac{11}{108 \text{n}^2}-\frac{5}{18 \text{n}}+\frac{5}{3}\right)+\\ \text{H}[\{6,1\},1]
   \left(\frac{2}{3} (\text{H}[-1,5]-2)+\frac{2 \text{L}[\text{n}]}{3}-\text{ln2}+\frac{20}{243 \text{n}^3}-\frac{11}{54 \text{n}^2}+\frac{5}{9 \text{n}}\right)+\\ \text{H}[\{3,1\},1]
   \left(-\text{H}[\{6,0\},1]+\text{H}[\{6,1\},1]-\frac{1}{2} \text{H}[-1,2]-\frac{\text{L}[\text{n}]}{3}+\frac{1}{36 \text{n}^2}-\frac{1}{6 \text{n}}+\frac{8}{3}\right)+\\ \text{H}[\{3,0\},1]
   \left(\text{H}[\{6,0\},1]-\text{H}[\{6,1\},1]-\frac{1}{2} \text{H}[-1,2]-\frac{\text{L}[\text{n}]}{6}+\frac{1}{72 \text{n}^2}-\frac{1}{12
   \text{n}}+\frac{1}{3}\right)-\text{H}[\{6,0\},1]^2+\\ \text{H}[\{6,1\},1]^2-\frac{2}{3} \text{H}[-1,\{6,0\},1]+\frac{1}{3} \text{H}[-1,\{6,1\},1]-\frac{1}{3}
   \text{H}[\{3,0\},-1,1]+\frac{1}{2} \text{H}[\{3,0\},1,1]+\\ \frac{1}{2} \text{H}[\{3,0\},\{3,0\},1]+\frac{1}{2} \text{H}[\{3,0\},\{3,1\},1]-\frac{2}{3}
   \text{H}[\{3,0\},\{6,0\},1]+\frac{1}{3} \text{H}[\{3,0\},\{6,1\},1]+\\ \frac{1}{3} \text{H}[\{3,1\},-1,1]+\frac{1}{2}
   \text{H}[\{3,1\},1,1]+\text{H}[\{3,1\},\{3,0\},1]+\text{H}[\{3,1\},\{3,1\},1]+\frac{2}{3} \text{H}[\{3,1\},\{6,0\},1]-\\ \frac{1}{3} \text{H}[\{3,1\},\{6,1\},1]+\frac{1}{3}
   \text{H}[\{6,0\},-1,1]-\frac{2}{3} \text{H}[\{6,0\},1,1]+\frac{2}{3} \text{H}[\{6,0\},\{6,0\},1]-\\ \frac{1}{3} \text{H}[\{6,0\},\{6,1\},1]+\frac{1}{3}
   \text{H}[\{6,1\},-1,1]+\frac{1}{3} \text{H}[\{6,1\},1,1]+\frac{2}{3} \text{H}[\{6,1\},\{6,0\},1]-\\ \frac{1}{3} \text{H}[\{6,1\},\{6,1\},1]+\text{ln2} \left(\frac{1}{3}
   (\text{H}[-1,5]-5)+\frac{\text{L}[\text{n}]}{3}+\frac{10}{243 \text{n}^3}-\frac{11}{108 \text{n}^2}+\frac{5}{18 \text{n}}\right)\\+\text{L}[\text{n}] \left(\frac{1}{6} (\text{H}[-1,2]-2)+\frac{5}{243 \text{n}^3}-\frac{11}{216
   \text{n}^2}+\frac{5}{36 \text{n}}\right)+\frac{\frac{25}{96}-\frac{1}{72} \text{H}[-1,2]}{\text{n}^2}+\frac{\frac{1}{12} \text{H}[-1,2]-\frac{7}{18}}{\text{n}}-\\ \frac{1}{3}
   \text{H}[-1,5]+\frac{\text{L}[\text{n}]^2}{12}-\frac{3905}{23328 \text{n}^3}-\frac{\text{z2}}{4}}\\
}
\end{fmma}

Note that the function \ttfamily HarmonicSumsSeries[expr,n,p,ord] \rmfamily can be used to compute series expansions about the point \ttfamily x=p \rmfamily of expressions \ttfamily expr \rmfamily involving 
harmonic sums, $\bar{S}$-sums, cyclotomic harmonic sums, harmonic polylogarithms, multiple polylogarithms and cyclotomic 
harmonic polylogarithms up to a specified order \ttfamily ord. \rmfamily

\begin{fmma}
{
\In \text{\bf HarmonicSumsSeries[n*S[2, n] + n*H[-2, n], n, 0, 4] // ReduceConstants}\\
\Out {\text{n}^4 \left(4\; \text{z5}+\frac{1}{24}\right)+\text{n}^3 \left(-\frac{6\; \text{z2}^2}{5}-\frac{1}{8}\right)+\text{n}^2 \left(2\;\text{z3}+\frac{1}{2}\right)}\\
}
{
\In \text{\bf HarmonicSumsSeries[n*S[2, n] + n*H[-2, n], n, 1, 4]}\\
\Out {(\text{n}-1) \left(\text{H}[-2,1]+2\; \text{z3}-\frac{2}{3}\right)+\text{H}[-2,1]+(\text{n}-1)^4 \left(-\frac{8\; \text{z2}^3}{7}+4\;
   \text{z5}+\frac{109}{108}\right)\\+(\text{n}-1)^2 \left(-\frac{6\; \text{z2}^2}{5}+2\;
   \text{z3}+\frac{23}{18}\right)+(\text{n}-1)^3 \left(-\frac{6\; \text{z2}^2}{5}+4\;
   \text{z5}-\frac{169}{162}\right)+1}\\
}
{
\In \text{\bf HarmonicSumsSeries[n*S[2, n] + n*H[-2, n], n, $\infty$, 4] // ReduceConstants}\\
\Out {-\text{n}\; \text{H}[0,2] +\text{n}\; \text{H}[0,\text{n}]-\frac{4}{\text{n}^3}+\frac{5}{2\; \text{n}^2}+\text{n}\; \text{z2}-\frac{3}{2 \text{n}}+1}\\
}
\end{fmma}

\subsection*{Basis Representations}
In order to look for basis representations of harmonic sums, S-sums, cyclotomic harmonic sums, harmonic polylogarithms and multiple polylogarithms, the functions \ttfamily ComputeHSumBasis, ComputeSSumBasis, ComputeCSumBasis \rmfamily and 
\ttfamily ComputeHLogBasis \rmfamily are provided.
\begin{itemize}
 \item \ttfamily ComputeHSumBasis[w,n] \rmfamily computes a basis and the corresponding relations for harmonic sums at weight \ttfamily w\rmfamily. With the options 
    \ttfamily UseDifferentiation \rmfamily and \ttfamily UseHalfInteger \rmfamily it can be specified whether relations due to differentiation and argument duplication should be used.
 \item \ttfamily ComputeSSumBasis[w,x,n] \rmfamily computes a basis and the corresponding relations for S-sums at weight \ttfamily w \rmfamily where the allowed $``x``-$ indices are defined in the list 
    \ttfamily x\rmfamily. With the options \ttfamily UseDifferentiation \rmfamily and \ttfamily UseHalfInteger \rmfamily it can be specified whether relations due to 
    differentiation and argument duplication should be used.
 \item \ttfamily ComputeCSumBasis[w,{let},n] \rmfamily computes a basis and the corresponding relations for cyclotomic harmonic sums at weight \ttfamily w \rmfamily with letters \ttfamily let\rmfamily. 
    With the options \ttfamily UseDifferentiation, UseMultipleInteger \rmfamily and \ttfamily UseHalfInteger \rmfamily it can be specified whether relations due to differentiation and 
    argument multiplication should be used.
\item \ttfamily ComputeHLogBasis[w,n] \rmfamily computes a basis and the corresponding relations for multiple polylogarithms at weight \ttfamily w\rmfamily. 
    The option \ttfamily Alphabet->a \rmfamily and \ttfamily IndexStructure->ind \rmfamily can be used to specify an alphabet or a special index structure respectively.
\end{itemize}

\begin{fmma}
{
\In {\text{\bf ComputeCSumBasis[2, \{\{2, 1\}\}, n, UseDifferentiation -> False,}\\
    \text{\bf UseMultipleInteger -> False, UseHalfInteger -> False}}\\
\Out {\biggl\{
\bigl\{\text{S}[\{\{2,1,-2\}\},\text{n} ],\text{S}[\{\{2,1,2\}\},\text{n} ],\text{S}[\{\{2,1,-1\},\{2,1,1\}\},\text{n} ]\bigr\}, \\
\bigl\{\text{S}[\{\{2,1,1\},\{2,1,1\}\},\text{n} ]\to \frac{1}{2} \text{S}[\{\{2,1,1\}\},\text{n} ]{}^2+\frac{1}{2} \text{S}[\{\{2,1,2\}\},\text{n} ], \\
	\text{S}[\{\{2,1,1\},\{2,1,-1\}\},\text{n} ]\to \text{S}[\{\{2,1,-2\}\},\text{n} ]+\text{S}[\{\{2,1,-1\}\},\text{n} ] \text{S}[\{\{2,1,1\}\},\text{n} ]\\-\text{S}[\{\{2,1,-1\},\{2,1,1\}\},\text{n} ],
	\text{S}[\{\{2,1,-1\},\{2,1,-1\}\},\text{n} ]\to \frac{1}{2} \text{S}[\{\{2,1,-1\}\},\text{n} ]{}^2+\\ \frac{1}{2} \text{S}[\{\{2,1,2\}\},\text{n} ]
\bigr\}
\biggr\}}\\
}
\end{fmma}
In order to look for relations for harmonic sums, S-sums and and cyclotomic harmonic sums at infinity we can use the functions \ttfamily ComputeHSumInfBasis, ComputeSSumInfBasis \rmfamily and 
\ttfamily ComputeCSumInfBasis\rmfamily.
\begin{fmma}
{
\In \text{\bf ComputeCSumInfBasis[2, \{\{2, 1\}\}]}\\
\Out {\biggl\{
\bigl\{\text{S}[\{\{2,1,-2\}\},\infty ],\text{S}[\{\{2,1,2\}\},\infty ],\text{S}[\{\{2,1,-1\},\{2,1,1\}\},\infty ]\bigr\}, \\
\bigl\{\text{S}[\{\{2,1,1\},\{2,1,1\}\},\infty ]\to \frac{1}{2} \text{S}[\{\{2,1,1\}\},\infty ]{}^2+\frac{1}{2} \text{S}[\{\{2,1,2\}\},\infty ], \\
	\text{S}[\{\{2,1,1\},\{2,1,-1\}\},\infty ]\to \text{S}[\{\{2,1,-2\}\},\infty ]+\text{S}[\{\{2,1,-1\}\},\infty ] \text{S}[\{\{2,1,1\}\},\infty ]\\-\text{S}[\{\{2,1,-1\},\{2,1,1\}\},\infty ],
	\text{S}[\{\{2,1,-1\},\{2,1,-1\}\},\infty ]\to \frac{1}{2} \text{S}[\{\{2,1,-1\}\},\infty ]{}^2+\\ \frac{1}{2} \text{S}[\{\{2,1,2\}\},\infty ]
\bigr\}
\biggr\}}\\
}
\end{fmma}
For harmonic sums and cyclotomic harmonic sums tables with relations are provided. These tables can be applied using the command \ttfamily ReduceToBasis\rmfamily . With the options \ttfamily UseDifferentiation \rmfamily 
and \ttfamily UseHalfInteger \rmfamily it is possible to specify whether relations due to differentiation and argument duplication should be used.
\ttfamily ReduceToBasis[expr,n,Dynamic->True] \rmfamily computes relations between harmonic sums, S-sums and cyclotomic harmonic sums in \ttfamily expr \rmfamily from scratch 
and applies them.
\ttfamily ReduceToHBasis \rmfamily uses precomputed tables with relations between harmonic polylogarithms and applies them to expressions involving harmonic polylogarithms.
\ttfamily ReduceConstants \rmfamily uses precomputed tables with relations between harmonic polylogarithms at argument $1$ and harmonic sums at infinity to reduce the appearing 
constants as far as possible.
\begin{fmma}
{
\In \text{\bf ReduceToBasis[S[2, 1, n] + S[1, 2, n], n]}\\
\Out {\text{S}[1, \text{n}] \text{S}[2, \text{n}] + \text{S}[3, \text{n}]}\\
}
{
\In \text{\bf ReduceToBasis[S[5, 5, {3, 3}, n], n, Dynamic -> True]}\\
\Out {\frac{1}{2} \left(\text{S}[5,\{3\},\text{n}]^2+\text{S}[10,\{9\},\text{n}]\right)}\\
}
{
\In \text{\bf ReduceToHBasis[H[1, 0, x] + H[0, 1, x]]}\\
\Out {\text{H}[0,\text{x}] \text{H}[1,\text{x}]}\\
}
{
\In \text{\bf ReduceConstants[S[2, $\infty$] + 2 H[1, 0, 1] + H[0, 1, -1, 1]]}\\
\Out {-\frac{3}{2} \text{S}[-1,\infty] \text{S}[2,\infty]-\text{S}[2,\infty]-\text{S}[3,\infty]}\\
}
\end{fmma}
\subsection*{Depth Reduction}
To reduce the depth of a harmonic sum, S-sum, cyclotomic harmonic sum or cyclotomic S-sum as described in the proof of Theorem \ref{dephreducethm} the function \ttfamily ReduceDepth \rmfamily is provided.
\begin{fmma}
{
\In \text{\bf ReduceDepth[S[1, 2, 3, 4, n]] // ToHarmonicSumsSum}\\
\Out {\sum _{\iota _1=1}^\text{n} \frac{\left(\sum _{\iota _2=1}^{\iota _1} \frac{\sum _{\iota _3=1}^{\iota _2} \frac{1}{\iota _3^4}}{\iota _2^3}\right) \sum _{\iota _2=1}^{\iota _1}
   \frac{1}{\iota _2^2}}{\iota _1}-\left(\sum _{\iota _1=1}^\text{n} \frac{\left(\sum _{\iota _2=1}^{\iota _1} \frac{1}{\iota _2^4}\right) \sum _{\iota _2=1}^{\iota _1} \frac{1}{\iota
   _2^2}}{\iota _1^3}\right) \sum _{\iota _1=1}^\text{n} \frac{1}{\iota _1}+\\ \sum _{\iota _1=1}^\text{n} \frac{\left(\sum _{\iota _2=1}^{\iota _1} \frac{1}{\iota _2^4}\right) \left(\sum _{\iota
   _2=1}^{\iota _1} \frac{1}{\iota _2^2}\right) \sum _{\iota _2=1}^{\iota _1} \frac{1}{\iota _2}}{\iota _1^3}-\sum _{\iota _1=1}^\text{n} \frac{\left(\sum _{\iota _2=1}^{\iota _1}
   \frac{1}{\iota _2^4}\right) \sum _{\iota _2=1}^{\iota _1} \frac{1}{\iota _2^2}}{\iota _1^4}-\\ \sum _{\iota _1=1}^\text{n} \frac{\left(\sum _{\iota _2=1}^{\iota _1} \frac{1}{\iota
   _2^4}\right) \sum _{\iota _2=1}^{\iota _1} \frac{1}{\iota _2}}{\iota _1^5}+\left(\sum _{\iota _1=1}^\text{n} \frac{1}{\iota _1}\right) \sum _{\iota _1=1}^\text{n} \frac{\sum _{\iota _2=1}^{\iota
   _1} \frac{1}{\iota _2^4}}{\iota _1^5}+\sum _{\iota _1=1}^\text{n} \frac{\sum _{\iota _2=1}^{\iota _1} \frac{1}{\iota _2^4}}{\iota _1^6}}\\
}
\end{fmma}

\cleardoublepage 

\chapter{Multi-Variable Almkvist Zeilberger Algorithm and Feynman Integrals}
\label{AZchapter}
In \cite{Bluemlein2011} integrals emerging in renormalizable Quantum Field Theories, like Quantum Electrodynamics or Quantum Chromodynamics are transformed by means of symbolic computation 
to hypergeometric multi-sums. The very general class of Feynman integrals which are considered in \cite{Bluemlein2011} are of relevance for many physical processes at high energy colliders, 
such as the Large Hadron Collider, LHC, and others.
The considered integrals are two--point
Feynman integrals in $D$-dimensional Minkowski space with one time- and $(D-1)$
Euclidean space dimensions, $\ep = D - 4$ and $\ep \in {\mathbb R}$ with
$|\ep| \ll 1$ of the
following structure:
\begin{eqnarray}
{\cal I}(\ep,N,p) = \int \frac{d^D p_1}{(2\pi)^D} \ldots \int \frac{d^D
p_k}{(2\pi)^D}
\frac{{\cal N}(p_1, \ldots p_k; p; m_1 \ldots
m_k;
\Delta, N)}{(-p_1^2 + m_1^2)^{l_1} \ldots (-p_k^2 + m_k^2)^{l_k}}
\prod_V \delta_V~.
\label{eq:A7}
\end{eqnarray}
They can be shown to obey diffence equations with respect to $N,$ see, \eg \cite{BKKS}.
In (\ref{eq:A7}) the external momentum $p$ and the loop momenta $p_i$ denote $D$-dimensional
vectors, $m_i > 0 , m_i \in
{\mathbb R}$ are scalars
(masses),
$m_i \in \{0, M\}$,
$k, l_i \in {\mathbb N}$, $k \geq 2,l_i \geq 1$, and $\Delta$ is a light-like
$D$-vector,
$\Delta.\Delta = 0$. The numerator function ${\cal N}$ is a polynomial in the
scalar products $p.p_i,~p_i.p_k$ and of monomials $(\Delta.p_{(i)})^{n_i}$, $n_i \in {\mathbb N},
n_i \geq 0$. $N \in {\mathbb N}$ denotes the spin of a local operator stemming from
the light cone expansion, see, e.g., \cite{Frishman1971} and references therein, which
contributes to the numerator function ${\cal N}$ with a polynomial in $\Delta.p_i$ of
maximal degree $N$, cf. \cite{Bierenbaum2009}. Furthermore it is assumed for simplicity that only
one of the loops is formed of massive lines. The $\delta_V$ occurring in (\ref{eq:A7}) are shortcuts for Dirac delta distributions in $D$ dimensions 
$\delta_V = \delta^{(D)}\left(\sum_{l=1}^k a_{V,l}p_l\right), a_{V,l}\in\Q.$\\
These integrals are mathematically well defined and in \cite{Bluemlein2011} it is showed how they can be mapped onto
integrals on the $m$-dimensional unit cube with the following structure:
\begin{eqnarray}
{\cal I}(\ep,N) = C(\ep, N, M) \int_0^1 dy_1 \ldots \int_0^1 dy_m
\frac{\sum_{i=1}^k \prod_{{l}=1}^{r_i}
[P_{i,l}(y)]^{\alpha_{i,l}(\ep,N)}}{[Q(y)]^{\beta(\ep)}}~,
\label{eq:A9}
\end{eqnarray}
with $k\in\set N$, $r_1,\dots,r_k\in\set N$ and
where $\beta(\ep)$ is  given by a rational function in $\ep$, i.e., $\beta(\ep)\in\set R(\ep)$, and similarly
$\alpha_{i,l}(\ep,N) = n_{i,l} N + \overline{\alpha}_{i,l}$ for some $n_{i,l} \in \{0,1\}$ and $\overline{\alpha}_{i,l}\in\set R(\ep)$, see also \cite{Bogner2010}
in the case no local operator insertions are present.
$C(\ep, N, M)$ is a factor, which depends on the dimensional parameter $\ep$,
the integer parameter $N$ and the mass $M$.
$P_i(y), Q(y)$ are polynomials in the remaining Feynman parameters $y=(y_1,\dots,y_m)$ written in multi-index notation.
In (\ref{eq:A9}) all terms
which stem from local operator insertions were geometrically resumed; see~\cite{Bierenbaum2009}.
In \cite{Bluemlein2011} it was already mentioned that after splitting the integral (\ref{eq:A9}), the integrands fit into the input class of the multivariate Almkvist-Zeilberger algorithm. Hence, if 
the split integrals are properly defined, they obey homogenous recurrence relations in $N$ due to the theorems in \cite{AlmZeil}. In \cite{Bluemlein2011} the integrals of (\ref{eq:A9}) are transformed
further to a multi-sum representation, while in this chapter we want to tackle them directly by looking on integrals of the form
\begin{eqnarray}
\label{AZhypexpint}
{\cal I}(\ep,N) = \int_{u_d}^{o_d} \dots\int_{u_1}^{o_1}F(n;x_1, \dots, x_d;\ep) dx_1 \dots dx_d,
\end{eqnarray}
with $d,N \in \N$, $F(n;x_1, \dots, x_d;\ep)$ a hyperexponential term, $\ep>0$ a real parameter and $u_i,o_i \in \R\cup \{-\infty,\infty\}.$
Here we will use our package \ttfamily MultiIntegrate \rmfamily that can be considered as an enhenced implementation of the multivariate Almkvist Zeilberger algorithm to compute recurrences for the integrands and integrals we will subsequently take a closer look onto it. 
Subsequently, $\set K$ denotes a field with $\set Q\subseteq\set K$ (\eg $\set K =\Q(\ep)$ forms a rational function field) in which the usual operations can be computed.

\section{Finding a Recurrence for the Integrand}
\label{AZintegrandrec}
In order to find a recurrence for the integrand we will use the following theorem given in \cite{AlmZeil} which gives rise to the multi-variable Almkvist Zeilberger algorithm.
\begin{thm}[mAZ; see \cite{AlmZeil}]
Let
\begin{equation}
F(n;x_1, \dots , x_d)=POL(n;x_1, \dots, x_d) \cdot H(n;x_1, \dots, x_d), \label{mAZintegrand}
\end{equation}
where  $POL(n;x_1, \dots, x_d) \in \set K[n,x_1, \dots, x_d]$,  and
$$
H(n;x_1, \dots, x_d)=
e^{a(x_1, \dots,x_d)/b(x_1, \dots, x_d)} \cdot
\left ( \prod_{p=1}^P {S_p(x_1, \dots, x_d)}^{\alpha_p} \right )\cdot
\left ( { \frac{s(x_1, \dots, x_d)}{t(x_1, \dots, x_d)} } \right )^n,
$$
where $a(x_1, \dots, x_d),b(x_1, \dots, x_d)$,
$s(x_1, \dots, x_d),t(x_1, \dots, x_d)$ and
$S_p(x_1, \dots, x_d) \in \set K[x_1, \dots, x_d]$ ($1 \leq p \leq P$), and $\alpha_p\in \set K$. Note that such a sequence $F(n;x_1, \dots , x_d)$ is called hyperexponential.
Then there exists a non-negative integer $L$ and
there exist $e_0(n),e_1(n), \dots , e_L(n)\in\set K[n]$,
{\it not all zero}, and  there also exist $R_i(n;x_1, \dots, x_d)\in\set K(n,x_1, \dots, x_d)$
($i=1, \dots ,d$) such that
$$
G_i(n;x_1, \dots, x_d):=R_i(n;x_1, \dots,x_d)F(n;x_1, \dots, x_d)
$$
satisfy
\begin{equation}
\sum_{i=0}^L e_i(n) F(n+i;x_1, \dots, x_d)= \sum_{i=1}^d D_{x_i} G_i(n;x_1, \dots, x_d). \label{mAZrec}
\end{equation}
\label{mAZthm}
\end{thm}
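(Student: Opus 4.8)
The plan is to adapt the classical univariate creative-telescoping argument of Almkvist--Zeilberger to the multivariate setting, working with the partial-derivative operators $D_{x_i}$ in place of a single $D_x$ and with the shift operator in $n$. The key structural observation is that a hyperexponential term $F(n;x_1,\dots,x_d)$ as displayed has the property that each of its logarithmic partial derivatives $(D_{x_i}F)/F$ and its shift quotient $F(n+1;\cdot)/F(n;\cdot)$ is a rational function in $\set K(n,x_1,\dots,x_d)$. Concretely, $D_{x_i}F/F = u_i$ and $F(n+i;\cdot)/F(n;\cdot) = \rho_i$ for explicit rational functions; this is just the chain rule applied to the exponential, the product of $S_p^{\alpha_p}$, the power $(s/t)^n$, and the polynomial prefactor $POL$.

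First I would set up the $\set K(n)$-vector space $V$ spanned inside $\set K(n,x_1,\dots,x_d)$ by the ``normalized'' quantities $F(n+i;\cdot)/F(n;\cdot)$ for $i=0,1,2,\dots$; equivalently, one divides the whole sought identity (\ref{mAZrec}) through by $F(n;x_1,\dots,x_d)$, turning it into an identity in $\set K(n,x_1,\dots,x_d)$:
\begin{equation*}
\sum_{i=0}^L e_i(n)\,\rho_i(n;x_1,\dots,x_d) \;=\; \sum_{i=1}^d \left( D_{x_i}R_i + R_i\, u_i \right),
\end{equation*}
where I used $D_{x_i}(R_iF) = (D_{x_i}R_i)F + R_i (D_{x_i}F) = (D_{x_i}R_i + R_i u_i)F$. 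So the task reduces to: find $e_i(n)\in\set K[n]$ not all zero and $R_i\in\set K(n,x_1,\dots,x_d)$ solving this purely rational equation. The strategy is an induction on the number of variables $d$. For $d=1$ this is exactly the content of the univariate Almkvist--Zeilberger / Gosper-type argument: one bounds the denominators that can occur in a putative $R_1$ (using the structure of $u_1$ and the $\rho_i$), reduces to a linear algebra problem over $\set K(n)$ for the numerator coefficients, and invokes a dimension count to force the existence of a nontrivial solution once $L$ is taken large enough.

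For the inductive step I would isolate the last variable $x_d$: treat $F$ as a hyperexponential term in $x_d$ over the coefficient field $\set K(n,x_1,\dots,x_{d-1})$ (extended suitably so that the $S_p$, $s$, $t$, etc.\ still have the required form), and apply the one-variable case to produce operators $\tilde e_j$ and a certificate $G_d = R_d F$ with
\begin{equation*}
\sum_{j=0}^{L'} \tilde e_j(n;x_1,\dots,x_{d-1})\, F(n+j;\cdot) \;=\; D_{x_d} G_d,
\end{equation*}
where now the $\tilde e_j$ are polynomials in $n$ with coefficients in $\set K(x_1,\dots,x_{d-1})$ (one clears denominators to make them polynomial in the remaining variables as well). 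The left-hand side is again hyperexponential in $x_1,\dots,x_{d-1}$, so by the induction hypothesis it has a creative-telescoping relation $\sum_i e_i(n) \big(\text{LHS}\big)(n+i;\cdot) = \sum_{i=1}^{d-1} D_{x_i}\tilde G_i$; combining the two and collecting shifts in $n$ gives the desired $L$, $e_i(n)\in\set K[n]$, and $G_i = R_i F$. One must check that the composition of a telescoper with the earlier relation keeps the $e_i$ in $\set K[n]$ (not merely $\set K(n)$), which follows by clearing the finitely many denominators in $n$ at the end, and that ``not all zero'' is preserved, which follows because the univariate step already produces a nonzero operator.

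The main obstacle I anticipate is the careful bookkeeping of \emph{which ring the certificates and the $e_i$ live in} as one peels off variables: after applying the univariate case in $x_d$, the resulting $\tilde e_j$ and $R_d$ are rational in the other variables, and one has to argue that, after clearing denominators, the new integrand $\sum_j \tilde e_j F(n+j;\cdot)$ is still genuinely hyperexponential in $x_1,\dots,x_{d-1}$ of the shape required to re-apply the theorem --- i.e.\ the class of hyperexponential terms is closed under the operations used. A secondary subtlety is the termination/existence count: one needs that for $L$ large enough the homogeneous linear system over $\set K(n)$ (whose unknowns are the coefficients of the numerators of the $R_i$ together with the $e_i$) has more unknowns than equations, forcing a nonzero solution; this is where finiteness of the relevant denominator bounds is essential, and it is the quantitative heart of the proof. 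I would present these closure and counting facts as lemmas and then give the induction as above, deferring the explicit denominator bounds to the cited reference \cite{AlmZeil}.
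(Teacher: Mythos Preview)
The approach in the paper (following the cited reference) is not inductive on $d$; it is a direct, simultaneous ansatz. One sets $\overline{H} = e^{a/b}\prod_p S_p^{\alpha_p}\, s^n/t^{\,n+L}$, writes each certificate as $G_i = \overline{H}\, r_i\, X_i$ with unknown \emph{polynomials} $X_i$ (where $q_i/r_i$ is the logarithmic $x_i$-derivative of $\overline{H}$), and reduces the desired identity to a single polynomial equation in $x_1,\dots,x_d$. Coefficient comparison gives a homogeneous linear system over $\set K(n)$ in the coefficients of the $X_i$ together with the $e_i$; a degree bound on the $X_i$ and a count showing that for $L$ large enough there are more unknowns than equations force a nontrivial solution. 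The ``counting'' step you mention in passing at the end is, in the actual proof, the entire argument --- not a lemma supporting an induction.

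Your inductive scheme has a real gap. After applying the univariate case in $x_d$ you obtain
\[
\sum_j \tilde{e}_j(n;x_1,\dots,x_{d-1})\,F(n+j;x_1,\dots,x_d)\;=\;D_{x_d}G_d,
\]
and you then write ``the left-hand side is again hyperexponential in $x_1,\dots,x_{d-1}$, so by the induction hypothesis \dots''. But that left-hand side still depends on $x_d$: it is hyperexponential in all $d$ variables, not in $d-1$. You cannot invoke the $(d-1)$-variable statement over $\set K$ to obtain $e_i\in\set K[n]$. If instead you push $x_d$ into the ground field and apply the $(d-1)$-variable case over $\set K(x_d)$, the resulting $e_i$ land in $\set K(x_d)[n]$, so the unwanted dependence on $x_d$ reappears in the telescoper; iterating only shuffles the stray variable around. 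The ``closure'' issue you flag is therefore not bookkeeping but exactly where the argument breaks. Peeling off one variable at a time does work at the level of the \emph{integral} (this is how the paper proceeds in the section on finding a recurrence for the integral), but it does not produce an integrand identity with coefficients $e_i\in\set K[n]$.
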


In the proof of the theorem given in \cite{AlmZeil} the following expression is defined for a non-negative integer $L$
$$
\overline{H} (n;x_1, \dots, x_d)
    =e^{a(x_1, \dots,x_d)/b(x_1, \dots, x_d)} \cdot\left (\prod_{p=1}^P {S_p(x_1, \dots, x_d)}^{\alpha_p} \right )\cdot{ \frac{s(x_1, \dots, x_d)^n}{t(x_1, \dots, x_d)^{n+L}} }.
$$
From the logarithmic derivatives (for $i=1,\dots,d$) of $\overline{H} (n;x_1, \dots, x_d)$
$$
{ \frac{D_{x_i} \overline{H}(n;x_1, \dots, x_d)}{\overline{H}(n;x_1, \dots, x_d)} }={\frac{q_i(x_1, \dots, x_d)}{r_i(x_1, \dots, x_d)}}
$$
the ansatz for $i=1, \dots, d$,
\begin{equation}
G_i(n;x_1, \dots, x_d)=\overline{H}(n;x_1, \dots, x_d) \cdot r_i(x_1,\dots, x_d ) \cdot X_i(x_1, \dots, x_d), \label{mAZansatz}
\end{equation}
where $X_i(x_1, \dots , x_d)\in \set K[x_1, \dots, x_d]$ to be determined, is built. With this ansatz (\ref{mAZrec})
is equivalent to
\begin{eqnarray}
&&\sum_{i=1}^d [D_{x_i}r_i(x_1, \dots, x_d)+q_i(x_1, \dots, x_d)] \cdot X_i(x_1, \dots, x_d) \nonumber \\
&&\hspace{1cm}+r_i(x_1, \dots, x_d)\cdot D_{x_i}X_i(x_1, \dots, x_d)= h(x_1, \dots, x_d). \label{mAZrec2}
\end{eqnarray}

The general algorithm now is straightforward: Given an integrand of the form (\ref{mAZintegrand}), we can set $L=0,$ look for degree bounds for $X_i(x_1, \dots , x_d)$ and try to find a 
solution of (\ref{mAZrec2}) by coefficient comparison. If we do not find a solution of (\ref{mAZrec2}) with not all $e_i(n)$'s equal to zero, we increase $L$ by one, look for new degree bounds 
for $X_i(x_1, \dots , x_d)$ and try again to find a solution of (\ref{mAZrec2}). Again, if we do not find a solution  with not all $e_i(n)$'s equal to zero, we increase $L$ by one and repeat 
the process. Since according to \cite{AlmZeil} the existence of a solution of (\ref{mAZrec2}) with not all the $e_i(n)$'s equal to zero is guaranteed for sufficiently large $L,$ this process will 
eventually terminate.\\
A crucial point in this algorithm is the determination of degree bounds for the $X_i(x_1, \dots , x_d),$ in \ttfamily MultiIntegrate \rmfamily we use similar considerations as in the proof of Theorem mZ of \cite{AlmZeil} 
to determine such bounds.

During the implementation of our package \ttfamily MultiIntegrate \rmfamily it turned out that setting up the system (\ref{mAZrec2}) can be done usually rather fast. However the bottle neck for complicated examples usually is the solving of
(\ref{mAZrec2}) especially when there are several symbolic parameters around. Hence we put special emphasis on this step. In the following we want to describe a method using homomorphic images to speed up 
the solving of (\ref{mAZrec2}) and at the same time the whole algorithm: Note that parts of these speed ups have been exploited the first time in \ttfamily MultiSum \rmfamily \cite{Wegschaider,Riese}. Similar techniques are used in 
\ttfamily Sigma\rmfamily.
  
\begin{itemize}
 \item Replace all parameters (especially $n$ and $\alpha_p$) in (\ref{mAZrec2}) which pop up in (\ref{mAZintegrand}) by primes $q_j$ of a size which can be handled by hardware arithmetic and denote the 
	resulting equation by Eq$_h.$
 \item Use coefficient comparison with respect to $x_1,\ldots,x_d$ in Eq$_h;$ this leads to a system of equations SEq$_h$ and let $A_h$ be the corresponding matrix
 \item Solve the system of equations SEq$_h$ modulo a prime $q$ distinct from the $q_j$, by computing the null space of $A_h$ and denote the obtained basis by NS$_h;$  in our implementation we use 
	\ttfamily Nullspace \rmfamily of Mathematica to accomplish this task.
	\begin{itemize}
		\item If NS$_h$ is the empty set, conclude, that there is no solution of (\ref{mAZrec2}) and hence increase $L$ and compute a new ansatz
		\item If NS$_h$ is not empty, sort the elements of NS$_h$ by the number of contained zeros (most zeroes on top) and denote them by $l_1,\ldots,l_k.$
	\end{itemize}
 \item Start with $l_1;$ set the variables which correspond to zero entries in $l_1$ to zero; in this way Eq$_h$ (and Eq) gets simpler (if there are zero entries in $l_1$) and coefficient 
	comparison with respect to $x_1,\ldots,x_d$ leads to a smaller system of equations SEq$_h;$ let $A_h$ now be the corresponding matrix; note that in comparison to the original $A_h$ the new one has 
	less columns; in the next step the number of rows is reduced (if possible).
 \item Compute the null space of the transpose of $A_h$ modulo $q;$ read off the rows, which can be removed (these rows correspond to equations in SEq$_h$ which can be removed); 
	using \ttfamily Nullspace \rmfamily of Mathematica this means interpreting the output as a matrix and looking for the last non zero element in each row.
 \item Apply coefficient comparison in Eq and remove the equations which correspond to the row numbers which we found in the previous step.
 \item Solve the system of equations SEq (which is usually now much smaller as in the beginning). If there are solution with not all $e_i(n)$'s equal to zero, take them and plug them into (\ref{mAZrec2}) to test whether it is a solution of (\ref{mAZrec});
	if none of these solutions is a solution of (\ref{mAZrec2}), try the next $l_i$; if none of the $l_i$ leads to a solution of (\ref{mAZrec}), increase $L$ and compute a new ansatz.
\end{itemize}

\begin{remark}
By replacing the parameters by primes we might introduce fake solutions, however this is quite unlikely and it has so far never happened in all the examples we have calculated so far. If we would run 
into such a fake solution this will not harm at all since we can always check if our solution solves the original problem. Although we might introduce new fake solutions we will not loose solutions
of the original problem, since we just add structure and do not remove it.\\
Removing columns and and rows might as well introduce fake solutions however in the end we will always have a certificate which guarantees the correctness of our result.
\end{remark}

\section{Finding a Recurrence for the Integral}
\label{AZintegralrec}
From \cite{AlmZeil} we know that if $F(n; \pm \infty)=0$ (and hence $G(n; \pm \infty)=0$) in Theorem \ref{mAZthm}
then it follows, by integrating over $[-\infty,\infty]^d$, that
$$
a(n):=\int_{-\infty}^{\infty} \dots\int_{-\infty}^{\infty}F(n;x_1, \dots, x_d) dx_1 \dots dx_d
$$
satisfies a homogenous linear recurrence equation with polynomial coefficients of the form
\begin{equation}\label{AZlinrec}
\sum_{i=0}^L e_i(n) a(n+i)= 0.
\end{equation}
We are interested in different integration limits and of course this can be generalized to other limits of integration: if $F(n;\dots,x_{i-1},u_i,x_{i+1},\dots)=0$ and $F(n;\dots,x_{i-1},o_i,x_{i+1},\dots)=0$ 
(and hence $G(n;\dots,x_{i-1},u_i,x_{i+1},\dots)=0$ and $G(n;\dots,x_{i-1},o_i,x_{i+1},\dots)=0$ ) in Theorem \ref{mAZthm} then
$$
a(n):=\int_{u_d}^{o_d} \dots\int_{u_1}^{o_1}F(n;x_1, \dots, x_d) dx_1 \dots dx_d,
$$
satisfies the homogenous linear recurrence equation with polynomial coefficients (\ref{AZlinrec}).
However, in general the integrand in (\ref{AZhypexpint}) does not necessarily vanish at the limits of integration. Therefore we have to extend the above sketched algorithm.
Subsequently we are going to present two ways of extending the algorithm such that we can handle integrals where $F(n;x_1, \dots, x_d)$ (and hence $G(n;x_1, \dots, x_d)$) does not vanish at the 
integration limits.

The first extension consists in forcing the $G_i$ to vanish at the integration bounds by modifying the ansatz (\ref{mAZansatz}). For $u_i,o_i\in\R$ we can for example use the ansatz 
(for multi-sums we refer, \eg to \cite{Wegschaider})
\begin{equation}
G_i(n;x_1, \dots, x_d)=\overline{H}(n;x_1, \dots, x_d) \cdot r_i(x_1,\dots, x_d ) \cdot X_i(x_1, \dots, x_d)(x_i-u_i)(x_i-o_i), \label{mAZansatz2}
\end{equation}
In principle, this new ansatz works, however the drawback is that it might increase the order of the recurrence that we find drastically. It might even happen that we would find a recurrence (suggested by homomorphic image testing) using 
ansatz (\ref{mAZansatz}), but due to time and space limitations we do not explicitly find a recurrence using ansatz (\ref{mAZansatz2}).

\begin{example}
As a straightforward example we consider the integral
$$
I(\ep,n)=\int_0^1\int_0^1\frac{(1+x_1\cdot x_2)^n}{(1+x_1)^\ep}dx_1dx_2.
$$
Note that the integrand does not vanish at the integration bounds, hence we apply the algorithm using ansatz (\ref{mAZansatz2}); this leads to the recurrence
\begin{eqnarray*}
&&2 (n+1) (\ep-n-2) I(\ep,n)-(n+2) (5 \ep-5 n-13) I(\ep,n+1)\\
&&+(n+3) (4 \ep-4 n-13) I(\ep,n+2)-(n+4) (\ep-n-4) I(\ep,n+3)=0.
\end{eqnarray*}
Using \SigmaP, the solution of the recurrence together with the initial values of the integral leads to the result
$$
I(\ep,n)=\frac{1}{n+1}\left(\sum _{i=1}^n \frac{1}{-i+\ep -1}-2^{1-\ep} \sum _{i=1}^n \frac{2^i}{-i+\ep -1}+\frac{2^{-\ep }\left(2^{\ep }-2\right)}{\ep -1}\right)
$$ 
\end{example}

The second extension keeps the ansatz (\ref{mAZansatz}), however it deals with the inhomogeneous recurrence similar to \cite{Bluemlein2011}; this will give rise to a recursive method. We consider the 
integral
$$
a(n):=\int_{u_d}^{o_d} \cdots\int_{u_1}^{o_1}F(n;x_1, \dots, x_d) dx_1 \dots dx_d.
$$
Suppose that we found
\begin{equation}
\sum_{i=0}^L e_i(n) F(n+i;x_1, \dots, x_d)= \sum_{i=1}^d D_{x_i} G_i(n;x_1, \dots, x_d)
\end{equation}
where at least one $G_i(n;x_1, \dots, x_d)$ does not vanish at the integration limits. By integration with respect to $x_1,\ldots,x_d$ we can deduce that $a(n)$ satisfies the inhomogeneous linear recurrence 
equation
\begin{eqnarray*}
\sum_{i=0}^L e_i(n) a(n+i)&=&\sum_{i=1}^d \int_{u_d}^{o_d} \cdots \int_{u_{i-1}}^{o_{i-1}}\int_{u_{i+1}}^{o_{i+1}}\cdots \int_{u_1}^{o_1}O_i(n)dx_1\dots dx_{i-1}dx_{i+1}\dots dx_d\\
			  & &-\sum_{i=1}^d \int_{u_d}^{o_d} \cdots \int_{u_{i-1}}^{o_{i-1}}\int_{u_{i+1}}^{o_{i+1}}\cdots \int_{u_1}^{o_1}U_i(n)dx_1\dots dx_{i-1}dx_{i+1}\dots dx_d
\end{eqnarray*}
with
\begin{eqnarray*}
U_i(n)&:=&G_i(n;x_1,\dots,x_{i-1},o_i,x_{i+1} \dots, x_d)\\
O_i(n)&:=&G_i(n;x_1,\dots,x_{i-1},u_i,x_{i+1} \dots, x_d).
\end{eqnarray*}
Note that the inhomogeneous part of the above recurrence equation is a sum of $2\cdot d$ integrals of dimension $d-1,$ which fit again into the input class of the Almkvist-Zeilberger algorithm. 
Hence we can apply the algorithms to the $2\cdot d$ integrals recursively until we arrive at the base case of one-dimensional integrals for which we have to solve an inhomogeneous linear recurrence relation 
where the inhomogeneous part is free of integrals. Given the solutions for the one-dimensional integrals we can step by step find the solutions of higher dimensional integrals until we finally find the 
solution for $a(n)$ by solving again an inhomogeneous linear recurrence equation and combining it with the initial values. Note that we have to calculate initial values with respect to $n$ for all the integrals 
arising in this process.\\ 
To compute the solutions of the arising recurrences we exploit algorithms from~\cite{Petkovsek1992,Abramov1994,Schneider2001,Schneider2005} which can constructively decide if a solution with certain initial values 
is expressible in terms of indefinite nested products and sums. To be more precise, we use the summation package~\SigmaP\ in which algorithms are implemented with which one can solve the 
following problem (see \cite{Bluemlein2011}).

\medskip

\begin{ProblemSpec}{\textbf{Problem \ProblemRS}: \textbf{R}ecurrence \textbf{S}olver for indefinite nested product-sum expressions.}
\textbf{Given} $a_0(N),\dots,a_d(N)\in\set K[N]$; given $\mu\in\set N$ such that $a_d(k)\neq0$ for all $k\in\set N$ with $N\geq\mu$; given an expression $h(N)$ in terms of indefinite nested product-sum expressions which can be evaluated for all $N\in\set N$ with $N\geq\mu$; given the initial values $(c_{\mu},\dots,c_{\mu+d-1})$ which produces the sequence $(c_i)_{i\geq\mu}\in{\set K}^{\set N}$ by the defining recurrence relation
$$a_0(N)c_N+a_1(N)c_{N+1}+\dots+a_d(N)c_{N+d}=h(N)\quad\forall N\geq\mu.$$
\textbf{Find}, if possible, $\lambda\in\set N$ with $\lambda\geq\mu$ and an indefinite nested product-sum expression $g(N)$ such that $g(k)=c_k$ for all $k\geq\lambda$.
\end{ProblemSpec}

Summarizing, with these algorithms we use the following strategy (note that we assume that we are able to compute the initial values for the arising integrals); compare \cite{Bluemlein2011}:

\noindent\textbf{Divide and conquer strategy}
\begin{enumerate}
\item BASE CASE: If $\mathcal{I}(N)$ has no integration quantifiers, return $\mathcal{I}(N).$

\item DIVIDE: As worked out in above, compute a recurrence relation
\begin{equation}\label{Equ:IntRec}
a_0(N)\mathcal{I}(N)+\dots+a_d(N)\mathcal{I}(N+d)=h(N)
\end{equation}
with polynomial coefficients
$a_i(N)\in\set K[N]$, $a_m(N)\neq0$ and the right side $h(N)$ containing a linear
combination of hyperexponential multi-integrals each with less than $d$ summation
quantifiers.
\item CONQUER: Apply the strategy recursively to the simpler integrals in
$h(N)$. This results in an indefinite nested product-sum expressions $\tilde{h}(N)$ with 
\begin{equation}\label{Equ:hsol}
\tilde{h}(N)=h(N)
\end{equation}
if the method fails to find the $\tilde{h}(N)$ in terms of indefinite
nested product-sum expressions, STOP.

\item COMBINE: Given~\eqref{Equ:IntRec} with~\eqref{Equ:hsol},
compute, if possible, $\tilde{\mathcal{I}}(N)$ in terms of nested product-sum expressions such that
\begin{equation}
\tilde{\mathcal{I}}(N)=\mathcal{I}(N)
\end{equation}
by solving problem \ProblemRS.
\end{enumerate}
\normalsize
\medskip

\begin{example}
Again we consider the integral
$$
I(\ep,n)=\int_0^1\int_0^1\underbrace{\frac{(1+x_1\cdot x_2)^n}{(1+x_1)^\ep}}_{F(n;x_1,x_2):=}dx_1dx_2,
$$
however now we will use the second proposed extension of the algorithm. Applying the algorithm using ansatz (\ref{mAZansatz}) and choosing $\set K=\Q(\ep)$ leads to
\begin{eqnarray*}
-(n+1)F(n;x_1,x_2)+(n+2)F(n+1;x_1,x_2)=D_{x_1}0+D_{x_2}\frac{x_2(x_1\cdot x_2+1)^{n+1}}{(1+x_1)^{\ep}}
\end{eqnarray*}
and hence it follows by integration
\begin{eqnarray}
-(n+1)I(\ep,n)+(n+2)I(\ep,n+1)=\underbrace{\int_0^1(x_1+1)^{n+1-\ep}dx_1}_{I_1(n)}-\int_0^1 0 dx_1. \label{mAZExrec1}
\end{eqnarray}
In the next step apply the algorithm to $I_1(n);$ we find
$$
I_1(\ep,n)=\frac{4\cdot 2^n-2^{\ep}}{2^{\ep}(n+2-\ep)}.
$$
Plugging in this result into (\ref{mAZExrec1}) yields
$$
-(n+1)I(\ep,n)+(n+2)I(\ep,n+1)=\frac{4\cdot 2^n-2^{\ep}}{2^{\ep}(n+2-\ep)}.
$$
The solution of this recurrence together with the initial values of the integral yields again the result
$$
I(\ep,n)=\frac{1}{n+1}\left(\sum _{i=1}^n \frac{1}{-i+\ep -1}-2^{1-\ep} \sum _{i=1}^n \frac{2^i}{-i+\ep -1}+\frac{2^{-\ep }\left(2^{\ep }-2\right)}{\ep -1}\right).
$$ 
\end{example}

\begin{remark}
 We can as well combine both extensions and thereby balance the size of the inhomogeneous part and the order of the recurrence.
 Although we could handle a much larger class of integrals using the second extension we again ran into integrals which we could 
not process due to time and space limitations. Note that another bottle neck
might be the calculation of the initial values of all the integrals arising.
\end{remark}

\section{Finding \texorpdfstring{$\ep$}{Epsilon}-Expansions of the Integral}\label{AZexpint}
As already mentioned we ran into integrals which we could not process due to time and space limitations using the methods described in the previous section. Therefore, inspired 
by \cite{Bluemlein2011}, we develop subsequently another method which computes $\ep$-expansions of integrals of the form (\ref{AZhypexpint}). In the following we assume that the
integral ${\cal I}(\ep,N)$ from (\ref{AZhypexpint}) has a Laurent expansion in $\ep$ for each $N\in\N$ with $N\geq\lambda$ for some $\lambda\in\N$ and 
thus it is an analytic function in $\ep$ throughout an annular region centered by $0$ where the pole at $\ep=0$ has some order $L.$ Hence we can write it in the form
\begin{equation}
 {\cal I}(\ep,N) = \sum_{l=-L}^{\infty} \ep^l I_l(N).
\end{equation}
In the following we try to find the first coefficients $I_t(N),I_{t+1}(N),\ldots,I_{u}(N)$ in terms of indefinite nested product-sum expressions of the expansion
\begin{equation}\label{Equ:FExp2}
{\cal I}(\ep, N) = I_t(N)\ep^t+I_{t+1}(N)\ep^{t+1}+I_{t+2}(N)\ep^{t+2}+\dots
\end{equation}
with $t=-L,\ (t \in \Z).$

Restricting the $\mathcal{O}$-notation to formal Laurent series $f=\sum_{i=r}^{\infty}f_i\ep^i$ and $g=\sum_{i=s}^{\infty}g_i\ep^i$ the notation
$$f=g+O(\ep^t)$$
for some $t\in\set Z$ means that the order of $f-g$ is larger or equal to $t$, i.e., $f-g=\sum_{i=t}^{\infty}h_i\ep^i$.
We start by computing a recurrence for  ${\cal I}(\ep,N)$ in the form
\begin{equation}\label{Equ:ExpansionEquMod}
a_0(\ep,N)T(\ep,N)+\dots+a_d(\ep,N)T(\ep,N+d)=h_0(N)+\dots,+h_u(N)\ep^u+O(\ep^{u+1});
\end{equation}
in order to accomplish this task we can use the methods presented in the previous two sections.
Given the recurrence we exploit an algorithm from \cite{Bluemlein2011} which can constructively decide if a formal Laurent series solution with certain initial values is expressible (up to a certain order) 
in terms of indefinite nested products and sums. To be more precise, with the algorithm \FLSR\ presented in \cite{Bluemlein2011} and implemented in \SigmaP\ we can solve the following problem.

\begin{ProblemSpec}{\textbf{Problem \FLSR}: \textbf{F}ormal \textbf{L}aurent \textbf{S}eries solutions of linear \textbf{R}ecurrences.}
\textbf{Given} $\mu\in\set N$; $a_0(\ep,N),\dots,a_d(\ep,N)\in\set K[\ep,N]$ such that $a_d(0,k)\neq0$ for all $k\in\set N$ with $k\geq\mu$; 
indefinite nested product-sum expressions $h_t(N)$, $\dots$, $h_u(N)$ ($t,u\in\set Z$ with $t\leq u$) which can be evaluated for all $N\in\set N$ with $N\geq \mu$;
$c_{i,j}\in\set K$ with $t\leq i\leq u$ and $\mu\leq j<\mu+d$\\
\textbf{Find,} if possible, $(r,\lambda,\tilde{T}(N)),$
where $r\in\{t-1,t,\dots,u\}$ is the maximal number such that for the unique solution  $T(N)=\sum_{i=t}^uF_i(N)\ep^i$ with $F_i(k)=c_{i,k}$ for all $\mu\leq k<\mu+d$ 
and with the relation~\eqref{Equ:ExpansionEquMod} the following holds: there are indefinite nested product-sum expressions $\tilde{F}_t(N),\dots,\tilde{F}_r(N)$ that compute the $F_t(N),\dots,F_r(N)$ for 
all $N\geq\lambda$ for some $\lambda\geq\mu$; if $r\geq t$, $\tilde{T}(N)=\sum_{i=t}^r\tilde{F}_i(N)\ep^i$.
\end{ProblemSpec}

Summarizing the considerations leads to the following theorem (compare \cite{Bluemlein2011}).
\begin{thm}\label{AZAlgForMultiInt}
Let ${\cal I}(\ep,N)$ be an integral of the form \eqref{AZhypexpint} which forms an analytic function in $\ep$ throughout an annular region centered 
by $0$  with the Laurent expansion ${\cal I}(\ep,N)=\sum_{i=t}^{\infty}f_i(N)\ep^i$ for some $t\in\set Z$
for each non-negative $N$; let $u\in\set N$. Then there is an algorithm (provided that we can compute initial values for the arising integrals) which finds the maximal $r\in\{t-1,0,\dots,u\}$ such that 
the $f_t(N),\dots,f_r(N)$ are expressible in terms of indefinite nested product-sums; it outputs such expressions $F_t(N),\dots,F_r(N)$ 
and $\lambda\in\set N$ s.t.\ $f_i(k)=F_i(k)$ for all $0\leq i\leq r$ and all $k\in\set N$ with $k\geq\lambda$.
\end{thm}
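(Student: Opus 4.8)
\textbf{Proof plan for Theorem~\ref{AZAlgForMultiInt}.}
The plan is to assemble the algorithm from three ingredients already made available in the excerpt: (i) the multivariate Almkvist--Zeilberger machinery of Section~\ref{AZintegrandrec} (Theorem~\ref{mAZthm}), which produces a creative-telescoping relation for the integrand; (ii) the two extensions of Section~\ref{AZintegralrec}, which turn that relation into a (possibly inhomogeneous) linear recurrence for the integral ${\cal I}(\ep,N)$ with coefficients in $\Q(\ep)[N]$ and an inhomogeneous part built from strictly lower-dimensional integrals of the same type; and (iii) the solver \FLSR\ of \cite{Bluemlein2011}, which constructively determines the maximal truncation order up to which the coefficients of a formal Laurent series solution of such a recurrence lie in the class of indefinite nested product-sum expressions. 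The proof is then a structural induction on the number $d$ of integration variables, mirroring the divide-and-conquer strategy stated after Problem~\ProblemRS.

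First I would treat the base case $d=0$: here ${\cal I}(\ep,N)$ has no integration quantifiers, so it is already an indefinite nested product-sum expression (after possibly rewriting a hyperexponential term in $N$ in the required form), and the claimed output is trivial with $r=u$ and $\lambda$ chosen so that all expressions involved evaluate. For the inductive step, assume the algorithm is available for all integrals of the shape \eqref{AZhypexpint} with fewer than $d$ integration variables. Given ${\cal I}(\ep,N)$ with $d$ variables, run the mAZ algorithm (Section~\ref{AZintegrandrec}) to obtain, for suitable $L$, the relation \eqref{mAZrec}; integrating over the box $\prod_i[u_i,o_i]$ and collecting the boundary terms $U_i(n),O_i(n)$ exactly as in Section~\ref{AZintegralrec} yields a recurrence
$$
a_0(\ep,N){\cal I}(\ep,N)+\dots+a_d(\ep,N){\cal I}(\ep,N+d)=h(\ep,N),
$$
where $h(\ep,N)$ is a $\Q(\ep)$-linear combination of at most $2d$ integrals of the form \eqref{AZhypexpint} in $d-1$ variables. (If some boundary $G_i$ does not vanish one may instead use the modified ansatz \eqref{mAZansatz2}, or combine both; in any case a recurrence of this shape is obtained.) Apply the induction hypothesis to each summand of $h$, truncated to order $u$: this produces indefinite nested product-sum expressions $\tilde h_t(N),\dots,\tilde h_r(N)$ agreeing with the corresponding Laurent coefficients of $h$ for $N\geq\lambda_0$, so that $h(\ep,N)=\sum_{i=t}^{r}\tilde h_i(N)\ep^i+O(\ep^{r+1})$ up to the order the inductive calls succeed. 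Feeding this data, together with the prescribed initial values (which by hypothesis we can compute for all the integrals occurring in the recursion tree), into Problem~\FLSR\ applied to the recurrence above yields the maximal $r\in\{t-1,0,\dots,u\}$ and the expressions $F_t(N),\dots,F_r(N)$ together with a threshold $\lambda\geq\lambda_0$; by uniqueness of the formal Laurent series solution with the given initial values, $F_i(k)=f_i(k)$ for all $0\le i\le r$ and $k\ge\lambda$. Termination follows because the recursion strictly decreases $d$ and both mAZ (guaranteed by \cite{AlmZeil}) and \FLSR\ (guaranteed by \cite{Bluemlein2011}) terminate.

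The main obstacle I expect is bookkeeping the interaction between the \emph{order of the recurrence} and the \emph{achievable truncation order} through the recursion: when \FLSR\ only certifies the lower-dimensional inhomogeneous parts up to some $r<u$, the resulting recurrence for ${\cal I}(\ep,N)$ can itself be solved by \FLSR\ only up to a possibly smaller order, so one must argue carefully that the $r$ returned is indeed the maximal one for ${\cal I}$ given the data actually available, and that the threshold $\lambda$ propagates consistently (each recursive call and each application of \FLSR\ may raise it, but only finitely often). A secondary, more technical point is the hypothesis that initial values of all integrals in the recursion tree are computable and that each such integral genuinely has a Laurent expansion of bounded pole order $L$ on a common annulus for $N\ge\lambda$; this is assumed in the statement, so in the write-up I would simply record it as a standing assumption and note, as in Remark after Section~\ref{AZintegralrec}, that in practice it is the computation of these initial values (and the size of the intermediate recurrences) that limits applicability, not the correctness of the algorithm.
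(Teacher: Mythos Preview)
Your proposal is correct and follows essentially the same approach as the paper: the paper does not give a formal proof beyond stating the theorem and then writing out precisely the divide-and-conquer strategy you describe (base case with no integration quantifiers, DIVIDE via the mAZ recurrence of Section~\ref{AZintegralrec}, CONQUER by recursion on the lower-dimensional integrals in $h(\ep,N)$, COMBINE via Algorithm~\FLSR). Your write-up is in fact more explicit than the paper's, spelling out the induction on $d$ and the propagation of the threshold $\lambda$ and the achievable order $r$, which the paper leaves implicit.
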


Let ${\cal I}(\ep,N)$ be a multi-integral of the form~\eqref{AZhypexpint} and assume that ${\cal I}(\ep,N)$ has a series expansion~\eqref{Equ:FExp2} for all $N\geq\lambda$ for 
some $\lambda\in\set N$. Combining the methods of the previous sections we obtain the following general method (compare \cite{Bluemlein2011}) to compute the first coefficients, say $F_t(N),\dots,F_u(N)$ of~\eqref{Equ:FExp2}. Note that we
assume that we can handle the initial values.

\medskip
\noindent\textbf{Divide and conquer strategy}
\begin{enumerate}
\item BASE CASE: If $\mathcal{I}(\ep, N)$ has no integration quantifiers, compute the expansion by standard methods.
\item DIVIDE: As worked out in Section~\ref{AZintegralrec}, compute a recurrence relation
\begin{equation}\label{Equ:IntRecurrence}
a_0(\ep,N)\mathcal{I}(\ep, N)+\dots+a_d(\ep,N)\mathcal{I}(\ep, N+d)=h(\ep,N)
\end{equation}
with polynomial coefficients
$a_i(\ep,N)\in\set K[\ep,N]$, $a_m(\ep,N)\neq0$ and the right side $h(\ep,N)$ containing a linear
combination of hyperexponential multi-integrals each with less than $d$ integration
quantifiers.

\item CONQUER: Apply the strategy recursively to the simpler integrals in
$h(\ep,N)$. This results in an expansion of the form
\begin{equation}\label{Equ:hExpansion}
h(\ep,
N)=h_t(N)+h_1(N)\ep+\dots+h_u(N)\ep^u+O(\ep^{u+1});
\end{equation}
if the method fails to find the $h_t(N),\dots,h_u(N)$ in terms of indefinite
nested product-sum expressions, STOP.

\item COMBINE: Given~\eqref{Equ:IntRecurrence} with ~\eqref{Equ:hExpansion},
compute, if possible, the $F_t(N),\dots,F_u(N)$ of~\eqref{Equ:FExp2} in terms of
nested product-sum expressions by executing Algorithm~\FLSR\ of \SigmaP.
\end{enumerate}
\normalsize
\medskip

\begin{example}
Again we consider the integral
$$
I(\ep,n)=\int_0^1\int_0^1\underbrace{\frac{(1+x_1\cdot x_2)^n}{(1+x_1)^\ep}}_{F(n;x_1,x_2):=}dx_1dx_2,
$$
however now we will use the second proposed extension of the algorithm. Applying the algorithm using ansatz (\ref{mAZansatz}) leads to
\begin{eqnarray*}
-(n+1)F(n;x_1,x_2)+(n+2)F(n+1;x_1,x_2)=D_{x_1}0+D_{x_2}\frac{x_2(x_1\cdot x_2+1)^{n+1}}{(1+x_1)^{\ep}}
\end{eqnarray*}
and hence it follows by integration
\begin{eqnarray}
-(n+1)I(\ep,n)+(n+2)I(\ep,n+1)=\underbrace{\int_0^1(x_1+1)^{n+1-\ep}dx_1}_{I_1(n)}-\int_0^1 0 dx_1. \label{mAZExexprec1}
\end{eqnarray}
In the next step apply the method to $I_1(n);$ we find
\begin{eqnarray*}
I_1(\ep,n)&=&\frac{2^{n+2}-1}{n+2}+\frac{\ep \left(-2^{n+2} \left(\textnormal{H}_{-1}(1) (n+2)-1\right)-1\right)}{(n+2)^2}\\
	   && +\frac{\ep^2 \left(2^{n+1}\left(\textnormal{H}_{-1}(1){}^2 (n+2)^2-2 \textnormal{H}_{-1}(1) (n+2)+2\right)-1\right)}{(n+2)^3}+O\left(\ep^3\right).
\end{eqnarray*}
Plugging in this result into (\ref{mAZExrec1}) and solving the resulting recurrence by means of \FLSR\ and combining the solution with the initial values of the integral yields the result
\begin{eqnarray*}
I(\ep,n)&=&\frac{\textnormal{S}_1(2;n)}{n+1}-\frac{\textnormal{S}_1(n)}{n+1}+\frac{2 (n+1)^3+4 \left(-n+2^n-1\right) (n+1)^2+2 n (n+1)^2}{2 (n+1)^4}\\
	&&+\ep \Biggl( \textnormal{H}_{-1}(1) \left(\frac{-2^{n+2} (n+1)-2^{n+2} n (n+1)}{2(n+1)^4}-\frac{\textnormal{S}_1(2;n)}{n+1}\right)\\
	&&+\frac{\left(2 (n+1) n^2+4 (n+1) n+2 (n+1)\right) \textnormal{S}_2(2;n)}{2(n+1)^4}-\frac{\textnormal{S}_2(n)}{n+1}\\
	&&+\frac{2^{n+2} (n+1)-2 (n+1)}{2 (n+1)^4}\Biggr)+\ep^2 \Biggl( \textnormal{H}_{-1}(1){}^2 \left(\frac{\textnormal{S}_1(2;n)}{2 (n+1)}+\frac{2^{n+1}( n^2+2 n+1)}{2 (n+1)^4}\right)\\
	&&+\textnormal{H}_{-1}(1)\left(\frac{-2^{n+2} n-2^{n+2}}{2 (n+1)^4}-\frac{\textnormal{S}_2(2;n)}{n+1}\right)+\frac{\textnormal{S}_3(2;n)}{n+1}
	  -\frac{\textnormal{S}_3(n)}{n+1}+\frac{2^{n+2}-2}{2(n+1)^4}\Biggr)\\&&+O\left(\ep^3\right).
\end{eqnarray*}
\end{example}

\section{The Package \ttfamily MultiIntegrate \rmfamily}
\setcounter{mmacnt}{0}
This section is dedicated to the presentation of the basic features of the package \ttfamily MultiIntegrate, \rmfamily which was developed in the frame of this thesis.
In order to use the package \ttfamily MultiIntegrate \rmfamily C. Schneider's packages \ttfamily Sigma \rmfamily \cite{Schneider2007} and \ttfamily EvaluateMultiSum \rmfamily (see \cite{Hasselhuhn2012}) as well as 
the package \ttfamily HarmonicSums \rmfamily have to be loaded.
\begin{fmma2}
{
\In \text{\bf <\hspace{-0.15cm} < Sigma.m}\\
\text{\hspace{0.88cm}\footnotesize\bf<\hspace{-0.15cm} < HarmonicSums.m}\\
\text{\hspace{0.88cm}\footnotesize\bf<\hspace{-0.15cm} < EvaluateMultiSums.m}\\
\text{\hspace{0.88cm}\footnotesize\bf<\hspace{-0.15cm} < MultiIntegrate.m}\\
\fbox{\parbox{12cm}{\footnotesize Sigma by Carsten Schneider -RISC Linz- Version 1.0}}
\fbox{\parbox{12cm}{\footnotesize HarmonicSums by Jakob Ablinger -RISC Linz- Version 1.0 (01/03/12)}}
\fbox{\parbox{12cm}{\footnotesize EvaluateMultiSums by Carsten Schneider -RISC Linz- Version 1.0}}
\fbox{\parbox{12cm}{\footnotesize MultiIntegrate by Jakob Ablinger -RISC Linz- Version 1.0 (01/03/12)}}
}
\end{fmma2}
The function \ttfamily mAZ \rmfamily which is provided by the package \ttfamily MultiIntegrate \rmfamily performs the multivariate Almkvist-Zeilberger and hence finds recurrences for
hyperexponentional integrands:
\begin{fmma2}
{
\In {\text{\bf mAZ$\Big[\frac{(1 + x_1\; x_2)^n}{(1 + x_1)^\ep}, n, \{x_1,x_2\},f\Big]$}}\\
\Out {\Bigg\{\Bigg\{1,\frac{\left(x_1 x_2+1\right){}^n}{\left(x_1+1\right){}^2}\Bigg\},\Bigg\{\frac{\left(x_1
   x_2+1\right){}^n}{\left(x_1+1\right){}^2},\left\{0,x_1\right\},\left\{x_1 x_2^2+x_2,x_2\right\},(-n-1) f(n)+(n+2) f(n+1)\Bigg\}\Bigg\}}\\
}
\end{fmma2}

The 3 different integration methods discussed in Sections \ref{AZintegralrec} and \ref{AZexpint} are implemented in the package \ttfamily MultiIntegrate \rmfamily.
Suppose we want to compute the integral
$$
\int_{u_d}^{o_d} \cdots\int_{u_1}^{o_1}F[n;x_1, \dots, x_d] dx_1 \dots dx_d
$$
whrere $F[n;x_1, \dots, x_d]$ is of the form (\ref{mAZintegrand}) then the first method described in Section \ref{AZintegralrec} corresponds to the function call 
$$\text{\ttfamily mAZDirectIntegrate}[F[n;x_1, \dots, x_d],n,\{\{x_1,u_1,o_1\},\ldots,\{x_d,u_d,o_d\}\}] $$ 
whereas the second method corresponds to 
$$\text{\ttfamily mAZIntegrate}[F[n;x_1, \dots, x_d],n,\{\{x_1,u_1,o_1\},\ldots,\{x_d,u_d,o_d\}\}].$$
\begin{fmma2}
{
\In {\text{\bf mAZDirectIntegrate$\Big[\frac{(1 + x_1\; x_2)^n}{(1 + x_1)^\ep}$, n, \{\{$x_1$, 0, 1\}, \{$x_2$, 0, 1\}\}$\Big]$}}\\
\Out {\frac{\sum _{\iota _1=1}^n \frac{1}{-\ep +\iota _1+1}}{-n-1}-\frac{2^{1-\ep } \sum _{\iota _1=1}^n \frac{2^{\iota _1}}{\ep -\iota
    _1-1}}{n+1}+\frac{2^{-\ep } \left(2^{\ep }-2\right)}{(n+1) (\ep -1)}}\\
}
{
\In {\text{\bf mAZIntegrate$\Big[\frac{(1 + x_1\; x_2)^n}{(1 + x_1)^\ep}$, n, \{\{$x_1$, 0, 1\}, \{$x_2$, 0, 1\}\}$\Big]$}}\\
\Out {\frac{\sum _{\iota _1=1}^n \frac{1}{-\ep +\iota _1+1}}{-n-1}-\frac{2^{1-\ep } \sum _{\iota _1=1}^n \frac{2^{\iota _1}}{\ep -\iota
    _1-1}}{n+1}+\frac{2^{-\ep } \left(2^{\ep }-2\right)}{(n+1) (\ep -1)}}\\
}
\end{fmma2}
If more discrete variables \ie $n_1,\ldots,n_k$ are present one can use 
$$\text{\ttfamily mAZDirectIntegrate}[F[n_1,\ldots,n_k;x_1, \dots, x_d],\{n_1,\ldots,n_k\},\{\{x_1,u_1,o_1\},\ldots\}] $$ 
or
$$\text{\ttfamily mAZIntegrate}[F[n_1,\ldots,n_k;x_1, \dots, x_d],\{n_1,\ldots,n_k\},\{\{x_1,u_1,o_1\},\ldots\}].$$
\begin{fmma2}
{
\In {\text{\bf mAZIntegrate$\Big[\frac{(1 + x_1\; x_2)^{n_1}x_2^{n_2}}{(1 + x_1)^\ep}$, \{$n_1,n_2$\}, \{\{$x_1$, 0, 1\}, \{$x_2$, 0, 1\}\}$\Big]$}}\\
\Out {\frac{2^{-\ep } \left(n_2 2^{\ep }-2 n_2-2 \ep +2^{\ep }\right) \left(\prod _{\iota _1=1}^{n_1} \frac{\iota _1}{n_2+\iota_1+1}\right)}{\left(n_2+1\right) (\ep -1) \left(n_2+\ep \right)}
    +\frac{\left(\prod _{\iota _1=1}^{n_1} \frac{\iota _1}{n_2+\iota _1+1}\right) \sum_{\iota _1=1}^{n_1} \frac{\prod _{\iota _2=1}^{\iota _1} \frac{n_2+\iota _2+1}{\iota _2}}{-\ep +\iota _1+1}}{-n_2-\ep }
    -\frac{2^{1-\ep }\left(\prod _{\iota _1=1}^{n_1} \frac{\iota _1}{n_2+\iota _1+1}\right)}{n_2+\ep}
      \left(
	  \sum _{\iota _1=1}^{n_1} \frac{2^{\iota _1} \prod _{\iota _2=1}^{\iota _1} \frac{n_2+\iota_2+1}{\iota _2}}{n_2+\iota _1+1}
	 +\sum _{\iota_1=1}^{n_1} \frac{2^{\iota _1} \prod _{\iota _2=1}^{\iota _1} \frac{n_2+\iota _2+1}{\iota _2}}{\ep -\iota _1-1}
      \right)
    +\frac{1}{\left(n_2+1\right)\left(n_2+\ep \right)}}\\
}
\end{fmma2}
In order to find $\ep$-expansions of hyperexponential integrals (see Section \ref{AZexpint}) the function \ttfamily mAZExpandedIntegrate \rmfamily is provided.
Suppose we want to compute the coefficients $\ep^p,\ep^{p+1},\ldots,\ep^{q}$ of in the $\ep$-expansion of the integral
$$
\int_{u_d}^{o_d} \cdots\int_{u_1}^{o_1}F[\ep;n;x_1, \dots, x_d] dx_1 \dots dx_d
$$
we can use the function call
$$\text{\ttfamily mAZExpandedIntegrate}[F[\ep;n;x_1, \dots, x_d],n,\{\ep,p,q\},\{\{x_1,u_1,o_1\},\ldots,\{x_d,u_d,o_d\}\}].$$
\begin{fmma2}
{
\In {\text{\bf mAZExpandedIntegrate$\Big[\frac{(1 + x_1\; x_2)^n}{(1 + x_1)^\ep}$, n, \{$\ep$, 0, 2\}, \{\{$x_1$, 0, 1\}, \{$x_2$, 0, 1\}\}$\Big]$}}\\
\Out {\Bigg\{\Bigg\{\frac{\sum _{\iota _1=1}^n \frac{1}{\iota _1+1}}{-n-1}+\frac{2 \sum _{\iota _1=1}^n \frac{2^{\iota _1}}{\iota _1+1}}{n+1}+\frac{1}{n+1},-\frac{2
   \text{ln2} \sum _{\iota _1=1}^n \frac{2^{\iota _1}}{\iota _1+1}}{n+1}+\frac{\sum _{\iota _1=1}^n \frac{1}{\left(\iota _1+1\right){}^2}}{-n-1}+\frac{2 \sum _{\iota
   _1=1}^n \frac{2^{\iota _1}}{\left(\iota _1+1\right){}^2}}{n+1}+\frac{1-2 \text{ln2}}{n+1},\frac{\text{ln2}^2 \sum _{\iota _1=1}^n \frac{2^{\iota _1}}{\iota
   _1+1}}{n+1}-\frac{2 \text{ln2} \sum _{\iota _1=1}^n \frac{2^{\iota _1}}{\left(\iota _1+1\right){}^2}}{n+1}+\frac{\sum _{\iota _1=1}^n \frac{1}{\left(\iota
   _1+1\right){}^3}}{-n-1}+\frac{2 \sum _{\iota _1=1}^n \frac{2^{\iota _1}}{\left(\iota _1+1\right){}^3}}{n+1}+\frac{(\text{ln2}-1)^2}{n+1}\Bigg\},0,2\Bigg\}}\\
}
{
\In {\text{\bf $\%[[1]].\text{Table}\left[\ep ^i,\{i,\%[[2]],\%[[3]]\}\right]$}}\\
\Out {\ep ^2 \left(\frac{\text{ln2}^2 \sum _{\iota _1=1}^n \frac{2^{\iota _1}}{\iota _1+1}}{n+1}-\frac{2 \text{ln2} \sum _{\iota _1=1}^n \frac{2^{\iota
   _1}}{\left(\iota _1+1\right){}^2}}{n+1}+\frac{\sum _{\iota _1=1}^n \frac{1}{\left(\iota _1+1\right){}^3}}{-n-1}+\frac{2 \sum _{\iota _1=1}^n \frac{2^{\iota
   _1}}{\left(\iota _1+1\right){}^3}}{n+1}+\frac{(\text{ln2}-1)^2}{n+1}\right)+\ep  \left(-\frac{2 \text{ln2} \sum _{\iota _1=1}^n \frac{2^{\iota _1}}{\iota
   _1+1}}{n+1}+\frac{\sum _{\iota _1=1}^n \frac{1}{\left(\iota _1+1\right){}^2}}{-n-1}+\frac{2 \sum _{\iota _1=1}^n \frac{2^{\iota _1}}{\left(\iota
   _1+1\right){}^2}}{n+1}+\frac{1-2 \text{ln2}}{n+1}\right)+\frac{\sum _{\iota _1=1}^n \frac{1}{\iota _1+1}}{-n-1}+\frac{2 \sum _{\iota _1=1}^n \frac{2^{\iota
   _1}}{\iota _1+1}}{n+1}+\frac{1}{n+1}}\\
}
\end{fmma2}
Note that we can use the package \ttfamily HarmonicSums \rmfamily to rewrite the occuring sums in terms of S-sums.

We conclude this chapter with the following integral that occurs in the direct computation of a 3-loop diagram of the ladder-type which emerges for local quarkonic twist-2 operator matrix elements \cite{Ablinger2012b}.
\begin{eqnarray*}
&&\hspace*{-0.5cm}\int _0^1\int _0^1\left(\frac{(s (x-1)+t (u-1)+1)^{n}}{(w-1) (z-1) (s x-s+t u-t-u+1) (s x-s+t u-t-x+1)}\right.\\
&&+\frac{1}{(z-1) (s x-s+t u-t-x+1)}\\
&&\times\frac{(z (-s+t u-t+1)+x ((s-1) z+1))^{n}}{-s w x+s w+s x z-s z-t u w+t u z+t w-t z+u w-u-w-x z+x+z}\\
&&+\frac{1}{(w-1) (s x-s+t u-t-u+1)}\\
&&\times\left.\frac{(u ((t-1) w+1)-w (s (-x)+s+t-1))^{n}}{s w x-s w-s x z+s z+t u w-t u z-t w+t z-u w+u+w+x z-x-z}\right)dudx.
\end{eqnarray*}
All three summands of the integrand are of the form 
$$
\frac{(a+b x+c u)^n}{(d+e x+f u) (g+h x+i u)}.
$$
Therefore we solve the integral
$$
\int _0^1\int _0^1 \frac{(a+b x+c u)^n}{(d+e x+f u) (g+h x+i u)}dudx
$$
for general $a,b,c,d,e,f,g,h,i$ using \ttfamily mAZIntegrate\rmfamily:
\begin{fmma2}
{
\In {\text{\bf mAZIntegrate$\Big[\frac{(a+b x+c u)^n}{(d+e x+f u) (g+h x+i u)}$, n, \{\{$x$, 0, 1\}, \{$u$, 0, 1\}\}$\Big]$}}\\
\Out {\scriptstyle \frac{\left(\frac{-a e i+a f h+b d i-b f g-c d h+c e g}{f h-e i}\right)^n}{f h-e i}
  \Bigg(
   \log (d+e) \log \left(\frac{d f h-e f g}{e (d+e) i-e f (g+h)}\right)-\log (d+e+f) \log \left(\frac{f ((d+f) h-e (g+i))}{e (d+e) i-e f (g+h)}\right)+\log (d+f) \log
   \left(\frac{f (e (g+i)-(d+f) h)}{e (f g-d i)}\right)-\log (g+i) \log \left(\frac{i (e (g+i)-(d+f) h)}{h (f g-d i)}\right)-\log (d) \log \left(\frac{e f g-d f
   h}{e f g-d e i}\right)+\log (g) \log \left(\frac{e g i-d h i}{f g h-d h i}\right)-\log (g+h) \log \left(\frac{(e g-d h) i}{f h (g+h)-(d+e) h i}\right)+\log
   (g+h+i) \log \left(\frac{i (e (g+i)-(d+f) h)}{f h (g+h)-(d+e) h i}\right)+\text{Li}_2\left(\frac{(d+e) (e i-f h)}{e (d+e) i-e f
   (g+h)}\right)-\text{Li}_2\left(\frac{(d+e+f) (e i-f h)}{e (d+e) i-e f (g+h)}\right)+\text{Li}_2\left(\frac{(d+f) (f h-e i)}{e (f g-d
   i)}\right)-\text{Li}_2\left(\frac{(g+i) (f h-e i)}{h (f g-d i)}\right)-\text{Li}_2\left(\frac{d f h-d e i}{e f g-d e i}\right)+\text{Li}_2\left(\frac{f g h-e g
   i}{f g h-d h i}\right)-\text{Li}_2\left(\frac{(g+h) (f h-e i)}{f h (g+h)-(d+e) h i}\right)+\text{Li}_2\left(\frac{(g+h+i) (f h-e i)}{f h (g+h)-(d+e) h
   i}\right)+(\log (d+e)-\log (d)) \text{S}[1,\left\{\frac{\left(a-\frac{b d}{e}\right) (f h-e i)}{c e g-b f g-c d h+a f h+b d i-a e i}\right\},n]+(\log
   (d)-\log (d+f)) \text{S}[1,\left\{\frac{\left(a-\frac{c d}{f}\right) (f h-e i)}{c e g-b f g-c d h+a f h+b d i-a e i}\right\},n]+(\log (g)-\log (g+h))
   \text{S}[1,\left\{\frac{\left(a-\frac{b g}{h}\right) (f h-e i)}{c e g-b f g-c d h+a f h+b d i-a e i}\right\},n]+(\log (g+i)-\log (g))
   \text{S}[1,\left\{\frac{\left(a-\frac{c g}{i}\right) (f h-e i)}{c e g-b f g-c d h+a f h+b d i-a e i}\right\},n]+(\log (d+f)-\log (d+e+f))
   \text{S}[1,\left\{\frac{((a+c) e-b (d+f)) (f h-e i)}{e (c e g-b f g-c d h+a f h+b d i-a e i)}\right\},n]+(\log (d+e+f)-\log (d+e))
   \text{S}[1,\left\{\frac{((a+b) f-c (d+e)) (f h-e i)}{f (c e g-b f g-c d h+a f h+b d i-a e i)}\right\},n]+(\log (g+h+i)-\log (g+i)) \text{S}[1,\left\{\frac{(f
   h-e i) ((a+c) h-b (g+i))}{h (c e g-b f g-c d h+a f h+b d i-a e i)}\right\},n]+(\log (g+h)-\log (g+h+i)) \text{S}[1,\left\{\frac{((a+b) i-c (g+h)) (f h-e
   i)}{i (c e g-b f g-c d h+a f h+b d i-a e i)}\right\},n]
-\text{S}[1,1,\left\{\frac{\left(a-\frac{b d}{e}\right) (f h-e i)}{c e g-b f g-c d h+a f h+b d i-a ei},\frac{a e}{a e-b d}\right\},n]
+\text{S}[1,1,\left\{\frac{\left(a-\frac{b d}{e}\right) (f h-e i)}{c e g-b f g-c d h+a f h+b d i-a e i},\frac{(a+b) e}{a e-bd}\right\},n]
-\text{S}[1,1,\left\{\frac{((a+c) e-b (d+f)) (f h-e i)}{e (c e g-b f g-c d h+a f h+b d i-a e i)},\frac{(a+b+c) e}{(a+c) e-b(d+f)}\right\},n]
-\text{S}[1,1,\left\{\frac{\left(a-\frac{c d}{f}\right) (f h-e i)}{c e g-b f g-c d h+a f h+b d i-a e i},\frac{(a+c) f}{a f-cd}\right\},n]
+\text{S}[1,1,\left\{\frac{((a+b) f-c (d+e)) (f h-e i)}{f (c e g-b f g-c d h+a f h+b d i-a e i)},\frac{(a+b+c) f}{(a+b) f-c(d+e)}\right\},n]
-\text{S}[1,1,\left\{\frac{\left(a-\frac{b g}{h}\right) (f h-e i)}{c e g-b f g-c d h+a f h+b d i-a e i},\frac{(a+b) h}{a h-bg}\right\},n]
+\text{S}[1,1,\left\{\frac{((a+b) i-c (g+h)) (f h-e i)}{i (c e g-b f g-c d h+a f h+b d i-a e i)},\frac{(a+b) i}{(a+b) i-c(g+h)}\right\},n]
+\text{S}[1,1,\left\{\frac{\left(a-\frac{c g}{i}\right) (f h-e i)}{c e g-b f g-c d h+a f h+b d i-a e i},\frac{(a+c) i}{a i-cg}\right\},n]
+\text{S}[1,1,\left\{\frac{((a+c) e-b (d+f)) (f h-e i)}{e (c e g-b f g-c d h+a f h+b d i-a e i)},\frac{(a+c) e}{(a+c) e-b(d+f)}\right\},n]
+\text{S}[1,1,\left\{\frac{\left(a-\frac{c d}{f}\right) (f h-e i)}{c e g-b f g-c d h+a f h+b d i-a e i},\frac{a f}{a f-cd}\right\},n]
-\text{S}[1,1,\left\{\frac{((a+b) f-c (d+e)) (f h-e i)}{f (c e g-b f g-c d h+a f h+b d i-a e i)},\frac{(a+b) f}{(a+b) f-c(d+e)}\right\},n]
+\text{S}[1,1,\left\{\frac{\left(a-\frac{b g}{h}\right) (f h-e i)}{c e g-b f g-c d h+a f h+b d i-a e i},\frac{a h}{a h-bg}\right\},n]
-\text{S}[1,1,\left\{\frac{(f h-e i) ((a+c) h-b (g+i))}{h (c e g-b f g-c d h+a f h+b d i-a e i)},\frac{(a+c) h}{(a+c) h-b(g+i)}\right\},n]
+\text{S}[1,1,\left\{\frac{(f h-e i) ((a+c) h-b (g+i))}{h (c e g-b f g-c d h+a f h+b d i-a e i)},\frac{(a+b+c) h}{(a+c) h-b(g+i)}\right\},n]
-\text{S}[1,1,\left\{\frac{\left(a-\frac{c g}{i}\right) (f h-e i)}{c e g-b f g-c d h+a f h+b d i-a e i},\frac{a i}{a i-cg}\right\},n]
-\text{S}[1,1,\left\{\frac{((a+b) i-c (g+h)) (f h-e i)}{i (c e g-b f g-c d h+a f h+b d i-a e i)},\frac{(a+b+c) i}{(a+b) i-c(g+h)}\right\},n]
  \Bigg)}\\
}
\end{fmma2}
Using this general solution we arrive after simplification at the following solution for the whole integral:
\footnotesize
\begin{eqnarray*}
&&\frac{1}{(w-1) (z-1) (s+t-1)}\Biggl(
\text{S}_{1,1}\left(-\frac{(s+t-1) w (z-1)}{s w-s z+z-1},\frac{s w-s z+z-1}{z-1};n\right)\\
&&-\text{S}_{1,1}\left(-\frac{(s+t-1) w (z-1)}{s w-s z+z-1},\frac{(t-1) (s w-s z+z-1)}{(s+t-1) (z-1)};n\right)+\text{S}_{1,1}\left(\frac{s+t-1}{t-1},(1-t) w;n\right)\\
&&-\text{S}_{1,1}\left(-\frac{(s+t-1) w (z-1)}{s w-s z+z-1},\frac{z (s w-s z+z-1)}{w (z-1)};n\right)+\text{S}_{1,1}\left(\frac{s+t-1}{t-1},-\frac{(s-1) (t-1)}{s+t-1};n\right)\\
&&+\text{S}_{1,1}\left(-\frac{(s+t-1) w (z-1)}{s w-s z+z-1},\frac{(s w-s z+z-1) (t z-1)}{(s+t-1) w (z-1)};n\right)-\text{S}_{1,1}\left(\frac{s+t-1}{t-1},1-t;n\right)\\
&&+\text{S}_{1,1}\left(\frac{(s+t-1) (w-1) z}{(t-1) w-t z+1},\frac{-t w+w+t z-1}{w-1};n\right)-\text{S}_{1,1}\left(\frac{s+t-1}{t-1},-\frac{(t-1) (s w-1)}{s+t-1};n\right)\\
&&-\text{S}_{1,1}\left(\frac{(s+t-1) (w-1) z}{(t-1) w-t z+1},\frac{(s-1) (-t w+w+t z-1)}{(s+t-1) (w-1)};n\right)-\text{S}_{1,1}\left(\frac{s+t-1}{s-1},1-s;n\right)\\
&&-\text{S}_{1,1}\left(\frac{(s+t-1) (w-1) z}{(t-1) w-t z+1},-\frac{w ((t-1) w-t z+1)}{(w-1) z};n\right)-\text{S}_{1,1}\left(\frac{s+t-1}{s-1},-\frac{(s-1) (t z-1)}{s+t-1};n\right)\\
&&+\text{S}_{1,1}\left(\frac{(s+t-1) (w-1) z}{(t-1) w-t z+1},-\frac{(s w-1) ((t-1) w-t z+1)}{(s+t-1) (w-1) z};n\right)+\text{S}_{1,1}\left(\frac{s+t-1}{s-1},(1-s) z;n\right)\\
&&+\text{S}_{1,1}\left(\frac{s+t-1}{s-1},-\frac{(s-1)(t-1)}{s+t-1};n\right)-\text{S}_{1,1}(1,(1-s) z;n)+\text{S}_{1,1}(1,z-s z;n)+\text{S}_2((1-s) z;n)\\
&&-\text{S}_{1,1}(1,(1-t) w;n)
+\text{S}_{1,1}(1,w-t w;n)
-\text{S}_2((-s-t+1) w;n)
-\text{S}_2((-s-t+1) z;n)\\
&&+2 \text{S}_2(-s-t+1;n)
-\text{S}_2(1-s;n)
+\text{S}_2((1-t)w;n)
-\text{S}_2(1-t;n)\Biggr).
\end{eqnarray*}
\normalsize

\nocite{AlmZeilonedim}

\cleardoublepage
\phantomsection
\addcontentsline{toc}{chapter}{Bibliography}
\bibliographystyle{plain}

\cleardoublepage 
\includepdf{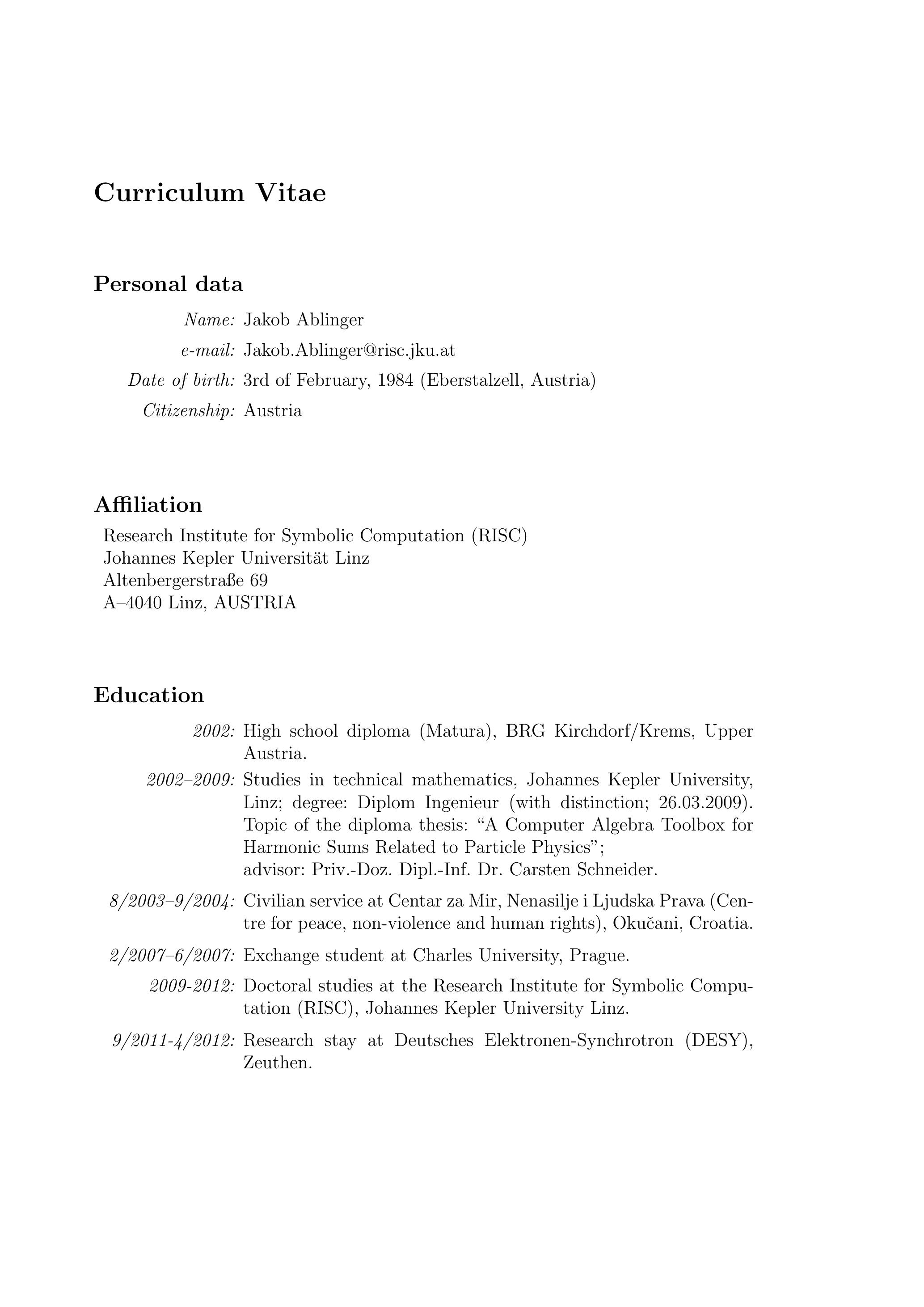}

\end{document}